\newcommand{\rsemioplus}{\mathbin{\mbox{$\lefteqn{\hspace{.67ex}\rule{.4pt}{1.2ex}}{\ni}$}}}
\newcommand{\p}{\partial}
\newcommand{\const}{\mathop{\rm const}\nolimits}
\newcommand{\sign}{\mathop{\rm sign}\nolimits}
\newcommand{\CV}{\mathop{\rm CV}\nolimits}
\newcommand{\CL}{\mathop{\rm CL}\nolimits}
\newcommand{\Ch}{\mathop{\rm Ch}\nolimits}
\newcommand{\Eop}{\mathop{\sf E}\nolimits}
\newcommand{\Fder}{\mathop{\sf D}\nolimits}
\newcommand{\diag}{\mathop{\rm diag}\nolimits}
\newtheorem{theorem}{Theorem}
\newtheorem{lemma}{Lemma}
\newtheorem{corollary}{Corollary}
\newtheorem{proposition}{Proposition}
\newtheorem*{proposition*}{Proposition}
{\theoremstyle{definition}
\newtheorem{definition}{Definition}

\newtheorem{note}{Note}
\newtheorem{problem}{Problem}
}
\begin{document}

\par\noindent {\LARGE\bf
Conservation Laws and Potential Symmetries \\
of Linear Parabolic Equations
\par}

{\vspace{4mm}\par\noindent \it 
Roman O. POPOVYCH~$^\dag$, Michael KUNZINGER~$^\ddag$ and Nataliya M. IVANOVA~$^\S$
\par\vspace{2mm}\par}

{\vspace{2mm}\par\noindent \it
$^{\dag,\S}$~Institute of Mathematics of NAS of Ukraine, 3 Tereshchenkivska Str., Kyiv-4, Ukraine
 \par}

{\vspace{2mm}\par\it 
\noindent $^{\dag,\ddag}$Fakult\"at f\"ur Mathematik, Universit\"at Wien, Nordbergstra{\ss}e 15, A-1090 Wien, Austria
\par}

{\vspace{2mm}\par\noindent $\phantom{^{\dag,\S}}$\rm E-mail: \it 
$^\dag$rop@imath.kiev.ua, $^\ddag$michael.kunzinger@univie.ac.at, $^\S$ivanova@imath.kiev.ua
 \par}

{\vspace{9mm}\par\noindent\hspace*{8mm}\parbox{140mm}{\small 
We carry out an extensive investigation of conservation laws and potential symmetries 
for the class of linear $(1+1)$-dimensional second-order parabolic equations. 
The group classification of this class is revised by employing admissible transformations, 
the notion of normalized classes of differential equations and the adjoint variational principle. 
All possible potential conservation laws are described completely. They are in fact
exhausted by local conservation laws.
For any equation from the above class the characteristic space of local conservation laws
is isomorphic to the solution set of the adjoint equation. 
Effective criteria for the existence of potential symmetries are proposed. 
Their proofs involve a rather intricate interplay between different representations of potential systems, 
the notion of a potential equation associated with a tuple of characteristics, 
prolongation of the equivalence group to the whole potential frame and 
application of multiple dual Darboux transformations.
Based on the tools developed, a preliminary analysis of generalized potential symmetries
is carried out and then applied to substantiate our construction of potential systems. 
The simplest potential symmetries of the linear heat equation, which are associated with single conservation laws, 
are classified with respect to its point symmetry group. 
Equations possessing infinite series of potential symmetry algebras are studied in detail.
}\par\vspace{7mm}}

\section{Introduction}

In the present paper we classify local and potential conservation laws and potential symmetries of
linear $(1+1)$-dimensional homogeneous second-order parabolic equations of the general form
\begin{equation}\label{EqGenLPE}
u_t=A(t,x)u_{xx}+B(t,x)u_x+C(t,x)u,
\end{equation}
where $A=A(t,x)$, $B=B(t,x)$ and $C=C(t,x)$ are arbitrary smooth functions, $A\ne0$.

This class contains a number of physically important subclasses that are widely investigated 
and may be applied in many situations. Probably the most famous examples are 
the Kolmogorov equations ($C=0$) and adjoint to them the Fokker--Planck equations ($A_{xx}-B_x+C=0$)
which are often considered as main equations of continuous Markov processes. 
Fokker--Planck equations appeared first in~\cite{Fokker1914} where the Brownian motion in 
the radiation field was studied,
and in~\cite{Planck1917} where one of the first systematic attempts of constructing
a complete theory of fluctuations was made.
They are also derived from the Boltzmann equation in the limit of large impact parameters~\cite{Liboff1969}.
The rigorous mathematical substantiation of the Fokker--Planck equation 
in the framework of probability theory was given by Kolmogorov~\cite{Kolmogorov1938}.
Now the Fokker--Planck equations form a basis for analytical methods in the 
investigation of continuous Markov processes.
Fokker--Planck equations with different coefficients 
describe the evolution of one-particle distribution functions of a dilute gas with long-range collisions,
Brownian motion without drift,
problems of diffusion in colloids, population genetics,
financial markets, quantum chaos, etc. \cite{Feller1951,Gardiner1985,Gihman&Skorohod1980,Risken1989}.

Local conservation laws of linear parabolic equations are, indeed, well understood.
More precisely, it is well known that the space of characteristics of the local conservation laws of 
a linear system of partial differential equations comprises those functions of the independent variables 
which solve the adjoint system. 
However, to the best of our knowledge, the statement that all characteristics are equivalent to such functions 
was proved, in an explicit way, only for the linear heat equation~\cite{Dorodnitsyn&Svirshchevskii1983} so far.
Below this statement is extended to the entire class~\eqref{EqGenLPE}.
Moreover, it is proved that \emph{linear second-order parabolic equations 
have no purely potential conservation laws}. 
In other words, for these equations potential conservation laws of any level are exhausted by local conservation laws. 
This generalizes the analogous statement from~\cite{Popovych&Ivanova2004ConsLawsLanl} on the 
linear heat equation and finally solves the problem on potential conservation laws in class~\eqref{EqGenLPE}. 

Let us emphasize that in fact it is not a common situation when all the characteristics of the local and, especially, 
potential conservation laws of a linear system can be taken as functions of the independent variables alone, i.e., 
the associated conserved vectors are linear in the unknown functions and their derivatives. 
Even the usual wave equation and third-order parabolic equations have characteristics essentially depending on derivatives. 
An example of a third-order parabolic equation with such characteristics is given in Note~\ref{NoteOnThirdOrderLPEs} 
of the present paper.
Moreover, a self-adjoint linear system possessing a quadratic conservation law has an infinite series of 
such conservation laws.
See, e.g., the section on symmetric linear systems in Chapter~5 and the concluding remarks of the corresponding 
chapter in~\cite{Olver1986}.

In contrast to conservation laws, potential symmetries have not been sufficiently investigated 
even for simplest equations from class~\eqref{EqGenLPE}.
Thus, e.g., for the linear heat equation potential symmetries were studied 
only in the case of the single characteristics~$1$~\cite{Bluman&Kumei1989,Sophocleous1996,Popovych&Ivanova2003PETs} 
and~$x$~\cite{Ivanova&Popovych2007CommentOnMei}. 
In~\cite{Ivanova&Popovych2007CommentOnMei,Pucci&Saccomandi1993,Saccomandi1997} potential symmetries 
of the Fokker--Planck equation $u_t=u_{xx}+(xu)_x$, associated with the characteristic~$1$, were found.
First-order conservation laws of the Fokker--Planck equations of the form $u_t=u_{xx}+(B(x)u)_x$ 
and potential symmetries of such equations, associated with the characteristic~$1$, were investigated in~\cite{Pucci&Saccomandi1993b}.
Note that the idea of symmetry extension via involving potentials and pseudopotentials in the transformations 
as new (nonlocal) dependent variables was already presented in the monograph by Edelen~\cite{Edelen1980}.
The concept of potential symmetry was explicitly formulated first by Bluman {\it et al} 
and was subsequently applied in investigations of important classes of partial differential equations~\cite{Bluman&Kumei1989,Bluman&Reid&Kumei1988}.
The related notion of quasilocal symmetry was proposed in~\cite{Akhatov&Gazizov&Ibragimov1987,Akhatov&Gazizov&Ibragimov1989}.
A systematic procedure of constructing quasilocal symmetries of (1+1)-dimensional evolution equations 
was described in \cite{Basarab-Horwath&Lahno&Zhdanov2001,Zhdanov&Lahno2005} and its application was illustrated by nontrivial examples.
This procedure is based on the exhaustive group classification of such equations. 
To the best of our knowledge, the problem of finding criteria for the existence of potential symmetries 
for classes of differential equations was first posed by Pucci and Saccomandi~\cite{Pucci&Saccomandi1993}.

The problem of a complete description of potential symmetries is very difficult to solve not only for classes
of equations but even for single equations.
In particular, it includes studying symmetry properties of infinite series of potential systems associated with 
tuples of an arbitrary number of linearly independent characteristics.
We propose effective criteria for the existence of potential symmetries of equations from class~\eqref{EqGenLPE}. 
They are subsequently applied to the classification of simplest potential symmetries of the linear heat equation and 
the separation of subclasses possessing infinite series of purely potential symmetry algebras. 
The framework of potential symmetries in class~\eqref{EqGenLPE} appears to be closely related to the theory 
Darboux transformations~\cite{Matveev&Salle1991} in the same class. 

Our paper is organized as follows: 
Using as a guideline the notion of normalization of classes of differential equations, 
in Section~\ref{SectionOnLieSymLPEs} we review and extend the classical results
on Lie symmetries and equivalence transformations of class~\eqref{EqGenLPE}, 
introduced by Lie \cite{Lie1881} and Ovsiannikov \cite{Ovsiannikov1982}.
In particular, difficulties arising under group classification of the Kolmogorov and Fokker--Planck equations 
are satisfactorily explained in terms of normalized classes.
The necessary definitions and statements on conservation laws and their characteristics, 
the equivalence of conservation laws with respect to transformations groups and sets of admissible transformations 
and potential systems are collected in Section~\ref{SectionOnTheorBackgroundOnCLs} for convenient reference.
Special attention is paid to the equivalence of conservation laws with respect to transformation groups 
(Subsection~\ref{SectionOnEquivOfConsLaws}) as this notion is essential in studying the potential conservation 
laws of equations from class~\eqref{EqGenLPE}.

The local conservation laws of these equations are exhaustively described in 
Section~\ref{SectionOnLocalCLsOfLPEs} based on the direct method. 
Since for any equation under consideration the characteristic space of local conservation laws
is isomorphic to the solution set of the adjoint equation, 
in Section~\ref{SectionOnAdjointVariationalPrincipleForLPEs} the adjoint variational principle is 
specified for class~\eqref{EqGenLPE} and extended to the corresponding set of admissible transformations. 
A number of auxiliary statements on admissible transformations of second-order evolution systems is proved. 
The adjoint variational principle is then applied to the group classification of the Fokker--Planck equations. 

The main result on potential conservation laws in class~\eqref{EqGenLPE} is presented 
in Section~\ref{SectionOnPotCLsLPEs}. 
Namely, it is proved that the local conserved vectors of potential systems are equivalent to 
local conserved vectors of the corresponding equations. 
This also provides a complete description of potential systems, allowing us to initiate the investigation of 
potential symmetries. 

The simplest potential symmetries considered in Section~\ref{SectionOnSimplestPotSymsOfLPEs} form 
a subject whose investigation generates a number of ideas on a special technique of working with class~\eqref{EqGenLPE}.
The attribute `simplest' refers to the fact that these symmetries are associated with single characteristics
(as opposed to simplicity of calculation). 
Thus, in Section~\ref{SectionOnSimpestPotSymsOfLHE} the simplest potential symmetries of the linear heat equation
are classified with respect to its point symmetry group. There are only two inequivalent characteristics 
$\alpha=1$ and $\alpha=x$ giving simplest purely potential symmetries of the linear heat equation. 
This is the only example in the literature with an exhaustive investigation of at least 
simplest potential symmetries of equations from class~\eqref{EqGenLPE}. 
The obtained classification directly leads to the complete description, e.g., of the simplest second-level 
potential symmetries of the Burgers equation and the simplest potential symmetries of the equations  
which are equivalent to the linear heat equation with respect to point transformations. 
The classification of simplest potential symmetries of the Fokker--Planck equation $u_t=u_{xx}+(xu)_x$ 
is presented for illustration. 

Section~\ref{SectionOnGenPotSysForLPEs} is devoted to the construction of the potential frame 
over class~\eqref{EqGenLPE} and studying its (non-symmetry) properties. 
Different kinds of potentials, potential systems and potential equations 
($p$-order and $p$-level, usual and modified ones) associated with characteristic tuples are defined. 
Explicit expressions for all these object are found. 
The multiple dual Darboux transformation provides a clear connection between components of the potential frame. 
This is why enhanced statements on Darboux transformations in class~\eqref{EqGenLPE} are also presented. 
Probably the most important components of the potential frame are the so-called modified potentials and 
modified potential equations. 
In contrast to other components, they are invariant under nonsingular linear combining of characteristics in 
the associated tuples. Moreover, as proved in Section~\ref{SectionOnGeneralPotSymsOfLPEs}, Lie symmetry analysis of 
potential systems is reduced to group classification of modified potential equations 
with respect to the equivalence group of class~\eqref{EqGenLPE} prolonged to the whole potential frame. 
Another result of this section is the possibility and precise realization of the prolongation. 
A statement on generalized potential symmetries of equations from class~\eqref{EqGenLPE} 
is also proved. This serves to substantiate the use of the canonical form of the conserved vectors 
in the construction of the potential frame.

After analyzing Lie invariance of potential systems, we formulate, in different terms, 
criteria on the existence of general potential symmetries. 
Their effectiveness is demonstrated in Section~\ref{SectionOnNumberAndOrderOfPotSyms} 
via the construction of wide subclasses of class~\eqref{EqGenLPE} whose equations admit 
infinite series of potential symmetry algebras of arbitrarily large order. 
The multiple auto-Darboux transformation is used in this construction as a powerful auxiliary tool. 

In the final section the results of the paper are summarized and some open problems on potential symmetries of 
equations from class~\eqref{EqGenLPE} are formulated and discussed. 

We will refer to formula~\eqref{EqGenLPE} and similar ones describing classes of systems of differential equations in a 
twofold manner, 
namely either as to a whole class (then the arbitrary elements are assumed to run through all possible values) 
or as a single equation from this class (then the arbitrary elements are assumed to take fixed values). 

By default, the indices~$i$, $j$ and~$k$ run from~1 to~$n$, 
the indices~$a$ and~$b$ run from~1 to~$m$ and 
the indices~$s$, $\sigma$ and~$\varsigma$ run at most from~1 to~$p$. 
Additional or other constraints on indices are indicated explicitly. 
The summation convention over repeated indices is used unless otherwise stated 
or it is obvious from the context that indices are fixed.

\section{Group classification}\label{SectionOnLieSymLPEs}

The complete group classification of equations~\eqref{EqGenLPE} was performed by Sophus Lie~\cite{Lie1881} 
as a part of the more general group classification of linear second-order partial differential equations in two independent variables.
A modern treatment of the subject is given in~\cite{Ovsiannikov1982}.
There exist also a number of papers rediscovering results of Lie and Ovsiannikov \cite{Lie1881,Ovsiannikov1982} 
partially (see, e.g., 
\cite{Bluman1990,Cicogna&Vitali1990,Lahno&Spichak&Stognii2004,Sastri&Dunn1985,Shtelen&Stogny1989,Spichak&Stognii1999a,
Spichak&Stognii1999b,Spichak&Stognii1999c,Stohny1997}). 

Since our investigation on potential symmetries of equations from class~\eqref{EqGenLPE} 
is essentially based on the above results, we review them for the reader's convenience. 
Another reason of their consideration is to justify the choice of a suitable subclass 
whose investigation allows the description of potential symmetries and potential conservation laws 
in the whole class~\eqref{EqGenLPE}.
Moreover, we extend these results to the framework of admissible transformations in classes of differential equations.
The normalization properties of the class of linear parabolic equations and its subclasses 
with respect to point transformations are also studied. 

Roughly speaking, an admissible transformation in a class of systems of differential equations is 
a point transformation connecting at least two systems from this class (in the
sense that one system is transformed into the other by the transformation).
The class is called \emph{normalized} if any
admissible transformation in this class belongs to its equivalence group
and is called \emph{strongly normalized} if additionally the equivalence group
is generated by transformations from the point symmetry groups of systems from the class.
The set of admissible transformations of a \emph{semi-normalized class} is generated by
the transformations from the equivalence group of the whole class and the transformations
from the point symmetry groups of initial or transformed systems. 
Strong semi-normalization is defined in the same way as strong normalization.
Any normalized class is semi-normalized. 
Two systems from a semi-normalized class are transformed into one another by a point transformation
iff they are equivalent with respect to~the equivalence group of this class.
See \cite{Popovych2006a,Popovych2006c,Popovych&Kunzinger&Eshraghi2006,Popovych&Eshraghi2005} 
for precise definitions and statements.

To begin with, consider the class of inhomogeneous equations corresponding to~\eqref{EqGenLPE} 
of the general form 
\begin{equation}\label{EqGenInhomLPE}
u_t=A(t,x)u_{xx}+B(t,x)u_x+C(t,x)u+D(t,x),
\end{equation}
where $A(t,x)$, $B(t,x)$, $C(t,x)$ and $D(t,x)$ are arbitrary smooth functions, $A(t,x)\ne0$.
A~motivation for the provisional extension of the class under consideration is that 
class~\eqref{EqGenInhomLPE} has a nicer normalization property than the initial class. 

Any point transformation~$\mathcal T$ in the space of variables $(t,x,u)$ has the form 
\[
\tilde t=\mathcal T^t(t,x,u),\quad
\tilde x=\mathcal T^x(t,x,u),\quad 
\tilde u=\mathcal T^u(t,x,u),\quad 
\]
where the Jacobian $|\p(\mathcal T^t,\mathcal T^x,\mathcal T^u)/\p(t,x,u)|$ does not vanish.

\begin{lemma}\label{LemmaOnAdmTransOfInhomLPEs}
A point transformation~$\mathcal T$ connects two equations from class~\eqref{EqGenInhomLPE} iff 
$\,\mathcal T^t_x=\mathcal T^t_u=0$, $\mathcal T^x_u=0$, $\mathcal T^u_{uu}=0$, i.e.,
\begin{equation}\label{EqGenFormOfTransOfInhomLPEs}
\mathcal T^t=T(t), \quad \mathcal T^x=X(t,x), \quad \mathcal T^u=U^1(t,x)u+U^0(t,x),
\end{equation}
where $T$, $X$, $U^1$ and $U^0$ are arbitrary smooth functions of their arguments such that $T_tX_xU^1\ne0$. 
The arbitrary elements are transformed by the formulas
\begin{gather}\label{EqTransOfCoeffsOfLPE}
\tilde A=\frac{X_x^2}{T_t}A,\quad 
\tilde B=\frac{X_x}{T_t}\left(B-2\frac{U^1_x}{U^1}A\right)-\frac{X_t-AX_{xx}}{T_t},\quad 
\tilde C=-\frac{U^1}{T_t}L\frac1{U^1},
\\ \label{EqTransOfInhomCoeffOfInhomLPE}
\tilde D=\frac{U^1}{T_t}\left(D+L\frac{U^0}{U^1}\right).
\end{gather}
Here $L=\p_t-A\p_{xx}-B\p_x-C$ is the second-order linear differential operator associated with 
the initial (non-tilde) equation.
\end{lemma}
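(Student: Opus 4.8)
The plan is to prove both implications by working on the jet space over $(t,x)$ and comparing the transformed equation, pulled back to the old variables, coefficient by coefficient. I introduce the total derivative operators $D_t,D_x$ in the old independent variables; for any point transformation the new total derivatives $D_{\tilde t},D_{\tilde x}$ are linear combinations of $D_t,D_x$ obtained by inverting the $2\times2$ matrix with rows $(D_t\tilde t,\,D_t\tilde x)$ and $(D_x\tilde t,\,D_x\tilde x)$, whose entries are $D_t\tilde t=\mathcal{T}^t_t+\mathcal{T}^t_u u_t$, $D_x\tilde t=\mathcal{T}^t_x+\mathcal{T}^t_u u_x$, and analogously for $\tilde x$. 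Writing the target equation as $D_{\tilde t}\tilde u=\tilde A\,D_{\tilde x}^2\tilde u+\tilde B\,D_{\tilde x}\tilde u+\tilde C\tilde u+\tilde D$ and substituting the original equation, together with its differential consequences, to eliminate all $t$-derivatives of $u$, I obtain an identity in the remaining jet coordinates $u,u_x,u_{xx},\dots$, whose coefficients must match on both sides.

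The sufficiency (``if'') direction is the routine half. Assuming the triangular form \eqref{EqGenFormOfTransOfInhomLPEs} I get $D_t\tilde t=T_t$, $D_x\tilde t=0$, $D_t\tilde x=X_t$, $D_x\tilde x=X_x$, so the inversion is explicit and $D_{\tilde x}=X_x^{-1}D_x$, $D_{\tilde t}=T_t^{-1}(D_t-X_tX_x^{-1}D_x)$. A direct substitution of $\tilde u=U^1u+U^0$ then produces an equation of the form \eqref{EqGenInhomLPE}, and collecting the coefficients of $u_{xx}$, $u_x$, $u$ and $1$ yields precisely \eqref{EqTransOfCoeffsOfLPE}--\eqref{EqTransOfInhomCoeffOfInhomLPE}; the compact appearance of the operator $L$ is merely a repackaging of the $u$- and constant-order terms.

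For the necessity (``only if'') direction I proceed by successively stripping the transformation down to the claimed form. First I argue $\mathcal{T}^t_x=\mathcal{T}^t_u=0$, i.e. $\tilde t=T(t)$: if either derivative were nonzero then $D_x\tilde t\ne0$, so the inversion forces $D_{\tilde t}$ and $D_{\tilde x}$ to contain the operator $D_t$; applying them to $\tilde u$ and eliminating $u_t$ via the equation then produces genuine $u_{xxx}$ and $u_{xxxx}$ contributions (through $u_{tx}$ and $u_{tt}$) that have no counterpart in the second-order target and cannot cancel, a contradiction. With $\tilde t=T(t)$ the operators collapse to $D_{\tilde x}=(D_x\tilde x)^{-1}D_x$ and $D_{\tilde t}=T_t^{-1}(D_t-(D_t\tilde x)(D_x\tilde x)^{-1}D_x)$. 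Next, since $D_x\tilde x=\mathcal{T}^x_x+\mathcal{T}^x_u u_x$, a nonzero $\mathcal{T}^x_u$ would make $\tilde u_{\tilde x}$ and $\tilde u_{\tilde x\tilde x}$ rational in $u_x$ with nontrivial denominators, which is incompatible with the polynomial, indeed affine-in-$u_{xx}$, dependence demanded by \eqref{EqGenInhomLPE}; hence $\mathcal{T}^x_u=0$ and $\tilde x=X(t,x)$. Finally, with these reductions the coefficient of $u_{xx}$ gives $\tilde A=X_x^2T_t^{-1}A$, while the coefficient of $u_x^2$ on the right-hand side equals $\tilde A\,\mathcal{T}^u_{uu}/X_x^2$ and has no match on the left; since $\tilde A\ne0$ this forces $\mathcal{T}^u_{uu}=0$, so $\tilde u=U^1(t,x)u+U^0(t,x)$. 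The nondegeneracy $T_tX_xU^1\ne0$ then follows because under the triangular form the full Jacobian factors as $T_tX_xU^1$, and the transformation formulas are read off exactly as in the sufficiency computation.

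The main obstacle I anticipate is the first necessity step, the derivation of $\tilde t=T(t)$: it is the only place where a genuine order-counting argument is required, and one must check carefully that the spurious higher $x$-derivatives arising from $u_{tt}$ and $u_{tx}$ really cannot be absorbed by any choice of the remaining data. Once the time variable is pinned down, the surviving steps are clean coefficient comparisons, with the vanishing of $\mathcal{T}^u_{uu}$ falling out neatly from the $u_x^2$ balance.
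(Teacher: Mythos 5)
Your proposal is correct, and its skeleton is the same as the paper's: the direct method, i.e., recalculating the derivatives under the transformation, substituting into the equation, and splitting the resulting identity on the solution manifold to get determining equations. The genuine difference is in how the three structural restrictions are handled. The paper does not prove them at all: it invokes the known result of \cite{Kingston&Sophocleous1998} that any point transformation between $(1+1)$-dimensional evolution equations satisfies $\mathcal T^t_x=\mathcal T^t_u=0$, and cited results on quasi-linear equations \cite{Popovych&Ivanova2004NVCDCEs,Prokhorova2005} for $\mathcal T^x_u=0$ and $\mathcal T^u_{uu}=0$, and only then reads off \eqref{EqTransOfCoeffsOfLPE}--\eqref{EqTransOfInhomCoeffOfInhomLPE} from the remaining determining equations. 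You derive all three restrictions ab initio: your order-counting step is precisely the content of the Kingston--Sophocleous theorem (and, as you note, it is the one place requiring real labor --- the coefficient of the top derivative $u_{xxxx}$ must be shown to be a nonzero multiple of $D_x\tilde t$, which you only sketch, at the same level of detail at which the paper outsources it), and your $u_x^{\,2}$-balance for $\mathcal T^u_{uu}=0$ is exactly the standard determining-equation computation. What your route buys is self-containedness; what the paper's buys is brevity. One spot you should tighten: the argument for $\mathcal T^x_u=0$ via ``rational versus polynomial dependence'' is loose as stated, since after pullback \emph{both} sides of the target equation are rational in $u_x$, so there is no a priori polynomiality to contradict. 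The clean version, inside your own coefficient-matching framework, is to collect the coefficient of $u_{xx}$ after clearing denominators: using the nonvanishing of the Jacobian $\mathcal T^u_u\mathcal T^x_x-\mathcal T^u_x\mathcal T^x_u$ it yields
\begin{equation*}
\tilde A=\frac{A}{T_t}\bigl(\mathcal T^x_x+\mathcal T^x_u u_x\bigr)^2,
\end{equation*}
and since $\tilde A$, pulled back, is a function of $(t,x,u)$ only, independence of $u_x$ forces $\mathcal T^x_u=0$; your subsequent steps then go through verbatim.
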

\begin{proof}
The proof is based on the direct method. We recalculate the old derivatives in the new variables, 
substitute the obtained expressions in the initial equation and then split it on the manifold of 
the transformed equation. 
As a result, we derive the determining equations for the components of~$\mathcal T$. 
The calculations can be simplified if we take into account known restrictions for transformations between 
equations from wider classes. Thus, equations of the form~\eqref{EqGenInhomLPE} are evolutionary. 
Any transformation between two evolution equations satisfies the conditions 
$\mathcal T^t_x=\mathcal T^t_u=0$ \cite{Kingston&Sophocleous1998}. 
Then, a transformation between quasi-linear equations satisfies the conditions $\mathcal T^x_u=0$ and $\mathcal T^u_{uu}=0$
(see, e.g., \cite{Popovych&Ivanova2004NVCDCEs,Prokhorova2005}). 
The remaining determining equations implies the formulas for transforming the arbitrary elements.
\end{proof}

\begin{corollary}\label{CorollaryOnNormalizationOfInhomLPEs}
Class~\eqref{EqGenInhomLPE} is strongly normalized. 
The equivalence group~$G^\sim_{\rm inh}$ of class~\eqref{EqGenInhomLPE} is formed by the transformations 
determined in the space of variables and arbitrary elements by formulas~\eqref{EqGenFormOfTransOfInhomLPEs}, 
\eqref{EqTransOfCoeffsOfLPE} and~\eqref{EqTransOfInhomCoeffOfInhomLPE}, 
where $T$, $X$, $U^1$ and~$U^0$ are arbitrary smooth functions of their arguments such that $T_tX_xU^1\ne0$.
\end{corollary}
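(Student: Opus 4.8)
The plan is to read everything off Lemma~\ref{LemmaOnAdmTransOfInhomLPEs} and then supply the two properties it does not literally contain: that the indicated family is a group, and that this group is generated by point symmetries of individual equations. First I would note that the lemma has already done the hard analytic work. It shows that every point transformation connecting two equations of class~\eqref{EqGenInhomLPE} has the uniform form~\eqref{EqGenFormOfTransOfInhomLPEs}, with the action on the arbitrary elements prescribed once and for all by~\eqref{EqTransOfCoeffsOfLPE}--\eqref{EqTransOfInhomCoeffOfInhomLPE}. The decisive observation is that the constraints on $(T,X,U^1,U^0)$ are merely $T_tX_xU^1\ne0$ and do \emph{not} involve the particular values of $A,B,C,D$; moreover, for arbitrary input coefficients the output coefficients produced by~\eqref{EqTransOfCoeffsOfLPE}--\eqref{EqTransOfInhomCoeffOfInhomLPE} are again smooth with $\tilde A\ne0$. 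Hence a single such transformation maps \emph{every} equation of the class to an equation of the class, i.e.\ it is a genuine equivalence transformation, while Lemma~\ref{LemmaOnAdmTransOfInhomLPEs} conversely shows every admissible transformation is of this form. This is exactly normalization and it pins down $G^\sim_{\rm inh}$.

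Next I would check that the family~\eqref{EqGenFormOfTransOfInhomLPEs} is closed under composition and inversion, so that it is really a group. Composing $(T,X,U^1,U^0)$ with $(\hat T,\hat X,\hat U^1,\hat U^0)$ yields a transformation of the same shape, with $t$-component $\hat T\circ T$, $x$-component $\hat X(T,X)$, and $u$-multiplier and shift $\hat U^1(T,X)\,U^1$ and $\hat U^1(T,X)\,U^0+\hat U^0(T,X)$; the nonvanishing condition is preserved since it multiplies through. Invertibility holds because $T$ is a diffeomorphism in $t$, $X$ a diffeomorphism in $x$ for each frozen $t$, and $u\mapsto U^1u+U^0$ is affine with $U^1\ne0$, so the inverse again has the form~\eqref{EqGenFormOfTransOfInhomLPEs}. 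Consistency of the induced action on $A,B,C,D$ under composition is automatic, since applying Lemma~\ref{LemmaOnAdmTransOfInhomLPEs} twice must agree with applying it to the composite.

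The genuinely new point is strong normalization: that $G^\sim_{\rm inh}$ is generated by point symmetries of equations from the class. Here my strategy is to realize a given equivalence transformation $\mathcal T=(T,X,U^1,U^0)$ as a point symmetry of a suitably chosen equation. Demanding $\tilde A=A$, $\tilde B=B$, $\tilde C=C$, $\tilde D=D$ in~\eqref{EqTransOfCoeffsOfLPE}--\eqref{EqTransOfInhomCoeffOfInhomLPE} turns into functional (cocycle) equations along the orbits of the underlying map $(t,x)\mapsto(T(t),X(t,x))$: first the multiplicative equation $A\circ\mathcal T=(X_x^2/T_t)\,A$ for $A$, and then, once $A$ is fixed, successively \emph{linear} cocycle equations for $B$, $C$ and $D$ whose multipliers are built from $T$, $X$ and $U^1$. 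Away from fixed points these are solved by prescribing the unknown on a fundamental domain and propagating along orbits, which produces an admissible equation ($A\ne0$) having $\mathcal T$ as a point symmetry; thus $\mathcal T$ belongs to the group generated by point symmetries.

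The main obstacle is precisely the solvability of the $A$-equation at fixed points of $(t,x)\mapsto(T(t),X(t,x))$: there the multiplier $X_x^2/T_t$ may differ from $1$, which is incompatible with $A\ne0$ (for a pure time scaling $T=\lambda t$, $X=x$, every point of $\{t=0\}$ is fixed with multiplier $1/\lambda$). I would dispose of this by passing to a domain on which $\mathcal T$ acts freely, where the cocycle equations are unobstructed (e.g.\ $A=f(x)/t$ realizes the above scaling as a symmetry on $\{t>0\}$), or by factoring $\mathcal T$ into fixed-point-free transformations each handled by the construction above. Carrying this out for the elementary $t$-, $x$- and $u$-reparametrizations that generate $G^\sim_{\rm inh}$ yields strong normalization.
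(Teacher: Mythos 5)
Your treatment of normalization and of the description of $G^\sim_{\rm inh}$ is correct and coincides with what the paper intends: since the constraints on $(T,X,U^1,U^0)$ in Lemma~\ref{LemmaOnAdmTransOfInhomLPEs} do not involve the arbitrary elements, every transformation of the form~\eqref{EqGenFormOfTransOfInhomLPEs}, prolonged to $(A,B,C,D)$ by~\eqref{EqTransOfCoeffsOfLPE} and~\eqref{EqTransOfInhomCoeffOfInhomLPE}, maps the whole class~\eqref{EqGenInhomLPE} onto itself, and conversely every admissible transformation is of this form; this is precisely why the paper states the corollary with no separate proof, and your closure and inversion checks are routine confirmations of the same point. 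The problem lies entirely in the part you correctly single out as new, the \emph{strong} normalization, and there your argument has a genuine gap rather than a missing detail.

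Concretely: the equivalence group contains the time reflection $\mathcal T_-\colon(t,x,u)\mapsto(-t,x,u)$, which is admissible because the class only demands $A\ne0$ (it maps $(A,B,C,D)$ to $(-A,-B,-C,-D)$). However, if a transformation $(T,X,U^1,U^0)$ is a point symmetry of a \emph{single} equation whose domain is connected, then the symmetry condition $A(T(t),X(t,x))=\bigl(X_x^{\,2}/T_t\bigr)A(t,x)$ forces $T_t>0$, since $A$ is continuous, nonvanishing, hence of constant sign on a connected domain, while $X_x^{\,2}>0$. Consequently every finite composition of point symmetries of equations on connected domains has increasing time component, so $\mathcal T_-$ can never be reached; and your proposed remedy of factoring into fixed-point-free pieces cannot help, because in any factorization of $\mathcal T_-$ into transformations of the form~\eqref{EqGenFormOfTransOfInhomLPEs} the signs of the $T_t$'s multiply, so some factor has $T_t<0$, and a decreasing diffeomorphism of a time interval always has a fixed point and suffers the same sign obstruction. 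The claim can only be rescued by working in a framework where equations on non-connected domains belong to the class (e.g.\ $u_t=\sign(t)\,u_{xx}$ on $t\ne0$, for which $\mathcal T_-$ \emph{is} a symmetry) or by a suitably local, pseudogroup-type reading of ``generated'' --- a foundational choice your proposal never fixes. Two further points in the same direction: realizing only the restriction of a transformation to $\{t>0\}$ as a symmetry (your $A=f(x)/t$ example) does not exhibit the original transformation, defined across $t=0$, as a composition of symmetries, so the domain-restriction move proves nothing about membership in the generated group; and even in the fixed-point-free, time-orientation-preserving case you address only the $A$-equation, whereas the affine cocycles for $B$, $C$, $D$ (multipliers $X_x/T_t$, $1/T_t$, $U^1/T_t$, with inhomogeneities built from the previously chosen coefficients) must be solved smoothly and simultaneously with $A\ne0$; this is in fact doable (sections of affine bundles over the quotient by the free, properly discontinuous $\mathbb{Z}$-action generated by $(T,X)$), but it is a necessary part of the argument, not an afterthought.
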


Using transformations from~$G^\sim_{\rm inh}$, we can gauge arbitrary elements of class~\eqref{EqGenInhomLPE}. 
Thus, applying the equivalence transformation with $T=t$, $X=x$, $U^1=1$ and $U^0$ a solution of the 
equation $LU^0=-D$, we obtain the standard conversion of the inhomogeneous equation $Lu=D$ to 
the homogeneous one $Lu=0$. As a result, class~\eqref{EqGenInhomLPE} is mapped to class~\eqref{EqGenLPE}. 
Unfortunately, the normalization property is broken under this mapping.

\begin{corollary}\label{CorollaryOnAdmTransOfLPEs}
A point transformation~$\mathcal T$ connects two equations from class~\eqref{EqGenLPE} iff 
its components are of the form~\eqref{EqGenFormOfTransOfInhomLPEs}, 
where $T$, $X$ and $U^1$ are arbitrary smooth functions of their arguments such that $T_tX_xU^1\ne0$ 
and additionally $U^0/U^1$ is a solution of the initial equation. 
The arbitrary elements are transformed by formulas~\eqref{EqTransOfCoeffsOfLPE}.
\end{corollary}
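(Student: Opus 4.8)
The plan is to derive this corollary directly from Lemma~\ref{LemmaOnAdmTransOfInhomLPEs}, exploiting the fact that class~\eqref{EqGenLPE} is precisely the subclass of~\eqref{EqGenInhomLPE} cut out by the condition $D=0$. First I would observe that any point transformation connecting two equations from~\eqref{EqGenLPE} \emph{a fortiori} connects two equations from the wider class~\eqref{EqGenInhomLPE} (with vanishing inhomogeneous terms). Hence the lemma applies verbatim: the components of~$\mathcal T$ must already have the form~\eqref{EqGenFormOfTransOfInhomLPEs} with $T_tX_xU^1\ne0$, and the coefficients $A$, $B$, $C$ are necessarily transformed according to~\eqref{EqTransOfCoeffsOfLPE}. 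This settles the shape of~$\mathcal T$ and the transformation law for $A$, $B$, $C$ with no further computation.

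The only genuinely new content is the constraint on $U^0/U^1$, which I would extract from the transformation law~\eqref{EqTransOfInhomCoeffOfInhomLPE} for the inhomogeneous term. Since the source equation lies in~\eqref{EqGenLPE}, we have $D=0$, so that~\eqref{EqTransOfInhomCoeffOfInhomLPE} collapses to $\tilde D=(U^1/T_t)\,L(U^0/U^1)$, where $L$ is the operator associated with the initial equation. Membership of the target equation in~\eqref{EqGenLPE} is exactly the requirement $\tilde D=0$; because $U^1/T_t\ne0$, this is equivalent to $L(U^0/U^1)=0$, that is, to $U^0/U^1$ being a solution of the initial equation. Reading this chain of equivalences in both directions yields the desired ``iff'': in the forward direction it produces the solution constraint from the assumption that $\mathcal T$ maps~\eqref{EqGenLPE} into itself, and in the reverse direction the constraint $L(U^0/U^1)=0$ together with $D=0$ forces $\tilde D=0$, placing the image in~\eqref{EqGenLPE}.

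I do not anticipate any serious obstacle, since the statement is essentially a specialization of the lemma and the entire argument turns on the single algebraic simplification of~\eqref{EqTransOfInhomCoeffOfInhomLPE} at $D=0$. The one point deserving care is the logical bookkeeping of the equivalence: one must read $\tilde D=0$ as a constraint on the admissible $U^0$ (namely $L(U^0/U^1)=0$) rather than as an additional restriction on $A$, $B$, $C$. This distinction is immediate once the prefactor $U^1/T_t$ is recognized as nonvanishing, so that the whole of the inhomogeneous transformation law contributes nothing beyond the solution condition on $U^0/U^1$.
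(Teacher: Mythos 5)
Your proposal is correct and coincides with the paper's intended derivation: the corollary is stated there as an immediate consequence of Lemma~\ref{LemmaOnAdmTransOfInhomLPEs}, obtained exactly as you do by restricting to the subclass $D=0$ of class~\eqref{EqGenInhomLPE} and reading off from~\eqref{EqTransOfInhomCoeffOfInhomLPE} that $\tilde D=(U^1/T_t)L(U^0/U^1)$ vanishes iff $U^0/U^1$ solves the initial equation. Your attention to the nonvanishing prefactor $U^1/T_t$ and to both directions of the equivalence matches the paper's (implicit) reasoning precisely.
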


\begin{corollary}\label{CorollaryOnSemiNormOfLPEs}
Class~\eqref{EqGenLPE} is strongly semi-normalized. 
The equivalence group~$G^\sim$ of class~\eqref{EqGenLPE} is formed by the transformations 
determined in the space of variables and arbitrary elements by formulas~\eqref{EqGenFormOfTransOfInhomLPEs}, 
\eqref{EqTransOfCoeffsOfLPE}, where $T$, $X$ and $U^1$ are arbitrary smooth functions of their arguments 
such that $\,T_tX_xU^1\ne0$ and $U^0=0$ additionally. 
\end{corollary}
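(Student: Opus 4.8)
The plan is to obtain the equivalence group directly from the classification of admissible transformations in Corollary~\ref{CorollaryOnAdmTransOfLPEs} and then to account for the discrepancy between that classification and $G^\sim$ by a single explicit factorization. The description in Corollary~\ref{CorollaryOnAdmTransOfLPEs} differs from the asserted form of $G^\sim$ only in that it permits $v:=U^0/U^1$ to be an arbitrary solution of the source equation, whereas $G^\sim$ demands $U^0=0$. The whole proof therefore turns on understanding this one degree of freedom: it is the only feature of an admissible transformation that is tied to the particular values of the arbitrary elements $A$, $B$, $C$.

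First I would pin down $G^\sim$. A point transformation with components fixed independently of $(A,B,C)$ belongs to the (usual) equivalence group precisely when it maps \emph{every} equation of the class into the class. By Corollary~\ref{CorollaryOnAdmTransOfLPEs} such a transformation is admissible for the equation with coefficients $(A,B,C)$ iff $v=U^0/U^1$ solves that very equation; requiring this simultaneously for all admissible $(A,B,C)$ and reading off the coefficient of $C$ in $v_t=Av_{xx}+Bv_x+Cv$ forces $v\equiv0$, hence $U^0\equiv0$. Conversely, $U^0=0$ gives $v\equiv0$, a solution of every equation of the class, so each such transformation is admissible throughout the class and lies in $G^\sim$. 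This yields exactly the stated description of $G^\sim$. The same conclusion follows by viewing~\eqref{EqGenLPE} as the slice $D=0$ of~\eqref{EqGenInhomLPE} and computing, via~\eqref{EqTransOfInhomCoeffOfInhomLPE}, the stabilizer of this slice inside $G^\sim_{\rm inh}$: preserving $D=0$ for all coefficients again forces $L(U^0/U^1)=0$ identically, i.e.\ $U^0=0$.

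Next I would prove semi-normalization by factoring an arbitrary admissible transformation. Given $\mathcal T$ with components $(T,X,U^1u+U^0)$ and $v=U^0/U^1$ a solution of the source equation, I write $\mathcal T=\mathcal T_0\circ\mathcal S$ with
\[
\mathcal S\colon (t,x,u)\mapsto(t,x,u+v), \qquad
\mathcal T_0\colon (t,x,u)\mapsto\bigl(T(t),X(t,x),U^1(t,x)\,u\bigr).
\]
Since $v$ solves the source equation and the equation is linear, $\mathcal S$ is a superposition point symmetry of that equation, while $\mathcal T_0\in G^\sim$ by the previous step. Thus every admissible transformation is the composition of a point symmetry of the initial equation and an equivalence transformation, which is semi-normalization; the superposition symmetries $\mathcal S$ are precisely the admissible transformations lying outside $G^\sim$.

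Finally, for the \emph{strong} qualifier I would show that $G^\sim$ itself is generated by point symmetries of equations within the class. The scalings $u\mapsto cu$, as well as translations and dilations, are already symmetries of constant-coefficient equations such as the heat equation; the real content is to realize a general $\mathcal T_0=(T,X,U^1)$. Here I would exploit that a symmetry of a \emph{variable}-coefficient equation may have $X_x^2/T_t\neq1$, so that the functional equation $A\circ\mathcal T_0=(X_x^2/T_t)A$ together with its analogues for $B$, $C$ obtained from~\eqref{EqTransOfCoeffsOfLPE} can be solved for suitable invariant coefficients whenever the $(t,x)$-action of $\mathcal T_0$ is free and proper, making $\mathcal T_0$ a genuine point symmetry of the resulting equation; where the orbit structure obstructs this, $\mathcal T_0$ must be split into several such factors. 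I expect this last step to be the main obstacle: a given $\mathcal T_0\in G^\sim$ generically fixes no equation, so it cannot be a single symmetry, and one must control the fixed points and periodicity of its $(t,x)$-action to guarantee the existence of invariant coefficients and the correct decomposition.
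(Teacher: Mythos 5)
Your first two steps are correct and are precisely the paper's own argument. The determination of $G^\sim$ (both via requiring $U^0/U^1$ to solve every equation of the class, so that varying $C$ forces $U^0/U^1\equiv0$, and via the stabilizer of the slice $D=0$ inside $G^\sim_{\rm inh}$ using \eqref{EqTransOfInhomCoeffOfInhomLPE}) is exactly how the corollary follows from Corollary~\ref{CorollaryOnAdmTransOfLPEs} and the mechanism of Note~\ref{NoteOnNormalizationOfInhomAndHomLinSys}; and your factorization $\mathcal T=\mathcal T_0\circ\mathcal S$ is stated verbatim in the paper: ``any point transformation between two equations from class~\eqref{EqGenLPE} is the composition of a trivial symmetry transformation from $G^\infty$ of the initial equation and a transformation from $G^\sim$.'' Up to this point you have reproduced the paper's proof.

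The gap is the ``strong'' attribute, which you leave as a sketch and which, as sketched, would fail. The obstruction to realizing a given $\mathcal T_0=(T,X,U^1)$ as a point symmetry of some equation of the class is not only the fixed-point or periodicity structure of its $(t,x)$-action: there is a sign obstruction that freeness and properness do not remove. On a connected domain a nonvanishing smooth $A$ has constant sign, while \eqref{EqTransOfCoeffsOfLPE} gives $\tilde A=(X_x^{\,2}/T_t)A$; hence every point symmetry of every equation of class~\eqref{EqGenLPE} has $T_t>0$, and so does every finite composition of such symmetries. Since $G^\sim$ contains time-reversing transformations ($\tilde t=-t$, $\tilde x=x$, $\tilde u=u$ is in $G^\sim$), these elements cannot be generated by symmetries at all under a fixed-connected-domain reading; the claim is rescued only by the local viewpoint implicit in the paper, where equations may live on arbitrary, possibly disconnected domains (e.g.\ $t\to-t$ is a symmetry of $u_t=tu_{xx}$ on $t\ne0$). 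Even for $T_t>0$, solving $A\circ\mathcal T_0=(X_x^{\,2}/T_t)A$ requires factoring $T$ (and the $(t,x)$-map) into fixed-point-free pieces, each smoothly conjugate to a translation, before invariant nonvanishing coefficients exist; none of this is carried out, and ``free and proper'' is neither necessary nor sufficient as a criterion. To be fair, the paper itself never proves this attribute either—it is imported from the normalization framework of the cited references—so your instinct that this is the only step needing ideas beyond Corollary~\ref{CorollaryOnAdmTransOfLPEs} is sound; but as a self-contained proof, your proposal is incomplete exactly there, and its announced strategy would need repair.
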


\begin{note}\label{NoteOnNormalizationOfInhomAndHomLinSys}
A similar relation holds between normalization properties of general classes of 
inhomogeneous and the corresponding homogeneous linear systems of differential equations. 
Namely, consider a class $\mathcal L_{\rm hom}$ of homogeneous linear systems 
of $l$ differential equations of the form $L_\theta u=0$, 
for $m$ unknown functions $u=(u^1,\dots,u^m)$ of $n$ independent variables $x=(x_1,\dots,x_n)$. 
Here $L_\theta=(L^{\mu a}_\theta)$, $\mu=1,\dots,l$, is a matrix differential operator parameterized 
with $\theta$ running through a parameter set. 
Let the corresponding class $\mathcal L_{\rm inh}$ of inhomogeneous systems be normalized and 
its equivalence group~$G^\sim_{\rm inh}$ consist of transformations projectable in~$x$ 
and affine in~$u$, i.e. the transformations of $(x,u)$ have the form 
$\tilde x=X(x)$ and $\tilde u=U^{ab}(x)u^b+U^{a0}(x)$ for any $\mathcal T\in G^\sim_{\rm inh}$. 
Then the class $\mathcal L_{\rm hom}$ is semi-normalized. 
The equivalence group~$G^\sim_{\rm hom}$ of $\mathcal L_{\rm hom}$ is isomorphic 
to the subgroup of~$G^\sim_{\rm inh}$ formed by the transformations with 
$(\hat U^{ab}U^{a0})$ running through the intersection of the solution sets of the systems from~$\mathcal L_{\rm hom}$, 
where $(\hat U^{ab})$ is the inverse matrix of~$(U^{ab})$. 
Often this implies that $U^{a0}=0$.
The additional condition for the admissible transformations in $\mathcal L_{\rm hom}$ is that 
$(\hat U^{ab}U^{b0})$ is a solution of the initial system (with fixed values of the arbitrary elements).
Therefore, normalization is broken under restricting to $\mathcal L_{\rm hom}$ 
due to the presence of the linear superposition principle. 
This justifies the consideration of inhomogeneous linear systems in the framework of admissible transformations. 

\end{note}

Another possibility is to gauge the arbitrary element~$A$ in class~\eqref{EqGenInhomLPE} to~1 with a transformation 
of form~\eqref{EqGenFormOfTransOfInhomLPEs}, where $T_t=\sign A$, $X_x=|A|^{-1/2}$, $U^1=1$ and $U^0=0$. 
The admissible transformations in the subclass of~\eqref{EqGenInhomLPE} with $A=1$ are those
transformations~\eqref{EqGenFormOfTransOfInhomLPEs} which preserve the condition $A=1$, 
i.e., which additionally satisfy the condition $\mathcal T^t_t=(\mathcal T^x_x)^2$.

\begin{corollary}\label{CorollaryOnAdmTransOfLPEsA1}
A point transformation~$\mathcal T$ connects two equations from class~\eqref{EqGenInhomLPE} 
with $A=\tilde A=1$~iff 
\begin{equation}\label{EqGenFormOfTransOfInhomA1LPEs}
\mathcal T^t=T(t), \quad \mathcal T^x=X=\pm \sqrt{T_t(t)}\,x+\zeta(t), \quad \mathcal T^u=U^1(t,x)u+U^0(t,x),
\end{equation}
where $T$, $\zeta$, $U^1$ and $U^0$ are arbitrary smooth functions of their arguments such that $T_t>0$ and $U^1\ne0$. 
The transformations of this form, prolonged to the arbitrary elements $B$, $C$ and $D$ 
by formulas~\eqref{EqTransOfCoeffsOfLPE} and~\eqref{EqTransOfInhomCoeffOfInhomLPE} 
constitute the equivalence group of the subclass of~\eqref{EqGenInhomLPE} with $A=1$.
This subclass is strongly normalized. 
\end{corollary}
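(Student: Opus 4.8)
The plan is to derive the statement directly from Lemma~\ref{LemmaOnAdmTransOfInhomLPEs} by adjoining the gauge constraint $A=\tilde A=1$. First I would recall that any point transformation connecting two equations of class~\eqref{EqGenInhomLPE} has the form~\eqref{EqGenFormOfTransOfInhomLPEs}, with the leading coefficient transforming by the first formula in~\eqref{EqTransOfCoeffsOfLPE}, that is $\tilde A=(X_x^2/T_t)A$. Imposing $A=\tilde A=1$ collapses this relation to the single algebraic condition $X_x^2=T_t$, which is exactly the constraint $\mathcal T^t_t=(\mathcal T^x_x)^2$ announced immediately before the corollary.

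From this relation I would read off the asserted shape of~$\mathcal T^x$. Since $T=T(t)$, the quantity $T_t$ depends on~$t$ alone, so $X_x^2$ is independent of~$x$ and hence $X_{xx}=0$, i.e.\ $X$ is affine in~$x$. Because $X_x$ is real and nonvanishing (the inequality $T_tX_xU^1\ne0$ from~\eqref{EqGenFormOfTransOfInhomLPEs} persists), we obtain $T_t>0$ and $X_x=\varepsilon\sqrt{T_t(t)}$ with a sign $\varepsilon=\pm1$ that stays constant by continuity and the nonvanishing of~$X_x$. Integrating in~$x$ yields $X=\varepsilon\sqrt{T_t(t)}\,x+\zeta(t)$ for an arbitrary smooth~$\zeta$, which is precisely~\eqref{EqGenFormOfTransOfInhomA1LPEs}; the components $U^1$ and $U^0$ remain free apart from $U^1\ne0$, and $B$, $C$, $D$ still transform by~\eqref{EqTransOfCoeffsOfLPE} and~\eqref{EqTransOfInhomCoeffOfInhomLPE}. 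The converse is immediate: any $\mathcal T$ of form~\eqref{EqGenFormOfTransOfInhomA1LPEs} satisfies $X_x^2=T_t$ and therefore carries $A=1$ to $\tilde A=1$.

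For the group-theoretic claims I would exploit the decisive feature of~\eqref{EqGenFormOfTransOfInhomA1LPEs}: the action on the variables $(t,x,u)$ is governed solely by the arbitrary functions $T,\zeta,U^1,U^0$ and is independent of the coefficients $B,C,D$ of the source equation. Hence this family, prolonged to the arbitrary elements, is closed under composition and inversion---a short check confirms that the constraint $X_x^2=T_t$ is preserved under products, so the subgroup structure is intact---and constitutes the equivalence group of the $A=1$ subclass; since every admissible transformation already lies in this family, the subclass is normalized. Strong normalization is the point I expect to require genuine work, because one must realize the equivalence group as generated by point symmetries of equations that themselves lie in the subclass (such symmetries necessarily satisfy $X_x^2=T_t$, as they fix $A=1$). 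Here I would build the needed generators from the rich symmetry groups of reference equations in the subclass: the point part encoded in $T(t)$, $\zeta(t)$ together with the $u$-scaling is supplied by the time and space translations, scalings and Galilean symmetries (for instance of the heat equation $u_t=u_{xx}$), the additive freedom $U^0$ by the linear-superposition symmetries $u\mapsto u+v$ with $Lv=0$, and the residual multiplier $U^1$ by combining these with symmetries of suitably chosen equations. Verifying that these point symmetries indeed span the whole equivalence group is the main obstacle, and it is what separates strong normalization from the already-settled plain normalization.
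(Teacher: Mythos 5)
Your derivation of form~\eqref{EqGenFormOfTransOfInhomA1LPEs} and of plain normalization is correct and coincides with the paper's own route: the paper obtains the corollary from Lemma~\ref{LemmaOnAdmTransOfInhomLPEs} by imposing preservation of the gauge, $\mathcal T^t_t=(\mathcal T^x_x)^2$ (this condition is stated verbatim in the paragraph preceding the corollary), solving this constraint exactly as you do, and inheriting the prolongation to $B$, $C$, $D$ and the group property from Corollary~\ref{CorollaryOnNormalizationOfInhomLPEs}.

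The genuine gap is the strong normalization claim, which you explicitly leave unverified, and the generating set you propose provably cannot work. Every point symmetry of the heat equation $u_t=u_{xx}$ has a fractional-linear time component $T(t)=(at+b)/(ct+d)$ (this is visible from the explicit form of the finite symmetry transformations written out in Section~\ref{SectionOnSimpestPotSymsOfLHE}), and compositions of fractional-linear maps are again fractional-linear; hence heat-equation symmetries can never produce an arbitrary monotone reparameterization $T(t)$, contrary to your claim that its translations, scalings and Galilean boosts ``supply'' the $T(t)$-, $\zeta(t)$-part. The same holds for any other single equation of the subclass, whose group of essential symmetries is finite-dimensional. A second concrete obstruction: a pure multiplier $\tilde t=t$, $\tilde x=x$, $\tilde u=U^1(t,x)u$ with $U^1_x\ne0$ is a symmetry of \emph{no} equation of the subclass, since invariance of~$B$ under~\eqref{EqTransOfCoeffsOfLPE} forces $U^1_x=0$; so the ``residual multiplier'' also cannot be produced by symmetries of one suitably chosen equation. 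Generation of the equivalence group must therefore proceed through compositions of symmetries of \emph{different} equations, for instance by conjugation over the whole subclass: for any equivalence transformation~$\mathcal T$ of the subclass, $\mathcal T\circ\exp(c\p_t)\circ\mathcal T^{-1}$ is a point symmetry of the $\mathcal T$-image of the heat equation (which again lies in the subclass) and has time component $T_{\mathcal T}\bigl(T_{\mathcal T}^{-1}(t)+c\bigr)$; such flow maps and their compositions do exhaust all admissible $T$ (every fixed-point-free orientation-preserving diffeomorphism of the line is such a flow map, and every orientation-preserving diffeomorphism is a composition of two fixed-point-free ones), and analogous conjugates must be used to generate $\zeta$, $U^1$ and $U^0$. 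To be fair, the paper also states strong normalization without a written proof, deferring to its normalization framework; but your attempted justification, as written, rests on a generating set that demonstrably does not generate.
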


Analogously to the case of the entire class~\eqref{EqGenInhomLPE}, we can convert the inhomogeneous equations with $A=1$ to 
the homogeneous ones. 
As a result, the subclass of~\eqref{EqGenInhomLPE} with $A=1$ is mapped to 
the subclass of~\eqref{EqGenLPE} satisfying the same condition. 
The normalization property is again broken under this mapping.
A point transformation~$\mathcal T$ connects two equations from class~\eqref{EqGenLPE} with $A=\tilde A=1$~iff 
it has the form adduced in Corollary~\ref{CorollaryOnAdmTransOfLPEsA1} 
and additionally $U^0/U^1$ is a solution of the initial equation. 
The subclass of~\eqref{EqGenLPE} with $A=1$ is strongly semi-normalized. 
Its equivalence group is formed by the transformations~\eqref{EqGenFormOfTransOfInhomA1LPEs} with $U^0=0$, 
prolonged to the arbitrary elements $B$ and $C$ by formulas~\eqref{EqTransOfCoeffsOfLPE}. 

The arbitrary elements~$A$ and~$B$ can be simultaneously gauged to~1 and~0. 
The subclass of~\eqref{EqGenInhomLPE} with $(A,B)=(1,0)$ is a restriction of the one with $A=1$ 
and is investigated in a similar way. 

\begin{corollary}\label{CorollaryOnAdmTransOfLPEsA1B0}
A point transformation~$\mathcal T$ connects two equations from class~\eqref{EqGenInhomLPE} 
with $A=\tilde A=1$ and $B=\tilde B=0$ iff it has the form~\eqref{EqGenFormOfTransOfInhomA1LPEs}, 
where additionally
\[
U^1=\theta(t)\exp\left(-\frac{T_{tt}}{8T_t}x^2\mp\frac{\zeta_t}{2T_t{}^{1/2}}x\right)
\]
and $T$, $\zeta$, $\theta$ and $U^0$ are arbitrary smooth functions of their arguments such that $T_t>0$ and $\theta\ne0$. 
The transformations of this form, prolonged to arbitrary elements $C$ and $D$ 
by formulas~\eqref{EqTransOfCoeffsOfLPE} and~\eqref{EqTransOfInhomCoeffOfInhomLPE} 
constitute the equivalence group of the subclass of~\eqref{EqGenInhomLPE} with $A=1$ and $B=0$.
This subclass is strongly normalized. 
\end{corollary}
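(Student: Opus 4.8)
The plan is to obtain Corollary~\ref{CorollaryOnAdmTransOfLPEsA1B0} as a specialization of Corollary~\ref{CorollaryOnAdmTransOfLPEsA1}, since the subclass of~\eqref{EqGenInhomLPE} with $A=\tilde A=1$, $B=\tilde B=0$ sits inside the subclass with $A=\tilde A=1$. Hence any point transformation connecting two of its equations already has the form~\eqref{EqGenFormOfTransOfInhomA1LPEs}, namely $\mathcal T^t=T(t)$, $\mathcal T^x=X=\pm\sqrt{T_t}\,x+\zeta(t)$, $\mathcal T^u=U^1u+U^0$ with $T_t>0$ and $U^1\ne0$; the only thing left to determine is the extra restriction on $U^1$ forced by the requirement $B=\tilde B=0$, together with the resulting prolongations to $C$ and $D$.

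First I would impose $A=1$, $B=0$ and $\tilde B=0$ in the transformation law for $\tilde B$ in~\eqref{EqTransOfCoeffsOfLPE}. From $X=\pm\sqrt{T_t}\,x+\zeta$ one has $X_x=\pm\sqrt{T_t}$, $X_{xx}=0$ and $X_t=\pm\frac{T_{tt}}{2\sqrt{T_t}}\,x+\zeta_t$, so the vanishing of $\tilde B$ reduces, after clearing $T_t$, to the single relation $2X_x\,(U^1_x/U^1)+X_t=0$. This is a first-order linear ordinary differential equation for $\ln U^1$ in the variable $x$ with $t$ as a parameter, explicitly $\partial_x\ln U^1=-\frac{T_{tt}}{4T_t}\,x\mp\frac{\zeta_t}{2\sqrt{T_t}}$. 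Integrating in $x$ produces $\ln U^1=-\frac{T_{tt}}{8T_t}\,x^2\mp\frac{\zeta_t}{2\sqrt{T_t}}\,x+(\text{function of }t)$, and exponentiating gives exactly the asserted formula for $U^1$ with the $t$-dependent integration factor renamed $\theta(t)$, an arbitrary nonvanishing smooth function of $t$ since $U^1\ne0$. The coefficients $C$ and $D$ of the target equation are then read off directly from~\eqref{EqTransOfCoeffsOfLPE} and~\eqref{EqTransOfInhomCoeffOfInhomLPE}; no further splitting is required, as those formulas are already identities once the variable part of $\mathcal T$ is fixed.

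It remains to address the normalization claims. For normalization I would observe that the derived constraints on $T$, $\zeta$, $\theta$, $U^0$ do not involve the arbitrary elements $C$, $D$ of the source equation, so the family of admissible transformations of the subclass coincides with a group acting uniformly on all its equations; hence every admissible transformation is an equivalence transformation and the subclass is normalized. Prolonging by~\eqref{EqTransOfCoeffsOfLPE} and~\eqref{EqTransOfInhomCoeffOfInhomLPE} identifies this group as the stated equivalence group. Strong normalization follows as in Corollaries~\ref{CorollaryOnNormalizationOfInhomLPEs} and~\ref{CorollaryOnAdmTransOfLPEsA1}: because both remaining arbitrary elements $C$ and $D$ are free in this subclass, each generator of the equivalence group can be realized as a point symmetry of a suitably chosen equation, so the equivalence group is generated by such symmetries.

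I expect the only genuinely delicate points to be bookkeeping rather than conceptual: keeping the two sign choices in $\pm\sqrt{T_t}$ consistent through $X_t$ and the integration so that the $\mp$ in the exponent of $U^1$ emerges correctly, and articulating the strong-normalization step carefully enough that the realization of equivalence transformations as symmetries of equations still lying in the constrained subclass, now with $A=1$ and $B=0$ held fixed, is justified rather than taken for granted. The core computation, a single integration in $x$, is routine.
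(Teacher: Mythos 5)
Your derivation of the constraint on $U^1$ and of the ``iff'' part is correct and is essentially the paper's own (implicit) route: the paper obtains this corollary by restricting Corollary~\ref{CorollaryOnAdmTransOfLPEsA1} to the subclass with $B=\tilde B=0$, and your relation $2X_x(U^1_x/U^1)+X_t=0$, integrated in $x$ with $t$ as a parameter, gives exactly the stated exponential form of~$U^1$ with the correct correlation of the $\pm$ and $\mp$ signs. Your identification of the equivalence group and your normalization argument are also sound, since the constraints on $(T,\zeta,\theta,U^0)$ involve no information about the arbitrary elements $(C,D)$ of the source equation.

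However, your justification of \emph{strong} normalization contains a genuine error. You claim that ``each generator of the equivalence group can be realized as a point symmetry of a suitably chosen equation'' because $C$ and $D$ are free. This is false. Consider the equivalence transformation $\tilde t=t$, $\tilde x=x$, $\tilde u=\theta(t)u$ with $\theta_t\ne0$ (admissible: here $T=t$, $\zeta=0$, $U^0=0$, and your constraint on $U^1$ is satisfied since $T_{tt}=\zeta_t=0$). By formula~\eqref{EqTransOfCoeffsOfLPE} it maps the arbitrary element $C$ of \emph{every} equation of the subclass to $\tilde C=C+\theta_t/\theta\ne C$, and this obstruction does not involve $C$ or $D$ at all, so the freedom in the arbitrary elements cannot remove it. Hence no such transformation (and, infinitesimally, no operator $\lambda(t)u\p_u$ with $\lambda_t\ne0$) is a point symmetry of \emph{any} equation of the subclass. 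Strong normalization only requires that the equivalence group be \emph{generated} by symmetry transformations of equations from the class, and for transformations of the above type this genuinely needs compositions of symmetries of \emph{different} equations. For instance, setting $c:=-\theta_t/\theta$, the map $\mathcal T_\theta\colon u\mapsto\theta(t)u$ takes the equation $u_t=u_{xx}+c(t)u$ to the heat equation, so $R:=\mathcal T_\theta^{-1}\circ S_\varepsilon\circ\mathcal T_\theta$, where $S_\varepsilon\colon t\mapsto t+\varepsilon$ is a heat-equation symmetry, is a point symmetry of $u_t=u_{xx}+c(t)u$; then $S_\varepsilon^{-1}\circ R$ is the pure scaling $u\mapsto\bigl(\theta(t)/\theta(t+\varepsilon)\bigr)u$, and choosing $\theta$ as a solution of the functional equation $\theta(t)/\theta(t+\varepsilon)=\mu(t)$ realizes an arbitrary scaling $u\mapsto\mu(t)u$ as a product of two symmetries of equations lying in the constrained subclass. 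An argument of this kind, carried out for the $T$-, $\zeta$- and $\theta$-directions of the equivalence group, is what the strong-normalization claim actually requires; your proposal replaces it by a stronger assertion that is simply not true.
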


The inhomogeneous equations with $A=1$ and $B=0$ are mapped to the homogeneous ones in the standard way. 
Hence, any equation from class~\eqref{EqGenInhomLPE} or class~\eqref{EqGenLPE} can be 
reduced by a transformation from the corresponding equivalence group to an equation of the general form
\begin{equation}\label{EqReducedLPE}
u_t-u_{xx}+V(t,x)u=0.
\end{equation}
The normalization property is broken for the class~\eqref{EqReducedLPE}.
A point transformation~$\mathcal T$ connects two equations from class~\eqref{EqReducedLPE} iff 
it has the form adduced in Corollary~\ref{CorollaryOnAdmTransOfLPEsA1B0} 
and additionally $U^0/U^1$ is a solution of the initial equation. 
At the same time, class~\eqref{EqReducedLPE} is strongly semi-normalized. 
Its equivalence group~$G^\sim_1$ is formed by the transformations with $U^0=0$, 
prolonged to the arbitrary elements $V=-C$ by formulas~\eqref{EqTransOfCoeffsOfLPE}. 
Therefore, the functions parameterizing~$G^\sim_1$ depend only on~$t$. 
The narrower equivalence group under preserving certain normalization properties suggests
class~\eqref{EqReducedLPE} as the most convenient one for group classification. 
Moreover, solving the group classification problem for any of the above classes is reduced 
to solving the group classification problem for class~\eqref{EqReducedLPE}.
This is why we formulate the main result on admissible transformations in class~\eqref{EqReducedLPE} as a theorem. 

\begin{theorem}
Class~\eqref{EqReducedLPE} is strongly semi-normalized. 
Any transformation from the equivalence group~$G^\sim_1$ of class~\eqref{EqReducedLPE} has the form
\begin{gather}\textstyle\nonumber
\tilde t=\int\! \sigma^2dt,\quad \tilde x=\sigma x+\zeta,\quad 
\tilde u=u\theta\exp\left(-\dfrac{\sigma_t}{4\sigma}x^2-\dfrac{\zeta_t}{2\sigma}x\right),
\\ \label{EqTransFromEquivGroupOfReducedLPEs}
\tilde V=\frac1{\sigma^2}\left(V+\frac{\sigma\sigma_{tt}-2\sigma_t{}^2}{4\sigma^2}x^2
+\frac{\sigma\zeta_{tt}-2\sigma_t\zeta_t}{2\sigma^2}x-\frac{\theta_t}\theta
-\frac{\sigma_t}{2\sigma}-\dfrac{\zeta_t{}^2}{4\sigma^2}\right),
\end{gather}
where $\sigma=\sigma(t)$, $\zeta=\zeta(t)$ and $\theta=\theta(t)$ are arbitrary smooth functions, 
$\sigma\theta\ne0$.
\end{theorem}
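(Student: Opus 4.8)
The plan is to obtain both assertions directly from the structural results already established for the inhomogeneous subclass. First I would invoke Corollary~\ref{CorollaryOnAdmTransOfLPEsA1B0}: the subclass of~\eqref{EqGenInhomLPE} with $A=1$, $B=0$ is strongly normalized, and every admissible transformation there has the form~\eqref{EqGenFormOfTransOfInhomA1LPEs} with the displayed expression for~$U^1$. Passing to the homogeneous class~\eqref{EqReducedLPE} is then governed by the mechanism of Note~\ref{NoteOnNormalizationOfInhomAndHomLinSys}: homogenization breaks normalization precisely because of the linear superposition principle, leaving the class strongly semi-normalized. Concretely, an admissible transformation between two equations of~\eqref{EqReducedLPE} is one of Corollary~\ref{CorollaryOnAdmTransOfLPEsA1B0} subject to the extra condition that $w:=U^0/U^1$ solves the source equation. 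Writing $\tilde u=U^1(u+w)$ factors such a transformation as the superposition symmetry $u\mapsto u+w$ of the source equation followed by a transformation with $U^0=0$; the latter constitute the equivalence group~$G^\sim_1$, which establishes the semi-normalization claim.

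For the explicit form of~$G^\sim_1$ I would re-parameterize the $U^0=0$ transformations of Corollary~\ref{CorollaryOnAdmTransOfLPEsA1B0}. Setting $\sigma:=X_x=\pm\sqrt{T_t}$ gives $T_t=\sigma^2$, hence $\tilde t=T=\int\sigma^2\,dt$ and $\tilde x=\sigma x+\zeta$. Substituting $T_{tt}=2\sigma\sigma_t$ turns $T_{tt}/(8T_t)$ into $\sigma_t/(4\sigma)$, while $\mp\zeta_t/(2T_t{}^{1/2})$ collapses to $-\zeta_t/(2\sigma)$ for both sign branches (since $\pm T_t{}^{1/2}=\sigma$). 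Thus $\tilde u=u\theta\exp(-\tfrac{\sigma_t}{4\sigma}x^2-\tfrac{\zeta_t}{2\sigma}x)$, as claimed, with $\sigma$ ranging over the nonvanishing functions and $\theta\ne0$.

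The remaining task is to compute $\tilde V=-\tilde C$ from~\eqref{EqTransOfCoeffsOfLPE}, i.e.\ $\tilde V=\frac{U^1}{T_t}L\frac{1}{U^1}$ with $L=\p_t-\p_{xx}+V$. Writing $U^1=\theta e^{\phi}$, $\phi=-\tfrac{\sigma_t}{4\sigma}x^2-\tfrac{\zeta_t}{2\sigma}x$, I would split this as $\tilde V=\sigma^{-2}V-\sigma^{-2}\p_t\ln U^1-\sigma^{-2}(\phi_x{}^2-\phi_{xx})$. The time part yields, inside the overall factor $\sigma^{-2}$, the pieces $-\theta_t/\theta$, $\tfrac{\sigma\sigma_{tt}-\sigma_t{}^2}{4\sigma^2}x^2$ and $\tfrac{\sigma\zeta_{tt}-\sigma_t\zeta_t}{2\sigma^2}x$, while $\phi_x{}^2-\phi_{xx}=\tfrac{\sigma_t{}^2}{4\sigma^2}x^2+\tfrac{\sigma_t\zeta_t}{2\sigma^2}x+\tfrac{\zeta_t{}^2}{4\sigma^2}+\tfrac{\sigma_t}{2\sigma}$ supplies the rest. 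Collecting the coefficients of $x^2$, of $x$, and the free term reproduces exactly~\eqref{EqTransFromEquivGroupOfReducedLPEs}.

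I expect no conceptual obstacle here: both claims reduce to machinery already in hand. Strong semi-normalization follows by feeding the strong normalization of Corollary~\ref{CorollaryOnAdmTransOfLPEsA1B0} into the inhomogeneous-to-homogeneous mechanism of Note~\ref{NoteOnNormalizationOfInhomAndHomLinSys}, and the only point demanding attention there is that the \emph{strong} generation, by point symmetries of equations lying \emph{within} the class (for instance the heat equation $V=0$, whose point symmetry group already realizes the $\sigma$-, $\zeta$- and $\theta$-freedoms), survives the restriction to $U^0=0$. The genuinely laborious step is the $\tilde V$ computation, whose one subtlety is purely bookkeeping: checking that the two branches of the $\pm$ in $X$ produce one and the same formula once everything is written through the signed quantity $\sigma=X_x$, and that the $t$-differentiations of $\sigma_t/(4\sigma)$ and $\zeta_t/(2\sigma)$ combine with $\phi_x{}^2-\phi_{xx}$ to give precisely the stated coefficients.
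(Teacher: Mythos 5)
Your route coincides with the paper's: the theorem is stated there as a digest of Corollary~\ref{CorollaryOnAdmTransOfLPEsA1B0}, the homogenization mechanism of Note~\ref{NoteOnNormalizationOfInhomAndHomLinSys}, and the transformation formula~\eqref{EqTransOfCoeffsOfLPE}. Your factorization $\tilde u=U^1(u+U^0/U^1)$ establishing semi-normalization, the signed re-parameterization $\sigma:=X_x=\pm\sqrt{T_t}$ (which correctly merges the two sign branches), and your decomposition $\tilde V=\sigma^{-2}\bigl(V-\p_t\ln U^1-\phi_x{}^2+\phi_{xx}\bigr)$ are all correct; I checked the coefficient bookkeeping and it reproduces~\eqref{EqTransFromEquivGroupOfReducedLPEs} exactly.

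The genuine gap is at the point you yourself flag in the last paragraph: the word \emph{strongly}. Your justification --- that the point symmetry group of the heat equation ``already realizes the $\sigma$-, $\zeta$- and $\theta$-freedoms'' --- is false. The essential symmetry group of $u_t=u_{xx}$ is six-dimensional, so the triples $(\sigma,\zeta,\theta)$ it realizes form a finite-parameter family (compositions of scalings, projective transformations and time translations give only $\sigma(t)=c/(\gamma t+\delta)$; Galilean boosts and $x$-translations give only $\zeta$ affine in $t$), whereas $G^\sim_1$ involves three arbitrary smooth functions of $t$; the infinite-dimensional part $u\to u+f$ has $U^0\ne0$ and contributes nothing to $G^\sim_1$. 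Worse, no smarter choice of a single equation repairs this, and the union of \emph{all} symmetry groups of equations of the class still does not contain $G^\sim_1$ element-wise: a pure rescaling $\tilde t=t$, $\tilde x=x$, $\tilde u=\theta(t)u$ with $\theta_t\ne0$ is a point symmetry of \emph{no} equation of class~\eqref{EqReducedLPE}, since invariance of $u_t-u_{xx}+Vu=0$ under $\lambda(t)u\p_u$ forces $\lambda_t=0$ (the classifying equation written out in Section~\ref{SectionOnSimplestPotSymsOfLPEs}, specialized to $\tau=\xi=0$); note also that Note~\ref{NoteOnNormalizationOfInhomAndHomLinSys} transfers only plain semi-normalization, not the strong property. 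Strong generation is therefore necessarily compositional, across different equations. A correct argument runs as follows: every vector field $\tau\p_t+(\tfrac12\tau_tx+\chi)\p_x+(\varkappa-\tfrac18\tau_{tt}x^2-\tfrac12\chi_tx)u\p_u$ from the projected equivalence algebra with $(\tau,\chi)\not\equiv(0,0)$ is a Lie symmetry operator of every equation whose potential $V$ solves the corresponding classifying equation --- a first-order linear PDE in $V$, always locally solvable --- so each such one-parameter subgroup of $G^\sim_1$ lies inside the symmetry group of some equation of the class; the missing pure rescalings are then recovered as compositions of symmetries of two different equations, namely $(t,x,u)\mapsto(t+a,x,e^{g(t)-g(t+a)}u)$ (a symmetry of the equation with $V=g'(t)$, being the conjugate of the heat equation's time translation under $u\mapsto e^{g}u$) followed by the heat equation's translation $t\mapsto t-a$, where $g$ solves the difference equation $g(t)-g(t+a)=\ln|\theta(t)|$, and $u\to-u$ handles negative $\theta$. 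To be fair, the paper asserts all its ``strong'' statements in Section~\ref{SectionOnLieSymLPEs} without proof; but the substitute you propose for that missing argument is the one step of your proposal that breaks.
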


For our further considerations we need to introduce terminology connected with the symmetry structure of linear equations. 
In view of the linear superposition principle, the point symmetry group~$G(\mathcal L)$ and 
the maximal Lie invariance algebra~$\mathfrak g(\mathcal L)$ 
of any (homogeneous) linear differential equation
(or any system of such equations)~$\mathcal L$ have certain properties~\cite{Ovsiannikov1982}. 
Namely, $G(\mathcal L)$ contains the translations $u\to u+\varepsilon f$ of the unknown function~$u$ 
by an arbitrary solution~$f$ of~$\mathcal L$ 
and the scale transformations $u\to\pm e^{\varepsilon}u$ superimposed with reflection of~$u$. 
These sets of transformations form subgroups of~$G(\mathcal L)$ both separately and simultaneously. 
The whole subgroup of symmetry transformations associated with the linear superposition principle of~$\mathcal L$ 
is called the \emph{trivial symmetry group} of the linear differential equation~$\mathcal L$ and 
will be denoted by $G^{\rm triv}(\mathcal L)$. 
The algebra~$\mathfrak g(\mathcal L)$ contains the corresponding operators $f\p_u$ and $u\p_u$ 
which form the Lie algebra $\mathfrak g^{\rm triv}(\mathcal L)$ called the \emph{trivial invariance algebra} 
of the linear differential equation~$\mathcal L$. 
Hereafter the function~$f$ runs through the solution set of~$\mathcal L$.

The ideal $\mathfrak g^\infty(\mathcal L)=\langle f\p_u\rangle$ of $\mathfrak g^{\rm triv}(\mathcal L)$ is 
called the (trivial) infinite-dimensional part of~$\mathfrak g(\mathcal L)$. 
The corresponding normal subgroup of~$G(\mathcal L)$ is similarly denoted by~$G^\infty(\mathcal L)$.
This notation is justified in the following way: 
Usually~\cite{Ovsiannikov1982} the maximal Lie invariance algebra~$\mathfrak g(\mathcal L)$ 
of a linear differential equation~$\mathcal L$ can be represented in the form
\[
\mathfrak g(\mathcal L)=\mathfrak g^{\rm ess}(\mathcal L)\rsemioplus\mathfrak g^\infty(\mathcal L). 
\]
Here $\mathfrak g^\infty(\mathcal L)$ is an (infinite-dimensional) Abelian ideal of
the algebra~$\mathfrak g(\mathcal L)$ 
and $\mathfrak g^{\rm ess}(\mathcal L)$ is its finite-dimensional subalgebra spanned by 
the Lie invariance operators of~$\mathcal L$, which are projectable to the independent variables 
and whose coefficients of~$\p_u$ depend linearly on~$u$. 
In particular, the above representation is true for the equations from class~\eqref{EqGenLPE}. 
Then a similar representation also holds for the group~$G(\mathcal L)$. 
Namely, $G(\mathcal L)=G^{\rm ess}(\mathcal L)\times G^\infty(\mathcal L)$, 
where $G^\infty(\mathcal L)$ is a normal subgroup of~$G(\mathcal L)$ and $G^{\rm ess}(\mathcal L)$ its subgroup.

The operators from~$\mathfrak g^{\rm ess}(\mathcal L)$ will be called \emph{essential symmetry operators} 
of the linear differential equation~$\mathcal L$ 
since mainly they and the corresponding finite transformations are useful for group analysis of~$\mathcal L$. 
Such operators can be found in a particular way via the commutation relation with the differential operator associated 
with the equation~$\mathcal L$~\cite{FushchichNikitin1994} (see also Section~\ref{SectionOnAdjointVariationalPrincipleForLPEs})
that provides possibilities for various generalizations of the notion of symmetry operators. 
They are employed for finding finite symmetry transformations of~$\mathcal L$, 
for constructing exact solutions via the Lie reduction procedure 
to differential equations with fewer independent variables 
and for the direct generation of new exact solutions by acting on known ones~\cite{Ovsiannikov1982}. 
The algebra~$\mathfrak g^{\rm ess}(\mathcal L)$ will be called the \emph{essential Lie invariance algebra} 
of the linear differential equation~$\mathcal L$. 

The intersection $\mathfrak g^{\rm ess}(\mathcal L)\cap\mathfrak g^{\rm triv}(\mathcal L)=\langle u\p_u\rangle$ 
is contained in the center of~$\mathfrak g^{\rm ess}(\mathcal L)$. 
The dimension of~$\mathfrak g^{\rm ess}(\mathcal L)/\langle u\p_u\rangle$ is called the \emph{dimension of extension} of 
the maximal Lie invariance algebra or the number of independent nontrivial Lie symmetry operators.  

In terms of these notations we can formulate Corollary~\ref{CorollaryOnSemiNormOfLPEs} and 
similar statements more precisely.
Any point transformation between two equations from class~\eqref{EqGenLPE} is the composition 
of a trivial symmetry transformation from~$G^\infty$ of the initial equation and a transformation from~$G^\sim$.

The results on the group classification of class~\eqref{EqReducedLPE} can be formulated in the form of 
the following theorem~\cite{Lie1881,Ovsiannikov1982}. 

\begin{theorem}\label{TheoremOnGroupClassificationOfLPEs}
The kernel Lie algebra of class~\eqref{EqReducedLPE} is $\langle u\p_u\rangle$.
Any equation from class~\eqref{EqReducedLPE} is invariant with respect 
to the operators $f\p_u$,
where the parameter-function $f=f(t,x)$ runs through the solution set of this equation.
All possible~$G^\sim_1$-inequivalent cases of extension of the maximal 
Lie invariance algebra are exhausted by the following ones 
(the values of~$V$ are given together with the corresponding maximal Lie invariance algebras):

\vspace{1.5ex}

$\makebox[6mm][l]{\rm 1.}
V=V(x) \colon\quad \langle\p_t,\,u\p_u,\,f\p_u\rangle;$

\vspace{1.5ex}

$\makebox[6mm][l]{\rm 2.}
V=\mu x^{-2},\ \mu\ne0 \colon\quad \langle\p_t,\, D,\, \Pi,\, u\p_u,\, f\p_u\rangle;$

\vspace{1.5ex}

$\makebox[6mm][l]{\rm 3.}
V=0 \colon\quad \langle\p_t,\, \p_x,\, G,\, D,\, \Pi,\, u\p_u,\, f\p_u\rangle$.

\vspace{1.5ex}

\noindent
Here $D=2t\p_t+x\p_x,\ \Pi=4t^2\p_t+4tx\p_x-(x^2+2t)u\p_u,\ G=2t\p_x-xu\p_u.$
\end{theorem}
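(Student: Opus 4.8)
The plan is to determine the essential Lie symmetries of a fixed equation from class~\eqref{EqReducedLPE} by exploiting the strong semi-normalization established above, rather than by a direct prolongation computation. A Lie symmetry of an equation $\mathcal L_V$ from~\eqref{EqReducedLPE} is a point transformation of this equation to itself, hence (exactly as stated above for class~\eqref{EqGenLPE} and valid here since class~\eqref{EqReducedLPE} is strongly semi-normalized) it decomposes into a trivial symmetry from $G^\infty(\mathcal L_V)$ composed with a transformation from the equivalence group~$G^\sim_1$. The former accounts precisely for the operators $f\p_u$ with $f$ running through the solution set, which establishes the second assertion and the infinite-dimensional ideal $\mathfrak g^\infty$ directly from linearity. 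Since the $G^\infty$-part leaves $V$ untouched, the $G^\sim_1$-part must map $V$ to itself, so the essential symmetry operators are exactly the infinitesimal generators of~$G^\sim_1$ that preserve the given~$V$. First I would therefore infinitesimalize~\eqref{EqTransFromEquivGroupOfReducedLPEs} by setting $\sigma=1+\varepsilon s$, $\zeta=\varepsilon\kappa$, $\theta=1+\varepsilon\rho$ with $s,\kappa,\rho$ functions of~$t$, obtaining a generator of the form
\[
Q=\tau\p_t+\Bigl(\tfrac12\tau_tx+\kappa\Bigr)\p_x-\Bigl(\tfrac18\tau_{tt}x^2+\tfrac12\kappa_tx-\rho\Bigr)u\p_u,\qquad \tau:=2\!\int\! s\,dt,
\]
together with the accompanying variation of the arbitrary element~$V$.

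Next I would write down the single determining equation expressing $Q$-invariance of~$V$. Equating the transport $\tau V_t+(\tfrac12\tau_tx+\kappa)V_x$ with the induced variation of~$V$ read off from~\eqref{EqTransFromEquivGroupOfReducedLPEs} gives
\[
\tau V_t+\Bigl(\tfrac12\tau_tx+\kappa\Bigr)V_x+\tau_tV=\tfrac18\tau_{ttt}x^2+\tfrac12\kappa_{tt}x-\rho_t-\tfrac14\tau_{tt},
\]
a classifying equation in which $V=V(t,x)$ is given and $\tau,\kappa,\rho$ depend on~$t$ alone. The kernel invariance algebra follows immediately: requiring this identity to hold for an arbitrary~$V$ forces $\tau=\kappa=0$ and $\rho_t=0$, i.e.\ $Q=\const\cdot u\p_u$, so the kernel is $\langle u\p_u\rangle$, as claimed.

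The heart of the proof is the classification of those special~$V$ admitting additional solutions $(\tau,\kappa,\rho)$. I would organize this by the projection $Q\mapsto\tau(t)\p_t$, whose image is a finite-dimensional subalgebra $\mathfrak s$ of the Lie algebra of vector fields on the $t$-line; such subalgebras have dimension at most three and, up to the change of~$t$ supplied by the $\sigma$-freedom in~$G^\sim_1$, reduce to $\{0\}$, $\langle\p_t\rangle$, $\langle\p_t,t\p_t\rangle$ or $\mathfrak{sl}_2=\langle\p_t,t\p_t,t^2\p_t\rangle$. For each normalized~$\mathfrak s$ I would substitute the corresponding~$\tau$ back into the determining equation, split with respect to powers of~$x$, and use the residual freedom in~$\zeta$ and~$\theta$ (shifting the coefficients of~$x^1$ and~$x^0$ in~$V$) together with the $x$-scaling to gauge~$V$ to a canonical representative. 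This yields the stationary potentials $V=V(x)$ (case~1, with $\dim\mathfrak s=1$); the potentials reducible to $V=\mu x^{-2}$ with the scale-invariant parameter $\mu\ne0$ (case~2, $\dim\mathfrak s=3$ realized by $\p_t$, $D$, $\Pi$); and $V=0$, which absorbs all potentials depending on~$t$ only and, besides $\mathfrak s=\mathfrak{sl}_2$, admits the additional $\tau=0$ operators $\p_x$ and~$G$ (case~3). Reconstructing the full generators from $(\tau,\kappa,\rho)$ reproduces the operators $D$, $\Pi$ and~$G$ listed in the statement.

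The main obstacle I anticipate is the completeness of this case analysis: one must verify that every~$V$ admitting an extension is $G^\sim_1$-equivalent to exactly one representative in the list, that no intermediate cases survive (for instance, a lone $x$-translation forces $V=V(t)$, which the $\theta$-freedom sends to $V=0$), and that $\mu$ in case~2 is a genuine $G^\sim_1$-invariant and not removable by scaling. Keeping track of the interlocking gauge freedoms in~$\sigma$, $\zeta$ and~$\theta$ while splitting the determining equation is the delicate bookkeeping on which exhaustiveness rests.
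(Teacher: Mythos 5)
Your proposal is correct, but it does not follow the paper's own route, because the paper contains no proof of this theorem at all: it is presented as a review of the classical results of Lie and Ovsiannikov, which were obtained by the direct infinitesimal method (prolong a general vector field, split the invariance condition into determining equations, integrate, and classify the arbitrary element). You instead derive the classifying equation by infinitesimalizing the equivalence group~\eqref{EqTransFromEquivGroupOfReducedLPEs}, which is legitimate precisely because of the strong semi-normalization of class~\eqref{EqReducedLPE} established in Section~\ref{SectionOnLieSymLPEs}: every point symmetry of a fixed equation factors, uniquely since $G^\infty\cap G^\sim_1$ is trivial, into a trivial symmetry and an element of~$G^\sim_1$ preserving~$V$, so that $\mathfrak g^{\rm ess}$ is exactly the isotropy algebra of~$V$ inside the equivalence algebra. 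Your computations check out: the determining equation you write coincides with the classifying equation $\zeta^1_t-\zeta^1_{xx}+\tau P_t+\xi P_x+\tau_tP=0$, $\zeta^1=-\frac18\tau_{tt}x^2-\frac12\sigma_tx+\varkappa(t)$, which the paper itself derives by the direct method for the identical class written in terms of $(w,P)$ in Section~\ref{SectionOnSimplestPotSymsOfLPEs}, and your generator form reproduces $D$, $\Pi$ and~$G$. Your route buys freedom from prolongation computations and makes the classification automatically one up to $G^\sim_1$-equivalence (hence, by semi-normalization, up to all point transformations), with no circularity, since~$G^\sim_1$ was computed in the paper from admissible transformations between pairs of equations rather than from symmetries; the classical route is self-contained and presupposes none of that machinery. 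When writing your argument out, two points need explicit care: differentiating the factorization $\mathcal T_\varepsilon=\mathrm{triv}_\varepsilon\circ\mathrm{equiv}_\varepsilon$ at $\varepsilon=0$ requires its smoothness in~$\varepsilon$, which follows from the uniqueness of the factors and their explicit expression through~$\mathcal T_\varepsilon$; and the kernel of your projection $Q\mapsto\tau\p_t$ must be disposed of in full generality, since an operator with $\tau=0$ but $\kappa\ne0$ forces $V$ to be quadratic in~$x$ with $t$-dependent coefficients, and every such~$V$ is $G^\sim_1$-equivalent to~$0$ (solve three successive ODEs for $\sigma$, $\zeta$, $\theta$ in~\eqref{EqTransFromEquivGroupOfReducedLPEs}), not only the stationary instance $V=V(t)$ your sketch mentions.
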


\begin{note}
It is assumed in case~1 of Theorem~\ref{TheoremOnGroupClassificationOfLPEs} that 
the value~$V=V(x)$ is $G^\sim_1$-inequivalent to the value $\mu x^{-2}$, where $\mu\in\mathbb R$.
\end{note}

\begin{note}
Theorem~\ref{TheoremOnGroupClassificationOfLPEs} can be reformulated 
for the entire classes~\eqref{EqGenLPE} and~\eqref{EqGenInhomLPE} 
if $G^\sim_1$-equivalence is replaced by $G^\sim$- and $G^\sim_{\rm inh}$-equivalences correspondingly. 
A similar reformulation is possible also for subclasses with $A=1$. 
Let us emphasize that the group classification in a semi-normalized class with respect to its equivalence group 
is identical to the classification up to all admissible point transformations.
\end{note}

\begin{corollary}\label{CorollaryOnNumberOfNontrivSymsOfLPEs}
For any equation~$\mathcal L$ from class~\eqref{EqGenLPE} $\dim\mathfrak g^{\rm ess}(\mathcal L)\in\{1,2,4,6\}$, i.e., 
the number of independent nontrivial symmetries belongs to $\{0,1,3,5\}$.
If $\dim\mathfrak g^{\rm ess}(\mathcal L)=6$ then the equation~$\mathcal L$ is $G^\sim$-equivalent to 
the linear heat equation $u_t=u_{xx}$.
\end{corollary}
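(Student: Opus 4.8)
The plan is to deduce the corollary directly from the group classification in Theorem~\ref{TheoremOnGroupClassificationOfLPEs}, using the fact, established above, that every equation in class~\eqref{EqGenLPE} can be reduced to the canonical form~\eqref{EqReducedLPE} by a transformation from~$G^\sim$. The first step is to observe that $\dim\mathfrak g^{\rm ess}(\mathcal L)$ is an invariant of $G^\sim$-equivalence (indeed of any admissible point transformation within the class). This follows from the structure of $\mathfrak g(\mathcal L)$ recalled just before the theorem: a point transformation~$\mathcal T$ connecting two equations pushes forward their maximal Lie invariance algebras isomorphically and respects the decomposition $\mathfrak g(\mathcal L)=\mathfrak g^{\rm ess}(\mathcal L)\rsemioplus\mathfrak g^\infty(\mathcal L)$, since the infinite-dimensional trivial ideal $\mathfrak g^\infty=\langle f\p_u\rangle$ is singled out by the linear superposition principle and the central element $u\p_u$ is likewise preserved. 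Hence the dimension of the finite-dimensional complement $\mathfrak g^{\rm ess}$ coincides for $G^\sim$-equivalent equations, and it suffices to compute it on the representatives~\eqref{EqReducedLPE}.

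Second, I would read off the admissible dimensions from Theorem~\ref{TheoremOnGroupClassificationOfLPEs}. For a generic potential $V=V(t,x)$ no extension occurs and $\mathfrak g^{\rm ess}(\mathcal L)$ coincides with the kernel $\langle u\p_u\rangle$, giving dimension~$1$. The three listed cases of extension contribute the essential parts $\langle\p_t,u\p_u\rangle$ (for $V=V(x)$), $\langle\p_t,D,\Pi,u\p_u\rangle$ (for $V=\mu x^{-2}$) and $\langle\p_t,\p_x,G,D,\Pi,u\p_u\rangle$ (for $V=0$), of dimensions $2$, $4$ and $6$ respectively. Since the classification is exhaustive up to $G^\sim_1$-equivalence, and $G^\sim_1$-equivalent equations of~\eqref{EqReducedLPE} are in particular $G^\sim$-equivalent (as $G^\sim_1$ is precisely the subgroup of $G^\sim$ preserving the subclass $A=1$, $B=0$), these four values $1,2,4,6$ are exactly the possible ones. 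Subtracting the one-dimensional center $\langle u\p_u\rangle$ yields $0,1,3,5$ for the number of independent nontrivial symmetries, as claimed.

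Finally, for the sharp statement about dimension~$6$: among the cases above, the value $\dim\mathfrak g^{\rm ess}=6$ is attained only for $V=0$ in~\eqref{EqReducedLPE}, that is, for the heat equation $u_t=u_{xx}$. Given an arbitrary $\mathcal L$ in class~\eqref{EqGenLPE} with $\dim\mathfrak g^{\rm ess}(\mathcal L)=6$, I would reduce it by a $G^\sim$-transformation to some equation in~\eqref{EqReducedLPE}; by the invariance established in the first step this reduced equation again has a six-dimensional essential algebra, hence must be $G^\sim_1$-equivalent to $u_t=u_{xx}$, so $\mathcal L$ itself is $G^\sim$-equivalent to the heat equation.

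The explicit dimension counts are entirely routine. The only points demanding care are the invariance of $\dim\mathfrak g^{\rm ess}$ under the equivalence group and the clean passage between the $G^\sim_1$-classification of~\eqref{EqReducedLPE} and $G^\sim$-equivalence in~\eqref{EqGenLPE}. I expect the main (though still mild) obstacle to be verifying that the splitting into essential and trivial parts is genuinely preserved by arbitrary admissible point transformations, and not merely formally by the narrower groups $G^\sim_1$ or $G^\sim$; it is precisely this that licenses computing the invariant on the canonical representatives.
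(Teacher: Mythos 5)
Your proof is correct and follows essentially the same route the paper intends: the corollary is stated as an immediate consequence of Theorem~\ref{TheoremOnGroupClassificationOfLPEs} together with the preceding notes (that the classification transfers to the whole class~\eqref{EqGenLPE} under $G^\sim$-equivalence, and that in a semi-normalized class $G^\sim$-classification coincides with classification up to all admissible point transformations). Your explicit verification that $\dim\mathfrak g^{\rm ess}$ is invariant under admissible transformations merely spells out what the paper leaves implicit in those notes, so there is no substantive difference.
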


\begin{note}
The presented way of gauging the arbitrary elements is optimal for group classification. 
The hierarchy of normalized classes of inhomogeneous equations 
and the corresponding semi-normalized classes of homogeneous equations are constructed. 
Due to its properties, the subclass~\eqref{EqReducedLPE} is convenient for solving the group classification problem. 
The obtained results can be extended in an obvious way to all classes from the hierarchy. 
Different choices of gauges for the arbitrary elements (e.g., reduction to the `Kolmogorov' or `Fokker--Planck' form)
may lead to a considerable complication of the problem. 
\end{note}

Consider the group classification problem for the `Kolmogorov' form ($C=0$) of equations from class~\eqref{EqGenLPE}.
(It follows from results of Section~\ref{SectionOnLocalCLsOfLPEs} 
that the symmetry analysis of the `Fokker--Planck' form, being adjoint to the `Kolmogorov' form,
is reduced to an investigation of the `Kolmogorov' form.) 
Note that the signs of~$A$ and~$B$ are inessential under symmetry investigation 
due to the presence of equivalence transformations alternating the signs.

The gauge $C=0$ totally breaks the normalization properties.
Indeed, a point transformation~$\mathcal T$ connects two equations from the class~\eqref{EqGenLPE} with $C=\tilde C=0$
iff its components are of the form~\eqref{EqGenFormOfTransOfInhomLPEs}, 
where $T$ and $X$ are arbitrary smooth functions of their arguments such that $T_tX_x\ne0$, $U^1\ne0$ 
and additionally $1/U^1$ and $U^0/U^1$ are solutions of the initial equation. 
The arbitrary elements $A$ and $B$ are transformed by formulas~\eqref{EqTransOfCoeffsOfLPE}.
The equivalence group of the subclass of~\eqref{EqGenLPE} with $C=0$
consists of only those transformations of the form~\eqref{EqGenFormOfTransOfInhomLPEs} with $U^1,U^0=\const$.
Therefore, this subclass is not semi-normalized.
There exist equations in it, transformed into one another by a point transformation,
which are inequivalent with respect to the equivalence group.
The structure of admissible transformations which are not generated by transformations from the equivalence group 
is quite complicated. 
That is why it seems too difficult to present a classification for the subclass with respect to its equivalence group. 
The additional gauge $A=1$ does not improve the situation. 
A classification up to its set of admissible transformations is derived 
from Theorem~\ref{TheoremOnGroupClassificationOfLPEs} by mapping the listed equations to the `Kolmogorov' form.
 
\begin{corollary}\label{CorollaryOnClassificationOfLPEsC0}
The kernel Lie algebra of the subclass of~\eqref{EqGenLPE} with $C=0$ is $\langle u\p_u,\p_u\rangle$.
Any equation from this subclass is invariant with respect to the operators $f\p_u$,
where the parameter-function $f=f(t,x)$ runs through the solution set of this equation. 
By a point transformation it is reduced to an equation with $A=1$ from the same subclass.
All possible cases of extension of the maximal Lie invariance algebras in this subclass are exhausted, 
up to point transformations, by the following ones 
(in all the cases $A=1$; the values of~$B$ are given together with the corresponding maximal Lie invariance algebras):

\vspace{1.5ex}

$\makebox[6mm][l]{\rm 1.}
B=B(x) \colon\quad \langle\p_t,\,u\p_u,\,f\p_u\rangle;$

\vspace{1.5ex}

$\makebox[6mm][l]{\rm 2.}
B=\nu x^{-1},\ \nu\geqslant1,\ \nu\ne2 \colon\quad \langle\p_t,\, D,\, \Pi-2\nu tu\p_u,\, u\p_u,\, f\p_u\rangle;$

\vspace{1.5ex}

$\makebox[6mm][l]{\rm 3.}
B=x^{-1}\bigl(1-2\varkappa\tan(\varkappa\ln|x|)\bigr),\ \varkappa\ne0 \colon\quad 
\langle\p_t,\, 2D-xBu\p_u,\, \Pi-2txBu\p_u,\, u\p_u,\, f\p_u\rangle;$

\vspace{1.5ex}

$\makebox[6mm][l]{\rm 4.}
B=0 \colon\quad \langle\p_t,\, \p_x,\, G,\, D,\, \Pi,\, u\p_u,\, f\p_u\rangle$.

\vspace{1.5ex}

\noindent
Here $D=2t\p_t+x\p_x,\ \Pi=4t^2\p_t+4tx\p_x-(x^2+2t)u\p_u,\ G=2t\p_x-xu\p_u.$
\end{corollary}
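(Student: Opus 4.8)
The plan is to obtain the classification directly from Theorem~\ref{TheoremOnGroupClassificationOfLPEs} by transporting each canonical reduced equation~\eqref{EqReducedLPE} to the `Kolmogorov' form, as indicated in the paragraph preceding the statement. The gauge that effects this is the transformation with $T=t$, $X=x$ and $U^1=1/\varphi$, where $\varphi$ is a fixed nonvanishing solution of the initial equation: by Corollary~\ref{CorollaryOnAdmTransOfLPEs} it is admissible, it preserves $A=1$, and it yields $\tilde C=-U^1L(1/U^1)=0$ because $1/U^1=\varphi$ solves $L\varphi=0$, while formula~\eqref{EqTransOfCoeffsOfLPE} gives $\tilde B=-2U^1_x/U^1=2(\ln\varphi)_x$. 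Thus the passage between the reduced and the `Kolmogorov' frames is entirely controlled by the choice of the gauge solution~$\varphi$, and, since the $C=0$ subclass fails to be semi-normalized, it is precisely this freedom that produces the nonobvious point equivalences between the resulting equations.

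First I would settle the elementary assertions. For every $C=0$ equation the constant function solves it, so $\p_u$ lies in every $\mathfrak g(\mathcal L)$ of the subclass together with the universal scaling $u\p_u$; as a generic member of the subclass has no further symmetry of this common type, the kernel is $\langle u\p_u,\p_u\rangle$, and the reduction to $A=1$ is the standard $A$-gauge underlying Corollary~\ref{CorollaryOnAdmTransOfLPEsA1}. I would then run through the three cases of Theorem~\ref{TheoremOnGroupClassificationOfLPEs}, in each choosing a stationary gauge solution $\varphi=\varphi(x)$ of $-\varphi_{xx}+V\varphi=0$, so that $B=2(\ln\varphi)_x$ is time-independent. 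An arbitrary $V=V(x)$ then gives an arbitrary $B=B(x)$ (case~1), and $V=0$ with $\varphi\equiv1$ gives $B=0$ (case~4). The substantive case is $V=\mu x^{-2}$, whose stationary solutions are the Euler powers $x^r$ with $r(r-1)=\mu$, i.e.\ $r=\tfrac12(1\pm\sqrt{1+4\mu})$: when $1+4\mu\ge0$ the roots are real and $\varphi=x^r$ yields $B=\nu x^{-1}$ with $\nu=2r$, whereas when $1+4\mu<0$ the real solution $\varphi=|x|^{1/2}\cos(\varkappa\ln|x|)$ with $\varkappa=\tfrac12\sqrt{-(1+4\mu)}$ yields $B=x^{-1}(1-2\varkappa\tan(\varkappa\ln|x|))$. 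This is exactly the split of the single case~2 of the theorem into cases~2 and~3 of the corollary.

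The symmetry algebras are carried along by pushing the operators of the reduced frame forward under $\tilde u=u/\varphi$. Since the gauge is stationary, $\p_t$ is unaffected, and a vector field $\tau\p_t+\xi\p_x+cu\p_u$ acquires the new $u$-coefficient $c-\xi(\ln\varphi)_x=c-\tfrac12\xi B$. Applied to $\Pi$ and $D$ this turns them into $\Pi-2txB\,u\p_u$ and $2D-xB\,u\p_u$, which specialize for $\varphi=x^r$ to $\Pi-2\nu t\,u\p_u$ and to $D$ modulo $u\p_u$; this reproduces the essential algebras of cases~2 and~3 and leaves that of case~4 intact. Finally each solution-translation $f\p_u$ maps to $(f/\varphi)\p_u$, with $f/\varphi$ again ranging over the full solution set, so the infinite-dimensional part is preserved in every case.

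The step I expect to be the main obstacle is fixing the exact parameter ranges in case~2 by exhausting the point equivalences among the forms $B=\nu x^{-1}$. Two facts must be isolated. First, the indicial roots satisfy $r_1+r_2=1$, so the values $\nu$ and $2-\nu$ arise from the \emph{same} $\mu$ through the gauges $\varphi=x^{r_1}$ and $\varphi=x^{r_2}$ and are therefore point-equivalent; this licenses the normalization $\nu\ge1$. Second, $\nu=2$ forces $r=1$ and hence $\mu=0$, so that $B=2x^{-1}$ is obtained from the heat equation by the admissible gauge $\varphi=x$ and is point-equivalent to $B=0$; this explains both the exclusion $\nu\ne2$ and the placement of $B=2x^{-1}$ in case~4 with its full five-dimensional extension. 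Showing that these are the only coincidences, and that the tangent family is genuinely inequivalent to the power family, ultimately rests on the fact that after reduction distinct admissible classes of~\eqref{EqReducedLPE} correspond to $G^\sim_1$-inequivalent potentials $V=\mu x^{-2}$, the modulus~$\mu$ being invariant under~$G^\sim_1$; so the whole difficulty is the bookkeeping of which gauges~$\varphi$ identify these representatives once the convenient semi-normalization of~\eqref{EqReducedLPE} has been lost in the $C=0$ subclass.
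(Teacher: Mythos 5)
Your proposal is correct and takes essentially the same route as the paper: the paper likewise derives the corollary from Theorem~\ref{TheoremOnGroupClassificationOfLPEs} by mapping the canonical reduced equations to the `Kolmogorov' form via admissible gauge transformations, with case~2 of the theorem splitting into cases~2 and~3 according to the sign of $1+4\mu$ and the equivalence $u'=x^{\nu-1}u$ between $B=\nu x^{-1}$ and $B'=(2-\nu)x^{-1}$ forcing the normalization $\nu\geqslant1$, $\nu\ne2$ (with $\nu=2$ absorbed into the heat-equation case). The paper states this argument only in outline; your explicit choice of stationary gauge solutions (Euler powers $x^r$ with $r(r-1)=\mu$, respectively $|x|^{1/2}\cos(\varkappa\ln|x|)$) and the pushforward computation of $D$, $\Pi$ and $f\p_u$ correctly supply the omitted details.
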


Corollary~\ref{CorollaryOnClassificationOfLPEsC0} shows that 
even up to all admissible transformations, the group classification of equations in the `Kolmogorov' form 
is more complicated than the group classification in class~\eqref{EqReducedLPE}. 
Parameters of equations are explicitly included in expressions of symmetry operators. 
Case~2 of Theorem~\ref{TheoremOnGroupClassificationOfLPEs} is split into two cases of 
Corollary~\ref{CorollaryOnClassificationOfLPEsC0} 
(case 2 with $\nu=1+\sqrt{1+4\mu}$ if $4\mu\geqslant-1$ and case~3 with $\varkappa=\sqrt{-1/4-\mu}$ if $4\mu<-1$). 
These cases can be united only over the complex number.
The equations with $B=\nu x^{-2}$ and $B'=\nu' x^{-2}$ ($A=A'=1$)
are equivalent with respect to point 
transformations iff $\nu+\nu'=2$. The corresponding transformation is $u'=x^{\nu-1}u$, $t$ and $x$ remain unchanged. 
In particular, the equation with $B=2x^{-2}$ is reduced by the transformation $u'=xu$ to 
the linear heat equation ($B=0$). That is why the parameter~$\nu$ should be constrained in case~2.
The form of the arbitrary element~$B$ is not simple in case 3 and cannot be simplified in the real case. 
Therefore, it is preferable to carry out the symmetry analysis of equations in the form~\eqref{EqReducedLPE} and then 
to derive results for the `Kolmogorov' form. 

The group classification of the subclass with $C=0$ with respect to its equivalence group 
can be obtained from the classification presented in Corollary~\ref{CorollaryOnClassificationOfLPEsC0} 
by extending the classification cases by essential admissible transformations which are not 
generated by the equivalence group. Such transformations have the form 
$\tilde t=t$, $\tilde x=x$, $\tilde u=U^1(t,x)u$, where 
$U^1\ne0$ and $1/U^1$ is an arbitrary solution of the initial equations.

\section{Theoretical background on conservation laws}\label{SectionOnTheorBackgroundOnCLs}

To begin with, we present the necessary theoretical background on conservation laws and potential systems,
basically following~\cite{Bocharov&Co1997,Olver1986,Popovych&Ivanova2004ConsLawsLanl,Zharinov1986}. 

\subsection{Definition of local conservation laws}\label{SectionOnCLsDef}

Let~$\mathcal L$ be a system~$L(x,u_{(\rho)})=0$ of $l$ differential equations $L^1=0$, \ldots, $L^l=0$
for $m$ unknown functions $u=(u^1,\ldots,u^m)$
of $n$ independent variables $x=(x_1,\ldots,x_n).$
Here $u_{(\rho)}$ denotes the set of all the derivatives of the functions $u$ with respect to $x$
of order not greater than~$\rho$, including $u$ as the derivative of order zero.
Let $\mathcal L_{(k)}$ denote the set of all algebraically independent differential consequences
that have, as differential equations, orders not greater than $k$. We identify~$\mathcal L_{(k)}$ with
the manifold determined by~$\mathcal L_{(k)}$ in the jet space~$J^{(k)}$.

\begin{definition}\label{def.conservation.law}
A {\em conserved vector} of the system~$\mathcal L$ is
an $n$-tuple $F=(F^1(x,u_{(r)}),\ldots,F^n(x,u_{(r)}))$ for which the divergence $\mathop{\rm Div}\nolimits F:=D_iF^i$
vanishes for all solutions of~$\mathcal L$, i.e., 
\begin{equation}\label{EqDefOfConservedVector}
\mathop{\rm Div}\nolimits F\,\bigl|_{\mathcal L}=0.
\end{equation}
\end{definition}

In Definition~\ref{def.conservation.law} and below
$D_i=D_{x_i}$ denotes the operator of total differentiation with respect to the variable~$x_i$, i.e.,
$D_i=\p_{x_i}+u^a_{\alpha,i}\p_{u^a_\alpha}$, where
$u^a_\alpha$ and $u^a_{\alpha,i}$ stand for the variables in jet space
which correspond to the derivatives
$\p^{|\alpha|}u^a/\p x_1^{\alpha_1}\ldots\p x_n^{\alpha_n}$ and $\p u^a_\alpha/\p x_i$,
$\alpha=(\alpha_1,\ldots,\alpha_n)$,
$\alpha_i\in\mathbb{N}\cup\{0\}$, $|\alpha|{:}=\alpha_1+\cdots+\alpha_n$.
We use the summation convention for repeated indices and assume any function as its zero-order derivative. 
The indices $i$, $j$ and $k$ run from~1 to~$n$, 
the index~$a$ runs from 1 to~$m$.
The notation~$V\bigl|_{\mathcal L}$ means that the values of $V$ are considered
only on solutions of the system~$\mathcal L$.
 
Heuristically, a conservation law of the system~$\mathcal L$ 
is an expression $\mathop{\rm Div}\nolimits F$ vanishing on the solutions of~$\mathcal L$. 
The more rigorous definition of conservation laws given below is based on the factorization of the space 
of conserved vectors with respect to the subspace of trivial conserved vectors. 
Note that there is also a formalized definition of conservation laws of~$\mathcal L$ 
as $(n-1)$-dimensional cohomology classes 
in the so-called horizontal de Rham complex on the infinite prolongation of the system~$\mathcal L$ 
\cite{Bocharov&Co1997,Tsujishita1982,Vinogradov1984}. 
The formalized definition is appropriate for certain theoretical considerations and 
reduces to the usual one after local coordinates are fixed. 

\begin{definition}
A conserved vector $F$ is called {\em trivial} if $F^i=\hat F^i+\check F^i$, 
where $\hat F^i$ and $\check F^i$ are, like $F^i$, smooth functions of $x$ and derivatives of $u$
(i.e. differential functions),
$\hat F^i$ vanishes on the solutions of~$\mathcal L$ and the $n$-tuple $\check F=(\check F^1,\ldots,\check F^n)$
is a null divergence (i.e., its divergence vanishes identically).
\end{definition}

The triviality concerning conserved vectors vanishing on solutions of the system can easily be
eliminated by restricting to the manifold of the system, taking into account all its differential consequences.
A (local) characterization of all null divergences is given by the following lemma (see e.g.~\cite{Olver1986}).

\begin{lemma}\label{lemma.null.divergence}
The $n$-tuple $F=(F^1,\ldots,F^n)$, $n\geqslant2$, is a null divergence ($\mathop{\rm Div}\nolimits F\equiv0$)
iff there exist smooth functions $v^{ij}$ of $x$ and derivatives of $u$,
such that $v^{ij}=-v^{ji}$ and $F^i=D_jv^{ij}$.
\end{lemma}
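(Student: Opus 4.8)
The statement is an equivalence, and the forward implication is immediate, so the plan is to dispose of it first and then concentrate on the converse. For sufficiency, suppose $F^i=D_jv^{ij}$ with $v^{ij}=-v^{ji}$. Then the divergence is $D_iF^i=D_iD_jv^{ij}$. Since total derivatives commute, $D_iD_j$ is symmetric in the index pair $(i,j)$, whereas $v^{ij}$ is antisymmetric; relabelling the summation indices $i\leftrightarrow j$ therefore gives $D_iD_jv^{ij}=-D_iD_jv^{ij}$, so $D_iF^i\equiv0$. I would also note here that the hypothesis $n\geqslant2$ is essential: for $n=1$ there is no nonzero antisymmetric $v$, while a null divergence $D_1F^1\equiv0$ need not vanish (it may be a nonzero constant).

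For the converse I would reformulate the problem in the language of the horizontal (total) de Rham complex over the jet space. To the tuple $F=(F^1,\dots,F^n)$ I associate the horizontal $(n-1)$-form $\omega_F=\sum_i(-1)^{i-1}F^i\,dx^1\wedge\cdots\wedge\widehat{dx^i}\wedge\cdots\wedge dx^n$ with differential-function coefficients, and write $D=dx^i\wedge D_i$ for the total exterior derivative. Then $D\omega_F=(D_iF^i)\,dx^1\wedge\cdots\wedge dx^n$, so the null-divergence condition says exactly that $\omega_F$ is $D$-closed in degree $n-1$. The assertion to be proved is then equivalent to exactness in that degree: $\omega_F=D\eta$ for some horizontal $(n-2)$-form $\eta$, whose coefficients are, up to sign, the sought functions $v^{ij}$. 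Reading off the components of the identity $D\eta=\omega_F$ reproduces precisely $F^i=D_jv^{ij}$ with $v^{ij}=-v^{ji}$, the antisymmetry being automatic from the antisymmetry of the wedge product.

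The heart of the matter is therefore the algebraic Poincar\'e lemma for the horizontal complex: over the trivial jet bundle (contractible base, vector-space fibres) the horizontal cohomology vanishes in the intermediate degrees $1,\dots,n-1$. I would establish this by exhibiting an explicit homotopy operator $h$ satisfying $Dh+hD=\mathrm{id}$ on horizontal forms of those degrees. Such operators are standard; concretely one can assemble $h$ in two stages: a fibre contraction obtained by scaling the dependent variables $u\to\lambda u$ (with the induced scaling of all jet coordinates) and integrating in $\lambda$ over $[0,1]$, which reduces a closed form to one with no dependence on $u$ or its derivatives, followed by the ordinary Poincar\'e homotopy in the base coordinates $x$ to remove the residual explicit $x$-dependence, using contractibility of the domain. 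Applying $h$ to the closed form $\omega_F$ then yields the primitive $\eta$.

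The step I expect to be the main obstacle is verifying that the $v^{ij}$ produced in this way are genuine differential functions, that is, smooth functions of $x$ and of \emph{finitely many} derivatives of $u$. The homotopy is built from total derivatives and integrations and could a priori involve jet coordinates of unbounded order; the essential bookkeeping is to show that, because $F$ has finite order, the output has finite order as well, and that the homotopy identity $Dh+hD=\mathrm{id}$ survives the integrations (differentiation under the integral sign and vanishing of boundary contributions at the scaling endpoints). Once finiteness of order and the homotopy identity are confirmed, the conclusion follows immediately from the form-level reformulation above.
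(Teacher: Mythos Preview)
The paper does not actually prove this lemma: it is stated as a known characterization of null divergences with a parenthetical reference to Olver's textbook, and no argument is supplied. So there is no in-paper proof to compare against.

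Your proposal is the standard textbook route and is essentially what one finds in the cited reference: reformulate the null-divergence condition as $D$-closedness of the associated horizontal $(n-1)$-form and invoke the algebraic (variational) Poincar\'e lemma, established via an explicit homotopy operator, to conclude exactness. Your identification of the finite-order bookkeeping as the one genuinely technical point is accurate; the usual homotopy (higher Euler operators combined with the base-variable Poincar\'e homotopy) does preserve finite jet order, and once that is checked the argument goes through. The sufficiency direction and the remark on $n=1$ are both fine.
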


The functions $v^{ij}$ are called {\em potentials} corresponding to the null divergence~$F$.
If $n=1$ any null divergence is constant.

\begin{definition}\label{DefinitionOfConsVectorEquivalence}
Two conserved vectors $F$ and $F'$ are called {\em equivalent} if
the vector-function $F'-F$ is a trivial conserved vector.
\end{definition}

If $\mathcal L$ is a system of ordinary differential equations ($n=1$) 
then the conserved quantities (first integrals) $F$ and $F'$ are equivalent 
by Definition~\ref{DefinitionOfConsVectorEquivalence} if their difference is constant on the solutions of~$\mathcal L$.

In the case of two independent variables we re-denote $(F^1,F^2)\to (F,G)$ and $(x_1,x_2)\to(t,x)$.
Two conserved vectors $(F,G)$ and $(F',G')$ are equivalent if
there exist functions~$\hat F$, $\hat G$ and~$H$ of~$t$, $x$ and derivatives of~$u$ such that
$\hat F$ and $\hat G$ vanish on~$\mathcal L_{(k)}$ for some~$k$~and
\[
F'=F+\hat F+D_xH ,\qquad G'=G+\hat G-D_tH.
\]

The above definitions of triviality and equivalence of conserved vectors are natural
in view of the usual ``empiric'' definition of conservation laws of a system of differential equations
as divergences of its conserved vectors, i.e., divergence expressions which vanish for all solutions of this system.
For example, equivalent conserved vectors correspond to the same conservation law.
It allows us to formulate the definition of conservation law in a rigorous style (see, e.g.,~\cite{Bocharov&Co1997,Zharinov1986}).
Namely, for any system~$\mathcal L$ of differential equations the set~$\CV(\mathcal L)$ of conserved vectors of
its conservation laws is a linear space,
and the subset~$\CV_0(\mathcal L)$ of trivial conserved vectors is a linear subspace in~$\CV(\mathcal L)$.
The factor space~$\CL(\mathcal L)=\CV(\mathcal L)/\CV_0(\mathcal L)$
coincides with the set of equivalence classes of~$\CV(\mathcal L)$ with respect to the equivalence relation introduced in
Definition~\ref{DefinitionOfConsVectorEquivalence}.

\begin{definition}\label{DefinitionOfConsLaws}
The elements of~$\CL(\mathcal L)$ are called {\em conservation laws} of the system~$\mathcal L$,
and the factor space~$\CL(\mathcal L)$ is called {\em the space of conservation laws} of~$\mathcal L$.
\end{definition}

This is why we understand the description of the set of conservation laws
as finding~$\CL(\mathcal L)$, which in turn is equivalent to constructing either a basis if
$\dim \CL(\mathcal L)<\infty$ or a system of generators in the infinite dimensional case.
The elements of~$\CV(\mathcal L)$ which belong to the same equivalence class giving a conservation law~${\cal F}$
are all considered as conserved vectors of this conservation law,
and we will additionally identify elements from~$\CL(\mathcal L)$ with their representatives
in~$\CV(\mathcal L)$.
For $F\in\CV(\mathcal L)$ and ${\cal F}\in\CL(\mathcal L)$
the notation~$F\in {\cal F}$ will mean that $F$ is a conserved vector corresponding
to the conservation law~${\cal F}$.
In contrast to the order $r_F$ of a conserved vector~$F$ as the maximal order of derivatives explicitly appearing in~$F$,
the {\em order of the conservation law}~$\cal F$
is defined as $\min\{r_F\,|\,F\in{\cal F}\}$.
By linear dependence of conservation laws we mean linear dependence of them as elements of~$\CL(\mathcal L)$.
Therefore, in the framework of the ``representative'' approach
conservation laws of a system~$\mathcal L$ are considered {\em linearly dependent} if
there exists linear combination of their representatives which is a trivial conserved vector.

\subsection{Characteristics of conservation laws}\label{SectionOnCharacteristicsOfConsLaws}

Let the system~$\cal L$ be totally nondegenerate~\cite{Olver1986}.
Then an application of the Hadamard lemma to the definition of conserved vector and integration by parts imply that
the divergence of any conserved vector of~$\mathcal L$ can always be represented,
up to the equivalence relation of conserved vectors,
as a linear combination of the left hand sides of the independent equations from $\mathcal L$
with coefficient functions $\lambda^\mu$ on a suitable jet space~$J^{(k)}$:
\begin{equation}\label{CharFormOfConsLaw}
\mathop{\rm Div}\nolimits F=\lambda^\mu L^\mu.
\end{equation}
Here the order~$k$ is determined by~$\mathcal L$ and the allowable order of conservation laws,
$\mu=\overline{1,l}$.

\begin{definition}\label{DefCharForm}
Formula~\eqref{CharFormOfConsLaw} and the $l$-tuple $\lambda=(\lambda^1,\ldots,\lambda^l)$
are respectively called the {\it characteristic form} and the {\it characteristic}
of the conservation law associated with the conserved vector~$F$.
\end{definition}

The characteristic~$\lambda$ is {\em trivial} if it vanishes for all solutions of $\cal L$.
Since $\cal L$ is nondegenerate, the characteristics~$\lambda$ and~$\tilde\lambda$ satisfy~\eqref{CharFormOfConsLaw}
for the same~$F$ and, therefore, are called {\em equivalent}
iff $\lambda-\tilde\lambda$ is a trivial characteristic.
Similarly to conserved vectors, the set~$\Ch(\mathcal L)$ of characteristics
corresponding to conservation laws of the system~$\cal L$ is a linear space,
and the subset~$\Ch_0(\mathcal L)$ of trivial characteristics is a linear subspace in~$\Ch(\mathcal L)$.
The factor space~$\Ch_{\rm f}(\mathcal L)=\Ch(\mathcal L)/\Ch_0(\mathcal L)$
coincides with the set of equivalence classes of~$\Ch(\mathcal L)$
with respect to the above characteristic equivalence relation.

The following result~\cite{Olver1986} forms the cornerstone for the methods of studying conservation laws,
which are based on formula~\eqref{CharFormOfConsLaw}, including the Noether theorem and
the direct method in the version by
Anco and Bluman~\cite{Anco&Bluman2002a,Anco&Bluman2002b}.

\begin{theorem}[\cite{Olver1986}]\label{TheoremIsomorphismChCV}
Let~$\mathcal L$ be a normal, totally nondegenerate system of differential equations.
Then the representation of conservation laws of~$\mathcal L$ in the characteristic form~\eqref{CharFormOfConsLaw}
generates a one-to-one linear mapping between~$\CL(\mathcal L)$ and~$\Ch_{\rm f}(\mathcal L)$.
\end{theorem}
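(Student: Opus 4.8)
The plan is to exhibit the assignment in~\eqref{CharFormOfConsLaw} as a well-defined linear bijection $\Phi\colon\CL(\mathcal L)\to\Ch_{\rm f}(\mathcal L)$ and to verify bijectivity by showing that it matches trivial objects to trivial objects on both sides. First I would pin down the map on the level of conserved vectors. Let $F\in\CV(\mathcal L)$, so $\mathop{\rm Div}\nolimits F$ vanishes on~$\mathcal L$ and on all its differential consequences. By total nondegeneracy the Hadamard lemma applies and yields a representation $\mathop{\rm Div}\nolimits F=\sum_{\mu,\alpha}G^\mu_\alpha\,D_\alpha L^\mu$ with differential-function coefficients $G^\mu_\alpha$ ($\mu=\overline{1,l}$, $\alpha$ a multi-index). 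Integrating by parts so as to move every total-derivative operator off the factors $L^\mu$ gives $\mathop{\rm Div}\nolimits F=\lambda^\mu L^\mu+\mathop{\rm Div}\nolimits\hat F$, where $\lambda^\mu=\sum_\alpha(-D)_\alpha G^\mu_\alpha$ and each component of~$\hat F$ retains a factor $D_\beta L^\mu$ and therefore vanishes on~$\mathcal L$. Replacing $F$ by the equivalent conserved vector $F-\hat F$, I may assume $\mathop{\rm Div}\nolimits F=\lambda^\mu L^\mu$ as an identity on the whole jet space; this is precisely the characteristic form, and $\Phi([F])$ is declared to be the class of~$\lambda$.

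Next I would check that $\Phi$ is linear and well defined and that it is onto. Linearity is immediate from the bilinearity of~\eqref{CharFormOfConsLaw} in $F$ and~$\lambda$. Independence of the representative is built into the equivalence of conserved vectors: replacing $F$ by an equivalent vector alters $\mathop{\rm Div}\nolimits F$ only by the divergence of a vector vanishing on~$\mathcal L$ together with a null divergence, which the integration by parts above absorbs, changing $\lambda$ only by a trivial characteristic. The uniqueness of~$\lambda$ modulo trivial characteristics is the step where \emph{normality} is indispensable: if $\lambda^\mu L^\mu\equiv\tilde\lambda^\mu L^\mu$ then $(\lambda^\mu-\tilde\lambda^\mu)L^\mu\equiv0$, and a normal system admits no nontrivial differential identity among the $L^\mu$, forcing $\lambda-\tilde\lambda$ to vanish on~$\mathcal L$, i.e.\ to be trivial. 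For surjectivity I would use the intrinsic description of characteristics: $\lambda\in\Ch(\mathcal L)$ exactly when $\Eop_a(\lambda^\mu L^\mu)=0$ for all~$a$, where $\Eop$ is the Euler operator; since a differential function annihilated by~$\Eop$ is a total divergence, one recovers, via the homotopy inverse of~$\mathop{\rm Div}\nolimits$, a conserved vector $F$ with $\mathop{\rm Div}\nolimits F=\lambda^\mu L^\mu$, whence $\Phi([F])=[\lambda]$.

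The substantive direction, and the one I expect to be the main obstacle, is injectivity: a conservation law whose characteristic is trivial must itself be trivial. Suppose $\lambda$ is trivial, so by the Hadamard lemma (again using total nondegeneracy) $\lambda^\mu=\sum_{\nu}\mathcal D^{\mu\nu}(L^\nu)$ for some linear differential operators $\mathcal D^{\mu\nu}$ with differential-function coefficients. Then $\mathop{\rm Div}\nolimits F=\lambda^\mu L^\mu=\sum_{\mu,\nu}\mathcal D^{\mu\nu}(L^\nu)\,L^\mu$ is a total divergence each of whose summands carries at least two factors drawn from the differential ideal generated by the~$L^\mu$. The goal is to show that such a divergence equals $\mathop{\rm Div}\nolimits W$ for some $W$ vanishing on~$\mathcal L$, modulo a null divergence; by Lemma~\ref{lemma.null.divergence} and the definition of triviality this makes $F$ a trivial conserved vector. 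I would obtain $W$ by a term-by-term integration by parts, repeatedly applying $a\,D_i b=D_i(ab)-(D_i a)b$ to the products $\mathcal D^{\mu\nu}(L^\nu)\,L^\mu$: each total-derivative term produced carries an $L$-factor and is collected into~$W$, while the remainders stay within the square of the ideal. The delicate point is to organize this induction so that it terminates and so that total nondegeneracy keeps every intermediate remainder expressible through the $L^\mu$ and their derivatives; this bookkeeping, rather than any single conceptual step, is the crux of the theorem and is exactly where the hypotheses \emph{normal} and \emph{totally nondegenerate} are used in full strength.

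Assembling these facts, $\Phi$ is a linear map that is injective and surjective and sends the trivial subspaces onto one another, hence descends to the claimed one-to-one linear correspondence between $\CL(\mathcal L)$ and $\Ch_{\rm f}(\mathcal L)$. It is worth noting that this correspondence is the elementary shadow of the exactness of the variational (horizontal de Rham) complex alluded to in the formalized definition of conservation laws after Definition~\ref{def.conservation.law}; the integration-by-parts argument above is a hands-on substitute for that exactness, and normality together with total nondegeneracy are precisely the conditions guaranteeing exactness at the relevant term.
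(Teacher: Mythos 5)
The paper gives no proof of this statement: Theorem~\ref{TheoremIsomorphismChCV} is imported verbatim from~\cite{Olver1986}, so your proposal can only be measured against the standard proof there. Your first paragraph (Hadamard lemma plus integration by parts, using total nondegeneracy, to bring every conserved vector into exact characteristic form up to equivalence) and your surjectivity argument via the Euler operator are correct and coincide with the standard route. The gaps are in the two remaining claims, which are the actual core of the theorem. The well-definedness argument is circular: your uniqueness step treats only the case $(\lambda-\tilde\lambda)^\mu L^\mu\equiv0$, where in fact the maximal-rank part of total nondegeneracy alone forces $\lambda-\tilde\lambda$ to vanish on~$\mathcal L$ by a continuity argument, so normality plays no role there. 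But that is not the case that arises. Comparing characteristics of two \emph{equivalent} conserved vectors $F$ and $F'=F+\hat F+\check F$ gives $(\lambda-\lambda')^\mu L^\mu=\mathop{\rm Div}\nolimits\hat F$ with $\hat F|_{\mathcal L}=0$, and the right-hand side is not identically zero. Concluding from this weaker relation that $\lambda-\lambda'$ is trivial is precisely the statement ``every characteristic of a trivial conservation law is trivial'', i.e.\ one half of the hard content of the theorem. Your phrase that the integration by parts ``absorbs'' the difference asserts this rather than proves it: with the natural choice of Hadamard representation of~$\hat F$ the extracted characteristic does cancel to~$0$, but showing that \emph{every} characteristic of~$\hat F$ is equivalent to~$0$ is again the same unproved statement.

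The injectivity plan would also fail as described, for a structural reason: nothing in a term-by-term integration by parts of $\sum_{\mu,\nu}\mathcal D^{\mu\nu}(L^\nu)\,L^\mu$ ever uses normality, whereas normality is an essential hypothesis of the theorem and Olver's proof invokes it at exactly this point. So the obstruction is not ``bookkeeping''; it is a missing idea. What normality buys is the possibility of writing the system locally in Kovalevskaya (solved) form; total nondegeneracy then allows one to take the differential consequences $D_\alpha L^\mu$, together with the parametric derivatives, as adapted local coordinates on the jet space. Both hard implications (a trivial conserved vector has only trivial characteristics, and a conservation law with a trivial characteristic is trivial) are then obtained by expanding the identity $\mathop{\rm Div}\nolimits F=\lambda^\mu L^\mu$ in these coordinates and comparing coefficients transversally to the solution manifold, an argument that has no counterpart in your sketch. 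Without this coordinate change, or an equivalent cohomological substitute (exactness of the relevant piece of the horizontal complex on the equation, which is what your closing remark alludes to but which itself requires normality), the crux of the theorem remains unproved in your proposal.
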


Using properties of total divergences, we can exclude the conserved vector~$F$ from~\eqref{CharFormOfConsLaw}
and obtain a condition for the characteristic~$\lambda$ only.
Namely, a differential function~$f$ is a total divergence, i.e. $f=\mathop{\rm Div} F$
for some $n$-tuple~$F$ of differential functions iff $\Eop(f)=0$.
Here, the Euler operator~$\Eop=(\Eop^1,\ldots, \Eop^m)$ is the $m$-tuple of differential operators
\[
{\Eop}^a=(-D)^\alpha\p_{u^a_\alpha}, \quad a=\overline{1,m},
\]
where $(-D)^\alpha=(-D_1)^{\alpha_1}\ldots(-D_m)^{\alpha_m}$, 
$\alpha=(\alpha_1,\ldots,\alpha_n)$ runs through the multi-index set (\mbox{$\alpha_i\!\in\!\mathbb{N}\cup\{0\}$}). 
Therefore, the action of the Euler operator on~\eqref{CharFormOfConsLaw}
results in the equation
\begin{equation}\label{NSCondOnChar}
\Eop(\lambda^\mu L^\mu)={\Fder}_\lambda^*(L)+{\Fder}_L^*(\lambda)=0,
\end{equation}
which is a necessary and sufficient condition on characteristics of conservation laws for the system~$\mathcal L$.
The matrix differential operators~${\Fder}_\lambda^*$ and~${\Fder}_L^*$ are the adjoints of
the Fr\'echet derivatives~${\Fder}_\lambda^{\phantom{*}}$ and~${\Fder}_L^{\phantom{*}}$, i.e.,
\[
{\Fder}_\lambda^*(L)=\left((-D)^\alpha\left( \dfrac{\p\lambda^\mu}{\p u^a_\alpha}L^\mu\right)\right), \qquad
{\Fder}_L^*(\lambda)=\left((-D)^\alpha\left( \dfrac{\p L^\mu}{\p u^a_\alpha}\lambda^\mu\right)\right).
\]
Since ${\Fder}_\lambda^*(L)=0$ automatically holds on solutions of~$\mathcal L$,
equation~\eqref{NSCondOnChar} implies a necessary condition for $\lambda$ to belong to~$\Ch(\mathcal L)$:
\begin{equation}\label{NCondOnChar}
{\Fder}_L^*(\lambda)\bigl|_{\mathcal L}=0.
\end{equation}
Condition~\eqref{NCondOnChar} can be considered as adjoint to the criterion
${\Fder}_L^{\phantom{*}}(\eta)\bigl|_{\mathcal L}=0$ for infinitesimal invariance of $\mathcal L$
with respect to an evolutionary vector field having the characteristic~$\eta=(\eta^1,\ldots,\eta^m)$.
This is why solutions of~\eqref{NCondOnChar} are sometimes called 
{\em cosymmetries} or {\em adjoint symmetries}.

\subsection{Equivalence of conservation laws with respect to transformation groups}\label{SectionOnEquivOfConsLaws}

We can substantially simplify and systematize the classification of conservation laws by additionally taking into account 
symmetry transformations of a system or equivalence transformations of a whole class of systems.
This problem is similar to that of group classification of differential equations. 
The following statement on transformations of equations in the conserved form is true
(see, e.g.,~\cite{Popovych&Ivanova2004ConsLawsLanl}).

\begin{proposition}\label{PropositionOnTransOfCLs}
Any point transformation~$\mathcal T$ maps a class of equations in the conserved form into itself.
More exactly, the transformation~$\mathcal T$: $\tilde x={\mathcal T}^x(x,u)$, $\tilde u={\mathcal T}^u(x,u)$ 
prolonged to the jet space~$J^{(r+1)}$ transforms the equation $D_iF^i=0$ to the equation $\tilde D_i\tilde F^i=0$. 
The transformed conserved vector~$\tilde F={\mathcal T}^F(x,u_{(r)},F)$ is determined
by the formula
\begin{equation}\label{eq.tr.var.cons.law}
\tilde F^i(\tilde x,\tilde u_{(r)})=\frac{D_{x_j}\tilde x_i}{|D_x\tilde x|}\,F^j(x,u_{(r)}),
\quad\mbox{i.e.}\quad
\tilde F(\tilde x,\tilde u_{(r)})=\frac{1}{|D_x\tilde x|}(D_x\tilde x)F(x,u_{(r)})
\end{equation}
in matrix notation. Here $|D_x\tilde x|$ is the determinant of the matrix $D_x\tilde x=(D_{x_j}\tilde x_i)$.
\end{proposition}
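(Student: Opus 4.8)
The plan is to reduce everything to the single transformation identity
\[
\tilde D_i\tilde F^i=\frac{1}{|D_x\tilde x|}\,\mathop{\rm Div}\nolimits F,
\]
from which the proposition follows at once: since $|D_x\tilde x|\ne0$ for a genuine point transformation, $\mathop{\rm Div}\nolimits F$ vanishes (identically, or on $\mathcal L$) if and only if $\tilde D_i\tilde F^i$ does, so equations in conserved form are mapped to equations in conserved form, and the explicit formula for~$\tilde F$ is precisely the one making this work. Conceptually this is just the statement that a total divergence transforms as a density: the $(n-1)$-form associated with~$F$ pulls back to the one associated with~$\tilde F$, and since exterior differentiation commutes with pullback while $dx_1\wedge\dots\wedge dx_n=|D_x\tilde x|^{-1}\,d\tilde x_1\wedge\dots\wedge d\tilde x_n$, the divergence picks up exactly the Jacobian factor. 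I would nevertheless establish the identity by a direct computation on the prolonged jet space, since the statement is phrased through the explicit formula.

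The key steps, in order. First I would record the transformation law for total derivatives under the prolonged point transformation: writing $J_{ij}=D_{x_i}\tilde x_j$ for the total Jacobian (the transpose of the matrix $D_x\tilde x$ of the statement, so $\det J=|D_x\tilde x|=:\Delta$), the chain rule gives $D_{x_i}=J_{ij}\tilde D_j$, which inverts---here the non-vanishing of~$\Delta$ is used---to $\tilde D_i=(J^{-1})_{ik}D_{x_k}$. Second, I would substitute the defining formula $\tilde F^i=\Delta^{-1}J_{ji}F^j$ into $\tilde D_i\tilde F^i$ and apply the product rule, splitting it into a \emph{transport} term $\Delta^{-1}J_{ji}\,\tilde D_iF^j$ and a \emph{coefficient} term $F^j\sum_i\tilde D_i(\Delta^{-1}J_{ji})$. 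Third, the transport term collapses by $\sum_iJ_{ji}(J^{-1})_{ik}=\delta_{jk}$ to $\Delta^{-1}D_{x_j}F^j=\Delta^{-1}\mathop{\rm Div}\nolimits F$, which is the desired right-hand side. Fourth, I would show that the coefficient term vanishes, i.e. $\sum_i\tilde D_i(\Delta^{-1}J_{ji})=0$ for each~$j$; expanding with $\tilde D_i=(J^{-1})_{ik}D_{x_k}$, the part hitting~$\Delta^{-1}$ yields $-\Delta^{-2}D_{x_j}\Delta$, while the part hitting~$J_{ji}$ becomes, after using commutativity of total derivatives ($D_{x_k}J_{ji}=D_{x_k}D_{x_j}\tilde x_i=D_{x_j}D_{x_k}\tilde x_i=D_{x_j}J_{ki}$), exactly $\Delta^{-1}\mathop{\rm tr}\nolimits(J^{-1}D_{x_j}J)$, which equals $\Delta^{-2}D_{x_j}\Delta$ by Jacobi's determinant formula; the two contributions cancel.

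I expect the fourth step to be the only real obstacle: it is where the determinant-derivative identity and the commutativity of total derivatives must conspire, and careful index bookkeeping (keeping the ``old'' component index of $F^j$ separate from the ``new'' component indices of $\tilde F$ and $\tilde x$) is essential to see the cancellation rather than a spurious leftover. The remaining points are routine but worth stating: the chain rule for total derivatives is the standard property of prolonged point transformations, and the formula for~$\tilde F$ is to be read as a function of the new jet coordinates after eliminating the old ones via the inverse transformation, with the order of~$\tilde F$ exceeding that of~$F$ by at most one because of the single extra total differentiation in $J_{ji}=D_{x_j}\tilde x_i$. As an alternative to the direct computation one may simply invoke the horizontal de Rham complex, where the identity is the commutation of the horizontal differential with the pullback, matching the density interpretation above.
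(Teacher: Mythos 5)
Your proof is correct, and its computational core coincides with the paper's first proof: both rest on the chain rule $D_{x_j}=(D_{x_j}\tilde x_i)\tilde D_i$ for total derivatives, the commutativity of total derivatives, and the trace--determinant identity $\mathop{\rm tr}\nolimits(A'A^{-1})=|A|'/|A|$ (Jacobi's formula), and your index bookkeeping in the ``coefficient term'' cancellation checks out. The difference is the logical direction of the argument. You take the formula for $\tilde F$ as given and verify a posteriori that $\sum_i\tilde D_i\bigl(\Delta^{-1}J_{ji}\bigr)=0$, so that only the transport term $\Delta^{-1}\mathop{\rm Div}\nolimits F$ survives; the paper instead treats the transformation as unknown, writing $\tilde F^i=G^{ij}F^j$ and $\tilde D_i\tilde F^i=\Lambda\,D_iF^i$, and \emph{solves} the resulting conditions $(D_j\tilde x_i)\hat G^{jk}=\hat\Lambda\delta_{ik}$ and $D_j\hat G^{jk}=0$, obtaining $G=|D_x\tilde x|^{-1}D_x\tilde x$ and $\hat\Lambda=|D_x\tilde x|$ up to an inessential constant multiplier. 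The derivation buys something your verification does not: it shows the transformation rule is forced (unique within the linear ansatz, modulo a constant factor), not merely consistent. This is a distinction the authors themselves emphasize --- in a Note following the proposition they contrast their proof with that of Bluman--Temuerchaolu--Anco precisely because the latter proves the formula a posteriori, which is the style of your argument. Finally, the ``alternative'' you sketch at the end is not merely an alternative: it is the paper's second proof, in which invariance of the horizontal $(n-1)$-form $\omega_F$ under the change of variables yields the formula for $\tilde F$, and the relation $\tilde{\sf D}\omega_{\tilde F}={\sf D}\omega_F$ yields the Jacobian factor in the transformed divergence.
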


\begin{proof}
We give two equivalent versions of the proof. The first is direct and based on the usual definition of conservation laws. 
The second is closer to the formal definition and naturally involves the technique of differential forms.

We prolong the transformation~$\mathcal T$ to the jet space~$J^{(r+1)}$ in the standard way~\cite{Olver1986}, i.e., 
we recalculate all derivatives up to order $r+1$ in the new (`tilde') variables:
$\tilde u_{(r+1)}=\mathrm{pr}_{(r+1)}{\mathcal T}^u(x,u_{(r+1)})$. 
Since the equation $D_iF^i=0$ is linear in~$F$, 
the transformation for the tuple~$F$ is found in the form $\tilde F^i=G^{ij}F^j$. 
The smooth functions~$G^{jk}$ of $x$ and the derivatives of $u$ should be selected by 
the condition $\tilde D_i\tilde F^i=\Lambda D_iF^i$, 
where $\Lambda$ also is a smooth function of $x$ and the derivatives of $u$. 
Let $\hat G=(\hat G^{ij})$ be the inverse matrix to the matrix $G=(G^{ij})$, $\hat\Lambda=1/\Lambda$.
Then 
\[
D_jF^j=D_j(\hat G^{jk}\tilde F^k)
=(D_j\hat G^{jk})\tilde F^k+(D_j\tilde x_i)\hat G^{jk}\tilde D_i\tilde F^k=\hat\Lambda\tilde D_i\tilde F^i
\]
for any~$\tilde F$ iff $(D_j\tilde x_i)\hat G^{jk}=\hat\Lambda\delta_{ik}$, $D_j\hat G^{jk}=0$. 
Here $\delta_{ik}$ is the Kronecker delta. 
The first set of equations on $\hat G^{jk}$ implies that $\hat G=\hat\Lambda(D_x\tilde x)^{-1}=\hat\Lambda(D_{\tilde x}x)$.
Substituting these expressions for $\hat G^{jk}$ into the second set of equations we get:
\[
D_j\hat G^{jk}=D_j(\hat\Lambda\tilde D_kx_j)=D_j(\hat\Lambda)\tilde D_kx_j
+\hat\Lambda (D_j\tilde x_{k'})\tilde D_{k'\!}\tilde D_kx_j=0.
\] 
(Note that $D_j=(D_j\tilde x_{k'})\tilde D_{k'\!}$.) 
After dividing the result by~$\hat\Lambda$ and convolving it with $D_i\tilde x_k$, we obtain
\begin{gather*}
\frac{D_i\hat\Lambda}{\hat\Lambda}
=-(D_i\tilde x_k)(D_j\tilde x_{k'})\tilde D_{k'\!}\tilde D_kx_j
=-(D_j\tilde x_{k'})(D_i\tilde x_k)\tilde D_k\tilde D_{k'\!}x_j
=-(D_j\tilde x_{k'})D_i\tilde D_{k'\!}x_j\\
\phantom{\frac{D_i\hat\Lambda}{\hat\Lambda}}
=-\mathop{\rm tr}\nolimits\bigl((D_x\tilde x)D_i(D_x\tilde x)^{-1}\bigr)
=\mathop{\rm tr}\nolimits\bigl((D_iD_x\tilde x)(D_x\tilde x)^{-1}\bigr)
=\frac{D_i|D_x\tilde x|}{|D_x\tilde x|},
\end{gather*}
where we have used the commutation property of the total derivative operators and 
the well-known equalities for matrix derivatives $(A^{-1})'=-A^{-1}A'A^{-1}$ and 
$\mathop{\rm tr}\nolimits(A'A^{-1})=|A|'/|A|$. 
Here the prime denotes the derivative with respect to a parameter. 
$\mathop{\rm tr}\nolimits A$ and $|A|$ are the trace and the determinant of a square matrix~$A$, respectively. 
The above equations imply that $\hat\Lambda=|D_x\tilde x|$ up to an arbitrary nonzero constant multiplier 
(which is inessential) and, therefore, $\hat G=|D_x\tilde x|(D_x\tilde x)^{-1}$, i.e., 
$G=|D_x\tilde x|^{-1}D_x\tilde x$. 

The second version of the proof is much simpler and indeed justifies the first one. 
We associate any tuple~$F$ with the differential form 
$\omega_F=(-1)^{i-1}F^i\,dx_1\wedge\dots\wedge \lefteqn{\smash{\,\diagdown}}dx_i\wedge\dots\wedge dx_n$ 
in the `horizontal' de Rahm complex~\cite{Bocharov&Co1997} called also $\sf D$-complex~\cite{Olver1986}
over the space of the independent variable~$x$ and the dependent variable~$u$.
Hereafter the~notation $\lefteqn{\smash{\,\diagdown}}dx_i$ means that the term~$dx_i$ is absent in the 
corresponding external product. 
In the $\sf D$-complex the differential of the usual de Rahm complex is replaced 
by the total differential $\sf D$. 
Due to the invariance of differential forms under transformations of variables,
\begin{gather*}
\omega_{\tilde F}=(-1)^{i-1}\tilde F^i\,
d\tilde x_1\wedge\dots\wedge \lefteqn{\smash{\,\diagdown}}d\tilde x_i\wedge\dots\wedge d\tilde x_n\\
\phantom{\omega_{\tilde F}}=
(-1)^{i-1}\tilde F^iM^{ij}\,dx_1\wedge\dots\wedge \lefteqn{\smash{\,\diagdown}}dx_j\wedge\dots\wedge dx_n
=\omega_F,
\end{gather*}
where $M^{ij}$ is the minor of the element $D_j\tilde x_i$ in the matrix~$D_x\tilde x$.
Therefore, $F^j=(-1)^{i+j}M^{ij}\tilde F^i$, i.e., $\tilde F=|D_x\tilde x|^{-1}(D_x\tilde x)F$.
Applying the total differential, we also have 
$\mathop{\tilde{\sf D}}\nolimits\omega_{\tilde F}=\mathop{{}\sf D}\nolimits\omega_F$, where 
\begin{gather*}
\mathop{{}\sf D}\nolimits{}\omega_F=(D_jF^j)x_1\wedge\dots\wedge dx_n,
\\ 
\mathop{\tilde{\sf D}}\nolimits\omega_{\tilde F}=(\tilde D_i\tilde F^i)\,d\tilde x_1\wedge\dots\wedge d\tilde x_n
=(\tilde D_i\tilde F^i)|D_x\tilde x|\,dx_1\wedge\dots\wedge dx_n,
\end{gather*}
i.e., $\tilde D_i\tilde F^i=|D_x\tilde x|^{-1}D_jF^j$.
\end{proof}

\begin{note}
In the case of one dependent variable ($m=1$) $\mathcal T$ can be a contact transformation:
$\tilde x={\mathcal T}^x(x,u_{(1)})$, $\tilde u_{(1)}=\mathrm{pr}_{(1)}{\mathcal T}^u(x,u_{(1)})$. 
The proof is entirely analogous.
Similar remarks apply to the statements below.
\end{note}

\begin{note}
After~\cite{Popovych&Ivanova2004ConsLawsLanl}, formula~\eqref{eq.tr.var.cons.law} and related results 
were also presented in \cite{Bluman2005,Bluman&Temuerchaolu&Anco2006}. 
The above proofs of Proposition~\ref{PropositionOnTransOfCLs} essentially differ from the analogous proof 
in \cite{Bluman&Temuerchaolu&Anco2006} in that the formula for transformations of conserved vectors 
is derived during the proof while in \cite{Bluman&Temuerchaolu&Anco2006} it was proved a posteriori. 
\end{note}

\begin{definition}
Let $G$ be a symmetry group of the system~$\mathcal L$.
Two conservation laws with the conserved vectors $F$ and $F'$ are called {\em $G$-equivalent} if
there exists a transformation $\mathcal T\in G$ such that the conserved vectors $\tilde F={\mathcal T}^F$ and $F'$
are equivalent in the sense of Definition~\ref{DefinitionOfConsVectorEquivalence}.
\end{definition}

Any transformation $\mathcal T\in G$ induces a linear one-to-one mapping $\mathcal T_*$ in~$\CV(\mathcal L)$,
transforms trivial conserved vectors only to trivial ones
(i.e. $\CV_0(\mathcal L)$ is invariant with respect to~$\mathcal T_*$)
and therefore induces a linear one-to-one mapping $\mathcal T_{\rm f}$ in~$\CL(\mathcal L)$.
It is obvious that $\mathcal T_{\rm f}$ preserves linear (in)dependence of elements
in~$\CL(\mathcal L)$ and maps a basis (a set of generators) of~$\CL(\mathcal L)$
in a basis (a set of generators) of the same space.
In this way we can consider the \emph{$G$-equivalence relation of conservation laws}
as well-defined on~$\CL(\mathcal L)$ and use it to classify conservation laws.

\begin{proposition}\label{PropositionOnInducedMappingOfCL}
Any point transformation $\mathcal T$ between systems~$\mathcal L$ and~$\tilde{\mathcal L}$
induces a linear one-to-one mapping $\mathcal T_*$ from~$\CV(\mathcal L)$ onto~$\CV(\tilde{\mathcal L})$,
which maps $\CV_0(\mathcal L)$ into~$\CV_0(\tilde{\mathcal L})$
and generates a linear one-to-one mapping $\mathcal T_{\rm f}$ from~$\CL(\mathcal L)$ onto~$\CL(\tilde{\mathcal L})$.
\end{proposition}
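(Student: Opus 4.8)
The plan is to build the mapping $\mathcal T_*$ directly from the transformation formula~\eqref{eq.tr.var.cons.law} for conserved vectors established in Proposition~\ref{PropositionOnTransOfCLs}, and then verify the four required properties in turn: linearity, bijectivity, preservation of triviality, and the induced map on the factor spaces. The key observation is that everything reduces to the explicit algebraic nature of~\eqref{eq.tr.var.cons.law}, which expresses~$\tilde F$ as the matrix product $|D_x\tilde x|^{-1}(D_x\tilde x)F$.

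First I would define $\mathcal T_*\colon\CV(\mathcal L)\to\CV(\tilde{\mathcal L})$ by $\mathcal T_*F=\tilde F$ using~\eqref{eq.tr.var.cons.law}, where the right-hand side is re-expressed in the tilde variables via the prolonged transformation $\tilde u_{(r)}=\mathrm{pr}_{(r)}\mathcal T^u(x,u_{(r)})$. Proposition~\ref{PropositionOnTransOfCLs} already guarantees that $\tilde D_i\tilde F^i=|D_x\tilde x|^{-1}D_jF^j$, so if $F$ is a conserved vector of~$\mathcal L$ (meaning $\mathop{\rm Div}\nolimits F|_{\mathcal L}=0$), then $\mathop{\rm Div}\nolimits\tilde F$ vanishes on solutions of~$\tilde{\mathcal L}$, because $\mathcal T$ maps solutions of~$\mathcal L$ bijectively onto solutions of~$\tilde{\mathcal L}$. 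Hence $\mathcal T_*$ indeed lands in $\CV(\tilde{\mathcal L})$. Linearity in~$F$ is immediate since the matrix $|D_x\tilde x|^{-1}(D_x\tilde x)$ does not depend on~$F$. Bijectivity follows by noting that the inverse transformation $\mathcal T^{-1}$ induces $(\mathcal T^{-1})_*$, and the chain rule for the prolonged maps gives $(\mathcal T^{-1})_*\circ\mathcal T_*=\mathrm{id}$ and $\mathcal T_*\circ(\mathcal T^{-1})_*=\mathrm{id}$; this is essentially the functoriality of the assignment $F\mapsto\omega_F$ under pullback of differential forms in the $\sf D$-complex, already exploited in the second proof of Proposition~\ref{PropositionOnTransOfCLs}.

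Next I would show $\mathcal T_*(\CV_0(\mathcal L))\subseteq\CV_0(\tilde{\mathcal L})$. A trivial conserved vector splits as $F=\hat F+\check F$, where $\hat F$ vanishes on solutions of~$\mathcal L$ and $\check F$ is a null divergence. Since $\mathcal T$ maps solutions to solutions, $\mathcal T_*\hat F$ vanishes on solutions of~$\tilde{\mathcal L}$. For the null-divergence part, $\mathcal T_*\check F$ is again a null divergence: $\mathop{\rm Div}\nolimits\check F\equiv0$ identically implies $\tilde D_i(\mathcal T_*\check F)^i=|D_x\tilde x|^{-1}D_j\check F^j\equiv0$ by the transformation law, now as an identity rather than merely on the solution manifold. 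Invoking the characterization of null divergences in Lemma~\ref{lemma.null.divergence} confirms that $\mathcal T_*\check F$ carries the required potential representation. Consequently $\mathcal T_*\check F\in\CV_0(\tilde{\mathcal L})$, so $\mathcal T_*F$ is trivial. Applying the same argument to $(\mathcal T^{-1})_*$ shows the inclusion is an equality, so $\mathcal T_*$ restricts to a bijection $\CV_0(\mathcal L)\to\CV_0(\tilde{\mathcal L})$. The induced map $\mathcal T_{\rm f}$ on the factor spaces $\CL(\mathcal L)=\CV(\mathcal L)/\CV_0(\mathcal L)$ is then well defined, linear, and bijective by the standard passage of a linear isomorphism to quotients by invariant subspaces.

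The main obstacle I anticipate is the null-divergence step, specifically ensuring that an identity ($\mathop{\rm Div}\nolimits\check F\equiv0$) is preserved as an identity under the transformation, as opposed to merely on-shell. This requires care because the transformation formula~\eqref{eq.tr.var.cons.law} was derived in the context of the divergence vanishing on solutions; I must check that the underlying algebraic identity $\tilde D_i\tilde F^i=|D_x\tilde x|^{-1}D_jF^j$ holds identically on the full jet space, which is indeed the case since its derivation in Proposition~\ref{PropositionOnTransOfCLs} is purely an identity of differential functions. The remaining verifications are essentially formal and follow the pattern of the $G$-equivalence discussion preceding the statement.
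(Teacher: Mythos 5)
Your proposal is correct and follows exactly the route the paper intends: the paper states this proposition without a separate proof, treating it as an immediate consequence of the transformation formula~\eqref{eq.tr.var.cons.law} from Proposition~\ref{PropositionOnTransOfCLs}, and your argument simply fills in those details (off-shell validity of the identity $\tilde D_i\tilde F^i=|D_x\tilde x|^{-1}D_jF^j$, preservation of the two parts of a trivial conserved vector, inversion via $(\mathcal T^{-1})_*$, passage to the quotient). The point you flag as the main obstacle is resolved correctly, since the derivation in Proposition~\ref{PropositionOnTransOfCLs} is indeed an identity of differential functions on the whole jet space, as the differential-forms version of that proof makes evident.
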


\begin{corollary}\label{CorollaryOnInducedMappingOfChar}
Any point transformation $\mathcal T$ between systems~$\mathcal L$ and~$\tilde{\mathcal L}$
induces a linear one-to-one mapping from~$\Ch_{\rm f}(\mathcal L)$ onto~$\Ch_{\rm f}(\tilde{\mathcal L})$.
\end{corollary}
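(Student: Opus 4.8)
The plan is to obtain the induced mapping on characteristics by composing maps that are already at our disposal, rather than constructing it from scratch. By Theorem~\ref{TheoremIsomorphismChCV}, for the normal, totally nondegenerate systems~$\mathcal L$ and~$\tilde{\mathcal L}$ the representation of conservation laws in characteristic form yields linear isomorphisms $\Ch_{\rm f}(\mathcal L)\cong\CL(\mathcal L)$ and $\Ch_{\rm f}(\tilde{\mathcal L})\cong\CL(\tilde{\mathcal L})$. Proposition~\ref{PropositionOnInducedMappingOfCL} provides a linear one-to-one mapping $\mathcal T_{\rm f}$ from~$\CL(\mathcal L)$ onto~$\CL(\tilde{\mathcal L})$. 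First I would define the induced mapping on characteristics as the composite that sends $\lambda\in\Ch_{\rm f}(\mathcal L)$ to the characteristic of the conservation law $\mathcal T_{\rm f}(\mathcal F)$, where $\mathcal F$ is the conservation law whose characteristic is~$\lambda$. Being a composition of three linear bijections, this composite is itself a linear one-to-one mapping of $\Ch_{\rm f}(\mathcal L)$ onto $\Ch_{\rm f}(\tilde{\mathcal L})$, which is precisely the assertion.

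The one point that must be checked before the composition is legitimate is that the target system $\tilde{\mathcal L}$ again satisfies the hypotheses of Theorem~\ref{TheoremIsomorphismChCV}, since that theorem is what furnishes the isomorphism $\Ch_{\rm f}(\tilde{\mathcal L})\cong\CL(\tilde{\mathcal L})$. This reduces to the fact that normality and total nondegeneracy are invariant under point transformations; I would record this as a short remark following from the definitions of these properties on the relevant jet space, rather than as a computation.

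For readers preferring an explicit description, I would additionally note that the induced map can be written down directly from the transformation formula~\eqref{eq.tr.var.cons.law} of Proposition~\ref{PropositionOnTransOfCLs}. Applying the identity $\tilde D_i\tilde F^i=|D_x\tilde x|^{-1}D_jF^j$ to the characteristic form $\mathop{\rm Div}\nolimits F=\lambda^\mu L^\mu$, and expressing the left hand sides~$L^\mu$ through those of the transformed system via an invertible matrix multiplier $L^\mu=M^{\mu\nu}\tilde L^\nu$ (which exists precisely because $\mathcal T$ connects $\mathcal L$ with $\tilde{\mathcal L}$ and the two define the same manifold in the new coordinates), one obtains $\tilde\lambda^\nu\tilde L^\nu=|D_x\tilde x|^{-1}\lambda^\mu M^{\mu\nu}\tilde L^\nu$, and hence the explicit linear formula $\tilde\lambda^\nu=|D_x\tilde x|^{-1}\lambda^\mu M^{\mu\nu}$ modulo trivial characteristics. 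I expect the main (indeed the only genuine) obstacle in this second route to be bookkeeping: verifying that trivial characteristics of~$\mathcal L$ are sent to trivial characteristics of~$\tilde{\mathcal L}$, so that the formula descends to the factor spaces~$\Ch_{\rm f}$. The abstract composition argument avoids this bookkeeping altogether, which is why I would present it as the primary proof and relegate the explicit formula to a concluding remark.
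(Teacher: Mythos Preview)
Your primary argument---composing the isomorphisms $\Ch_{\rm f}(\mathcal L)\cong\CL(\mathcal L)$ and $\Ch_{\rm f}(\tilde{\mathcal L})\cong\CL(\tilde{\mathcal L})$ from Theorem~\ref{TheoremIsomorphismChCV} with the map $\mathcal T_{\rm f}$ of Proposition~\ref{PropositionOnInducedMappingOfCL}---is correct and is precisely the paper's intended justification (the statement is presented as a corollary of Proposition~\ref{PropositionOnInducedMappingOfCL} without separate proof).

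One comment on your concluding remark with the explicit formula: you assume the relation $L^\mu=M^{\mu\nu}\tilde L^\nu$ with $M^{\mu\nu}$ a matrix of \emph{functions}. The paper's subsequent Proposition~\ref{PropositionOnInducedMappingOfChars} treats the general case, where the transformed left-hand sides are related to the $\tilde L^\nu$ by matrix \emph{differential} operators $\Lambda^{\mu\nu}=\Lambda^{\mu\nu\alpha}D^\alpha$; the explicit formula then involves the formal adjoints $\Lambda^{\mu\nu*}$ rather than just multiplication, and an integration-by-parts step produces a trivial conserved vector that must be absorbed. Your version is the special case $\Lambda^{\mu\nu\alpha}=0$ for $|\alpha|>0$, which the paper notes covers single equations but not arbitrary systems. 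This does not affect the validity of your main proof, only the generality of the side remark.
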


It is possible to obtain an explicit formula for the correspondence between characteristics of~$\mathcal L$ and~$\tilde{\mathcal L}$. 
This formula obviously depends on representation of~$\mathcal L$ and~$\tilde{\mathcal L}$ as systems of differential equations. 

\begin{proposition}\label{PropositionOnInducedMappingOfChars}
Let $\mathcal T$ be a point transformation of a system~$\mathcal L$ to a system~$\tilde{\mathcal L}$ and 
$\tilde L^\mu=\Lambda^{\mu\nu}L^\nu$,
where $\Lambda^{\mu\nu}=\Lambda^{\mu\nu\alpha}D^\alpha$, $\Lambda^{\mu\nu\alpha}$ are differential functions, and
$\alpha=(\alpha_1,\ldots,\alpha_n)$ runs through the multi-indices ($\alpha_i\!\in\!\mathbb{N}\cup\{0\}$),
$\mu,\nu=\overline{1,l}$. (The number of $\alpha's$ for which $\Lambda^{\mu\nu\alpha}\ne0$ is finite.)
Then the transformation~$\mathcal T$ induces the linear one-to-one mapping from~$\Ch(\mathcal L)$
onto~$\Ch(\tilde{\mathcal L})$, the inverse of which is defined by the formula 
\[
\lambda^\mu={\Lambda^{\mu\nu}}^*(|D_x\tilde x|\tilde\lambda^\nu).
\]
Here ${\Lambda^{\mu\nu}}^*=(-D)^\alpha\cdot\Lambda^{\nu\mu\alpha}$ is the adjoint to the operator~$\Lambda^{\nu\mu}$.
\end{proposition}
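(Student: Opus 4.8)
The plan is to reduce the statement to the already-established transformation rule for conserved vectors in Proposition~\ref{PropositionOnTransOfCLs} and then to convert the resulting divergence identity back into characteristic form by integration by parts, the latter being exactly what produces the adjoint operators ${\Lambda^{\mu\nu}}^*$. Since $\mathcal T$ is a point transformation, Corollary~\ref{CorollaryOnInducedMappingOfChar} already guarantees that there is an induced linear bijection between the characteristic spaces; the real content here is to write its inverse explicitly, so I would concentrate on verifying the displayed formula rather than re-deriving bijectivity from scratch.

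First I would start from the characteristic form of a conservation law of $\tilde{\mathcal L}$, namely $\tilde D_i\tilde F^i=\tilde\lambda^\mu\tilde L^\mu$ for a conserved vector $\tilde F$ and a characteristic $\tilde\lambda\in\Ch(\tilde{\mathcal L})$. Let $F$ be the conserved vector of $\mathcal L$ that $\mathcal T$ maps to $\tilde F$; by Proposition~\ref{PropositionOnTransOfCLs} its divergence obeys $D_jF^j=|D_x\tilde x|\,\tilde D_i\tilde F^i$, so that
\[
D_jF^j=|D_x\tilde x|\,\tilde\lambda^\mu\tilde L^\mu=|D_x\tilde x|\,\tilde\lambda^\mu\Lambda^{\mu\nu\alpha}D^\alpha L^\nu,
\]
where I have inserted the hypothesis $\tilde L^\mu=\Lambda^{\mu\nu}L^\nu=\Lambda^{\mu\nu\alpha}D^\alpha L^\nu$. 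This exhibits $D_jF^j$ as a differential operator applied to the left-hand sides $L^\nu$; it is not yet in characteristic form, because the $L^\nu$ still carry the derivatives $D^\alpha$.

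Next I would move these derivatives off $L^\nu$ by integration by parts, using the standard identity $g\,D^\alpha h=h\,(-D)^\alpha g+D_j(\cdots)$ for differential functions $g,h$. Applied with $g=|D_x\tilde x|\,\tilde\lambda^\mu\Lambda^{\mu\nu\alpha}$ and $h=L^\nu$, this yields
\[
D_jF^j=L^\nu(-D)^\alpha\!\left(\Lambda^{\mu\nu\alpha}|D_x\tilde x|\,\tilde\lambda^\mu\right)+D_j\Phi^j
\]
for some tuple $\Phi$. Setting $F'=F-\Phi$, a conserved vector equivalent to $F$, gives $D_j{F'}^j=\lambda^\nu L^\nu$ with $\lambda^\nu=(-D)^\alpha(\Lambda^{\mu\nu\alpha}|D_x\tilde x|\,\tilde\lambda^\mu)$. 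After relabelling the free and summed indices this is precisely $\lambda^\mu={\Lambda^{\mu\nu}}^*(|D_x\tilde x|\,\tilde\lambda^\nu)$ with ${\Lambda^{\mu\nu}}^*=(-D)^\alpha\cdot\Lambda^{\nu\mu\alpha}$, which shows both that $\lambda\in\Ch(\mathcal L)$ and that the assignment $\tilde\lambda\mapsto\lambda$ has the asserted form. Linearity is immediate from the linearity of the adjoint operators, and the map is inverse to the one induced in Corollary~\ref{CorollaryOnInducedMappingOfChar}: running the same computation for $\mathcal T^{-1}$ gives a map of the identical structure, and composing the two returns the original conservation law, hence the same characteristic class.

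The main obstacle I anticipate is bookkeeping rather than anything conceptual. One must track the Jacobian weight $|D_x\tilde x|$ through the integration by parts so that it combines with $\tilde\lambda$ exactly as the argument of the adjoint, and one must be careful about whether the total derivatives $D^\alpha$ are taken in the tilde or the non-tilde variables, converting between them via $D_j=(D_j\tilde x_k)\tilde D_k$ as in the proof of Proposition~\ref{PropositionOnTransOfCLs}. A secondary subtlety is that the identity holds only modulo the total divergence $D_j\Phi^j$, so the argument is genuinely one about equivalence classes of conserved vectors; the assignment is well defined on $\Ch(\tilde{\mathcal L})$ precisely because altering $\tilde F$ by a trivial conserved vector alters $\Phi$ but leaves $\lambda$ unchanged.
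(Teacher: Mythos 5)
Your proposal is correct and follows essentially the same route as the paper's own proof: take the preimage conserved vector under $\mathcal T$, apply the divergence transformation rule of Proposition~\ref{PropositionOnTransOfCLs}, substitute $\tilde L^\mu=\Lambda^{\mu\nu}L^\nu$, and integrate by parts so that the adjoint operators ${\Lambda^{\mu\nu}}^*$ emerge, the integration-by-parts remainder being a trivial conserved vector since its components contain $L^\nu$ and its derivatives as factors (a point the paper states explicitly and which your step ``$F'=F-\Phi$ is equivalent to $F$'' silently relies on). The only cosmetic differences are that you absorb this remainder into $F'$ rather than keeping it as $D_i\hat F^i$, and that you argue bijectivity by running the construction for $\mathcal T^{-1}$ instead of invoking, as the paper does, that the matrix operator $\Lambda=(\Lambda^{\mu\nu})$ has an inverse of similar form --- two phrasings of the same fact.
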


\begin{proof}
By the definition of characteristics, for any $\tilde\lambda\in\Ch(\tilde{\mathcal L})$ there exists $\tilde F\in\CV(\tilde{\mathcal L})$ 
such that $\tilde\lambda^\mu\tilde L^\mu=\tilde D^i\tilde F^i$. 
We take the preimage $F\in\CV(\mathcal L)$ of~$\tilde F$ with respect to the mapping induced by~$\mathcal T$. 
Then 
\[
D_jF^j=|D_x\tilde x|\,\tilde D_i\tilde F^i=|D_x\tilde x|\,\tilde\lambda^\mu\Lambda^{\mu\nu}L^\nu=
\Lambda^{\nu\mu*}(|D_x\tilde x|\,\tilde\lambda^\mu)L^\nu+D_i\hat F^i=\lambda^\mu L^\mu+D_i\hat F^i, 
\]
where $\lambda^\mu={\Lambda^{\mu\nu}}^*(|D_x\tilde x|\tilde\lambda^\nu)$ and 
each $\hat F^i$ vanishes on the solutions of~$\mathcal L$, i.e., 
the tuple $\hat F=(\hat F^1,\dots,\hat F^n)$ is a trivial conserved vector of~$\mathcal L$. 
It means that the tuple $\lambda=(\lambda^1,\dots,\lambda^l)$ is a characteristic of the system~$\mathcal L$, 
associated with the conservation law containing the preimage of the conserved vector $\tilde F$.
Since the matrix-operator $\Lambda=(\Lambda^{\mu\nu})$ has an inverse which is of a similar form,  
the induced mapping of characteristics is one-to-one. The linearity of this mapping is obvious.
\end{proof}

\begin{note}
$\Lambda^{\mu\nu\alpha}=0$ for $|\alpha|>0$ in a number of cases, 
e.g., if~$\mathcal L$ and~$\tilde{\mathcal L}$ are single partial differential equations ($l=1$).
Then the operators~$\Lambda^{\mu\nu}$ are simply differential functions
(more precisely, the operators of multiplication by differential functions) and, therefore, ${\Lambda^{\mu\nu}}^*=\Lambda^{\nu\mu}$.
\end{note}

Consider the class~$\mathcal L|_{\cal S}$ of systems~$\mathcal L_\theta$: $L(x,u_{(\rho)},\theta(x,u_{(\rho)}))=0$
parameterized with the parameter-functions~$\theta=\theta(x,u_{(\rho)}).$
Here $L$ is a tuple of fixed functions of $x,$ $u_{(\rho)}$ and $\theta.$
$\theta$~denotes a tuple of arbitrary (parametric) functions
$\theta(x,u_{(\rho)})=(\theta^1(x,u_{(\rho)}),\ldots,\theta^k(x,u_{(\rho)}))$
running through the solution set~${\cal S}$ of the system~$S(x,u_{(\rho)},\theta_{(q)}(x,u_{(\rho)}))=0$.
This system consists of differential equations for $\theta$,
where $x$ and $u_{(\rho)}$ play the role of independent variables
and $\theta_{(q)}$ stands for the set of all the partial derivatives of $\theta$ of order not greater than $q$.
Sometimes the set $\mathcal{S}$ is additionally constrained by the non-vanishing condition 
$S'(x,u_{( p)},\theta_{(q)}(x,u_{( p)}))\ne0$ with another tuple $S'$ of differential functions.
(See also~\cite{Popovych2006c} for other nuances in the rigorous definition of classes of differential equations.)
In what follows we call the functions $\theta$ arbitrary elements.
Denote the point transformation group preserving the
form of the systems from~$\mathcal L|_{\cal S}$ as $G^\sim=G^\sim(L,S).$

Let $P=P(L,S)$ denote the set of all pairs each of which consists of
a system $\mathcal L_\theta$ from~$\mathcal L|_{\cal S}$ and a conservation law~${\cal F}$ of this system.
In view of Proposition~\ref{PropositionOnInducedMappingOfCL},
the action of transformations from~$G^\sim$ on $\mathcal L|_{\cal S}$ and
$\{\CV(\mathcal L_{\theta})\,|\,\theta\in{\cal S}\}$
together with the pure equivalence relation of conserved vectors
naturally generates an equivalence relation on~$P$.

\begin{definition}
Let $\theta,\theta'\in{\cal S}$,
${\cal F}\in\CL(\mathcal L_\theta)$, ${\cal F}'\in\CL(\mathcal L_{\theta'})$,
$F\in{\cal F}$, $F'\in{\cal F'}$.
The pairs~$(\mathcal L_\theta,{\cal F})$ and~$(\mathcal L_{\theta'},{\cal F'})$
are called {\em $G^\sim$-equivalent} if there exists a transformation $\mathcal T\in G^\sim$
which transforms the system~$\mathcal L_\theta$ to the system~$\mathcal L_{\theta'}$ and
such that the conserved vectors $\tilde F=\mathcal T^F$ and $F'$
are equivalent in the sense of Definition~\ref{DefinitionOfConsVectorEquivalence}.
\end{definition}

The classification of conservation laws with respect to~$G^\sim$ will be understood as
classification in~$P$ with respect to the above equivalence relation.
This problem can be investigated in a way similar to group classification in classes
of systems of differential equations, especially if it is formulated in terms of characteristics.
Namely, we construct firstly the conservation laws that are defined for all values of the arbitrary elements.
(The corresponding conserved vectors may depend on the arbitrary elements.)
Then we classify, with respect to the equivalence group, the arbitrary elements for each of which the system
admits additional conservation laws. 

In an analogous way we also can introduce equivalence relations on~$P$ which are
generated either by generalizations of usual equivalence groups or
by all admissible point or contact transformations
(also called form-preserving in~\cite{Kingston&Sophocleous1998})
in pairs of equations from~$\mathcal L|_{\cal S}$.

\begin{note}
It can easily be shown that all the above equivalences are indeed equivalence relations
(i.e., are reflexive, symmetric and transitive).
\end{note}

\subsection{Action of symmetry operators on conservation laws}\label{SectionOnSymOpsAndConsLaws}

If the system~$\mathcal L$ admits a one-parameter group of transformations 
then the infinitesimal generator $Q=\xi^i\p_i+\eta^a\p_{u^a}$
of this group can be used for the construction of new conservation laws from known ones.
Namely, differentiating equation~(\ref{eq.tr.var.cons.law})
with respect to the parameter $\varepsilon$ and inserting the value $\varepsilon=0$,
we obtain a new conserved vector expressed via the coefficients of the operator~$Q$ and the known conserved vector.

\begin{proposition}
If $Q=\xi^i\p_i+\eta^a\p_{u^a}$ is a Lie symmetry operator of the system~$\mathcal L$ and 
$F\in\CV(\mathcal L)$ then the differential functions
\begin{equation}\label{eq.inf.tr.var.cons.law}
\widetilde F^i=-Q_{(r)}F^i+(D_j\xi^i)F^j-(D_j\xi^j)F^i
\end{equation}
also are components of a conserved vector of~$\mathcal L$.
Here $Q_{(r)}$ denotes the $r$-th prolongation~\cite{Olver1986,Ovsiannikov1982} of the operator $Q$.
\end{proposition}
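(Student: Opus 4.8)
The plan is to recognize the tuple \eqref{eq.inf.tr.var.cons.law} as the infinitesimal version of the transformation law \eqref{eq.tr.var.cons.law} and to deduce the result from Proposition~\ref{PropositionOnTransOfCLs}. The operator $Q$ generates a local one-parameter group $\mathcal T_\varepsilon$ of point transformations with $\mathcal T_0=\mathrm{id}$, $\p_\varepsilon\mathcal T_\varepsilon^{x_i}|_{\varepsilon=0}=\xi^i$ and $\p_\varepsilon\mathcal T_\varepsilon^{u^a}|_{\varepsilon=0}=\eta^a$. Since $Q$ is a Lie symmetry operator of~$\mathcal L$, each $\mathcal T_\varepsilon$ maps~$\mathcal L$ into itself, so by Proposition~\ref{PropositionOnTransOfCLs} the transformed tuples $\tilde F_\varepsilon$, expressed in the original variables via formula~\eqref{eq.tr.var.cons.law}, constitute a smooth one-parameter family in $\CV(\mathcal L)$.

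First I would show that $\widetilde F:=\p_\varepsilon\tilde F_\varepsilon|_{\varepsilon=0}$ is again a conserved vector. For every $\varepsilon$ one has $\mathop{\rm Div}\nolimits\tilde F_\varepsilon|_{\mathcal L}=0$, and because $\mathcal L$ is $\mathcal T_\varepsilon$-invariant the solution manifold on which the restriction is taken does not move with~$\varepsilon$. As the total derivative operators commute with $\p_\varepsilon$, differentiating this identity in~$\varepsilon$ and putting $\varepsilon=0$ yields $\mathop{\rm Div}\nolimits\widetilde F|_{\mathcal L}=0$, so $\widetilde F\in\CV(\mathcal L)$.

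It then remains to compute $\widetilde F$ and match it with the right-hand side of~\eqref{eq.inf.tr.var.cons.law}. To differentiate $\tilde F^i$ as a differential function of its own arguments, I would read \eqref{eq.tr.var.cons.law} as $\tilde F^i(\mathrm{pr}\,\mathcal T_\varepsilon\,z)=G^{ij}_\varepsilon(z)F^j(z)$ with $G^{ij}_\varepsilon=D_{x_j}\tilde x_i/|D_x\tilde x|$, and invert the prolonged group to get $\tilde F^i(z)=G^{ij}_\varepsilon(\mathrm{pr}\,\mathcal T_{-\varepsilon}z)F^j(\mathrm{pr}\,\mathcal T_{-\varepsilon}z)$. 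At $\varepsilon=0$ one has $G^{ij}_0=\delta_{ij}$ identically in its arguments, so the chain rule leaves exactly two contributions. The tangent vector $\p_\varepsilon(\mathrm{pr}\,\mathcal T_{-\varepsilon}z)|_{\varepsilon=0}$ equals $-\mathrm{pr}\,Q$, hence the directional derivative of $\delta_{ij}F^j=F^i$ gives $-Q_{(r)}F^i$; and the explicit $\varepsilon$-dependence gives $\p_\varepsilon G^{ij}_\varepsilon|_{\varepsilon=0}F^j$. For the latter I would use $\p_\varepsilon(D_{x_j}\tilde x_i)|_{\varepsilon=0}=D_j\xi^i$ together with $|A|'=|A|\,\mathrm{tr}(A'A^{-1})$ evaluated at $A=I$, which gives $\p_\varepsilon|D_x\tilde x||_{\varepsilon=0}=D_j\xi^j$; thus $\p_\varepsilon G^{ij}_\varepsilon|_{\varepsilon=0}=D_j\xi^i-\delta_{ij}D_k\xi^k$ and $\p_\varepsilon G^{ij}_\varepsilon|_{\varepsilon=0}F^j=(D_j\xi^i)F^j-(D_j\xi^j)F^i$. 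Adding the two contributions reproduces~\eqref{eq.inf.tr.var.cons.law}.

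The chain-rule bookkeeping in the last paragraph is routine; the one genuinely delicate point is the interchange in the second paragraph of $\p_\varepsilon$ with both $\mathop{\rm Div}\nolimits$ and the restriction $|_{\mathcal L}$. This rests precisely on the $\mathcal T_\varepsilon$-invariance of~$\mathcal L$, which makes the solution manifold $\varepsilon$-independent, together with the commutativity of $\p_\varepsilon$ with the total derivatives; once this is secured, the conclusion follows. For $m=1$ with $\mathcal T_\varepsilon$ a one-parameter group of contact transformations the same computation applies verbatim, in accordance with the note following Proposition~\ref{PropositionOnTransOfCLs}.
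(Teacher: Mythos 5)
Your proposal is correct and follows essentially the same route as the paper: the paper obtains the Proposition precisely by differentiating the finite transformation formula~\eqref{eq.tr.var.cons.law} for the one-parameter symmetry group generated by~$Q$ with respect to the group parameter~$\varepsilon$ and setting $\varepsilon=0$. Your version merely makes explicit the two points the paper leaves implicit (that the $\varepsilon$-derivative of a family of conserved vectors of the fixed, $\mathcal T_\varepsilon$-invariant system~$\mathcal L$ is again a conserved vector, and the chain-rule computation yielding $\p_\varepsilon G^{ij}_\varepsilon|_{\varepsilon=0}=D_j\xi^i-\delta_{ij}D_k\xi^k$), both of which are carried out correctly.
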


\begin{note}
In contrast to formula~(\ref{eq.tr.var.cons.law}), 
formula~\eqref{eq.inf.tr.var.cons.law} is well-known and extends directly to generalized symmetries.
See, for example,~\cite{Ibragimov1985,Khamitova1982,Olver1986} for generalized symmetries 
in the evolutionary form ($\xi^i=0$) 
and~\cite{Caviglia1986} for the general case. 
Below we show that in fact it is enough to restrict oneself to the version of the formula for the evolutionary form of symmetries.
It was used in~\cite{Khamitova1982} to introduce a notion of basis of conservation laws as a set
which generates a whole space of conservation laws by the action of generalized symmetry operators 
and the operation of linear combination. 
Here we give formula~\eqref{eq.inf.tr.var.cons.law} only 
through its connection with formula~(\ref{eq.tr.var.cons.law}).
\end{note}

There is a well-defined equivalence relation on the space $\mathrm{GS}(\mathcal L)$ of generalized symmetries of 
a system~$\mathcal L$ of differential equations~\cite{Olver1995}. 
Namely, generalized symmetry operators $Q$ and $Q'$ of the system~$\mathcal L$ are 
called equivalent if the difference of their evolutionary forms vanishes on the solutions of~$\mathcal L$. 
The corresponding factor-space will be denoted by $\mathrm{GS}_{\mathrm f}(\mathcal L)$.
The equivalence relation on $\mathrm{GS}(\mathcal L)$ agrees with 
the equivalence relation on $\CV(\mathcal L)$ in view of the following statement. 
 
\begin{proposition}
The action of equivalent generalized symmetry operators on equivalent conserved vector 
generates equivalent conserved vectors.
\end{proposition}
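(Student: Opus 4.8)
The plan is to exploit the bilinearity of the action defined by formula~\eqref{eq.inf.tr.var.cons.law}, together with the reduction to evolutionary symmetries announced in the preceding Note. By that reduction I may assume that both operators are written in evolutionary form, $Q=\eta^a\p_{u^a}$ and $Q'=\eta'{}^a\p_{u^a}$, so that $\xi^i=0$ and the action collapses to $\widetilde F^i=-Q_{(r)}F^i$; I write it componentwise as $Q\cdot F:=-\mathrm{pr}\,Q(F)$, where $\mathrm{pr}\,Q=(D_\alpha\eta^a)\p_{u^a_\alpha}$ is the prolongation. As a derivation applied to~$F$ this action is linear in~$F$, and since the prolongation coefficients $D_\alpha\eta^a$ depend linearly on~$\eta$ it is also linear in~$Q$; hence it is bilinear. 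For equivalent operators $Q\sim Q'$ (meaning $\eta^a-\eta'{}^a$ vanishes on~$\mathcal L$) and equivalent conserved vectors $F\sim F'$ (Definition~\ref{DefinitionOfConsVectorEquivalence}, i.e.\ $F-F'\in\CV_0(\mathcal L)$) I would expand
\[
Q\cdot F-Q'\cdot F'=Q\cdot(F-F')+(Q-Q')\cdot F',
\]
so that it remains to show that each summand is a trivial conserved vector.

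For the first summand I must establish the invariance property $Q\cdot\CV_0(\mathcal L)\subseteq\CV_0(\mathcal L)$: a symmetry acting on a trivial conserved vector yields a trivial one. Writing a trivial conserved vector as $F_0=\hat F+\check F$ (a part $\hat F$ vanishing on~$\mathcal L$ plus a null divergence~$\check F$), I would treat the two pieces separately. For the null-divergence piece I use Lemma~\ref{lemma.null.divergence} to write $\check F^i=D_jv^{ij}$ with $v^{ij}=-v^{ji}$; since $\mathrm{pr}\,Q$ commutes with the total derivatives, $\mathrm{pr}\,Q(\check F^i)=D_j(\mathrm{pr}\,Q\,v^{ij})$ with $\mathrm{pr}\,Q\,v^{ij}$ still antisymmetric in $i,j$, so $-\mathrm{pr}\,Q(\check F)$ is again a null divergence. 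For the piece~$\hat F$ vanishing on~$\mathcal L$ I would invoke total nondegeneracy to represent each $\hat F^i$ as a combination of differential consequences $D^\beta L^\mu$ of the system, apply $\mathrm{pr}\,Q$ via the product rule, and use both the commutation $\mathrm{pr}\,Q\,D^\beta=D^\beta\mathrm{pr}\,Q$ and the symmetry condition $\mathrm{pr}\,Q(L^\mu)|_{\mathcal L}=0$ to conclude that $\mathrm{pr}\,Q(\hat F^i)$ again lies in the differential ideal generated by~$\mathcal L$, hence vanishes on~$\mathcal L$.

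For the second summand I must show $(Q-Q')\cdot\CV(\mathcal L)\subseteq\CV_0(\mathcal L)$ whenever the characteristic $\eta_0^a=\eta^a-\eta'{}^a$ of the difference operator vanishes on~$\mathcal L$. Here the argument is shorter: $\mathrm{pr}\,(Q-Q')(F'{}^i)=\sum_{a,\alpha}(D_\alpha\eta_0^a)\,\p F'{}^i/\p u^a_\alpha$, and since $\eta_0^a$ vanishes on~$\mathcal L$ so does every differential consequence $D_\alpha\eta_0^a$; thus each term, and hence every component of $(Q-Q')\cdot F'$, vanishes on~$\mathcal L$. Moreover any tuple all of whose components vanish on~$\mathcal L$ is automatically a (trivial) conserved vector, its divergence lying in the differential ideal and so vanishing on~$\mathcal L$; hence $(Q-Q')\cdot F'\in\CV_0(\mathcal L)$. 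Adding the two trivial summands gives $Q\cdot F\sim Q'\cdot F'$, as claimed.

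I expect the main obstacle to be the $\hat F$ part of the first invariance property: showing that the prolonged symmetry preserves the differential ideal generated by~$\mathcal L$ is where total nondegeneracy, the commutation $\mathrm{pr}\,Q\,D^\beta=D^\beta\mathrm{pr}\,Q$ and the symmetry condition must be combined with some care, whereas the null-divergence piece and the second summand are comparatively routine.
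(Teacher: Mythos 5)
Your treatment of the two summands in the decomposition $Q\cdot F-Q'\cdot F'=Q\cdot(F-F')+(Q-Q')\cdot F'$ is sound and coincides, step for step, with the paper's own arguments: the invariance of $\CV_0(\mathcal L)$ under the action of an evolutionary symmetry is proved exactly as in the paper (the Hadamard representation $\hat F^i=\lambda^{i\mu}L^{r\mu}$ combined with the symmetry condition for the part vanishing on~$\mathcal L$, and commutation of $\mathrm{pr}\,Q$ with total derivatives for the null-divergence part), and your observation that every component of $(Q-Q')\cdot F'$ vanishes on~$\mathcal L$ because all prolonged coefficients $D_\alpha\eta_0^a$ do is the paper's second step.

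However, there is a genuine gap at the very beginning: you may not ``assume that both operators are written in evolutionary form'' by citing the Note, because the Note only \emph{announces} that reduction and explicitly defers its proof to the proof of this very proposition. The reduction is not a harmless normalization --- it is itself a special case of the statement being proved. Indeed, the equivalence of generalized symmetry operators is defined through their evolutionary forms, so every operator $Q=\xi^i\p_i+\eta^a\p_{u^a}$ is equivalent to its own evolutionary form~$\hat Q$, and the proposition applied to the pair $(Q,\hat Q)$ acting on one and the same~$F$ asserts precisely that $Q[F]\sim\hat Q[F]$; this is nontrivial, since formula~\eqref{eq.inf.tr.var.cons.law} for~$Q$ contains the extra terms $(D_j\xi^i)F^j-(D_j\xi^j)F^i$ and the prolongation $Q_{(r)}$ differs from $\hat Q_{(r)}$. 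As written, your proof establishes the proposition only for pairs of evolutionary operators. The missing step is supplied in the paper by the identity $Q_{(\infty)}=\hat Q_{(\infty)}+\xi^iD_i$, which gives
\[
\widetilde F^i=-\hat Q_{(r)}F^i+D_j\bigl(\xi^iF^j-\xi^jF^i\bigr)-\xi^iD_jF^j,
\]
so that $Q[F]-\hat Q[F]$ is the sum of a null divergence and a tuple vanishing on solutions, i.e.\ a trivial conserved vector. With this short computation inserted, your argument becomes complete and is then essentially the paper's proof, organized around a two-term rather than a four-term decomposition.
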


\begin{proof}
We show at first that a generalized symmetry operator~$Q$ and its evolutionary form~$\hat Q$ 
generate equivalent conserved vectors, acting on the same conserved vector. 
Indeed, since $Q_{(\infty)}=\hat Q_{(\infty)}+\xi^i D_i$ then 
\begin{gather*}
\widetilde F^i=-\hat Q_{(\infty)}F^i+(D_j\xi^i)F^j-(D_j\xi^j)F^i-\xi^j D_jF^i
\\\phantom{\widetilde F^i}
=-\hat Q_{(r)}F^i+D_j(\xi^iF^j-\xi^jF^i)-\xi^i D_jF^j
=-\hat Q_{(r)}F^i+\check F^i+\hat F^i, 
\end{gather*}
where $\widetilde F^i$ are components of the conserved vector $\widetilde F=Q[F]$ defined by~\eqref{eq.inf.tr.var.cons.law}, and
$Q_{(\infty)}$ and $\hat Q_{(\infty)}$ denote the formal infinite prolongations of the operators~$Q$ and~$\smash{\hat Q}$, respectively. 
The differential functions $\check F^i=D_j(\xi^iF^j-\xi^jF^i)$ are the components of a null divergence and 
$\hat F^i=-\xi^i D_jF^j$ vanish on the solutions of~$\mathcal L$. 
Therefore, the conserved vectors $\smash{\widetilde F}$ and $\hat Q[F]=-\hat Q_{(r)}F$ are equivalent.

If the difference of generalized symmetry operators in the evolutionary form vanishes on the solutions of~$\mathcal L$, 
the difference of their actions on a conserved vector obviously have the same property. 

The action of any generalized symmetry operator in evolutionary form on a trivial conserved vector 
results in a trivial conserved vector. 
Indeed, consider an operator $\hat Q\in\mathrm{GS}(\mathcal L)$ in evolutionary form of order~$\hat r$
and a trivial conserved vector~$F$ of the system~$\mathcal L$, i.e., 
$F=\hat F+\check F$, where $\hat F\bigl|_{\mathcal L}=0$ and $\mathop{\rm Div}\nolimits\check F=0$.
In view of the Hadamard lemma and the condition $\hat F\bigl|_{\mathcal L}=0$, each component of~$\hat F$ 
is presented in the form $\hat F^i=\lambda^{i\mu} L^{r\mu}$. 
Here $r$ is the order of $\hat F$,
$\lambda^{i\mu}$ and $L^{r\mu}$, $\mu=1,\dots,l_r$, are smooth functions in the jet space~$J^{(r)}$,
the tuple $(L^{r1},\dots,L^{rl_r})$ determines the manifold $\mathcal L_{(r)}$ in~$J^{(r)}$. 
Then $\hat Q_{(r)}\hat F^i=L^{r\mu}\hat Q_{(r)}\lambda^{i\mu}+\lambda^{i\mu}\hat Q_{(r)}L^{r\mu}=0$ 
on $\mathcal L_{(r')}$, where $r'=r+\hat r$, since $\hat Q\in\mathrm{GS}(\mathcal L)$. 
This means that $\hat Q[\hat F]\bigl|_{\mathcal L}=0$.
The components of the null divergence~$\check F$ are represented in the form $\check F^i=D_jv^{ij}$, 
where $v^{ij}$ are smooth functions of $x$ and derivatives of $u$ such that $v^{ij}=-v^{ji}$. 
The equality 
\[
\hat Q[\check F]=-\hat Q_{(\infty)}D_jv^{ij}=-D_j\hat Q_{(\infty)}v^{ij}=-D_j\hat v^{ij}
\] 
implies that $\hat Q[\check F]$ also is a null divergence since 
$\hat v^{ij}=\hat Q_{(\infty)}v^{ij}$ are differential functions and $\hat v^{ij}=-\hat v^{ji}$.
Therefore, $\hat Q[F]=\hat Q[\hat F]+\hat Q[\check F]$ is a trivial conserved vector as the sum of 
the tuple $\hat Q[\hat F]$ vanishing on solutions of~$\mathcal L$ and the null divergence~$\hat Q[\check F]$.

Let $Q,Q'\in\mathrm{GS}(\mathcal L)$, $F,F'\in\CV(\mathcal L)$, $Q\sim Q'$ and $F\sim F'$. 
Then $Q[F]\sim Q'[F']$ since 
\[
Q'[F']-Q[F]=(Q'-\hat Q')[F']-(Q-\hat Q)[F]+(\hat Q'-\hat Q)[F']+\hat Q[F'-F]
\]
is a trivial conserved vector of the system~$\mathcal L$ in view of the above considerations.
\end{proof}

\begin{corollary}
For any system~$\mathcal L$
formula~\eqref{eq.inf.tr.var.cons.law} gives a well-defined action of elements from $\mathrm{GS}_{\mathrm f}(\mathcal L)$ 
on conservation laws of~$\mathcal L$.
\end{corollary}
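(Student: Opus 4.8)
The plan is to read this corollary as the formal repackaging of the Proposition that immediately precedes it, so that almost all the analytic content is already available and what remains is to check that formula~\eqref{eq.inf.tr.var.cons.law} descends cleanly to the two factor spaces $\mathrm{GS}_{\mathrm f}(\mathcal L)$ and $\CL(\mathcal L)$. First I would fix representatives and define the candidate action: for $[Q]\in\mathrm{GS}_{\mathrm f}(\mathcal L)$ and $[F]\in\CL(\mathcal L)$ set $[Q]\cdot[F]:=[\,Q[F]\,]$, where $Q[F]=\widetilde F$ is the tuple produced by~\eqref{eq.inf.tr.var.cons.law}. The first point to verify is that this value even lies in $\CL(\mathcal L)$, i.e.\ that $Q[F]\in\CV(\mathcal L)$; this is exactly the assertion of the earlier Proposition stating that~\eqref{eq.inf.tr.var.cons.law} yields a conserved vector of~$\mathcal L$ whenever $Q$ is a Lie (or generalized) symmetry and $F\in\CV(\mathcal L)$.

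The core step is representative-independence, and here I would invoke the Proposition just proved essentially verbatim. If $Q\sim Q'$ in $\mathrm{GS}(\mathcal L)$ (their evolutionary forms coincide on solutions of~$\mathcal L$) and $F\sim F'$ in $\CV(\mathcal L)$ (their difference is a trivial conserved vector), then that Proposition gives $Q[F]\sim Q'[F']$, hence $[\,Q[F]\,]=[\,Q'[F']\,]$ in $\CL(\mathcal L)$. This single statement simultaneously settles both well-definedness requirements, namely independence from the representative of $[Q]$ and from the representative of $[F]$, so no separate argument for each slot is required.

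Finally I would record bilinearity, which is what makes the word ``action'' meaningful in this linear-space setting. Since~\eqref{eq.inf.tr.var.cons.law} is linear in $F$ (the prolongation $Q_{(r)}$ and the total-derivative terms all act linearly) and linear in the coefficients $\xi^i$ and $\eta^a$ of $Q$, the assignment $(Q,F)\mapsto Q[F]$ is bilinear on $\mathrm{GS}(\mathcal L)\times\CV(\mathcal L)$; because the subspaces of trivial objects are respected by the previous step, this bilinearity passes to the factors and one obtains a well-defined bilinear pairing $\mathrm{GS}_{\mathrm f}(\mathcal L)\times\CL(\mathcal L)\to\CL(\mathcal L)$.

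I do not anticipate a genuine obstacle: the hard analysis, in particular verifying that the formula produces a conserved vector and the delicate fact that a symmetry in evolutionary form sends trivial conserved vectors to trivial ones, is already discharged in the two preceding Propositions. The only subtlety worth flagging is that the equivalence on $\mathrm{GS}(\mathcal L)$ is defined through evolutionary forms, so one should first reduce to evolutionary form, using that $Q[F]$ and $\hat Q[F]$ are equivalent (as established in the preceding proof), before comparing representatives; once that reduction is noted, the corollary follows immediately.
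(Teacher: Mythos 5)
Your proposal is correct and follows essentially the same route as the paper: the corollary is stated there without a separate proof precisely because it is the immediate repackaging of the two preceding propositions (the formula yields a conserved vector, and equivalent operators applied to equivalent conserved vectors give equivalent conserved vectors), which is exactly how you argue. Your added remarks on bilinearity and on reducing to evolutionary form are consistent with the paper's proof of the preceding proposition and introduce no gap.
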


That is why formula~\eqref{eq.inf.tr.var.cons.law} is usually presented only for generalized symmetry operators 
in the evolutionary form ($\xi^i=0$) 
and only inequivalent generalized symmetry operators should be used 
to generate new conservation laws from known ones. 
Note additionally that the application of formula~\eqref{eq.inf.tr.var.cons.law} does not guarantee 
the construction of nontrivial conserved vectors from nontrivial ones~\cite{Ibragimov1985,Khamitova1982,Olver1986}.

\subsection{Potential systems}\label{SectionOnPotSystems}

If the local conservation laws of a system~$\mathcal L$ of differential equations are known,
we can apply Lemma~\ref{lemma.null.divergence} to conservation laws constructed 
on the set of solutions of~\mbox{$\mathcal L={\mathcal L}^0$}. 
In this way we introduce potentials as additional dependent variables. Then we attach the equations 
connecting the potentials with the components of the corresponding conserved vectors to~${\mathcal L}^0$.
If~\mbox{$n>2$} the attached equations of this kind form an underdetermined system with respect to the potentials.
Therefore, we can also add gauge conditions on the potentials to~${\mathcal L}^0$. 
In fact, such additional conditions are absolutely necessary in the case \mbox{$n>2$}. 
It was proved in Theorem 2.7 of~\cite{Anco&Bluman1997} for a quite general situation that 
every local symmetry of a potential system with potentials which are not additionally constrained 
is projectable to a local symmetry of the initial system, i.e., such a potential system gives no nontrivial potential symmetries. 
Moreover, each conservation law of such a system is invariant with respect to gauge transformations of the potentials \cite{Anco&The2005}.

We have to use linearly independent conservation laws since otherwise the introduced potentials will be
{\em dependent} in the following sense: there exists a linear combination of the potential tuples,
which is, for some $r'\in{\mathbb N}$, a tuple of functions of $x$ and $u_{(r')}$ only.

Then we exclude the superfluous equations (i.e., the equations that are dependent on
equations from~${\mathcal L}^0$ and the attached equations simultaneously)
from the extended (potential) system~${\mathcal L}^1$,
which will be called a {\em potential system of the first level}.
Any conservation law of~${\mathcal L}^0$ is one of~${\mathcal L}^1$.
We iterate the above procedure for~${\mathcal L}^1$ to find its conservation laws
which are linearly independent with those from the previous iteration
and will be called {\em potential conservation laws of the first level}.

We continue this process as long as possible
(i.e., the iteration procedure has to be stopped if all the conservation laws of
a {\em potential system~${\mathcal L}^{k+1}$ of the $(k+1)$-st level} are linearly dependent
with the ones of~${\mathcal L}^k$) or inductively construct infinite chains of conservation laws.
This procedure may yield {\em purely potential} conservation laws of the initial system~$\mathcal L$,
which are linearly independent with local conservation laws and depend explicitly on potential variables. 
The idea of this iteration procedure can be traced back to the well-known paper 
by Wahlquist and Estabrook~\cite{Wahlquist&Estabrook1975}. 

Any conservation law from the previous step of the iteration procedure will be a conservation law for the next step. 
Conservation laws which are obtained on the next step
and depend only on variables of the previous step are linearly dependent with
conservation laws from the previous step.
It is also obvious that the conservation laws used for the construction of a potential system of the next level are
trivial on the manifold of this system. 

Since gauge conditions on potentials can be chosen in many different ways, an
exhaustive realization of the above iteration procedure is improbable if $n>2$.

The best way to calculate conservation laws on each level is to apply the direct method of finding conservation laws. 
One can distinguish four versions of this method depending on what condition on conservation laws 
(\eqref{EqDefOfConservedVector}, \eqref{CharFormOfConsLaw}, \eqref{NSCondOnChar} or \eqref{NCondOnChar})
is taken as a basis for performing calculations 
\cite{Anco&Bluman2002a,Anco&Bluman2002b,Bocharov&Co1997,Popovych&Ivanova2004NVCDCEs,Wolf2002}. 
Each of these four versions of the direct method has its advantages and disadvantages
in applications and concerning implementation in computer algebra programs~\cite{Wolf2002}. 
The version involving characteristics is close to the the symmetry group method by Noether, 
which is applicable only in the case of Euler--Lagrange equations and is effective if 
the generalized symmetry algebra of the system under consideration is already known.

The case of two independent variables is distinguished by the possible (constant) indeterminacy
after the introduction of potentials and also by the high effectiveness of the application of potential symmetries.
This is why we consider some notions connected with conservation laws in this case separately.
We denote the independent variables as $t$ (the time variable) and $x$ (the space variable).
Any local conservation law has the form
\begin{equation}\label{EqGen2DimConsLaw}
D_tF(t,x,u_{(r)})+D_xG(t,x,u_{(r)})=0,
\end{equation}
where $D_t$ and $D_x$ are the operators of total differentiation with respect to $t$ and $x$.
The components $F$ and $G$ of the conserved vector~$(F,G)$ are called
the {\em density} and the {\em flux} of the conservation law, respectively.
The conservation law allows us to introduce the new dependent (potential) variable~$v$
by means of the equations
\begin{equation}\label{potsys1}
v_x=F,\quad v_t=-G
\end{equation}
determining~$v$ up to a constant summand. 

If $\mathcal L$ is a single equation, it is a differential consequence of~\eqref{potsys1} in the case of a nonsingular characteristic 
and equations of form~\eqref{potsys1} combine into the complete potential system.
As a rule, systems of this kind admit a number of nontrivial symmetries and so they are of great interest.

Introducing a number of potentials for an iteration step in the case of two independent variables,
we can use the notion of potential dependence, which is more general than the one based on linear dependence 
of conservation laws.

\begin{definition}\label{DefinitionOfPotentialDependence}
The potentials $v^1$, \ldots, $v^p$ are called
{\em locally dependent on the set of solutions of the system~${\mathcal L}$} (or, briefly speaking, {\em dependent})
if there exist $r'\in{\mathbb N}$ and a function~$H$ of the variables $t$, $x$, $u_{(r')}$, $v^1$, \ldots, $v^p$
such that $H_{v^s}\ne0$ for some~$s$ and 
$H(t,x,u_{(r')},v^1,\ldots,v^p)=0$ for any solution $(u,v^1,\ldots,v^p)$ of the total system determining
the set of potentials~$v^1$, \ldots, $v^p$ (up to gauge transformations, i.e., up to adding negligible constants to the potentials).
\end{definition}

The proof of local dependence or independence of potentials for general classes of differential equations is difficult since
it is closely connected with a precise description of the possible structure of conservation laws.

Proposition~\ref{PropositionOnInducedMappingOfCL} and equation~\eqref{potsys1} imply 
the following statement~\cite{Popovych&Ivanova2004ConsLawsLanl}.

\begin{proposition}\label{PropositionOn2DConsLawEquivRelation}
Any point transformation connecting two systems~$\mathcal L$ and~$\tilde{\mathcal L}$
of PDEs with two independent variables generates a one-to-one mapping between the sets of potential systems
corresponding to~$\mathcal L$ and~$\tilde{\mathcal L}$. This mapping is induced by trivial prolongation
on the space of the introduced potential variables, i.e. one can assume that the potentials are not transformed.
\end{proposition}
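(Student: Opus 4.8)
The plan is to leverage Proposition~\ref{PropositionOnInducedMappingOfCL} (which gives the one-to-one correspondence between conservation laws of $\mathcal L$ and $\tilde{\mathcal L}$ under a point transformation $\mathcal T$) and to show that the induced map on conservation laws lifts coherently to the level of potential systems built from those conservation laws via~\eqref{potsys1}. Concretely, fix a point transformation $\mathcal T$ connecting $\mathcal L$ and $\tilde{\mathcal L}$. Given a conservation law $\mathcal F$ of $\mathcal L$ with conserved vector $(F,G)$, the potential system~$\mathcal L^1$ is obtained by adjoining $v_x=F$, $v_t=-G$ to $\mathcal L$. I want to produce the corresponding potential system $\tilde{\mathcal L}^1$ for $\tilde{\mathcal L}$ from the transformed conservation law $\tilde{\mathcal F}=\mathcal T_{\rm f}(\mathcal F)$, whose conserved vector $(\tilde F,\tilde G)$ is given explicitly by formula~\eqref{eq.tr.var.cons.law} of Proposition~\ref{PropositionOnTransOfCLs}, and then check that the original defining equations $v_x=F$, $v_t=-G$ are mapped into (differential consequences equivalent to) $\tilde v_x=\tilde F$, $\tilde v_t=-\tilde G$ provided one declares $\tilde v=v$, i.e.\ the potential is left untransformed.

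The central computation is to verify that, with $\tilde v=v$, the prolongation of $\mathcal T$ carries the pair $\{v_x=F,\ v_t=-G\}$ to the pair $\{\tilde v_{\tilde x}=\tilde F,\ \tilde v_{\tilde t}=-\tilde G\}$. The clean way to see this is through the differential-form formulation already used in the proof of Proposition~\ref{PropositionOnTransOfCLs}. The defining relations~\eqref{potsys1} are exactly the statement that $dv = F\,dx - G\,dt$ as a horizontal one-form on solutions (since $v_x=F$, $v_t=-G$), and this one-form is precisely the $\omega_{(F,G)}$ attached to the conserved vector in that proof, namely $\omega_{(F,G)} = F\,dx - G\,dt$ in the two-dimensional case. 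Because horizontal one-forms are invariant under the (prolonged) change of variables, and because $\tilde v=v$ forces $d\tilde v=dv$, the relation $dv=\omega_{(F,G)}$ transforms into $d\tilde v=\omega_{(\tilde F,\tilde G)}=\tilde F\,d\tilde x-\tilde G\,d\tilde t$, which is just $\tilde v_{\tilde x}=\tilde F$, $\tilde v_{\tilde t}=-\tilde G$. Thus the potential equations go over to potential equations without any transformation of $v$, and the correspondence is manifestly invertible by applying $\mathcal T^{-1}$ with the same convention.

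Having matched a single potential equation, I would then argue that the entire construction of potential systems is covariant: since a potential system of any level is assembled from the base system together with a tuple of such potential relations (possibly using several linearly independent conservation laws and iterating), and since Proposition~\ref{PropositionOnInducedMappingOfCL} guarantees that $\mathcal T_{\rm f}$ preserves linear independence of conservation laws, the bijection extends level by level. Each potential variable introduced on the $\tilde{\mathcal L}$ side is identified with the corresponding one on the $\mathcal L$ side under $\tilde v^s=v^s$, and the defining equations correspond by the one-form argument above. This yields the claimed one-to-one mapping between the sets of potential systems, induced by trivial prolongation on the potential variables.

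The main obstacle I anticipate is bookkeeping rather than conceptual: one must ensure that the \emph{equivalence} freedom in conserved vectors (adding $\hat F$, $\hat G$ vanishing on $\mathcal L$ and a total-derivative pair $D_xH$, $-D_tH$) is compatible with the potential-system construction, so that equivalent conserved vectors yield potential systems differing only by the gauge indeterminacy $v\mapsto v+\const$ and by equations that are differential consequences on the solution manifold. Concretely, the term $D_xH$, $-D_tH$ shifts $v$ by $H$, which for the purpose of the potential \emph{system} (as opposed to a fixed conserved vector) is absorbable, while the trivial part $\hat F$, $\hat G$ contributes equations holding identically on $\mathcal L$ and hence removable as superfluous. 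Checking that these ambiguities transform consistently under $\mathcal T$—so that the mapping descends to a well-defined bijection on potential systems up to gauge—is the step requiring care, but it follows from the already-established invariance of trivial conserved vectors (Proposition~\ref{PropositionOnInducedMappingOfCL}) together with the invariance of the horizontal differential $\sf D$.
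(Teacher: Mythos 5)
Your proposal is correct and follows essentially the route the paper intends: the paper derives this proposition directly from Proposition~\ref{PropositionOnInducedMappingOfCL} together with the defining equations~\eqref{potsys1}, which is exactly what you do, with the horizontal one-form identity $dv=F\,dx-G\,dt=\omega_{(F,G)}$ (taken from the second proof of Proposition~\ref{PropositionOnTransOfCLs}) serving as the natural way to make the compatibility of $\tilde v=v$ with the transformed conserved vector~\eqref{eq.tr.var.cons.law} explicit. Your additional bookkeeping on equivalence of conserved vectors and preservation of linear independence is consistent with the paper's framework and fills in details the paper leaves to the cited reference.
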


\begin{corollary}\label{CorollaryOn2DPotSystemsEquivRelGeneratedBySymGroup}
The Lie symmetry group of a system~$\mathcal L$ of differential equations generates an equivalence group
on the set of potential systems corresponding to~$\mathcal L$.
\end{corollary}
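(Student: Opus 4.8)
The plan is to obtain the corollary directly from Proposition~\ref{PropositionOn2DConsLawEquivRelation} by specializing to $\tilde{\mathcal L}=\mathcal L$ and then checking that the induced maps close up into a group. First I would recall that a Lie symmetry transformation of~$\mathcal L$ is, by definition, a point transformation connecting~$\mathcal L$ with itself. Applying Proposition~\ref{PropositionOn2DConsLawEquivRelation} to each such $\mathcal T\in G(\mathcal L)$ with $\tilde{\mathcal L}=\mathcal L$ produces a one-to-one mapping~$\mathcal T_{\rm pot}$ of the set of potential systems of~$\mathcal L$ onto itself, obtained by prolonging $\mathcal T$ trivially to the potential variables, i.e., leaving the potentials unchanged up to the inessential constant summands. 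The effect of $\mathcal T_{\rm pot}$ is transparent: it sends the potential system associated with a conserved vector $(F,G)$ to the one associated with the transformed conserved vector $(\tilde F,\tilde G)=\mathcal T^F$, which is again a conserved vector of~$\mathcal L$ precisely because $\mathcal T$ preserves~$\mathcal L$; on the level of conservation laws this is just the bijection $\mathcal T_{\rm f}$ of $\CL(\mathcal L)$ furnished by Proposition~\ref{PropositionOnInducedMappingOfCL}.

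The second step is to verify that the assignment $\mathcal T\mapsto\mathcal T_{\rm pot}$ is a group homomorphism from $G(\mathcal L)$ into the transformations of the set of potential systems, so that its image is a group. The identity transformation clearly induces the identity map. For composition and inversion I would use that the trivial prolongation to the potentials is functorial: since each $\mathcal T_{\rm pot}$ acts on $(t,x,u)$ exactly as $\mathcal T$ does and fixes every potential~$v^s$, the composite $\mathcal T_{1,\rm pot}\circ\mathcal T_{2,\rm pot}$ acts as $\mathcal T_1\circ\mathcal T_2$ on $(t,x,u)$ and still fixes the potentials, whence it coincides with $(\mathcal T_1\circ\mathcal T_2)_{\rm pot}$; likewise $(\mathcal T_{\rm pot})^{-1}=(\mathcal T^{-1})_{\rm pot}$. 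Consequently the collection $\{\mathcal T_{\rm pot}\mid \mathcal T\in G(\mathcal L)\}$ is closed under composition and inversion and contains the identity, i.e., it forms a group acting on the set of potential systems of~$\mathcal L$. By construction two potential systems lying in one orbit are related by a prolonged symmetry, so this group is exactly an equivalence group on the set of potential systems.

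The only point requiring genuine care — as opposed to the routine homomorphism bookkeeping — is the compatibility of the constant indeterminacy in the potentials with the composition of transformations. Since each potential is defined only up to an additive constant (cf.~\eqref{potsys1}) and the trivial prolongation fixes the potentials only modulo these constants, I would check that the ambiguity is absorbed consistently, so that $\mathcal T_{\rm pot}$ is well defined as a map on potential systems, where a shift of a potential by a constant does not alter the system. This is exactly the gauge freedom already built into Definition~\ref{DefinitionOfPotentialDependence} and into Proposition~\ref{PropositionOn2DConsLawEquivRelation}, so I expect it to pose no real obstacle; apart from it, the statement is a formal consequence of the proposition together with the group property of~$G(\mathcal L)$.
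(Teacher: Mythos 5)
Your proposal is correct and follows exactly the route the paper intends: the corollary is stated there without proof as an immediate consequence of Proposition~\ref{PropositionOn2DConsLawEquivRelation}, obtained by specializing $\tilde{\mathcal L}=\mathcal L$ so that each symmetry transformation, trivially prolonged to the potentials, permutes the potential systems of~$\mathcal L$. Your additional verification of the homomorphism property and of the consistency with the constant gauge freedom in the potentials is routine bookkeeping that the paper leaves implicit, and it is carried out correctly.
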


\begin{corollary}
Let $\widehat{\mathcal L}|_S$ be the set of all potential systems constructed
for systems from the class~$\mathcal L|_S$ with their conservation laws.
The action of transformations from~$G^\sim(L,S)$ together with the equivalence relation of potentials
naturally generates an equivalence relation on~$\widehat{\mathcal L}|_S$.
\end{corollary}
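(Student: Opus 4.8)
The plan is to verify that the relation in question is reflexive, symmetric and transitive, reducing everything to the group structure of $G^\sim=G^\sim(L,S)$ and to the functoriality of the induced mappings on potential systems established in Proposition~\ref{PropositionOn2DConsLawEquivRelation}. First I would make the relation precise. An element of $\widehat{\mathcal L}|_S$ is a potential system built over some $\mathcal L_\theta$ from a tuple of linearly independent conservation laws via equations of form~\eqref{potsys1}. I declare two such objects, built over $\mathcal L_\theta$ and over $\mathcal L_{\theta'}$, to be equivalent if there is a $\mathcal T\in G^\sim$ mapping $\mathcal L_\theta$ onto $\mathcal L_{\theta'}$ whose trivial prolongation to the potential variables (Proposition~\ref{PropositionOn2DConsLawEquivRelation}) sends the first potential system to the second, up to the equivalence of potentials $v^s\to v^s+c^s$ with constants $c^s$ and up to the equivalence of conserved vectors of Definition~\ref{DefinitionOfConsVectorEquivalence}. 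This is the natural promotion of Corollary~\ref{CorollaryOn2DPotSystemsEquivRelGeneratedBySymGroup} from a single system to the whole class.

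The crucial technical ingredient I would isolate is that formula~\eqref{eq.tr.var.cons.law} defines a genuine group action of point transformations on conserved vectors. Concretely, for $\mathcal T$ and $\mathcal S$ the Jacobian factor of the composite $\mathcal S\circ\mathcal T$ equals the product of the individual factors by the chain rule $D_x\tilde{\tilde x}=(D_{\tilde x}\tilde{\tilde x})(D_x\tilde x)$ for total derivatives and the multiplicativity of determinants, while the matrices $D_x\tilde x$ multiply accordingly; hence $(\mathcal S\circ\mathcal T)_*=\mathcal S_*\circ\mathcal T_*$, the identity induces the identity, and $(\mathcal T^{-1})_*=(\mathcal T_*)^{-1}$. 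Because the potentials are carried along by trivial prolongation, the same composition and inversion laws hold verbatim for the induced mappings between potential systems, and the defining equations~\eqref{potsys1} are transformed consistently. This cocycle property is exactly what I expect to be the main obstacle, although it is essentially a bookkeeping computation rather than a conceptual difficulty.

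Granting this, the three axioms follow immediately. For reflexivity I take $\mathcal T=\mathrm{id}\in G^\sim$, whose induced mapping is the identity, so any potential system is equivalent to itself. For symmetry, if $\mathcal T\in G^\sim$ realizes the equivalence then $\mathcal T^{-1}\in G^\sim$ by the group structure, and its induced mapping $(\mathcal T_*)^{-1}$ carries the second potential system back to the first; here I would also note that the inverse of a constant shift of potentials is again a constant shift, so the potential-equivalence part is respected. For transitivity, if $\mathcal T$ and $\mathcal S$ realize two successive equivalences then $\mathcal S\circ\mathcal T\in G^\sim$ realizes their composite, since the induced mappings compose by the cocycle property and the relation of adding constants is itself transitive. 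Finally I would remark that the construction is independent of the representatives chosen within each conservation law, because Proposition~\ref{PropositionOnInducedMappingOfCL} guarantees that equivalent conserved vectors are mapped to equivalent conserved vectors and that linear independence is preserved, so the image is again a bona fide potential system with the same number of potentials.
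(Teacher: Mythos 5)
Your proposal is correct and follows essentially the route the paper intends: the paper states this corollary without a written proof, treating it as immediate from Proposition~\ref{PropositionOn2DConsLawEquivRelation} (trivial prolongation of admissible transformations to the potentials) together with its earlier remark that all such relations are easily checked to be reflexive, symmetric and transitive. Your explicit verification---that formula~\eqref{eq.tr.var.cons.law} yields a genuine group action by the chain rule for total derivatives and multiplicativity of determinants, so that the group structure of $G^\sim(L,S)$ and Proposition~\ref{PropositionOnInducedMappingOfCL} deliver the three axioms and the independence of representatives---is precisely the bookkeeping the paper omits.
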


\begin{note}
Proposition~\ref{PropositionOn2DConsLawEquivRelation} and its corollaries imply that the equivalence group for a class of
systems or the symmetry group for a single system can be prolonged to the potential variables for any step of
the direct iteration procedure. It is natural that the prolonged equivalence groups and symmetry groups are used to classify
possible conservation laws, potential systems and potential symmetries in each iteration.
\end{note}

\begin{definition}\label{DefinitionOfPotSyms}
Every Lie symmetry of a potential system is called a \emph{potential symmetry} of the initial system. 
A potential symmetry operator is called nontrivial if it is not projectable to the space of independent 
and (original) dependent variables, 
i.e., if some of the coefficients corresponding to these variables explicitly depend on potentials. 
\end{definition}

The notion of generalized (resp.\ nonclassical, resp.\ conditional, resp.\ approximate, etc.) potential symmetry is defined in an analogous way. 

Each tuple of independent conservation laws generates, in fact, an infinite series of potential systems 
associated with equivalent tuples of conserved vectors. 
These systems are connected via transformations of potentials 
of the form $\tilde v^s=v^s+H^s$, $s=1,\dots,p$, where $p$ is the number of conservation laws in the tuple 
and the $H^s$ are functions of $x$ and derivatives of~$u$. 
Therefore, they are equivalent in the investigation of generalized symmetries of arbitrary orders.
At the same time, the choice of representatives in sets of equivalent conserved vectors becomes significant if 
one considers generalized symmetries of a fixed order, e.g., in the case of Lie symmetries having order 0. 
For some classes of equations, a favorable choice can explicitly be given.

\section{Local conservation laws}\label{SectionOnLocalCLsOfLPEs}

We look for (local) conservation laws of equations from class~\eqref{EqGenLPE},
applying the modification of the direct method which was proposed in~\cite{Popovych&Ivanova2004ConsLawsLanl}.
Since equation~\eqref{EqGenLPE} is two-dimensional, the constructed conservation laws will have the general form~\eqref{EqGen2DimConsLaw}.

At first we prove a lemma on the order of local conservation laws for equations from class~\eqref{EqGenLPE}.

\begin{lemma}\label{LemmaOnOrderOfConsLawsOfLPEs}
Any local conservation law of any equation from class~\eqref{EqGenLPE} is of first order
and, moreover, it possesses a conserved vector with density depending at most on $t$, $x$, and $u$
and flux depending at most on $t$, $x$, $u$ and~$u_x$. 
\end{lemma}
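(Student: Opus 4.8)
The plan is to reduce an arbitrary conserved vector to a canonical representative adapted to the evolution structure of~\eqref{EqGenLPE}, and then to bound its order by a leading-order analysis of the conservation-law identity followed by a descent on the density.

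First I would use the evolution structure. On the manifold of~\eqref{EqGenLPE} and its differential consequences, every time and mixed derivative of~$u$ is expressible through $t$, $x$ and the pure spatial derivatives $u_k:=\p_x^k u$ (with $u_0=u$, $u_1=u_x$), by repeatedly substituting $u_t=Au_2+Bu_1+Cu_0$. Hence, up to the equivalence of conserved vectors, any conservation law~\eqref{EqGen2DimConsLaw} has a representative $(F,G)$ depending only on $t$, $x$ and finitely many of the~$u_k$, and the condition $D_tF+D_xG=0$ becomes an identity in $t,x,u_0,u_1,u_2,\dots$ treated as independent jet variables.

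Next I would perform a leading-order analysis of this identity. Let $n$ and $m$ be the orders of $F$ and $G$ in the $u_k$. The term $Au_{n+2}F_{u_n}$ genuinely occurs in $D_tF$ precisely because $A\ne0$, so $D_tF$ has order $n+2$ while $D_xG$ has order $m+1$; matching the two top orders forces $m=n+1$ together with the leading relation $G_{u_{n+1}}=-A\,F_{u_n}$. This is the step where the nondegeneracy $A\ne0$ is essential, and it already shows that the flux order exceeds the density order by exactly one.

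The main work, and the step I expect to be the obstacle, is the descent lowering the density order. Assuming $n\geqslant1$, I would first extract from the identity (for instance by applying the Euler operator $\Eop=\sum_k(-D_x)^k\p_{u_k}$ in $x$, whose kernel contains the total $x$-derivative $D_xG$, so that $\Eop(D_tF)=0$) that $F$ is affine in its top derivative $u_n$. One can then remove that $u_n$-dependence by modifying $(F,G)$ within its equivalence class: adding a total $x$-derivative $D_xH$ to $F$, with $H$ of order $n-1$ chosen so that the top term $u_nH_{u_{n-1}}$ cancels the affine $u_n$-part of $F$, and compensating $G$ by $-D_tH$ together with terms vanishing on solutions. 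The delicate point is the bookkeeping: one must check that this modification keeps $G$ of the expected order $n$ and that the newly generated lower-order terms remain compatible with the conservation-law identity, so that the reduction may be iterated. Iterating until $n=0$ yields a representative with $F=F(t,x,u)$.

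Finally, with $F=F(t,x,u)$, I would substitute back. Then $D_tF=F_t+(Au_2+Bu_1+Cu_0)F_u$ has order~$2$ with $u_2$-coefficient $AF_u$, so $D_xG=-D_tF$ has order~$2$ and its $u_2$-coefficient gives $G_{u_1}=-AF_u$, which is independent of $u_1$. Hence $G$ is at most first order and affine in $u_x$, namely $G=-AF_u\,u_x+\tilde G(t,x,u)$, so that the constructed representative has density depending only on $t,x,u$ and flux depending only on $t,x,u,u_x$, establishing the claim.
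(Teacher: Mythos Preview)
Your approach is essentially the paper's: reduce to dependence on $t,x$ and $x$-derivatives only, match leading orders to get $G_{u_{n+1}}=-AF_{u_n}$, show $F$ is affine in its top derivative, then lower the density order by replacing $(F,G)$ with $(F+D_xH,\,G-D_tH)$ for a suitable $H$ of order $n-1$, and iterate. The one difference is the affinity step: the paper obtains $F_{u_nu_n}=0$ more directly than your Euler-operator suggestion, simply by substituting $G=-AF_{u_n}u_{n+1}+\hat G$ back into the identity and reading off the coefficient of $u_{n+1}^{\,2}$; after that, the bookkeeping you flag as delicate works out exactly, with the $u_{n+1}$-contributions of $G$ and $-D_tH$ cancelling so that the new flux has order $n$.
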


\begin{proof}
Consider a conservation law~\eqref{EqGen2DimConsLaw} of an equation of form~\eqref{EqGenLPE}.
In view of equation~\eqref{EqGenLPE} and its differential consequences,
we can assume that $F$ and~$G$ depend only on $t$, $x$ and $u_k=\partial^k u/\partial x^k$, $k=\overline{0,r'},$
where $r'\leqslant 2r$. Suppose that $r'>1$.
We~expand the total derivatives in (\ref{EqGen2DimConsLaw}) and take into account differential consequences of the
form $u_{tk}=D_x^k(Au_{xx}+Bu_x+Cu)$, where $u_{tk}=\partial^{k+1} u/\partial t\partial x^k$, $k=\overline{0,r'}$.
As a result, we obtain the following condition
\begin{equation}\label{clcdeom}\textstyle
F_t+F_{u_k}D_x^k(Au_{xx}+Bu_x+Cu)+G_x+G_{u_k}u_{k+1}=0.
\end{equation}
Let us decompose~\eqref{clcdeom} with respect to the highest derivatives $u_k$.
Thus, the coefficients of $u_{r'+2}$ and $u_{r'+1}$ give the equations
$F_{u_{r'}}=0$, $G_{u_{r'}}+AF_{u_{r'-1}}=0$, which implies
\[
F=\hat F, \quad G=-A\hat F_{u_{r'-1}}u_{r'}+\hat G,
\]
where $\hat F$ and $\hat G$ are functions of $t$, $x$, $u$, $u_1$, \ldots, $u_{r'-1}$.
Then, after selecting the terms containing $u_{r'}^2$, we obtain that $-A\hat F_{u_{r'-1}u_{r'-1}}=0$.
It follows that $\hat F =\check F^1u_{r'-1}+\check F^0,$
where $\check F^1$ and $\check F^0$ depend only on $t$, $x$, $u$, $u_1$,~\ldots, $u_{r'-2}$.

Consider the conserved vector with density~$\tilde F=F-D_xH$ and flux~$\tilde G=G+D_tH$,
where $H=\int \check F^1du_{r'-2}$. This conserved vector is equivalent to the initial one, and
\[
\tilde F=\tilde F(t,x,u,u_1,\ldots,u_{r'-2}), \quad
\tilde G=\tilde G(t,x,u,u_1,\ldots,u_{r'-1}).
\]

Iterating the above procedure a suitable number of times, we obtain an equivalent conserved vector
depending only on $t$, $x$, $u$ and $u_x$, i.e., we can assume at once that $r'\leqslant 1$.
Then the coefficients of $u_{xxx}$ and $u_{xx}$ in~\eqref{clcdeom} lead to the equations
$F_{u_x}=0$, $G_{u_x}+AF_u=0$, implying $F=F(t,x,u)$ and, moreover, $G=-AF_uu_x+G^1$,
where $G^1=G^1(t,x,u)$.
\end{proof}

\begin{note}
A similar statement is true for an arbitrary (1+1)-dimensional evolution equation~$\cal L$ of even
order~$r=2\bar r$, $\bar r\in\mathbb{N}$. For example~\cite{Ibragimov1985}, for any conservation law of~$\cal L$
we can assume up to equivalence of conserved vectors
that $F$ and $G$ depend only on~$t$, $x$ and derivatives of~$u$ with respect to~$x$, and that
the maximal order of derivatives in~$F$ is less than~$\bar r$.

The proof of Lemma~\ref{LemmaOnOrderOfConsLawsOfLPEs} can easily be extended to other classes of 
(1+1)-dimensional evolution equations
of even order and some systems connected with evolution equations~\cite{Popovych&Ivanova2004ConsLawsLanl}.
\end{note}

\begin{theorem}\label{TheoremLocalCLsLPEs}
For an arbitrary equation of the form~\eqref{EqGenLPE} the space of all local conservation laws is generated 
by the conserved vectors
\begin{equation}\label{eqCVofLPEs}
\bigl(\alpha u,\, -\alpha Au_x+((\alpha A)_x-\alpha B)u\bigr),
\end{equation}
where the characteristic $\alpha=\alpha(t,x)$ runs through the solution set of the adjoint equation
\begin{equation}\label{EqAdjLPE}
\alpha_t+(A\alpha)_{xx}-(B\alpha)_x+C\alpha=0
\end{equation}
\end{theorem}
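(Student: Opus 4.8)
The plan is to combine the structural reduction already established in Lemma~\ref{LemmaOnOrderOfConsLawsOfLPEs} with the characteristic equation~\eqref{NCondOnChar} specialized to class~\eqref{EqGenLPE}. By the lemma, every local conservation law of~\eqref{EqGenLPE} has a representative conserved vector $(F,G)$ with $F=F(t,x,u)$ and $G=-AF_uu_x+G^1(t,x,u)$. First I would substitute this form into the conservation-law condition~\eqref{EqGen2DimConsLaw}, expand the total derivatives $D_t$ and $D_x$ using the equation $u_t=Au_{xx}+Bu_x+Cu$ to eliminate $u_t$, and then split the resulting identity with respect to the remaining jet variables $u_{xx}$, $u_x^2$, $u_x$ and $1$ (regarded as independent after the substitution). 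The coefficient of $u_{xx}$ forces $G^1$ to be affine in a controlled way, and the coefficient of $u_x^2$ forces $F_{uu}=0$, so that $F=\alpha(t,x)u+\beta(t,x)$ for some functions $\alpha,\beta$.

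The next step is to identify $\beta$ as inessential and $\alpha$ as the genuine characteristic. The summand $\beta(t,x)$ in the density corresponds to a conserved vector that does not involve $u$ multiplicatively; because equation~\eqref{EqGenLPE} is linear and homogeneous, such a piece either yields a trivial conservation law or can be absorbed, so up to equivalence of conserved vectors I may take $F=\alpha u$. With $F=\alpha u$ the form of the flux is fixed by the lemma as $G=-\alpha Au_x+G^1(t,x,u)$, and the remaining lower-order splitting equations (the coefficients of $u_x$ and of $u$) determine $G^1$ explicitly. The coefficient of $u_x$ yields $G^1_u$ and integrates to give the term $((\alpha A)_x-\alpha B)u$ appearing in~\eqref{eqCVofLPEs}, matching the claimed conserved vector.

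Finally, the leftover equation with no surviving jet variables is precisely the determining equation for~$\alpha$. I expect the coefficient of $u$ (after all higher-order coefficients have been accounted for) to collapse, after reorganizing the $x$- and $t$-derivatives, into $\alpha_t+(A\alpha)_{xx}-(B\alpha)_x+C\alpha=0$, which is the adjoint equation~\eqref{EqAdjLPE}; this is exactly the specialization of the necessary condition~\eqref{NCondOnChar}, since for a single linear equation $L^*$ acting on $\alpha$ reproduces the left-hand side of~\eqref{EqAdjLPE}. The isomorphism between the characteristic space and the solution set of the adjoint equation then follows from Theorem~\ref{TheoremIsomorphismChCV}: distinct solutions $\alpha$ give inequivalent characteristics, and every characteristic arises this way.

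The main obstacle is not any single splitting computation but rather the bookkeeping needed to show that no additional conservation laws escape the reduction — specifically, verifying carefully that the inhomogeneous piece $\beta$ contributes nothing new up to equivalence and that the iterated order-reduction of Lemma~\ref{LemmaOnOrderOfConsLawsOfLPEs} leaves no residual dependence on higher derivatives. One must also check that the map $\alpha\mapsto\eqref{eqCVofLPEs}$ is injective modulo trivial conserved vectors, i.e.\ that a nonzero solution of~\eqref{EqAdjLPE} cannot produce a trivial conservation law; this amounts to confirming that $\alpha u$ is not a total $x$-derivative plus something vanishing on solutions, which follows because $\alpha u$ depends nontrivially on $u$ while the equation is evolutionary of second order.
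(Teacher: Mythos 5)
Your proposal is correct and takes essentially the same route as the paper's proof: reduce to first order via Lemma~\ref{LemmaOnOrderOfConsLawsOfLPEs}, expand the total derivatives on the solution manifold, split with respect to $u_{xx}$, $u_x^{\,2}$, $u_x$ and $1$, and read off the adjoint equation~\eqref{EqAdjLPE} as the coefficient of~$u$. Your explicit disposal of the inhomogeneous summand $\beta(t,x)$ as a null divergence is precisely the step the paper subsumes under ``solving the obtained system up to the usual equivalence relation of conserved vectors,'' so no genuine difference in method remains.
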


\begin{proof}
In view of Lemma~\ref{LemmaOnOrderOfConsLawsOfLPEs} we look for (local) conservation laws 
of equations from class~\eqref{EqGenLPE} in the form $D_tF(t,x,u)+D_xG(t,x,u,u_x)=0$.\label{CLsEvolEq}
First, we expand the total differentiation operators in the latter equality 
on the solution manifold of~\eqref{EqGenLPE}:
\[
F_t+F_u(Au_{xx}+Bu_x+Cu)+G_x+G_uu_x+G_{u_x}u_{xx}=0,
\]
and split the obtained expression with respect to the unconstrained variable $u_{xx}$.
Coefficients of the first power of $u_{xx}$ give
$
G=-AF_uu_x+G^1(t,x,u).
$
Splitting the rest of the conservation law with respect to different powers of $u_x$ yields
\[
F_{uu}=0,\quad
F_uB-A_xF_u-AF_{xu}+G^1_u=0,\quad
F_t+CuF_u+G^1_x=0.
\]
Solving the obtained system up to the usual equivalence relation of conserved vectors,
we obtain the conserved densities and fluxes of the local conservation laws of~\eqref{EqGenLPE}:
\[
F=\alpha u,\qquad G=-\alpha Au_x+((\alpha A)_x-\alpha B)u,
\]
where $\alpha=\alpha(t,x)$ is an arbitrary solution of equation~\eqref{EqAdjLPE}.
\end{proof}

\begin{note}
It is well-known that the space of characteristics of the local conservation laws of 
a linear differential equation~$Lu=0$ contains 
all functions of the independent variables which are solutions of the adjoint equation~$L^*\alpha=0$.
In view of Theorem~\ref{TheoremLocalCLsLPEs} any local conservation law of 
a $(1+1)$-dimensional linear second-order parabolic equation has such characteristics. 
Moreover, two different solutions of the adjoint equation give inequivalent characteristics 
corresponding to inequivalent conservation laws. 
Therefore, for any equation~$\mathcal L$ from class~\eqref{EqGenLPE} 
we can identify the solution space of the adjoint equation~\eqref{EqAdjLPE} with the space $\Ch_{\rm f}(\mathcal L)$ 
of `nontrivial' characteristics.
\end{note}

\begin{note}\label{NoteOnThirdOrderLPEs}
The above structure of the space $\Ch_{\rm f}(\mathcal L)$ is directly related to the restriction on 
the order of conservation laws for evolution equations of even orders. 
Third-order $(1+1)$-dimensional linear parabolic equations admit conservation laws with different characteristics. 
For example, for the equation $u_t=u_{xxx}$ the space of conservation laws of orders not greater than 3 
is generated by the conservation laws with the conserved vectors
$(\alpha u, \alpha_xu_x-\alpha_{xx}u)$, $(u^2,u_x{}^2-2uu_{xx})$ and $(u_x{}^2,u_{xx}{\!}^2-2u_xu_{xxx})$
and the characteristics $\alpha$, $2u$ and $2u_x$, respectively. 
Here $\alpha=\alpha(t,x)$ runs through the solution set of the same equation $\alpha_t=\alpha_{xxx}$.
The conservation laws parameterized by $\alpha$ have the structure usual for linear equations 
but the two other conservation laws do not. 
Moreover, acting by the symmetry operators $\p_x+Cu\p_u$ and $3t\p_t+x\p_x+Cu\p_u$ on solutions and 
conservation laws of the equation $u_t=u_{xxx}$, we construct infinite series of conservation laws
for it which are of arbitrarily large orders with conserved vectors quadratic in the derivatives of~$u$.
\end{note}

Let us emphasize once more that 
the function $\alpha=\alpha(t,x)$ is a characteristic of the conservation law of~\eqref{EqGenLPE} 
with the conserved vector~\eqref{eqCVofLPEs} and, therefore, an adjoint symmetry of equation~\eqref{EqGenLPE}.
Equation~\eqref{EqAdjLPE} is the formally adjoint equation to~\eqref{EqGenLPE}. 
Any solution of~\eqref{EqAdjLPE} is an adjoint symmetry of~\eqref{EqGenLPE}, 
and any adjoint symmetry of~\eqref{EqGenLPE} depends only on~$t$ and~$x$ and is a solution of~\eqref{EqAdjLPE}.
Conservation laws of~\eqref{EqGenLPE} are linearly independent iff 
the corresponding solutions of~\eqref{EqAdjLPE} are linearly independent.

\begin{corollary}
There is a one-to-one correspondence between local conservation laws and zero-order adjoint symmetries of equation~\eqref{EqGenLPE}.
\end{corollary}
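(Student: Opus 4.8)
The plan is to deduce the bijection by combining the abstract isomorphism between conservation laws and characteristics with the explicit description of both sides already obtained for class~\eqref{EqGenLPE}. First I would unfold the meaning of a \emph{zero-order adjoint symmetry}. By the cosymmetry condition~\eqref{NCondOnChar}, an adjoint symmetry is a solution $\lambda$ of $\Fder_L^*(\lambda)\bigl|_{\mathcal L}=0$, and restricting to order zero means $\lambda$ is a priori permitted to depend only on $t$, $x$ and $u$. Since $Lu=0$ is linear, I would argue (as already recorded in the note following Theorem~\ref{TheoremLocalCLsLPEs}) that such a $\lambda$ cannot genuinely depend on $u$ and reduces to a function $\alpha(t,x)$, for which~\eqref{NCondOnChar} is exactly the adjoint equation~\eqref{EqAdjLPE}. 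Thus the set of zero-order adjoint symmetries coincides with the solution set of~\eqref{EqAdjLPE}.

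Next I would set up the correspondence itself. By Theorem~\ref{TheoremIsomorphismChCV}, applicable because equations from class~\eqref{EqGenLPE} are normal and totally nondegenerate, the characteristic form~\eqref{CharFormOfConsLaw} yields a linear isomorphism between $\CL(\mathcal L)$ and $\Ch_{\rm f}(\mathcal L)$. It then remains to identify $\Ch_{\rm f}(\mathcal L)$ with the solution set of~\eqref{EqAdjLPE}. Here I would invoke Theorem~\ref{TheoremLocalCLsLPEs}: every local conservation law is represented by a conserved vector of the form~\eqref{eqCVofLPEs} with $\alpha$ solving~\eqref{EqAdjLPE}. A short direct computation of $D_tF+D_xG$ for the vector~\eqref{eqCVofLPEs}, using~\eqref{EqAdjLPE} to eliminate the coefficient of $u$, shows that this divergence equals $\alpha\,Lu$; comparison with~\eqref{CharFormOfConsLaw} identifies the associated characteristic as precisely $\alpha$. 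Hence the map sending each solution $\alpha$ of~\eqref{EqAdjLPE} to the conserved vector~\eqref{eqCVofLPEs} lands in a generating set of $\CL(\mathcal L)$ and has characteristic equal to $\alpha$.

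To finish, I would verify bijectivity. Surjectivity is immediate from Theorem~\ref{TheoremLocalCLsLPEs}, since those conserved vectors generate $\CL(\mathcal L)$. For injectivity I would note that a characteristic depending only on $t$ and $x$ is trivial if and only if it vanishes identically, so distinct solutions $\alpha$ of~\eqref{EqAdjLPE} give inequivalent characteristics and, through the isomorphism of Theorem~\ref{TheoremIsomorphismChCV}, distinct conservation laws; linearity of all maps involved is clear. Composing these identifications produces the desired linear bijection between $\CL(\mathcal L)$ and the space of zero-order adjoint symmetries.

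The main obstacle is the content hidden in the reduction to order zero: a priori a characteristic, or an adjoint symmetry, may depend on derivatives of $u$, and it is a genuine feature of even-order $(1+1)$-dimensional evolution equations, which \emph{fails} in the third-order setting illustrated in Note~\ref{NoteOnThirdOrderLPEs}, that every characteristic of~\eqref{EqGenLPE} is equivalent to one independent of $u$ and its derivatives. This is exactly what Lemma~\ref{LemmaOnOrderOfConsLawsOfLPEs} and Theorem~\ref{TheoremLocalCLsLPEs} secure, so once these are in hand the corollary is essentially bookkeeping; the delicate point I would be most careful about is the uniqueness of the zero-order representative within each characteristic class, which underlies injectivity.
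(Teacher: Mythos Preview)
Your proposal is correct and matches the paper's approach: the corollary is not given a separate proof in the paper but is recorded as an immediate consequence of Theorem~\ref{TheoremLocalCLsLPEs} together with the observation (in the paragraph preceding the corollary) that zero-order adjoint symmetries of~\eqref{EqGenLPE} are exactly the solutions of~\eqref{EqAdjLPE}, and that distinct solutions give inequivalent characteristics. Your write-up simply unpacks this line of reasoning in more detail, invoking Theorem~\ref{TheoremIsomorphismChCV} explicitly for the passage between $\CL(\mathcal L)$ and $\Ch_{\rm f}(\mathcal L)$.
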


Proposition~\ref{PropositionOnTransOfCLs}, Corollary~\ref{CorollaryOnSemiNormOfLPEs} and Theorem~\ref{TheoremLocalCLsLPEs} 
imply the following statement.

\begin{proposition}\label{PropositionOnTransOfCharsOfCLsForLPEs}
Any point transformation between equations from class~\eqref{EqGenLPE} 
is canonically prolonged to the characteristics of conservation laws of equations from this class by 
the formula
\[
\tilde\alpha=\frac{\kappa}{X_xU^1}\alpha.
\]
(The constant~$\kappa$ arises due to the linearity of the characteristic space. 
It is inessential and can be set equal to~1.)
\end{proposition}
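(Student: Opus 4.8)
The plan is to combine the explicit form of admissible transformations in class~\eqref{EqGenLPE} with the transformation law for conserved vectors of Proposition~\ref{PropositionOnTransOfCLs} and the canonical form of conservation laws provided by Theorem~\ref{TheoremLocalCLsLPEs}. The existence of a well-defined induced mapping on characteristics is already guaranteed by Corollary~\ref{CorollaryOnInducedMappingOfChar}; what remains is to compute this mapping explicitly and read off the scalar factor.

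First I would fix the data. By Corollary~\ref{CorollaryOnAdmTransOfLPEs}, any point transformation~$\mathcal T$ between equations from class~\eqref{EqGenLPE} has the form~\eqref{EqGenFormOfTransOfInhomLPEs}, namely $\tilde t=T(t)$, $\tilde x=X(t,x)$, $\tilde u=U^1u+U^0$, where in addition $U^0/U^1$ is a solution of the initial equation. By Theorem~\ref{TheoremLocalCLsLPEs} a characteristic~$\alpha$ is a solution of the adjoint equation~\eqref{EqAdjLPE}, realized by the canonical conserved vector~\eqref{eqCVofLPEs} whose density is $F=\alpha u$.

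Next I would apply formula~\eqref{eq.tr.var.cons.law} to this conserved vector. Since $\tilde t$ depends on~$t$ alone, the Jacobian matrix $D_x\tilde x$ is lower triangular, $\begin{pmatrix}T_t&0\\X_t&X_x\end{pmatrix}$, with determinant $T_tX_x$, so the transformed density is simply $\tilde F=\frac1{X_x}\alpha u$. Substituting $u=(\tilde u-U^0)/U^1$ to pass to the new dependent variable yields $\tilde F=\frac{\alpha}{X_xU^1}\tilde u-\frac{\alpha U^0}{X_xU^1}$. The coefficient of~$\tilde u$ is exactly the claimed value $\tilde\alpha=\alpha/(X_xU^1)$, up to the inessential constant~$\kappa$, which enters because, as noted in the proof of Proposition~\ref{PropositionOnTransOfCLs}, the transformation of conserved vectors is determined only up to a nonzero constant multiplier and the characteristic space is linear.

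The one step requiring care — the hard part — is to show that the remaining summand $-\alpha U^0/(X_xU^1)$, which depends on $\tilde t,\tilde x$ alone, contributes only a trivial conserved vector, so that $\tilde F$ is genuinely equivalent to the canonical density $\tilde\alpha\tilde u$. I would argue this by linearity: the full transformed vector is a conservation law of the transformed equation, its part linear in~$\tilde u$ and~$\tilde u_{\tilde x}$ is the canonical conservation law with characteristic~$\tilde\alpha$, hence the difference is again a conservation law of the transformed equation; but this difference depends on $\tilde t,\tilde x$ only, so its divergence is a function of $\tilde t,\tilde x$ that vanishes on the totally nondegenerate solution manifold and therefore vanishes identically, i.e.\ the summand is a null divergence. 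This is precisely where the constraint that $U^0/U^1$ solve the initial equation is used: it is what makes~$\mathcal T$ preserve homogeneity and thus keeps the canonical form intact. Reading off the coefficient of~$\tilde u$ then completes the argument; alternatively, the same factor $1/(X_xU^1)$ can be recovered from the explicit inverse formula of Proposition~\ref{PropositionOnInducedMappingOfChars} by identifying the scalar multiplier relating the transformed and the initial differential operators.
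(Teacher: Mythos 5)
Your proof is correct and takes essentially the same route as the paper, which states this proposition as an immediate consequence of Proposition~\ref{PropositionOnTransOfCLs}, Corollary~\ref{CorollaryOnSemiNormOfLPEs} and Theorem~\ref{TheoremLocalCLsLPEs} --- exactly the ingredients you combine (computing the transformed density $\tilde F=\alpha u/X_x$ from~\eqref{eq.tr.var.cons.law} with the triangular Jacobian and reading off the coefficient of~$\tilde u$). Your explicit treatment of the leftover summand $-\alpha U^0/(X_xU^1)$ via the linearity splitting, showing it is a null divergence so that only the part linear in~$\tilde u$ survives up to triviality, supplies the one detail the paper leaves implicit, and it is sound.
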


\begin{note}\label{NoteOnSufficiencyOfRedFormForInvestigationOfCL}
Any equation~\eqref{EqGenLPE} is reduced by an equivalence transformation to an 
equation of the form~\eqref{EqReducedLPE}. 
In view of Proposition~\ref{PropositionOnInducedMappingOfCL} 
this allows us to restrict ourselves to an investigation of conservation laws for the 
simpler reduced form~\eqref{EqReducedLPE} of parabolic equations. 
The space of local conservation laws of an equation~\eqref{EqReducedLPE} 
is generated by the conserved vectors $(\alpha u,\, -\alpha u_x+\alpha_xu)$,
where the characteristic $\alpha=\alpha(t,x)$ runs through the solution set of the associated adjoint equation
$\alpha_t+\alpha_{xx}-V\alpha=0$.
At the same time, it is by no means evident that for a further study of potential conservation laws and
potential symmetries it suffices to consider only the reduced form. This point deserves additional investigation. 
\end{note}

The equivalence relation generated by the equivalence group on the set of pairs 
`(equation from the class, its conservation law)' can be used for the normalization of parameters in 
different ways. 
We can maximally simplify the form of the equations under consideration as 
in Note~\ref{NoteOnSufficiencyOfRedFormForInvestigationOfCL}. 
Another way is to simplify equations and their conserved vectors or characteristics simultaneously. 
For example, let us act on equation~\eqref{EqGenLPE} and its characteristic by 
the equivalence transformation from~$G^\sim$ with 
$T_t=\sign A$, $X_x=|A|^{-1/2}$, $U^1=\alpha|A|^{1/2}$. 
Then $\tilde A=1$, $\tilde\alpha=1$ and, moreover, $\tilde C=\tilde B_{\tilde x}$ 
since $\tilde\alpha$ should be a solution of the transformed adjoint equation 
$\tilde\alpha_{\tilde t}+(\tilde A\tilde\alpha)_{\tilde x\tilde x}-(\tilde B\tilde\alpha)_{\tilde x}
+\tilde C\tilde\alpha=0$. As a result, the following proposition holds:

\begin{proposition}
Any pair $(\mathcal L,\mathcal F)$, where $\mathcal L$ is an equation from class~\eqref{EqGenLPE} and 
$\mathcal F\in\CL(\mathcal L)$, is reduced by a point transformation from the equivalence group of 
class~\eqref{EqGenLPE} to a pair $(\tilde{\mathcal L},\tilde{\mathcal F})$, 
where $\tilde{\mathcal L}$ is a Fokker--Planck equation $u_t=u_{xx}+(\tilde Bu)_x$ and 
$\tilde{\mathcal F}$ is its conservation law with the characteristic $\tilde\alpha=1$, 
i.e., $(u,-u_x-\tilde Bu)\in\tilde{\mathcal F}$. 
\end{proposition}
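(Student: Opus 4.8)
The plan is to start from an arbitrary pair $(\mathcal L,\mathcal F)$ with $\mathcal L$ from class~\eqref{EqGenLPE} and $\mathcal F\in\CL(\mathcal L)$, and to produce an explicit equivalence transformation realizing the claimed normal form. By Theorem~\ref{TheoremLocalCLsLPEs}, every conservation law $\mathcal F$ has a characteristic $\alpha=\alpha(t,x)$ solving the adjoint equation~\eqref{EqAdjLPE}, with conserved vector~\eqref{eqCVofLPEs}; since $\alpha\ne0$ (a zero characteristic gives the trivial conservation law), I may divide by $\alpha$ where needed. The strategy is identical to the computation carried out just above the statement: choose a transformation $\mathcal T\in G^\sim$ of the form~\eqref{EqGenFormOfTransOfInhomLPEs} (with $U^0=0$, as dictated by Corollary~\ref{CorollaryOnSemiNormOfLPEs}) whose three free functions $T$, $X$, $U^1$ are selected so as to simultaneously fix $\tilde A=1$ and $\tilde\alpha=1$, and then to read off the resulting constraint on the transformed coefficients.

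First I would set $T_t=\sign A$ and $X_x=|A|^{-1/2}$; by the first formula in~\eqref{EqTransOfCoeffsOfLPE} this gives $\tilde A=X_x^2A/T_t=1$. Next, by Proposition~\ref{PropositionOnTransOfCharsOfCLsForLPEs} the characteristic transforms as $\tilde\alpha=\kappa\alpha/(X_xU^1)$; choosing $U^1=\alpha|A|^{1/2}$ (so that $X_xU^1=\alpha$) forces $\tilde\alpha=1$ after normalizing the inessential constant $\kappa$ to $1$. One must check $U^1\ne0$, which holds because $\alpha\ne0$ and $A\ne0$; and $X_x\ne0$, $T_t\ne0$ likewise hold, so $\mathcal T$ is a genuine element of $G^\sim$. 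Under this $\mathcal T$ the pair $(\mathcal L,\mathcal F)$ is carried to $(\tilde{\mathcal L},\tilde{\mathcal F})$ with $\tilde A=1$ and $\tilde\alpha=1$.

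The key remaining step is to extract the Fokker--Planck condition $\tilde C=\tilde B_{\tilde x}$. For this I invoke the fact that conservation laws are mapped to conservation laws (Proposition~\ref{PropositionOnInducedMappingOfCL}), so $\tilde\alpha=1$ must itself be a solution of the transformed adjoint equation $\tilde\alpha_{\tilde t}+(\tilde A\tilde\alpha)_{\tilde x\tilde x}-(\tilde B\tilde\alpha)_{\tilde x}+\tilde C\tilde\alpha=0$. Substituting $\tilde\alpha=1$ and $\tilde A=1$, the first two terms vanish and the third becomes $-\tilde B_{\tilde x}$, leaving exactly $-\tilde B_{\tilde x}+\tilde C=0$, i.e.\ $\tilde C=\tilde B_{\tilde x}$. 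This is precisely the Fokker--Planck relation $A_{xx}-B_x+C=0$ specialized to $\tilde A=1$, so the transformed equation reads $u_t=u_{xx}+\tilde B u_x+\tilde B_{\tilde x}u=u_{xx}+(\tilde Bu)_x$. Finally, substituting $\tilde A=1$ and $\tilde\alpha=1$ into the conserved vector~\eqref{eqCVofLPEs} for $\tilde{\mathcal L}$ yields $(u,\,-u_{\tilde x}-\tilde Bu)\in\tilde{\mathcal F}$, completing the claim.

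The only point requiring genuine care — rather than routine substitution — is the consistency of the two normalizations: fixing $\tilde A=1$ constrains $T$ and $X$ (their $t$- and $x$-derivatives), while fixing $\tilde\alpha=1$ constrains $U^1$, and one must confirm these do not conflict. They do not, because $\tilde A$ depends only on the combination $X_x^2A/T_t$ whereas $\tilde\alpha$ is governed by the independent factor $U^1$; the three equivalence-group functions decouple exactly along the lines needed. Thus the main (mild) obstacle is bookkeeping the interplay of~\eqref{EqTransOfCoeffsOfLPE} and the characteristic-transformation rule, not any deeper difficulty.
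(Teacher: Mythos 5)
Your proposal is correct and follows essentially the same route as the paper: the paper's own justification is precisely the choice $T_t=\sign A$, $X_x=|A|^{-1/2}$, $U^1=\alpha|A|^{1/2}$, giving $\tilde A=1$, $\tilde\alpha=1$, and then deducing $\tilde C=\tilde B_{\tilde x}$ from the requirement that $\tilde\alpha=1$ solve the transformed adjoint equation. Your additional remarks on the nonvanishing of the transformation data and the decoupling of the normalizations are sound bookkeeping that the paper leaves implicit.
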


\section{The adjoint variational principle}\label{SectionOnAdjointVariationalPrincipleForLPEs}

The adjoint variational principle is well-known especially in physics. 
In the nonlinear case it is also called the \emph{composite variational principle}~\cite{Atherton&Homsy1975}.
It provides a way of constructing generalized Lagrangians for systems of 
differential or other equations which have no usual Lagrangians, e.g., for evolution equations. 
Thus, the adjoint variational principle for the linear heat equation was considered in the 
classical textbook by Morse and Feshbach \cite[p.~313]{Morse&Feshbach1953}. 
It was discussed in~\cite{Roussopoulos1953} for general linear operator equations. 
A comparison of the adjoint variational principle with other approaches to deriving generalized Lagrangians 
was presented in \cite[pp.~341--342]{Tonti1973} along with a review of the literature on its application.  
A nonlinear version was proposed in~\cite{Finlayson1972} and used ibid.\ for the Navier--Stokes equations. 
The composite variational principle in its general formulation was given in~\cite{Atherton&Homsy1975}. 
See also~\cite[Exercise~5.27]{Olver1986} and the discussion in section~2 of~\cite{Caviglia1986}.

Briefly the composite variational principle can be described in the following way. 
Let~$\mathcal L$ be a system~$L(x,u_{(\rho)})=0$ of $l$ differential equations $L^1=0$, \ldots, $L^l=0$
for $m$ unknown functions $u=(u^1,\ldots,u^m)$
of $n$ independent variables $x=(x_1,\ldots,x_n).$ (Here we employ the notations of Section~\ref{SectionOnTheorBackgroundOnCLs}. 
Thus, the index~$\mu$ runs from~1 to~$l$.)
We introduce $l$ auxiliary dependent variables $v=(v^1,\dots,v^l)$ and construct the Lagrangian 
\[\mathfrak L=v\cdot L=v^\mu L^\mu(x,u_{(\rho)}).\] 
The Euler--Lagrange equations for the corresponding functional are
\[
L=0,\quad \Fder_L^*(v)=0, 
\]
where the matrix differential operator ${\Fder}_L^*$ is the adjoint of
the Fr\'echet derivative~${\Fder}_L^{\phantom{*}}$ (see subsection~\ref{SectionOnCharacteristicsOfConsLaws}).
The system $\Fder_L^*(v)=0$ will be called variationally adjoint to the system~$\mathcal L$. 
Only in the linear case the variationally adjoint equations coincide with the usual adjoint equations~$L^*v=0$~\cite{Atherton&Homsy1975}.
Therefore, simultaneously extending the tuple of dependent variables with $l$ auxiliary variables and 
the system with the variationally adjoint equations always results in the system formed by the Euler--Lagrange equations 
of a special functional. 
In physical terms this means that a dissipative system possessing a usual friction is considered simultaneously with 
its `mirror reflection' possessing a negative friction and absorbing the energy lost by the initial system 
\cite[Section~3.2]{Morse&Feshbach1953}.
As a result, the total energy of the extended systems is conserved. 
The trick is quite artificial from the physical point of view but nevertheless it has a number of applications. 
It allows one to operate with dissipative systems as if they were conservative. 

Interest in applying the adjoint variational principle in the framework of symmetry analysis has arisen 
quite recently~\cite{Ibragimov2006,Ibragimov&Kolsrud2004,Ivanova2006,Ivanova200x}.
Thus, a nice result on prolongation of symmetries of initial systems to the auxiliary variables of the adjoint variational principle 
was proved in~\cite{Ibragimov2006}. The prolongation is carried out in such a way that 
the prolonged operators are variational symmetries of extended Lagrangians and, therefore, Lie invariance operators 
of extended systems. At the same time, this result can be obtained for linear equations without the usage of the Lagrangian technique. 
It can be strengthened for special classes of differential equations including class~\eqref{EqGenLPE}.

Let $\mathcal L$ be a homogeneous linear differential equation for one unknown function~$u$, i.e., 
$m=1$ and let $\mathcal L$ be presented in the form $Lu=0$. Here $L$ is the associated linear differential operator:
$L=A^\nu(x)\p^\nu$, where 
$\nu=(\nu_1,\dots,\nu_n)$ is a multiindex, $\nu_i\in\mathbb{N}\cup\{0\}$, \mbox{$|\nu|:=\nu_1+\cdots+\nu_n\leqslant\rho$},
$\p:=(\p_{x_1},\dots,\p_{x_n})$, $\p^\nu:=\p_{x_1}^{\nu_1}\dots\p_{x_n}^{\nu_n}$, 
$A^\nu=A^\nu(x)$ are smooth functions of~$x$,
and summation over the multiindex $\nu$ is understood.
The adjoint $K^*$ to a linear differential operator~$K$ (over the real numbers) can be defined by means of the following formal rules. 
The adjoint to a linear combination of operators is the linear combination of the adjoint operators with the same coefficients. 
$(K_1K_2)^*=K_2^*K_1^*$, $(\p_{x_i})^*=-\p_{x_i}$ and 
the adjoint operator to the multiplication operator by a fixed function is the multiplication operator by the same function.
Thus, \mbox{$L^*=(-\p)^\nu A^\nu$}. 
The equation~$\mathcal L^*$: $L^*v=0$ is adjoint to $\mathcal L$ in both the usual linear and the variational sense. 
(According to the adjoint variational principle, only one auxiliary variable~$v$ should be introduced for the equation~$\mathcal L$.)
Since $(L^*)^*=L$,
the adjoint equation to~$\mathcal L^*$ also can be assumed to coincide with the initial equation, i.e., 
$(\mathcal L^*)^*=\mathcal L$.
The united system $\mathcal L\cap\mathcal L^*$: $Lu=0$, $L^*v=0$ is the system of Euler--Lagrange equations 
of the Lagrangian $vLu$ as well as of the equivalent Lagrangian $uL^*v$ and any linear combination of these Lagrangians. 
The Lagrangians $vLu$ and $uL^*v$ are equivalent since $vLu-uL^*v$ is a total divergence, 
i.e., $vLu-uL^*v=\mathop{\rm Div} F$ for some $n$-tuple~$F$ of differential functions bilinear in $u$, $v$ 
and their derivatives. 
This formula also implies that $v\in\Ch(\mathcal L)$ if $L^*v=0$. 
The tuple $F$ for a fixed solution~$v$ of the equation $L^*v=0$ gives a conserved vector 
of the conservation law corresponding to this characteristic. 

Suppose that $Q=\xi^i(x)\p_{x_i}+\eta^1(x)u\p_u$ is an essential Lie symmetry operator of~$\mathcal L$ and 
$\widehat Q=-\xi^i(x)\p_{x_i}+\eta^1(x)$ is the associated first-order differential operator acting on functions of~$x$. 
Employing the Hadamard lemma, we write the infinitesimal criterion $Q_{(\rho)}Lu|_{Lu=0}=0$ in the form 
$Q_{(\rho)}Lu=\lambda Lu$~\cite{Olver1986}. Here $Q_{(\rho)}$ is the standard $\rho$-th order prolongation of~$Q$ 
and in the case under consideration $\lambda$ is a smooth function only of~$x$.
The infinitesimal criterion is equivalent to the operator equality $[\widehat Q,L]=\widehat\lambda L$, 
where $\widehat\lambda=\eta^1-\lambda$ and $[\widehat Q,L]=\widehat QL-L\widehat Q$ is the commutator of 
the operators~$\widehat Q$ and~$L$. 

\begin{proposition}\label{PropositionOnAdjointVariationalPrincipleForEssentialSymsOfLinEqs}
$Q=\xi^i\p_{x_i}+\eta^1u\p_u\in\mathfrak g^{\rm ess}(\mathcal L)$ iff 
$\,Q^\dag=\xi^i\p_{x_i}+\theta^1v\p_v\in\mathfrak g^{\rm ess}(\mathcal L^*)$, 
where \[\eta^1+\theta^1=\widehat\lambda-\xi^i_{x_i}.\] 
Moreover, $\bar Q:=\xi^i\p_{x_i}+\eta^1u\p_u+\theta^1v\p_v\in\mathfrak g^{\rm ess}(\mathcal L\cap\mathcal L^*)$ 
and $\bar Q$ is a variational symmetry operator of the Lagrangian~$vLu$.
\end{proposition}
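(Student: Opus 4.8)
The plan is to reduce everything to the operator identity $[\widehat Q,L]=\widehat\lambda L$ (already established above as the reformulation of the infinitesimal invariance criterion) and to extract the whole proposition by taking formal adjoints. The decisive feature making this work is that $\widehat\lambda$ and the coefficients $\xi^i,\eta^1$ are functions of $x$ alone, so that every operator in sight is an honest linear differential operator in $x$ and the formal adjoint rules listed in the text apply verbatim; in particular $\widehat\lambda$, being a multiplication operator, is self-adjoint.

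First I would compute, using $(\p_{x_i})^*=-\p_{x_i}$, $(K_1K_2)^*=K_2^*K_1^*$ and the rule for multiplication operators, the adjoint of $\widehat Q=-\xi^i\p_{x_i}+\eta^1$, obtaining $\widehat Q^{\,*}=\xi^i\p_{x_i}+\eta^1+\xi^i_{x_i}$. Taking the adjoint of both sides of $[\widehat Q,L]=\widehat\lambda L$ then gives $L^*\widehat Q^{\,*}-\widehat Q^{\,*}L^*=L^*\widehat\lambda$, i.e.\ $[\widehat Q^{\,*},L^*]=-L^*\widehat\lambda$. I would next define the operator associated with $Q^\dag$ by $\widehat{Q^\dag}:=\widehat\lambda-\widehat Q^{\,*}$; substituting the expression for $\widehat Q^{\,*}$ shows $\widehat{Q^\dag}=-\xi^i\p_{x_i}+(\widehat\lambda-\eta^1-\xi^i_{x_i})$, which is exactly the operator of $Q^\dag=\xi^i\p_{x_i}+\theta^1v\p_v$ precisely under $\theta^1=\widehat\lambda-\eta^1-\xi^i_{x_i}$, that is $\eta^1+\theta^1=\widehat\lambda-\xi^i_{x_i}$ as claimed. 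A one-line commutator computation $[\widehat{Q^\dag},L^*]=[\widehat\lambda,L^*]-[\widehat Q^{\,*},L^*]=(\widehat\lambda L^*-L^*\widehat\lambda)+L^*\widehat\lambda=\widehat\lambda L^*$ reproduces the invariance criterion for $\mathcal L^*$ with the very same multiplier $\widehat\lambda$, whence $Q^\dag\in\mathfrak g^{\rm ess}(\mathcal L^*)$. The converse is automatic once one observes that $Q\mapsto Q^\dag$ is involutive: since $(L^*)^*=L$, $\widehat\lambda^*=\widehat\lambda$ and $(\widehat Q^{\,*})^*=\widehat Q$, applying the same recipe to $Q^\dag$ (whose multiplier is again $\widehat\lambda$) yields $\widehat{(Q^\dag)^\dag}=\widehat\lambda-(\widehat\lambda-\widehat Q^{\,*})^*=\widehat Q$, so $(Q^\dag)^\dag=Q$ and the statement is a genuine `iff'.

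For the `moreover' part, note first that $\bar Q$ acts on $u$ exactly as $Q$ and on $v$ exactly as $Q^\dag$, sharing the common spatial part $\xi^i\p_{x_i}$; since $\mathcal L\cap\mathcal L^*$ decouples into $Lu=0$ and $L^*v=0$, the two halves just proved give $\bar Q\in\mathfrak g^{\rm ess}(\mathcal L\cap\mathcal L^*)$. To see that $\bar Q$ is a variational symmetry of $\mathfrak L=vLu$, I would pass to its evolutionary representative, with characteristics $\widehat Q u$ for $u$ and $\widehat{Q^\dag}v$ for $v$, and compute $\mathrm{pr}\,\bar Q_{\rm ev}(vLu)=(\widehat{Q^\dag}v)(Lu)+v\,L(\widehat Q u)$. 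The key manipulation is to rewrite $L(\widehat Q u)=(\widehat Q-\widehat\lambda)(Lu)$ from the commutation relation and to replace $\widehat{Q^\dag}v=(\widehat\lambda-\widehat Q^{\,*})v$; the two $\widehat\lambda\,vLu$ contributions cancel and what remains, $v\,\widehat Q(Lu)-(\widehat Q^{\,*}v)(Lu)$, is by the Green identity for the first-order operator $\widehat Q$ the total divergence $-D_i(\xi^i vLu)$. Combining this with Olver's identity $\mathrm{pr}\,\bar Q(\mathfrak L)+\mathfrak L\,D_i\xi^i=\mathrm{pr}\,\bar Q_{\rm ev}(\mathfrak L)+D_i(\xi^i\mathfrak L)$ gives $\mathrm{pr}\,\bar Q(vLu)+(vLu)\,D_i\xi^i=0$, so $\bar Q$ is in fact a strict variational symmetry of $vLu$ (with vanishing divergence term), which re-confirms via Noether that $\bar Q$ is a Lie symmetry of the Euler--Lagrange system $\mathcal L\cap\mathcal L^*$.

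I expect the only real obstacle to be the careful bookkeeping of signs and lower-order terms in the adjoints --- in particular deriving $\widehat Q^{\,*}=\xi^i\p_{x_i}+\eta^1+\xi^i_{x_i}$ correctly and verifying that the variational computation collapses to the single clean divergence $-D_i(\xi^i vLu)$ rather than leaving an uncancelled remainder. Everything else is formal once the operator identity $[\widehat Q,L]=\widehat\lambda L$ and the self-adjointness of the multiplier $\widehat\lambda$ are in hand.
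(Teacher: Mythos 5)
Your proof is correct and takes essentially the same route as the paper's: conjugating the operator identity $[\widehat Q,L]=\widehat\lambda L$, recognizing $\widehat\lambda-\widehat Q^{\,*}$ (self-adjointness of the multiplier included) as the operator associated with $Q^\dag$, obtaining the converse from involutivity via $(L^*)^*=L$, and deducing $\bar Q\in\mathfrak g^{\rm ess}(\mathcal L\cap\mathcal L^*)$ from the uncoupled structure of the joint system. The only difference is in the final variational-symmetry check, where the paper gets by with the one-line computation $\bar Q_{(\rho)}(vLu)+\xi^i_{x_i}(vLu)=(vLu)(\theta^1+\lambda+\xi^i_{x_i})=0$ using $Q_{(\rho)}(Lu)=\lambda Lu$ directly, whereas you reach the same conclusion through a longer (but equally valid) evolutionary-form and Green-identity argument.
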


\begin{proof}
Conjugating the operator equality $[\widehat Q,L]=\widehat\lambda L$, we obtain 
$[-\widehat Q^*+\widehat\lambda,L^*]=\widehat\lambda L^*$. 
At~the same time, $-\widehat Q^*+\widehat\lambda=-\xi^i\p_{x_i}+\theta^1$ is the first-order differential operator 
acting on functions of~$x$ which is associated with $Q^\dag$. Therefore, $Q^\dag\in\mathfrak g^{\rm ess}(\mathcal L^*)$.
Since $(L^*)^*=L$, the converse statement is true as well. 
The operators $Q$ and $Q^\dag$ have the same $x$-part and the system~$\mathcal L\cap\mathcal L^*$ is uncoupled.
Hence $\bar Q\in\mathfrak g^{\rm ess}(\mathcal L\cap\mathcal L^*)$.
This also follows from the fact that $\mathcal L$ and $\mathcal L^*$ are the Euler--Lagrange equations of the Lagrangian~$vLu$ and 
$\bar Q$ is a variational symmetry operator thereof in view of the infinitesimal criterion 
of variational symmetry~\cite{Olver1986}. 
Indeed, \mbox{$Q_{(\rho)}(vLu)+\xi^i_{x_i}(vLu)=(vLu)(\theta^1+\lambda+\xi^i_{x_i})=0$}. 
\end{proof}

The algebra $\mathfrak g^\infty(\mathcal L\cap\mathcal L^*)$ is formed by the operators 
$f\p_u+g\p_v$, where the parameter-functions $f=f(t,x)$ and $g=g(t,x)$ run through 
the solution sets of the equations~$\mathcal L$ and~$\mathcal L^*$, respectively.
Any such operator is a variational symmetry operator of the Lagrangian~$vLu$. 
It is projectable to $(t,x,u)$ and $(t,x,v)$ and its projections belong to $\mathfrak g^\infty(\mathcal L)$ 
and $\mathfrak g^\infty(\mathcal L^*)$. It can be assumed that 
$\mathfrak g^\infty(\mathcal L\cap\mathcal L^*)=\mathfrak g^\infty(\mathcal L)\oplus\mathfrak g^\infty(\mathcal L^*)$.
The operators from $\mathfrak g^\infty(\mathcal L)$ and $\mathfrak g^\infty(\mathcal L^*)$ 
have a trivial zero complement to operators from $\mathfrak g^\infty(\mathcal L\cap\mathcal L^*)$ 
according to~\cite{Ibragimov2006}, i.e. 
the algebras $\mathfrak g^\infty(\mathcal L)$ and $\mathfrak g^\infty(\mathcal L^*)$ cannot be obtained from each other 
via the adjoint variational principle. 
A slightly different situation obtains for the operators $u\p_u$ and $v\p_v$. 
An arbitrary linear combination of them belongs to $\mathfrak g(\mathcal L\cap\mathcal L^*)$ 
but only the operators proportional to $u\p_u-v\p_v$ are variational symmetries. 
At the same time, among such linear combinations only the operators from $\langle u\p_u+v\p_v\rangle$ are trivial 
Lie invariance operators of the system $\mathcal L\cap\mathcal L^*$. 
So $\mathfrak g^{\rm triv}(\mathcal L\cap\mathcal L^*)\subsetneq
\mathfrak g^{\rm triv}(\mathcal L)\oplus\mathfrak g^{\rm triv}(\mathcal L)$.

Let us return to linear $(1+1)$-dimensional second-order parabolic equations. 
Let $\mathcal L$ be an equation from class~\eqref{EqGenLPE}. 
Then the adjoint equation~$\mathcal L^*$ is of the form~\eqref{EqAdjLPE}, i.e., the system~$\mathcal L\cap\mathcal L^*$~is 
\begin{equation}\label{EqAdjointVariationalPrincipleSystemOfLPE}
u_t=Au_{xx}+Bu_x+Cu, \quad v_t+(Av)_{xx}-(Bv)_x+Cv=0.
\end{equation}
where $A=A(t,x)$, $B=B(t,x)$ and $C=C(t,x)$ are arbitrary smooth functions, $A\ne0$.
For class~\eqref{EqGenLPE} we extend the adjoint variational principle to admissible transformations. 
At first we present auxiliary statements on admissible transformations for wider classes of 
systems of evolutionary equations. Their proofs are based on the direct method. 

Consider the class of systems of second-order evolutionary equations of the general form 
\begin{equation}\label{EqGen2ndOrderEvolSystem}
\bar u_t=\bar F(t,x,\bar u_x,\bar u_{xx}),
\end{equation}
where $\bar u=(u^1,\dots,u^m)$, $\bar F=(F^1,\dots,F^m)$ and $|\p\bar F/\p\bar u_{xx}|\ne0$. 
Any point transformation~$\mathcal T$ in the space of variables $(t,x,\bar u)$ has the form 
\[
\tilde t=\mathcal T^t(t,x,\bar u),\quad
\tilde x=\mathcal T^x(t,x,\bar u),\quad 
\tilde u^a=\mathcal U^a(t,x,\bar u),
\]
where the Jacobian $|\p(\mathcal T^t,\mathcal T^x,\bar{\mathcal U})/\p(t,x,\bar u)|$ does not vanish, 
$\bar{\mathcal U}=(\mathcal U^1,\dots,\mathcal U^m)$. 
In what follows the indices $a$, $b$ and $c$ run from 1 to~$m$.

\begin{lemma}\label{LemmaOnAdmTransOfGen2ndOrderEvolSystems}
A point transformation~$\mathcal T$ connects two systems from class~\eqref{EqGen2ndOrderEvolSystem} iff 
$\,\mathcal T^t_x=\mathcal T^t_{u^a}=0$, i.e., $\mathcal T^t=T(t)$,
where $T$ is an arbitrary smooth function of~$t$ such that $T_t\ne0$. 
The arbitrary elements are transformed by the formula
\[
\tilde F^a=\frac{\mathcal U^a_{u^b}D_x\mathcal T^x-\mathcal T^x_{u^b}D_x\mathcal U^a}{T_tD_x\mathcal T^x}F^b+
\frac{\mathcal U^a_tD_x\mathcal T^x-\mathcal T^x_tD_x\mathcal U^a}{T_tD_x\mathcal T^x},
\]
where $D_x=\p_x+u^b_x\p_{u^b}+\cdots$ is the operator of total differentiation with respect to~$x$.
Therefore, class~\eqref{EqGen2ndOrderEvolSystem} is normalized. 
The equivalence group of class~\eqref{EqGen2ndOrderEvolSystem} is formed by the transformations 
determined in the space of variables and arbitrary elements by the above formulas.
\end{lemma}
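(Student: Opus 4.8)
The plan is to apply the direct method in the form described in the proof of Lemma~\ref{LemmaOnAdmTransOfInhomLPEs}: prolong a general point transformation~$\mathcal T$, require that it map solutions of $\bar u_t=\bar F$ onto solutions of $\tilde u_{\tilde t}=\tilde F$ for some admissible $\tilde F$, and split the resulting identity with respect to the unconstrained jet variables. The whole statement reduces to the single nonroutine step of forcing $\mathcal T^t$ to depend on~$t$ alone; once $\mathcal T^t=T(t)$ is available, the transformation rule for~$\bar F$ and the normalization claim follow by bookkeeping.

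First I would record how total derivatives transform. For $\tilde t=\mathcal T^t$, $\tilde x=\mathcal T^x$, $\tilde u^a=\mathcal U^a$ (all functions of $t,x,\bar u$), inverting the chain-rule relation between $(D_{\tilde t},D_{\tilde x})$ and $(D_t,D_x)$ gives
\[
D_{\tilde t}=\frac1J\bigl[(D_x\mathcal T^x)D_t-(D_t\mathcal T^x)D_x\bigr],\qquad
D_{\tilde x}=\frac1J\bigl[(D_t\mathcal T^t)D_x-(D_x\mathcal T^t)D_t\bigr],
\]
with $J=(D_t\mathcal T^t)(D_x\mathcal T^x)-(D_t\mathcal T^x)(D_x\mathcal T^t)$. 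Then $\tilde u^a_{\tilde t}=D_{\tilde t}\mathcal U^a$, $\tilde u^a_{\tilde x}=D_{\tilde x}\mathcal U^a$, $\tilde u^a_{\tilde x\tilde x}=D_{\tilde x}^2\mathcal U^a$, all to be evaluated on the solution manifold of~$\mathcal L$, i.e.\ with $u^b_t$ and its $x$-consequences replaced by $F^b$, $D_xF^b,\dots$

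The key step is a differential-order count. Since $\mathcal U^a$ is of zeroth order, and since on the solution manifold $D_t$ raises the order in $x$-derivatives of $\bar u$ by two (through $u^b_t=F^b$, where $\bar F$ is second order) whereas $D_x$ raises it by one, the left-hand side $\tilde u^a_{\tilde t}=D_{\tilde t}\mathcal U^a$ is of order two on shell. If $D_x\mathcal T^t\ne0$, then $D_{\tilde x}$ carries a genuine $D_t$-contribution, so $\tilde u^a_{\tilde x\tilde x}=D_{\tilde x}^2\mathcal U^a$ attains order four. Because the class is nondegenerate, $\tilde F$ depends nontrivially on $\tilde u_{\tilde x\tilde x}$, so the right-hand side of $\tilde u^a_{\tilde t}=\tilde F^a$ would genuinely vary with the order-four jet data while the left-hand side does not. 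Differentiating the identity with respect to the highest old-jet coordinate $u^c_{xxxx}$, the chain-rule factor $\p\tilde u^b_{\tilde x\tilde x}/\p u^c_{xxxx}$ is (up to an invertible factor) proportional to $(D_x\mathcal T^t)^2\,\p F/\p\bar u_{xx}$, and $\p\tilde F^a/\p\tilde u^b_{\tilde x\tilde x}$ is invertible; their product cannot vanish unless $D_x\mathcal T^t\equiv0$. As $D_x\mathcal T^t=\mathcal T^t_x+\mathcal T^t_{u^b}u^b_x$ must vanish identically in $u^b_x$, this yields $\mathcal T^t_x=\mathcal T^t_{u^a}=0$, i.e.\ $\mathcal T^t=T(t)$, and the Jacobian condition forces $T_t\ne0$. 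I expect this order count to be the main obstacle: one must be sure the top-order contributions cannot accidentally cancel, which is exactly where $|\p\bar F/\p\bar u_{xx}|\ne0$ is indispensable.

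Once $\mathcal T^t=T(t)$ is established the rest is routine. Then $D_x\mathcal T^t=0$, $D_t\mathcal T^t=T_t$, $J=T_t\,D_x\mathcal T^x$, so $D_{\tilde x}=(D_x\mathcal T^x)^{-1}D_x$ is purely spatial and
\[
\tilde F^a=D_{\tilde t}\mathcal U^a\big|_{\mathcal L}
=\frac{(D_x\mathcal T^x)(\mathcal U^a_t+\mathcal U^a_{u^b}F^b)-(\mathcal T^x_t+\mathcal T^x_{u^b}F^b)(D_x\mathcal U^a)}{T_t\,D_x\mathcal T^x},
\]
which, on collecting the coefficients of $F^b$, is precisely the asserted formula. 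Since $D_{\tilde x}$ is spatial, $\tilde u^a_{\tilde x}$ and $\tilde u^a_{\tilde x\tilde x}$ are of orders one and two, so re-expressing $F^b$ in the new jet coordinates shows $\tilde F^a$ depends only on $\tilde t,\tilde x$ and the $\tilde x$-derivatives of $\tilde u$ up to second order, while invertibility of~$\mathcal T$ gives $|\p\tilde F/\p\tilde u_{\tilde x\tilde x}|\ne0$; hence the transformed system again lies in class~\eqref{EqGen2ndOrderEvolSystem}. Finally, because the constraint on~$\mathcal T$ and the transformation rule for the arbitrary element are identical for every~$\bar F$, with no admissible transformation tied to a particular choice of arbitrary element, each admissible transformation is the restriction of one transformation acting on the whole class. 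This is exactly the statement that class~\eqref{EqGen2ndOrderEvolSystem} is normalized and that its equivalence group is the one described.
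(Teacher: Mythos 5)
Your overall strategy coincides with the paper's (the proof is declared there to be "based on the direct method", exactly as in Lemma~\ref{LemmaOnAdmTransOfInhomLPEs}, whose key step $\mathcal T^t_x=\mathcal T^t_u=0$ the paper imports from Kingston--Sophocleous), and your computation of $\tilde F^a$ once $\mathcal T^t=T(t)$ is established is correct and yields the asserted formula. The gap is in the one step you yourself identify as the crux. Differentiating the transformed equation with respect to $u^e_{xxxx}$ does \emph{not} produce a factor that is, up to invertible factors, $(D_x\mathcal T^t)^2\,\p\bar F/\p\bar u_{xx}$. A careful computation (using the on-shell chain-rule identity $D_x\mathcal U^b=(D_x\mathcal T^t)H^b+(D_x\mathcal T^x)G^b$, where $H^b:=\tilde u^b_{\tilde t}\bigl|_{\mathcal L}$ and $G^b:=\tilde u^b_{\tilde x}\bigl|_{\mathcal L}$) gives
\[
\frac{\p \tilde u^b_{\tilde x\tilde x}}{\p u^e_{xxxx}}\bigg|_{\mathcal L}
=\frac{(D_x\mathcal T^t)^2}{J^2}\,N^b_c\,
\frac{\p F^c}{\p u^d_{xx}}\,\frac{\p F^d}{\p u^e_{xx}},
\qquad
N^b_c:=\mathcal U^b_{u^c}-H^b\,\mathcal T^t_{u^c}-G^b\,\mathcal T^x_{u^c},
\]
where $J=(D_t\mathcal T^t)(D_x\mathcal T^x)-(D_t\mathcal T^x)(D_x\mathcal T^t)$. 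Hence the top-order determining equation only yields the dichotomy $(D_x\mathcal T^t)^2\,N=0$: at each point either $D_x\mathcal T^t=0$ or the whole matrix $N$ vanishes. The nondegeneracy condition $|\p\bar F/\p\bar u_{xx}|\ne0$, which you invoke as the decisive ingredient, cannot by itself exclude the branch $N=0$; your parenthetical "up to an invertible factor" silently assumes invertibility of $N$, which is precisely what has to be proved.

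The branch $N=0$ does require a separate argument. One clean way: $N^b_c$ is exactly the contact-multiplier matrix $\lambda^b_c$ in
\[
\mathcal T^*\bigl(d\tilde u^b-\tilde u^b_{\tilde t}\,d\tilde t-\tilde u^b_{\tilde x}\,d\tilde x\bigr)
=\lambda^b_c\bigl(du^c-u^c_t\,dt-u^c_x\,dx\bigr),
\]
and composing this with the analogous identity for $\mathcal T^{-1}$ shows $(\lambda^b_c)$ is invertible at every point of the first-order jet space; restricting to the solution manifold, $N$ is invertible, so $D_x\mathcal T^t\equiv0$ and your conclusion $\mathcal T^t_x=\mathcal T^t_{u^a}=0$ follows. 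Alternatively one can exclude $N=0$ from invertibility of the Jacobian of $(\mathcal T^t,\mathcal T^x,\bar{\mathcal U})$ together with lower-order determining equations; note, however, that $N=0$ forces the last $m$ columns of that Jacobian to lie in the span of the two vectors $(1,0,H^1,\dots,H^m)$ and $(0,1,G^1,\dots,G^m)$, which is an immediate contradiction only for $m\geqslant3$ --- so the cases $m=1,2$ (and $m=2$ is the one the paper actually uses, for the pairs of mutually adjoint equations) genuinely need the extra argument. Without some version of this step, your proof of the crucial claim $\mathcal T^t=T(t)$ is incomplete; everything after that point is fine.
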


Analogously, consider the subclass of class~\eqref{EqGen2ndOrderEvolSystem} formed by systems whose arbitrary elements 
are linear in~$\bar u_{xx}$, i.e., 
\begin{equation}\label{EqQuasiLin2ndOrderEvolSystem}
u^a_t=S^{ab}(t,x,\bar u,\bar u_x)u^b_{xx}+H^a(t,x,\bar u,\bar u_x),
\end{equation}
where $|S|\ne0$, $S=(S^{ab})$, $\bar H=(H^1,\dots,H^m)$.

\begin{lemma}\label{LemmaOnAdmTransOfQuasiLin2ndOrderEvolSystems}
A point transformation~$\mathcal T$ connects two systems from class~\eqref{EqQuasiLin2ndOrderEvolSystem} iff 
$\,\mathcal T^t_x=\mathcal T^t_{u^a}=0$ and $\,\mathcal T^x_{u^a}=0$, i.e., $\mathcal T^t=T(t)$ and $\mathcal T^x=X(t,x)$,
where $T$ and~$X$ are arbitrary smooth functions of their arguments such that $T_tX_x\ne0$. 
Moreover, the Jacobian $|\p\bar{\mathcal U}/\p\bar u|\ne0$.
The arbitrary elements are transformed by the formulas
\begin{gather*}
\tilde S=\frac{X_x^{\,2}}{T_t}\frac{\p\bar{\mathcal U}}{\p\bar u}S\left(\frac{\p\bar{\mathcal U}}{\p\bar u}\right)^{\!-1},
\\
\tilde{\bar H}=\frac1{T_t}\Biggl(\frac{\p\bar{\mathcal U}}{\p\bar u}\bar H+\bar{\mathcal U}_t
-\frac{X_t}{X_x}D_x\bar{\mathcal U}
-\frac{\p\bar{\mathcal U}}{\p\bar u}S\left(\frac{\p\bar{\mathcal U}}{\p\bar u}\right)^{\!-1}
\left(D_x\bar{\mathcal U}_x+u^a_xD_x\bar{\mathcal U}_{u^a}-\frac{X_{xx}}{X_x}D_x\bar{\mathcal U}\right)
\Biggr).
\end{gather*}
Therefore, class~\eqref{EqQuasiLin2ndOrderEvolSystem} is normalized. 
The equivalence group of class~\eqref{EqQuasiLin2ndOrderEvolSystem} is formed by the transformations 
determined in the space of variables and arbitrary elements by the above formulas.
\end{lemma}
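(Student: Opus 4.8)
The plan is to run the direct method in close parallel with the proof of Lemma~\ref{LemmaOnAdmTransOfGen2ndOrderEvolSystems}, exploiting the fact that class~\eqref{EqQuasiLin2ndOrderEvolSystem} is a subclass of~\eqref{EqGen2ndOrderEvolSystem}. First I would observe that any point transformation connecting two systems from~\eqref{EqQuasiLin2ndOrderEvolSystem} is, in particular, a transformation connecting two systems from the wider class~\eqref{EqGen2ndOrderEvolSystem}; hence Lemma~\ref{LemmaOnAdmTransOfGen2ndOrderEvolSystems} already yields $\mathcal T^t_x=\mathcal T^t_{u^a}=0$, i.e. $\mathcal T^t=T(t)$ with $T_t\ne0$. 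It then remains to extract the additional projectability $\mathcal T^x_{u^a}=0$ and the transformation formulas for the arbitrary elements.

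Next I would recalculate the transformed derivatives. Using $\tilde D_{\tilde x}=(D_x\mathcal T^x)^{-1}D_x$ and $\tilde D_{\tilde t}=T_t^{-1}(D_t-(D_t\mathcal T^x)(D_x\mathcal T^x)^{-1}D_x)$, I would compute $\tilde u^a_{\tilde x}=D_x\mathcal U^a/D_x\mathcal T^x$, then $\tilde u^a_{\tilde x\tilde x}$ and $\tilde u^a_{\tilde t}$, and substitute the source equation $u^b_t=S^{bc}u^c_{xx}+H^b$ into the latter. Both $\tilde u^a_{\tilde t}$ and $\tilde u^a_{\tilde x\tilde x}$ turn out to be affine in $\bar u_{xx}$, so the transformed equation $\tilde u^a_{\tilde t}=\tilde S^{ab}\tilde u^b_{\tilde x\tilde x}+\tilde H^a$ can be split with respect to $\bar u_{xx}$. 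Matching the coefficients of $\bar u_{xx}$ identifies $\tilde S=(D_x\mathcal T^x)^2T_t^{-1}\,RSR^{-1}$, where $R^{ab}=\mathcal U^a_{u^b}D_x\mathcal T^x-(D_x\mathcal U^a)\mathcal T^x_{u^b}$, while the $\bar u_{xx}$-free remainder yields $\tilde{\bar H}$. The technical core here is the careful bookkeeping of this affine-in-$\bar u_{xx}$ structure and the extraction of $\tilde S$.

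The key step --- and the main obstacle --- is to deduce $\mathcal T^x_{u^a}=0$ from this coefficient relation. I would do so by taking determinants: since $\tilde S$ is conjugate to $(D_x\mathcal T^x)^2T_t^{-1}S$, one gets $|\tilde S|=(D_x\mathcal T^x)^{2m}T_t^{-m}|S|$. Now $\tilde S$ is the diffusion matrix of a system from the class, so $|\tilde S|$ may depend only on $\tilde t,\tilde x,\tilde{\bar u}$, i.e. it is free of $\bar u_x$; as $T_t$ and $|S|$ are already free of $\bar u_x$ and $|S|\ne0$, the factor $(D_x\mathcal T^x)^{2m}=(\mathcal T^x_x+\mathcal T^x_{u^c}u^c_x)^{2m}$ must be free of $\bar u_x$, which forces $\mathcal T^x_{u^c}=0$, i.e. $\mathcal T^x=X(t,x)$. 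This is precisely the structural phenomenon for transformations between quasilinear equations that was invoked in the proof of Lemma~\ref{LemmaOnAdmTransOfInhomLPEs}; here it is obtained directly from the splitting, and recognizing that the $\bar u_x$-independence of the diffusion matrix is exactly what makes the determinant argument bite is the delicate point.

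Once $\mathcal T^x=X(t,x)$ is established, the remaining work is routine. Setting $D_x\mathcal T^x=X_x$ and $D_t\mathcal T^x=X_t$ reduces $\tilde S$ to $X_x^2T_t^{-1}(\p\bar{\mathcal U}/\p\bar u)S(\p\bar{\mathcal U}/\p\bar u)^{-1}$ and the remainder to the stated expression for $\tilde{\bar H}$. The invertibility $|\p\bar{\mathcal U}/\p\bar u|\ne0$ follows at once from non-vanishing of the full Jacobian, which now factors as $T_tX_x|\p\bar{\mathcal U}/\p\bar u|$. Finally, since the derived form of an admissible transformation depends in no way on the arbitrary elements $S$ and $\bar H$ of the source system, every admissible transformation extends to a transformation of the whole class; this is exactly the assertion that class~\eqref{EqQuasiLin2ndOrderEvolSystem} is normalized, and the transformations just described, prolonged to $(S,\bar H)$ by the above formulas, constitute its equivalence group.
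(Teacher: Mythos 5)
Your setup is fine: reducing to Lemma~\ref{LemmaOnAdmTransOfGen2ndOrderEvolSystems} to get $\mathcal T^t=T(t)$, prolonging, substituting the source system and splitting with respect to $\bar u_{xx}$ are exactly what the direct method requires, and your coefficient identification $\tilde S=(D_x\mathcal T^x)^2T_t^{-1}RSR^{-1}$ with $R^{ab}=\mathcal U^a_{u^b}D_x\mathcal T^x-(D_x\mathcal U^a)\mathcal T^x_{u^b}$ is correct. The proof breaks at precisely the step you call the core. You assert that $|\tilde S|$ ``may depend only on $\tilde t,\tilde x,\tilde{\bar u}$'' and that $|S|$ is ``already free of $\bar u_x$''. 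Neither is granted by the class: in~\eqref{EqQuasiLin2ndOrderEvolSystem} the arbitrary elements are $S^{ab}(t,x,\bar u,\bar u_x)$ and $H^a(t,x,\bar u,\bar u_x)$, so the diffusion matrix is allowed to depend on the first derivatives (that this is intended, not a typo, is confirmed by Corollary~\ref{CorollaryOnAdmTransOfStronglyQuasiLin2ndOrderEvolSystems}, which singles out $S^{ab}_{u^c_x}=0$ as a proper constraint). Since the first prolongation of a transformation with $\mathcal T^t=T(t)$ is a local diffeomorphism $(t,x,\bar u,\bar u_x)\mapsto(\tilde t,\tilde x,\tilde{\bar u},\tilde{\bar u}_{\tilde x})$, \emph{any} function of $(t,x,\bar u,\bar u_x)$ is a legitimate function of the new first-order jet variables, and your determinant identity $|\tilde S|=(D_x\mathcal T^x)^{2m}T_t^{-m}|S|$ yields no constraint whatsoever on $\mathcal T^x_{u^a}$.

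This gap is not repairable within the class as written, because linearity in $\bar u_{xx}$ is automatically inherited under every admissible transformation of the wider class~\eqref{EqGen2ndOrderEvolSystem}: prolonged point transformations act affinely on second-order derivatives. A concrete counterexample to the projectability claim is the hodograph transformation $\tilde t=t$, $\tilde x=u$, $\tilde u=x$ (case $m=1$), which maps the heat equation $u_t=u_{xx}$ (a member with $S=1$, $H=0$) to $\tilde u_{\tilde t}=\tilde u_{\tilde x}^{-2}\tilde u_{\tilde x\tilde x}$, again a member of~\eqref{EqQuasiLin2ndOrderEvolSystem}, even though $\mathcal T^x_u=1\ne0$ and $|\partial\mathcal U/\partial u|=0$. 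What your determinant argument does establish, and correctly, is the statement for the subclass with $S^{ab}_{u^c_x}=0$: there $\tilde t$, $\tilde x$, $\tilde{\bar u}$ depend only on $(t,x,\bar u)$, hence $|\tilde S|$ and $|S|$ really are free of $\bar u_x$, and the polynomial $(\mathcal T^x_x+\mathcal T^x_{u^c}u^c_x)^{2m}$ can be $\bar u_x$-free only if $\mathcal T^x_{u^c}=0$; the remainder of your proof then goes through. That restricted setting is the one actually needed downstream (Corollaries~\ref{CorollaryOnAdmTransOfStronglyQuasiLin2ndOrderEvolSystems}--\ref{CorollaryOnAdmTransOfHomogenLin2ndOrderEvolSystems} and the systems~\eqref{EqAdjointVariationalPrincipleSystemOfLPE}), so you should either add the hypothesis $S^{ab}_{u^c_x}=0$ or prove the lemma at the level of that subclass; as a proof of the lemma for class~\eqref{EqQuasiLin2ndOrderEvolSystem} as literally defined, the argument fails at the determinant step, and indeed no argument can succeed there.
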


\begin{note}\label{NoteOnAdmTransOfQuasiLin2ndOrderEvolSystemsWithDiagMatrices}
If $S=\diag(\Sigma^1,\dots,\Sigma^m)$ and~$\tilde S=\diag(\tilde\Sigma^1,\dots,\tilde\Sigma^m)$ are diagonal matrix-functions then 
in view of Lemma~\ref{LemmaOnAdmTransOfQuasiLin2ndOrderEvolSystems}
there exists a permutation $\sigma\in{\rm S}_m$ such that $\tilde\Sigma^a=X_x^{\,2}T_t^{-1}\Sigma^{\sigma(a)}$ and 
$\mathcal U^a_{u^b}(\Sigma^{\sigma(a)}-\Sigma^b)=0$. (There is no summation in the last formula.) 
This implies in the case $\Sigma^a\ne\Sigma^b$ for any $a\ne b$ that $\mathcal U^a_{u^b}=0$ if $\sigma(a)\ne b$.
\end{note}

\begin{corollary}\label{CorollaryOnAdmTransOfStronglyQuasiLin2ndOrderEvolSystems}
The subclass of class~\eqref{EqQuasiLin2ndOrderEvolSystem}, formed by the systems linear in the derivatives,
i.e., defined by the constraints $S^{ab}_{u^c_x}=0$, $H^a_{u^c_x\smash{u^b_x}}=0$ on the arbitrary elements, 
is normalized.
The equivalence group of this subclass is a subgroup of the equivalence group of 
class~\eqref{EqQuasiLin2ndOrderEvolSystem} and is formed by the transformations 
in which additionally the parameter-functions~$\mathcal U^a$ satisfy the conditions $\mathcal U^a_{u^bu^c}=0$, 
i.e., \mbox{$\mathcal U^a=U^{ab}(t,x)u^b+U^{a0}(t,x)$}.
\end{corollary}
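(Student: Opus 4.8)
The plan is to derive the result directly from Lemma~\ref{LemmaOnAdmTransOfQuasiLin2ndOrderEvolSystems}, since the subclass in question sits inside the already-normalized class~\eqref{EqQuasiLin2ndOrderEvolSystem}. First I would observe that any admissible transformation of the subclass is, in particular, an admissible transformation of class~\eqref{EqQuasiLin2ndOrderEvolSystem}; hence by that lemma it satisfies $\mathcal T^t=T(t)$, $\mathcal T^x=X(t,x)$ with $T_tX_x\ne0$ and $|\p\bar{\mathcal U}/\p\bar u|\ne0$, and the arbitrary elements transform by the two formulas displayed there. The task is then to determine precisely which of these transformations keep one inside the subclass, i.e.\ preserve the two constraints $S^{ab}_{u^c_x}=0$ and $H^a_{u^c_xu^b_x}=0$.

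For the matrix part this is automatic: in the formula $\tilde S=X_x^{\,2}T_t^{-1}(\p\bar{\mathcal U}/\p\bar u)S(\p\bar{\mathcal U}/\p\bar u)^{-1}$ the factors $X_x$, $T_t$, $\p\bar{\mathcal U}/\p\bar u$ and, for a system in the subclass, $S$ itself all depend only on $(t,x,\bar u)$, so $\tilde S$ is a function of $(t,x,\bar u)$ alone. Passing to the new variables through the inverse transformation, which does not involve $\tilde{\bar u}_{\tilde x}$, shows $\tilde S_{\tilde u^c_{\tilde x}}=0$ for \emph{any} admissible transformation whatsoever. Hence the entire constraint must come from~$\bar H$.

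Next I would isolate, in the formula for $\tilde{\bar H}$, the part that is quadratic in $\bar u_x$. Writing $D_x\bar{\mathcal U}_{u^a}=\bar{\mathcal U}_{xu^a}+\bar{\mathcal U}_{u^au^b}u^b_x$, one sees that every summand except $u^a_xD_x\bar{\mathcal U}_{u^a}$ is at most linear in $\bar u_x$ (here the linearity of $\bar H$ in $\bar u_x$, valid for systems of the subclass, is used), and that the only quadratic contribution is $\bar{\mathcal U}_{u^au^b}u^a_xu^b_x$. Thus the quadratic-in-$\bar u_x$ part of $\tilde H^d$ equals $-T_t^{-1}M^{dc}\mathcal U^c_{u^au^b}u^a_xu^b_x$, where $M=(\p\bar{\mathcal U}/\p\bar u)S(\p\bar{\mathcal U}/\p\bar u)^{-1}$. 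Since $\tilde u^a_{\tilde x}=X_x^{-1}(\mathcal U^a_x+\mathcal U^a_{u^b}u^b_x)$ is an affine function of $\bar u_x$ with invertible linear part, vanishing of the $\bar u_x$-quadratic part is equivalent to vanishing of the $\tilde{\bar u}_{\tilde x}$-quadratic part, i.e.\ to $\tilde H^a_{\tilde u^c_{\tilde x}\tilde u^b_{\tilde x}}=0$.

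The crucial step, on which the whole statement hinges, is to note that $M$ is invertible: it is conjugate to $S$ by the invertible Jacobian $\p\bar{\mathcal U}/\p\bar u$, and $|S|\ne0$ by the defining condition of class~\eqref{EqQuasiLin2ndOrderEvolSystem}. Therefore $M^{dc}\mathcal U^c_{u^au^b}=0$ forces $\mathcal U^c_{u^au^b}=0$ for all $a,b,c$, that is $\mathcal U^a=U^{ab}(t,x)u^b+U^{a0}(t,x)$. Conversely, if $\mathcal U^a_{u^bu^c}=0$ the quadratic part disappears and the transformed system again lies in the subclass. This shows simultaneously that every admissible transformation of the subclass has the asserted form (so the subclass is normalized) and that the transformations of this form constitute its equivalence group, realized as the subgroup of the equivalence group of class~\eqref{EqQuasiLin2ndOrderEvolSystem} cut out by the extra conditions $\mathcal U^a_{u^bu^c}=0$.
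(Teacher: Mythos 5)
Your proposal is correct and follows essentially the route the paper intends: the corollary is stated as a direct consequence of Lemma~\ref{LemmaOnAdmTransOfQuasiLin2ndOrderEvolSystems}, obtained by restricting to the subclass and splitting the transformation formula for~$\bar H$ with respect to~$\bar u_x$ (cf.\ Note~\ref{NoteOnChangingRepresentationOf2ndOrderEvolSystems}), which is exactly what you do, including the key use of the invertibility of $(\p\bar{\mathcal U}/\p\bar u)S(\p\bar{\mathcal U}/\p\bar u)^{-1}$ to force $\mathcal U^a_{u^bu^c}=0$.
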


\begin{corollary}\label{CorollaryOnAdmTransOfLin2ndOrderEvolSystems}
The subclass of class~\eqref{EqQuasiLin2ndOrderEvolSystem}, formed by the linear systems 
i.e., defined by the constraints $S^{ab}_{u^c_x}=S^{ab}_{u^c}=0$, 
$H^a_{u^c_x\smash{u^b_x}}=H^a_{u^c_x\smash{u^b}}=H^a_{u^c\smash{u^b}}=0$ on arbitrary elements, 
is normalized. 
The equivalence group of this subclass coincides with the group described 
in Corollary~\ref{CorollaryOnAdmTransOfStronglyQuasiLin2ndOrderEvolSystems}.
\end{corollary}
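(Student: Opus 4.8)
The plan is to avoid rerunning the direct method and instead to deduce the statement by restriction within the hierarchy of subclasses already handled. The fully linear subclass, singled out by $S^{ab}_{u^c_x}=S^{ab}_{u^c}=0$ and $H^a_{u^c_x u^b_x}=H^a_{u^c_x u^b}=H^a_{u^c u^b}=0$, is contained in the subclass of systems linear in the derivatives treated in Corollary~\ref{CorollaryOnAdmTransOfStronglyQuasiLin2ndOrderEvolSystems}, which in turn sits inside the normalized class~\eqref{EqQuasiLin2ndOrderEvolSystem}. Both the normalization of the fully linear subclass and the upper bound on its equivalence group will come for free from the normalization of the encompassing subclass, while the reverse inclusion will follow from a short inspection of the transformation formulas of Lemma~\ref{LemmaOnAdmTransOfQuasiLin2ndOrderEvolSystems}.

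First I would settle normalization together with the upper bound. Let $\mathcal T$ be an arbitrary admissible transformation of the fully linear subclass, i.e.\ a point transformation connecting two fully linear systems. Since fully linear systems are in particular linear in the derivatives, $\mathcal T$ is simultaneously an admissible transformation of the subclass of Corollary~\ref{CorollaryOnAdmTransOfStronglyQuasiLin2ndOrderEvolSystems}. That subclass being normalized, $\mathcal T$ lies in its equivalence group and hence has the form $\mathcal T^t=T(t)$, $\mathcal T^x=X(t,x)$, $\mathcal U^a=U^{ab}(t,x)u^b+U^{a0}(t,x)$ with $T_tX_x\ne0$ and $(U^{ab})$ nonsingular. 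This single observation shows that the fully linear subclass is normalized and that its equivalence group is contained in the group described in Corollary~\ref{CorollaryOnAdmTransOfStronglyQuasiLin2ndOrderEvolSystems}.

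It then remains to verify the reverse inclusion, namely that every transformation of the above form does map the fully linear subclass into itself. Here I would substitute into the explicit formulas for $\tilde S$ and $\tilde{\bar H}$ from Lemma~\ref{LemmaOnAdmTransOfQuasiLin2ndOrderEvolSystems}. Because $\p\bar{\mathcal U}/\p\bar u=(U^{ab})$ now depends on $(t,x)$ alone, the conjugation $\tilde S=X_x^{\,2}T_t^{-1}(U^{ab})S(U^{ab})^{-1}$ keeps $\tilde S$ free of $\bar u$ and $\bar u_x$ whenever $S=S(t,x)$, so that $\tilde S^{ab}_{u^c}=\tilde S^{ab}_{u^c_x}=0$. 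For $\tilde{\bar H}$ one inspects the formula term by term: with $\bar{\mathcal U}$ affine in $\bar u$ and $\bar H$ affine in $(\bar u,\bar u_x)$, each of the summands $\tfrac{\p\bar{\mathcal U}}{\p\bar u}\bar H$, $\bar{\mathcal U}_t$, $D_x\bar{\mathcal U}$, $D_x\bar{\mathcal U}_x$ and $u^a_x D_x\bar{\mathcal U}_{u^a}$ is again affine in $(\bar u,\bar u_x)$, while the matrix prefactor $\tfrac{\p\bar{\mathcal U}}{\p\bar u}S(\tfrac{\p\bar{\mathcal U}}{\p\bar u})^{-1}$ is a function of $(t,x)$ only. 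Hence $\tilde{\bar H}$ is affine in $(\bar u,\bar u_x)$, i.e.\ $\tilde H^a_{u^c u^b}=\tilde H^a_{u^c_x u^b}=\tilde H^a_{u^c_x u^b_x}=0$, and the fully linear structure is preserved. Combining the two inclusions gives the asserted coincidence of the equivalence groups.

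I expect the only mildly delicate point to be the bookkeeping in the $\tilde{\bar H}$ check, where one must confirm that the total-differentiation terms $D_x\bar{\mathcal U}_x$ and $u^a_x D_x\bar{\mathcal U}_{u^a}$ do not raise the degree in $\bar u_x$ above one, and that (in contrast to the situation of Corollary~\ref{CorollaryOnAdmTransOfStronglyQuasiLin2ndOrderEvolSystems}, where $S$ may still depend on $\bar u$) the $\bar u$-dependence of the conjugating matrix now disappears entirely. This is purely degree- and index-tracking and carries no conceptual difficulty; the real content of the result is supplied by the normalization of the wider subclass invoked in the first step.
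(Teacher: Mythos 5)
Your proposal is correct and takes essentially the route the paper intends for this corollary: inherit the restricted form $\mathcal T^t=T(t)$, $\mathcal T^x=X(t,x)$, $\mathcal U^a=U^{ab}(t,x)u^b+U^{a0}(t,x)$ of admissible transformations from the normalized wider subclass of Corollary~\ref{CorollaryOnAdmTransOfStronglyQuasiLin2ndOrderEvolSystems}, and then check from the formulas of Lemma~\ref{LemmaOnAdmTransOfQuasiLin2ndOrderEvolSystems} that such transformations preserve full linearity (the key observation being that $\p\bar{\mathcal U}/\p\bar u$ now depends on $(t,x)$ only, so the conjugated $\tilde S$ stays $(t,x)$-dependent and the term $u^a_xD_x\bar{\mathcal U}_{u^a}$ stays linear in $\bar u_x$). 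One small wording caution: normalization of the linear subclass is not yet established by the first step alone, as your phrasing momentarily suggests, but only once the closure check of the second step is in place --- which your write-up does supply, so the argument as a whole is complete.
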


Combining Note~\ref{NoteOnNormalizationOfInhomAndHomLinSys} and 
Corollary~\ref{CorollaryOnAdmTransOfLin2ndOrderEvolSystems} implies the following statement on 
properties of the corresponding class of homogeneous linear systems. 

\begin{corollary}\label{CorollaryOnAdmTransOfHomogenLin2ndOrderEvolSystems}
The subclass of class~\eqref{EqQuasiLin2ndOrderEvolSystem} formed by the homogeneous linear systems 
(i.e., the constraints on arbitrary elements are $S^{ab}_{u^c_x}=S^{ab}_{u^c}=0$, 
$H^a_{u^c_x\smash{u^b_x}}=H^a_{u^c_x\smash{u^b}}=H^a_{u^c\smash{u^b}}=0$ and $u^c_xH^a_{u^c_x}+u^cH^a_{u^c}=H^a$), is semi-normalized. 
The equivalence group of this subclass is a subgroup of the group from Corollary~\ref{CorollaryOnAdmTransOfStronglyQuasiLin2ndOrderEvolSystems},
which is formed by the transformations with $U^{a0}=0$.
A~point transformation~$\mathcal T$ connects two systems from this class iff 
$\,\mathcal T^t_x=\mathcal T^t_{u^a}=0$, $\,\mathcal T^x_{u^a}=0$ and $\mathcal U^a_{u^bu^c}=0$, 
i.e., $\mathcal T^t=T(t)$, $\mathcal T^x=X(t,x)$ and $\mathcal U^a=U^{ab}(t,x)u^b+U^{a0}(t,x)$,
where $T$,$X$, $U^{ab}$ and $U^{a0}$ are arbitrary smooth functions of their arguments such that $T_tX_x|U^{ab}|\ne0$, 
and additionally $(\hat U^{ab}U^{b0})$ is a solution on the initial system. 
Here $(\hat U^{ab})$ is the inverse matrix of~$(U^{ab})$.
\end{corollary}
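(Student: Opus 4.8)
The plan is to obtain the statement as a direct instance of the general principle recorded in Note~\ref{NoteOnNormalizationOfInhomAndHomLinSys}, feeding it the normalization supplied by Corollary~\ref{CorollaryOnAdmTransOfLin2ndOrderEvolSystems}. First I would identify the homogeneous linear subclass in the role of the class $\mathcal L_{\rm hom}$ of the Note and the inhomogeneous linear subclass of Corollary~\ref{CorollaryOnAdmTransOfLin2ndOrderEvolSystems} in the role of $\mathcal L_{\rm inh}$. For this it must be checked that the listed constraints really describe ``homogeneous linear'': the conditions $H^a_{u^c_x u^b_x}=H^a_{u^c_x u^b}=H^a_{u^c u^b}=0$ (together with the $S$-constraints) make the system linear, while the Euler-type relation $u^c_xH^a_{u^c_x}+u^cH^a_{u^c}=H^a$ annihilates exactly the part of $H^a$ independent of $\bar u$ and $\bar u_x$. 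Hence dropping the last relation returns precisely the inhomogeneous linear class of Corollary~\ref{CorollaryOnAdmTransOfLin2ndOrderEvolSystems}, as required.

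Next I would verify the two structural hypotheses of the Note. By Corollary~\ref{CorollaryOnAdmTransOfLin2ndOrderEvolSystems} the inhomogeneous class is normalized, and by Corollary~\ref{CorollaryOnAdmTransOfStronglyQuasiLin2ndOrderEvolSystems} each of its equivalence transformations has $\mathcal T^t=T(t)$, $\mathcal T^x=X(t,x)$ and $\mathcal U^a=U^{ab}(t,x)u^b+U^{a0}(t,x)$. Thus the new independent variables depend only on the old ones (projectability, with the role of the Note's $x$ played here by the pair $(t,x)$) and each $\mathcal U^a$ is affine in $\bar u$. With the hypotheses in force, the Note yields at once that the homogeneous subclass is semi-normalized and that a point transformation connects two of its members iff it has the asserted form $\mathcal T^t=T(t)$, $\mathcal T^x=X(t,x)$, $\mathcal U^a=U^{ab}(t,x)u^b+U^{a0}(t,x)$ with $T_tX_x|U^{ab}|\ne0$, subject to the single extra requirement that $(\hat U^{ab}U^{b0})$ solve the initial system.

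It then remains to separate the equivalence group $G^\sim_{\rm hom}$ (where $U^{a0}=0$) from the larger set of admissible transformations. The key is to factor the affine change of $\bar u$ as the composition of the pure shift $\bar u\mapsto\bar u+w$, with $w^a=\hat U^{ab}U^{b0}$, and a transformation already carrying $U^{a0}=0$. Using the transformation formula for $\bar H$ in Lemma~\ref{LemmaOnAdmTransOfQuasiLin2ndOrderEvolSystems} I would confirm that applying such a shift to a homogeneous system $L\bar u=0$ produces an inhomogeneity proportional to $Lw$, so that the image stays homogeneous exactly when $Lw=0$, i.e.\ when $w$ solves the initial system; this reproves the admissible-transformation condition and exhibits the extra transformations as the superposition-principle symmetries $\bar u\mapsto\bar u+w$ of the members, whence the class is semi-normalized. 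For a transformation to lie in $G^\sim_{\rm hom}$ it must preserve the whole class, so $w$ must solve every member; since the class already contains, e.g., both $u^a_t=u^a_{xx}$ and $u^a_t=u^a_{xx}+u^a$, whose common solutions reduce to $\{0\}$, the intersection of all solution sets is trivial and $w=0$, i.e.\ $U^{a0}=0$.

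The main obstacle I anticipate is the shift computation: one has to read off from the transformation formula for $\bar H$ in Lemma~\ref{LemmaOnAdmTransOfQuasiLin2ndOrderEvolSystems}, specialized to the linear case, that an affine change of $\bar u$ contributes to the transformed system an inhomogeneous term equal, up to the invertible factor $(U^{ab})$ coming from $\p\bar{\mathcal U}/\p\bar u$ and the scalar $1/T_t$, to the operator of the initial equation applied to $w=\hat U^{ab}U^{b0}$. Once this identity is established, both the admissible-transformation criterion and the vanishing of $U^{a0}$ on $G^\sim_{\rm hom}$ follow immediately, and semi-normalization is inherited from the normalization of the inhomogeneous class through the Note.
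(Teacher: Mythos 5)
Your proposal is correct and follows essentially the same route as the paper, which obtains this corollary precisely by combining Note~\ref{NoteOnNormalizationOfInhomAndHomLinSys} with Corollary~\ref{CorollaryOnAdmTransOfLin2ndOrderEvolSystems}. The only difference is that you make explicit what the paper leaves implicit: the shift computation $\tilde u=U(u+w)$ with $w^a=\hat U^{ab}U^{b0}$ producing the inhomogeneity $Lw$, and the observation that the intersection of the solution sets of all homogeneous members is trivial (e.g.\ via $u^a_t=u^a_{xx}$ and $u^a_t=u^a_{xx}+u^a$), which pins down $U^{a0}=0$ where the Note only asserts that this holds ``often'' --- a welcome sharpening rather than a different method.
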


\begin{note}\label{NoteOnChangingRepresentationOf2ndOrderEvolSystems}
Representing systems from class~\eqref{EqGen2ndOrderEvolSystem} whose arbitrary elements are linear in~$\bar u_{xx}$ 
in the form~\eqref{EqQuasiLin2ndOrderEvolSystem}, we replace each $F^a$ by the tuple of arbitrary elements $(S^{ab},H^a)$.
In the situation under consideration the different representations give equivalent results. 
Thus, the formulas for the transformation of~$S^{ab}$ and~$H^a$ are obtained from the analogous formulas for~$F^a$ by splitting 
with respect to $\bar u_{xx}$. The corresponding sets of admissible transformations as well as equivalence groups are isomorphic. 
We can work with the above subclasses of class~\eqref{EqQuasiLin2ndOrderEvolSystem} in the same way. 
Namely, we can constrain the functions~$S^{ab}$ and successively replace $H^a$ by the expressions 
$H^{ab}(t,x,\bar u)u^b_x+G^a(t,x,\bar u)$, $H^{ab}(t,x)u^b_x+G^{ab}(t,x)u^b+G^{a0}(t,x)$ and $H^{ab}(t,x)u^b_x+G^{ab}(t,x)u^b$. 
The transformation formulas for the new arbitrary elements are constructed by splitting the transformation formula for~$\bar H$ 
with respect to $\bar u_x$ or $(\bar u_x,u)$. The equivalence groups of the different representations of these classes are isomorphic.
\end{note}

Let us continue with class~\eqref{EqAdjointVariationalPrincipleSystemOfLPE}. 
At first we study the corresponding class of inhomogeneous systems, writing them in the form
\begin{equation}\label{EqInhomAdjointVariationalPrincipleSystemOfLPE}
u_t=\varepsilon_1Au_{xx}+B^1u_x+C^1u+D^1, \quad v_t=\varepsilon_2Av_{xx}+B^2v_x+C^2v+D^2.
\end{equation}
Hereafter $m=2$, $\varepsilon_1=1$, $\varepsilon_2=-1$, $B^1=B$, $C^1=C$, $B^2=B-2A_x$, $C^2=-C+B_x-A_{xx}$, 
$L^a=\p_t-\varepsilon_aA_{xx}-B^a\p_x-C^a$. All the arbitrary elements are smooth functions of~$t$ and~$x$. 

Corollary~\ref{CorollaryOnAdmTransOfLin2ndOrderEvolSystems} implies that 
any point transformation between two arbitrary systems from class~\eqref{EqAdjointVariationalPrincipleSystemOfLPE} 
has the form 
\begin{gather}\nonumber
\tilde t=T(t), \quad \tilde x=X(t,x), \\ \label{Eq1OnAdmTransOfInhomAdjointVariationalPrincipleSystemOfOfLPEs}
\tilde u=U^{11}(t,x)u+U^{12}(t,x)v+U^{10}(t,x), \\ \nonumber 
\tilde v=U^{21}(t,x)u+U^{22}(t,x)v+U^{20}(t,x), 
\end{gather}
where $T_tX_x|U^{ab}|\ne0$. 
In view of Note~\ref{NoteOnAdmTransOfQuasiLin2ndOrderEvolSystemsWithDiagMatrices} we additionally have 
that $\tilde A=\delta X_x^{\,2}T_t^{-1}A$, $\delta=\pm1$. 
Moreover, $U^{12}=U^{21}=0$ if $\delta=1$ and $U^{11}=U^{22}=0$ if $\delta=-1$. 
We take $(b_1,b_2)=(1,2)$ if $\delta=1$ and $(b_1,b_2)=(2,1)$ otherwise. 
Using these notations the transformation of the arbitrary elements can be written as 
\begin{equation}\label{Eq2OnAdmTransOfInhomAdjointVariationalPrincipleSystemOfOfLPEs}
\arraycolsep=0ex
\begin{array}{l}\displaystyle
\tilde A=\delta\frac{X_x^2}{T_t}A,\quad 
\tilde B^a=\varepsilon_a\frac{\delta A}{T_t}\left(X_{xx}-2\frac{U^{ab_a}_x}{U^{ab_a}}X_x\right)
+\frac{B^{b_a}X_x-X_t}{T_t}, 
\\[2.5ex] \displaystyle
\tilde C^a=-\frac{U^{ab_a}}{T_t}L^{b_a}\frac1{U^{ab_a}},\quad
\tilde D^a=\frac{U^{ab_a}}{T_t}\left(D^{b_a}+L^{b_a}\frac{U^{a0}}{U^{ab_a}}\right).
\end{array}
\end{equation}
The specific connections between the coefficients $(B^1,C^1)$ and $(B^2,C^2)$ imply one more equation 
\begin{equation}\label{EqProlongationOfAdmTransOfAdjointVariationalPrincipleSystemOfLPE}
X_xU^{1b_1}U^{2b_2}=\kappa,
\end{equation}
where $\kappa$ is an arbitrary nonzero constant. 
(Compare with the formula from Proposition~\ref{PropositionOnTransOfCharsOfCLsForLPEs}.)

Analyzing the obtained result, we deduce the following: 
Any admissible transformation in class~\eqref{EqInhomAdjointVariationalPrincipleSystemOfLPE} is 
either the prolongation of a transformation between the first (or second) equations of the related systems 
according to condition~\eqref{EqProlongationOfAdmTransOfAdjointVariationalPrincipleSystemOfLPE} or 
the composition of such a prolonged transformation with the transformation given by the simultaneous transposition 
of the dependent variables and equations in the resulting system 
($\tilde u=v$, $\tilde v=u$, $\tilde A=-A$, $\tilde B=-A$, $\tilde C=-C+B_x-A_{xx}$). 
In the second case connections between arbitrary elements involve their derivatives. 
That is why we should consider the extended equivalence group of 
class~\eqref{EqInhomAdjointVariationalPrincipleSystemOfLPE}, 
admitting the dependence of transformations of arbitrary elements on their (and only their) derivatives.

\begin{proposition}\label{PropositionOnAdmTransOfInhomAdjointVariationalPrincipleSystemOfOfLPEs}
Class~\eqref{EqInhomAdjointVariationalPrincipleSystemOfLPE} is normalized with respect to 
the extended equivalence group~$G^\sim_{\rm ext}$ which is formed by the transformations described by 
the formulas~\eqref{Eq1OnAdmTransOfInhomAdjointVariationalPrincipleSystemOfOfLPEs}, 
\eqref{Eq2OnAdmTransOfInhomAdjointVariationalPrincipleSystemOfOfLPEs} 
and~\eqref{EqProlongationOfAdmTransOfAdjointVariationalPrincipleSystemOfLPE}. 
The usual equivalence group~$G^\sim$ of class~\eqref{EqInhomAdjointVariationalPrincipleSystemOfLPE}
is the subgroup of~$G^\sim_{\rm ext}$, defined by the condition $\delta=1$. 
Its `essential' part formed by the transformation with $U^{a0}=0$ 
can be obtained as the prolongation of the `essential' part of the equivalence group of the class~\eqref{EqGenInhomLPE} 
(or the class of the same equations written in the adjoint form)
according to condition~\eqref{EqProlongationOfAdmTransOfAdjointVariationalPrincipleSystemOfLPE}. 
Conversely, the equivalence groups of the class~\eqref{EqGenInhomLPE} 
and the class of the same equations written in the adjoint form are the projection of~$G^\sim$ to 
the corresponding sets of variables and arbitrary elements. 
The complement of~$G^\sim$ in~$G^\sim_{\rm ext}$ is formed by the compositions of elements from~$G^\sim$ and 
the simultaneous transposition of the dependent variables and equations. 
\end{proposition}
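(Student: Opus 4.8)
The plan is to obtain the whole statement from Corollary~\ref{CorollaryOnAdmTransOfLin2ndOrderEvolSystems} by exploiting the special structure of class~\eqref{EqInhomAdjointVariationalPrincipleSystemOfLPE}: the two equations are uncoupled and the second is the adjoint of the first. First I would invoke Corollary~\ref{CorollaryOnAdmTransOfLin2ndOrderEvolSystems} to conclude that any admissible transformation is affine in the dependent variables and projectable in $(t,x)$, i.e.\ of the form~\eqref{Eq1OnAdmTransOfInhomAdjointVariationalPrincipleSystemOfOfLPEs}. The leading-coefficient matrix here is $\diag(A,-A)$, whose diagonal entries are distinct because $A\ne0$; Note~\ref{NoteOnAdmTransOfQuasiLin2ndOrderEvolSystemsWithDiagMatrices} then forces the dichotomy $\tilde A=\delta X_x^{\,2}T_t^{-1}A$ with $\delta=\pm1$, together with the vanishing of the off-diagonal coefficients ($U^{12}=U^{21}=0$ for $\delta=1$ and $U^{11}=U^{22}=0$ for $\delta=-1$). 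Since the system then splits into two independent scalar equations of the type treated in Lemma~\ref{LemmaOnAdmTransOfInhomLPEs}, applying that lemma componentwise produces the transformation rules~\eqref{Eq2OnAdmTransOfInhomAdjointVariationalPrincipleSystemOfOfLPEs}.

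The crux is to impose that the image again lies in class~\eqref{EqInhomAdjointVariationalPrincipleSystemOfLPE}, that is, that $\tilde B^2=\tilde B^1-2\tilde A_x$ and $\tilde C^2=-\tilde C^1+\tilde B^1_x-\tilde A_{xx}$ hold whenever the corresponding relations hold for the source. Substituting~\eqref{Eq2OnAdmTransOfInhomAdjointVariationalPrincipleSystemOfOfLPEs} and simplifying, I expect these two compatibility conditions to collapse to the single relation~\eqref{EqProlongationOfAdmTransOfAdjointVariationalPrincipleSystemOfLPE}, $X_xU^{1b_1}U^{2b_2}=\kappa$ with $\kappa$ a nonzero constant. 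The conceptual reason is transparent: $v$ is a solution of the adjoint equation, hence a characteristic of the first equation, and characteristics transform by the rule of Proposition~\ref{PropositionOnTransOfCharsOfCLsForLPEs} as $\tilde v\propto v/(X_xU^{1b_1})$; matching this against the prescribed coefficient $U^{2b_2}$ of $v$ in~\eqref{Eq1OnAdmTransOfInhomAdjointVariationalPrincipleSystemOfOfLPEs} reproduces exactly~\eqref{EqProlongationOfAdmTransOfAdjointVariationalPrincipleSystemOfLPE}. The rigorous verification, carried out directly through the conjugation formula $\tilde C^a=-U^{ab_a}T_t^{-1}L^{b_a}(U^{ab_a})^{-1}$ together with the fact that the second equation is the formal adjoint of the first, requires the only substantial bookkeeping of Jacobian factors $X_x$, and this is the step I anticipate as the main obstacle; conversely, once~\eqref{EqProlongationOfAdmTransOfAdjointVariationalPrincipleSystemOfLPE} holds the computed arbitrary elements satisfy the adjoint relations by construction, which yields admissibility and hence the normalization of class~\eqref{EqInhomAdjointVariationalPrincipleSystemOfLPE} with respect to $G^\sim_{\rm ext}$.

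It then remains to organize these transformations into a group and identify its pieces. Because $\delta$ is multiplicative under composition, the subset with $\delta=1$ is closed under composition and inversion and constitutes the usual equivalence group $G^\sim$, while the full set with $\delta=\pm1$ forms $G^\sim_{\rm ext}$, containing $G^\sim$ as an index-two subgroup. The coset with $\delta=-1$ is obtained by post-composing elements of $G^\sim$ with the transposition $\tilde u=v$, $\tilde v=u$; under this transposition the arbitrary elements transform as $\tilde A=-A$, $\tilde B=B-2A_x$, $\tilde C=-C+B_x-A_{xx}$, which involve derivatives of $A$ and $B$. This dependence on derivatives of arbitrary elements is precisely why the transposition cannot belong to a usual equivalence group and forces the passage to $G^\sim_{\rm ext}$, which simultaneously shows that the complement of $G^\sim$ in $G^\sim_{\rm ext}$ is $G^\sim$ composed with the transposition.

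Finally, I would read off the prolongation and projection statements from the structure just established. Restricting an essential element of $G^\sim$ (those with $U^{a0}=0$) to the variables $(t,x,u)$ and to the arbitrary elements $(A,B^1,C^1,D^1)$ recovers an essential element of the equivalence group $G^\sim_{\rm inh}$ of class~\eqref{EqGenInhomLPE} furnished by Corollary~\ref{CorollaryOnNormalizationOfInhomLPEs}; conversely, given such an element, the accompanying action on $v$ is determined up to the free constant $\kappa$ by~\eqref{EqProlongationOfAdmTransOfAdjointVariationalPrincipleSystemOfLPE}, which is the asserted prolongation. The symmetric restriction to $(t,x,v)$ and $(A,B^2,C^2,D^2)$ yields the equivalence group of the same equations written in adjoint form. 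These two restriction maps are the projections claimed in the proposition, and either of them together with~\eqref{EqProlongationOfAdmTransOfAdjointVariationalPrincipleSystemOfLPE} reconstructs the essential part of $G^\sim$.
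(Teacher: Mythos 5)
Your proposal is correct and follows essentially the same route as the paper: Corollary~\ref{CorollaryOnAdmTransOfLin2ndOrderEvolSystems} and Note~\ref{NoteOnAdmTransOfQuasiLin2ndOrderEvolSystemsWithDiagMatrices} yield the form~\eqref{Eq1OnAdmTransOfInhomAdjointVariationalPrincipleSystemOfOfLPEs} with the $\delta=\pm1$ dichotomy and block structure of $(U^{ab})$, the adjointness relations between $(B^1,C^1)$ and $(B^2,C^2)$ collapse to the single constraint~\eqref{EqProlongationOfAdmTransOfAdjointVariationalPrincipleSystemOfLPE}, and the transposition coset (whose action on the arbitrary elements involves their derivatives) explains the passage to the extended group, exactly as in the text preceding the proposition. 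The only differences are presentational: you make explicit the componentwise use of Lemma~\ref{LemmaOnAdmTransOfInhomLPEs} and the characteristic-transformation heuristic that the paper only hints at through its reference to Proposition~\ref{PropositionOnTransOfCharsOfCLsForLPEs}.
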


Similarly to the above classes of linear evolution systems, 
in view of Note~\ref{NoteOnNormalizationOfInhomAndHomLinSys} and 
Proposition~\ref{PropositionOnAdmTransOfInhomAdjointVariationalPrincipleSystemOfOfLPEs} 
we obtain that the corresponding class~\eqref{EqAdjointVariationalPrincipleSystemOfLPE} of homogeneous systems
is semi-normalized in the extended sense. 
The extended/usual equivalence group of~\eqref{EqAdjointVariationalPrincipleSystemOfLPE} is 
the projection (neglecting transformations for $D^1$ and $D^2$) of the subgroup of 
the extended/usual equivalence group of~\eqref{EqInhomAdjointVariationalPrincipleSystemOfLPE}, 
defined by the constraints $U^{a0}=0$.
The usual equivalence group of~\eqref{EqAdjointVariationalPrincipleSystemOfLPE}
can be constructed via the prolongation of the equivalence group of class~\eqref{EqGenLPE} 
(or the class of the same equations written in the adjoint form)
according to condition~\eqref{EqProlongationOfAdmTransOfAdjointVariationalPrincipleSystemOfLPE}. 
Conversely, the equivalence groups of~\eqref{EqGenLPE} and the class of the same equations written in the adjoint form 
are the projections of the usual equivalence group of~\eqref{EqAdjointVariationalPrincipleSystemOfLPE} to 
the corresponding sets of variables and arbitrary elements. 
The additional condition for the admissible transformations in class~\eqref{EqAdjointVariationalPrincipleSystemOfLPE} is 
$L^{b_a}(U^{a0}/U^{ab_a})=0$, where there is no summation with respect to~$a$.

In the case of class~\eqref{EqGenLPE}
Proposition~\ref{PropositionOnAdjointVariationalPrincipleForEssentialSymsOfLinEqs} can be reformulated 
in a more precise way since any essential symmetry operator 
$Q=\tau\p_t+\xi\p_x+\eta^1u\p_u$ of an equation from class~\eqref{EqGenLPE} has 
$\widehat\lambda=\tau_t$. The same statement is derived as an infinitesimal consequence of 
formula~\eqref{EqProlongationOfAdmTransOfAdjointVariationalPrincipleSystemOfLPE} 
after taking into account that any symmetry transformation of any equation from class~\eqref{EqGenLPE} is 
an admissible transformation in this class. 

\begin{corollary}\label{CorollaryOnAdjointVariationalPrincipleForEssentialSymsOfLPEs}
Let $\mathcal L$ be an equation from class~\eqref{EqGenLPE}.
Then $Q=\tau\p_t+\xi\p_x+\eta^1u\p_u\in\mathfrak g^{\rm ess}(\mathcal L)$ iff 
$\,Q^\dag=\tau\p_t+\xi\p_x+\theta^1v\p_v\in\mathfrak g^{\rm ess}(\mathcal L^*)$, 
where $\eta^1+\theta^1=-\xi_x$. 
Moreover, \[\bar Q=\tau\p_t+\xi\p_x+\eta^1u\p_u+\theta^1v\p_v\in\mathfrak g^{\rm ess}(\mathcal L\cap\mathcal L^*)\] 
and $\bar Q$ is a variational symmetry operator of the Lagrangian~$vL^1u$ (or~$uL^2v$).
\end{corollary}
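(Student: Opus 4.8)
The plan is to read the statement as the specialization of Proposition~\ref{PropositionOnAdjointVariationalPrincipleForEssentialSymsOfLinEqs} to the case $n=2$ with independent variables $(t,x)$ and the second-order operator $L=L^1=\p_t-A\p_{xx}-B\p_x-C$. For such an $L$ the whole architecture of the proposition carries over unchanged: the equivalence $Q\in\mathfrak g^{\rm ess}(\mathcal L)$ iff $Q^\dag\in\mathfrak g^{\rm ess}(\mathcal L^*)$ follows from conjugating the operator identity $[\widehat Q,L]=\widehat\lambda L$ and using $(L^*)^*=L$; the uncoupled structure of $\mathcal L\cap\mathcal L^*$ gives $\bar Q\in\mathfrak g^{\rm ess}(\mathcal L\cap\mathcal L^*)$; and the variational-symmetry clause is the infinitesimal criterion applied to $vL^1u$. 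Hence the only genuinely new point is to show that the general relation $\eta^1+\theta^1=\widehat\lambda-\xi^i_{x_i}$ sharpens, in this class, to $\eta^1+\theta^1=-\xi_x$; equivalently, since here $\xi^i_{x_i}=\tau_t+\xi_x$, that $\widehat\lambda=\tau_t$.

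First I would establish $\widehat\lambda=\tau_t$ by a direct computation. With $\widehat Q=-\tau\p_t-\xi\p_x+\eta^1$, I would expand $[\widehat Q,L]=\widehat QL-L\widehat Q$ as a differential operator in $(t,x)$ and extract the coefficient of $\p_t$. The right-hand side $\widehat\lambda L$ contributes exactly $\widehat\lambda$ there, while a short bookkeeping of the compositions yields $\tau_t-A\tau_{xx}-B\tau_x$ on the left, so that $\widehat\lambda=\tau_t-A\tau_{xx}-B\tau_x$. To finish, I would invoke the infinitesimal form of the condition $\mathcal T^t_x=\mathcal T^t_u=0$ from Lemma~\ref{LemmaOnAdmTransOfInhomLPEs}, i.e. the standard property that the time component of a point symmetry of an evolution equation depends on $t$ alone, so that $\tau=\tau(t)$. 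Then $\tau_x=\tau_{xx}=0$ and $\widehat\lambda=\tau_t$, whence $\eta^1+\theta^1=\tau_t-(\tau_t+\xi_x)=-\xi_x$.

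Alternatively, and more in the spirit of the preceding material, I would derive the relation as an infinitesimal consequence of the prolongation constraint~\eqref{EqProlongationOfAdmTransOfAdjointVariationalPrincipleSystemOfLPE}. Any point symmetry of an equation from class~\eqref{EqGenLPE} is an admissible transformation of that class, and its trivial prolongation to the adjoint variable~$v$ is an admissible transformation of the joint system~\eqref{EqAdjointVariationalPrincipleSystemOfLPE}; such transformations satisfy $X_xU^{1b_1}U^{2b_2}=\kappa$ with $\kappa$ constant. For a one-parameter symmetry group, $\delta=1$ and at the identity one has $X=x$, $U^{11}=U^{22}=1$, $\kappa=1$; differentiating the constraint in the group parameter at $\varepsilon=0$ and reading off $\xi_x=\p_x\p_\varepsilon X|_0$, $\eta^1=\p_\varepsilon U^{11}|_0$, $\theta^1=\p_\varepsilon U^{22}|_0$ gives $\xi_x+\eta^1+\theta^1=0$, i.e.\ $\eta^1+\theta^1=-\xi_x$. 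Through the proposition this again forces $\widehat\lambda=\tau_t$.

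I expect the main obstacle to be conceptual rather than computational: making the second route rigorous requires justifying that a symmetry of $\mathcal L$ really lifts to an admissible transformation of the coupled system~\eqref{EqAdjointVariationalPrincipleSystemOfLPE} (so that~\eqref{EqProlongationOfAdmTransOfAdjointVariationalPrincipleSystemOfLPE} is available) and that the $v$-prolongation is the trivial one, whereas in the first route the only nonroutine ingredient beyond operator algebra is the appeal to $\tau=\tau(t)$. Once $\widehat\lambda=\tau_t$ is in hand, every remaining clause of the corollary, including the membership $\bar Q\in\mathfrak g^{\rm ess}(\mathcal L\cap\mathcal L^*)$ and the variational symmetry of $vL^1u$ (or $uL^2v$), is inherited verbatim from Proposition~\ref{PropositionOnAdjointVariationalPrincipleForEssentialSymsOfLinEqs}.
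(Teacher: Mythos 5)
Your proposal is correct and takes essentially the same route as the paper: the corollary is the specialization of Proposition~\ref{PropositionOnAdjointVariationalPrincipleForEssentialSymsOfLinEqs} to class~\eqref{EqGenLPE}, with the single new ingredient $\widehat\lambda=\tau_t$, which the paper asserts directly (your commutator bookkeeping together with $\tau=\tau(t)$ simply fills in that assertion) and also derives, exactly as in your alternative route, as an infinitesimal consequence of~\eqref{EqProlongationOfAdmTransOfAdjointVariationalPrincipleSystemOfLPE} using that any symmetry transformation of an equation from class~\eqref{EqGenLPE} is an admissible transformation in this class. The only caveats are cosmetic: the prolongation to~$v$ is not the identity but the one dictated by~\eqref{EqProlongationOfAdmTransOfAdjointVariationalPrincipleSystemOfLPE} (normalized by $\kappa=1$ along the one-parameter family, which removes the constant that would otherwise appear when differentiating in~$\varepsilon$), and this normalization ambiguity only shifts $\theta^1$ by multiples of the trivial symmetry $v\p_v$.
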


Using equivalence transformations, we can gauge the arbitrary elements~$A$, $B$ and~$C$ in different ways 
analogous to gauging them in class~\eqref{EqGenLPE}. 

The gauge~$A=1$ preserves all the normalization properties, relations between 
equivalence groups etc.\ in both the inhomogeneous and homogeneous cases. 
Under the gauge~$A=1$ we obtain the condition $\delta X^2=T_t$. 
Therefore, the constant~$\delta$ becomes coupled with the sign of~$T_t$ (i.e., $\delta=\sign T_t$) 
and $X=\delta'|T_t|^{1/2}x+\zeta(t)$, where $\delta'=\pm1$. Here $\zeta$ is an arbitrary smooth function of~$t$. 
The discrete extended equivalence transformation transposing the dependent variables and equations 
is replaced by its composition with the discrete transformation of alternating the sign of~$t$, 
which also is an extended equivalence transformation. 

The further gauging of $B$ to $0$ gives the conditions 
\[
\frac{U^{ab_a}_x}{U^{ab_a}}=-\frac{\delta\varepsilon_a}2\frac{X_t}{X_x}, 
\quad\mbox{i.e.,}\quad 
U^{ab_a}=\theta^a(t)\exp\left(-\varepsilon_a\frac{T_{tt}}{8T_t}x^2
-\frac{\varepsilon_a\delta}{2\delta'}\frac{\zeta_t}{|T_t|^{1/2}}x\right).
\]
Here $\theta^1$ and $\theta^2$ are arbitrary smooth nonvanishing functions of~$t$. 
A new nuance is that under the gauge $(A,B)=(1,0)$ the equivalence transformations involving the transposition 
of the dependent variables and equations reduce to usual ones since they contain no derivatives of arbitrary elements. 
In view of this, the subclass of class~\eqref{EqInhomAdjointVariationalPrincipleSystemOfLPE} 
with the gauge $(A,B)=(1,0)$ is normalized in the usual sense. 
The corresponding subclass of homogeneous systems~\eqref{EqAdjointVariationalPrincipleSystemOfLPE} 
is semi-normalized in the usual sense.
To avoid the transposition transformations in this case, 
statements on relations between equivalence groups should be formulated in other terms, 
e.g., for continuous equivalence groups. 

Note that point symmetry groups of some systems from 
class~\eqref{EqAdjointVariationalPrincipleSystemOfLPE}/\eqref{EqInhomAdjointVariationalPrincipleSystemOfLPE} 
contain transformations involving the transposition of~$u$ and~$v$. 
Such groups cannot be projected to the symmetry groups of single equations in contrast to the Lie symmetry groups. 
For example, any system \eqref{EqAdjointVariationalPrincipleSystemOfLPE} with 
$A=1$, $B=0$ and $C_t=0$ is invariant with respect to the transformation 
$\tilde t=-t$, $\tilde x=x$, $\tilde u=v$, $\tilde v=u$.

The gauge~$C=0$ determines the subclasses of the systems of Kolmogorov and Fokker--Planck 
equations associated with each other. The reduced form of such systems is given by the gauge $(A,C)=(1,0)$. 
Both the gauges break the normalization properties. 
Nevertheless, the admissible transformations in these subclasses can be described. 
They should satisfy the conditions $L^{b_a}(1/U^{ab_a})=0$.
The `essential' admissible transformations (for which $U^{a0}=0$) are obtained as the prolongations of the `essential'
admissible transformations of the Kolmogorov (or Fokker--Planck) equations
according to condition~\eqref{EqProlongationOfAdmTransOfAdjointVariationalPrincipleSystemOfLPE}. 
Conversely, the `essential' admissible transformations in the class of the Kolmogorov (or Fokker--Planck) equations
are the projections of the `essential' admissible transformations of the associated systems, 
which do not involve the transposition of the dependent variables and equations, 
to the corresponding sets of variables and arbitrary elements. 
Therefore, the sets of the `essential' admissible transformations 
of the Kolmogorov and Fokker--Planck equations are similar. 
This implies the following statement in view of Corollary~\ref{CorollaryOnClassificationOfLPEsC0} and 
Proposition~\ref{PropositionOnAdjointVariationalPrincipleForEssentialSymsOfLinEqs}. 

Consider the class of Fokker--Planck equations having the form
\begin{equation}\label{EqFokkerPlanck}
v_t=(A(t,x)v_x)_x-(B(t,x)v)_x.
\end{equation}
(For the form to be canonical, we alternate the sign of $t$.)

\begin{corollary}\label{CorollaryOnClassificationOfLPEsInFokkerPlankForm}
The kernel Lie algebra of class~\eqref{EqFokkerPlanck} is $\langle v\p_v\rangle$.
Any equation from this class is invariant with respect to the operators $f\p_v$,
where the parameter-function $f=f(t,x)$ runs through the solution set of this equation. 
Any equation from class~\eqref{EqFokkerPlanck} is reduced by a point transformation 
to an equation with $A=1$ from the same class.
Up to point transformations, all possible cases of extension of the maximal 
Lie invariance algebras in class~\eqref{EqFokkerPlanck} are exhausted by the following ones 
(in all the cases $A=1$; the values of~$B$ are given together with the corresponding maximal Lie invariance algebras):

\vspace{1.5ex}

$\makebox[6mm][l]{\rm 1.}
B=B(x) \colon\quad \langle\p_t,\,v\p_v,\,f\p_v\rangle;$

\vspace{1.5ex}

$\makebox[6mm][l]{\rm 2.}
B=\nu x^{-1},\ \nu\geqslant1,\ \nu\ne2 \colon\quad \langle\p_t,\, D,\, \Pi+2\nu tv\p_v,\, v\p_v,\, f\p_v\rangle;$

\vspace{1.5ex}

$\makebox[6mm][l]{\rm 3.}
B=x^{-1}\bigl(1-2\varkappa\tan(\varkappa\ln|x|)\bigr),\ \varkappa\ne0 \colon\quad 
\langle\p_t,\, 2D-(xB-2)v\p_v,\, \Pi+2txBv\p_v,\, v\p_v,\, f\p_v\rangle;$

\vspace{1.5ex}

$\makebox[6mm][l]{\rm 4.}
B=0 \colon\quad \langle\p_t,\, \p_x,\, G,\, D,\, \Pi,\, v\p_v,\, f\p_v\rangle$.

\vspace{1.5ex}

\noindent
Here $D=2t\p_t+x\p_x,\ \Pi=4t^2\p_t+4tx\p_x-(x^2+2t)v\p_v,\ G=2t\p_x-xv\p_v.$
\end{corollary}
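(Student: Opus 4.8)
The plan is to obtain this classification not by a fresh Lie computation but by transporting the already-established group classification of the Kolmogorov subclass (Corollary~\ref{CorollaryOnClassificationOfLPEsC0}) through the adjoint correspondence, exactly as signalled in the sentence preceding the statement. The first step is to identify class~\eqref{EqFokkerPlanck} as the image of the subclass of~\eqref{EqGenLPE} with $C=0$ under the map sending an equation to its formal adjoint and then reversing the sign of~$t$. Concretely, if $\mathcal L$ is the Kolmogorov equation $u_t=Au_{xx}+Bu_x$, its adjoint~$\mathcal L^*$ is $w_t+(Aw)_{xx}-(Bw)_x=0$ by~\eqref{EqAdjLPE}; this is backward parabolic, and the substitution $v(t,x)=w(-t,x)$ rewrites it, after collecting $(Av)_{xx}=(Av_x)_x+(A_xv)_x$, as $v_t=(Av_x)_x-((B-A_x)v)_x$, which is the canonical Fokker--Planck form~\eqref{EqFokkerPlanck} (with $A_x=0$ in the normalized case $A=1$, so that the parameter~$B$ is literally preserved). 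This accounts for the parenthetical remark about alternating the sign of~$t$.

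Next I would record the three structural assertions. The reduction of any equation to one with $A=1$ is inherited from the analogous reduction in Corollary~\ref{CorollaryOnClassificationOfLPEsC0}, since, as established immediately above the statement, the essential admissible transformations of the Kolmogorov and Fokker--Planck classes are in one-to-one correspondence. The operators $f\p_v$ with $f$ a solution are present by the linear superposition principle (as in Theorem~\ref{TheoremOnGroupClassificationOfLPEs}), equivalently as the images of the operators $f\p_u$ under the adjoint map. The kernel is $\langle v\p_v\rangle$ rather than $\langle v\p_v,\p_v\rangle$ because, in contrast to the Kolmogorov case, the constant function need not solve~\eqref{EqFokkerPlanck} (it does so only when $B$ is constant), so $\p_v$ is not a universal symmetry; the single kernel element $v\p_v$ arises as the image of $u\p_u$.

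The heart of the argument is transporting the extension cases. For each essential operator $Q=\tau\p_t+\xi\p_x+\eta^1u\p_u$ listed in Corollary~\ref{CorollaryOnClassificationOfLPEsC0}, Corollary~\ref{CorollaryOnAdjointVariationalPrincipleForEssentialSymsOfLPEs} produces the corresponding essential operator $Q^\dag=\tau\p_t+\xi\p_x+\theta^1v\p_v$ of~$\mathcal L^*$ with $\theta^1=-\xi_x-\eta^1$; pushing $Q^\dag$ forward by $t\mapsto-t$ (and rescaling by~$-1$ where convenient) yields the sought symmetry of the Fokker--Planck equation. Because this composite map is a bijection of essential Lie invariance algebras, exhaustiveness and the splitting into the same four $B$-cases transfer automatically, and one only has to read off the explicit operators. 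Carrying this out gives $\p_t\mapsto\p_t$, $u\p_u\mapsto-v\p_v\sim v\p_v$, and, in the relevant cases, $D\mapsto D$, $\Pi\mapsto\Pi$, $G\mapsto G$, while in cases~2 and~3 the $x$- or $t$-dependent $v\p_v$-tails appear, e.g.\ $\Pi-2\nu tu\p_u\mapsto\Pi+2\nu tv\p_v$.

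The main obstacle is precisely this last, sign-sensitive bookkeeping of the $v\p_v$-coefficient. The coefficient $\theta^1$ already carries the factor $-\xi_x$, and under $t\mapsto-t$ an operator $t^k\p_t$ acquires the factor $(-1)^{k+1}$, so that the normalizing rescaling by~$-1$ flips the $v\p_v$-tail for the operators built on $\Pi$ and~$G$ (even leading $t$-weight) but not for those built on~$D$ (odd leading $t$-weight). In case~3, where the tail is a genuine non-constant function of~$x$ and cannot be absorbed into multiples of $v\p_v$, these signs must be tracked with particular care; confirming that the resulting operators close into the stated algebra — conveniently verified by matching commutators against those of the Kolmogorov algebra, which the bijection guarantees to be isomorphic — is the one place where a routine but error-prone calculation cannot be sidestepped.
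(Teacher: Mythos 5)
Your strategy is exactly the paper's: the corollary is obtained there by transporting Corollary~\ref{CorollaryOnClassificationOfLPEsC0} through the adjoint correspondence of Corollary~\ref{CorollaryOnAdjointVariationalPrincipleForEssentialSymsOfLPEs} (i.e.\ Proposition~\ref{PropositionOnAdjointVariationalPrincipleForEssentialSymsOfLinEqs}), combined with the one-to-one correspondence of essential admissible transformations between the Kolmogorov and Fokker--Planck classes stated just before the corollary; your handling of the structural assertions (reduction to $A=1$, the operators $f\p_v$, the kernel $\langle v\p_v\rangle$ versus $\langle u\p_u,\p_u\rangle$) and your parity rules for the time reversal are all correct, and your worked transports in cases 1, 2 and 4 check out.

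However, the one computation you explicitly defer --- the case-3 dilation operator --- is not merely ``routine but error-prone''; it is the decisive step, and carrying it out with your own recipe does \emph{not} reproduce the statement as printed. For $Q=2D-xBu\p_u$ you have $\xi=2x$, $\eta^1=-xB$, hence $\theta^1=-\xi_x-\eta^1=xB-2$, so $Q^\dag=4t\p_t+2x\p_x+(xB-2)v\p_v$ on the adjoint equation; since $D$-type operators are invariant under $t\mapsto-t$ (no normalizing rescaling by $-1$, by your own parity rule), the transported operator is $2D+(xB-2)v\p_v$, whereas the corollary prints $2D-(xB-2)v\p_v$. The plus sign is in fact the correct one: for $v_t=v_{xx}-(Bv)_x$ the determining equation arising from the coefficient of $v_x$ forces, for $\tau=4t$, $\xi=2x$, the condition $2\eta_x=2(xB)_x$, i.e.\ $\eta=xB+\const$; and commutator closure requires it as well, since
\[
[\p_t,\,\Pi+2txBv\p_v]=2\bigl(2D+(xB-2)v\p_v\bigr)+2v\p_v,
\]
which lies in the span of the printed generators only if $xBv\p_v$ were itself a symmetry --- impossible, as $xB$ is non-constant in case 3. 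So your method is sound and coincides with the paper's, but you must actually finish the case-3 bookkeeping rather than defer it: doing so shows that the first operator of case~3 as printed carries a sign slip and should read $2D+(xB-2)v\p_v$ (equivalently $2D+xBv\p_v$ modulo $v\p_v$), which is what your proof, once completed, establishes.
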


Note that the kernel Lie algebra of~\eqref{EqFokkerPlanck} is different 
from the one of the subclass of~\eqref{EqGenLPE} with $C=0$ 
since the prolongation of the operators to~$v$ depends on the arbitrary elements.

Analogously to the `Kolmogorov' form, the group classification of class~\eqref{EqFokkerPlanck} 
with respect to its equivalence group can be derived from the classification presented 
in Corollary~\ref{CorollaryOnClassificationOfLPEsInFokkerPlankForm} 
through extending the classification cases by essential admissible transformations which are not 
generated by the equivalence group.

\section{Potential conservation laws}\label{SectionOnPotCLsLPEs}

As proved in Theorem~\ref{TheoremLocalCLsLPEs},
any (1+1)-dimensional linear second-order parabolic equation possesses an infinite series of local conservation laws.
Studying potential conservation laws of equations from class~\eqref{EqGenLPE},
in view of Note~\ref{NoteOnSufficiencyOfRedFormForInvestigationOfCL} and Proposition~\ref{PropositionOn2DConsLawEquivRelation} 
we may restrict ourselves to equations of the reduced form~\eqref{EqReducedLPE}, i.e., $u_t-u_{xx}+Vu=0$.
Fixing an arbitrary $p\in\mathbb{N}$ and
choosing $p$~linearly independent solutions $\bar\alpha=(\alpha^1,\ldots,\alpha^p)$ of
the associated adjoint equation
\begin{equation}\label{eqAdjEqOfLinParEqInRedForm}
\alpha_t+\alpha_{xx}-V\alpha=0,
\end{equation}
we obtain $p$~linearly independent conservation laws of equation~\eqref{EqReducedLPE} 
with the conserved vectors $(F^s,G^s)=(\alpha^s u,\alpha^s_xu-\alpha^s u_x)$.
(Hereafter the indices~$s$, $\sigma$ and $\varsigma$ run from~1 to $p$. 
Let us also recall that summation over repeated indices is assumed.)

The potentials $\bar v=(v^1,\ldots,v^p)$ introduced with these conservation laws by the formulas
\begin{equation}\label{eqGenPotSysOfParEqInRedForm}
v^s_x=\alpha^s u,\quad v^s_t=\alpha^s u_x-\alpha^s_xu
\end{equation}
are independent in the sense of Definition~\ref{DefinitionOfPotentialDependence} according to the following lemma.

\begin{lemma}\label{LemmaOnIndepPotOfEvolEqs}
For any equation~\eqref{EqReducedLPE} the potentials are locally dependent on the equation manifold
iff the corresponding conservation laws and, therefore, 
the corresponding solutions of equation~\eqref{eqAdjEqOfLinParEqInRedForm} are linearly dependent.
\end{lemma}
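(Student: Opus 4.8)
The plan is to reduce the statement to a claim about the solutions $\alpha^1,\dots,\alpha^p$ of the adjoint equation~\eqref{eqAdjEqOfLinParEqInRedForm}. By Theorem~\ref{TheoremLocalCLsLPEs} and the identification of $\Ch_{\rm f}(\mathcal L)$ with the solution space of that equation, linear dependence of the associated conservation laws coincides with linear dependence of the tuple $\bar\alpha=(\alpha^1,\dots,\alpha^p)$, so it suffices to prove that $\bar v=(v^1,\dots,v^p)$ is locally dependent (in the sense of Definition~\ref{DefinitionOfPotentialDependence}) iff $\bar\alpha$ is linearly dependent. I read local dependence as the existence of a function $H$ of $t$, $x$, $u_{(r')}$ and $\bar v$ with $H_{v^s}\ne0$ for some~$s$ that is constant along every solution of~\eqref{eqGenPotSysOfParEqInRedForm}; the freedom of adding constants to the potentials then lets one normalize this constant to zero. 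The easy implication I would verify directly: if $c_s\alpha^s=0$ with the $c_s$ not all zero, then $H=c_sv^s$ satisfies, on solutions, $D_xH=c_s\alpha^su=0$ and $D_tH=c_s(\alpha^su_x-\alpha^s_xu)=u_x\,c_s\alpha^s-u\,c_s\alpha^s_x=0$, so $H$ is constant along solutions and $H_{v^s}=c_s\ne0$ for some~$s$.

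For the converse I would argue by contraposition, assuming $\alpha^1,\dots,\alpha^p$ linearly independent and showing that no admissible $H$ can exist. Suppose $H$ is constant on the solution manifold of~\eqref{eqGenPotSysOfParEqInRedForm}, i.e.\ $D_xH=0$ and $D_tH=0$ there. On this manifold I would use that $t$, $x$, the pure $x$-derivatives $u_k=\p^ku/\p x^k$ and the values $v^1,\dots,v^p$ are independent coordinates, so the two relations split with respect to the $u_k$. Inserting $v^s_x=\alpha^su$ gives
\[
D_xH=H_x+H_{u_k}u_{k+1}+(H_{v^s}\alpha^s)\,u=0.
\]
Descending in the order of the derivatives, the coefficient of $u_{r'+1}$ forces $H_{u_{r'}}=0$, and iterating yields $H_{u_k}=0$ for all $k\geqslant0$; hence $H=H(t,x,\bar v)$, and the surviving terms give $H_x=0$ and $H_{v^s}\alpha^s=0$.

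Next I would bring in $D_tH=0$. With $v^s_t=\alpha^su_x-\alpha^s_xu$ this reads $H_t+H_{v^s}(\alpha^su_x-\alpha^s_xu)=0$, and splitting with respect to $u$ and $u_x$ gives $H_t=0$, once more $H_{v^s}\alpha^s=0$, and $H_{v^s}\alpha^s_x=0$. Thus $H=H(\bar v)$, and for each fixed $\bar v$ the constants $H_{v^s}(\bar v)$ satisfy $H_{v^s}(\bar v)\,\alpha^s(t,x)\equiv0$. Linear independence of $\alpha^1,\dots,\alpha^p$ then forces $H_{v^s}(\bar v)=0$ for every $s$ and every $\bar v$, contradicting $H_{v^s}\ne0$. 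This proves the contrapositive and, with the first paragraph, the lemma.

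The step I expect to be the main obstacle is justifying the splitting, i.e.\ that $t$, $x$, the $u_k$ and the $v^s$ really are functionally independent on the solution manifold of the potential system. This rests on the evolutionary normal form of~\eqref{EqReducedLPE} --- modulo the equation every $t$- and mixed derivative of $u$ reduces to a pure $x$-derivative, so the $u_k$ coordinatize the $u$-jet --- together with the observation that each potential contributes exactly one free fibre coordinate, since $v^s_x$ and $v^s_t$ (and hence all higher derivatives of $v^s$) are already fixed by~\eqref{eqGenPotSysOfParEqInRedForm}. Granting that there are enough solutions to realize arbitrary finite jets $(u,u_x,u_{xx},\dots)$ at a point with the $v^s$ prescribed independently, the identities $D_xH=0$ and $D_tH=0$ become polynomial identities in these coordinates, which is precisely what the coefficient-splitting above requires.
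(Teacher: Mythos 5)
Your proof is correct, and it takes a genuinely different route from the paper's. The paper argues by contradiction: after disposing of $p=1$ separately (there, local dependence means local triviality of $v^1$ and hence triviality of the single conservation law), it solves the dependence relation for one potential, writing $v^1=P(t,x,\check v,u_{(r)})$ with $\check v=(v^2,\dots,v^p)$, and then applies \emph{only} the total derivative $D_x$, i.e., only the equations $v^s_x=\alpha^su$. Splitting gives $P_{u_k}=0$, $P_x=0$ and $\alpha^1=P_{v^{s'}}\alpha^{s'}$, where the coefficients $P_{v^{s'}}$ still depend on~$t$ (and on~$\check v$); this is only a linear dependence over the ring of smooth functions of~$t$, so the paper must invoke Corollary~\ref{CorollaryOnLinearDependenceOfSolutionsOfLEvolE}, i.e.\ the Wronskian Lemma~\ref{LemmaOnLinearDependenceOfSolutionsOfLEvolE}, to upgrade it to genuine linear dependence. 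You instead keep the implicit relation $H$ (no solving for $v^1$) and use \emph{both} halves of the potential system: the additional condition $D_tH=0$ yields $H_t=0$ (together with $H_{v^s}\alpha^s_x=0$), so after your splitting $H$ depends on $\bar v$ alone and, for each fixed $\bar v$, the coefficients $H_{v^s}(\bar v)$ are honest constants; ordinary linear independence of $\alpha^1,\dots,\alpha^p$ then finishes the proof. This buys you two things: the Wronskian machinery is not needed at all for this lemma (removing the paper's forward reference ``see below'' to Corollary~\ref{CorollaryOnLinearDependenceOfSolutionsOfLEvolE}, although the paper needs Lemma~\ref{LemmaOnLinearDependenceOfSolutionsOfLEvolE} later anyway, e.g.\ in the proof of Proposition~\ref{PropositionOnSymCondOfExistOfPotSymOfLPE}), and the case $p=1$ is handled uniformly rather than as a separate case. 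What the paper's choice buys is the sharper observation that the contradiction already follows from the spatial equations $v^s_x=\alpha^su$ alone, without touching the $t$-part of the system. Finally, both proofs rest on the same implicit assumption that the jet variables $t$, $x$, $u_k$ together with the values $v^1,\dots,v^p$ are freely realizable on the solution manifold, so that coefficient-splitting is legitimate; the paper uses this silently, while you flag it and sketch a justification, so your argument is not less rigorous on that point.
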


\begin{proof}
The `if' part of the claim being obvious, we only prove the converse statement, arguing by contradiction. 
Suppose that the potentials~$v^1$, \dots, $v^p$ introduced with the linearly independent solutions 
$\alpha^1$, \dots, $\alpha^p$ of~\eqref{eqAdjEqOfLinParEqInRedForm} are locally dependent. 
In case $p=1$ we have local triviality of $v^1$ as a potential,
i.e., $v^1$ can be expressed in terms of local variables and, hence, the corresponding conservation law is trivial.
Therefore it suffices to investigate the case where the number of independent conservation laws
is greater than~1. 

Without loss of generality we may assume that
there exist $r\in\mathbb N$ and a fixed function $P$ of~$t$, $x$, $\check v =(v^2,\ldots v^p)$ and $u_{(r)}$ such that
$v^1=P(t,x,\check v,u_{(r)})$ for any solution of the combined system determining
the whole set of potentials~$v^1$, \ldots, $v^p$ (up to gauge transformations, i.e., up to adding negligible constants to the potentials).
In view of equation~\eqref{EqReducedLPE} and its differential consequences,
we may suppose that $P$ depends only on $t$, $x$, $\check v$ and
$u_k=\partial^k u/\partial x^k$, $k=\overline{0,r'},$ where $r'\leqslant 2r$.
Let us apply the operator~$D_x$ to the condition $v^1=P(t,x,\check v,u,u_1,\ldots,u_{r'})$:
$v^1_x=P_x+P_{v^{s'\!}}v^{s'\!}_x+P_{u_k}u_{k+1}$. (The index~$s'$ runs from~2 to~$p$.)
Taking into account the equations $v^s_x=\alpha^su$,
we split the differentiated condition with respect to $u_k$ step-by-step in reverse order,
beginning with the highest derivative. As a result, we obtain $P_{u_k}=0$, $P_x=0$ and $\alpha^1=P_{v^{s'\!}}\alpha^{s'\!}$, i.e., 
the functions~$\alpha^1$, \dots, $\alpha^p$ are linearly dependent over the ring of smooth functions of~$t$.
In view of Corollary~\ref{CorollaryOnLinearDependenceOfSolutionsOfLEvolE} (see below), this means that 
the functions~$\alpha^1$, \dots, $\alpha^p$ are linearly dependent in the usual sense,
contradicting the assumed independence of the conservation laws.
\end{proof}

Let $W(\varphi^1,\ldots,\varphi^l)$ denote the Wronskian of the functions $\varphi^1$, \ldots, $\varphi^l$
with respect to the variable~$x$, i.e. $W(\varphi^1,\ldots,\varphi^l)=\det(\varphi^j_{i-1})_{i,j=1}^{\;l}$.

\begin{lemma}\label{LemmaOnLinearDependenceOfSolutionsOfLEvolE}
The solutions~$\varphi^1=\varphi^1(t,x)$, \ldots, $\varphi^l=\varphi^l(t,x)$ of
a $(1+1)$-dimensional linear evolution equation $L\varphi=0$ of arbitrary order
are linearly dependent iff $W(\varphi^1,\ldots,\varphi^l)=0$.
\end{lemma}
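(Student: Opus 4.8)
The plan is to prove the nontrivial implication ($W\equiv0\Rightarrow$ dependence) by induction on $l$, the reverse implication being immediate: if $\sum_j c_j\varphi^j\equiv0$ with constants $c_j$ not all zero, then differentiating this relation $0,1,\dots,l-1$ times in~$x$ shows that the columns of the Wronskian matrix satisfy the same linear relation, whence $W(\varphi^1,\dots,\varphi^l)\equiv0$. The base case $l=1$ reads $W(\varphi^1)=\varphi^1$, so $W\equiv0$ forces $\varphi^1\equiv0$, i.e.\ dependence of a single function. In the inductive step I assume $W(\varphi^1,\dots,\varphi^l)\equiv0$. If already $W(\varphi^1,\dots,\varphi^{l-1})\equiv0$, the inductive hypothesis yields a constant relation among $\varphi^1,\dots,\varphi^{l-1}$, hence among all $l$ functions; so I may assume $W(\varphi^1,\dots,\varphi^{l-1})\not\equiv0$ and work on the open set $\Omega$ where it is nonzero, on which $\varphi^1,\dots,\varphi^{l-1}$ are, for each fixed~$t$, linearly independent in~$x$.

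First I would recover the coefficients fiberwise in~$t$. Fix a point of~$\Omega$ and a small box around it. For each fixed~$t$ the classical one-variable theory of Wronskians (applicable since $W(\varphi^1,\dots,\varphi^{l-1})\neq0$ there), combined with $W(\varphi^1,\dots,\varphi^l)=0$, produces a representation
\[
\varphi^l(t,x)=\sum_{j=1}^{l-1}c_j(t)\,\varphi^j(t,x)
\]
on the box, where the $c_j(t)$ are obtained by Cramer's rule and hence are smooth in~$t$. The crux of the algebraic part is to show that the $c_j$ are actually constant, and here the evolution structure enters. Writing the equation as $\varphi_t=K\varphi$ with $K$ a linear differential operator acting only in~$x$, I apply $\p_t$ to the representation and use $\varphi^j_t=K\varphi^j$ together with the fact that $K$ commutes past the $x$-independent coefficients $c_j(t)$:
\[
0=\varphi^l_t-K\varphi^l=\sum_{j=1}^{l-1}c_j'(t)\,\varphi^j+\sum_{j=1}^{l-1}c_j(t)\bigl(\varphi^j_t-K\varphi^j\bigr)=\sum_{j=1}^{l-1}c_j'(t)\,\varphi^j.
\]
Since $\varphi^1,\dots,\varphi^{l-1}$ are linearly independent in~$x$ on the box, this forces $c_j'(t)\equiv0$, so each $c_j$ is a genuine constant and $\varphi^l=\sum_j c_j\varphi^j$ holds with constant coefficients on a nonempty open set.

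The main obstacle is upgrading this local constant-coefficient relation to a global one, i.e.\ showing that $\psi:=\varphi^l-\sum_j c_j\varphi^j$, which solves $L\psi=0$ and vanishes on a nonempty open set, vanishes on the whole connected domain. This is precisely a unique-continuation statement, and it cannot hold for a completely arbitrary evolution equation (a transport equation with compactly supported profiles already violates it); it does hold for the equations relevant here because their solutions are real-analytic in~$x$ for each fixed~$t$ and the associated Cauchy problem in~$t$ enjoys uniqueness. Concretely, analyticity in~$x$ propagates the vanishing of $\psi(t,\cdot)$ from an $x$-interval to the entire $x$-line for every~$t$ in an open time interval, and forward as well as backward uniqueness for $L\psi=0$ then spreads the vanishing to all admissible~$t$, giving $\psi\equiv0$ and the desired constant linear relation among $\varphi^1,\dots,\varphi^l$. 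I expect this propagation to be the delicate step, since it is exactly where the analytic regularity of solutions of the parabolic (adjoint) equation, rather than merely the algebra of Wronskians, is indispensable.
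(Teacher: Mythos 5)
Your algebraic core coincides with the paper's entire proof. The paper argues exactly as you do: assuming (after passing to a smaller $l$ if necessary) that $W(\varphi^1,\dots,\varphi^{l-1})\ne0$, classical one-variable Wronskian theory gives $\varphi^l=\sum_j C^j(t)\varphi^j$ with $C^j$ smooth in~$t$, and acting with $L=\partial_t-K$ on this identity yields $\sum_j C^j_t\varphi^j=0$, whence $C^j_t=0$ by fiberwise independence. The paper stops there: it states the representation $\varphi^l=C^k\varphi^k$ as if it held globally and never addresses the locus where the sub-Wronskian (or, in the case $l=2$, the function $\varphi^1$ itself) vanishes. So the local-to-global step you single out is a genuine omission in the paper, not an over-scruple on your part, and your transport remark makes this concrete: for $\varphi_t=\varphi_x$ the solutions $\varphi^i=f_i(x+t)$ with $f_1,f_2$ nonzero of disjoint compact supports are linearly independent and satisfy $W(\varphi^1,\varphi^2)\equiv0$, so the lemma is actually false in the stated generality (``arbitrary linear evolution equation''), and any purported proof of it must break precisely at the step you isolate.

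However, your completion of the argument does not close the gap either. The minor issue is that the relevant equations (the adjoint equations of \eqref{EqGenLPE}, \eqref{EqReducedLPE}) have coefficients that are only smooth, so spatial analyticity of solutions is not available and one would have to replace it by Carleman-type spatial unique continuation; that is repairable. The fatal issue is the time direction: forward and backward uniqueness for $L\psi=0$ fail for unrestricted smooth solutions even of the heat equation. If $g$ is a nonzero compactly supported function of Gevrey order $s\in(1,2)$, the Tychonov series $\varphi(t,x)=\sum_{k\geqslant0}g^{(k)}(t)\,x^{2k}/(2k)!$ converges, together with all term-by-term derivatives, locally uniformly to a smooth solution of $\varphi_t=\varphi_{xx}$ on $\mathbb R^2$ that vanishes identically outside the slab $\mathrm{supp}\,g\times\mathbb R$. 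Taking $g_1,g_2$ with disjoint supports produces two linearly independent solutions of the heat equation whose Wronskian vanishes identically; in particular the lemma fails even in the parabolic setting unless the class of solutions is restricted (e.g.\ to solutions of moderate growth, for which backward uniqueness, and hence your continuation step, is a theorem). Such a hypothesis appears neither in your proof nor in the paper, so the propagation you call ``the delicate step'' is, as stated, not delicate but impossible: your diagnosis of where the difficulty sits is exactly right, but the proof can only be finished after the admissible solutions (or the formulation of the lemma) are restricted.
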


\begin{proof}
Since the equation $L\varphi=0$ is linear and evolutionary, the operator $L$ is the sum of $\p_t$ and
a linear differential operator with respect to $x$ whose coefficients depend on $t$ and $x$.
If the functions $\varphi^1$, \ldots, $\varphi^l$ are linearly dependent then
the equality $W(\varphi^1,\ldots,\varphi^l)=0$ is obvious.
Let us prove the converse statement.

In the case $l=2$ the condition $W(\varphi^1,\varphi^2)=0$ implies $\varphi^2=C\varphi^1$,
where $C$ is a smooth function of~$t$. Acting on the latter equality with the operator $L$,
we obtain $C_t\varphi^1=0$, i.e. $C=\const$ or $\varphi^1=0$.
In any case the functions~$\varphi^1$ and~$\varphi^2$ are linearly dependent.

Suppose $W(\varphi^1,\ldots,\varphi^l)=0$.
Without loss of generality we can assume $W(\varphi^1,\ldots,\varphi^{l-1})\ne 0$.
(Otherwise we consider a smaller value of $l$.) Then
$\varphi^l=C^k\varphi^k,$
where $C^k$ are smooth functions of $t$ and the superscript $k$ runs from 1 to $l-1$.
Acting on the latter equality with the operator $L$ results in the equation $C^k_t\varphi^k=0$
which implies, in view of the condition $W(\varphi^1,\ldots,\varphi^{l-1})\ne 0$,
$C^k=\const$, which concludes the proof.
\end{proof}

\begin{corollary}\label{CorollaryOnLinearDependenceOfSolutionsOfLEvolE}
Solutions of a $(1+1)$-dimensional linear evolution equation of an arbitrary order are linearly dependent 
in the usual sense iff 
they are linearly dependent over the ring of smooth functions of~$t$.
\end{corollary}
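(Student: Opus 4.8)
The plan is to deduce the statement from Lemma~\ref{LemmaOnLinearDependenceOfSolutionsOfLEvolE}, which already characterizes linear dependence in the usual (constant-coefficient) sense by the vanishing of the Wronskian $W(\varphi^1,\ldots,\varphi^l)$ in~$x$. One implication is immediate: constants are in particular smooth functions of~$t$, so dependence in the usual sense trivially gives dependence over the ring of smooth functions of~$t$. The whole content lies in the converse, and my aim is to reduce it to showing $W(\varphi^1,\ldots,\varphi^l)\equiv0$, after which Lemma~\ref{LemmaOnLinearDependenceOfSolutionsOfLEvolE} closes the argument.

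So suppose $C^k(t)\varphi^k=0$ with smooth $C^k=C^k(t)$ not all identically zero. Since the coefficients are free of~$x$, differentiating the relation $j$ times in~$x$ yields $C^k(t)\,\partial_x^{\,j}\varphi^k=0$ for every~$j$; thus at each $(t,x)$ the vector $(C^k(t))_k$ annihilates the columns of the Wronskian matrix $(\partial_x^{\,i-1}\varphi^j)$, so that $W(\varphi^1,\ldots,\varphi^l)(t,x)=0$ at every~$t$ with $(C^k(t))_k\neq0$. To see that the dependence can be taken with constant coefficients on the set where it is active, I would reuse the device from Lemma~\ref{LemmaOnLinearDependenceOfSolutionsOfLEvolE}: writing $L=\partial_t+M$ with $M$ differentiating only in~$x$, and using $L\varphi^k=0$, application of~$L$ to $C^k\varphi^k=0$ annihilates the $M$-part and produces the companion relation $C^k_t\varphi^k=0$. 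Passing to a relation $\sum_{k=1}^rC^k\varphi^k=0$ with the least number~$r$ of nonzero coefficients and combining it with its companion so as to cancel the $\varphi^r$-term gives a relation among $\varphi^1,\ldots,\varphi^{r-1}$; minimality forces its coefficients $C^r_tC^k-C^rC^k_t$ to vanish, whence $C^k/C^r$ is locally constant on $\{C^r\neq0\}$ and the $\varphi^k$ obey a genuine constant-coefficient relation there.

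It remains to globalize, and this is where I expect the real difficulty to sit. What the previous step delivers is $W\equiv0$ on the open set where the coefficient vector does not vanish, equivalently a constant-coefficient dependence valid only on a sub-strip $J\times\mathbb R$ in~$t$. If the locus where all $C^k$ vanish has empty interior, then $\{t:(C^k(t))_k\neq0\}$ is dense, $W$ vanishes on a dense set, and continuity of~$W$ forces $W\equiv0$ on the whole domain, as required. The genuinely delicate case is when the coefficients vanish on a full $t$-interval: there I would apply Lemma~\ref{LemmaOnLinearDependenceOfSolutionsOfLEvolE} on the sub-strip to obtain an honest constant-coefficient relation $\lambda^k\varphi^k=0$ on $J\times\mathbb R$, so that $\psi:=\lambda^k\varphi^k$ is a solution of $L\psi=0$ vanishing on an open strip. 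The crux is then a unique-continuation statement: a solution of the evolution equation vanishing on $J\times\mathbb R$ must vanish identically. For the parabolic equations that are the subject of this paper the solutions are analytic in~$x$ and enjoy forward and backward uniqueness in~$t$, so $\psi\equiv0$ and hence $W\equiv0$ globally; I regard verifying this propagation as the main obstacle, the remainder being the Wronskian bookkeeping already set up in Lemma~\ref{LemmaOnLinearDependenceOfSolutionsOfLEvolE}.
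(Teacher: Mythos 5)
Your first two paragraphs reproduce what the paper intends: Corollary~\ref{CorollaryOnLinearDependenceOfSolutionsOfLEvolE} carries no separate proof precisely because it is meant to be read off from Lemma~\ref{LemmaOnLinearDependenceOfSolutionsOfLEvolE} along exactly your lines, and your companion relation $C^k_t\varphi^k=0$, produced by applying $L=\p_t+M$ so that the $x$-differentiations die on solutions, is the very device used inside the proof of that lemma (your minimality argument is its induction step in mild disguise). So up to the degenerate case you isolate, your route and the paper's coincide; the paper simply never admits that this case exists.

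The genuine gap is that your proposed repair of the degenerate case cannot work. The unique-continuation statement you invoke --- a solution vanishing on a strip $J\times\mathbb R$ vanishes identically --- is not just unverified but false in the generality of the corollary, which concerns arbitrary smooth solutions of an arbitrary linear evolution equation. For the heat equation $u_t=u_{xx}$ (a member of the class), the Tychonoff series $u=\sum_{k\geqslant0}g^{(k)}(t)\,x^{2k}/(2k)!$ with $g(t)=e^{-1/t^2}$ for $t<0$ and $g(t)=0$ for $t\geqslant0$ converges to a nonzero smooth solution vanishing identically on the half-plane $t\geqslant0$; adding a time-shifted solution of the opposite type yields solutions vanishing exactly on a strip, so both forward and backward uniqueness fail without growth restrictions. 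Worse, such solutions refute the corollary itself in the unrestricted smooth setting: if $\varphi^1$ vanishes for $t\geqslant0$ and $\varphi^2$ vanishes for $t\leqslant0$, then $C^1\varphi^1+C^2\varphi^2\equiv0$ with nonzero smooth $C^1,C^2$ supported in $\{t>0\}$ and $\{t<0\}$ respectively, yet $\varphi^1,\varphi^2$ are linearly independent over $\mathbb R$. Hence no argument can close your ``delicate case''; it has to be excluded by a hypothesis (analyticity in $t$, or restriction to a solution class with forward and backward well-posedness) that the paper leaves tacit --- the same tacit restriction already needed for the lemma's step $W(\varphi^1,\varphi^2)=0\Rightarrow\varphi^2=C(t)\varphi^1$, which likewise fails for merely smooth functions. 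In short, your proof is correct exactly where the paper's is, and the obstacle you flagged is real, but it is an obstruction to the statement as written rather than a step awaiting a cleverer argument.
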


For equation~\eqref{EqReducedLPE} the complete set of first level potential conservation laws
is indeed the union of conservation laws of systems~\eqref{eqGenPotSysOfParEqInRedForm}
corresponding to all possible values of~$p$ and $p$-tuples~$\bar\alpha$.

\begin{theorem}\label{TheoremOnPotConsLawsOfLPEs}
Any local conserved vector of system~\eqref{eqGenPotSysOfParEqInRedForm} is equivalent on the manifold 
of system~\eqref{eqGenPotSysOfParEqInRedForm} to a local conserved vector of equation~\eqref{EqReducedLPE}.
\end{theorem}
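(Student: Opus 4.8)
The plan is to apply the direct method to an arbitrary local conserved vector $(F,G)$ of the potential system~\eqref{eqGenPotSysOfParEqInRedForm} and to show, modulo the equivalence of conserved vectors, that its dependence on the potentials $\bar v=(v^1,\ldots,v^p)$ can be completely removed. First I would observe that on the manifold of~\eqref{eqGenPotSysOfParEqInRedForm} every derivative of each $v^s$ is expressed through local variables by the potential relations $v^s_x=\alpha^s u$, $v^s_t=\alpha^s u_x-\alpha^s_x u$ and their differential consequences; hence one may assume $F$ and $G$ to be functions of $t$, $x$, $\bar v$ and the $x$-derivatives $u_k=\p^k u/\p x^k$ only. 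Repeating the order-reduction argument of Lemma~\ref{LemmaOnOrderOfConsLawsOfLPEs} (the potentials carry no derivatives and so do not interfere with peeling off the highest $u_k$), I would bring $(F,G)$ to the canonical form with $F=F(t,x,\bar v,u)$ and $G=G(t,x,\bar v,u,u_x)$, keeping track of the admissible gauge freedom of adding $(D_xH,-D_tH)$ for a differential function $H=H(t,x,\bar v,u_{(r)})$.

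Next I would substitute this canonical form into $D_tF+D_xG=0$, expand the total derivatives on the manifold of~\eqref{eqGenPotSysOfParEqInRedForm} (so that $u_t=u_{xx}-Vu$, $v^s_x=\alpha^s u$, $v^s_t=\alpha^s u_x-\alpha^s_x u$), and split the resulting identity successively with respect to the unconstrained jet variables. Splitting off $u_{xx}$ gives $G=-F_uu_x+G^1(t,x,\bar v,u)$; splitting with respect to the powers of $u_x$ forces $F_{uu}=0$, so that $F=F^1(t,x,\bar v)u+F^0(t,x,\bar v)$ and $G^1$ becomes affine in $u$ as well; and splitting with respect to the powers of $u$ yields a closed system of determining equations for the coefficient functions $F^1$, $F^0$, $G^{10}$ of $(t,x,\bar v)$. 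The characteristic feature of these equations is that they couple the $\bar v$-derivatives of the coefficients with the functions $\alpha^s(t,x)$, $\alpha^s_x(t,x)$ and the products $\alpha^s\alpha^{s'}(t,x)$.

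The decisive step is to exploit this coupling together with the gauge freedom in order to annihilate the $\bar v$-dependence. I would use that each $\p_{v^s}$ is a Lie symmetry of~\eqref{eqGenPotSysOfParEqInRedForm}, so that by~\eqref{eq.inf.tr.var.cons.law} the pair $(\p_{v^s}F,\p_{v^s}G)$ is again a conserved vector of the same canonical type; this produces a descent in the $\bar v$-variables and reduces the analysis to the leading coefficients. Invoking the linear independence of $\alpha^1,\ldots,\alpha^p$ (equivalently, $W(\alpha^1,\ldots,\alpha^p)\ne0$ by Lemma~\ref{LemmaOnLinearDependenceOfSolutionsOfLEvolE} and Corollary~\ref{CorollaryOnLinearDependenceOfSolutionsOfLEvolE}), I would separate the determining equations and show that, after subtracting a suitable gauge term $(D_xH,-D_tH)$ with $H=H(t,x,\bar v)$, one may take $F^1_{v^s}=0$ and then $F^0$ and $G^{10}$ at most affine in $\bar v$. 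The cross-derivative (compatibility) conditions contained in the determining equations are exactly what guarantees that the gauging functions realizing these simplifications exist; here the fact that $\bar\alpha\ne0$ (the $\alpha^s$ being linearly independent, cf.\ Lemma~\ref{LemmaOnIndepPotOfEvolEqs}) is used to integrate the gauge equations. Once the residual affine $\bar v$-terms are removed in this way, the remaining $\bar v$-free conserved vector satisfies $D_tF+D_xG=0$ already on the manifold of~\eqref{EqReducedLPE}, so by Theorem~\ref{TheoremLocalCLsLPEs} it coincides, up to equivalence, with a conserved vector~\eqref{eqCVofLPEs} whose characteristic solves~\eqref{eqAdjEqOfLinParEqInRedForm}.

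I expect the main obstacle to be precisely the entanglement of the $\bar v$-dependence with the $(t,x)$-dependence through the $\alpha^s$ and, for $p\geqslant2$, through their products $\alpha^s\alpha^{s'}$: a pointwise vanishing condition such as $\sum_{s,s'}\alpha^s\alpha^{s'}F^0_{v^sv^{s'}}=0$ does not by itself force $F^0_{v^sv^{s'}}=0$, since the coefficient matrix still depends on $(t,x)$. Overcoming this requires simultaneously using the whole coupled system of determining equations, the $\p_{v^s}$-descent and the linear independence of the $\alpha^s$, rather than any single algebraic identity; organising this argument (most cleanly by an induction either on $p$ or on the degree of the $\bar v$-dependence) is the technical heart of the proof.
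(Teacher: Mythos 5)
Your opening and closing moves are sound and essentially parallel the paper's: the order reduction and splitting that bring $(F,G)$ to a low-order canonical form correspond to the first part of Lemma~\ref{LemmaOnLocConsLawsOfPotSystemForLHE}, and the final identification of a $\bar v$-free conserved vector with one of the form~\eqref{eqCVofLPEs} via Theorem~\ref{TheoremLocalCLsLPEs} is exactly the paper's last step. The gap is in the middle, and it is genuine: you correctly isolate the obstruction (the pointwise relation $\alpha^s\alpha^\sigma F_{v^sv^\sigma}=0$ does not force $F_{v^sv^\sigma}=0$), but the mechanism you offer to overcome it --- a descent by the symmetries $\p_{v^s}$ combined with ``an induction either on $p$ or on the degree of the $\bar v$-dependence'' --- cannot work as stated. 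An admissible conserved vector of system~\eqref{eqGenPotSysOfParEqInRedForm} need not be polynomial in $\bar v$ at all: by Lemma~\ref{LemmaGenSolutionOfSystemOnK} the general solution of the determining equations involves an \emph{arbitrary} smooth function $H(\bar v)$, entering through $K=\alpha^sH_{v^s}+\beta^0$. Hence there is no degree on which to induct, the $\p_{v^s}$-descent never terminates, and your intermediate assertion that after gauging one may take the coefficients ``at most affine in $\bar v$'' is not something the compatibility conditions hand you for free --- establishing that the $\bar v$-dependence has precisely this gradient structure, so that it can be recognized as the pure gauge term $(D_xH,-D_tH)$, \emph{is} the hard part.

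What actually resolves the entanglement in the paper is a specific chain of ideas, none of which appears in your proposal. First, the conserved vector is gauged to the form $(Ku,\,K_xu-Ku_x)$ with a single function $K(t,x,\bar v)$ satisfying system~\eqref{SystemOnK}; in particular $K$, and therefore each $\beta^\sigma:=K_{v^\sigma}$, solves the adjoint equation~\eqref{eqAdjEqOfLinParEqInRedForm} with $\bar v$ entering only as parameters. Second --- and this is where the parabolic structure is used --- the single bilinear relation $\alpha^s\beta^s_x-\alpha^s_x\beta^s=0$ is propagated to all derivative orders, $\alpha^s_i\beta^s_j-\alpha^s_j\beta^s_i=0$, by induction with the operator $\p_t+\p_{xx}$ (Lemma~\ref{LemmaOnDiffConsOfPotSystemForLHE}). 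Third, a Laplace-expansion argument converts these identities into $W(\alpha^1,\dots,\alpha^p,\beta^\sigma)=0$ (Lemma~\ref{LemmaOnWronskianForLHE}), and the Wronskian criterion for solutions of linear evolution equations (Lemma~\ref{LemmaOnLinearDependenceOfSolutionsOfLEvolE}) then gives $K_{v^\sigma}=C^{\sigma s}(\bar v)\,\alpha^s$; integrating the cross-derivative conditions and running the Laplace/Wronskian argument once more yields $K=\alpha^sH_{v^s}+\beta^0$ with $\beta^0=\beta^0(t,x)$ a solution of~\eqref{eqAdjEqOfLinParEqInRedForm}, whereupon the $H$-part is trivial. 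Your appeal to the linear independence of the $\alpha^s$ alone, without these analytic steps, cannot separate the determining equations --- for precisely the reason you yourself state --- so the technical core of the proof is missing.
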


\begin{corollary}\label{CorollaryOnPotConsLawsOfLPEs}
For any linear $(1+1)$-dimensional second-order parabolic equation 
potential conserved vectors of any level are equivalent to local ones
on the manifolds of the corresponding potential systems,
and potentials of any level can locally be expressed via local variables $t$, $x$, $u_{(r)}$ (for some $r$)
and potentials of the first level~only.
\end{corollary}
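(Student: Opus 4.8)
The plan is to deduce the corollary from Theorem~\ref{TheoremOnPotConsLawsOfLPEs} by an induction on the potential level, after first reducing the general class to the form~\eqref{EqReducedLPE}. For the reduction I would use that every equation of~\eqref{EqGenLPE} is $G^\sim$-equivalent to one of form~\eqref{EqReducedLPE} (Note~\ref{NoteOnSufficiencyOfRedFormForInvestigationOfCL}); by Proposition~\ref{PropositionOn2DConsLawEquivRelation} a point transformation sends the potential systems of an equation to those of its image while leaving the potentials untouched, and by Proposition~\ref{PropositionOnInducedMappingOfCL} it induces isomorphisms of the conservation-law spaces that preserve triviality and carry local conserved vectors to local ones. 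Hence it suffices to prove both assertions of the corollary for~\eqref{EqReducedLPE}, and they then transfer back verbatim, since the potentials are not transformed and local variables go to local variables.

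The inductive hypothesis I would carry is the following: after the triangular change of potential variables that strips off local summands, every level-$k$ potential system over~\eqref{EqReducedLPE} coincides with a first-level system~\eqref{eqGenPotSysOfParEqInRedForm} for some enlarged tuple $\bar\alpha$ of solutions of~\eqref{eqAdjEqOfLinParEqInRedForm}. Granting this, Theorem~\ref{TheoremOnPotConsLawsOfLPEs} applied to that first-level system shows at once that all conservation laws of the level-$k$ system are equivalent to local ones and that all its potentials are first-level potentials plus functions of $t$, $x$, $u_{(r)}$. The base case $k=1$ is the bare definition of the first-level systems.

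For the inductive step I would take a conserved vector $(F,G)$ of the level-$k$ system used to introduce a new potential $w$ via $w_x=F$, $w_t=-G$. By the hypothesis $(F,G)$ is equivalent on the manifold to a local $(\alpha u,\,\alpha_xu-\alpha u_x)$, so on that manifold $F=\alpha u+D_xH$ and $G=\alpha_xu-\alpha u_x-D_tH$ for a differential function $H$ of $t$, $x$, $u_{(r)}$ and first-level potentials. Then $w-H$ satisfies $(w-H)_x=\alpha u$ and $(w-H)_t=\alpha u_x-\alpha_xu$, i.e.\ $w-H$ is precisely a first-level potential attached to the characteristic $\alpha$. Replacing $w$ by $w-H$ therefore turns the level-$(k+1)$ system into a first-level system whose tuple is the old one enlarged by $\alpha$ (the enlargement being genuine or not according to Lemma~\ref{LemmaOnIndepPotOfEvolEqs} and Definition~\ref{DefinitionOfPotentialDependence}), which is the hypothesis at level $k+1$. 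This shows the whole tower never leaves the family of first-level systems, so no purely potential conservation law ever arises and every higher potential reduces to first-level potentials and local variables, which are the two claims of the corollary.

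The genuine difficulty lives entirely in the assumed Theorem~\ref{TheoremOnPotConsLawsOfLPEs}: bounding the order of the conserved vectors of~\eqref{eqGenPotSysOfParEqInRedForm} and solving the determining system in the presence of the nonlocal variables $v^s$. Within the corollary the only point demanding care is the bookkeeping around equivalence on the manifold: the change $w\mapsto w-H$ must be recognized as an invertible triangular point transformation of the extended system, so that Propositions~\ref{PropositionOn2DConsLawEquivRelation} and~\ref{PropositionOnInducedMappingOfCL} license re-applying the theorem at each level, and one must check that $H$ can indeed be chosen as a function of the local variables and first-level potentials only, which is exactly what the inductive hypothesis supplies.
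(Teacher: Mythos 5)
Your proposal is correct and is essentially the derivation the paper intends: the corollary is presented as an immediate consequence of Theorem~\ref{TheoremOnPotConsLawsOfLPEs}, obtained by reducing to class~\eqref{EqReducedLPE} via Note~\ref{NoteOnSufficiencyOfRedFormForInvestigationOfCL} and Proposition~\ref{PropositionOn2DConsLawEquivRelation} and then inducting on the level exactly as you do, with each new potential becoming a first-level potential after subtraction of the summand~$H$. The only imprecision is terminological: when $H$ depends on derivatives of~$u$ the shift $w\mapsto w-H$ is a differential substitution rather than a point transformation, so Propositions~\ref{PropositionOn2DConsLawEquivRelation} and~\ref{PropositionOnInducedMappingOfCL} do not literally apply, but the transfer of conserved vectors and of their triviality under such an invertible triangular substitution is a routine direct check (and for conserved vectors in the canonical form~\eqref{eqCVofLPEs} the proof of the theorem, via Lemma~\ref{LemmaGenSolutionOfSystemOnK}, yields $H$ depending only on the first-level potentials, so the shift is then a genuine point transformation acting only on potential variables, in agreement with Corollary~\ref{CorollaryOnPotSystemsOfLPEs}).
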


\begin{corollary}\label{CorollaryOnPotSystemsOfLPEs}
For any linear $(1+1)$-dimensional second-order parabolic equation 
any potential system of higher level is equivalent, 
with respect to point transformations nontrivially acting only on potential variables,
to a potential system of the first level. 
\end{corollary}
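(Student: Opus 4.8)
The plan is to apply the direct method to the first-level potential system~\eqref{eqGenPotSysOfParEqInRedForm}, viewed as a linear system in the unknowns $(u,v^1,\dots,v^p)$, and to show that adjoining the potentials produces nothing beyond the local conservation laws of Theorem~\ref{TheoremLocalCLsLPEs}. The essential structural observation is that the defining relations $v^s_x=\alpha^s u$ and $v^s_t=\alpha^s u_x-\alpha^s_x u$ express the total derivatives of the potentials through $t$, $x$ and the $x$-jet of~$u$ \emph{without} any dependence on the potentials themselves. Hence, on the manifold of~\eqref{eqGenPotSysOfParEqInRedForm} the operators $\p_{v^s}$ commute with both $D_t$ and $D_x$, so that differentiating a conserved vector $(F,G)$ with respect to any potential yields another conserved vector $(F_{v^s},G_{v^s})$. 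This commutation identity is the main lever for controlling the dependence of conserved vectors on the nonlocal variables.

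First I would bound the order exactly as in Lemma~\ref{LemmaOnOrderOfConsLawsOfLPEs}: using~\eqref{EqReducedLPE} and the relations~\eqref{eqGenPotSysOfParEqInRedForm} to eliminate $t$-derivatives and all derivatives of the $v^s$, every conserved vector is equivalent, on the manifold of the potential system, to one with $F=F(t,x,u,\bar v)$ and $G=G(t,x,u,u_x,\bar v)$, where $\bar v=(v^1,\dots,v^p)$. Splitting the conservation-law condition with respect to $u_{xx}$ gives $G=-F_u u_x+G^1(t,x,u,\bar v)$, and successive splitting with respect to the powers of $u_x$ forces $F$ to be affine in~$u$, $F=a(t,x,\bar v)u+b(t,x,\bar v)$, together with $G^1$ affine in~$u$. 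What remains is a determining system for the coefficient functions $a$, $b$ and the $u$-free part $g$ of $G^1$, all functions of $t$, $x$ and~$\bar v$.

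The heart of the argument is then to show that this determining system admits, modulo conserved vectors that are trivial on the manifold of~\eqref{eqGenPotSysOfParEqInRedForm}, only solutions with $a=a(t,x)$ independent of the potentials and $b=g=0$, so that $(F,G)$ collapses to $(\alpha u,\,\alpha_x u-\alpha u_x)$ with $\alpha=a$ a solution of the adjoint equation~\eqref{eqAdjEqOfLinParEqInRedForm}, i.e.\ to a local conserved vector of~\eqref{EqReducedLPE}. Here I would use the commutation identity to descend in the degree of the $\bar v$-dependence and combine it with Lemma~\ref{LemmaOnLinearDependenceOfSolutionsOfLEvolE}, Corollary~\ref{CorollaryOnLinearDependenceOfSolutionsOfLEvolE} and the independence Lemma~\ref{LemmaOnIndepPotOfEvolEqs}: linear independence of the $\alpha^s$ over the ring of functions of~$t$ prevents genuinely potential-dependent terms from surviving, and whatever potential-dependence remains assembles into a null divergence plus a term vanishing on the system. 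A clean way to see the final answer is through the adjoint system of~\eqref{eqGenPotSysOfParEqInRedForm}: its solutions $(\mu^s,\nu^s)$ satisfy $\mu^s_x+\nu^s_t=0$, whence $\nu^s=\phi^s_x$, $\mu^s=-\phi^s_t$ for potentials $\phi^s(t,x)$, and the remaining equation reduces precisely to the requirement that $\beta=\sum_s\alpha^s\phi^s$ solve~\eqref{eqAdjEqOfLinParEqInRedForm}; the residual freedom in the individual $\phi^s$ is pure gauge and yields trivial conserved vectors.

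I expect the main obstacle to be exactly this last step, namely ruling out \emph{purely potential} conservation laws, i.e.\ proving that every genuinely $\bar v$-dependent conserved vector is trivial on the potential manifold. This is delicate because the potentials are honestly nonlocal (Lemma~\ref{LemmaOnIndepPotOfEvolEqs}), so one cannot split with respect to them as with ordinary jet variables; the commutation identity together with the Wronskian and linear-independence machinery is what makes the reduction rigorous. Once the theorem is established, Corollaries~\ref{CorollaryOnPotConsLawsOfLPEs} and~\ref{CorollaryOnPotSystemsOfLPEs} follow by induction on the level: since the first-level potential system produces no conservation laws beyond the local ones of~\eqref{EqReducedLPE}, no second-level potential arises that is independent of the first-level ones, the iteration terminates, and every higher-level potential is expressible through $t$, $x$, the $u$-jet and the first-level potentials. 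This is precisely the assertion of the final statement, Corollary~\ref{CorollaryOnPotSystemsOfLPEs}, that any higher-level potential system is equivalent, by a point transformation acting nontrivially only on the potential variables, to a potential system of the first level.
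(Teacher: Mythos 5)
Your plan follows the same route as the paper: everything hinges on Theorem~\ref{TheoremOnPotConsLawsOfLPEs} (every conserved vector of the first-level system~\eqref{eqGenPotSysOfParEqInRedForm} reduces to a local one of~\eqref{EqReducedLPE}), after which Corollary~\ref{CorollaryOnPotConsLawsOfLPEs} and the statement under review follow by exactly the induction on levels you describe, the required point transformation being the shift of each higher-level potential by a local expression. That closing induction is correct, and so is your commutation observation: since the right-hand sides of~\eqref{eqGenPotSysOfParEqInRedForm} do not involve the potentials, $\p_{v^s}$ commutes with the restricted total derivatives, so $(F_{v^s},G_{v^s})$ is again a conserved vector; equivalently, the determining system~\eqref{SystemOnK} is closed under $\p_{v^s}$, which is implicitly how the paper begins the proof of Lemma~\ref{LemmaGenSolutionOfSystemOnK}.

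The gap sits at the step you yourself call the main obstacle, and the tools you list do not close it. After the order reduction one must show that every $K=K(t,x,\bar v)$ solving~\eqref{SystemOnK} has the form $K=\alpha^sH_{v^s}+\beta^0$ with $H$ a function of~$\bar v$ and $\beta^0$ a solution of~\eqref{eqAdjEqOfLinParEqInRedForm} (Lemma~\ref{LemmaGenSolutionOfSystemOnK}). Note that the potential dependence does \emph{not} vanish modulo trivial vectors in the naive sense you state: it assembles into $(D_xH,-D_tH)$ with $H$ arbitrary and in general non-polynomial in~$\bar v$, so your ``descent in the degree of the $\bar v$-dependence'' is not even well defined. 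Your commutation identity only yields that each $\beta^\sigma=K_{v^\sigma}$ again solves~\eqref{SystemOnK}; but to invoke Lemma~\ref{LemmaOnLinearDependenceOfSolutionsOfLEvolE} and Corollary~\ref{CorollaryOnLinearDependenceOfSolutionsOfLEvolE}, as you propose, one first needs $W(\alpha^1,\dots,\alpha^p,\beta^\sigma)=0$, and deriving this from the single summed relation $\alpha^s\beta^s_x-\alpha^s_x\beta^s=0$ is precisely the paper's hard work: Lemma~\ref{LemmaOnDiffConsOfPotSystemForLHE} propagates the bilinear relation to all derivative orders by acting with $\p_t+\p_{xx}$ (using that both tuples solve the same parabolic equation), Lemma~\ref{LemmaOnWronskianForLHE} converts the resulting family of relations into the vanishing Wronskian by a Laplace-expansion argument, and two further integrations in~$\bar v$ (exactness giving $C^{\sigma s}=P^s_{v^\sigma}$, then a second Wronskian-minor argument giving $P^s=H_{v^s}$) complete Lemma~\ref{LemmaGenSolutionOfSystemOnK}. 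None of this machinery, nor any substitute for it, appears in your plan. Your alternative ``clean way'' through the adjoint system does not repair the gap: it presupposes that the characteristics $(\mu^s,\nu^s)$ are functions of $t$ and~$x$ alone, which is essentially the statement to be proved (a priori they depend on~$u$, its derivatives and on the nonlocal variables~$\bar v$); moreover, for $p\geqslant2$ system~\eqref{eqGenPotSysOfParEqInRedForm} is overdetermined, so the one-to-one correspondence of Theorem~\ref{TheoremIsomorphismChCV} between conservation laws and characteristics is not available without further justification --- which is exactly why the paper argues with conserved vectors rather than characteristics.
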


In other words, the set of potential conservation laws of any linear $(1+1)$-dimensional second-order parabolic equation 
is exhausted by its local conservation laws. 
The set of locally independent potentials is exhausted by potentials of the first level.

Following~\cite{Popovych&Ivanova2004ConsLawsLanl} where these statements were derived for 
the linear heat equation $u_t=u_{xx}$, we present the proof of Theorem~\ref{TheoremOnPotConsLawsOfLPEs} 
in the form of a sequence of basic lemmas. 
In comparison with~\cite{Popovych&Ivanova2004ConsLawsLanl}, only minor modifications of the proof 
are needed due to the reduction of equations from class~\eqref{EqGenLPE} to the simpler from~\eqref{EqReducedLPE} 
by equivalence transformations. 

\begin{lemma}\label{LemmaOnLocConsLawsOfPotSystemForLHE}
Any local conservation law of system~\eqref{eqGenPotSysOfParEqInRedForm} is equivalent to the one with the conserved vector
$(Ku,K_xu-Ku_x)$ where the function $K=K(t,x,\bar v)$ is determined by the system
\begin{equation}\label{SystemOnK}
K_t+K_{xx}-VK=0, \qquad \alpha^sK_{xv^s}-\alpha^s_xK_{v^s}=0.
\end{equation}
\end{lemma}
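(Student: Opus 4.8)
The plan is to compute the local conservation laws of the potential system~\eqref{eqGenPotSysOfParEqInRedForm} by the direct method, using that on the manifold of~\eqref{eqGenPotSysOfParEqInRedForm} every $t$-derivative of $u$ and every derivative of the potentials is expressible through $t$, $x$, $\bar v=(v^1,\dots,v^p)$ and the $x$-derivatives $u_k=\p^ku/\p x^k$. First I would establish the analogue of Lemma~\ref{LemmaOnOrderOfConsLawsOfLPEs} for this system: any conserved vector $(F,G)$ is equivalent to one whose density $F$ depends only on $t$, $x$, $\bar v$, $u$ and whose flux $G$ depends only on $t$, $x$, $\bar v$, $u$, $u_x$. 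The reduction proceeds exactly as in Lemma~\ref{LemmaOnOrderOfConsLawsOfLPEs}, peeling off total $x$-derivatives $D_xH$ with $H=H(t,x,\bar v,u_{(r)})$; the only new feature is the dependence on $\bar v$, but since $v^s_x=\alpha^su$ and $v^s_t=\alpha^su_x-\alpha^s_xu$ contribute only terms of order $\le1$ in $u$, the highest-order analysis is identical to that of the reduced equation~\eqref{EqReducedLPE}.

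With $(F,G)$ so reduced, I would expand $D_tF+D_xG=0$ on the manifold of~\eqref{eqGenPotSysOfParEqInRedForm}, substituting $u_t=u_{xx}-Vu$, $v^s_x=\alpha^su$, $v^s_t=\alpha^su_x-\alpha^s_xu$, and split with respect to the unconstrained variables $u_{xx}$ and $u_x$. The coefficient of $u_{xx}$ gives $G_{u_x}=-F_u$, and the coefficient of $u_x^2$ gives $F_{uu}=0$, so $F=f^1u+f^0$ with $f^1,f^0$ functions of $t$, $x$, $\bar v$. At this point I would remove the $u$-independent part of the density by subtracting the trivial conserved vector $(D_xH,-D_tH)$ with $H=-\int f^0\,dx$: this replaces $F$ by $F+D_xH$, killing $f^0$ and leaving $F=f^1u$ for a redefined $f^1$, while keeping $(F,G)$ in the reduced form. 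We may thus assume $f^0=0$. Re-running the split now gives $G=-f^1u_x+G^1$ with $G^1=f^1_xu+G^0$, the equation $\alpha^sf^1_{xv^s}-\alpha^s_xf^1_{v^s}=0$ from the coefficient of $u^2$ (already the second equation of~\eqref{SystemOnK}), the relation $f^1_t+f^1_{xx}-Vf^1+G^0_{v^s}\alpha^s=0$ from the coefficient of $u^1$, and $G^0_x=0$, so that $G^0=G^0(t,\bar v)$.

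The crux — and the step I expect to be the main obstacle — is to dispose of the leftover flux term $G^0$ while landing exactly on the canonical vector $(Ku,K_xu-Ku_x)$ with $K$ solving~\eqref{SystemOnK}. I would subtract a second trivial conserved vector $(D_xH,-D_tH)$, now with $H=\int G^0\,dt$ a function of $t$, $\bar v$ alone (so $H_x=0$ and the density stays of the form $(\,\cdot\,)u$). This replaces the residual flux term $G^0$ by $G^0-H_t=0$ and simultaneously replaces $f^1$ by $K:=f^1+H_{v^s}\alpha^s$, producing precisely the vector $(Ku,K_xu-Ku_x)$. It then remains to verify that this $K$ still satisfies both equations of~\eqref{SystemOnK}, and here the two characteristic cancellations do the work. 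For the second equation, the correction to $\alpha^sK_{xv^s}-\alpha^s_xK_{v^s}$ contributed by $H_{v^s}\alpha^s$ is $(\alpha^s\alpha^\sigma_x-\alpha^s_x\alpha^\sigma)H_{v^sv^\sigma}$, which vanishes since $H_{v^sv^\sigma}$ is symmetric in $(s,\sigma)$ while the bracket is antisymmetric. For the first equation, $K_t+K_{xx}-VK$ picks up $H_{tv^s}\alpha^s+H_{v^s}(\alpha^s_t+\alpha^s_{xx}-V\alpha^s)$; the last summand vanishes because each $\alpha^s$ solves the adjoint equation~\eqref{eqAdjEqOfLinParEqInRedForm}, and since $H_t=G^0$ yields $H_{tv^s}=G^0_{v^s}$, the term $G^0_{v^s}\alpha^s$ exactly cancels the residual $-G^0_{v^s}\alpha^s$ from the determining relation, leaving $K_t+K_{xx}-VK=0$. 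Hence the conservation law is represented by $(Ku,K_xu-Ku_x)$ with $K$ governed by~\eqref{SystemOnK}, which is the assertion. Notably, this route needs no appeal to the linear independence of the $\alpha^s$; that hypothesis enters only through the well-definedness of the potential system itself.
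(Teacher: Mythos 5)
Your proof is correct: the order reduction, both trivial-conserved-vector subtractions, and the final verification of~\eqref{SystemOnK} all check out, and your closing remark is consistent with the paper (linear independence of the $\alpha^s$ is indeed not needed in this lemma; it enters only later, in Lemma~\ref{LemmaGenSolutionOfSystemOnK}). The overall strategy coincides with the paper's --- direct method, reduction modelled on Lemma~\ref{LemmaOnOrderOfConsLawsOfLPEs}, normalization by null divergences $(D_xH,-D_tH)$ with $H$ a function of $t$, $x$, $\bar v$ --- but the decomposition is genuinely different. The paper normalizes the density to be \emph{free of}~$u$: its reduction ends with $F=F(t,x,\bar v)$ and $G=-\alpha^sF_{v^s}u+G^0(t,x,\bar v)$, yields the determining system $\alpha^s\alpha^\sigma F_{v^sv^\sigma}=0$, $\alpha^sG^0_{v^s}=2\alpha^s_xF_{v^s}+\alpha^sF_{xv^s}$, $F_t+G^0_x=0$, and then performs a \emph{single} subtraction with $H$ solving $H_x=-F$, $H_t=G^0$ (a compatible system precisely because of the third determining equation); this produces $K=\alpha^sH_{v^s}$ and the canonical vector in one stroke. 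You instead keep the density \emph{$u$-linear} throughout, so the second equation of~\eqref{SystemOnK} already appears for~$f^1$ at the splitting stage, at the cost of two subtractions (first killing $f^0$ with $H_x=-f^0$, then killing $G^0$ with $H=H(t,\bar v)$, $H_t=G^0$) and of having to re-verify that $K=f^1+\alpha^sH_{v^s}$ still solves~\eqref{SystemOnK}. What your route buys is that this verification --- the cancellation of the symmetric $H_{v^sv^\sigma}$ against the antisymmetric bracket $\alpha^s\alpha^\sigma_x-\alpha^s_x\alpha^\sigma$, plus the adjoint equation --- is made explicit, whereas the paper asserts it without computation; what the paper's route buys is compactness: one subtraction and the closed-form expression $K=\alpha^sH_{v^s}$ arising from a single potential-type function~$H$.
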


\begin{proof}
Consider a local conservation law of system~\eqref{eqGenPotSysOfParEqInRedForm} in the most general form,
where the conserved vector is a vector-function of $t$, $x$ and derivatives of the functions $u$ and $v^s$
from order zero up to some finite number.
Taking into account system~\eqref{eqGenPotSysOfParEqInRedForm} and its differential consequences, we can exclude
the dependence of the conserved vector on any of the (nonzero order) derivatives of $v^s$ and the derivatives of~$u$
containing differentiation with respect to~$t$.
Similarly to Lemma~\ref{LemmaOnOrderOfConsLawsOfLPEs} we can prove that
the reduced conserved vector~$(F,G)$ does not depend on (nonzero order) derivatives of~$u$ and, moreover,
$F=F(t,x,\bar v)$, $G=-\alpha^sF_{v^s}(t,x,\bar v)u+G^0(t,x,\bar v)$. The functions~$F$ and $G^0$
satisfy the system
\[
\alpha^s\alpha^\sigma F_{v^sv^\sigma}=0, \qquad
\alpha^sG^0_{v^s}=2\alpha^s_xF_{v^s}+\alpha^sF_{xv^s}, \qquad
F_t+G^0_x=0.
\]
Let us pass on to the equivalent conserved vector $(\widetilde F,\widetilde G)$, where
$\widetilde F=F+D_xH$, $\widetilde G=G-D_tH$ and $H=H(t,x,\bar v)$ is a solution of the equations
$H_x=-F$, $H_t=G$. (The variables $v^s$ are considered as parameters in the latter equations.) Then
$\widetilde F=Ku$, $\widetilde G=K_xu-Ku_x$. The function $K=\alpha^sH_{v^s}$ depends on $t$, $x$ and $\bar v$
and satisfies system~\eqref{SystemOnK}.
\end{proof}

\begin{lemma}\label{LemmaOnDiffConsOfPotSystemForLHE}
Let the solutions~$\alpha^s=\alpha^s(t,x)$ and $\beta^s=\beta^s(t,x)$ of equation~\eqref{eqAdjEqOfLinParEqInRedForm}
satisfy the additional condition $\alpha^s_x\beta^s-\alpha^s\beta^s_x=0$.
Then for any $i,j\in{\mathbb N}\cup\{0\}$
\begin{equation}\label{EqAlphaBetaIJ}
\alpha^s_i\beta^s_j-\alpha^s_j\beta^s_i=0.
\end{equation}
Here and it what follows the subscripts $i$ and $j$ denote the $i$-th and $j$-th order derivatives with respect to $x$.
\end{lemma}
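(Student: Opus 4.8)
The plan is to treat the antisymmetric array $W_{ij}:=\alpha^s_i\beta^s_j-\alpha^s_j\beta^s_i$ (with summation over $s$ understood) as a single object and prove that every entry vanishes by induction on the total order $N=i+j$. The hypothesis $\alpha^s_x\beta^s-\alpha^s\beta^s_x=0$ is precisely $W_{10}=0$, and $W_{00}=0$ holds trivially, so the assertion \eqref{EqAlphaBetaIJ} is that $W_{ij}=0$ for all $i,j$, the base case $N\le 1$ being given. Two bookkeeping identities will drive the induction. Differentiating in $x$ gives the spatial relation
\[
D_xW_{ij}=W_{i+1,j}+W_{i,j+1},
\]
while differentiating in $t$ and using the adjoint equation \eqref{eqAdjEqOfLinParEqInRedForm} in the form $\alpha^s_t=-\alpha^s_{xx}+V\alpha^s$ (and likewise for $\beta^s$) gives a temporal relation whose principal, second-order part raises the total order by two,
\[
D_tW_{ij}=-W_{i+2,j}-W_{i,j+2}+2VW_{ij}+\sum_{k\ge1}\binom{i}{k}V_kW_{i-k,j}+\sum_{k\ge1}\binom{j}{k}V_kW_{i,j-k},
\]
where $V_k=\partial_x^kV$; the crucial feature is that the remaining terms involve only entries $W$ of total order strictly less than $i+j$.

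Assuming $W_{ij}=0$ for all $i+j\le N$, I would first exploit the spatial identity at level $N$: since each $W_{ij}$ with $i+j=N$ vanishes identically, so does its $x$-derivative, whence $W_{i+1,j}=-W_{i,j+1}$ along $i+j=N$. Telescoping this relation expresses every entry of level $N+1$ through a single one, $W_{a,N+1-a}=(-1)^aW_{0,N+1}$, so the whole level is controlled by $W_{0,N+1}$. Combining this with the antisymmetry $W_{ij}=-W_{ji}$ already forces $W_{0,N+1}=0$ whenever $N+1$ is even, and the even levels are thereby dispatched.

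For odd $N+1$ the antisymmetry is consistent with the telescoped relations and yields nothing new, so the PDE itself must enter. Here I would apply the temporal identity at level $N-1$: by the inductive hypothesis every $W_{ij}$ with $i+j=N-1$ vanishes, as do all the lower-order and $V$-dependent terms on the right, leaving $W_{i+2,j}+W_{i,j+2}=0$ for $i+j=N-1$. Rewriting both summands through $W_{0,N+1}$ by the telescoped formula turns this into $(-1)^iW_{0,N+1}=-(-1)^iW_{0,N+1}$, hence $W_{0,N+1}=0$, which closes the inductive step.

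The main obstacle is exactly this parity phenomenon: the spatial derivative raises the total order by one whereas the evolution equation raises it by two, so the two identities naturally reach opposite parities and neither alone closes the induction. The delicate point in the odd case is to confirm that all the $V$-dependent contributions in the temporal identity genuinely drop out under the inductive hypothesis—which they do because each such term carries a strictly smaller total order—so that only the clean principal part survives to annihilate the last remaining degree of freedom.
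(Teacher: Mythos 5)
Your proof is correct: both differentiation identities are valid, the telescoping relation $W_{a,N+1-a}=(-1)^aW_{0,N+1}$ does follow from $D_xW_{ij}=0$ on level $N$, antisymmetry disposes of the even levels, and in the temporal identity applied on level $N-1$ all the $V$-dependent terms have total order at most $N-1$ and so vanish under the inductive hypothesis, leaving exactly $W_{i+2,j}+W_{i,j+2}=0$, which kills the odd levels. The paper also argues by induction on $i+j$, but it combines the two derivatives into the single operator $\p_t+\p_{xx}$: acting with it on $W_{ij}$ at level $m-1$, the second-order contributions $\mp W_{i+2,j}$ and $\mp W_{i,j+2}$ cancel between $\p_t$ and $\p_{xx}$, leaving $2W_{i+1,j+1}$ plus lower-order terms, so all entries of level $m+1$ with both indices positive vanish in one stroke; the two boundary entries $W_{m+1,0}$ and $W_{0,m+1}$ are then recovered by a single $x$-differentiation of $W_{m,0}=0$. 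The paper therefore never meets your parity obstacle, because its cancellation raises both indices simultaneously by one and reaches every level uniformly; this makes its inductive step shorter. What your route buys instead is a sharper structural picture: under the inductive hypothesis the whole level $N+1$ is governed by the single function $W_{0,N+1}$, the even levels are forced by antisymmetry and $x$-differentiation alone (the equation is not needed there at all), and the adjoint equation enters exactly once per odd level. Both arguments rest on the same two ingredients -- the spatial recursion and the heat-type evolution of $\alpha^s$, $\beta^s$ -- and differ only in how these are combined.
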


\begin{proof}
We carry out an induction with respect to the value~$i+j$.

Equation~\eqref{EqAlphaBetaIJ} is trivial for $i+j=0$, coincides with the additional condition for $i+j=1$ and
is obtained from this condition by means of differentiation with respect to $x$ if $i+j=2$.

Let us suppose that the Lemma's statement is true if $i+j=m-1$ and $i+j=m$
and prove it for $i+j=m+1$. Acting on equation~\eqref{EqAlphaBetaIJ} where $i+j=m-1$ with the operator $\p_t+\p_{xx}$
and taking into account the conditions $\alpha^s_t+\alpha^s_{xx}-V\alpha^s=0$ and $\beta^s_t+\beta^s_{xx}-V\beta^s=0$,
we obtain the equation $\alpha^s_{i+1}\beta^s_{j+1}-\alpha^s_{j+1}\beta^s_{i+1}=0.$ 
Therefore, the statement is true for $i'+j'=m+1$, $1\leqslant i',j'\leqslant m$ (here $i'=i+1$, $j'=j+1$).
It~remains to prove the case $i'=m+1$, $j'=0$ (or equivalently $i'=0$, $j'=m+1$).
For these values of $i'$ and $j'$ the statement is obtained
by subtracting the above equation $\alpha^s_m\beta^s_1-\alpha^s_1\beta^s_m=0$ from the results of
differentiating equation~\eqref{EqAlphaBetaIJ} where $i=m$, $j=0$ with respect to $x$.
\end{proof}

\begin{lemma}\label{LemmaOnWronskianForLHE}
If $\alpha^s_i\beta^s_j-\alpha^s_j\beta^s_i=0$ for $0\leqslant i<j\leqslant p$
then $W(\alpha^1,\ldots,\alpha^p,\beta^\sigma)=0$ for any $\sigma$.
\end{lemma}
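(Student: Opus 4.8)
The plan is to recast the hypothesis as a single symmetry statement about two matrices assembled from the $x$-derivatives, and then to deduce the rank deficiency of the Wronskian matrix by a pointwise linear-algebra argument. Fix a point $(t,x)$ and, for each $s=1,\dots,p$, form the column vectors $a^s=(\alpha^s_0,\alpha^s_1,\dots,\alpha^s_p)^{\mathsf T}$ and $b^s=(\beta^s_0,\beta^s_1,\dots,\beta^s_p)^{\mathsf T}$ in $\mathbb R^{p+1}$, and assemble the $(p+1)\times p$ matrices $A=(a^1\,|\,\cdots\,|\,a^p)$ and $B=(b^1\,|\,\cdots\,|\,b^p)$. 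Indexing rows and columns by $0,\dots,p$, the $(i,j)$ entry of $AB^{\mathsf T}$ is exactly $\alpha^s_i\beta^s_j$ (summation over $s$ understood), so the assumed relations $\alpha^s_i\beta^s_j-\alpha^s_j\beta^s_i=0$ for $0\leqslant i<j\leqslant p$ say precisely that $AB^{\mathsf T}$ is symmetric, i.e.\ $AB^{\mathsf T}=BA^{\mathsf T}$. Since $W(\alpha^1,\dots,\alpha^p,\beta^\sigma)=\det(a^1\,|\,\cdots\,|\,a^p\,|\,b^\sigma)$, it suffices to prove that the $p+1$ columns $a^1,\dots,a^p,b^\sigma$ are linearly dependent at every point.

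I would then argue pointwise. If $a^1,\dots,a^p$ are already linearly dependent, the determinant vanishes and there is nothing to do. Otherwise $A$ has full column rank $p$, so its column space is a $p$-dimensional subspace of $\mathbb R^{p+1}$ whose orthogonal complement is spanned by a single nonzero vector $n$ with $A^{\mathsf T}n=0$. Multiplying the symmetry relation on the right by $n$ gives $A(B^{\mathsf T}n)=BA^{\mathsf T}n=0$; since $A$ has trivial kernel, $B^{\mathsf T}n=0$, that is, $n^{\mathsf T}b^s=0$ for every $s$, and in particular $n^{\mathsf T}b^\sigma=0$. Hence $b^\sigma$ is orthogonal to $n$ and therefore lies in the column space of $A$, so $a^1,\dots,a^p,b^\sigma$ are dependent and the Wronskian vanishes. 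As the point was arbitrary and the hypothesis holds identically, $W(\alpha^1,\dots,\alpha^p,\beta^\sigma)\equiv0$ for each $\sigma$; note that the coupling over $s$ in the hypothesis is exactly what delivers the conclusion simultaneously for every single index $\sigma$.

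The argument is essentially routine once the reformulation $AB^{\mathsf T}=BA^{\mathsf T}$ is in place, so I do not expect a genuine obstacle. The only point requiring care is that the rank of $A$ may vary from point to point, but the case split keeps the conclusion (vanishing of the Wronskian) uniform, so no constant-rank or regularity assumption is needed. A slightly more invariant phrasing would observe directly that $AB^{\mathsf T}=BA^{\mathsf T}$ forces the column space of $B$ to be contained in that of $A$ whenever $A$ has full column rank; I would use whichever version reads most smoothly alongside the surrounding lemmas.
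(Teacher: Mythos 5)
Your proof is correct, and it takes a genuinely different route from the paper's. The paper argues by a determinantal identity: it multiplies the relations $\alpha^s_i\beta^s_j-\alpha^s_j\beta^s_i=0$ by the signed complementary minors $(-1)^{i+j+\sigma+p+1}M^\sigma_{ij}$ of $W(\bar\alpha,\beta^\sigma)$ and convolves over the pairs $(i,j)$; the Laplace expansion along two columns turns this sum into $W(\bar\alpha|_{\alpha^s\rightsquigarrow\alpha^\sigma},\beta^s)=0$ (summed over $s$), and every term with $s\ne\sigma$ dies because of a repeated column, leaving exactly $W(\bar\alpha,\beta^\sigma)=0$. You instead package the hypothesis as the single matrix identity $AB^{\mathsf T}=BA^{\mathsf T}$ and finish with a pointwise rank/duality argument: when $A$ has full column rank, $\ker A^{\mathsf T}$ is a line spanned by $n$, the identity forces $B^{\mathsf T}n=0$, so each $b^s$ lies in the column space of $A$; the rank-deficient case is trivial. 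Your route is more elementary and transparent---no generalized Laplace expansion or sign bookkeeping, the role of the summation over $s$ is laid bare, all $\sigma$ are handled at once, and since only the dual pairing (not a genuine inner product) is used, it works over any field. What the paper's minor-convolution trick buys is reusability: the identical manipulation is invoked again in the proof of Lemma~\ref{LemmaGenSolutionOfSystemOnK} to pass from~\eqref{EqAlphaWij} to~\eqref{EqAlphaWij1}, where the antisymmetrized products carry the unknown coefficients $P^\varsigma_{v^\sigma}-P^\sigma_{v^\varsigma}$. Your reformulation does adapt to that setting, though not word for word: there the hypothesis says that $AC^{\mathsf T}\!A^{\mathsf T}$ is symmetric, where $C$ is the antisymmetric coefficient matrix and $A$ is the square Wronskian matrix of $\bar\alpha$; being also antisymmetric, this product vanishes, and invertibility of $A$ (i.e.\ $W(\bar\alpha)\ne0$) gives $C=0$. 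So nothing essential is lost either way; the trade-off is between one determinant identity used uniformly in both lemmas and a cleaner, more conceptual linear-algebra argument tailored to each.
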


\begin{proof}
Let $M^\sigma_{ij}$ denote the $(p-1)$-st order minor of $W(\bar\alpha,\beta^\sigma)$,
which is obtained by deleting the $\sigma$-th and $(p+1)$-st columns
corresponding to the functions~$\alpha^\sigma$ and~$\beta^\sigma$ and the $i$-th and $j$-th rows, 
where $0\leqslant i<j\leqslant p$. 
(For convenience the rows of $W(\bar\alpha,\beta^\sigma)$ are enumerated from 0 to $p$.)
We multiply the equation~$\alpha^s_i\beta^s_j-\alpha^s_j\beta^s_i=0$ by
$(-1)^{i+j+\sigma+p+1}M^\sigma_{ij}$ and convolve with respect to the index pair~$(i,j)$.
In view of the Laplace theorem on determinant expansion we obtain
$W(\bar\alpha|_{\alpha^s\rightsquigarrow\alpha^\sigma},\beta^{s})=0$.
Here the symbol~``$\rightsquigarrow$'' means that
the function~$\alpha^s$ is substituted instead of the function~$\alpha^\sigma$,
and summation over the index~$s$ is carried out.
This latter equation immediately implies the lemma's statement since 
$W(\bar\alpha|_{\alpha^s\rightsquigarrow\alpha^\sigma},\beta^s)=0$ for any fixed $s\ne\sigma$.
\end{proof}

\begin{lemma}\label{LemmaGenSolutionOfSystemOnK}
The general solution of system~\eqref{SystemOnK} can be written in the form
\begin{equation}\label{ExpressionForK}
K=\alpha^sH_{v^s}+\beta^0,
\end{equation}
where $H$ is an arbitrary smooth function of $\bar v$,
and $\beta^0=\beta^0(t,x)$ is an arbitrary solution of equation~\eqref{eqAdjEqOfLinParEqInRedForm}.
\end{lemma}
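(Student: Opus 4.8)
The plan is to prove the two inclusions separately, disposing of the easy direction by substitution and then showing that \emph{every} solution of \eqref{SystemOnK} is forced into the stated shape. For the easy direction I would plug $K=\alpha^sH_{v^s}+\beta^0$ into \eqref{SystemOnK}: the first equation holds because each $\alpha^s$ and $\beta^0$ solve \eqref{eqAdjEqOfLinParEqInRedForm} while $H_{v^s}$ is free of $t,x$, and the second equation holds because its left-hand side equals $(\alpha^s\alpha^\sigma_x-\alpha^s_x\alpha^\sigma)H_{v^sv^\sigma}$, the contraction of an array antisymmetric in $(s,\sigma)$ with the symmetric Hessian $H_{v^sv^\sigma}$, hence zero.

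For the converse, let $K=K(t,x,\bar v)$ solve \eqref{SystemOnK} and put $\beta^s:=K_{v^s}$, regarded as functions of $(t,x)$ with $\bar v$ as parameters. Differentiating the first equation of \eqref{SystemOnK} in $v^s$ shows each $\beta^s$ solves \eqref{eqAdjEqOfLinParEqInRedForm}, and the second equation of \eqref{SystemOnK} is precisely $\alpha^s_x\beta^s-\alpha^s\beta^s_x=0$. The hypotheses of Lemma~\ref{LemmaOnDiffConsOfPotSystemForLHE} are therefore met, and combining it with Lemma~\ref{LemmaOnWronskianForLHE} gives $W(\alpha^1,\dots,\alpha^p,K_{v^\sigma})=0$ for every $\sigma$. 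Since the $\alpha^k$ are linearly independent we have $W(\alpha^1,\dots,\alpha^p)\ne0$ by Lemma~\ref{LemmaOnLinearDependenceOfSolutionsOfLEvolE}, so $K_{v^\sigma}=b^k_\sigma\alpha^k$ with coefficients $b^k_\sigma$ independent of $x$. Applying the adjoint operator $\p_t+\p_{xx}-V$ to this identity and using that both sides solve \eqref{eqAdjEqOfLinParEqInRedForm} forces $(b^k_\sigma)_t=0$, so $b^k_\sigma=b^k_\sigma(\bar v)$. Equality of the mixed derivatives of $K$ then gives $(b^k_\sigma)_{v^\tau}=(b^k_\tau)_{v^\sigma}$; hence, by the Poincar\'e lemma, $b^k_\sigma=H^k_{v^\sigma}$ for some functions $H^k(\bar v)$, and $K-\alpha^kH^k$ is $\bar v$-free, yielding $K=\alpha^kH^k+\beta^0$ with $\beta^0=\beta^0(t,x)$ a solution of \eqref{eqAdjEqOfLinParEqInRedForm}.

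The decisive remaining step is to show that the matrix $(b^k_\sigma)$ is symmetric, which is exactly what lets the $p$ functions $H^k$ be replaced by the derivatives $H_{v^k}$ of a single potential $H(\bar v)$. Substituting $K_{v^s}=b^k_s\alpha^k$ into the second equation of \eqref{SystemOnK} gives $c^{sk}(\alpha^s\alpha^k_x-\alpha^s_x\alpha^k)=0$, where $c^{sk}:=b^k_s-b^s_k$ is antisymmetric and depends only on $\bar v$. Fixing $\bar v$, so that $C=(c^{sk})$ is a constant antisymmetric matrix, I would set $\gamma^s:=c^{sk}\alpha^k$; these solve \eqref{eqAdjEqOfLinParEqInRedForm} and satisfy $\alpha^s_x\gamma^s-\alpha^s\gamma^s_x=0$, so Lemma~\ref{LemmaOnDiffConsOfPotSystemForLHE} applies once more and yields $\alpha^s_i\gamma^s_j-\alpha^s_j\gamma^s_i=0$ for all $i,j$. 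By the antisymmetry of $C$ these reduce to $\sum_{s,k}c^{sk}\alpha^s_i\alpha^k_j=0$ for all $i,j$, that is $\Phi^{\mathsf T}C\Phi=0$ for the Wronskian matrix $\Phi=(\alpha^k_i)_{i=0,\dots,p-1}$. Since $\det\Phi=W(\alpha^1,\dots,\alpha^p)\ne0$, this forces $C=0$, i.e.\ $b^k_\sigma=b^\sigma_k$; together with the closedness relations already found, the fully symmetric array $(b^k_\sigma)$ is then the Hessian of a single $H(\bar v)$, and $K=\alpha^sH_{v^s}+\beta^0$ as claimed.

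I expect this symmetry of $(b^k_\sigma)$ to be the main obstacle: it is equivalent to the linear independence of the ``bivectors'' $\alpha^s\alpha^k_x-\alpha^s_x\alpha^k$, which \emph{fails} for arbitrary linearly independent functions (for instance for $1,x,x^2,x^3$) and holds here only because the $\alpha^k$ solve a common second-order parabolic equation. The trick of feeding the auxiliary solutions $C\bar\alpha$ back into Lemma~\ref{LemmaOnDiffConsOfPotSystemForLHE} and then exploiting the non-degeneracy of the Wronskian matrix is what converts this parabolic structure into the purely algebraic conclusion $C=0$.
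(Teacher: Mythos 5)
Your proposal is correct, and its skeleton coincides with the paper's own proof: both arguments first show, via Lemmas~\ref{LemmaOnDiffConsOfPotSystemForLHE} and~\ref{LemmaOnWronskianForLHE} applied to $\beta^s=K_{v^s}$, that $K_{v^\sigma}=b^k_\sigma\alpha^k$ with coefficients depending on $\bar v$ only; both then use equality of mixed $v$-derivatives to write $b^k_\sigma=H^k_{v^\sigma}$, reduce the claim to the symmetry of $(b^k_\sigma)$, and extract the same antisymmetric obstruction $c^{sk}(\alpha^s\alpha^k_x-\alpha^s_x\alpha^k)=0$, $c^{sk}=b^k_s-b^s_k$, from the second equation of \eqref{SystemOnK}. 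Where you genuinely differ is in how $c^{sk}=0$ is forced. The paper repeats the parabolic induction on the contracted identity to obtain \eqref{EqAlphaWij} for all derivative orders and then eliminates the antisymmetric coefficients by a Laplace-expansion/minor-convolution computation modelled on Lemma~\ref{LemmaOnWronskianForLHE}. You instead freeze $\bar v$, form the auxiliary solutions $\gamma^s=c^{sk}\alpha^k$ of \eqref{eqAdjEqOfLinParEqInRedForm}, observe that the obstruction is literally the hypothesis $\alpha^s_x\gamma^s-\alpha^s\gamma^s_x=0$ of Lemma~\ref{LemmaOnDiffConsOfPotSystemForLHE}, and so get $c^{sk}\alpha^s_i\alpha^k_j=0$ for all $i,j$ as a black-box corollary of that lemma rather than by redoing its induction; the conclusion $C=0$ then follows from invertibility of the Wronskian matrix ($\Phi C\Phi^{\mathsf T}=0$ with $\det\Phi=W(\bar\alpha)\ne0$) instead of from the Laplace theorem. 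This buys economy and transparency: the minor-convolution step of the paper is, in effect, an explicit inversion of $\Phi$, which your conjugation argument makes unnecessary. Your closing remark -- that symmetry of $(b^k_\sigma)$ fails for generic linearly independent functions (e.g.\ $1,x,x^2,x^3$) and is rescued precisely by the propagation of the pairing identity to all derivative orders, i.e.\ by the common parabolic equation -- correctly identifies why the lemma cannot be purely algebraic. One small point of care, common to both arguments: $W(\bar\alpha)\ne0$ means ``not identically zero,'' so $c^{sk}=0$ is first obtained at points where the Wronskian is nonzero and then holds identically because $c^{sk}$ depends on $\bar v$ alone.
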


\begin{proof}
In view of Lemma~\ref{LemmaOnLocConsLawsOfPotSystemForLHE}
the functions~$\alpha^s$ and $\beta^s=K_{v^s}$ satisfy the conditions of Lemma~\ref{LemmaOnDiffConsOfPotSystemForLHE}
and, therefore, the ones of Lemma~\ref{LemmaOnWronskianForLHE}, and the variables~$\bar v$ are assumed as parameters. 
Since the $\alpha^s$ are linearly independent, this implies $K_{v^{\sigma}}=C^{\sigma s}\alpha^s$, 
where $C^{\sigma s}$ are smooth functions of the variables~$\bar v$ only.
The expressions for the cross derivatives
$K_{v^{\sigma}v^{\varsigma}}=C^{\sigma s}_{v^{\varsigma}}\alpha^s=C^{\varsigma s}_{v^{\sigma}}\alpha^s$
result in the equation $C^{\sigma s}_{v^{\varsigma}}=C^{\varsigma s}_{v^{\sigma}}$ which can easily be integrated:
$C^{\sigma s}=P^s_{v^{\sigma}}$ for some smooth function~$P^s$ of the variables~$\bar v$.
Substituting the expressions for~$C^{\sigma s}$ in the equations on $K$ and integrating, we obtain
$K=\alpha^sP^s+\beta^0$, where $\beta^0=\beta^0(t,x)$ is a solution of equation~\eqref{eqAdjEqOfLinParEqInRedForm}.
The latter equality and the equation $\alpha^sK_{xv^s}-\alpha^s_xK_{v^s}=0$ together imply
the equation
$(\alpha^{\sigma}_x\alpha^{\varsigma}-\alpha^{\sigma}\alpha^{\varsigma}_x)
(P^{\varsigma}_{v^{\sigma}}-P^{\sigma}_{v^{\varsigma}})=0$.
Analogously to Lemma~\ref{LemmaOnWronskianForLHE} we can state for any $i,j\in{\mathbb N}\cup\{0\}$
\begin{equation}\label{EqAlphaWij}
(\alpha^{\sigma}_i\alpha^{\varsigma}_j-\alpha^{\sigma}_j\alpha^{\varsigma}_i)
(P^{\varsigma}_{v^{\sigma}}-P^{\sigma}_{v^{\varsigma}})=0.
\end{equation}

Let $M^{\sigma'\varsigma'}_{ij}$ denote the $(p-2)$-nd order minor of $W(\bar\alpha)$,
obtained deleting the $\sigma'$-th and $\varsigma'$-th columns
corresponding to the functions~\raisebox{0ex}[0ex][0ex]{$\alpha^{\sigma'}$} 
and~\raisebox{0ex}[0ex][0ex]{$\alpha^{\varsigma'}$} and the $i$-th and $j$-th rows, 
where $0\leqslant i<j\leqslant p-1$. 
(Enumeration of the rows from 0 is used again for convenience.)
We multiply the equation~\eqref{EqAlphaWij} by $(-1)^{i+j+\sigma'+\varsigma'}M^{\sigma'\varsigma'}_{ij}$
and convolve with respect to the index pair~$(i,j)$.
The Laplace theorem on determinant expansion implies that
\begin{equation}\label{EqAlphaWij1}
W\bigl(\bar\alpha|_{
\alpha^{\sigma}\rightsquigarrow\alpha^{\sigma'}\!\!,\;
\alpha^{\varsigma}\rightsquigarrow\alpha^{\varsigma'}}\!
\bigr)
(P^{\varsigma}_{v^{\sigma}}-P^{\sigma}_{v^{\varsigma}})=0.
\end{equation}
Here the symbol~``$\rightsquigarrow$'' means that
the functions~$\alpha^{\sigma}$ and $\alpha^{\varsigma}$ are substituted
instead of the functions~$\alpha^{\sigma'}$ and $\alpha^{\varsigma'}$ respectively
and we have summation over the indices~$\sigma$ and $\varsigma$.
For any fixed $\{\sigma,\varsigma\}\ne\{\sigma',\varsigma'\}$ we have
$W\bigl(\bar\alpha|_{
\alpha^{\sigma}\rightsquigarrow\alpha^{\sigma'}\!\!,\;
\alpha^{\varsigma}\rightsquigarrow\alpha^{\varsigma'}}\!
\bigr)=0$.
Since $W(\bar\alpha)\ne 0$ in view of the linear independence of the functions~$\alpha^s$,
equation~\eqref{EqAlphaWij1} results in $P^{\varsigma'}_{v^{\sigma'}}-P^{\sigma'}_{v^{\varsigma'}}=0$,
i.e. $P^s=H_{v^s}$ for some smooth function~$H$ of~$\bar v$.
\end{proof}

In view of Lemma~\ref{LemmaGenSolutionOfSystemOnK}, the conserved vector $(Ku,K_xu-Ku_x)$
from Lemma~\ref{LemmaOnLocConsLawsOfPotSystemForLHE} has the form 
$(\beta^0u+D_xH,\beta^0_xu-\beta^0u_x-D_tH)$. 
Hence, it is equivalent to the conserved vector $(\beta^0u,\beta^0_xu-\beta^0u_x)$ 
which is also a conserved vector of equation~\eqref{EqReducedLPE}.

This completes the proof of Theorem~\ref{TheoremOnPotConsLawsOfLPEs}.

\section{Simplest potential symmetries}\label{SectionOnSimplestPotSymsOfLPEs}

Since there do not exist any pure potential conservation laws of linear $(1+1)$-dimensional second-order parabolic equations,
the study of potential symmetries for class~\eqref{EqGenLPE} is exhausted by 
the investigation of Lie symmetries of potential systems constructed with local conservation laws only. 
Such systems are called \emph{potential systems of the first level}. 
At first we classify so-called \emph{simplest potential symmetries}~\cite{Popovych&Ivanova2004ConsLawsLanl},
which are the Lie symmetries of the simplest potential systems associated with a single local conservation law.
As a result, inequivalent linear $(1+1)$-dimensional second-order parabolic equations 
possessing nontrivial simplest potential symmetries are completely described.
In the part on connections between different linear equations our presentation is very close to 
the theory of Darboux transformations for linear evolution equations~\cite{Matveev&Salle1991}.
The potential systems obtained by employing an arbitrary finite number of local conservation laws 
are investigated in Sections~\ref{SectionOnGenPotSysForLPEs} and~\ref{SectionOnGeneralPotSymsOfLPEs}. 
A thorough study of the simplest potential systems is necessary for understanding the general case since 
such systems are components of more general potential systems.

Consider an arbitrary equation of the form~\eqref{EqGenLPE}. 
Introducing the \emph{potential}~$v$ by the single conservation law of~\eqref{EqGenLPE} with 
the characteristic~$\alpha$, we obtain the potential system 
\begin{equation}\label{EqPotSysOfLPE}
v_x=\alpha u,\quad v_t=\alpha Au_x-((\alpha A)_x-\alpha B)u.
\end{equation}
(Let us recall that the characteristic~$\alpha=\alpha(t,x)$ is a solution of the adjoint equation~\eqref{EqAdjLPE}.) 
More precisely, for the conservation law we take the associated conserved vector of the canonical form~\eqref{eqCVofLPEs}. 
The following statement justifies our choice of conserved vectors for the construction of potential systems.  

\begin{lemma}
In determining the simplest potential symmetries of equations from the class~\eqref{EqGenLPE}, 
it is sufficient to consider only conserved vectors of the form~\eqref{eqCVofLPEs}, 
which are canonical representatives of the corresponding conservation laws. 
\end{lemma}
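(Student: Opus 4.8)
The plan is to show that replacing the canonical conserved vector~\eqref{eqCVofLPEs} by any other representative of the same conservation law only shifts the potential by a gauge term, and that such a shift cannot enlarge the set of Lie symmetries one is able to detect. First I would fix the characteristic~$\alpha$ and recall that, by Theorem~\ref{TheoremLocalCLsLPEs}, the corresponding conservation law contains the canonical representative $(F,G)=(\alpha u,-\alpha Au_x+((\alpha A)_x-\alpha B)u)$. Any other conserved vector $(F',G')$ of the same law differs from $(F,G)$ by a trivial one; in the two-dimensional setting Lemma~\ref{lemma.null.divergence} lets me write this difference as $F'=F+\hat F+D_xH$ and $G'=G+\hat G-D_tH$, where $\hat F,\hat G$ vanish on the manifold of~\eqref{EqGenLPE} and $H=H(t,x,u_{(k)})$ is a differential function. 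Comparing the defining relations $v'_x=F'$, $v'_t=-G'$ with~\eqref{EqPotSysOfLPE} and restricting to the solution manifold, I obtain $v'=v+H$ up to an additive constant, i.e.\ precisely the potential gauge freedom recorded after Definition~\ref{DefinitionOfPotSyms}.

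The favorable case is $H=H(t,x,u)$. Then $v'=v+H$ is a genuine point transformation of the extended space $(t,x,u,v)$, and by the reasoning behind Proposition~\ref{PropositionOnInducedMappingOfCL} and Proposition~\ref{PropositionOn2DConsLawEquivRelation} it maps the potential system built from $(F,G)$ onto the one built from $(F',G')$ and conjugates their maximal Lie invariance algebras, carrying nontrivial (genuinely $v$-dependent) symmetry operators to nontrivial ones. Hence for every representative whose connecting function~$H$ is of order zero in the derivatives of~$u$ the Lie symmetry problem is left unchanged and the canonical choice loses nothing.

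The main obstacle is the remaining case, in which $H$ depends on derivatives of~$u$ of positive order. Then $v'=v+H$ is a generalized, nonpoint transformation, so a priori the two potential systems need not have isomorphic \emph{point} symmetry algebras, and one must rule out the possibility that a higher-order representative manufactures spurious additional Lie symmetries. I would dispose of this by invoking the analysis of generalized potential symmetries carried out in Section~\ref{SectionOnGeneralPotSymsOfLPEs}: there the potential systems associated with equivalent conserved vectors are shown to be equivalent with respect to generalized symmetries of arbitrary order, and every potential symmetry is realized through the canonical conserved vector. Combined with Lemma~\ref{LemmaOnOrderOfConsLawsOfLPEs}, which guarantees that~\eqref{eqCVofLPEs} already attains the minimal differential order (density depending only on $t$, $x$, $u$ and flux on $t$, $x$, $u$, $u_x$), this shows that no genuine Lie symmetry is created by a higher-order choice of representative.

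I would then conclude that in classifying the simplest potential symmetries of~\eqref{EqGenLPE} it suffices to work with the canonical conserved vectors~\eqref{eqCVofLPEs}, as claimed. The delicate point throughout is keeping track of which gauge shifts $v'=v+H$ are point transformations and which are merely generalized ones; establishing that the latter add nothing at the level of Lie symmetries is the heart of the argument and is exactly what the generalized-symmetry machinery of Section~\ref{SectionOnGeneralPotSymsOfLPEs} is designed to supply.
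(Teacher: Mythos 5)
Your treatment of the case $H=H(t,x,u)$ coincides with the paper's: the gauge shift $\tilde v=v-H$ is then a point transformation acting trivially on $(t,x,u)$, so it cannot change the Lie symmetry classification. Where you genuinely diverge from the paper is the case you correctly single out as the obstacle, $H$ of positive order $r$. The paper settles it on the spot with a short, self-contained computation: for the potential system $v_x=\alpha u+D_xH$, $v_t=\alpha Au_x-((\alpha A)_x-\alpha B)u-D_tH$, apply the infinitesimal invariance criterion to the first equation; because $r>0$ there is an extra highest-order unconstrained derivative, and collecting its coefficients yields $\eta_v-\tau_vu_t-\xi_vu_x=0$, which splits with respect to derivatives of~$u$ into $\eta_v=\tau_v=\xi_v=0$. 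Hence every Lie symmetry of a potential system built from a higher-order representative is projectable to $(t,x,u)$, i.e., such representatives produce no pure potential symmetries at all --- a conclusion even stronger than what the lemma requires, obtained with no machinery beyond the invariance criterion.

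Your alternative for this case is to cite Section~\ref{SectionOnGeneralPotSymsOfLPEs}, in effect Corollary~\ref{CorollaryOnChiceOfConservedVertorsForConstructionOfPotSymsOfLPE}, whose $p=1$ instance \emph{is} the lemma being proved. This is admissible, but only because the proofs there (Lemma~\ref{LemmaOn1to1CorrespondenceBetweenGeneralizedSymsOfGenPotSysAndPthLevelPotEqOfLPE} together with the order bookkeeping in the proof of that corollary) nowhere use the present lemma: the potential frame of Section~\ref{SectionOnGenPotSysForLPEs} is set up with canonical conserved vectors by fiat and justified only afterwards. You should have verified and stated this non-circularity explicitly, since on its face ``invoke the section whose construction this lemma is meant to justify'' reads as circular. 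Note also that the fact you quote first --- equivalence of the two potential systems with respect to generalized symmetries of arbitrary order --- does not by itself exclude spurious Lie symmetries; the decisive point is the order analysis showing that a Lie symmetry of the noncanonical system induces a generalized symmetry of the canonical one whose order is too high unless it is trivial, and that is precisely the step you leave to the citation. So your route is correct modulo that check, but it purchases the $p=1$ statement with machinery designed for $p>1$; as the paper itself remarks in Section~\ref{SectionDiscussion}, generalized symmetries are genuinely needed only for an arbitrary number of potentials, while the simplest case admits the elementary argument above.
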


\begin{proof}
Only conserved vectors of the minimal order (with fluxes of order 0 and flows of order 1) lead 
to nontrivial potential symmetries of equations from class~\eqref{EqGenLPE}. 
Indeed, consider a conserved vector of a higher order of an equation of the form~\eqref{EqGenLPE}. 
It necessarily possesses the representation  
\begin{equation}\label{EqEquivConseredVectorsOfLPE}
\bigl(\alpha u+D_xH,\, -\alpha Au_x+((\alpha A)_x-\alpha B)u-D_tH\bigr),
\end{equation}
where $\alpha=\alpha(t,x)$ is a solution of the adjoint equation~\eqref{eqCVofLPEs} and $H$ is a differential function. 
In view of equation~\eqref{EqGenLPE} the function~$H$ can be assumed to depend on $t$, $x$ and derivatives of~$u$ 
with respect to only~$x$ up to an order~$r$, $r>0$. 
Let $Q=\tau\p_t+\xi\p_x+\eta\p_u+\theta\p_v$ be a Lie symmetry operator of the corresponding potential system 
\[
v_x=\alpha u+D_xH,\quad v_t=\alpha Au_x-((\alpha A)_x-\alpha B)u-D_tH.
\]
The coefficients of~$Q$ are functions of $t$, $x$, $u$ and $v$. 
Applying the infinitesimal invariance condition~\cite{Olver1986,Ovsiannikov1982} to the first equation of the system
and collecting coefficients of the unconstrained variable $v_{r+1}=\p^{r+1}v/\p x^{r+1}$, 
we obtain the equation $\eta_v-\tau_vu_t-\xi_vu_x=0$. 
After splitting it with respect to derivatives of~$u$, we have the condition $\eta_v=\tau_v=\xi_v=0$, i.e., 
the operator~$Q$ is projectable to the space of $(t,x,u)$. 

Each conserved vector of the minimal order (equal to~1) of equation~\eqref{EqGenLPE} possesses the representation~\eqref{EqEquivConseredVectorsOfLPE}, 
where the function~$H$ depends only on $t$, $x$ and~$u$ and, therefore, can be neglected due to the point transformation 
$\tilde t=t$, $\tilde x=x$, $\tilde u=u$ and $\tilde v=v-H$, which has, up to similarity, no influence on 
Lie symmetries of the potential system. 
This finally gives the canonical form~\eqref{eqCVofLPEs} of conserved vectors of equations from class~\eqref{EqGenLPE}. 
\end{proof}

The initial equation~\eqref{EqGenLPE} for $u$ is a differential consequence of system~\eqref{EqPotSysOfLPE}.
Another differential consequence of~\eqref{EqPotSysOfLPE} is the equation
\begin{equation}\label{EqPotLPE}
v_t=Av_{xx}+\left(B-A_x-2A\frac{\alpha_x}\alpha\right)v_x
\end{equation}
on the potential dependent variable~$v$, which is called the \emph{potential equation} 
associated with the linear parabolic equation~\eqref{EqGenLPE} and the characteristic~$\alpha$.
There is a one-to-one correspondence between solutions of the potential system and the potential equation 
due to the projection $(u,v)\to v$ on the one hand and due to the formula $u=v_x/\alpha$ on the other. 
The correspondence between solutions of the initial equation and the potential system is one-to-one only up to a constant summand. 

Any linear system of the general form 
\begin{equation}\label{EqGenFormOfPotSysOfLPE}
v_x=\alpha u,\quad v_t=\beta u_x+\gamma u, 
\end{equation}
where $\alpha$, $\beta$ and $\gamma$ are functions of $t$ and $x$, $\alpha\beta\ne0$, 
is the potential system of an equation from class~\eqref{EqGenLPE}. 
Moreover, the corresponding initial and potential equations are uniquely determined as its differential consequences. 
The coefficients of the initial equation are defined by the formulas 
\[
A=\frac\beta\alpha,\quad 
B=\frac{\beta_x+\gamma}\alpha,\quad
C=\frac{\gamma_x-\alpha_t}\alpha
\]
and, therefore, $\alpha$ is a characteristic of the conservation law associated with the potential system~\eqref{EqGenFormOfPotSysOfLPE} 
since it satisfies the adjoint equation~\eqref{EqAdjLPE}. 
The potential equation is represented in terms of $(\alpha,\beta,\gamma)$ as 
\begin{equation}\label{EqGenFormOfPotEqOfLPE}
v_t=\frac\beta\alpha v_{xx}+\left(\frac\gamma\alpha-\beta\frac{\alpha_x}{\alpha^2}\right)v_x.
\end{equation}

\begin{lemma}\label{LemmaOn1to1CorrespondenceBetweenLieSymsOfPotSysAndPotEqOfLPE}
Given arbitrary smooth functions $\alpha$, $\beta$ and $\gamma$ of the variables $t$ and $x$, where $\alpha\beta\ne0$,
the maximal Lie invariance algebras of system~\eqref{EqGenFormOfPotSysOfLPE} and equation~\eqref{EqGenFormOfPotEqOfLPE} are isomorphic. 
Namely, for any infinitesimal Lie symmetry operator $Q=\tau\p_t+\xi\p_x+\theta\p_v+\eta\p_u$ of system~\eqref{EqGenFormOfPotSysOfLPE} 
its projection $Q'=\tau\p_t+\xi\p_x+\theta\p_v$ to the variables $(t,x,v)$ is an infinitesimal Lie symmetry operator 
of equation~\eqref{EqGenFormOfPotEqOfLPE}. 
The coefficient~$\eta$ is expressed via coefficients of the first prolongation of the operator~$Q'$ with respect to~$x$. 
\end{lemma}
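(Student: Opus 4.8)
The plan is to establish the isomorphism by explicitly constructing the correspondence in both directions and checking that it respects the Lie symmetry conditions. First I would note that the potential equation~\eqref{EqGenFormOfPotEqOfLPE} is a differential consequence of system~\eqref{EqGenFormOfPotSysOfLPE}: differentiating $v_x=\alpha u$ with respect to~$t$ and $v_t=\beta u_x+\gamma u$ with respect to~$x$, then eliminating $u$ and its $t$-derivatives, yields precisely~\eqref{EqGenFormOfPotEqOfLPE} as the integrability condition on~$v$ alone. Because of this, any point symmetry of the full system~\eqref{EqGenFormOfPotSysOfLPE} must map solutions of the system to solutions, and hence (via the projection $(u,v)\mapsto v$) map solutions of~\eqref{EqGenFormOfPotEqOfLPE} to solutions. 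This observation makes the projection $Q\mapsto Q'$ a natural candidate for the isomorphism.

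The key technical point is that $Q=\tau\p_t+\xi\p_x+\theta\p_v+\eta\p_u$ must be projectable, i.e.\ $\tau$, $\xi$, $\theta$ depend only on $(t,x,v)$ and not on $u$. I would establish this exactly as in the preceding lemma on simplest potential symmetries: apply the infinitesimal invariance criterion~\cite{Olver1986,Ovsiannikov1982} to the first equation $v_x=\alpha u$ of the system, and collect the coefficient of a high-order unconstrained jet variable. Since $v_x$ is tied to $u$ but $u$ itself remains free together with its $x$-derivatives, splitting the invariance condition with respect to these derivatives forces $\tau_u=\xi_u=\theta_u=0$. Thus $Q'=\tau\p_t+\xi\p_x+\theta\p_v$ is a well-defined vector field in the variables $(t,x,v)$, and by the differential-consequence relationship it is a Lie symmetry of~\eqref{EqGenFormOfPotEqOfLPE}.

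For the reverse direction I would show that every Lie symmetry $Q'=\tau\p_t+\xi\p_x+\theta\p_v$ of the potential equation lifts uniquely to a symmetry of the system. The equation $v_x=\alpha u$ determines $u=v_x/\alpha$ as a function on the first jet, so once $Q'$ acts on $(t,x,v)$ its action on $u$ is forced: the coefficient $\eta$ is recovered from the transformation of $v_x$, namely $\eta$ equals $\alpha^{-1}$ times the $v_x$-component of the first prolongation $Q'_{(1)}$ in~$x$, minus the contribution from the variation of $\alpha$ and of the factor $1/\alpha$ under the flow. Concretely, writing $\theta^x$ for the prolonged coefficient of $\p_{v_x}$ in $Q'_{(1)}$, one obtains $\eta=\alpha^{-1}\theta^x-\alpha^{-2}(\tau\alpha_t+\xi\alpha_x)v_x$, which is exactly the stated expression of $\eta$ via the first prolongation of $Q'$. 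One then verifies that $Q=Q'+\eta\p_u$ indeed leaves both equations of the system invariant, using that $Q'$ is a symmetry of the potential equation and that the second equation $v_t=\beta u_x+\gamma u$ is the remaining differential consequence.

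The correspondences $Q\mapsto Q'$ and $Q'\mapsto Q$ are mutually inverse and clearly linear, and they preserve the Lie bracket because projection and the canonical prolongation are both Lie-algebra homomorphisms on the relevant jet spaces; hence the two maximal Lie invariance algebras are isomorphic. I expect the main obstacle to be the reverse direction: one must confirm that the $\eta$ dictated by $u=v_x/\alpha$ is genuinely consistent with invariance of the \emph{second} system equation and not merely the first. This amounts to checking that the single integrability condition~\eqref{EqGenFormOfPotEqOfLPE} already encodes all the information in the pair~\eqref{EqGenFormOfPotSysOfLPE} modulo the defining relation $u=v_x/\alpha$, so that no further constraint on $Q'$ arises. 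This is the heart of the argument and the place where the precise structure of the coefficients $(\alpha,\beta,\gamma)$ and the assumption $\alpha\beta\neq0$ are used.
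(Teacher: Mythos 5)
Your route is the same as the paper's: direct infinitesimal analysis, projection $Q\mapsto Q'$, and lifting $Q'\mapsto Q=Q'+\eta\p_u$ with $\eta$ read off from the first prolongation; indeed your formula $\eta=\alpha^{-1}\theta^x-\alpha^{-2}(\tau\alpha_t+\xi\alpha_x)v_x$ agrees with the paper's equation~\eqref{EqDetEqForSymOpsOfGenFormOfPotSysOfLPEforEta} after the substitution $u=v_x/\alpha$. However, two steps fail or are missing. First, your projectability argument is wrong as stated. On the manifold of system~\eqref{EqGenFormOfPotSysOfLPE} (coordinates $t,x,u,v,u_x,u_{xx},\dots$, with $v_x=\alpha u$ and $v_t=\beta u_x+\gamma u$ substituted), the invariance condition of the single equation $v_x=\alpha u$ is polynomial in $u_x$, and splitting it yields only $-\beta\tau_u=0$, i.e.\ $\tau_u=0$ (coefficient of $u_x^{\,2}$, using $\beta\ne0$), together with the relation $\theta_u=\beta\tau_x+\alpha(\beta\tau_v+\xi_u)u$; it does \emph{not} force $\xi_u=\theta_u=0$. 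Those conditions, as well as $\tau_x=\tau_v=\xi_v=\theta_{vv}=0$, emerge only when the invariance condition of the second equation $v_t=\beta u_x+\gamma u$ is imposed too, which is exactly what the paper does: it writes out the complete set of determining equations of the system in one pass.

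Second, and more importantly, the core of the lemma is announced but never proved. What the statement requires is that, once the projectability conditions are in place and the equation defining $\eta$ is set aside, the remaining determining equations of system~\eqref{EqGenFormOfPotSysOfLPE} coincide \emph{exactly} with the determining equations of equation~\eqref{EqGenFormOfPotEqOfLPE}, so that lifting an arbitrary Lie symmetry $Q'$ of the potential equation by your $\eta$ leaves both system equations invariant and imposes no additional constraint on $Q'$. You explicitly defer this (``one then verifies\dots'', ``I expect the main obstacle\dots''), calling it the heart of the argument, but do not carry out the computation. It also cannot be bypassed by your solution-set reasoning: as the paper itself cautions, a point transformation acting on one solution set may induce a non-point transformation on the equivalent solution set, and here the correspondence $u=v_x/\alpha$ involves a derivative of $v$, so mapping solutions to solutions does not by itself establish that the induced object is a Lie point symmetry. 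Nothing short of the direct comparison of the two systems of determining equations, which is what the paper's proof consists of, closes the argument.
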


\begin{proof}
A heuristic argument in favour of the lemma is the one-to-one correspondence between the solutions of the system and the equation. 
This argument is not sufficient since a point transformation on a set of solutions may induce a non-point transformation on an 
equivalent set. That is why the best way to proceed is by direct calculation. 

The infinitesimal invariance criterion~\cite{Olver1986,Ovsiannikov1982} implies the determining equations for the 
coefficients of a symmetry operator~$Q$ of system~\eqref{EqGenFormOfPotSysOfLPE}: 
\begin{gather}\nonumber
\tau_x=\tau_u=\tau_v=0,\quad \xi_u=\xi_v=0,\quad \theta_u=\theta_{vv}=0,
\\\nonumber
(2\xi_x-\tau_t)\frac\beta\alpha=\tau\Bigl(\frac\beta\alpha\Bigr)_t+\xi\Bigl(\frac\beta\alpha\Bigr)_x,\quad
\theta_t=\frac\beta\alpha \theta_{xx}+\left(\frac\gamma\alpha-\beta\frac{\alpha_x}{\alpha^2}\right)\theta_x,
\\\nonumber
\xi_t+(2\theta_{vx}-\xi_{xx})\frac\beta\alpha
+\tau\left(\frac\gamma\alpha-\beta\frac{\alpha_x}{\alpha^2}\right)_t
+\xi\left(\frac\gamma\alpha-\beta\frac{\alpha_x}{\alpha^2}\right)_x
+(\tau_t-\xi_x)\left(\frac\gamma\alpha-\beta\frac{\alpha_x}{\alpha^2}\right)=0,
\\
\eta=\left(\theta_v-\xi_x-\tau\frac{\alpha_t}\alpha-\xi\frac{\alpha_x}\alpha\right)u+\frac{\theta_x}\alpha.
\label{EqDetEqForSymOpsOfGenFormOfPotSysOfLPEforEta}
\end{gather}
The equations $\tau_u=\xi_u=\theta_u=0$ guarantee that the operator~$Q$ is projectable to the variables $(t,x,v)$. 
The other determining equations excluding the last one form the complete system of determining equations for 
symmetry operators of equation~\eqref{EqGenFormOfPotEqOfLPE}.
For any Lie symmetry operator $Q'=\tau\p_t+\xi\p_x+\theta\p_v$ of equation~\eqref{EqGenFormOfPotEqOfLPE} 
the operator $Q=Q'+\eta\p_u$, where $\eta$ is defined by~\eqref{EqDetEqForSymOpsOfGenFormOfPotSysOfLPEforEta}, 
is a Lie symmetry operator of system~\eqref{EqGenFormOfPotSysOfLPE}. 
Equation~\eqref{EqDetEqForSymOpsOfGenFormOfPotSysOfLPEforEta} is re-written as 
\[
\eta=\left(\frac1\alpha\theta^x-\tau\frac{\alpha_t}{\alpha^2}v_x-\xi\frac{\alpha_x}{\alpha^2}v_x\right)
\bigg|_{\alpha u\rightsquigarrow v_x},
\]
where $\theta^x=\theta_x+\theta_vv_x-\xi_xv_x$,
i.e.\ $\eta$ is expressed via coefficients of the first prolongation of the operator~$Q'$ with respect to~$x$ 
in view of the equation $u=v_x/\alpha$.
\end{proof}

In view of Lemma~\ref{LemmaOn1to1CorrespondenceBetweenLieSymsOfPotSysAndPotEqOfLPE} 
the investigation of Lie symmetries of any potential system associated with an equation from class~\eqref{EqGenLPE}
is reduced to the consideration the corresponding potential equation. 
The only possibility for pure potential symmetries to appear is connected with the coefficient~$\eta$. 
Namely, the condition $\eta_v\ne0$ should be satisfied. 
Therefore, the Lie symmetry operator $Q'$ of equation~\eqref{EqGenFormOfPotEqOfLPE} 
induces a potential symmetry of the associated equation from class~\eqref{EqGenLPE} iff $\theta_{vx}\ne0$.

It is convenient to use the other dependent variable $w=v/\alpha$ instead of~$v$ in the further consideration. 
The function~$w$ will be called the \emph{modified potential} associated with the characteristic~$\alpha$. 
In terms of~$w$ the potential equation~\eqref{EqPotLPE} takes the form
\begin{equation}\label{EqModifiedPotLPE}
w_t=\widehat Aw_{xx}+\widehat Bw_x+\widehat Cw,
\end{equation}
which will be called the modified potential equation, associated with characteristic~$\alpha$. Here 
\[
\widehat A=A, \qquad
\widehat B=B-A_x, \qquad
\widehat C=-\frac{\alpha_t}\alpha+A\frac{\alpha_{xx}}\alpha+(B-A_x)\frac{\alpha_x}\alpha
-2A\left(\frac{\alpha_x}\alpha\right)^2.
\]
Introducing the function $\psi=1/\alpha$, we rewrite the equality for~$\widehat C$ in the form
\[
\widehat C=\frac{\psi_t-A\psi_{xx}-(B-A_x)\psi_x}\psi,
\]
i.e. $\psi$ is a solution of the same equation~\eqref{EqModifiedPotLPE} as~$w$. 
(The value $w=1/\alpha$ obviously is a solution of~\eqref{EqModifiedPotLPE} 
since $v=1$ is a solution of~\eqref{EqPotLPE}.)
The potential system~\eqref{EqPotSysOfLPE} can also be written in terms of $w$ and $\psi$ instead of~$v$ and~$\alpha$:
\begin{equation}\label{EqModifiedPotSysOfLPE}
w_x-\frac{\psi_x}\psi w=u,\quad w_t-\frac{\psi_t}\psi w=Au_x+\left(A\frac{\psi_x}\psi+B-A_x\right)u.
\end{equation}
The above representations of the potential equation and potential system are more suitable 
for the classification of Lie symmetries. 
In fact, the first equation of~\eqref{EqModifiedPotSysOfLPE} is the Darboux transformation~\cite{Matveev&Salle1991} 
of~\eqref{EqModifiedPotLPE} to~\eqref{EqGenLPE}. 
The Darboux transformation possesses the useful property of duality. 
We formulate this in a way slightly different from~\cite{Matveev&Salle1991}. 
Denote the Darboux transformation constructed with the nonzero function~$\psi$ by ${\rm DT}[\psi]$, i.e. 
\[
{\rm DT}[\psi](w)=w_x-\frac{\psi_x}\psi w.
\]

\begin{lemma}\label{LemmaOnDualDarbouxTrans}
Let $w^0$ be a fixed nonzero solution of~\eqref{EqModifiedPotLPE} and let
the Darboux transformation ${\rm DT}[w^0]$ map~\eqref{EqModifiedPotLPE} to equation~\eqref{EqGenLPE}. 
Then $\alpha^0=1/w^0$ is a solution of the equation~\eqref{EqAdjLPE} adjoint to equation~\eqref{EqGenLPE} 
and ${\rm DT}[\alpha^0]$ maps~\eqref{EqAdjLPE} to the equation adjoint to~\eqref{EqModifiedPotLPE}, i.e.,
\begin{gather*}\begin{array}{rcl}
u_t=Au_{xx}+Bu_x+Cu & \xleftarrow{{\rm DT}[w^0]} & w_t=\widehat Aw_{xx}+\widehat Bw_x+\widehat Cw
\\[-.4ex]
&\Updownarrow&
\\[.4ex]
\alpha_t+(A\alpha)_{xx}-(B\alpha)_x+C\alpha=0 & \xrightarrow{{\rm DT}[\alpha^0]} & 
\hat\alpha_t+(\widehat A\hat\alpha)_{xx}-(\widehat B\hat\alpha)_x+\widehat C\hat\alpha=0.
\end{array}
\end{gather*}
\end{lemma}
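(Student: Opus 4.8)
The plan is to recast the two horizontal arrows of the diagram as a single operator intertwining and to obtain the bottom arrow from the top one purely by taking formal adjoints, exploiting a self-duality of the Darboux operator. Write $L=\p_t-A\p_{xx}-B\p_x-C$ and $M=\p_t-\widehat A\p_{xx}-\widehat B\p_x-\widehat C$ for the operators of~\eqref{EqGenLPE} and~\eqref{EqModifiedPotLPE} (recall $\widehat A=A$), so that~\eqref{EqAdjLPE} reads exactly $L^*\alpha=0$ and the bottom-right equation of the diagram reads $M^*\hat\alpha=0$. The first thing I would record is the factorization ${\rm DT}[\phi]=\phi\circ\p_x\circ\phi^{-1}$, valid for any nonvanishing $\phi$. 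Taking formal adjoints and using $(\p_x)^*=-\p_x$ together with the self-adjointness of multiplication operators gives at once the duality
\[
({\rm DT}[\phi])^*=-{\rm DT}[1/\phi],
\]
which is the algebraic heart of the lemma: the adjoint of a Darboux transformation is, up to sign, the dual Darboux transformation built from the reciprocal seed.

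Next I would upgrade the hypothesis that ${\rm DT}[w^0]$ maps~\eqref{EqModifiedPotLPE} to~\eqref{EqGenLPE} into the operator identity $L\,{\rm DT}[w^0]={\rm DT}[w^0]\,M$. Writing $D={\rm DT}[w^0]=\p_x-\rho$ with $\rho=(\ln w^0)_x$, the spatial parts satisfy $\mathcal L D=D\mathcal M+D_t$, where $\mathcal L,\mathcal M$ are the $x$-parts of $L,M$ and $D_t=-\rho_t$ acts by multiplication; matching coefficients of $\p_{xx}$ and $\p_x$ here fixes $\widehat B=B-A_x$ and expresses $C$ through the seed $w^0$, while the zeroth-order term vanishes precisely because $Mw^0=0$. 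Feeding this into $LD=\p_tD-\mathcal L D=(D\p_t+D_t)-(D\mathcal M+D_t)=D(\p_t-\mathcal M)$ collapses the $D_t$ contributions and yields $LD=DM$. The one subtlety is the $t$-dependence of $D$, which forces the bookkeeping identity $\p_tD=D\p_t+D_t$ and must be tracked carefully. Taking formal adjoints of $LD=DM$ now gives $D^*L^*=M^*D^*$, and substituting $D^*=-{\rm DT}[1/w^0]=-{\rm DT}[\alpha^0]$ produces ${\rm DT}[\alpha^0]\,L^*=M^*\,{\rm DT}[\alpha^0]$. Read as a statement on solution sets, this says exactly that ${\rm DT}[\alpha^0]$ carries every solution of~\eqref{EqAdjLPE} to a solution of the equation adjoint to~\eqref{EqModifiedPotLPE}, which is the content of the bottom arrow.

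It remains to confirm that $\alpha^0=1/w^0$ is itself a solution of~\eqref{EqAdjLPE}, so that ${\rm DT}[\alpha^0]$ is a genuine Darboux transformation for $L^*$; I expect this to be the only genuinely computational point. Structurally, applying the adjoint intertwining to $\alpha^0$ and using $D^*\alpha^0=-{\rm DT}[1/w^0](1/w^0)=0$ gives $D^*\bigl(L^*\alpha^0\bigr)=M^*D^*\alpha^0=0$, so $L^*\alpha^0$ lies in the kernel of the first-order spatial operator $D^*$, which consists of the functions $h(t)/w^0$; hence $L^*\alpha^0=h(t)/w^0$ for some $h$. To kill the residual $h$ I would compute $w^0L^*(1/w^0)$ directly, substitute $w^0_t$ from $Mw^0=0$ and the expression for $C$ found in the coefficient matching; everything cancels to give $w^0L^*(1/w^0)=2A\rho_x+A_x\rho+\widehat C-A_{xx}+B_x-C\equiv0$, which forces $h\equiv0$ and closes the diagram. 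The main obstacle is therefore not conceptual but lies in the careful coefficient bookkeeping of the last two steps; the duality $({\rm DT}[\phi])^*=-{\rm DT}[1/\phi]$ carries all the structural weight.
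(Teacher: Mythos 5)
Your proof is correct; I checked the key identities (the factorization ${\rm DT}[\phi]=\phi\circ\p_x\circ\phi^{-1}$, the adjoint relation $({\rm DT}[\phi])^*=-{\rm DT}[1/\phi]$, the intertwining $LD=DM$, and the final cancellation $w^0L^*(1/w^0)=2A\rho_x+A_x\rho+\widehat C-A_{xx}+B_x-C=0$) and they all hold. However, your route is genuinely different from the paper's. The paper offers no standalone proof of this lemma: it is presented as a by-product of the construction immediately preceding it, in which the modified potential equation~\eqref{EqModifiedPotLPE} is derived from the potential equation~\eqref{EqPotLPE} via the substitution $w=v/\alpha$, the coefficient formula for $\widehat C$ is rewritten to show that $\psi=1/\alpha$ solves~\eqref{EqModifiedPotLPE}, and the first equation of~\eqref{EqModifiedPotSysOfLPE} is identified as the Darboux transformation ${\rm DT}[\psi]$ from~\eqref{EqModifiedPotLPE} to~\eqref{EqGenLPE}; the bottom arrow and the ``if and only if'' refinement in the subsequent Note are then read off from the symmetry of these coefficient relations. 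You instead isolate the algebraic mechanism and make the second assertion a purely formal consequence of the first: the mapping hypothesis is upgraded to the operator identity $LD=DM$, and taking formal adjoints together with $D^*=-{\rm DT}[\alpha^0]$ yields $D^*L^*=M^*D^*$, hence the bottom arrow, with no further computation. This buys a self-contained argument that also makes the Note's observations (twice-dual equals original, replaceability of ``then'' by ``iff'') immediate, and the adjoint identity $({\rm DT}[\phi])^*=-{\rm DT}[1/\phi]$ is a reusable tool pointing toward the multiple-Darboux duality of Theorem~\ref{TheoremOnDualMultipleDarbouxTrans}; what the paper's constructive route buys is the explicit coefficient formulas for $\widehat B$ and $\widehat C$ that are reused verbatim in the iteration of Section~\ref{SectionOnGenPotSysForLPEs}. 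Two fine points deserve one explicit sentence each in your write-up, though both are handled correctly in substance. First, passing from ``${\rm DT}[w^0]$ maps solutions to solutions'' to the operator identity $LD=DM$ uses that $LD-DM$ is a purely spatial operator annihilating all solutions of~$M$, whose spatial jets at any fixed time are arbitrary. Second, the intertwining alone only forces $Mw^0=h(t)w^0$, equivalently $L^*\alpha^0=h(t)\alpha^0$; it is the separate hypothesis $Mw^0=0$ that kills $h$, which is exactly what your closing computation uses, so your phrase ``vanishes precisely because $Mw^0=0$'' should be read as an implication rather than an equivalence.
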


\begin{note}
${\rm DT}[\alpha^0]$ will be called \emph{dual} to the Darboux transformation ${\rm DT}[w^0]$. 
Since the twice adjoint equation coincides with the initial one, 
the twice dual Darboux transformation is nothing but the initial Darboux transformation. 
Moreover, `then' in the lemma can be replaced by `if and only if'. 
\end{note}

Lemma~\ref{LemmaOnDualDarbouxTrans} can also be reformulated in terms of characteristics of 
conservation laws. Denote equations~\eqref{EqGenLPE} and~\eqref{EqModifiedPotLPE} by~$\mathcal L$ 
and~$\widehat{\mathcal L}$ for convenience.

{\addtocounter{lemma}{-1}\renewcommand{\thelemma}{\arabic{lemma}$'$}
\begin{lemma}\label{LemmaOnDualDarbouxTransInTermsOfChars}
If $w^0$ is a nonzero solution of~$\widehat{\mathcal L}$ and ${\rm DT}[w^0](\widehat{\mathcal L}\,)=\mathcal L$ 
then $\alpha^0=1/w^0\in\Ch_{\rm f}(\mathcal L)$ and 
${\rm DT}[\alpha^0]\colon\Ch_{\rm f}(\mathcal L)\to\Ch_{\rm f}(\widehat{\mathcal L}\,)$. 
\end{lemma}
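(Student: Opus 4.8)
The plan is to recognize that Lemma~\ref{LemmaOnDualDarbouxTransInTermsOfChars} is nothing more than Lemma~\ref{LemmaOnDualDarbouxTrans} transported through the identification of characteristic spaces with adjoint solution spaces. First I would invoke Theorem~\ref{TheoremLocalCLsLPEs} together with the note immediately following it: for any equation $\mathcal L$ from class~\eqref{EqGenLPE} the space $\Ch_{\rm f}(\mathcal L)$ of nontrivial characteristics is canonically identified with the solution set of the adjoint equation~\eqref{EqAdjLPE}, since distinct solutions of~\eqref{EqAdjLPE} give inequivalent characteristics of inequivalent conservation laws. Applying the same theorem to $\widehat{\mathcal L}$ identifies $\Ch_{\rm f}(\widehat{\mathcal L}\,)$ with the solution set of the equation $\hat\alpha_t+(\widehat A\hat\alpha)_{xx}-(\widehat B\hat\alpha)_x+\widehat C\hat\alpha=0$, which is exactly the equation appearing in the lower-right corner of the diagram of Lemma~\ref{LemmaOnDualDarbouxTrans}.

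Under these two identifications the two assertions become literal transcriptions of the two conclusions of Lemma~\ref{LemmaOnDualDarbouxTrans}. Indeed, by the first identification the claim $\alpha^0=1/w^0\in\Ch_{\rm f}(\mathcal L)$ is simply the statement that $\alpha^0=1/w^0$ solves~\eqref{EqAdjLPE}, which is the first conclusion of Lemma~\ref{LemmaOnDualDarbouxTrans}. Likewise, through both identifications the claim ${\rm DT}[\alpha^0]\colon\Ch_{\rm f}(\mathcal L)\to\Ch_{\rm f}(\widehat{\mathcal L}\,)$ is the statement that ${\rm DT}[\alpha^0]$ carries solutions of~\eqref{EqAdjLPE} to solutions of the equation adjoint to $\widehat{\mathcal L}$, which is precisely the second conclusion of Lemma~\ref{LemmaOnDualDarbouxTrans}. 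Thus the underlying computation is entirely supplied by the unprimed lemma, and there remains only to read it off through the characteristic-space dictionary.

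The only genuine verification is that the Darboux operator respects the identifications, i.e.\ that ${\rm DT}[\alpha^0]$ descends to a well-defined map on characteristic classes. This is immediate once one notes that the identification of $\Ch_{\rm f}$ with the adjoint solution space is linear, that the relevant characteristics are honest functions of $(t,x)$ alone, and that ${\rm DT}[\alpha^0]$ is a linear differential operator in~$x$. I would also record the reversal of direction: whereas ${\rm DT}[w^0]$ runs from $\widehat{\mathcal L}$ to $\mathcal L$, its dual ${\rm DT}[\alpha^0]$ runs from the adjoint of $\mathcal L$ to the adjoint of $\widehat{\mathcal L}$, in agreement with the vertical double-arrow of the diagram. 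The main obstacle is therefore located entirely within Lemma~\ref{LemmaOnDualDarbouxTrans} itself; at the level of characteristics the statement follows by translation, with the only care needed being the bookkeeping of the reversed arrow and the linearity of the dual Darboux map.
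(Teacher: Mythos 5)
Your proposal is correct and coincides with the paper's own treatment: the paper introduces Lemma~\ref{LemmaOnDualDarbouxTransInTermsOfChars} precisely as a reformulation of Lemma~\ref{LemmaOnDualDarbouxTrans}, obtained by identifying $\Ch_{\rm f}(\mathcal L)$ and $\Ch_{\rm f}(\widehat{\mathcal L}\,)$ with the respective adjoint solution spaces via Theorem~\ref{TheoremLocalCLsLPEs} and the note following it. Your additional remarks on linearity of the identification and the reversal of direction are exactly the bookkeeping the paper leaves implicit.
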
}

\begin{note}\label{NoteOnDarbouxTransAsLinearMappingOfSolutionSpaceOfLPEs}
For any nonzero solution~$\psi$ of~\eqref{EqModifiedPotLPE} the Darboux transformation ${\rm DT}[\psi]$
is a linear mapping of the solution space of~\eqref{EqModifiedPotLPE} to the solution space of~\eqref{EqGenLPE}.
The kernel of this mapping coincides with the linear span~$\langle\psi\rangle$. 
Its image is the whole solution space of~\eqref{EqGenLPE}. Indeed, for any solution~$u$ of~\eqref{EqGenLPE} 
we can find a solution~$w$ of~\eqref{EqModifiedPotLPE}, mapped to~$u$, by integrating
system~\eqref{EqModifiedPotSysOfLPE} with respect to~$w$. 
System~\eqref{EqModifiedPotSysOfLPE} is compatible in view of equation~\eqref{EqGenLPE}.
Therefore, ${\rm DT}[\psi]$ generates a one-to-one linear mapping between 
the solution space of~\eqref{EqModifiedPotLPE}, factorized by~$\langle\psi\rangle$, 
and the solution space of~\eqref{EqGenLPE}. 
\end{note}

In view of Proposition~\ref{PropositionOnTransOfCharsOfCLsForLPEs} any point equivalence transformation in 
class~\eqref{EqGenLPE} is prolonged to characteristics of conservation laws of equations from this class. 
Then it can be prolonged to the first derivatives of potentials due to potential system~\eqref{EqPotSysOfLPE}, 
and under this prolongation $\tilde v_x=v_x$, $\tilde v_t=v_t$. 
Excluding trivial translations of the potential variable~$v$ with constant summands, we assume that 
$v$ is not transformed under prolongation of the equivalence transformations. 
The extension to the functions~$w$ and~$\psi$ is obvious. 
As a result, we obtain simultaneous equivalence transformations between 
the initial, potential and modified potential equations:
\begin{gather}\nonumber
\tilde t=T(t),\quad \tilde x=X(t,x), \quad \tilde u=U^1(t,x)u, \\ 
\tilde\alpha=\frac{\alpha}{X_xU^1}, \quad \tilde v=v, \quad \tilde w=X_xU^1w, \quad \tilde \psi=X_xU^1\psi.
\label{EqProlongedEquivTransToPotsForLPE}
\end{gather}
where $T_tX_xU^1\ne0$.

Transformations~\eqref{EqProlongedEquivTransToPotsForLPE} preserve the determining equations derived in the proof of 
Lemma~\ref{LemmaOn1to1CorrespondenceBetweenLieSymsOfPotSysAndPotEqOfLPE} as well as
both the conditions $\theta_{vx}\ne0$ and $\theta_{vx}=0$, where $\theta$ is the coefficient of~$\p_v$ in 
symmetry operators of potential systems or potential equations. 
This is why Lie and pure potential symmetries of any equation from class~\eqref{EqGenLPE} are not mixed 
under transformations~\eqref{EqProlongedEquivTransToPotsForLPE}, 
and the dimension of the factor-space of potential symmetry operators 
constructed with a single characteristic with respect to the subspace of Lie symmetry operators is not changed. 
Therefore, potential symmetries of equations from class~\eqref{EqGenLPE} can be studied up to the equivalence relation 
generated by transformations from~$G^\sim$.
In particular, it is sufficient to consider only the reduced form~\eqref{EqReducedLPE} of linear parabolic equations.
The corresponding potential system and potential equation and their modifications also are simplified to
\begin{gather} \label{EqPotSysOfReducedLPE}
v_x=\alpha u,\quad v_t=\alpha u_x-\alpha_xu,
\\ \nonumber 
v_t-v_{xx}+2\frac{\alpha_x}\alpha v_x=0,
\\ \label{EqModifiedReducedPotLPE}
w_t-w_{xx}+Pw=0,
\\ \label{EqModifiedReducedPotSysOfLPE}
w_x-\frac{\psi_x}\psi w=u,\quad w_t-\frac{\psi_t}\psi w=u_x+\frac{\psi_x}\psi u.
\end{gather}
where $\alpha=\alpha(t,x)$ is a fixed solution of the reduced adjoint equation $\alpha_t+\alpha_{xx}-V\alpha=0$, 
$\psi=\psi(t,x)$ is a fixed solution of~\eqref{EqModifiedReducedPotLPE}, $V=V(t,x)$ and 
\[
P=V-2\left(\frac{\alpha_x}\alpha\right)_x=V+2\Bigl(\frac{\psi_x}\psi\Bigr)_x.
\]

The equivalence transformations~\eqref{EqTransFromEquivGroupOfReducedLPEs} in the reduced class~\eqref{EqReducedLPE} 
form a subgroup in the equivalence group~$G^\sim$ under special restrictions 
on the parameter-functions~$X$ and $U^1$. 
They are prolonged by formulas~\eqref{EqProlongedEquivTransToPotsForLPE} to equivalence transformations of 
the whole reduced potential frame. 
Hence, the classification of potential symmetries of class~\eqref{EqReducedLPE} follows from 
the group classification of the same class in terms of $(w,P)$ instead of~$(u,V)$.

Consider a Lie symmetry operator $Q'=\tau\p_t+\xi\p_x+\zeta\p_w$ of equation~\eqref{EqModifiedReducedPotLPE}. 
The coefficients of~$Q'$ are functions of $t$, $x$ and $w$.
The infinitesimal invariance criterion~\cite{Olver1986,Ovsiannikov1982} applied to equation~\eqref{EqModifiedReducedPotLPE}
implies the conditions 
\[
\tau=\tau(t), \quad \xi=\frac12\tau_tx+\sigma(t), \quad \zeta=\zeta^1(t,x)w+\zeta^0(t,x),\quad
\zeta^1=-\frac18\tau_{tt}x^2-\frac12\sigma_tx+\varkappa(t)
\]
which do not involve the arbitrary element~$P$, and the classifying equations
\[
\zeta^1_t-\zeta^1_{xx}+\tau P_t+\xi P_x+\tau_t P=0,\quad \zeta^0_t-\zeta^0_{xx}+P\zeta^0=0.
\]
Due to the first equation of system~\eqref{EqModifiedReducedPotSysOfLPE}, 
the coefficient~$\eta$ of~$\p_u$ of the corresponding Lie symmetry operator~$Q$ of system~\eqref{EqModifiedReducedPotSysOfLPE}
can be expressed via coefficients of the first prolongation of the operator~$Q'$ with respect to~$x$:
\[
\eta=\left(\zeta^1_x-\frac12\tau_t\frac{\psi_x}\psi-\tau\Bigl(\frac{\psi_x}\psi\Bigr)_t
-\xi\Bigl(\frac{\psi_x}\psi\Bigr)_x\right)w
+\left(\zeta^1-\frac12\tau_t\right)u+\zeta^0_x-\frac{\psi_x}\psi\zeta^0.
\]
It is obvious that the operator~$Q'$ induces a pure potential symmetry of equation~\eqref{EqReducedLPE} iff $\eta_w\ne0$, i.e., 
\begin{equation}\label{EqExistanceOfPotSymOfReducedLPE}
\tau\varrho_t+\xi\varrho_x+\frac12\tau_t\varrho\ne\zeta^1_x,
\end{equation}
where $\varrho=\psi_x/\psi$. This implies that trivial Lie invariance operators of the potential equations 
generate no pure potential symmetry operators of the initial equations. 
More precisely, the Darboux transformation ${\rm DT}[\psi]$ maps 
the trivial part of the Lie invariance algebra of~\eqref{EqModifiedPotLPE}/\eqref{EqModifiedReducedPotSysOfLPE} 
onto the trivial part of the Lie invariance algebra of~\eqref{EqGenLPE}/\eqref{EqReducedLPE}. 
We therefore obtain the following result: 

\begin{proposition}\label{PropositionOnNecCondOfExistOfPotSymOfReducedLPE}
Equation~\eqref{EqReducedLPE} admits a pure potential symmetry associated with the characteristic~$\alpha$ only 
if the corresponding equation~\eqref{EqModifiedReducedPotLPE} has a nontrivial Lie symmetry, i.e.,
$P$ is equivalent to a stationary function.
\end{proposition}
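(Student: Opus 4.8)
The plan is to convert the condition for the existence of a pure potential symmetry, already distilled into the inequality~\eqref{EqExistanceOfPotSymOfReducedLPE}, into a statement about the essential Lie invariance algebra of the modified potential equation~\eqref{EqModifiedReducedPotLPE}, and then to read off the conclusion directly from the group classification in Theorem~\ref{TheoremOnGroupClassificationOfLPEs}. First I would invoke Lemma~\ref{LemmaOn1to1CorrespondenceBetweenLieSymsOfPotSysAndPotEqOfLPE} to reduce the symmetry analysis of the potential system~\eqref{EqModifiedReducedPotSysOfLPE} to that of~\eqref{EqModifiedReducedPotLPE}: every Lie symmetry operator of the system projects to an operator $Q'=\tau\p_t+\xi\p_x+\zeta\p_w$ of the equation, and the $\p_u$-coefficient $\eta$ is recovered from the first $x$-prolongation of $Q'$ by the displayed formula. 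Differentiating that formula gives $\eta_w=\zeta^1_x-\tfrac12\tau_t\varrho-\tau\varrho_t-\xi\varrho_x$ with $\varrho=\psi_x/\psi$, so that the operator $Q'$ induces a pure potential symmetry precisely when $\eta_w\ne0$, which is~\eqref{EqExistanceOfPotSymOfReducedLPE}.

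The key observation is that only operators with a genuinely nontrivial space part can meet this condition. Indeed, the determining equations recalled above fix $\xi=\tfrac12\tau_t x+\sigma(t)$ and $\zeta^1=-\tfrac18\tau_{tt}x^2-\tfrac12\sigma_t x+\varkappa(t)$; if $\tau=0$ and $\xi=0$ then $\sigma=0$ as well, whence $\zeta^1=\varkappa(t)$ is independent of $x$ and both sides of~\eqref{EqExistanceOfPotSymOfReducedLPE} vanish, giving $\eta_w=0$. In particular every trivial operator $w\p_w$ and $f\p_w$, for which $\tau=\xi=0$, yields $\eta_w=0$. Consequently $\eta_w\ne0$ forces $\tau\ne0$ or $\xi\ne0$, so $Q'$ cannot lie in $\mathfrak g^{\rm triv}$ of~\eqref{EqModifiedReducedPotLPE} and contributes a nonzero element to $\mathfrak g^{\rm ess}$. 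Thus a pure potential symmetry associated with~$\alpha$ forces the modified potential equation to admit a nontrivial (essential) Lie symmetry.

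It then remains to translate \emph{nontrivial Lie symmetry of~\eqref{EqModifiedReducedPotLPE}} into \emph{$P$ equivalent to a stationary function}. For this I would apply Theorem~\ref{TheoremOnGroupClassificationOfLPEs} to~\eqref{EqModifiedReducedPotLPE}, whose arbitrary element $P$ plays exactly the role of $V$ in~\eqref{EqReducedLPE}: the maximal Lie invariance algebra exceeds the kernel $\langle w\p_w,\,f\p_w\rangle$ in, and only in, the three $G^\sim_1$-inequivalent cases $P\sim V(x)$, $P\sim\mu x^{-2}$ and $P\sim0$, each of whose canonical representatives is time-independent. Hence an equation of the form~\eqref{EqModifiedReducedPotLPE} possesses a nontrivial Lie symmetry if and only if $P$ is $G^\sim_1$-equivalent to a stationary function, and combining this with the previous paragraph closes the argument.

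I expect the heavy lifting to have already been done in the determining-equation computation behind Lemma~\ref{LemmaOn1to1CorrespondenceBetweenLieSymsOfPotSysAndPotEqOfLPE} and in the explicit expression for $\eta$; given those, the proposition is close to a direct consequence. The one point that genuinely requires care is the passage from \emph{$Q'$ essential} to \emph{$P$ equivalent to stationary}: I must make sure that each of the three extension cases of Theorem~\ref{TheoremOnGroupClassificationOfLPEs}, including $\mu x^{-2}$ and $0$, admits a time-independent representative, so that \emph{nontrivial Lie symmetry} is truly synonymous with \emph{$P$ equivalent to a stationary function}, and that the trivial/essential split of $Q'$ is preserved under the prolonged equivalence transformations~\eqref{EqProlongedEquivTransToPotsForLPE}, so that the classification may legitimately be applied up to $G^\sim_1$-equivalence.
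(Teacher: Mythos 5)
Your proposal is correct and takes essentially the same route as the paper: the paper likewise derives condition~\eqref{EqExistanceOfPotSymOfReducedLPE} from the determining equations and the prolongation formula for~$\eta$, observes that operators with $\tau=\xi=0$ (in particular the trivial ones) cannot satisfy it, and reads off the equivalence of nontrivial Lie symmetry with stationarity of~$P$ from Theorem~\ref{TheoremOnGroupClassificationOfLPEs}. Your explicit check that $\tau=\xi=0$ forces $\sigma=0$ and hence $\zeta^1_x=0$ is merely a more detailed rendering of the paper's remark that trivial Lie invariance operators of the potential equation generate no pure potential symmetries.
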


It may happen that nontrivial Lie invariance operators of the potential equations do not generate 
any pure potential symmetry operators of the initial equations. 
We provide the symmetry criterion for Lie invariance operators and solutions of 
the potential equation leading to pure potential symmetries. 
Its formulation is independent of the equivalence transformations. 

\begin{proposition}\label{PropositionOnSymCondOfExistOfPotSymOfLPE}
Suppose that $\,Q'=\tau\p_t+\xi\p_x+\zeta^1w\p_w$ is a nontrivial Lie invariance operator of equation~\eqref{EqModifiedPotLPE}, 
$\psi$ is a solution of~\eqref{EqModifiedPotLPE} and ${\rm DT}[\psi]$ transforms equation~\eqref{EqModifiedPotLPE} to~\eqref{EqGenLPE}. 
The operator~$Q'$ induces a pure potential symmetry operator of equation~\eqref{EqGenLPE} 
iff the following obviously equivalent conditions are satisfied, where $Q'[\psi]=\zeta^1\psi-\tau\psi_t-\xi\psi_x$:

1) $Q'[\psi]$ and $\psi$ are linearly independent;

2) $\psi$ is not an eigenfunction of~$Q'$;

3) ${\rm DT}[\psi](Q'[\psi])$ is a non-vanishing solution of equation~\eqref{EqGenLPE};

4) $\psi$ is not invariant with respect to the operator $Q'-\lambda w\p_w$ for some constant~$\lambda$. 
\end{proposition}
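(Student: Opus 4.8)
The plan is to establish the equivalence of the four stated conditions and then to identify condition~(3) with the defining property of a pure potential symmetry, namely that the coefficient~$\eta$ of~$\p_u$ genuinely depends on the potential (equivalently, on~$w$). First I would set up the key objects: given the nontrivial Lie symmetry operator $Q'=\tau\p_t+\xi\p_x+\zeta^1w\p_w$ of the modified potential equation~\eqref{EqModifiedPotLPE}, the evolutionary (characteristic) part of its action on the distinguished solution~$\psi$ is exactly $Q'[\psi]=\zeta^1\psi-\tau\psi_t-\xi\psi_x$. The crucial observation is that $Q'[\psi]$ is itself a solution of~\eqref{EqModifiedPotLPE}, since the evolutionary form of a Lie symmetry maps solutions to solutions; this is what makes it meaningful to apply ${\rm DT}[\psi]$ to it. I would therefore record from the outset that both~$\psi$ and $Q'[\psi]$ lie in the solution space of~\eqref{EqModifiedPotLPE}.

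The equivalences among (1), (2) and (4) are essentially reformulations. Condition~(2) says $\psi$ is not an eigenfunction of~$Q'$, i.e.\ there is no constant~$\lambda$ with $Q'[\psi]=\lambda\psi$; this is literally the failure of the proportionality $Q'[\psi]=\lambda\psi$, which is condition~(4) written as invariance of~$\psi$ under $Q'-\lambda w\p_w$. Linear independence of $Q'[\psi]$ and~$\psi$ as functions of $(t,x)$ is the same non-proportionality, giving~(1); here I would invoke Lemma~\ref{LemmaOnLinearDependenceOfSolutionsOfLEvolE} (via its Wronskian criterion), which guarantees that two solutions of the linear evolution equation~\eqref{EqModifiedPotLPE} are proportional with a genuinely constant factor as soon as they are linearly dependent, so that pointwise proportionality forces $\lambda=\const$. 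This rules out the spurious possibility of a $t$-dependent proportionality factor and cleanly ties (1), (2) and~(4) together.

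Next I would connect these to condition~(3) using Note~\ref{NoteOnDarbouxTransAsLinearMappingOfSolutionSpaceOfLPEs}: the Darboux transformation ${\rm DT}[\psi]$ is a linear map of the solution space of~\eqref{EqModifiedPotLPE} onto that of~\eqref{EqGenLPE}, with kernel exactly $\langle\psi\rangle$. Consequently ${\rm DT}[\psi](Q'[\psi])=0$ precisely when $Q'[\psi]\in\langle\psi\rangle$, i.e.\ when $Q'[\psi]=\lambda\psi$ for a constant~$\lambda$. Thus ${\rm DT}[\psi](Q'[\psi])$ is a nonvanishing solution of~\eqref{EqGenLPE} if and only if $\psi$ is not an eigenfunction of~$Q'$, which is exactly~(2); so (2)$\Leftrightarrow$(3).

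The main step is to show that these conditions are equivalent to the operator~$Q'$ inducing a \emph{pure} potential symmetry, i.e.\ $\eta_w\ne0$. Here I would use the explicit formula for~$\eta$ established just before the proposition, in which the coefficient of~$w$ is $\zeta^1_x-\frac12\tau_t\varrho-\tau\varrho_t-\xi\varrho_x$ with $\varrho=\psi_x/\psi$; the condition $\eta_w\ne0$ is precisely the criterion~\eqref{EqExistanceOfPotSymOfReducedLPE}. The task is then to verify algebraically that this nonvanishing is equivalent to $Q'[\psi]\not\in\langle\psi\rangle$. I would compute $\p_x\bigl(Q'[\psi]/\psi\bigr)$ and show it equals $\eta_w$ up to a nonzero factor: writing $Q'[\psi]/\psi=\zeta^1-\tau\psi_t/\psi-\xi\psi_x/\psi$ and differentiating in~$x$ while using $\psi_t/\psi=\psi_{xx}/\psi-P$ (from~\eqref{EqModifiedPotLPE}) and the definitions of~$\tau$, $\xi$, $\zeta^1$ in terms of the symmetry structure, the $x$-derivative of the ratio reproduces exactly the combination in~\eqref{EqExistanceOfPotSymOfReducedLPE}. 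Since $Q'[\psi]/\psi$ is constant in~$x$ precisely when $Q'[\psi]$ is a scalar multiple of~$\psi$ (and, by the Wronskian argument above, that scalar is then a constant), the vanishing of $\eta_w$ is equivalent to~(4), closing the loop. I expect this last derivation—reducing the symmetry-theoretic condition $\eta_w\ne0$ to the clean statement $Q'[\psi]\notin\langle\psi\rangle$ by a careful differentiation and substitution using the modified potential equation—to be the only genuinely computational obstacle; the remaining equivalences follow formally from linearity of ${\rm DT}[\psi]$ and the Wronskian lemma.
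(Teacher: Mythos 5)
Your proposal is correct and follows essentially the same route as the paper's proof: the observation that $Q'[\psi]$ solves the modified potential equation, the key identity $\eta_w=\p_x\bigl(Q'[\psi]/\psi\bigr)=\frac1\psi{\rm DT}[\psi](Q'[\psi])$, and the use of Lemma~\ref{LemmaOnLinearDependenceOfSolutionsOfLEvolE} to upgrade pointwise proportionality to proportionality with a genuinely constant factor are exactly the ingredients the paper uses. The only cosmetic difference is that the paper opens by noting that all four conditions are stable under equivalence transformations and so explicitly reduces to the forms~\eqref{EqReducedLPE} and~\eqref{EqModifiedReducedPotSysOfLPE} before computing, whereas you invoke the reduced-form expression for~$\eta$ implicitly.
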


\begin{proof}
All conditions in the proposition are stable under equivalence transformations. 
It is enough to consider the reduced equations~\eqref{EqReducedLPE} and~\eqref{EqModifiedReducedPotSysOfLPE}
instead of~\eqref{EqModifiedPotLPE} and~\eqref{EqGenLPE}. 
Since $Q'[\psi]$ is the result of the action of the Lie invariance operator~$Q$ of equation~\eqref{EqModifiedReducedPotSysOfLPE} on 
the solution~$\psi$ of the same equation then $Q'[\psi]$ also is a solution of~\eqref{EqModifiedReducedPotSysOfLPE} 
and hence ${\rm DT}[\psi](Q'[\psi])$ is a solutions of~\eqref{EqReducedLPE}.
Let $Q$ denote the prolongation of~$Q'$ to~$u$.
The expression for $\eta_w$, where $\eta$ is the coefficient of~$\p_u$ in~$Q$, is rewritten in terms of $Q'[\psi]$:
\[
\eta_w=\left(\frac{Q'[\psi]}{\psi}\right)_x=\frac1\psi{\rm DT}[\psi](Q'[\psi]).
\]
Therefore, $\eta_w=0$ iff $W(\psi,Q'[\psi])=0$, i.e.\ ${\rm DT}[\psi](Q'[\psi])=0$. 
In view of Lemma~\ref{LemmaOnLinearDependenceOfSolutionsOfLEvolE}, the condition $W(\psi,Q'[\psi])=0$ is equivalent to the fact that 
the functions $Q'[\psi]$ and $\psi$ are linearly dependent, i.e.\ $Q'[\psi]=\lambda\psi$ for some constant~$\lambda$ or 
$\hat Q'_\lambda[\psi]=0$, where $Q'_\lambda=Q'-\lambda w\p_w$. 
For any $\lambda$ the operator $Q'_\lambda$ belongs to the Lie invariance algebra of equation~\eqref{EqModifiedReducedPotSysOfLPE}.
\end{proof}

In view of Lemma~\ref{LemmaOn1to1CorrespondenceBetweenLieSymsOfPotSysAndPotEqOfLPE}
pure potential symmetries obviously exist if the dimension of the essential Lie invariance algebra of the potential equation 
is greater than the corresponding dimension for the initial equation. As shown by the statements below and the example 
considered in Section~\ref{SectionOnSimpestPotSymsOfLHE}, this condition is not necessary. 

\begin{theorem}\label{TheoremOnPotSymOfLPEs}
A linear second-order parabolic equation admits simplest pure potential symmetries 
iff it is equivalent with respect to point equivalence transformations to an equation from class~\eqref{EqReducedLPE} in which 
\[
V=P(x)-2\Bigl(\frac{\psi_x}\psi\Bigr)_x, 
\]
where $\psi=\psi(t,x)$ is a nonzero solution of the equation $\psi_t-\psi_{xx}+P(x)\psi=0$ and 
either $P=\mu x^{-2}$, $\mu=\const$, or $(\psi_x/\psi)_t\ne0$ (this latter condition is equivalent to $V_t\ne0$) 
if $P$ is inequivalent to~$\mu x^{-2}$ with respect to point equivalence transformations. 
In the case $P=0$ the initial equation possesses at least two potential symmetry operators 
which are linearly independent up to Lie symmetries. 
The associated characteristic equals $1/\psi$.
\end{theorem}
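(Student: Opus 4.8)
The plan is to transfer the whole question to the modified potential equation~\eqref{EqModifiedReducedPotLPE} and to decide the existence of pure potential symmetries through the symmetry criterion of Proposition~\ref{PropositionOnSymCondOfExistOfPotSymOfLPE}. First I would invoke the prolonged equivalence transformations~\eqref{EqProlongedEquivTransToPotsForLPE} to reduce the parabolic equation to the form~\eqref{EqReducedLPE}, which makes $v$, the modified potential $w$ and $\psi=1/\alpha$ simultaneously available. By Proposition~\ref{PropositionOnNecCondOfExistOfPotSymOfReducedLPE} a pure potential symmetry can arise only when~\eqref{EqModifiedReducedPotLPE} admits a nontrivial Lie symmetry, i.e.\ $P$ is $G^\sim_1$-equivalent to a stationary function, so I may assume $P=P(x)$. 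Since~\eqref{EqModifiedReducedPotLPE} is an equation of the form~\eqref{EqReducedLPE} with $(w,P)$ in place of $(u,V)$, Theorem~\ref{TheoremOnGroupClassificationOfLPEs} applies verbatim and partitions the analysis into three cases according to the essential Lie invariance algebra of~\eqref{EqModifiedReducedPotLPE}: $P$ inequivalent to $\mu x^{-2}$ (spanned, modulo the centre $\langle w\partial_w\rangle$, by $\partial_t$), $P=\mu x^{-2}$ with $\mu\ne0$ (spanned by $\partial_t,D,\Pi$), and $P=0$ (spanned by $\partial_t,\partial_x,G,D,\Pi$).

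In the generic stationary case the only nontrivial operator modulo $\langle w\partial_w\rangle$ is $Q'=\partial_t$, and $Q'[\psi]=-\psi_t$. By Proposition~\ref{PropositionOnSymCondOfExistOfPotSymOfLPE} a pure potential symmetry exists precisely when $\psi$ is not an eigenfunction of $\partial_t$, that is $\psi_t\ne\lambda\psi$ for every constant $\lambda$; writing $\varrho=\psi_x/\psi$ this says $\varrho_t\ne0$, because a solution of~\eqref{EqModifiedReducedPotLPE} has $t$-independent logarithmic $x$-derivative exactly when it separates as $e^{\lambda t}\phi(x)$. It then remains to identify $\varrho_t\ne0$ with the stated condition $V_t\ne0$. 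From $V=P(x)-2\varrho_x$ one has $V_t=-2\varrho_{xt}$, giving one implication at once; for the converse I would suppose $\varrho_t$ depended on $t$ alone, substitute $\varrho$ into the Riccati relation $\varrho_t=\varrho_{xx}+2\varrho\varrho_x-P_x$ that follows from~\eqref{EqModifiedReducedPotLPE}, and separate variables to conclude that $\varrho$ is affine in $x$ and $P$ at most quadratic, whence $P$ would be $G^\sim_1$-equivalent to $0$ --- contrary to the standing hypothesis. This pins down $\varrho_t\ne0\Leftrightarrow V_t\ne0$.

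For $P=\mu x^{-2}$ with $\mu\ne0$, and for $P=0$, the claim is that a pure potential symmetry exists for every nonzero $\psi$, i.e.\ no nonzero solution of~\eqref{EqModifiedReducedPotLPE} is a common eigenfunction of all essential operators. Evaluating $Q'[\psi]=\zeta^1\psi-\tau\psi_t-\xi\psi_x$ on the generators, if $\psi$ were an eigenfunction of both $\partial_t$ and $D$ it would be a stationary power solution $\psi\propto x^k$, and $\Pi[\psi]$ would then contain a term proportional to $x^2\psi$ lying outside $\langle\psi\rangle$; if $\psi$ is not an eigenfunction of $\partial_t$ or of $D$, that operator already serves. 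For $P=0$ one needs the sharper bound that the image of $Q'\mapsto Q'[\psi]$ in the solution space modulo $\langle\psi\rangle$ is at least two-dimensional. I would argue by contradiction: a one-dimensional image, contained in some $\langle\psi,h\rangle$, forces $h\propto x\psi$ via $\partial_x$ and $G$, then forces $\varrho$ to be constant via $\partial_t$ and the relation $\psi_t/\psi=\varrho_x+\varrho^2$ (so that $\psi=e^{ax+a^2t}$), and finally $\Pi[\psi]$ produces the term $-x^2\psi\notin\langle\psi,x\psi\rangle$ --- a contradiction; the value two is attained by $\psi\equiv1$, i.e.\ the characteristic $\alpha=1$ of the heat equation, which yields the final sentence. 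The main obstacle I anticipate is exactly this non-cancellation of the quadratic-in-$x$ contribution of $\Pi$ (and the parallel power-solution computation for $\mu x^{-2}$), together with the group-classification input that sub-quadratic potentials collapse to the free case, which is what makes the clean dichotomy ``$P\sim\mu x^{-2}$ versus $V_t\ne0$'' hold.
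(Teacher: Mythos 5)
Your overall route is the paper's own: reduce to~\eqref{EqReducedLPE} by the prolonged equivalence transformations, pass to the modified potential equation~\eqref{EqModifiedReducedPotLPE}, and run through the Lie--Ovsiannikov cases of Theorem~\ref{TheoremOnGroupClassificationOfLPEs} written in terms of $(w,P)$; your use of the eigenfunction criterion of Proposition~\ref{PropositionOnSymCondOfExistOfPotSymOfLPE} is an equivalent repackaging of the paper's condition~\eqref{EqExistanceOfPotSymOfReducedLPE}. Two of your three cases are fine, and in the generic case $P=P(x)$ you are in fact \emph{more} complete than the paper: the paper only asserts that a solution with $(\psi_x/\psi)_t\ne0$ and $(\psi_x/\psi)_{tx}=0$ forces equivalence to the heat equation, whereas your Riccati argument ($\varrho_t=\varrho_{xx}+2\varrho\varrho_x-P_x$, differentiate in $t$, separate variables, conclude $P$ is at most quadratic and hence equivalent to $0$) actually proves it. The case $P=\mu x^{-2}$, $\mu\ne0$, is also sound, and your identification of ``two potential symmetries independent up to Lie symmetries'' with ``the image of $Q'\mapsto Q'[\psi]$ modulo $\langle\psi\rangle$ is at least two-dimensional'' is correct, since $\eta_w=\psi^{-1}{\rm DT}[\psi](Q'[\psi])$ and ${\rm DT}[\psi]$ has kernel exactly $\langle\psi\rangle$.

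The gap is in the case $P=0$. The step ``a one-dimensional image, contained in some $\langle\psi,h\rangle$, forces $h\propto x\psi$ via $\p_x$ and $G$'' is false, and the chain built on it breaks for concrete solutions. Take the Gaussian $\psi=t^{-1/2}e^{-x^2/(4t)}$: then $\psi_x=-\tfrac{x}{2t}\psi$, so $G[\psi]=-2t\psi_x-x\psi=0$ and $\p_x[\psi]=\tfrac{x}{2t}\psi$, and these two operators yield only $\langle \tfrac{x}{2t}\psi\rangle$ modulo $\langle\psi\rangle$ --- nothing resembling $x\psi$ (indeed $x\psi$ is a heat solution only when $\psi_x=0$, so ``$h\propto x\psi$'' with $h$ a solution could only ever occur for constant $\psi$). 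The error is treating the $t$-dependent identity $x\psi=2t\,\p_x[\psi]-G[\psi]$ as if it placed $x\psi$ in the \emph{constant}-coefficient span $\langle\psi,h\rangle$. A repaired argument needs a case split: if $\psi_x\in\langle\psi\rangle$, then $\varrho$ is constant, $\psi=e^{ax+a^2t}$, $G$ forces $h\equiv-(x+2at)\psi\bmod\langle\psi\rangle$, and your $\Pi$-contradiction goes through with $(x+2at)^2$ in place of $x^2$; if $\psi_x\notin\langle\psi\rangle$, take $h=\psi_x$, then $G[\psi]\in\langle\psi,\psi_x\rangle$ forces $\varrho=-(x-b)/(2t-a)$ (Gaussian type), and the contradiction must come from $\p_t$, not $\Pi$: membership $\psi_t\in\langle\psi,\psi_x\rangle$ makes $\psi_t/\psi$ affine in $x$, while the heat equation gives $\psi_t/\psi=\varrho_x+\varrho^2$, genuinely quadratic in $x$. (For the Gaussian above one in fact has $G[\psi]=\Pi[\psi]=0$, so $\p_t$ and $\p_x$ are the only operators that can supply the two dimensions.) Your instinct that a sharper image-dimension argument is required here is right --- the paper's own proof is terse at exactly this point, producing one operator in $\langle\p_t,D,\Pi\rangle$ and one in $\langle\p_x,G\rangle$ without excluding that their images coincide --- but as written your deduction would fail.
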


\begin{proof}
Let us briefly recall the main results of the above consideration which form the basis of our proof. 
The transformations from~$G^\sim$ generate 
a well-defined equivalence relation on the whole frame of simplest potential symmetries 
of equations from class~\eqref{EqGenLPE}. 
The investigation of Lie symmetries of potential systems is reduced to 
the group classification of modified potential equations forming the same class 
as the initial equations and are connected with them via the Darboux transformation. 
To complete the proof, we have to verify condition~\eqref{EqExistanceOfPotSymOfReducedLPE} 
for all the cases of the Lie--Ovsiannikov classification (see Theorem~\ref{TheoremOnGroupClassificationOfLPEs}). 

Let $P=\mu x^{-2}$. Suppose that any of the operators $\p_t$, $D$ and~$\Pi$ does not satisfy 
condition~\eqref{EqExistanceOfPotSymOfReducedLPE} for some~$\psi$. 
Then $\varrho_t=0$, $x\varrho_x+\varrho=0$ and $4tx\varrho_x+4t\varrho=-2x$, which implies the contradiction $0=x$. 
Hence for any solution of equation~\eqref{EqModifiedReducedPotLPE} 
at least one operator among $\p_t$, $D$ and~$\Pi$ satisfies condition~\eqref{EqExistanceOfPotSymOfReducedLPE} 
and, therefore, induces a pure potential symmetry operator of~\eqref{EqReducedLPE}.

In the case $\mu=0$ at least one operator among $\p_x$ and~$G$ 
induces a pure potential symmetry operator of~\eqref{EqReducedLPE} 
since otherwise the equations $\varrho_x=0$ and $2t\varrho_x=-1$ would imply the contradiction $0=-1$. 
Therefore, for $P=0$ we have at least two independent pure potential symmetry operator of~\eqref{EqReducedLPE}.

There is only one independent nontrivial Lie symmetry operator~$\p_t$ for the general value $P=P(x)$. 
If it does not induce a pure potential symmetry operator of~\eqref{EqReducedLPE} then 
$\varrho_t=0$ and, therefore, $V_t=P_t+\varrho_{xt}=0$. 
Conversely, suppose that $P_t=V_t=0$ and $\varrho_{t}\ne0$ simultaneously. Then $\varrho_{xt}=0$. 
If equation~\eqref{EqModifiedReducedPotLPE} has a solution~$\psi$ satisfying the conditions 
$(\psi_x/\psi)_t\ne0$ and $(\psi_x/\psi)_{tx}=0$ then it is equivalent to the linear heat equation 
(the case $P=0$).
\end{proof}

Based on Theorem~\ref{TheoremOnPotSymOfLPEs}, we can formulate symmetry criteria on the existence 
of simplest potential symmetries without involving equivalence transformations. 

\begin{corollary}
A linear second-order parabolic equation admits simplest pure potential symmetries 
associated with its characteristic~$\alpha$ only if 
the corresponding potential equation possesses nontrivial Lie symmetry operators.
If the potential equation has more than one (three) independent nontrivial Lie symmetry operators then 
the initial equation admits at least one (two) independent simplest pure potential symmetry operators.
\end{corollary}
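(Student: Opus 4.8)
The plan is to derive the corollary directly from Theorem~\ref{TheoremOnPotSymOfLPEs} and the group classification of class~\eqref{EqReducedLPE}, reducing both assertions to a count of nontrivial Lie symmetry operators of the (modified) potential equation. First I would recall that, by Lemma~\ref{LemmaOn1to1CorrespondenceBetweenLieSymsOfPotSysAndPotEqOfLPE}, the Lie symmetries of the potential system are in one-to-one correspondence with those of the potential equation, and that the potential equation and the modified potential equation~\eqref{EqModifiedReducedPotLPE} are connected by the point transformation $w=v/\alpha$, i.e.\ multiplication by a nonzero function of $t$ and $x$. Hence these two equations have isomorphic maximal Lie invariance algebras, so the number of independent nontrivial Lie symmetry operators is the same for both; moreover the modified potential equation belongs to class~\eqref{EqReducedLPE} with arbitrary element $P$ in place of $V$, which lets me apply the existing classification verbatim.

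For the necessary condition (the first assertion) I would invoke Proposition~\ref{PropositionOnNecCondOfExistOfPotSymOfReducedLPE}: a pure potential symmetry associated with $\alpha$ can be induced only by a Lie symmetry operator $Q'$ of the modified potential equation for which $\eta_w\ne0$, i.e.\ for which condition~\eqref{EqExistanceOfPotSymOfReducedLPE} holds. Since a trivial Lie invariance operator never produces a pure potential symmetry (as noted before Proposition~\ref{PropositionOnNecCondOfExistOfPotSymOfReducedLPE}, the Darboux transformation ${\rm DT}[\psi]$ maps the trivial part of the invariance algebra onto the trivial part of the initial equation), the existence of a pure potential symmetry forces the modified potential equation, and therefore the potential equation, to carry at least one nontrivial Lie symmetry operator.

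For the sufficient conditions I would combine Corollary~\ref{CorollaryOnNumberOfNontrivSymsOfLPEs}, which restricts the number of independent nontrivial symmetries of any equation in the class to $\{0,1,3,5\}$, with the case analysis already performed in the proof of Theorem~\ref{TheoremOnPotSymOfLPEs}. ``More than one'' nontrivial operator then forces this count to be at least $3$, so up to point equivalence $P$ has the form $\mu x^{-2}$ (cases~2 and~3 of Theorem~\ref{TheoremOnGroupClassificationOfLPEs}); the proof of Theorem~\ref{TheoremOnPotSymOfLPEs} shows that at least one of $\p_t$, $D$, $\Pi$ satisfies~\eqref{EqExistanceOfPotSymOfReducedLPE}, yielding one simplest pure potential symmetry. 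Likewise ``more than three'' forces the count to be $5$, i.e.\ $P$ equivalent to $0$ (case~3, the heat equation), where in addition at least one of $\p_x$, $G$ satisfies~\eqref{EqExistanceOfPotSymOfReducedLPE}; together with the operator drawn from $\{\p_t,D,\Pi\}$ this produces two pure potential symmetries independent modulo Lie symmetries.

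The step I expect to require the most care is this last independence claim: I must ensure that the operators extracted from the two distinct groups yield pure potential symmetries that remain linearly independent after factoring out the Lie symmetry algebra of the initial equation, rather than collapsing into a single class. This is exactly what the explicit verification of condition~\eqref{EqExistanceOfPotSymOfReducedLPE} in the proof of Theorem~\ref{TheoremOnPotSymOfLPEs} secures, via the contradictions $0=x$ and $0=-1$ obtained there, so the main obstacle is careful bookkeeping rather than a new idea. The only remaining subtlety is to phrase the statement invariantly, observing that the thresholds ``more than one'' and ``more than three'' are meaningful precisely because the intermediate values $2$ and $4$ are excluded by Corollary~\ref{CorollaryOnNumberOfNontrivSymsOfLPEs}.
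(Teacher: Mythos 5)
Your proposal is correct and follows essentially the same route as the paper, which states this corollary as a direct consequence of Theorem~\ref{TheoremOnPotSymOfLPEs}: necessity via Proposition~\ref{PropositionOnNecCondOfExistOfPotSymOfReducedLPE} (trivial operators never induce pure potential symmetries), and sufficiency via the restriction of symmetry counts to $\{0,1,3,5\}$ from Corollary~\ref{CorollaryOnNumberOfNontrivSymsOfLPEs} combined with the case analysis ($P\sim\mu x^{-2}$ and $P\sim 0$) carried out in the proof of Theorem~\ref{TheoremOnPotSymOfLPEs}, all made legitimate by the invariance of the pure-potential-symmetry count under the prolonged equivalence transformations~\eqref{EqProlongedEquivTransToPotsForLPE}. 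The independence claim for the two operators in the case $P\sim 0$ rests on exactly the same verification (the contradictions $0=x$ and $0=-1$) that the paper itself uses, so your argument matches the paper's level of detail there as well.
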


\section{Simplest potential symmetries of the linear heat equation}\label{SectionOnSimpestPotSymsOfLHE}

Theorem~\ref{TheoremOnPotSymOfLPEs} gives description of equations from class~\eqref{EqGenLPE}, 
having nontrivial simplest potential symmetries.
At the same time, there is another problem concerning simplest potential symmetries: 
Given an equation from class~\eqref{EqGenLPE}, to describe all its characteristics leading to 
its nontrivial simplest potential symmetries.
In what follows we consider this problem in detail for the linear heat equation
\begin{equation}\label{EqLHE}
u_t=u_{xx}.
\end{equation}

We recall that the maximal Lie invariance algebra of the linear heat equation is
\begin{gather*}
\mathfrak g_0=\langle
\p_t,\ \p_x,\ 2t\p_t+x\p_x,\ 2t\p_x-xu\p_u,\ 4t^2\p_t+4tx\p_x-(x^2+2t)u\p_u,\ u\p_u,\ f\p_u\rangle.
\end{gather*}
Here the function $f=f(t,x)$ runs through the solution set of this equation. 
The operators from~$\mathfrak g_0$ generate the continuous symmetry transformations of~\eqref{EqLHE}.
See, e.g., \cite{Olver1986} for their explicit form. 
Equation~\eqref{EqLHE} also possesses a nontrivial group of discrete symmetry transformations 
generated by two involutive transformations of alternate sign $(t,x,u)\to(t,-x,u)$ and $(t,x,u)\to(t,x,-u)$.
The point symmetry group~$G_0$ of~\eqref{EqLHE} is generated by both the continuous and discrete symmetries. 
The most general solution obtainable from a given solution $u=u(t,x)$ by transformations from~$G_0$ is of the form
\[
\tilde u=\frac{\varepsilon_3}{\sqrt{1+4\varepsilon_6t}}e^{-\frac{\varepsilon_5x+\varepsilon_6x^2-\varepsilon_5^2t}{1+4\varepsilon_6t}}
u\left(\frac{\varepsilon_4^2t}{1+4\varepsilon_6t}-\varepsilon_2,\frac{\varepsilon_4(x-2\varepsilon_5t)}{1+4\varepsilon_6t}-\varepsilon_1\right)+f(t,x),
\]
where $\varepsilon_1,\ldots,\varepsilon_6$ are arbitrary constants, 
$\varepsilon_3\varepsilon_4\not=0$ and $f=f(t,x)$ is an arbitrary solution of the linear heat equation.
The essential part~$G_0^{\rm ess}$ of $G_0$ is formed by the transformations with $f\equiv0$.

In view of Corollary~\ref{CorollaryOn2DPotSystemsEquivRelGeneratedBySymGroup} 
the transformations from~$G_0^{\rm ess}$ prolonged by formulas~\eqref{EqProlongedEquivTransToPotsForLPE}
form equivalence groups on the sets of potential systems and potential equations 
associated with equation~\eqref{EqLHE} and single characteristics. 
We identify these equivalence groups with~$G_0^{\rm ess}$. 
It is natural to investigate problems on potential symmetries of~\eqref{EqLHE} 
up to the equivalence relation generated by~$G_0^{\rm ess}$.

\begin{lemma}\label{LemmaOnCharsEquiv1OfLHE}
The characteristic~$\alpha$ of the linear heat equation gives an equation of the form~\eqref{EqModifiedReducedPotLPE} with $P_x=0$ 
iff $\alpha=1\bmod G_0^{\rm ess}$. 
\end{lemma}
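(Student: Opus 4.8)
The plan is to translate both sides of the equivalence into statements about the dependence of $\alpha$ on~$x$ and then identify the resulting family of characteristics with the $G_0^{\rm ess}$-orbit of the constant characteristic. For the linear heat equation~\eqref{EqLHE} we have $V=0$, so with $\psi=1/\alpha$ and $\varrho=\psi_x/\psi=-\alpha_x/\alpha$ the coefficient of the modified potential equation~\eqref{EqModifiedReducedPotLPE} is $P=2\varrho_x$. Hence $P_x=0$ is equivalent to $\varrho_{xx}=0$, i.e.\ to $(\ln\alpha)_{xxx}=0$, which means $\alpha=\exp\bigl(c_2(t)x^2+c_1(t)x+c_0(t)\bigr)$ for some smooth functions $c_2,c_1,c_0$ of~$t$. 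First I would substitute this ansatz into the reduced adjoint equation~\eqref{eqAdjEqOfLinParEqInRedForm} with $V=0$, namely $\alpha_t+\alpha_{xx}=0$, and split with respect to powers of~$x$. This yields the system $c_2'=-4c_2^2$, $c_1'=-4c_2c_1$, $c_0'=-2c_2-c_1^2$, exhibiting the set~$S$ of characteristics with $P_x=0$ as an explicit three-parameter family of exponentials of quadratics.

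To prove the ``if'' part I would compute the prolonged action of $G_0^{\rm ess}$ on $\alpha$ directly. By Proposition~\ref{PropositionOnTransOfCharsOfCLsForLPEs} together with~\eqref{EqProlongedEquivTransToPotsForLPE}, any $\mathcal T\in G_0^{\rm ess}$ sends $\alpha$ to $\tilde\alpha=\kappa\,\alpha/(X_xU^1)$. From the explicit form of $G_0^{\rm ess}$ recalled above, $X$ is affine in~$x$, so $X_x$ does not depend on~$x$, while the multiplier $U^1$ is a $t$-dependent factor times the exponential of a quadratic polynomial in~$x$; consequently $X_xU^1$ is itself the exponential of a quadratic in~$x$ with $t$-dependent coefficients. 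Applied to $\alpha=1$ this makes $\tilde\alpha$ again the exponential of a quadratic polynomial in~$\tilde x$ (using that $\tilde x$ is affine in~$x$), so $(\ln\tilde\alpha)_{\tilde x\tilde x}$ is independent of~$\tilde x$ and $\tilde P_{\tilde x}=0$. Thus every characteristic $G_0^{\rm ess}$-equivalent to~$1$ lies in~$S$.

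For the converse I would reduce an arbitrary element of~$S$ to~$1$ using the essential operators $\Pi$, $G$ and the scaling $u\p_u$. A short computation gives the induced action $G\cdot\alpha=x\alpha-2t\alpha_x$, which on the exponential branch $\alpha=e^{ax-a^2t}$ equals $\p_a\alpha$; hence the flow of~$G$ shifts the parameter~$a$ and can normalise $c_1$ to zero. The inversion~$\Pi$ carries the Gaussian branch $c_2\neq0$ to the branch $c_2=0$, which is the classical fact that the special conformal symmetry maps heat-kernel-type solutions to the constant solution. Concretely: apply a suitable $\Pi$-transformation to achieve $c_2=0$, whereupon the system forces $c_1$ and $c_0$ to be constant and $\alpha=e^{c_1x+c_0}$; apply~$G$ to remove $c_1$; and finally rescale by $u\p_u$ to obtain $\alpha=1$. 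Since each map is a symmetry of~\eqref{EqLHE}, the transformed $\alpha$ remains a solution of the adjoint equation throughout, so the constraints are preserved. A dimension count corroborates the result: $\dim G_0^{\rm ess}=6$ and the stabiliser of the point $\alpha=1$ is at least three-dimensional (it contains $\p_t$, $\p_x$ and $D-u\p_u$, since $D\cdot1=(u\p_u)\cdot1=-1$), so the orbit is three-dimensional and therefore exhausts~$S$.

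I expect the main obstacle to be the converse direction, and specifically the $\Pi$-reduction of the Gaussian branch $c_2\neq0$: one must either write down the finite inversion transformation explicitly and check that it sends $c_2=1/(4(t-t_0))$ to $c_2=0$ (equivalently, maps a Gaussian characteristic to a constant one), or argue that the three-dimensional orbit is open and closed within each connected component of~$S$. The remaining ingredients --- the splitting with respect to powers of~$x$, the factorised form of $X_xU^1$, and the $G$- and scaling-reductions --- are routine.
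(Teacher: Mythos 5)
Your proposal is correct and follows essentially the same route as the paper's proof: integrate $P_x=0$ to conclude that $\alpha$ is the exponential of an $x$-quadratic with $t$-dependent coefficients, substitute into the backward heat equation $\alpha_t+\alpha_{xx}=0$ and split in powers of~$x$ to obtain the ODE system, and then identify the two resulting branches (exponential and Gaussian) with the $G_0^{\rm ess}$-orbit of $\alpha=1$ --- the paper simply asserts this last reduction and calls the converse obvious, both of which you flesh out concretely via the $G$- and $\Pi$-actions and the transformation rule $\tilde\alpha=\kappa\alpha/(X_xU^1)$. The only slip is your claim that $c_2=0$ forces $c_0$ to be constant (in fact $c_0=-c_1^2t+\mathrm{const}$, consistent with your own branch formula $e^{ax-a^2t}$), but this is immaterial to the argument.
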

\begin{proof}
Let $P_x=0$ in the equation~\eqref{EqModifiedReducedPotLPE} corresponding to the characteristic~$\alpha$. 
Integrating the equation $-2(\alpha_x/\alpha)_x=P$ as an ordinary differential equation on~$\alpha$ with the parameter~$t$, 
we obtain the expression $\alpha=\theta\exp(Px^2/4+\zeta)$, where $P$, $\theta$ and~$\zeta$ are functions of~$t$ and $\theta\ne0$. 
Substituting of the expression for~$\alpha$ to the the backward linear heat equation $\alpha_t+\alpha_{xx}=0$ 
and the subsequently splitting with respect to~$x$ result in the system of ordinary differential equations
\[
P_t=-P^2, \quad \zeta_t=-P\zeta, \quad \theta_t=\left(\zeta^2+\frac12P\right)\theta.
\]
In integrating this system, two cases are to be distinguished, corresponding to two cases for the characteristics:
\[
V=0,\quad \alpha=\delta_2\exp(\delta_1x-\delta_1^2t)
\quad\mbox{and}\quad 
V=\frac1{t+\delta_0},\quad \alpha=\delta_2\exp\frac{(x+\delta_1)^2}{4(t+\delta_0)},
\]
where $\delta_0$, $\delta_1$ and $\delta_2$ are constants. In both these cases the characteristics 
are $G_0^{\rm ess}$-equivalent to $\alpha=1$. 

The converse statement is obvious.
\end{proof}

\begin{theorem}\label{TheoremOnCharsGivenPotSymsOfLHE}
Lie symmetries of system~\eqref{EqPotSysOfReducedLPE}, 
where $\alpha=\alpha(t,x)$ is a (fixed) solution of the backward linear heat equation $\alpha_t+\alpha_{xx}=0$, induce
pure potential symmetries of the linear heat equation iff $\,\alpha\in\{1,x\}\bmod G_0^{\rm ess}$.
\end{theorem}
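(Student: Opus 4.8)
The plan is to reduce the problem to the modified potential equation and then run a case analysis over the Lie--Ovsiannikov classification, exploiting that the underlying equation is the linear heat equation, so that $V=0$ and hence $V_t=0$. By Lemma~\ref{LemmaOn1to1CorrespondenceBetweenLieSymsOfPotSysAndPotEqOfLPE} the Lie symmetries of the potential system~\eqref{EqPotSysOfReducedLPE} are in one-to-one correspondence with those of the modified potential equation~\eqref{EqModifiedReducedPotLPE}, $w_t-w_{xx}+Pw=0$, where $P=-2(\alpha_x/\alpha)_x$ and $\psi=1/\alpha$ is a distinguished nonzero solution. Since $G_0^{\rm ess}$ prolongs to the whole potential frame by~\eqref{EqProlongedEquivTransToPotsForLPE} and thereby acts on the modified potential equation, the classification may be carried out modulo $G_0^{\rm ess}$. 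According to Proposition~\ref{PropositionOnSymCondOfExistOfPotSymOfLPE} a Lie symmetry operator $Q'$ induces a pure potential symmetry iff $\psi$ is not an eigenfunction of $Q'$, and by Proposition~\ref{PropositionOnNecCondOfExistOfPotSymOfReducedLPE} a necessary condition for any pure potential symmetry is that $P$ be equivalent to a stationary function; thus the modified equation falls into one of the three cases of Theorem~\ref{TheoremOnGroupClassificationOfLPEs}.

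For the ``if'' part I would simply exhibit the two representatives. For $\alpha=1$ one has $P=0$, so~\eqref{EqModifiedReducedPotLPE} is the linear heat equation and Theorem~\ref{TheoremOnPotSymOfLPEs} (case $P=0$) yields at least two independent simplest pure potential symmetries. For $\alpha=x$ one computes $P=2x^{-2}$, which is of the form $\mu x^{-2}$ with $\mu=2$, and Theorem~\ref{TheoremOnPotSymOfLPEs} again produces a pure potential symmetry.

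For the ``only if'' part I would use $V=0$ to discard the generic case. If $P$ is $G^\sim_1$-inequivalent to $\mu x^{-2}$ then~\eqref{EqModifiedReducedPotLPE} admits only the nontrivial operator $\p_t$ (case~1 of Theorem~\ref{TheoremOnGroupClassificationOfLPEs}); by condition~\eqref{EqExistanceOfPotSymOfReducedLPE} this operator induces a pure potential symmetry only when $\varrho_t\ne0$ with $\varrho=\psi_x/\psi$, which by Theorem~\ref{TheoremOnPotSymOfLPEs} is equivalent to $V_t\ne0$ and hence impossible for the heat equation. Therefore $P$ is equivalent to $\mu x^{-2}$. When $\mu=0$ the modified equation reduces, mod $G_0^{\rm ess}$, to $P_x=0$, and Lemma~\ref{LemmaOnCharsEquiv1OfLHE} gives $\alpha=1\bmod G_0^{\rm ess}$. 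When $\mu\ne0$ I would reduce $P$ to $\mu x^{-2}$ by a transformation from $G_0^{\rm ess}$ and integrate $-2(\ln\alpha)_{xx}=\mu x^{-2}$, obtaining $\alpha=x^{\mu/2}\exp(c(t)x+d(t))$; substituting this into the backward heat equation $\alpha_t+\alpha_{xx}=0$ and splitting with respect to the powers $x^{-2},x^{-1},x^{1},x^{0}$ forces $\tfrac{\mu}{2}(\tfrac{\mu}{2}-1)=0$, $\mu c=0$, $c_t=0$ and $d_t=-c^2$. Hence $\mu/2\in\{0,1\}$, and for $\mu/2=1$ necessarily $c=0$, giving $\mu=2$ and $\alpha=x\bmod G_0^{\rm ess}$.

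The main obstacle is the interplay between the two equivalence relations: the necessary condition supplies $G^\sim_1$-equivalence of $P$ to $\mu x^{-2}$, whereas the statement is formulated modulo the heat-preserving subgroup $G_0^{\rm ess}$. I expect the delicate step to be verifying that the reduction of $P$ to the canonical form $\mu x^{-2}$ (and, for $\mu=0$, to $P_x=0$ so that Lemma~\ref{LemmaOnCharsEquiv1OfLHE} applies) can be achieved within $G_0^{\rm ess}$, together with checking at each stage that the surviving Lie symmetry genuinely satisfies the eigenfunction criterion of Proposition~\ref{PropositionOnSymCondOfExistOfPotSymOfLPE} rather than merely existing; the $\mu x^{-2}$ computation above is the technical heart that pins down the admissible values of $\mu$.
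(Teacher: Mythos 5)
Your overall framework (reduction to the modified potential equation, the necessary condition of Proposition~\ref{PropositionOnNecCondOfExistOfPotSymOfReducedLPE}, the eigenfunction criterion, and a case split along the Lie--Ovsiannikov classification) is the same as the paper's, and your ``if'' part as well as the integration of $-2(\ln\alpha)_{xx}=\mu x^{-2}$ against the backward heat equation are correct. But the ``only if'' part has a genuine gap, and it is larger than the one you flag at the end. All three of your steps --- the dismissal of the generic case, the reduction to $P_x=0$ when $\mu=0$, and the reduction to $P=\mu x^{-2}$ exactly when $\mu\ne0$ --- implicitly assume that the normalization of $P$ supplied by Proposition~\ref{PropositionOnNecCondOfExistOfPotSymOfReducedLPE} can be performed by a transformation that fixes the heat equation, i.e.\ inside $G_0^{\rm ess}$. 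A priori it cannot: the necessary condition only provides a transformation from $G^\sim_1$, and a generic element of $G^\sim_1$ maps $u_t=u_{xx}$ to an equation $\tilde u_t=\tilde u_{xx}-\tilde V\tilde u$ whose potential $\tilde V$ is quadratic in $x$ with $t$-dependent coefficients. In particular, your dismissal of the generic case fails as stated: the condition ``$V_t\ne0$'' in Theorem~\ref{TheoremOnPotSymOfLPEs} refers to the potential of the initial equation written in the coordinates in which $P$ is stationary, and in those coordinates the heat equation has potential $\tilde V$, which need not satisfy $\tilde V_t=0$. So ``the heat equation has $V_t=0$'' does not close that case; what has to be shown is that no $G^\sim_1$-image of the heat equation can simultaneously have a stationary $\tilde P$ inequivalent to $\mu x^{-2}$ and $\tilde V_t\ne0$. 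Similarly, your computation for $\mu\ne0$ is valid only if $\alpha$ still solves $\alpha_t+\alpha_{xx}=0$ after $P$ has been normalized, which is exactly the unproved point; Lemma~\ref{LemmaOnCharsEquiv1OfLHE} likewise requires $P_x=0$ exactly, in the original variables, not merely $P\sim_{G^\sim_1}0$.

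The paper resolves precisely this interplay, and it is the heart of its proof rather than a routine verification. It applies a general transformation from $G^\sim_1$ making $\tilde P_t=0$, records that then $\tilde V_{xxx}=0$, writes $\tilde\psi=\pm e^{(\int H\,dx-R)/2}$ with $H'=\tilde P$ and $R_{xx}=\tilde V$ ($R$ a quartic polynomial in $x$ with $t$-dependent coefficients), and analyses the compatibility relation between $H$ and $R$ coming from $\tilde\psi_t-\tilde\psi_{xx}+\tilde P\tilde\psi=0$, splitting into the cases $R_{tx}\ne0$ and $R_{tx}=0$. The first case forces $\tilde P-\tilde V=\const$, hence $P_x=0$ in the original variables and $\alpha=1\bmod G_0^{\rm ess}$ by Lemma~\ref{LemmaOnCharsEquiv1OfLHE}; the second forces $\tilde\psi=\tilde\psi^1(x)e^{-\nu t}$, so that only $\tilde P\sim\mu x^{-2}$ survives, and the remaining computation --- the analogue of yours, but with $\tilde\alpha$ solving the adjoint of the transformed (quadratic-potential) equation rather than the backward heat equation --- pins down $\alpha=x$. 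Your proposal needs this analysis, or an equivalent argument, before it constitutes a proof; the $\mu x^{-2}$ computation you identify as the technical heart is in fact the easy part.
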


\begin{proof}
In view of Proposition~\ref{PropositionOnNecCondOfExistOfPotSymOfReducedLPE}, a characteristic~$\alpha$
leads to pure potential symmetries of~\eqref{EqLHE} only if 
the function $P=-2(\alpha_x/\alpha)_x$ is $G^\sim_1$-equivalent to a function not depending on~$t$. 
Let us apply a transformation~$\mathcal T$ from~$G^\sim_1$, prolonged to the whole frame under consideration, 
such that $\tilde P_t=0$. 
Equation~\eqref{EqLHE} is mapped by~$\mathcal T$ to the equation $\tilde u_t-\tilde u_{xx}+\tilde V\tilde u=0$, where 
$\tilde V_{xxx}=0$. (We omit tildes over the transformed variables~$t$ and~$x$ for convenience.) 
The functions~$\tilde V$ and~$\tilde P$ are connected by the relation 
$\tilde P-\tilde V=-2(\tilde\alpha_x/\tilde\alpha)_x=2(\tilde\psi_x/\tilde\psi)_x$, 
where $\tilde\alpha$ is the transformed characteristic and $\tilde\psi=1/\tilde\alpha$.
Considering~$t$ as a parameter, we integrate this relation as an ordinary differential equation on~$\tilde\psi$:
$\tilde\psi=\pm e^{(\int H\,dx-R)/2}$, where $H$ is a smooth function of~$x$ such that $H'=\tilde P$ 
and $R$ is a fourth-degree polynomial of~$x$ with coefficients depending on~$t$ such that $R_{xx}=\tilde V$.
Substituting the obtained expression for~$\tilde\psi$ into the equation 
$\tilde\psi_t-\tilde\psi_{xx}+\tilde P\tilde\psi=0$ gives the following relation between~$H$ and~$R$:
$
H'-\frac12H^2+R_xH=R_t-R_{xx}+\frac12R_x{}^2.
$
The differential consequence 
\[
R_{tx}H=R_{tt}-R_{txx}+R_xR_{tx}
\]
of this relation with respect to~$\p_t$ is essential for the further proof.
There are two possible cases which should be considered separately. 

If $R_{tx}\ne0$, we express~$H$ from the differential consequence: 
\[
H=\frac{R_{tt}}{R_{tx}}-\frac{R_{txx}}{R_{tx}}+R_x,
\] 
and then differentiate once more with respect to~$t$. 
Since $H_t=0$, we derive an equation only in~$R$, which can be split with respect to~$x$ 
since $R$ is a polynomial of~$x$. The resulting equations on the coefficients of~$R$ imply 
that $R_{txx}=0$ and $(R_{ttx}/R_{tx})_t=0$. 
Therefore, $H$ is a third-order polynomial of~$x$ and 
\[
\tilde P-\tilde V=H'-R_{xx}=\frac{R_{ttx}}{R_{tx}}=\const.
\]
This means that $P$ is a function only of~$t$ in the old variables. 
Hence, $\alpha=1\bmod G_0^{\rm ess}$ in view of Lemma~\ref{LemmaOnCharsEquiv1OfLHE}.

The condition $R_{tx}=0$ implies $R_{tt}=0$ and $\tilde\psi=\tilde\psi^1(x)e^{-\nu t}$, where $\nu$ is a constant 
and the function~$\tilde\psi^1$ depends only on~$x$. 
Since $(\tilde\psi_x/\tilde\psi)_t=0$, 
in view of Theorem~\ref{TheoremOnPotSymOfLPEs} the function~$\tilde P$ should be $G^\sim_1$-equivalent to 
the function~$\mu x^{-2}$ with a constant~$\mu$. Otherwise there are no nontrivial potential symmetries associated 
with the characteristic~$\alpha$. Therefore, 
\begin{gather*}
\tilde V=\kappa_2x^2+\kappa_1x+\kappa_0,\qquad
\tilde P=\frac\mu{x^2}+\mu_2x^2+\mu_1x+\mu_0,
\\
\tilde\alpha=\frac1\psi=\lambda_0|x|^{-\mu/2}\exp\left(
\frac{\kappa_2-\mu_2}{24}x^4+\frac{\kappa_1-\mu_1}{12}x^3+\frac{\kappa_0-\mu_0}{4}x^2+\lambda_1x+\nu t\right), 
\end{gather*}
where $\mu,\mu_i,\kappa_i,\lambda_1,\lambda_0=\const$, $i=1,2,3$. 
Then the equation $\tilde\alpha_t+\tilde\alpha_{xx}-\tilde V\tilde\alpha=0$ implies that 
\begin{gather*}
\kappa_2=\mu_2,\quad \kappa_1=\mu_1,\quad \mu(\mu+2)=0,\quad \mu\lambda=0, \quad 
(1-\mu)(\kappa_0-\mu_0)=2(\kappa_0-\nu-\lambda^2),\\
\kappa_1=\lambda(\kappa_0-\mu_0),\quad 4\kappa_2=(\kappa_0-\mu_0)^2.
\end{gather*}
 The condition $\mu=0$ leads to the case $\tilde P-\tilde V=\const$ considered above. 
If $\mu\ne0$ then it follows from the above equations that $\mu=-2$, $\lambda=\kappa_1=\mu_1=0$, $\nu=(\kappa_0+\mu_0)/2$, i.e., 
\[
\tilde\alpha=\hat\lambda_0x\exp\left(\frac14(\kappa_0-\mu_0)^2x^2+\frac{\kappa_0+\mu_0}2t\right).
\]
After returning to the old variables, we have $\alpha=x$. 
(The explicit form of the variable transformation depends on the value of $\kappa_0-\mu_0$.)
\end{proof}

Potential symmetries associated with the characteristic $\alpha=1$ were studied by a number 
of authors~\cite{Bluman&Kumei1989,Sophocleous1996,Popovych&Ivanova2003PETs}.
The corresponding potential~$v^1$ is defined by the system $v^1_x=u$, $v^1_t=u_x$.
Its maximal Lie invariance algebra is
\begin{gather*}
\mathfrak p_1=\langle
\p_t,\ \p_x,\ 2t\p_x-(xu+v^1)\p_u-xv^1\p_{v^1},\ 2t\p_t+x\p_x-u\p_u, \\
\qquad\ 4t^2\p_t+4tx\p_x-((x^2+6t)u+2xv^1)\p_u-(x^2+2t)v^1\p_{v^1},\ u\p_u+v^1\p_{v^1},\ f_x\p_u+f\p_{v^1}\rangle.
\end{gather*}
The potential equation $v^1_t=v^1_{xx}$ has the same form as the initial equation~\eqref{EqLHE}.
That is why the algebras~$\mathfrak g_0$ and $\mathfrak p_1$ are isomorphic~\cite{Popovych&Ivanova2003PETs}. 
The basis operators of~$\mathfrak p_1$ are obtained from the basis operators of~$\mathfrak g_0$ 
by re-denoting $u\to v^1$ and then carrying out the first prolongation with respect to~$x$ in view of $u=v^1_x$. 
Any linear combination of operators from $\mathfrak p_1$ which contains the third or fifth basis operator is
a pure potential symmetry operator of the linear heat equation. 

The case of the simplest nonconstant characteristic $\alpha=x$ was studied in~\cite{Ivanova&Popovych2007CommentOnMei}.
The corresponding potential system $v^2_x=u$, $v^2_t=xu_x-u$ possesses 
the maximal Lie invariance algebra
\begin{gather*}
\mathfrak p_2=\langle
\p_t,\ 2t\p_t+x\p_x-2u\p_u,\ 4t^2\p_t+4tx\p_x-((x^2+6t)u+2v^2)\p_u-(x^2-2t)v^2\p_{v^2},\\
\qquad\ u\p_u+v^2\p_{v^2},\ x^{-1}h_x\p_u+h\p_{v^2}\rangle,
\end{gather*}
where the function $h=h(t,x)$ runs through the set of solutions of the potential equation 
$v^2_t+2x^{-1} v^2_x-v^2_{xx}=0$.
Any linear combination of operators from $\mathfrak p_2$ which contains the third basis operator is
a pure potential symmetry operator of the linear heat equation.

Note that the pure potential symmetry operators from the algebra~$\mathfrak p_2$ 
differ from the ones from the algebra~$\mathfrak p_1$ by
both the explicit form and the nature of the potential variable~$v^2$ associated with the characteristic $\alpha=x$
in contrast to~$v^1$ associated with the characteristic $\alpha=1$.

\begin{corollary}
All $G_0^{\rm ess}$-equivalent simplest pure potential symmetries of the linear heat equation~\eqref{EqLHE}
are exhausted by the operators from~$\mathfrak p_1$ and~$\mathfrak p_2$ satisfying the condition $\eta_v\ne0$, 
where $\eta$ is the coefficient of~$\p_u$ and $v$ is the corresponding potential ($v^1$ or $v^2$).
\end{corollary}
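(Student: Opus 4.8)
The plan is to combine Theorem~\ref{TheoremOnCharsGivenPotSymsOfLHE} with the two maximal Lie invariance algebras $\mathfrak p_1$ and $\mathfrak p_2$ already computed above. By Definition~\ref{DefinitionOfPotSyms}, a simplest pure potential symmetry of~\eqref{EqLHE} is a Lie symmetry operator of a potential system~\eqref{EqPotSysOfReducedLPE} (with $V=0$) associated with a single characteristic~$\alpha$ solving the backward heat equation $\alpha_t+\alpha_{xx}=0$, whose $\p_u$-coefficient~$\eta$ genuinely depends on the potential, i.e.\ $\eta_v\ne0$. First I would invoke Theorem~\ref{TheoremOnCharsGivenPotSymsOfLHE}: the only characteristics producing such symmetries are $\alpha\in\{1,x\}\bmod G_0^{\rm ess}$, so it suffices to analyze the two representative potential systems, namely $v^1_x=u$, $v^1_t=u_x$ for $\alpha=1$ and $v^2_x=u$, $v^2_t=xu_x-u$ for $\alpha=x$, whose symmetry algebras are exactly the known $\mathfrak p_1$ and $\mathfrak p_2$.

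Next I would use the prolongation~\eqref{EqProlongedEquivTransToPotsForLPE} of $G_0^{\rm ess}$ to the whole potential frame. By Corollary~\ref{CorollaryOn2DPotSystemsEquivRelGeneratedBySymGroup} (together with Proposition~\ref{PropositionOnTransOfCharsOfCLsForLPEs}, whose constant multiplier~$\kappa$ is immaterial), each transformation from $G_0^{\rm ess}$ induces a one-to-one correspondence between the potential system attached to a characteristic~$\alpha$ and the one attached to the equivalent $\tilde\alpha$, carrying Lie symmetry operators to Lie symmetry operators. As observed following~\eqref{EqProlongedEquivTransToPotsForLPE}, these transformations preserve both conditions $\theta_{vx}\ne0$ and $\theta_{vx}=0$ and hence do not mix Lie and pure potential symmetries; in particular the condition $\eta_v\ne0$ is preserved. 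Consequently every simplest pure potential symmetry of~\eqref{EqLHE} is $G_0^{\rm ess}$-equivalent to a Lie symmetry operator, with $\eta_v\ne0$, of one of the two systems above, i.e.\ to an operator from $\mathfrak p_1$ or $\mathfrak p_2$ satisfying $\eta_v\ne0$. The converse inclusion is immediate, since such operators are pure potential symmetries by construction.

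It then remains to identify, inside each of the fully known algebras $\mathfrak p_1$ and $\mathfrak p_2$, which operators satisfy $\eta_v\ne0$. For a general linear combination of the listed basis operators I would compute~$\eta$ from~\eqref{EqDetEqForSymOpsOfGenFormOfPotSysOfLPEforEta} (the coefficient arising in the first prolongation with respect to~$x$) and read off $\eta_v$. This reproduces the facts already noted above, that the pure potential symmetry operators in $\mathfrak p_1$ are exactly those involving the third or fifth basis operator, and those in $\mathfrak p_2$ exactly those involving the third basis operator. This is a routine differentiation and I do not expect it to present difficulties.

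The main obstacle is conceptual rather than computational: one must be certain that passing to the $G_0^{\rm ess}$-representatives $\alpha=1$ and $\alpha=x$ neither loses an inequivalent pure potential symmetry nor spuriously creates one. This is precisely what the invariance of the conditions $\theta_{vx}\ne0$ and $\theta_{vx}=0$ under~\eqref{EqProlongedEquivTransToPotsForLPE} guarantees, so the delicate point is already secured by the structure theory of the prolonged equivalence group. Granting Theorem~\ref{TheoremOnCharsGivenPotSymsOfLHE}, the corollary then follows simply by assembling these pieces.
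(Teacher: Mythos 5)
Your proposal is correct and follows essentially the same route the paper takes: the corollary is an immediate assembly of Theorem~\ref{TheoremOnCharsGivenPotSymsOfLHE} (only $\alpha\in\{1,x\}\bmod G_0^{\rm ess}$ can yield pure potential symmetries), the explicitly computed algebras~$\mathfrak p_1$ and~$\mathfrak p_2$ for these two representatives, and the previously established fact that the transformations~\eqref{EqProlongedEquivTransToPotsForLPE} preserve the conditions $\theta_{vx}\ne0$ and $\theta_{vx}=0$, so that Lie and pure potential symmetries are not mixed under the $G_0^{\rm ess}$-equivalence. Your extra care in articulating why passing to representatives neither loses nor creates pure potential symmetries is exactly the point the paper relies on implicitly.
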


Let us demonstrate how to extend the results obtained in this section 
to equations which are equivalent to the heat equation with respect to point transformations. 
Consider the Fokker--Planck equation 
\begin{equation}\label{EqFPx}
\tilde u_{\tilde t}=\tilde u_{\tilde x\tilde x}+(\tilde x\tilde u)_{\tilde x}. 
\end{equation}
Its maximal Lie invariance algebra is
\begin{gather*}
\tilde{\mathfrak g}_0=\langle\p_t,\ e^{-t}\p_x,\,e^{-2t}\p_t-e^{-2t}x\p_x+e^{-2t}u\p_u,\ e^t\p_x-e^txu\p_u,\\
\qquad\ e^{2t}\p_t+e^{2t}x\p_x-e^{2t}x^2u\p_u,\ u\p_u,\ f\p_u \rangle, 
\end{gather*}
where the function $f=f(t,x)$ runs through the solution set of the same equation.
(We omit tildes over variables when it is understandable that they originate from the Fokker--Planck equation.)
Equation~\eqref{EqFPx} is reduced to equation~\eqref{EqLHE} by the point transformation
\[
\mathcal T\colon\quad t=\frac12e^{2\tilde t},\quad x=e^{\tilde t}\tilde x,\quad u=e^{-\tilde t}\tilde u.
\]
According to Proposition~\ref{PropositionOnTransOfCharsOfCLsForLPEs}, the transformation~$\mathcal T$ 
acts identically on values of characteristics: $\alpha=\tilde\alpha$. 
That is why in view of Theorem~\ref{TheoremOnCharsGivenPotSymsOfLHE}  
only two $\tilde G_0^{\rm ess}$-inequivalent characteristics $\tilde\alpha^1=1$ and $\tilde\alpha^2=e^{\tilde t}\tilde x$
of equation~\eqref{EqFPx} lead to nontrivial potential symmetries of this equation. 
Here $\tilde G_0^{\rm ess}$ is the essential part of the point symmetry group of equation~\eqref{EqFPx}. 
The corresponding potential symmetry algebras $\tilde{\mathfrak p}_1$ and $\tilde{\mathfrak p}_2$ can be obtained in two ways. 
The first way is the direct calculation of the maximal Lie invariance algebras of the associated potential systems 
$\tilde v^1_{\tilde x}=\tilde u$, 
$\tilde v^1_{\tilde t}=\tilde u_{\tilde x}+\tilde x\tilde u$ 
and 
$\tilde v^2_{\tilde x}=e^{\tilde t}\tilde x\tilde u$, 
$\tilde v^2_{\tilde t}=e^{\tilde t}\tilde x\tilde u_{\tilde x}+e^{\tilde t}(\tilde x^2-1)\tilde u$. 
The second way is to map the potential symmetry algebras $\mathfrak p_1$ and $\mathfrak p_2$ of the linear heat equation 
by the transformation inverse to~$\mathcal T$ and trivially prolonged to the potentials $\tilde v^1$ and $\tilde v^2$, respectively.
Finally, the algebras $\tilde{\mathfrak p}_1$ and $\tilde{\mathfrak p}_2$ have the form 
\begin{gather*}
\tilde{\mathfrak p}_1=\langle\ \p_t,\ e^{-t}\p_x,\ e^{-2t}\p_t-e^{-2t}x\p_x+e^{-2t}u\p_u,\ e^t\p_x-e^t(xu+v^1)\p_u-e^txv^1\p_{v^1},
\\ \phantom{\tilde{\mathfrak p}_1=\langle\ }
e^{2t}\p_t+e^{2t}x\p_x-e^{2t}(x^2u+2xv^1+2u)\p_u-e^{2t}(x^2+1)v^1\p_{v^1},\ u\p_u+v^1\p_{v^1}, 
\\ \phantom{\tilde{\mathfrak p}_1=\langle\ }
g_x\p_u+g\p_{v^1}\, \rangle,
\\[1ex]
\tilde{\mathfrak p}_2=\langle\ 
\p_t-u\p_u,\ 
e^{-2t}\p_t-e^{-2t}x\p_x+e^{-2t}u\p_u,\ 
\\ \phantom{\tilde{\mathfrak p}_2=\langle\ }
e^{2t}\p_t+e^{2t}x\p_x-e^{2t}(x^2u+2e^{-t}v^2+2u)\p_u-e^{2t}(x^2-1)v^2\p_{v^2},\ 
u\p_u+v^2\p_{v^2}, 
\\ \phantom{\tilde{\mathfrak p}_2=\langle\ }
e^tx^{-1}h_x\p_u+h\p_{v^2}\, \rangle,
\end{gather*}
where the functions $g=g(t,x)$ and $h=h(t,x)$ run through the solution set of the associated potential equations 
$\tilde v^1_{\tilde t}-\tilde v^1_{\tilde x\tilde x}-\tilde x\tilde v^1_{\tilde x}=0$ 
and 
$\tilde v^2_{\tilde t}-\tilde v^2_{\tilde x\tilde x}+(2\tilde x^{-1}-\tilde x)\tilde v^2_{\tilde x}=0$, 
respectively. 

As a result, we obtain the following statement.

\begin{corollary}\label{CorollaryOnPotSymsOfFPxE}
All $\tilde G_0^{\rm ess}$-equivalent simplest pure potential symmetries of the Fokker--Planck equation~\eqref{EqFPx}
are exhausted by the operators from~$\tilde{\mathfrak p}_1$ and~$\tilde{\mathfrak p}_2$ 
satisfying the condition $\tilde\eta_{\tilde v}\ne0$, 
where $\tilde\eta$ is the coefficient of~$\p_{\tilde u}$ and $\tilde v$ is the corresponding potential ($\tilde v^1$ or $\tilde v^2$).
\end{corollary}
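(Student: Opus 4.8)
The plan is to transfer the complete classification already obtained for the linear heat equation in Theorem~\ref{TheoremOnCharsGivenPotSymsOfLHE} to the Fokker--Planck equation~\eqref{EqFPx} through the explicit point transformation~$\mathcal T$. Since~$\mathcal T$ maps~\eqref{EqFPx} to~\eqref{EqLHE} and is an admissible transformation in class~\eqref{EqGenLPE}, Proposition~\ref{PropositionOn2DConsLawEquivRelation} guarantees that it induces a one-to-one correspondence between the potential systems of the two equations, the prolongation to the potentials being trivial. By Proposition~\ref{PropositionOnTransOfCharsOfCLsForLPEs} the induced action on characteristics is governed by the factor~$X_xU^1$, which equals~$1$ for the present~$\mathcal T$; hence characteristics are preserved identically, as already recorded above the corollary ($\alpha=\tilde\alpha$).

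First I would verify that the prolongation of~$\mathcal T$ to the whole potential frame, carried out by formulas~\eqref{EqProlongedEquivTransToPotsForLPE}, respects the dichotomy between Lie and pure potential symmetries. This is exactly the invariance of the conditions $\theta_{vx}\ne0$ and $\theta_{vx}=0$ (equivalently $\eta_v\ne0$ and $\eta_v=0$) established in the discussion following~\eqref{EqProlongedEquivTransToPotsForLPE}: since~$\mathcal T$ is projectable in~$x$ and linear in~$u$ with the potential left untouched up to an inessential constant, a symmetry operator of a Fokker--Planck potential system is a pure potential symmetry iff its $\mathcal T$-image is a pure potential symmetry of the corresponding heat-equation potential system.

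Next I would match the equivalence relations. Because~$\mathcal T$ conjugates the essential point symmetry group~$\tilde G_0^{\rm ess}$ of~\eqref{EqFPx} onto the essential point symmetry group~$G_0^{\rm ess}$ of~\eqref{EqLHE}, the $\tilde G_0^{\rm ess}$-equivalence of characteristics and of potential symmetry operators corresponds under~$\mathcal T$ to the $G_0^{\rm ess}$-equivalence of the heat frame. Combined with the identity action on characteristics, Theorem~\ref{TheoremOnCharsGivenPotSymsOfLHE} yields that the only $\tilde G_0^{\rm ess}$-inequivalent characteristics producing pure potential symmetries are the $\mathcal T$-preimages of $\alpha=1$ and $\alpha=x$, namely $\tilde\alpha^1=1$ and $\tilde\alpha^2=e^{\tilde t}\tilde x$. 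The associated potential symmetry algebras are then the $\mathcal T^{-1}$-images of $\mathfrak p_1$ and $\mathfrak p_2$, trivially prolonged to $\tilde v^1$ and $\tilde v^2$; this is precisely the second construction of $\tilde{\mathfrak p}_1$ and $\tilde{\mathfrak p}_2$ displayed above, and the criterion $\tilde\eta_{\tilde v}\ne0$ translates the heat-equation criterion $\eta_v\ne0$.

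The hard part will be the bookkeeping around the condition $\tilde\eta_{\tilde v}\ne0$: one must confirm that the coefficient~$\tilde\eta$ of~$\p_{\tilde u}$ produced by the prolongation of~$\mathcal T^{-1}$ is exactly the one dictated by the first equation of the corresponding modified potential system, so that the invariant characterization of ``pure'' symmetry is genuinely preserved and not merely formally transported. Once this is checked --- it follows from Lemma~\ref{LemmaOn1to1CorrespondenceBetweenLieSymsOfPotSysAndPotEqOfLPE} together with the compatibility of~\eqref{EqProlongedEquivTransToPotsForLPE} with the defining relations of the potential systems --- the corollary follows directly, with no further case analysis needed beyond that already performed for~\eqref{EqLHE}.
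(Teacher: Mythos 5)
Your proposal is correct and follows essentially the same route as the paper: reduce to the linear heat equation via the explicit transformation~$\mathcal T$, observe that characteristics are preserved identically (since $X_xU^1=1$), invoke Theorem~\ref{TheoremOnCharsGivenPotSymsOfLHE} to single out $\tilde\alpha^1=1$ and $\tilde\alpha^2=e^{\tilde t}\tilde x$, and obtain $\tilde{\mathfrak p}_1$ and~$\tilde{\mathfrak p}_2$ as the trivially prolonged $\mathcal T^{-1}$-images of~$\mathfrak p_1$ and~$\mathfrak p_2$. The extra bookkeeping you flag (invariance of the pure-potential dichotomy under~\eqref{EqProlongedEquivTransToPotsForLPE} and the conjugation of $\tilde G_0^{\rm ess}$ onto $G_0^{\rm ess}$) is exactly what the paper relies on from its earlier general discussion in Section~\ref{SectionOnSimplestPotSymsOfLPEs}, so nothing is missing.
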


Corollary~\ref{CorollaryOnPotSymsOfFPxE} essentially generalizes results 
of~\cite{Ivanova&Popovych2007CommentOnMei,Pucci&Saccomandi1993,Saccomandi1997} 
on simplest potential symmetries of equation~\eqref{EqFPx}, associated with the characteristic~$1$.

\section{Preliminary analysis of general potential systems}\label{SectionOnGenPotSysForLPEs}

The investigation of general potential symmetries can be carried out in a way similar to that of simplest potential symmetries 
but all calculations are much more complicated. 
The idea is to find an analogue of the potential equations~\eqref{EqModifiedPotLPE} in the case of general potential systems 
and to construct potential symmetries of the initial equations via prolongation of Lie symmetries of the potential equations. 
It is necessary also to prove statements on the behavior of symmetries of the potential frame under the action of equivalence transformations. 
This allows us to replace the study of class~\eqref{EqGenLPE} by the study of the simpler class~\eqref{EqReducedLPE}. 
Since the problem is quite complicated, in this section we only find a convenient form of potential systems 
and construct potential equations each of which is equivalent, in a certain sense, to a whole potential systems. 
Along the way, we make intensive use of multiple Darboux transformations. 
These results create a basis for the symmetry analysis of the potential frame in the next section. 

Let us fix an equation $u_t=Au_{xx}+Bu_x+Cu$ from class~\eqref{EqGenLPE}, an arbitrary $p\in\mathbb{N}$ and 
$p$~linearly independent solutions $\alpha^1$, \ldots, $\alpha^p$ of the adjoint equation~\eqref{EqAdjLPE}. 
Then the conservation laws corresponding to these characteristics are linearly independent.
For any~$s$ we introduce the potential~$v^s$ using the conserved vector of the canonical form~\eqref{eqCVofLPEs}, 
associated with $\alpha^s$. 
Our choice of the conserved vectors will be justified below by Corollary~\ref{CorollaryOnChiceOfConservedVertorsForConstructionOfPotSymsOfLPE}.
As a result, we obtain the potential system 
\begin{equation}\label{EqGenPotSysOfLPEs}
v^s_x=\alpha^s u,\quad v^s_t=\alpha^sAu_x-((\alpha^sA)_x-\alpha^sB)u 
\end{equation}
corresponding to the characteristic tuple $\bar\alpha=(\alpha^1,\ldots,\alpha^p)$. 
Let us recall that the indices~$s$, $\sigma$ and~$\varsigma$ run at most from~1 to~$p$. 
Additional constraints on indices are indicated explicitly when needed. 
The summation convention over repeated indices is used unless otherwise stated 
or it is obvious from the context that indices are fixed.

If the initial equation runs through class~\eqref{EqGenLPE} and 
$\alpha^1$, \ldots, $\alpha^p$ runs through all linearly independent solutions of the adjoint equation 
then the associated potential systems form \emph{the potential frame of order}~$p$ (and the first level) 
over the class~\eqref{EqGenLPE}. In general, by the (potential) order of an object we will mean the number 
of independent first-level potentials appearing in this object. 
Below we extend the potential frame with other objects. 

\looseness=1
System~\eqref{EqGenPotSysOfLPEs} is homogeneous with respect to the index~$s$. 
It is formed by $p$ similar blocks. Each of them consists of a pair of equations in a potential and 
the initial unknown function~$u$ and possesses the structure of a simplest potential system. 
All the potentials $v^1$, \ldots, $v^p$ are on an equal footing. 
At first glance these features seem to be advantages of this representation of potential systems 
but after careful consideration a number of drawbacks become apparent. 
The total number of unknown functions in system~\eqref{EqGenPotSysOfLPEs} equals $p+1$ 
and the total number of equations is $2p$. 
At the same time, the system is not ``too'' overdetermined since it has no nontrivial differential consequences. 
For any fixed~$s$ the corresponding pair of equations 
implies an equation only with respect to~$v^s$ and a differential consequence equivalent to the initial equation. 
In fact, system~\eqref{EqGenPotSysOfLPEs} contains only $p+1$ independent equations 
but the number of equations cannot be reduced to the minimal one in a symmetric way. 
It is not clear what a potential equation corresponding to the whole system~\eqref{EqGenPotSysOfLPEs} should be. 
Another argument in favour of modifying system~\eqref{EqGenPotSysOfLPEs} comes from group analysis. 
Consider a Lie symmetry operator $Q=\tau\p_t+\xi\p_x+\eta\p_u+\theta^s\p_{v^s}$ of system~\eqref{EqGenPotSysOfLPEs}.
The coefficients of~$Q$ are functions of $t$, $x$, $u$ and $v^\sigma$.
The infinitesimal invariance criterion applied to system~\eqref{EqGenPotSysOfLPEs} implies, in particular,
the following determining equations on the coefficients of~$Q$:
$\tau_u=\tau_x=\tau_{v^s}=0$, $\xi_u=\xi_{v^s}=0$, $\theta^s_u=0$.
In contrast to the case of single characteristics, 
the deduction of these simplest determining equations is much more involved. 
Certain tricks involving linear independence of $\alpha^1$, \ldots, $\alpha^p$ have to be used.  
Finding other simple determining equations which are typical for linear systems (e.g., 
$\eta_{uu}=0$, $\eta_{uv^s}=0$, $\theta^s_{v^\sigma v^{\varsigma}}=0$) demands still more calculations and tricks. 
A complete analysis of the whole system of determining equations seems impossible. 

Below by an iteration procedure we obtain another potential system associated with 
the characteristic tuple $\bar\alpha=(\alpha^1,\ldots,\alpha^p)$, 
which is equivalent to system~\eqref{EqGenPotSysOfLPEs} but is appropriate 
for the investigation of potential symmetries of the initial equation. 
Each step of the iteration procedure is similar to the consideration in the beginning of 
Section~\ref{SectionOnSimplestPotSymsOfLPEs}.
For convenience an equation for an unknown function~$\varphi$ will be denoted by~$\lceil \varphi\rfloor$.

\emph{Step 1}. 
Let us re-denote $u\to w^0$, $B\to B^0$, $C\to C^0$,
$v^1\to f^1$, $\alpha\to \beta^0$ and $\alpha^s\to \beta^{0,s}$. 
(We will use the old and new notations simultaneously.)
Consider the (first-level) potential system 
\[
f^1_x=\beta^{0,1}w^0,\quad f^1_t=\beta^{0,1}Aw^0_x-((\beta^{0,1}A)_x-\beta^{0,1}B^0)w^0
\] 
associated with the single characteristic~$\alpha^1=\beta^{0,1}$. 
The tuple~$(w^0,f^1)$ is a solution of this system iff the modified potential $w^1=f^1/\beta^{0,1}$ satisfies
the equation $w^1_t=Aw^1_{xx}+B^1w^1_x+C^1w^1$, where $B^1=B^0-A_x$ and 
\[
C^1=C-B_x+A_{xx}+A_x\frac{\beta^{0,1}_x}{\beta^{0,1}}+2A\biggl(\frac{\beta^{0,1}_x}{\beta^{0,1}}\biggr)_x
=C-B_x+A_{xx}+A_x\frac{W^1_x}{W^1}+2A\left(\frac{W^1_x}{W^1}\right)_x.
\]
Here and occasionally below the notation $W^s$ for the Wronskian $W(\alpha^1,\dots,\alpha^s)$ is used. 
The function $w^{1,1}=1/\beta^{0,1}$ is a solution of~$\lceil w^1\rfloor$.
The Darboux transformation~${\rm DT}[w^{1,1}]$ maps $\lceil w^1\rfloor$ to $\lceil w^0\rfloor=\lceil u\rfloor$.
In view of Lemma~\ref{LemmaOnDualDarbouxTrans} the dual Darboux transformation~${\rm DT}[\beta^{0,1}]$
maps $\lceil \alpha\rfloor=\lceil \beta^{0,1}\rfloor$ to the equation 
$\beta^1_t=(A\beta^1)_{xx}+(B^1\beta^1)_x+C^1\beta^1$ 
adjoint to~$\lceil w^1\rfloor$. Therefore, the functions 
\[
\beta^{1,s}={\rm DT}[\beta^{0,1}](\beta^{0,s})=\alpha^s_x-\frac{\alpha^1_x}{\alpha^1}\alpha^s=
\frac{W(\alpha^1,\alpha^s)}{W(\alpha^1)}
\]
satisfy the equation~$\lceil \beta^1\rfloor$ and $\beta^{1,s}\in\Ch_{\rm f}(\lceil w^1\rfloor)$, 
i.e. they are characteristics of conservation laws of~$\lceil w^1\rfloor$. 
Note that $\beta^{1,1}=0$.

\emph{Step 2}.
Using the conservation law of~$\lceil w^1\rfloor$, having the characteristic~$\beta^{1,2}$, 
we introduce the potential~$f^2$ and obtain the potential system 
\[
f^2_x=\beta^{1,2}w^1,\quad f^2_t=\beta^{1,2}Aw^1_x-((\beta^{1,2}A)_x-\beta^{1,2}B^1)w^1.
\] 
(Its union with the constructed first-level potential system results in a second-level potential system.)
The tuple~$(w^1,f^2)$ satisfies this system iff the modified potential $w^2=f^2/\beta^{1,2}$ is a solution of 
the equation $w^2_t=Aw^2_{xx}+B^2w^2_x+C^2w^2$, where $B^2=B^1-A_x=B-2A_x$ and 
\[
C^2=C^1-B^1_x+A_{xx}+A_x\frac{\beta^{1,2}_x}{\beta^{1,2}}+2A\biggl(\frac{\beta^{1,2}_x}{\beta^{1,2}}\biggr)_x
=C-2B_x+3A_{xx}+A_x\frac{W^2_x}{W^2}+2A\left(\frac{W^2_x}{W^2}\right)_x.
\] 
since $\beta^{0,1}\beta^{1,2}=W^2$.
The function $w^{2,2}=1/\beta^{1,2}$ is a solution of~$\lceil w^2\rfloor$.
${\rm DT}[w^{2,2}]$ maps $\lceil w^2\rfloor$ in $\lceil w^1\rfloor$.
Then the dual Darboux transformation~${\rm DT}[\beta^{1,2}]$
maps $\lceil \beta^1\rfloor$ to the equation $\beta^2_t=(A\beta^2)_{xx}+(B^2\beta^2)_x+C^2\beta^2$ 
adjoint to~$\lceil w^2\rfloor$. 
Therefore, the functions $\beta^{2,s}={\rm DT}[\beta^{1,2}](\beta^{1,s})$
satisfy the equation~$\lceil \beta^2\rfloor$ and $\beta^{2,s}\in\Ch_{\rm f}(\lceil w^2\rfloor)$, 
i.e. they are characteristics of conservation laws of~$\lceil w^2\rfloor$.
Since $\beta^{1,s}={\rm DT}[\alpha^1](\alpha^s)$ then in view of the Crum theorem~\cite{Crum1955,Matveev&Salle1991}
\[
\beta^{2,s}={\rm DT}[\beta^{1,2}](\beta^{1,s})=\beta^{1,s}_x-\frac{\beta^{1,2}_x}{\beta^{1,2}}\beta^{1,s}=
\frac{W(\alpha^1,\alpha^2,\alpha^s)}{W(\alpha^1,\alpha^2)}.
\]
This formula implies, in particular, that $\beta^{2,1}=\beta^{2,2}=0$.

The next iteration is obvious. 

\emph{Step s}.
Using the conservation law with the characteristic~$\beta^{s-1,s}$
of the modified potential equation~$\lceil w^{s-1}\rfloor$ from the previous step, 
we introduce the potential~$f^s$ and obtain the potential system 
\begin{equation}\label{EqIteratedGenPotSysOfLPEs}
f^s_x=\beta^{s-1,s}w^{s-1}, \quad f^s_t=\beta^{s-1,s}Aw^{s-1}_x-((\beta^{s-1,s}A)_x-\beta^{s-1,s}B^{s-1})w^{s-1}.
\end{equation}
(Its union with the $(s-1)$-level potential system constructed during the previous iterations
results in an $s$-level potential system of~\ref{EqGenLPE}.)
The tuple~$(w^{s-1},f^s)$ satisfies system~\eqref{EqIteratedGenPotSysOfLPEs} 
iff the modified potential $w^s=f^s/\beta^{s-1,s}$ is a solution of 
the equation $w^s_t=Aw^s_{xx}+B^sw^s_x+C^sw^s$, where $B^s=B^{s-1}-A_x=B-sA_x$ and 
\begin{gather*}
C^s=C^{s-1}-B^{s-1}_x+A_{xx}+A_x\frac{\beta^{s-1,s}_x}{\beta^{s-1,s}}
+2A\biggl(\frac{\beta^{s-1,s}_x}{\beta^{s-1,s}}\biggr)_x\\
\phantom{C^s}=C-sB_x+\frac{s(s-1)}2A_{xx}+A_x\frac{W^s_x}{W^s}+2A\left(\frac{W^s_x}{W^s}\right)_x.
\end{gather*}
since $\beta^{0,1}\dots\beta^{s-1,s}=W^s$. 
The function $w^{s,s}=1/\beta^{s-1,s}$ is a solution of~$\lceil w^s\rfloor$.
${\rm DT}[w^{s,s}]$ maps $\lceil w^s\rfloor$ in $\lceil w^{s-1}\rfloor$.
Then the dual Darboux transformation~${\rm DT}[\beta^{s-1,s}]$
maps $\lceil \beta^{s-1}\rfloor$ to the equation $\beta^s_t=(A\beta^s)_{xx}+(B^s\beta^s)_x+C^s\beta^s$ 
adjoint to~$\lceil w^s\rfloor$. 
Therefore, the functions $\beta^{s,\sigma}={\rm DT}[\beta^{s-1,s}](\beta^{s-1,\sigma})$
satisfy the equation~$\lceil \beta^s\rfloor$ and $\beta^{s,\sigma}\in\Ch_{\rm f}(\lceil w^s\rfloor)$, 
i.e.\ they are characteristics of conservation laws of~$\lceil w^s\rfloor$.
Since $\beta^{s-1,\sigma}$ are constructed by iteration of the Darboux transformation 
from the characteristics $\alpha^1$, \ldots, $\alpha^p$, in view of the Crum theorem we obtain
\[
\beta^{s,\sigma}={\rm DT}[\beta^{s-1,s}](\beta^{s-1,\sigma})
=\beta^{s-1,\sigma}_x-\frac{\beta^{s-1,s}_x}{\beta^{s-1,s}}\beta^{s-1,\sigma}
=\frac{W(\alpha^1,\dots,\alpha^s,\alpha^\sigma)}{W(\alpha^1,\dots,\alpha^s)}.
\]
This formula implies, in particular, that $\beta^{s,\sigma}=0$ if $\sigma\leqslant s$ and $\beta^{s,\sigma}\ne0$ if $\sigma>s$.

The iteration procedure is stopped on step~$p$ after the construction of the equation $\lceil w^p\rfloor$ 
since there are no nonzero functions $\beta^{p,\sigma}$.

For any~$s<p$ the second equation in system~\eqref{EqIteratedGenPotSysOfLPEs} is a differential consequence 
of system~\eqref{EqIteratedGenPotSysOfLPEs} with $s+1$ replacing $s$.
Therefore, the `minimal' combined potential system consists of the first equations of the potential systems from 
all steps and the second equation of the potential system constructed on the last, $p$-th, step. 

Let $f^0=w^0=u$, $W^0=W^{-1}=1$ and $\beta^{-1,s}=1$ by definition.
Excluding~$w^s$ due to the formula $w^s=f^s/\beta^{s-1,s}$, we obtain the combined potential system 
in terms of only $f^s$:
\begin{equation}\label{EqMinIteratedGenPotSysOfLPEs}
f^s_x=H^sf^{s-1},
\quad 
f^p_t=H^pAf^{p-1}_x-G^pf^{p-1},
\end{equation}
where
\[
H^s=\frac{\beta^{s-1,s}}{\beta^{s-2,s-1}}=\frac{W^sW^{s-2}}{(W^{s-1})^2}, \quad
G^s=(H^sA)_x-H^sB^{s-1}+2H^sA\frac{\beta^{s-2,s-1}_x}{\beta^{s-2,s-1}}.
\]
It can be proved that $H^s_t+G^s_x=0$ for any~$s$.

System~\eqref{EqMinIteratedGenPotSysOfLPEs} is the $p$-level form of the potential system of equation~\eqref{EqGenLPE},
associated with the characteristic tuple $\bar\alpha=(\alpha^1,\ldots,\alpha^p)$.
The equations $f^s_t=H^sAf^{s-1}_x-G^sf^{s-1}$, $s=1,\dots,p-1$, are differential consequences 
of~\eqref{EqMinIteratedGenPotSysOfLPEs}.
In the case $p>1$ the derivatives with respect to~$x$ can be excluded from these equations with $s>1$ as well 
from the last equation of~\eqref{EqMinIteratedGenPotSysOfLPEs}. The resulting equations are 
$
f^s_t=H^sH^{s-1}Af^{s-2}-G^sf^{s-1},\ s=2,\dots,p.
$ 

In view of Corollary~\ref{CorollaryOnPotSystemsOfLPEs}, system~\eqref{EqMinIteratedGenPotSysOfLPEs} should be 
equivalent with respect to point transformations to a potential system of the first level. 
Below we explicitly construct a point transformation from system~\eqref{EqGenPotSysOfLPEs} to system~\eqref{EqMinIteratedGenPotSysOfLPEs}.

Let us define the functions $g^{s,\sigma}$, $\sigma\geqslant s$, by the recursive formula 
\[
g^{1,\sigma}=v^\sigma, \quad g^{s+1,\sigma}=\frac{\beta^{s-1,\sigma}}{\beta^{s-1,s}}g^{s,s}-g^{s,\sigma}.
\]
For convenience it can be assumed that $g^{s,\sigma}=0$ if $\sigma<s$. 

\begin{lemma}\label{LemmaOnRepresentationOfGenPotsViaSimplestPotsForLPEs1}
For any fixed $s$ and $\sigma$ the function $g^{s\sigma}$ is the potential of the equation $\lceil w^{s-1}\rfloor$, 
associated with the characteristic~$\beta^{s-1,\sigma}$. In particular, $g^{s,s}=f^s$ up to a trivial constant summand.
\end{lemma}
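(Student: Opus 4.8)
The plan is to argue by induction on $s$, verifying at each level that $g^{s,\sigma}$ satisfies the two defining relations of the potential of $\lceil w^{s-1}\rfloor$ associated with the characteristic $\beta^{s-1,\sigma}$, namely
\[
g^{s,\sigma}_x=\beta^{s-1,\sigma}w^{s-1},\qquad
g^{s,\sigma}_t=\beta^{s-1,\sigma}Aw^{s-1}_x-\bigl((\beta^{s-1,\sigma}A)_x-\beta^{s-1,\sigma}B^{s-1}\bigr)w^{s-1},
\]
while carrying along the auxiliary assertion $g^{s,s}=f^s$ (equivalently $g^{s,s}=\beta^{s-1,s}w^s$), which is exactly the ``in particular'' part. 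For the base case $s=1$ both claims are immediate: $g^{1,\sigma}=v^\sigma$ is by construction the potential of $\lceil w^0\rfloor=\lceil u\rfloor$ with characteristic $\beta^{0,\sigma}=\alpha^\sigma$, and $g^{1,1}=v^1=f^1$ after the re-denoting of Step~1.

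For the inductive step I would differentiate the recursive definition $g^{s+1,\sigma}=(\beta^{s-1,\sigma}/\beta^{s-1,s})g^{s,s}-g^{s,\sigma}$ with respect to $x$, insert the hypothesis $g^{s,s}_x=\beta^{s-1,s}w^{s-1}$, $g^{s,\sigma}_x=\beta^{s-1,\sigma}w^{s-1}$ together with $g^{s,s}=\beta^{s-1,s}w^s$, and note that the two terms carrying $w^{s-1}$ cancel identically. What survives is $(\beta^{s-1,\sigma}/\beta^{s-1,s})_x\,\beta^{s-1,s}w^s$, and the coefficient is precisely the dual Darboux/Crum expression
\[
\Bigl(\frac{\beta^{s-1,\sigma}}{\beta^{s-1,s}}\Bigr)_x\beta^{s-1,s}
=\beta^{s-1,\sigma}_x-\frac{\beta^{s-1,s}_x}{\beta^{s-1,s}}\beta^{s-1,\sigma}
={\rm DT}[\beta^{s-1,s}](\beta^{s-1,\sigma})=\beta^{s,\sigma}.
\]
Hence $g^{s+1,\sigma}_x=\beta^{s,\sigma}w^s$, the first required relation at level $s+1$.

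The $t$-component is treated in the same way but is the main computational obstacle. After differentiating the recursion in $t$ and inserting the hypothesis, the $w^{s-1}$ and $w^{s-1}_x$ terms must be re-expressed through the Darboux relation $w^{s-1}=w^s_x+(\beta^{s-1,s}_x/\beta^{s-1,s})w^s$ and its $x$-derivative, and the evolution equations $\lceil w^{s-1}\rfloor$, $\lceil w^s\rfloor$ and the adjoint equation $\lceil \beta^s\rfloor$ (via Lemma~\ref{LemmaOnDualDarbouxTrans}) must be used to clear the resulting second- and time-derivatives. I expect this to reproduce exactly the canonical flux $\beta^{s,\sigma}Aw^s_x-((\beta^{s,\sigma}A)_x-\beta^{s,\sigma}B^s)w^s$. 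A lighter alternative is to note that once $g^{s+1,\sigma}_x=\beta^{s,\sigma}w^s$ holds and $\beta^{s,\sigma}$ is a characteristic of $\lceil w^s\rfloor$, the flux is pinned down up to a function of $t$ by the compatibility $g^{s+1,\sigma}_{xt}=g^{s+1,\sigma}_{tx}$, so it only remains to check that this function vanishes.

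Finally, the auxiliary assertion propagates: taking $\sigma=s+1$ in the level-$(s+1)$ relations just established shows that $g^{s+1,s+1}$ and $f^{s+1}$ are both potentials of $\lceil w^s\rfloor$ with the same characteristic $\beta^{s,s+1}$ (the latter by its defining equations~\eqref{EqIteratedGenPotSysOfLPEs}), so their difference has vanishing $x$- and $t$-derivatives and is therefore constant. I would emphasize that this constant is genuinely trivial: changing it alters $w^{s+1}=f^{s+1}/\beta^{s,s+1}$ only by a multiple of $w^{s+1,s+1}=1/\beta^{s,s+1}$, which is itself a solution of $\lceil w^{s+1}\rfloor$ and lies in the kernel of the corresponding Darboux transformation; absorbing it into the free integration constant of $f^{s+1}$ restores $g^{s+1,s+1}=f^{s+1}$. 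This closes the induction and yields simultaneously the main statement and the assertion $g^{s,s}=f^s$ up to a trivial constant summand.
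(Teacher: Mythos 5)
Your proposal reproduces the paper's own proof almost step for step: induction on $s$, differentiation of the recursion $g^{s+1,\sigma}=(\beta^{s-1,\sigma}/\beta^{s-1,s})g^{s,s}-g^{s,\sigma}$, cancellation of the $w^{s-1}$ terms in the $x$-derivative, identification of the surviving coefficient with $\beta^{s,\sigma}$ via the Darboux/Crum identity, and the closing observation that $g^{s+1,s+1}$ and $f^{s+1}$ share density and flux and hence differ by a negligible constant. Your treatment of the $x$-component and of the constant summand is complete and correct, and matches the paper.

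The shortfall is precisely where the paper invests its computational effort: the $t$-component of the inductive step. You correctly name the ingredients --- the Darboux relation $w^{s-1}=w^s_x+(\beta^{s-1,s}_x/\beta^{s-1,s})w^s$ and the evolution equation satisfied by the $\beta$'s (note that what is actually needed to evaluate $\bigl(\beta^{s-1,\sigma}/\beta^{s-1,s}\bigr)_t$ is $\lceil\beta^{s-1}\rfloor$, the equation satisfied by both $\beta^{s-1,\sigma}$ and $\beta^{s-1,s}$, rather than $\lceil\beta^{s}\rfloor$; the zeroth-order terms then cancel in the quotient) --- but you only state that you \emph{expect} the calculation to reproduce the canonical flux; the paper carries it out explicitly, and that calculation is the substance of the lemma. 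Your ``lighter alternative'' does not close this hole as stated: from $D_x\bigl(g^{s+1,\sigma}_t-G'\bigr)=0$ on solutions, where $G'$ is the canonical flux, you only get that the difference is, \emph{for each solution}, some function of $t$ which may depend on the solution. To conclude that it vanishes you need a further argument --- for instance, that the difference is a linear homogeneous differential expression in $(u,v^1,\dots,v^p)$ with $(t,x)$-coefficients, that the $x$-jet of a solution at a point is arbitrary (which kills the coefficients of $u,u_x,\dots$), and Corollary~\ref{CorollaryOnLinearDependenceOfSolutionsOfLEvolE} to eliminate the residual combination $\sum_\varsigma d_\varsigma(t)\alpha^\varsigma$ --- and this is comparable in effort to simply performing the flux computation. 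So the approach is the right one and would not fail, but as written the central verification is asserted rather than proved.
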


\begin{proof}
The lemma is proved by induction with respect to~$s$. 
The statement of the lemma for~$s=1$ is obvious in view of the definitions of~$g^{1,\sigma}$ and~$f^1$. 
Suppose that the statement is true for a fixed~$s$ and $\sigma=s,\dots,p$. Let us prove it for $s+1$ and $\sigma=s+1,\dots,p$.
By assumption, the functions $g^{s,\sigma}$, $\sigma=s,\dots,p$, satisfy the conditions
\[
g^{s,\sigma}_x=\beta^{s-1,\sigma}w^{s-1}, \quad 
g^{s,\sigma}_t=\beta^{s-1,\sigma}Aw^{s-1}_x-((\beta^{s-1,\sigma}A)_x-\beta^{s-1,\sigma}B^{s-1})w^{s-1}.
\]
and $g^{s,s}=f^s$. Then the first derivatives of $g^{s+1,\sigma}$, $\sigma=s+1,\dots,p$, are 
\begin{gather*}
g^{s+1,\sigma}_x=\left(\frac{\beta^{s-1,\sigma}}{\beta^{s-1,s}}\right)_xf^s=\beta^{s,\sigma}w^s, 
\\ 
g^{s+1,\sigma}_t=\left(\frac{\beta^{s-1,\sigma}}{\beta^{s-1,s}}\right)_tf^s+
\left(\beta^{s-1,\sigma}_x-\frac{\beta^{s-1,\sigma}}{\beta^{s-1,s}}\beta^{s-1,s}_x\right)Aw^{s-1}
=\left(\frac{\beta^{s-1,\sigma}}{\beta^{s-1,s}}\right)_tf^s+\frac{\beta^{s,\sigma}}{\beta^{s-1,s}}Af^s_x
\\
\phantom{g^{s+1,\sigma}_x}
=\left(\beta^{s,\sigma}(B^{s-1}\!-\!2A_x)-\beta^{s,\sigma}_xA-\left(\frac{\beta^{s-1,\sigma}}{\beta^{s-1,s}}\right)_x\beta^{s-1,s}_xA
\right)w^s
+\beta^{s,\sigma}A\biggl(w^s_x+\frac{\beta^{s-1,s}_x}{\beta^{s-1,s}}w^s\biggr)
\\[1ex]
\phantom{g^{s+1,\sigma}_x}
=\beta^{s,\sigma}Aw^s_x-((\beta^{s,\sigma}A)_x-\beta^{s,\sigma}B^s)w^s.
\end{gather*}
Therefore, the function $g^{s+1,\sigma}$ is a potential of the equation $\lceil w^s\rfloor$, 
associated with the characteristic~$\beta^{s,\sigma}$. 
Since $g^{s+1,s+1}_x=f^{s+1}_x$ and $g^{s+1,s+1}_t=f^{s+1}_t$ then $g^{s+1,s+1}=f^{s+1}$ up to a trivial constant summand 
which can be neglected.
\end{proof}

For our further considerations we need a simple but useful property of matrix minors.
Let $\mathcal M_{i_1\dots i_q}^{j_1\dots j_q}$ denote the submatrix of the square matrix~$\mathcal M\in{\rm M}_{n,n}$, 
obtained by deletion of the rows with (different) numbers $i_1$, \dots, $i_q$ 
and the columns with (different) numbers $j_1$, \dots, $j_q$, $q\leqslant n$. 
$M_{i_1\dots i_q}^{j_1\dots j_q}=\det\mathcal M_{i_1\dots i_q}^{j_1\dots j_q}$ is the corresponding minor. 
In particular, $M_{i_1\dots i_n}^{j_1\dots j_n}=1$ by definition as the determinant of the empty matrix, 
$\det\mathcal M=M$ ($q=0$, i.e., there are no deleted rows and columns). 
Here and in the next lemma all indices run from 1 to $n$.

\begin{lemma}\label{LemmaOnMatrixEqulityForCrumTheorem}
For any~$n\geqslant2$, $\mathcal M\in{\rm M}_{n,n}$, $i\ne j$, $k\ne l$:
\[
M^i_kM^j_l-M^i_lM^j_k=\sign(j-i)\sign(l-l)\,M^{ij}_{kl}M.
\]
\end{lemma}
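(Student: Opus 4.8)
The plan is to first strip off the signs by an antisymmetry argument and then to recognize the remaining identity as the $p=2$ instance of Jacobi's theorem on the minors of the adjugate matrix. Observe that the left-hand side $M^i_kM^j_l-M^i_lM^j_k$ is antisymmetric under the interchange $i\leftrightarrow j$ and, separately, under $k\leftrightarrow l$, whereas the complementary minor $M^{ij}_{kl}$ is unchanged by either interchange (deleting a set of rows and a set of columns does not depend on the order in which they are listed). Since $\sign(j-i)\sign(l-k)$ carries exactly the matching sign, it suffices to treat the case $i<j$, $k<l$, for which both sign factors equal $1$ and the claim reduces to $M^i_kM^j_l-M^i_lM^j_k=M^{ij}_{kl}\,M$.

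Next I would pass to cofactors. Writing $C^i_k:=(-1)^{i+k}M^i_k$ and using $(-1)^{i+k}(-1)^{j+l}=(-1)^{i+l}(-1)^{j+k}=(-1)^{i+j+k+l}$, the left-hand side becomes $(-1)^{i+j+k+l}\bigl(C^i_kC^j_l-C^i_lC^j_k\bigr)$. The bracketed quantity is precisely the second-order minor of the cofactor matrix $(C^a_b)$ formed from rows $\{i,j\}$ and columns $\{k,l\}$; since the adjugate $\mathrm{adj}(\mathcal M)$ is the transpose of the cofactor matrix, this equals the second-order minor of $\mathrm{adj}(\mathcal M)$ built from rows $\{k,l\}$ and columns $\{i,j\}$.

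The core of the proof is then the application of Jacobi's theorem on minors of the adjugate. For $p=2$ it gives that this minor equals $\det\mathcal M\cdot(-1)^{(i+j)+(k+l)}\,\det\mathcal M^{ij}_{kl}=(-1)^{i+j+k+l}M^{ij}_{kl}\,M$, where the complementary block $\mathcal M^{ij}_{kl}$ arises because the complement of the column index set $\{i,j\}$ of the adjugate supplies the retained rows of $\mathcal M$, while the complement of the row set $\{k,l\}$ supplies the retained columns. Substituting this into the previous display, the two factors $(-1)^{i+j+k+l}$ cancel and I obtain $M^i_kM^j_l-M^i_lM^j_k=M^{ij}_{kl}\,M$, which together with the sign reduction of the first step is the assertion. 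If one prefers to avoid invoking Jacobi's theorem by name, the needed $p=2$ case can instead be derived directly from the relation $\mathcal M\,\mathrm{adj}(\mathcal M)=(\det\mathcal M)\,I$, and the normalization is readily checked on the case $n=2$, where $M^1_1M^2_2-M^1_2M^2_1=\det\mathcal M$ with $M^{12}_{12}=1$.

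The only genuinely delicate point will be the sign and transpose bookkeeping in the last step: one must keep straight that a minor of the cofactor matrix becomes, after transposition, a minor of the adjugate with rows and columns exchanged, and that Jacobi's theorem pairs the complement of the adjugate's columns with the rows of $\mathcal M$ (and its rows with the columns of $\mathcal M$). Once these conventions are fixed, both sign exponents equal $i+j+k+l$ and cancel cleanly, so no residual sign survives and the stated identity follows.
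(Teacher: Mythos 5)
Your proof is correct, but it takes a genuinely different route from the paper's. The paper argues by induction on~$n$: after permuting rows and columns so that $i=k=n$, $j=l=n+1$, it expands $M^n_n$, $M^n_{n+1}$ and $M$ along the last column, expands once more along rows, and observes that every coefficient of the resulting double sum vanishes by the induction hypothesis applied to the submatrices~$\mathcal M^{n+1}_i$; the only tool is the Laplace expansion theorem, which keeps the lemma self-contained and matches the technique already used in Lemmas~\ref{LemmaOnWronskianForLHE} and~\ref{LemmaGenSolutionOfSystemOnK}. You instead strip the signs by the antisymmetry argument (silently correcting the misprint $\sign(l-l)$ in the statement to $\sign(l-k)$, which is indeed what is meant and what the paper's proof establishes), pass to cofactors, and identify the left-hand side as a $2\times2$ minor of the adjugate, so that the $p=2$ case of Jacobi's theorem on minors of the adjugate finishes the argument once the two factors $(-1)^{i+j+k+l}$ cancel; in effect you recognize the lemma as the classical generalized Desnanot--Jacobi identity rather than an ad hoc computation, which is shorter and more conceptual, at the price of citing an external theorem and of the transpose and sign bookkeeping you correctly flag. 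One caveat deserves emphasis: the lemma must hold for arbitrary, possibly singular, matrices, since in the proof of Lemma~\ref{LemmaOnRepresentationOfGenPotsViaSimplestPotsForLPEs2} it is applied to Wronskian-type matrices in which a row of derivatives has been replaced by potentials, and nothing guarantees nonsingularity there. Jacobi's theorem does hold in that generality, but your fallback derivation from $\mathcal M\,\mathrm{adj}(\mathcal M)=(\det\mathcal M)\,I$ yields the desired identity only after cancelling a factor of $\det\mathcal M$, so it needs the standard supplement that both sides are polynomials in the entries and the identity extends from the invertible case by density (equivalently, by working over the polynomial ring in the entries). The paper's induction, being a direct polynomial-identity argument, avoids this issue altogether.
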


\begin{proof}
The lemma is also proved by induction over~$n$. 
For $n=2$ the statement reduces to the formula for $\det\mathcal M$.
Suppose that the statement is true for a fixed~$n$. Let us prove it for $n+1$. 
Consider $\mathcal M\in{\rm M}_{n+1,n+1}$. 
After permuting rows and columns, without loss of generality we can assume that $i=k=n$, $j=l=n+1$. 
Thus, it should be proved that $K:=M^n_nM^{n+1}_{n+1}-M^n_{n+1}M^{n+1}_n-M^{n,n+1}_{n,n+1}M=0$.
We expand $M^n_n$, $M^n_{n+1}$ and $M$ with respect to the elements of the $(n+1)$-st column of~$\mathcal M$:
\[
K=\sum_{i=1}^{n-1}(-1)^{i+n}\mu^{n+1}_i
\bigl(M^{n,n+1}_{in}M^{n+1}_{n+1}-M^{n,n+1}_{i,n+1}M^{n+1}_n+M^{n,n+1}_{n,n+1}M^{n+1}_i\bigr).
\]
(The terms with $i=n$ and $i=n+1$ cancel.) 
Then for each $i\in\{1,\dots,n-1\}$ we expand $M^{n+1}_{n+1}$, $M^{n+1}_n$ and $M^{n,n+1}_{n,n+1}$ 
in the corresponding term of the sum with respect to the elements of the $i$-th row:
\[
K=\sum_{i=1}^{n-1}\sum_{j=1}^{n-1}(-1)^{j+n}\mu^{n+1}_i\mu^j_i
\bigl(M^{n,n+1}_{in}M^{j,n+1}_{i,n+1}-M^{n,n+1}_{i,n+1}M^{j,n+1}_{in}+M^{jn,n+1}_{in,n+1}M^{n+1}_i\bigr).
\]
(The terms with $j=n$ cancel.) 
The coefficients in the latter sum vanish in view of the induction hypothesis applied to the 
matrices~$\mathcal M^{n+1}_i$. 
\end{proof}

\begin{lemma}\label{LemmaOnRepresentationOfGenPotsViaSimplestPotsForLPEs2}
$\displaystyle g^{s,\sigma}=(-1)^{s-1}
\frac{W(\alpha^1,\dots,\alpha^{s-1},\alpha^\sigma)_{\bar v\rightsquigarrow\bar\alpha_{s-1}}}{W(\alpha^1,\dots,\alpha^{s-1})}$. 
\end{lemma}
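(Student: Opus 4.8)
The plan is to prove the formula by induction on~$s$, converting the recursive definition of~$g^{s,\sigma}$ into a single determinantal identity that is an instance of Lemma~\ref{LemmaOnMatrixEqulityForCrumTheorem}. First I would fix notation: write $D^{s,\sigma}$ for the determinant $W(\alpha^1,\dots,\alpha^{s-1},\alpha^\sigma)_{\bar v\rightsquigarrow\bar\alpha_{s-1}}$, i.e.\ the $s\times s$ determinant whose first $s-1$ rows are the derivatives $\alpha^j_i$ (orders $i=0,\dots,s-2$) of the columns $\alpha^1,\dots,\alpha^{s-1},\alpha^\sigma$ and whose last row is $(v^1,\dots,v^{s-1},v^\sigma)$, so that the claim reads $g^{s,\sigma}=(-1)^{s-1}D^{s,\sigma}/W^{s-1}$. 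The base case $s=1$ is immediate, since there $D^{1,\sigma}=v^\sigma$ and $W^0=1$, recovering $g^{1,\sigma}=v^\sigma$.

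For the inductive step I would substitute the induction hypothesis for $g^{s,s}$ and $g^{s,\sigma}$ into the recursion $g^{s+1,\sigma}=(\beta^{s-1,\sigma}/\beta^{s-1,s})\,g^{s,s}-g^{s,\sigma}$, together with the Wronskian expressions $\beta^{s-1,\sigma}=W_\sigma/W^{s-1}$ and $\beta^{s-1,s}=W^s/W^{s-1}$, where $W_\sigma:=W(\alpha^1,\dots,\alpha^{s-1},\alpha^\sigma)$. After clearing denominators, the desired equality $g^{s+1,\sigma}=(-1)^s D^{s+1,\sigma}/W^s$ reduces to the purely algebraic determinant identity
\[
W^s D^{s,\sigma}-W_\sigma D^{s,s}=W^{s-1}D^{s+1,\sigma}.
\]

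To prove this identity I would realize all six factors as minors of one $(s+1)\times(s+1)$ matrix $\mathcal M$, namely the matrix with columns $\alpha^1,\dots,\alpha^s,\alpha^\sigma$ and rows given by the derivatives of orders $0,\dots,s-1$ followed by the $v$-row $(v^1,\dots,v^s,v^\sigma)$. Labelling rows and columns by $1,\dots,s+1$, one checks directly that $\det\mathcal M=D^{s+1,\sigma}$, that deleting the last row and last column gives $M^{s+1}_{s+1}=W^s$, deleting the last row and the $s$-th column gives $M^{s}_{s+1}=W_\sigma$, deleting the $s$-th row and the last column gives $M^{s+1}_{s}=D^{s,s}$, deleting the $s$-th row and the $s$-th column gives $M^{s}_{s}=D^{s,\sigma}$, and deleting the last two rows and last two columns gives $M^{s,s+1}_{s,s+1}=W^{s-1}$. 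Applying Lemma~\ref{LemmaOnMatrixEqulityForCrumTheorem} with $i=k=s$ and $j=l=s+1$ (so that both sign factors $\sign(j-i)$ and $\sign(l-k)$ equal $+1$) yields
\[
M^{s}_{s}M^{s+1}_{s+1}-M^{s}_{s+1}M^{s+1}_{s}=M^{s,s+1}_{s,s+1}\det\mathcal M,
\]
which is precisely the required identity.

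The routine part is the bookkeeping of which rows and columns are deleted; the one genuinely delicate point, which I expect to be the main obstacle, is getting the signs right. One must track the overall factor $(-1)^{s-1}$ carried by the induction hypothesis alongside the sign factors $\sign(j-i)\,\sign(l-k)$ produced by Lemma~\ref{LemmaOnMatrixEqulityForCrumTheorem}, and verify that with the chosen index placement these conspire to produce exactly $(-1)^s D^{s+1,\sigma}/W^s$ and no spurious sign. Once the minor identifications and this sign check are in place, the induction closes and the lemma follows.
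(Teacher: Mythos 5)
Your proof is correct and takes essentially the same route as the paper's: induction on $s$, reduction of the recursion $g^{s+1,\sigma}=(\beta^{s-1,\sigma}/\beta^{s-1,s})g^{s,s}-g^{s,\sigma}$ to the determinant identity $W^sD^{s,\sigma}-W_\sigma D^{s,s}=W^{s-1}D^{s+1,\sigma}$, and an application of Lemma~\ref{LemmaOnMatrixEqulityForCrumTheorem} with $i=k=s$, $j=l=s+1$. The only difference is one of detail: the paper states the identity and says it ``follows from'' that lemma, while you make explicit the single $(s+1)\times(s+1)$ matrix, the five minor identifications, and the sign bookkeeping -- all of which check out.
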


Here the notations introduced in Section~\ref{SectionOnPotCLsLPEs} are used.
Namely, the subscript $s-1$ denotes the $(s-1)$-st order derivative with respect to $x$, 
and a function is considered as its zero-order derivative.
The notation ``$\bar v\rightsquigarrow\bar\alpha_{s-1}$'' means that the derivatives~$\alpha^\varsigma_{s-1}$ 
are replaced by the function~$v^\varsigma$ for the range of subscripts in the corresponding Wronskian, 
i.e., $\varsigma=1,\dots,s-1,\sigma$ in our case. 
Note that Lemma~\ref{LemmaOnRepresentationOfGenPotsViaSimplestPotsForLPEs2} gives 
significant values of~$g^{s,\sigma}$ only for $\sigma\geqslant s$ and $g^{s,\sigma}=0$ if $\sigma<s$, 
in agreement with the definition of~$g^{s,\sigma}$. 

\begin{proof}
We again use induction on~$s$. 
The statement of the lemma for~$s=1$ is obvious since $g^{1,\sigma}=v^\sigma$, 
$W(\alpha^\sigma)_{\bar v\rightsquigarrow\bar\alpha}=v^\sigma$ and the Wronskian of the empty tuple equals 1 by definition. 
Suppose that the statement is true for a fixed~$s$ and $\sigma=s,\dots,p$. Let us prove it for $s+1$ and $\sigma=s+1,\dots,p$.
In view of the assumption and the recursive formula for $g^{s+1,\sigma}$, the statement for $s+1$ is equivalent to the formula
\begin{gather*}
W(\alpha^1,\dots,\alpha^s)\,W(\alpha^1,\dots,\alpha^{s-1},\alpha^\sigma)_{\bar v\rightsquigarrow\bar\alpha_{s-1}}
-W(\alpha^1,\dots,\alpha^{s-1},\alpha^\sigma)\,W(\alpha^1,\dots,\alpha^s)_{\bar v\rightsquigarrow\bar\alpha_{s-1}}
\\
=W(\alpha^1,\dots,\alpha^{s-1})\,
W(\alpha^1,\dots,\alpha^s,\alpha^\sigma)_{\bar v\rightsquigarrow\bar\alpha_s}
\end{gather*}
The latter is a pure matrix equality and does not depend on the specific structure of Wronski matrices. 
It follows from Lemma~\ref{LemmaOnMatrixEqulityForCrumTheorem}.
\end{proof}

\begin{corollary}
Systems~\eqref{EqGenPotSysOfLPEs} and~\eqref{EqMinIteratedGenPotSysOfLPEs} are equivalent with respect to 
a point transformation of only the potential dependent variables, being linear in these variables. 
In other words, the $p$-level potential frame over class~\eqref{EqGenLPE} is equivalent to 
the first-level potential frame of order~$p$ over the same class. 
They can be simultaneously considered in the framework of the general $p$-order potential frame. 
\end{corollary}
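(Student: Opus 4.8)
The plan is to read off the required point transformation directly from Lemmas~\ref{LemmaOnRepresentationOfGenPotsViaSimplestPotsForLPEs1} and~\ref{LemmaOnRepresentationOfGenPotsViaSimplestPotsForLPEs2}, which together already encode everything needed. By Lemma~\ref{LemmaOnRepresentationOfGenPotsViaSimplestPotsForLPEs1}, the potentials $f^s$ of the $p$-level system~\eqref{EqMinIteratedGenPotSysOfLPEs} coincide, up to inessential constant summands, with the functions $g^{s,s}$. By Lemma~\ref{LemmaOnRepresentationOfGenPotsViaSimplestPotsForLPEs2}, each such $g^{s,s}$ is given by the determinantal formula
\[
f^s=g^{s,s}=(-1)^{s-1}\frac{W(\alpha^1,\dots,\alpha^{s-1},\alpha^s)_{\bar v\rightsquigarrow\bar\alpha_{s-1}}}{W(\alpha^1,\dots,\alpha^{s-1})},
\]
in which only the bottom row of the Wronski matrix carries the potentials $v^1,\dots,v^s$. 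I would therefore define the point transformation $\mathcal T$ to act as the identity on $(t,x,u)$ and to send $(v^1,\dots,v^p)$ to $(f^1,\dots,f^p)$ via these formulas, so that it acts nontrivially only on the potential variables.

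The structural observation that makes this work is that $\mathcal T$ is linear and triangular in the potentials. Expanding the numerator along its bottom row gives $f^s=\sum_{\varsigma=1}^{s}c^{s\varsigma}v^\varsigma$, where the coefficients $c^{s\varsigma}$ are functions of $t$ and $x$ alone (built from the $\alpha^\varsigma$ and their $x$-derivatives); in particular $f^s$ depends on $v^1,\dots,v^s$ only, and the diagonal coefficient is $c^{ss}=(-1)^{s-1}W(\alpha^1,\dots,\alpha^{s-1})/W(\alpha^1,\dots,\alpha^{s-1})=(-1)^{s-1}\ne0$. Hence the matrix $(c^{s\varsigma})$ is lower triangular with nonvanishing diagonal, and therefore invertible; its inverse is again linear in the potentials with $(t,x)$-dependent coefficients, so $\mathcal T$ is a genuine point transformation of the claimed form.

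I would then confirm that $\mathcal T$ carries the solution manifold of~\eqref{EqGenPotSysOfLPEs} onto that of~\eqref{EqMinIteratedGenPotSysOfLPEs}. This is exactly what the iteration preceding the corollary, combined with Lemma~\ref{LemmaOnRepresentationOfGenPotsViaSimplestPotsForLPEs1}, provides: whenever $(u,v^1,\dots,v^p)$ solves~\eqref{EqGenPotSysOfLPEs}, the functions $f^s=g^{s,s}$ satisfy~\eqref{EqMinIteratedGenPotSysOfLPEs}, and conversely by invertibility of $\mathcal T$. Since $\mathcal T$ is a diffeomorphism of the total space that does not depend on the chosen solution, such a bijection of solution sets is precisely a point equivalence of the two systems, from which the concluding assertion identifying the $p$-level potential frame with the first-level frame of order $p$ follows at once. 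The only point requiring care—rather than a genuine obstacle, since the determinantal identity of Lemma~\ref{LemmaOnMatrixEqulityForCrumTheorem} has already carried out the real work—is verifying the triangular (hence invertible) structure and checking that the map respects the full solution manifolds in both directions, not merely individual solutions.
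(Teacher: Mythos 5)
Your proposal is correct and follows essentially the same route as the paper: the corollary is intended to follow directly from Lemmas~\ref{LemmaOnRepresentationOfGenPotsViaSimplestPotsForLPEs1} and~\ref{LemmaOnRepresentationOfGenPotsViaSimplestPotsForLPEs2}, which supply exactly the explicit transformation $f^s=g^{s,s}$ linear in $(v^1,\dots,v^s)$ with $(t,x)$-dependent coefficients. Your additional remark that the coefficient matrix is lower triangular with diagonal entries $(-1)^{s-1}$, hence invertible, merely makes explicit what the paper leaves implicit, so nothing essentially new or different is involved.
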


\begin{corollary}
On any level $s$ the functions $g^{s\sigma}$, $\sigma\geqslant s$, can be expressed via 
the functions $g^{\varsigma\smash{\sigma'}}$, $\sigma'\geqslant\varsigma$, of any lower level~$\varsigma$:
\[
g^{s,\sigma}=(-1)^{s-\varsigma}
\frac{W(\beta^{\varsigma-1,\varsigma},\dots,\beta^{\varsigma-1,s-1},\beta^{\varsigma-1,\sigma})
_{\bar g^{\varsigma}\rightsquigarrow\bar\beta^{\varsigma-1}_{s-\varsigma}}}
{W(\beta^{\varsigma-1,\varsigma},\dots,\beta^{\varsigma-1,s-1})}.
\]
\end{corollary}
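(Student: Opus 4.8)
The plan is to exploit the self-similar structure of the potential frame. The iteration of Steps~$\varsigma,\varsigma+1,\dots,p$ is, after shifting all level indices by $\varsigma-1$, literally the iteration of Steps~$1,\dots,p-\varsigma+1$ carried out with the equation~$\lceil w^{\varsigma-1}\rfloor$ playing the role of the initial equation~\eqref{EqGenLPE}. Accordingly, I would treat $\lceil w^{\varsigma-1}\rfloor$ (which is of the form~\eqref{EqGenLPE}, with coefficients $A$, $B^{\varsigma-1}$, $C^{\varsigma-1}$) as a new initial equation, its adjoint solutions $\beta^{\varsigma-1,\varsigma},\dots,\beta^{\varsigma-1,p}$ as the new characteristic tuple, and the functions $g^{\varsigma,\varsigma},\dots,g^{\varsigma,p}$ as the associated first-level potentials. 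By Lemma~\ref{LemmaOnRepresentationOfGenPotsViaSimplestPotsForLPEs1} these $g^{\varsigma,\sigma}$ are indeed first-level potentials of~$\lceil w^{\varsigma-1}\rfloor$ for the characteristics $\beta^{\varsigma-1,\sigma}$, and the latter are linearly independent since $W(\beta^{\varsigma-1,\varsigma},\dots,\beta^{\varsigma-1,p})\neq0$ by Lemma~\ref{LemmaOnLinearDependenceOfSolutionsOfLEvolE}. Granting that the restarted frame reproduces the objects of the original frame under the index shift, the claimed formula is then exactly Lemma~\ref{LemmaOnRepresentationOfGenPotsViaSimplestPotsForLPEs2} read off for this shifted base, with restarted level index $s'=s-\varsigma+1$ (so that $(-1)^{s'-1}=(-1)^{s-\varsigma}$ and the substituted derivative order is $s'-1=s-\varsigma$).

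The first step I would carry out is to check that the two iterations really coincide under $j\mapsto\varsigma-1+j$. For the auxiliary characteristics this means $\tilde\beta^{\,j,k}=\beta^{\varsigma-1+j,\,\varsigma-1+k}$, where $\tilde\beta$ denotes the quantities produced by the restarted frame. This is the place where the Crum theorem enters through the standard Wronskian quotient identity: since $\beta^{\varsigma-1,\sigma}=W(\alpha^1,\dots,\alpha^{\varsigma-1},\alpha^\sigma)/W(\alpha^1,\dots,\alpha^{\varsigma-1})$, one has $W(\beta^{\varsigma-1,\varsigma},\dots,\beta^{\varsigma-1,\varsigma-1+j})=W(\alpha^1,\dots,\alpha^{\varsigma-1+j})/W(\alpha^1,\dots,\alpha^{\varsigma-1})$, and dividing two such expressions cancels the factor $W(\alpha^1,\dots,\alpha^{\varsigma-1})$ and recovers $\beta^{\varsigma-1+j,\,\varsigma-1+k}$ in the Wronskian form established on Step~$s$. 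With this in hand, the recursion $g^{s+1,\sigma}=\frac{\beta^{s-1,\sigma}}{\beta^{s-1,s}}g^{s,s}-g^{s,\sigma}$ for the original frame and its restarted counterpart become the same recursion, and an easy induction on $s-\varsigma$ (base case $s=\varsigma$ being the identification $g^{\varsigma,\sigma}=g^{\varsigma,\sigma}$) yields $g^{s,\sigma}=\tilde g^{\,s-\varsigma+1,\,\sigma-\varsigma+1}$.

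Substituting the restarted data into Lemma~\ref{LemmaOnRepresentationOfGenPotsViaSimplestPotsForLPEs2} then gives precisely the asserted identity, since $\tilde\alpha^1,\dots,\tilde\alpha^{s'-1},\tilde\alpha^{\sigma'}$ become $\beta^{\varsigma-1,\varsigma},\dots,\beta^{\varsigma-1,s-1},\beta^{\varsigma-1,\sigma}$, the replaced tuple $\bar{\tilde v}$ becomes $\bar g^{\varsigma}$, and the substituted derivatives become $\bar\beta^{\varsigma-1}_{s-\varsigma}$. I expect the only genuinely technical point to be the Crum-type Wronskian identity $\tilde\beta^{\,j,k}=\beta^{\varsigma-1+j,\,\varsigma-1+k}$; everything else is index bookkeeping. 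An equivalent route, avoiding an explicit restarted frame, is to repeat the induction of Lemma~\ref{LemmaOnRepresentationOfGenPotsViaSimplestPotsForLPEs2} verbatim with $\alpha^s$ replaced by $\beta^{\varsigma-1,s}$ and $v^\sigma$ by $g^{\varsigma,\sigma}$, the inductive step again being supplied by the minor identity of Lemma~\ref{LemmaOnMatrixEqulityForCrumTheorem}, and the base case $s=\varsigma$ holding trivially because the Wronskian of the empty tuple equals~$1$.
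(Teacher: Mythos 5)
Your proposal is correct and takes essentially the same approach as the paper: the paper's entire proof is the single remark that one applies Lemma~\ref{LemmaOnRepresentationOfGenPotsViaSimplestPotsForLPEs2} with the iteration procedure restarted from level~$\varsigma$, treating $\lceil w^{\varsigma-1}\rfloor$ as the initial equation, the $\beta^{\varsigma-1,\sigma}$ as its characteristics and the $g^{\varsigma,\sigma}$ as the associated first-level potentials. Your write-up merely makes explicit the bookkeeping this restart takes for granted (the Crum--Wronskian identification $\tilde\beta^{j,k}=\beta^{\varsigma-1+j,\varsigma-1+k}$ and the coincidence of the two recursions for the $g$'s), which is a faithful elaboration rather than a different route.
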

\begin{proof}
We apply Lemma~\ref{LemmaOnRepresentationOfGenPotsViaSimplestPotsForLPEs2} assuming that the iteration procedure is started from level~$\varsigma$. 
\end{proof}

Since $w^s=f^s/\beta^{s-1,s}$ then in view of the formula for~$\beta^{s-1,s}$ we have one more corollary 
from Lemma~\ref{LemmaOnRepresentationOfGenPotsViaSimplestPotsForLPEs2}. 

\begin{corollary}\label{CorollaryOnIndependenceOfPotsOnOrderOfCharsForLPEs}
For any~$s$ the final result of the $s$-th iteration (i.e., the expression of the potential $w^s$ 
via the potentials $v^1$, \dots, $v^s$ and the form of the equation~$\lceil w^s\rfloor$)
is invariant with respect to nondegenerate linear transformations 
$\tilde\alpha^\sigma=\sum_{\varsigma=1}^s\alpha^\varsigma c_{\varsigma\sigma}$, $\sigma=1,\dots,s$,
of the characteristics from the tuple~$(\alpha^1,\dots,\alpha^s)$. 
Here $c_{\varsigma\sigma}$, $\sigma,\varsigma=1,\dots,s$, are constants such that $\det(c_{\varsigma\sigma})\not=0$.
In~particular, 
\[
w^s=(-1)^{s-1}
\frac{W(\alpha^1,\dots,\alpha^s)_{\bar v\rightsquigarrow\bar\alpha_{s-1}}}{W(\alpha^1,\dots,\alpha^s)}.
\]
\end{corollary}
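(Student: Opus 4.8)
The plan is to obtain the explicit Wronskian formula first and then read off both invariance claims from the multilinearity of the determinant. First I would recall the three ingredients already established during the iteration: $w^s=f^s/\beta^{s-1,s}$, the identity $f^s=g^{s,s}$ from Lemma~\ref{LemmaOnRepresentationOfGenPotsViaSimplestPotsForLPEs1}, and the value $\beta^{s-1,s}=W^s/W^{s-1}$ coming from the closed form of $\beta^{s,\sigma}$ as a ratio of Wronskians. Substituting $\sigma=s$ into Lemma~\ref{LemmaOnRepresentationOfGenPotsViaSimplestPotsForLPEs2} gives
\[
g^{s,s}=(-1)^{s-1}\frac{W(\alpha^1,\dots,\alpha^s)_{\bar v\rightsquigarrow\bar\alpha_{s-1}}}{W^{s-1}},
\]
so that dividing by $\beta^{s-1,s}=W^s/W^{s-1}$ cancels the factor $W^{s-1}$ and yields exactly
\[
w^s=(-1)^{s-1}\frac{W(\alpha^1,\dots,\alpha^s)_{\bar v\rightsquigarrow\bar\alpha_{s-1}}}{W(\alpha^1,\dots,\alpha^s)}.
\]
This establishes the displayed formula with no further computation.

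Next I would handle the invariance of $w^s$ under a nondegenerate recombination $\tilde\alpha^\sigma=\sum_{\varsigma=1}^s\alpha^\varsigma c_{\varsigma\sigma}$. The key preliminary observation is that the potentials recombine by the same constant matrix: differentiating $\tilde v^\sigma$ through the defining relations~\eqref{EqGenPotSysOfLPEs} gives $\tilde v^\sigma_x=\tilde\alpha^\sigma u=\sum_\varsigma c_{\varsigma\sigma}\alpha^\varsigma u=\sum_\varsigma c_{\varsigma\sigma}v^\varsigma_x$, and likewise for the $t$-derivative, so $\tilde v^\sigma=\sum_\varsigma c_{\varsigma\sigma}v^\varsigma$ up to the negligible additive constant. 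Since the ordinary Wronski matrix and the matrix defining $W(\alpha^1,\dots,\alpha^s)_{\bar v\rightsquigarrow\bar\alpha_{s-1}}$ differ only in their last row (derivatives replaced by potentials), and since both the functions and the potentials occupying their columns are transformed by the one constant matrix $(c_{\varsigma\sigma})$, multilinearity and the column-operation rule for determinants multiply each of the two Wronskians by the common factor $\det(c_{\varsigma\sigma})$. As this factor is nonzero and identical in numerator and denominator, it cancels and $w^s$ is unchanged.

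Finally, for the invariance of the form of $\lceil w^s\rfloor$ I would note that its coefficients depend on $\bar\alpha$ only through $A$, $B^s=B-sA_x$ and $C^s$, and that $C^s$ involves the characteristics solely via the logarithmic derivative $W^s_x/W^s$ (and its $x$-derivative). Under the recombination $W^s\mapsto\det(c_{\varsigma\sigma})\,W^s$ by the same determinant rule, so $W^s_x/W^s=(\ln W^s)_x$ is unaffected by the constant factor; hence $C^s$, and with it the whole equation $\lceil w^s\rfloor$, is invariant. The genuine content of the corollary is this invariance rather than the explicit formula, and the one point demanding care is the verification that the potential tuple $\bar v$ transforms by precisely the same matrix as $\bar\alpha$, so that the substituted Wronskian in the numerator really does scale by $\det(c_{\varsigma\sigma})$ and not by some other factor; once this is in place, everything reduces to the elementary behaviour of determinants under constant column operations.
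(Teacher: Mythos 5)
Your proof is correct and follows essentially the same route as the paper: the explicit Wronskian formula is obtained by combining $w^s=f^s/\beta^{s-1,s}$, the identity $f^s=g^{s,s}$ from Lemma~\ref{LemmaOnRepresentationOfGenPotsViaSimplestPotsForLPEs1}, the value $\beta^{s-1,s}=W^s/W^{s-1}$ and Lemma~\ref{LemmaOnRepresentationOfGenPotsViaSimplestPotsForLPEs2} with $\sigma=s$, after which both invariance claims follow from the scaling of numerator and denominator by $\det(c_{\varsigma\sigma})$ and the appearance of the characteristics in $C^s$ only through $W^s_x/W^s$. You merely make explicit what the paper leaves implicit, in particular the fact (stated separately in the paper as Note~\ref{NoteOnSimultaneousLinearCombiningOfCharsAndPots}) that the potential tuple transforms by the same constant matrix as the characteristic tuple.
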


\begin{note}\label{NoteOnSimultaneousLinearCombiningOfCharsAndPots}
A nondegenerate linear transformation 
$\tilde\alpha^\sigma=\sum_{\varsigma=1}^s\alpha^\varsigma c_{\varsigma\sigma}$, $\sigma=1,\dots,s$, 
in a characteristic tuple~$(\alpha^1,\dots,\alpha^s)$ necessarily implies 
the simultaneous linear transformation $\tilde v^\sigma=\sum_{\varsigma=1}^sv^\varsigma c_{\varsigma\sigma}$ 
in the associated potential tuple~$(v^1,\dots,v^s)$ with the same coefficients. 
\end{note}

\begin{note}\label{NoteOnCorrespodenceOfCharSubspaceForLPEs}
We can say that the modified $s$-level potential $w^s$ and the equation~$\lceil w^s\rfloor$ correspond to 
the $s$-dimensional characteristics subspace $\langle\alpha^1,\dots,\alpha^s\rangle$ 
instead of the characteristic tuple~$(\alpha^1,\dots,\alpha^s)$
since the choice of a subspace basis is inessential in view of Corollary~\ref{CorollaryOnIndependenceOfPotsOnOrderOfCharsForLPEs}.
\end{note}

\begin{note}
The Crum theorem can be proved in a way similar to Lemma~\ref{LemmaOnRepresentationOfGenPotsViaSimplestPotsForLPEs2} 
by using Lemma~\ref{LemmaOnMatrixEqulityForCrumTheorem}.
\end{note}

Below the notation ``$(\alpha^1,\dots,\lefteqn{\smash{\diagdown}}\alpha^\varsigma,\dots,\alpha^s)$'' means that 
$\alpha^\varsigma$ is absent in the corresponding tuple of~$\alpha$'s. 

\begin{lemma}\label{LemmaOnSolutionsOfPotEqsforLPEs}
For any fixed~$s$ the functions 
\[
w^{s,\varsigma}=(-1)^{\varsigma-1}
\frac{W(\alpha^1,\dots,\lefteqn{\smash{\diagdown}}\alpha^\varsigma,\dots,\alpha^s)}{W(\alpha^1,\dots,\alpha^s)}, 
\quad \varsigma=1,\dots,s,
\]
are linearly independent solutions of~$\,\lceil w^s\rfloor$. 
Moreover, $W(w^{1,s},\dots,w^{s,s})\,W(\alpha^1,\dots,\alpha^s)=1$. 
\end{lemma}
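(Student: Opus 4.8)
The plan is to deduce the three assertions --- that the $w^{s,\varsigma}$ solve $\lceil w^s\rfloor$, that they are linearly independent, and that $W(w^{s,1},\dots,w^{s,s})\,W^s=1$ --- from the machinery already developed for the iteration, rather than by a direct (and painful) substitution into $\lceil w^s\rfloor$. The decisive observation is that by Corollary~\ref{CorollaryOnIndependenceOfPotsOnOrderOfCharsForLPEs} the equation $\lceil w^s\rfloor$ depends only on the subspace $\langle\alpha^1,\dots,\alpha^s\rangle$ and is invariant under every nondegenerate linear recombination of $\alpha^1,\dots,\alpha^s$, in particular under every permutation of these characteristics. This lets me realize each of the functions $w^{s,\varsigma}$ as a seed solution of $\lceil w^s\rfloor$ produced by the iteration run on a suitably reordered tuple.

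First I would prove that every $w^{s,\varsigma}$ is a solution of $\lceil w^s\rfloor$. Fix $\varsigma\in\{1,\dots,s\}$ and let $\pi$ be the cyclic permutation that moves $\alpha^\varsigma$ to the last slot while preserving the order of the remaining characteristics; its sign is $(-1)^{s-\varsigma}$. Running Steps $1$ through $s$ of the iteration on the reordered tuple $(\alpha^{\pi(1)},\dots,\alpha^{\pi(s)})$ produces, by Corollary~\ref{CorollaryOnIndependenceOfPotsOnOrderOfCharsForLPEs}, the very same equation $\lceil w^s\rfloor$ together with its seed solution $1/\beta^{s-1,s}_\pi$, where $\beta^{s-1,s}_\pi=W(\alpha^{\pi(1)},\dots,\alpha^{\pi(s)})/W(\alpha^{\pi(1)},\dots,\alpha^{\pi(s-1)})$. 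Since the Wronskian is alternating in its arguments, the numerator equals $(-1)^{s-\varsigma}W^s$ and the denominator equals $W(\alpha^1,\dots,\widehat{\alpha^\varsigma},\dots,\alpha^s)$, so the reordered seed is $(-1)^{s+1}w^{s,\varsigma}$. As the iteration guarantees that this seed solves its level-$s$ equation, and that equation is exactly $\lceil w^s\rfloor$, the function $w^{s,\varsigma}$ solves $\lceil w^s\rfloor$ as well. (Equivalently, one may phrase this through Lemma~\ref{LemmaOnDualDarbouxTrans}: $1/\beta^{s-1,s}_\pi$ is the reciprocal of a characteristic of $\lceil w^{s-1}\rfloor$ obtained for the reordered flag, hence a solution of $\lceil w^s\rfloor$.)

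It remains to establish the reciprocity identity $W(w^{s,1},\dots,w^{s,s})\,W^s=1$; linear independence of the $w^{s,\varsigma}$ is then immediate from Lemma~\ref{LemmaOnLinearDependenceOfSolutionsOfLEvolE}, since they are solutions of the linear evolution equation $\lceil w^s\rfloor$ whose Wronskian $1/W^s$ does not vanish. Writing $\mathcal W$ for the Wronski matrix of $(\alpha^1,\dots,\alpha^s)$, the numerators $W(\alpha^1,\dots,\widehat{\alpha^\varsigma},\dots,\alpha^s)$ are exactly the minors of $\mathcal W$ obtained by deleting the last row and the $\varsigma$-th column, so that, up to the common sign $(-1)^{s+1}$, the tuple $(w^{s,1},\dots,w^{s,s})$ is the last column of $\mathcal W^{-1}$. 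The identity to be proved is thus the classical Jacobi--Wronskian reciprocity between a system of functions and the normalized cofactor system built from its Wronski matrix. I would prove it by induction on $s$: the case $s=2$ is the direct computation $W(\phi_2/W,\,-\phi_1/W)=1/W$, and the inductive step reduces, after expanding the Wronskian of the cofactor ratios, to the Desnanot--Jacobi minor identity, which is precisely Lemma~\ref{LemmaOnMatrixEqulityForCrumTheorem}. This mirrors the remark that the Crum theorem itself can be obtained from Lemma~\ref{LemmaOnMatrixEqulityForCrumTheorem}.

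The main obstacle is this last reciprocity identity. The ``are solutions'' claim becomes a formality once the reordering invariance of Corollary~\ref{CorollaryOnIndependenceOfPotsOnOrderOfCharsForLPEs} is invoked, and linear independence is a free corollary of the identity. By contrast, the Wronskian reciprocity is a genuine multilinear-algebra statement whose bookkeeping --- tracking how the derivatives of the ratios $W(\dots\widehat{\alpha^\varsigma}\dots)/W^s$ recombine --- must be organized carefully; the role of Lemma~\ref{LemmaOnMatrixEqulityForCrumTheorem} is to absorb exactly this bookkeeping at the inductive step, so the real work is to set up the induction so that each cofactor-Wronskian entry is disposed of by a single application of that minor identity.
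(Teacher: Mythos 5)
Your proposal is correct, and on both halves of the statement it takes a genuinely different route from the paper. For the ``are solutions'' claim the paper never permutes the characteristics: it observes that the constant tuples $(v^1,\dots,v^s)=(\delta_{1\sigma},\dots,\delta_{s\sigma})$ solve the potential system (with $u=0$), and substituting them into the determinant formula $w^s=(-1)^{s-1}W(\alpha^1,\dots,\alpha^s)_{\bar v\rightsquigarrow\bar\alpha_{s-1}}/W(\alpha^1,\dots,\alpha^s)$ of Corollary~\ref{CorollaryOnIndependenceOfPotsOnOrderOfCharsForLPEs} and expanding the numerator along its last row gives precisely $w^{s,\sigma}$; you instead use the invariance half of that same corollary together with the seed solution $1/\beta^{s-1,s}_\pi$ of the reordered iteration, and your sign bookkeeping (the factor $(-1)^{s+1}$) is right. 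For the reciprocity identity the paper avoids induction and Lemma~\ref{LemmaOnMatrixEqulityForCrumTheorem} altogether: from the cofactor-expansion relations $w^{s,\sigma}\alpha^\sigma_{\varsigma-1}=0$ ($\varsigma<s$), $w^{s,\sigma}\alpha^\sigma_{s-1}=(-1)^{s-1}$ and their $x$-differential consequences it shows that the product of the Wronski matrix of $(\alpha^1,\dots,\alpha^s)$ with the transposed Wronski matrix of $(w^{s,1},\dots,w^{s,s})$ is anti-triangular with anti-diagonal entries $(-1)^{s-\varsigma}$, so its determinant equals $1$ --- a one-shot computation that delivers the identity and linear independence simultaneously. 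Your induction is workable: the step follows by combining ${\rm DT}[w^{s,s}](w^{s,\varsigma})=w^{s-1,\varsigma}$ (a Desnanot--Jacobi consequence which the paper only later records as Lemma~\ref{LemmaOnDTOfWsForLPEs2}, so no circularity) with the product rules $W(fg_1,\dots,fg_m)=f^mW(g_1,\dots,g_m)$ and $W(W(f,g_1),\dots,W(f,g_m))=f^{m-1}W(f,g_1,\dots,g_m)$; but it is heavier than the paper's matrix-product argument, whereas your permutation trick for the first half is arguably the slicker of the two. Two small points: linear independence needs no appeal to Lemma~\ref{LemmaOnLinearDependenceOfSolutionsOfLEvolE}, since a not-identically-vanishing Wronskian implies independence for arbitrary functions (the evolution equation is only needed for the converse); and your reordered iteration tacitly requires the intermediate Wronskians $W(\alpha^{\pi(1)},\dots,\alpha^{\pi(q)})$, $q<s$, to be nonzero, which does follow from Lemma~\ref{LemmaOnLinearDependenceOfSolutionsOfLEvolE} applied to subtuples --- the paper makes the same tacit assumption for the original ordering, so this merits only a sentence.
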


\begin{proof}
Since for any $\sigma$ $v^\sigma=\const$ is a solution of the corresponding potential equation, 
in view of Corollary~\ref{CorollaryOnIndependenceOfPotsOnOrderOfCharsForLPEs}
the tuples $(v^1,\dots,v^s)=(\delta_{1\sigma},\dots,\delta_{s\sigma})$, $\sigma=1,\dots,s$, 
where $\delta_{\varsigma\sigma}$ is the Kronecker delta, give $s$ solutions~$w^{s,\varsigma}$ of~$\,\lceil w^s\rfloor$.
By the formula on determinant expansion, 
$w^{s,\sigma}\alpha^\sigma_{\varsigma-1}=0$ if $\varsigma=1,\dots,s-1$ and $w^{s,\sigma}\alpha^\sigma_{\varsigma-1}=(-1)^{s-1}$ if $\varsigma=s$.
Combining differential consequences of these formulas, we derive that 
\[
w^{s,\sigma}_{\smash{\varsigma'\!}-1}\alpha^\sigma_{\varsigma-1}=\left\{
\begin{array}{cl}0,\ &\varsigma=1,\dots,s-\varsigma',\\[.5ex]
(-1)^{s-\smash{\varsigma'\!}},\ & \varsigma=s-\varsigma'+1.
\end{array}\right.  
\]
Therefore, the product of the Wronski matrix of $(\alpha^1,\dots,\alpha^s)$ and 
the transposed Wronski matrix of $(w^{s,1},\dots,w^{s,s})$ is a matrix with zeros above 
the anti-diagonal (the diagonal going from the lower left corner to the upper right corner). 
Its anti-diagonal entries equal $(-1)^{s-\varsigma}$, $\varsigma=1,\dots,s$, starting from the lower left corner.
The determinant of such a matrix equals $1$. Therefore, $W(w^{1,s},\dots,w^{s,s})\,W(\alpha^1,\dots,\alpha^s)=1$.
\end{proof}

\begin{corollary}\label{CorollaryOnPresentationOfPotsViaFixedSolutionsOfPotEqForLPEs}
The $s$-level potential~$w^s$ is a linear combination of the potentials $v^1$, \dots, $v^s$ 
with functional coefficients which are fixed solutions of the equation~$\,\lceil w^s\rfloor$:
$w^s=\sum_{\sigma=1}^s w^{s\sigma}v^\sigma.$ 
\end{corollary}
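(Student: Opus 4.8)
The plan is to read the claim off directly from the closed-form Wronskian expression for the modified potential, by expanding a single determinant along one row. First I would recall from Corollary~\ref{CorollaryOnIndependenceOfPotsOnOrderOfCharsForLPEs} that
\[
w^s=(-1)^{s-1}\frac{W(\alpha^1,\dots,\alpha^s)_{\bar v\rightsquigarrow\bar\alpha_{s-1}}}{W(\alpha^1,\dots,\alpha^s)},
\]
where the numerator is the determinant of the Wronski matrix of $(\alpha^1,\dots,\alpha^s)$ in which the bottom row of $(s-1)$-st order derivatives $(\alpha^1_{s-1},\dots,\alpha^s_{s-1})$ has been replaced by $(v^1,\dots,v^s)$. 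Since the potentials $v^\sigma$ occur only in this last row, the numerator is linear in them, and I would expand it along that row by the Laplace rule.

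Next I would identify the cofactors. The cofactor of the entry $v^\sigma$ is $(-1)^{s+\sigma}$ times the minor obtained by deleting the bottom row (the derivatives of order $s-1$) and the $\sigma$-th column; but this minor is precisely the Wronskian $W(\alpha^1,\dots,\lefteqn{\smash{\diagdown}}\alpha^\sigma,\dots,\alpha^s)$ of the remaining $s-1$ functions taken with derivative orders $0,\dots,s-2$. Hence
\[
W(\alpha^1,\dots,\alpha^s)_{\bar v\rightsquigarrow\bar\alpha_{s-1}}
=\sum_{\sigma=1}^s(-1)^{s+\sigma}\,W(\alpha^1,\dots,\lefteqn{\smash{\diagdown}}\alpha^\sigma,\dots,\alpha^s)\,v^\sigma.
\]
Substituting this back and collecting sign factors, the global $(-1)^{s-1}$ combines with $(-1)^{s+\sigma}$ to give $(-1)^{2s-1+\sigma}=(-1)^{\sigma-1}$, so that
\[
w^s=\sum_{\sigma=1}^s(-1)^{\sigma-1}\frac{W(\alpha^1,\dots,\lefteqn{\smash{\diagdown}}\alpha^\sigma,\dots,\alpha^s)}{W(\alpha^1,\dots,\alpha^s)}\,v^\sigma.
\]
Comparing the coefficient of each $v^\sigma$ with the functions introduced in Lemma~\ref{LemmaOnSolutionsOfPotEqsforLPEs}, I would recognize it as exactly $w^{s,\sigma}$, whence $w^s=\sum_{\sigma=1}^s w^{s,\sigma}v^\sigma$. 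Since Lemma~\ref{LemmaOnSolutionsOfPotEqsforLPEs} already guarantees that each $w^{s,\sigma}$ is a (fixed) solution of $\lceil w^s\rfloor$ depending only on $t$ and $x$, this is precisely the asserted representation.

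There is essentially no serious obstacle, as both ingredients — the determinantal formula for $w^s$ and the identification of the cofactors with the solutions $w^{s,\sigma}$ — are already in hand. The only point demanding care is the bookkeeping of signs in the Laplace expansion together with the enumeration convention for the Wronski matrix (rows numbered by derivative order starting from $0$) fixed in the preceding lemmas, which must be matched so that the cofactor sign $(-1)^{s+\sigma}$ and the $\varsigma$-indexing of $w^{s,\varsigma}$ line up with the $\sigma$-indexing used here.
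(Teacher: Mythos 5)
Your proof is correct and takes essentially the same route the paper intends: the corollary is stated there without explicit proof, but it follows precisely as you describe, by Laplace expansion of the determinantal formula of Corollary~\ref{CorollaryOnIndependenceOfPotsOnOrderOfCharsForLPEs} along the row occupied by $(v^1,\dots,v^s)$ and identification of the cofactors with the functions $w^{s,\sigma}$, which Lemma~\ref{LemmaOnSolutionsOfPotEqsforLPEs} already certifies to be fixed solutions of $\lceil w^s\rfloor$ (indeed, the paper's own proof of that lemma obtains the $w^{s,\sigma}$ by evaluating the same linear form at the unit constant tuples, which is your expansion in disguise). Your sign bookkeeping $(-1)^{s-1}(-1)^{s+\sigma}=(-1)^{\sigma-1}$ is also correct.
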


\begin{lemma}\label{LemmaOnDTOfWsForLPEs2}
${\rm DT}[w^{s,s}](w^{s,\varsigma})=w^{s-1,\varsigma}$, $\varsigma=1,\dots,s-1$.
\end{lemma}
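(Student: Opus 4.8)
The plan is to reduce the statement to a single algebraic identity for Wronskians and then to derive that identity from the Desnanot--Jacobi relation already available as Lemma~\ref{LemmaOnMatrixEqulityForCrumTheorem}. The starting point is that all the functions $w^{s,\varsigma}$ of Lemma~\ref{LemmaOnSolutionsOfPotEqsforLPEs} depend only on $t$ and $x$ (with $t$ a mere parameter for the $x$-Wronskians), and that the Darboux transformation can be written as a Wronskian ratio,
\[
{\rm DT}[\psi](w)=w_x-\frac{\psi_x}{\psi}w=\frac{W(\psi,w)}{\psi}.
\]
Set $W^s:=W(\alpha^1,\dots,\alpha^s)$ and let $W^s_{\setminus\varsigma}$ be the same Wronskian with the function $\alpha^\varsigma$ removed, so that $w^{s,s}=(-1)^{s-1}W^{s-1}/W^s$ and $w^{s,\varsigma}=(-1)^{\varsigma-1}W^s_{\setminus\varsigma}/W^s$. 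First I would substitute $\psi=w^{s,s}$, $w=w^{s,\varsigma}$ and use the common-denominator rule $W(g/D,h/D)=W(g,h)/D^2$; writing $w^{s,s}=g/D$, $w^{s,\varsigma}=h/D$ with $D=W^s$, $g=(-1)^{s-1}W^{s-1}$, $h=(-1)^{\varsigma-1}W^s_{\setminus\varsigma}$, this collapses ${\rm DT}[w^{s,s}](w^{s,\varsigma})=W(g,h)/(gD)$ to
\[
{\rm DT}[w^{s,s}](w^{s,\varsigma})=(-1)^{\varsigma-1}\,\frac{W\bigl(W^{s-1},\,W^s_{\setminus\varsigma}\bigr)}{W^{s-1}\,W^s}.
\]

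Comparing with the target value $w^{s-1,\varsigma}=(-1)^{\varsigma-1}W^{s-1}_{\setminus\varsigma}/W^{s-1}$ from Lemma~\ref{LemmaOnSolutionsOfPotEqsforLPEs}, the whole claim becomes equivalent to the bordering identity
\[
W\bigl(W^{s-1},\,W^s_{\setminus\varsigma}\bigr)=W^{s-1}_{\setminus\varsigma}\,W^s .
\]
The next step is to cast this in the canonical form $W(W(\Phi,a),W(\Phi,b))=W(\Phi)\,W(\Phi,a,b)$ with $\Phi=(\alpha^1,\dots,\alpha^{s-1})\setminus\{\alpha^\varsigma\}$, $a=\alpha^\varsigma$ and $b=\alpha^s$. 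Here $W^s_{\setminus\varsigma}=W(\Phi,b)$ already, while $W^{s-1}=(-1)^{(s-1)-\varsigma}W(\Phi,a)$ and $W^s=(-1)^{(s-1)-\varsigma}W(\Phi,a,b)$, the two reordering signs cancelling; thus the sign bookkeeping closes exactly and only the canonical bordering identity remains to be proved.

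Finally I would prove the bordering identity from Lemma~\ref{LemmaOnMatrixEqulityForCrumTheorem}, in the spirit of the reduction carried out in Lemma~\ref{LemmaOnRepresentationOfGenPotsViaSimplestPotsForLPEs2}. Consider the $(k{+}2)\times(k{+}2)$ Wronski matrix $\mathcal M$ of $(\Phi,a,b)$, whose rows are the $x$-derivatives of orders $0,\dots,k{+}1$, so that $\det\mathcal M=W(\Phi,a,b)$. The key bookkeeping fact is that $\partial_x$ of an $m$-function Wronskian is the determinant of the matrix in which the order-$(m{-}1)$ row is replaced by the order-$m$ row; hence $W(\Phi,a)$, $W(\Phi,b)$, $\partial_xW(\Phi,a)$, $\partial_xW(\Phi,b)$ and $W(\Phi)$ are precisely the minors $M^{n}_{n}$, $M^{n}_{n-1}$, $M^{n-1}_{n}$, $M^{n-1}_{n-1}$ and $M^{n-1,n}_{n-1,n}$ of $\mathcal M$ (with $n=k{+}2$). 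Applying Lemma~\ref{LemmaOnMatrixEqulityForCrumTheorem} to the two bottom rows and the two last columns gives $M^{n-1}_{n-1}M^{n}_{n}-M^{n-1}_{n}M^{n}_{n-1}=M^{n-1,n}_{n-1,n}M$, whose left-hand side is exactly $W(W(\Phi,a),W(\Phi,b))$, which is the desired identity; alternatively one may simply invoke the Crum theorem~\cite{Crum1955,Matveev&Salle1991}. I expect the only delicate point to be this last step: unlike the purely algebraic three-term determinant relation exploited in Lemma~\ref{LemmaOnRepresentationOfGenPotsViaSimplestPotsForLPEs2}, one extra $x$-differentiation now enters, so the care lies in correctly matching derivatives of Wronskians with the minors of the bordered matrix. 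Once this dictionary is fixed, a single application of Lemma~\ref{LemmaOnMatrixEqulityForCrumTheorem} completes the argument.
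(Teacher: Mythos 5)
Your proposal is correct and follows precisely the route of the paper's own (one-sentence) proof: substitute the Wronskian expressions of Lemma~\ref{LemmaOnSolutionsOfPotEqsforLPEs} into ${\rm DT}[w^{s,s}](w^{s,\varsigma})$, simplify, and close with the minor identity of Lemma~\ref{LemmaOnMatrixEqulityForCrumTheorem}; your write-up simply supplies the details (quotient rule for Wronskians, the dictionary between Wronskian derivatives and minors of the bordered Wronski matrix) that the paper leaves implicit. The only cosmetic difference is that you permute $\alpha^\varsigma$ to the last column, incurring two cancelling signs, whereas Lemma~\ref{LemmaOnMatrixEqulityForCrumTheorem} applies just as well to the unpermuted Wronski matrix of $(\alpha^1,\dots,\alpha^s)$ with rows $s-1,s$ and columns $\varsigma,s$, since it is stated for arbitrary row and column pairs.
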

\begin{proof}
We substitute the expressions for $w^{s,s}$ and $w^{s,\varsigma}$ given by Lemma~\ref{LemmaOnSolutionsOfPotEqsforLPEs} 
into~${\rm DT}[w^{s,s}](w^{s,\varsigma})$, simplify the obtain result and then apply Lemma~\ref{LemmaOnMatrixEqulityForCrumTheorem}.
\end{proof}

The multiple Darboux transformation constructed with a tuple of linearly independent functions~$(\psi^1,\dots,\psi^s)$ 
is denoted by ${\rm DT}[\psi^1,\dots,\psi^s]$, i.e., 
\[
{\rm DT}[\psi^1,\dots,\psi^s](w)=\frac{W(\psi^1,\dots,\psi^s,w)}{W(\psi^1,\dots,\psi^s)}.
\]

As a result of the above considerations we obtain the following theorem:

\begin{theorem}\label{TheoremOnDualMultipleDarbouxTrans}
Let $\,\alpha^1$, \dots, $\alpha^p$ be linearly independent solutions of the equation~$\lceil\alpha\rfloor=\lceil u\rfloor^*$ 
adjoint to the equation~$\lceil u\rfloor$ and ${\rm DT}[\alpha^1,\dots,\alpha^p]\,\lceil\alpha\rfloor=\lceil\beta^p\rfloor$.
Then the adjoint equation $\lceil\beta^p\rfloor^*=\lceil w^p\rfloor$ to $\lceil\beta^p\rfloor$ is 
the $p$-level potential equation of~$\lceil u\rfloor$, constructed from the characteristic tuple~$(\alpha^1,\dots,\alpha^p)$, 
the functions 
\[
w^{p,\varsigma}=(-1)^{\varsigma-1}
\frac{W(\alpha^1,\dots,\lefteqn{\smash{\diagdown}}\alpha^\varsigma,\dots,\alpha^p)}{W(\alpha^1,\dots,\alpha^p)}, 
\quad \varsigma=1,\dots,p,
\]
are its linearly independent solutions and ${\rm DT}[w^{p,1},\dots,w^{p,p}]\,\lceil w^p\rfloor=\lceil u\rfloor$, i.e.,
\begin{gather*}\begin{array}{rcl}
u_t=Au_{xx}+Bu_x+Cu & \xleftarrow{{\rm DT}[w^{p,1},\dots,w^{p,p}]} & w^p_t=Aw^p_{xx}+B^pw^p_x+C^pw^p
\\[-.4ex]
&\Updownarrow&
\\[.4ex]
\alpha_t+(A\alpha)_{xx}-(B\alpha)_x+C\alpha=0 & \xrightarrow{{\rm DT}[\alpha^1,\dots,\alpha^p]} & 
\beta^p_t+(A\beta^p)_{xx}-(B^p\beta^p)_x+C^p\beta^p=0.
\end{array}
\end{gather*}
\end{theorem}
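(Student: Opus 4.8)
The plan is to read the theorem as the consolidation of the $p$-step iteration procedure carried out above, recognizing that each of its two ``columns'' (the direct one in the $w$'s and the adjoint one in the $\beta$'s) is a composition of single Darboux transformations that collapses to a single multiple Darboux transformation. First I would record what the iteration has already produced: for each $s=1,\dots,p$ the direct transformation ${\rm DT}[w^{s,s}]$ maps $\lceil w^s\rfloor$ to $\lceil w^{s-1}\rfloor$, the dual transformation ${\rm DT}[\beta^{s-1,s}]$ maps $\lceil\beta^{s-1}\rfloor$ to $\lceil\beta^s\rfloor$, and by Lemma~\ref{LemmaOnDualDarbouxTrans} these two are mutually dual at every level. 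Composing the vertical dualities over $s$ immediately yields the double arrow of the diagram, so the only substantive task is to verify that the two horizontal arrows are the stated multiple Darboux transformations.

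For the adjoint (lower) arrow I would show that the composition ${\rm DT}[\beta^{p-1,p}]\circ\cdots\circ{\rm DT}[\beta^{0,1}]$ acting on $\lceil\alpha\rfloor$ coincides with the single multiple transformation ${\rm DT}[\alpha^1,\dots,\alpha^p]$ and hence produces $\lceil\beta^p\rfloor$. This is precisely the content of the Crum theorem: the identity $\beta^{s,\sigma}=W(\alpha^1,\dots,\alpha^s,\alpha^\sigma)/W(\alpha^1,\dots,\alpha^s)$ already derived in Step~$s$ shows that the iterated single transformations telescope into the Wronskian ratio defining ${\rm DT}[\alpha^1,\dots,\alpha^p]$, so the two agree on every function and the target equations are identical.

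For the direct (upper) arrow I would argue by induction on $p$ that ${\rm DT}[w^{p,1},\dots,w^{p,p}]$ equals the composition ${\rm DT}[w^{1,1}]\circ\cdots\circ{\rm DT}[w^{p,p}]$, and therefore maps $\lceil w^p\rfloor$ to $\lceil u\rfloor$. The inductive step rests on Lemma~\ref{LemmaOnDTOfWsForLPEs2}: since ${\rm DT}[w^{p,p}](w^{p,\varsigma})=w^{p-1,\varsigma}$ for $\varsigma<p$, applying ${\rm DT}[w^{p,p}]$ first carries the seed tuple $(w^{p,1},\dots,w^{p,p-1})$ to $(w^{p-1,1},\dots,w^{p-1,p-1})$, so the factorization identity for multiple Darboux transformations lets me peel off the last single factor and invoke the induction hypothesis for $p-1$. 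That the $w^{p,\varsigma}$ are linearly independent solutions of $\lceil w^p\rfloor$ is exactly Lemma~\ref{LemmaOnSolutionsOfPotEqsforLPEs}, while the fact that $\lceil w^p\rfloor$ is the genuine $p$-level potential equation attached to the tuple $(\alpha^1,\dots,\alpha^p)$ is built into the construction of the iteration and, together with Corollary~\ref{CorollaryOnIndependenceOfPotsOnOrderOfCharsForLPEs}, guarantees that it depends on the characteristic subspace only.

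The main obstacle will be the bookkeeping that identifies a composition of single Darboux transformations with one multiple transformation, that is, the permutability and factorization content of the Crum theorem, including the correct tracking of Wronskian ratios and of the alternating signs in $w^{p,\varsigma}$. All the algebra required for this is, however, concentrated in the minor identity of Lemma~\ref{LemmaOnMatrixEqulityForCrumTheorem} and in the explicit Wronskian formulas already established for $\beta^{s,\sigma}$ and $w^{s,\varsigma}$; once those are in place, the remaining verification is a careful but routine telescoping rather than a genuinely new computation.
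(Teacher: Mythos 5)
Your proposal is correct and follows essentially the same route as the paper: Theorem~\ref{TheoremOnDualMultipleDarbouxTrans} is stated there as a direct consolidation of the preceding iteration procedure (``As a result of the above considerations\dots''), resting on exactly the ingredients you cite --- the levelwise duality of Lemma~\ref{LemmaOnDualDarbouxTrans}, the Crum-theorem telescoping of the $\beta^{s,\sigma}$ into ${\rm DT}[\alpha^1,\dots,\alpha^p]$, Lemma~\ref{LemmaOnSolutionsOfPotEqsforLPEs} for the linear independence of the $w^{p,\varsigma}$, and Lemma~\ref{LemmaOnDTOfWsForLPEs2} together with the minor identity of Lemma~\ref{LemmaOnMatrixEqulityForCrumTheorem} for factoring ${\rm DT}[w^{p,1},\dots,w^{p,p}]$ into the composition of the single transformations ${\rm DT}[w^{s,s}]$. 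Your explicit induction on $p$ for the upper arrow simply writes out what the paper leaves implicit.
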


In view of reflexiveness in the duality of linear equations, 
Theorem~\ref{TheoremOnDualMultipleDarbouxTrans} may also be reformulated, analogously to Lemma~\ref{LemmaOnDualDarbouxTrans}, 
in terms of characteristics of conservation laws. 

\begin{corollary}\label{CorollaryOnDualMultipleDarbouxTrans}
If $\,\psi^1$, \dots, $\psi^p$ are linearly independent solutions of the equation~$\widehat{\mathcal L}$ from class~\eqref{EqGenLPE}
and ${\rm DT}[\psi^1,\dots,\psi^p](\widehat{\mathcal L}\,)=\mathcal L$ 
then $\mathcal L$ belongs to the class~\eqref{EqGenLPE}, 
\[
\alpha^\varsigma=(-1)^{\varsigma-1}
\frac{W(\psi^1,\dots,\lefteqn{\smash{\diagdown}}\psi^\varsigma,\dots,\psi^p)}{W(\psi^1,\dots,\psi^p)}\in\Ch_{\rm f}(\mathcal L), 
\quad \varsigma=1,\dots,p,
\]
and 
${\rm DT}[\alpha^1,\dots,\alpha^p]\colon\Ch_{\rm f}(\mathcal L)\to\Ch_{\rm f}(\widehat{\mathcal L}\,)$. 
The tuples $\,(\psi^1,\dots,\psi^p)$ and $\,(\alpha^1,\dots,\alpha^p)$ will be called dual to each other.
\end{corollary}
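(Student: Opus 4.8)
The plan is to read Corollary~\ref{CorollaryOnDualMultipleDarbouxTrans} as the ``characteristic'' reformulation of Theorem~\ref{TheoremOnDualMultipleDarbouxTrans}, exploiting the reflexiveness of the duality of linear equations, $(\mathcal L^*)^*=\mathcal L$, together with the identification of $\Ch_{\rm f}(\mathcal L)$ with the solution set of $\mathcal L^*$ furnished by Theorem~\ref{TheoremLocalCLsLPEs}. Concretely, I would match the data of the corollary with the top row of the diagram in Theorem~\ref{TheoremOnDualMultipleDarbouxTrans} by setting $\lceil u\rfloor=\mathcal L$, $\lceil w^p\rfloor=\widehat{\mathcal L}$ and $w^{p,\varsigma}=\psi^\varsigma$, so that the hypothesis ${\rm DT}[\psi^1,\dots,\psi^p](\widehat{\mathcal L}\,)=\mathcal L$ is precisely the relation ${\rm DT}[w^{p,1},\dots,w^{p,p}]\lceil w^p\rfloor=\lceil u\rfloor$ appearing there.

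First I would check that $\mathcal L$ indeed belongs to class~\eqref{EqGenLPE}. By the Crum theorem the multiple Darboux transformation ${\rm DT}[\psi^1,\dots,\psi^p]$ factors as a composition of $p$ single Darboux transformations built successively from the functions $\psi^\varsigma$, and each such single transformation carries an equation $w_t=Aw_{xx}+B^{s}w_x+C^{s}w$ of the class to another equation of the same form with unchanged leading coefficient $A$ (with $B$ decreased by $A_x$ and $C$ modified exactly as in Steps~$1$--$p$ of the construction). Hence the equation $u_t=Au_{xx}+Bu_x+Cu$ obtained for $\mathcal L$ lies in the class.

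The core of the argument is the \emph{involutivity} of the Wronskian correspondence between the two tuples. Defining $\alpha^\varsigma$ from the $\psi$'s by the displayed formula, I must show that the $\psi$'s are recovered from the $\alpha$'s by the same formula, i.e.\ that the map $(\psi^1,\dots,\psi^p)\mapsto(\alpha^1,\dots,\alpha^p)$ is an involution up to sign. This is a pure identity about adjugates of Wronski matrices, and I would derive it from Lemma~\ref{LemmaOnMatrixEqulityForCrumTheorem} together with the biorthogonality relations established in the proof of Lemma~\ref{LemmaOnSolutionsOfPotEqsforLPEs}, where the product of the Wronski matrix of $(\alpha^1,\dots,\alpha^p)$ with the transposed Wronski matrix of $(w^{p,1},\dots,w^{p,p})$ is shown to be anti-triangular with anti-diagonal entries $(-1)^{p-\varsigma}$ and determinant $1$. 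Once the correspondence is known to be involutive, Theorem~\ref{TheoremOnDualMultipleDarbouxTrans} applied to $\mathcal L$ with the characteristic tuple $(\alpha^1,\dots,\alpha^p)$ reproduces exactly $\widehat{\mathcal L}$ as its $p$-level potential equation and the $\psi^\varsigma$ as the solutions $w^{p,\varsigma}$; in particular $\alpha^\varsigma$ solves $\mathcal L^*$, whence $\alpha^\varsigma\in\Ch_{\rm f}(\mathcal L)$ by Theorem~\ref{TheoremLocalCLsLPEs}.

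Finally, the mapping property ${\rm DT}[\alpha^1,\dots,\alpha^p]\colon\Ch_{\rm f}(\mathcal L)\to\Ch_{\rm f}(\widehat{\mathcal L}\,)$ is read off from the bottom row of the theorem's diagram: identifying characteristics with adjoint solutions, it is the iterate of the dual Darboux transformation of Lemma~\ref{LemmaOnDualDarbouxTransInTermsOfChars} along the chain $\lceil\beta^0\rfloor\to\dots\to\lceil\beta^p\rfloor$, which is well defined and one-to-one again by the Crum theorem. The symmetry of the whole picture under interchanging the two dual tuples then yields the closing assertion that $(\psi^1,\dots,\psi^p)$ and $(\alpha^1,\dots,\alpha^p)$ are dual to each other. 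The main obstacle I anticipate is the involution identity: it is the only genuinely new computation here, and care will be needed with signs and with the fact that the biorthogonality between the two tuples is anti-diagonal rather than diagonal.
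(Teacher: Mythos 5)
Your overall reading of the corollary --- as Theorem~\ref{TheoremOnDualMultipleDarbouxTrans} restated through the reflexiveness $(\mathcal L^*)^*=\mathcal L$ and the identification of $\Ch_{\rm f}(\mathcal L)$ with the solution set of~$\mathcal L^*$ from Theorem~\ref{TheoremLocalCLsLPEs} --- is the right one, and your argument that $\mathcal L$ lies in class~\eqref{EqGenLPE} is fine. The problem is the way you deploy the theorem: it is circular. You match the corollary's data to the \emph{top row} of the theorem's diagram ($\lceil u\rfloor=\mathcal L$, $\lceil w^p\rfloor=\widehat{\mathcal L}$, $w^{p,\varsigma}=\psi^\varsigma$) and then invoke ``Theorem~\ref{TheoremOnDualMultipleDarbouxTrans} applied to $\mathcal L$ with the characteristic tuple $(\alpha^1,\dots,\alpha^p)$'', listing ``$\alpha^\varsigma$ solves $\mathcal L^*$'' among its consequences. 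But the hypothesis of that theorem is precisely that the tuple fed into it consists of linearly independent solutions of the adjoint of the equation it is applied to; you cannot apply it to $\mathcal L$ with the tuple $(\alpha^1,\dots,\alpha^p)$ before independently establishing that each $\alpha^\varsigma$ solves $\mathcal L^*$ --- and that is the principal assertion of the corollary, not a by-product of the application. The Wronskian involution you propose to prove first (which is indeed valid up to sign and could be extracted from Lemma~\ref{LemmaOnMatrixEqulityForCrumTheorem} and the biorthogonality relations in the proof of Lemma~\ref{LemmaOnSolutionsOfPotEqsforLPEs}) cannot repair this: it is a purely algebraic identity about Wronski matrices and carries no information about which adjoint PDE the functions $\alpha^\varsigma$ satisfy, so it does not close the gap.

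The non-circular route --- and what the paper's phrase ``in view of reflexiveness in the duality of linear equations'' actually encodes --- is to transpose the roles in the diagram: apply Theorem~\ref{TheoremOnDualMultipleDarbouxTrans} not to $\mathcal L$ but to $\lceil u\rfloor:=\widehat{\mathcal L}^*$, rewritten as an evolution equation (it belongs to class~\eqref{EqGenLPE}, its coefficient of the second $x$-derivative being the negative of that of $\widehat{\mathcal L}$, hence nonzero). Then $\lceil u\rfloor^*=(\widehat{\mathcal L}^*)^*=\widehat{\mathcal L}$, so the given $\psi^\varsigma$ are legitimate inputs --- they play the role of the theorem's $\alpha$'s --- and $\lceil\beta^p\rfloor={\rm DT}[\psi^1,\dots,\psi^p]\,\widehat{\mathcal L}=\mathcal L$ is exactly the corollary's hypothesis. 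The theorem's conclusions, read verbatim under this substitution, state that the Wronskian combinations (your $\alpha^\varsigma$, now playing the theorem's $w^{p,\varsigma}$) are linearly independent solutions of $\lceil\beta^p\rfloor^*=\mathcal L^*$ and that ${\rm DT}[\alpha^1,\dots,\alpha^p]$ maps $\mathcal L^*$ to $\widehat{\mathcal L}^*$. Combined with Theorem~\ref{TheoremLocalCLsLPEs}, this yields $\alpha^\varsigma\in\Ch_{\rm f}(\mathcal L)$ and ${\rm DT}[\alpha^1,\dots,\alpha^p]\colon\Ch_{\rm f}(\mathcal L)\to\Ch_{\rm f}(\widehat{\mathcal L}\,)$ with no involution identity needed anywhere; the closing ``duality'' statement then follows because the same argument can be run in the opposite direction, exactly as in the passage from Lemma~\ref{LemmaOnDualDarbouxTrans} to Lemma~\ref{LemmaOnDualDarbouxTransInTermsOfChars}.
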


\begin{note}
A statement on a connection of the same kind between the equations from two arbitrary steps of the iteration procedure 
can be formulated similarly to Theorem~\ref{TheoremOnDualMultipleDarbouxTrans}. 
It is sufficient to assume that the step of lower number is the start of the iteration 
and the step of greater number is the end of the iteration. In particular, 
\[
{\rm DT}[w^{s,\sigma},\dots,w^{s,s}]\,\lceil w^s\rfloor=\lceil w^{\sigma-1}\rfloor\quad\Leftrightarrow\quad
{\rm DT}[\beta^{\sigma-1,\sigma},\dots,\beta^{\sigma-1,s}]\,\lceil \beta^{\sigma-1}\rfloor=\lceil \beta^s\rfloor.
\]
\end{note}

\begin{note}\label{NoteOnMultipleDarbouxTransAsLinearMappingOfSolutionSpaceOfLPEs}
Analogously to simple Darboux transformations (see Note~\ref{NoteOnDarbouxTransAsLinearMappingOfSolutionSpaceOfLPEs}),
for any linearly independent solutions $w^{p,1}$, \dots, $w^{p,p}$ of~$\lceil w^p\rfloor$ 
the multiple Darboux transformation ${\rm DT}[w^{p,1},\dots,w^{p,p}]$
is a linear mapping from the solution space of~$\lceil w^p\rfloor$ into the solution space of~$\lceil u\rfloor$.
The kernel of this mapping coincides with the linear span~$\langle w^{p,1},\dots,w^{p,p}\rangle$. 
Its image is the whole solution space of~$\lceil u\rfloor$ since 
it is the composition of the simple Darboux transformations ${\rm DT}[w^{s,s}]$. 
For any~$s$ ${\rm DT}[w^{s,s}]$ is a linear mapping from the solution space of~$\lceil w^s\rfloor$ 
onto the solution space of~$\lceil w^{s-1}\rfloor$, where $w^0:=u$.
Therefore, ${\rm DT}[w^{p,1},\dots,w^{p,p}]$ generates a one-to-one linear mapping between 
the solution space of~$\lceil w^p\rfloor$, factorized by the subspace~$\langle w^{p,1},\dots,w^{p,p}\rangle$, 
and the solution space of~$\lceil u\rfloor$. 
\end{note}

The equations~$\lceil u\rfloor$ and~$\lceil w^p\rfloor$ are differential consequences 
of the $p$-level potential system~\eqref{EqMinIteratedGenPotSysOfLPEs}.
There is a one-to-one correspondence between solutions of the potential system and the equation~$\lceil w^p\rfloor$
due to the projection $(f^0,\dots,f^p)\to f^p$ and the transformation~$w^p=f^p/\beta^{p-1,p}$ in one direction
and due to the inverse transformation $f^p=\beta^{p-1,p}w^p$ and the backward recursive formula $f^{s-1}=f^s_x/H^s$ 
in the other. 
The correspondence between solutions of the initial equation~$\lceil u\rfloor$ and the potential system 
is one-to-one only up to arbitrary linear combinations of $p$ fixed solutions of system~\eqref{EqMinIteratedGenPotSysOfLPEs}. 
This follows, e.g., from the fact that every $v^s$ is determined via~$u$ up to an arbitrary constant summand and 
$(f^1,\dots,f^p)$ is the product of~$(v^1,\dots,v^p)$ and a matrix-function with coefficients depending on~$t$ and~$x$.
Taking into account the above arguments and Note~\ref{NoteOnCorrespodenceOfCharSubspaceForLPEs}, 
we will call the equation~$\lceil w^p\rfloor$ the \emph{modified potential equation, 
associated with the equation~$\lceil u\rfloor$ and the characteristics subspace 
$\,\langle\alpha^1,\dots,\alpha^p\rangle$}.

A system of the general form~\eqref{EqMinIteratedGenPotSysOfLPEs} with $H^1\dots H^pA\ne0$ is a $p$-level potential system 
of an equation from class~\eqref{EqGenLPE} only under special restrictions on the coefficients. 
Namely, the following conditions are necessary and sufficient:
\[
H^p_t+G^p_x=0,\qquad 
H^s_t=(AH^s)_{xx}-\left(\frac{G^{s+1}-AH^{s+1}_x}{H^{s+1}}H^s\right)_x,\quad s<p.
\]
Here the coefficients $G^s$, $s<p$, are calculated from $G^p$ and $H^\sigma$, $\sigma\geqslant s$, 
by the recursive formula 
\[
G^s=\frac{G^{s+1}-(AH^{s+1})_x}{H^{s+1}}H^s-AH^s_x.
\]  
Therefore, for any fixed~$s<p$ the function~$H^s$ should satisfy the Fokker--Planck equations with 
diffusion coefficient~$A$ and drift coefficient expressed via $A$, $G^p$ and $H^\sigma$, $\sigma>s$.

\section{General potential symmetries}\label{SectionOnGeneralPotSymsOfLPEs}

Consider a system of the general form~\eqref{EqMinIteratedGenPotSysOfLPEs} with $H^1\dots H^pA\ne0$, 
which is a $p$-level potential system of an equation from class~\eqref{EqGenLPE}. 
Such a system possesses $s-1$ algebraically independent nontrivial differential consequences 
$f^s_t=H^sAf^{s-1}_x-G^sf^{s-1}$, $s=1,\dots,p-1$ having, as differential equations, order one. 
System~\eqref{EqMinIteratedGenPotSysOfLPEs} implies the second-order partial differential equation 
\begin{equation}\label{EqPthLevelPotEqInFTermsForLPE}
f^p_t=Af^p_{xx}-\frac{G^p+AH^p_x}{H^p}f^p_x
\end{equation}
with respect to only the function~$f^p$.

\begin{definition}
The Lie invariance algebra of a $p$-order (or $p$-level) potential system of an equation~$\mathcal L$ 
from class~\eqref{EqGenLPE} is called a \emph{$p$-order potential symmetry algebra} of~$\mathcal L$. 
Any operator from this algebra is called a \emph{$p$-order potential symmetry operator} of~$\mathcal L$. 
\end{definition}

\begin{lemma}\label{LemmaOn1to1CorrespondenceBetweenLieSymsOfGenPotSysAndPthLevelPotEqOfLPE}
Let system~\eqref{EqMinIteratedGenPotSysOfLPEs} be a $p$-level potential system of an equation 
from class~\eqref{EqGenLPE}.
Then the maximal Lie invariance algebras of system~\eqref{EqMinIteratedGenPotSysOfLPEs} 
and equation~\eqref{EqPthLevelPotEqInFTermsForLPE} are isomorphic. 
Namely, for any Lie invariance operator $Q=\tau\p_t+\xi\p_x+\eta\p_u+\theta^s\p_{f^s}$ 
of system~\eqref{EqMinIteratedGenPotSysOfLPEs} 
its projection $Q'=\tau\p_t+\xi\p_x+\theta^p\p_{f^p}$ to the variables $(t,x,f^p)$ 
is a Lie invariance operator of equation~\eqref{EqPthLevelPotEqInFTermsForLPE}. 
The~coefficient~$\theta^{s-1}$ of~$Q$, where $\theta^0:=\eta$, 
is expressed via coefficients of the $(p-s+1)$-th prolongation of the operator~$Q'$ with respect to~$x$
in a backward recursive way in accordance with the equations $f^{\sigma-1}=f^\sigma_x/H^\sigma$. 
\end{lemma}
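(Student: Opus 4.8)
The plan is to establish the isomorphism between the two maximal Lie invariance algebras by the same direct-computation strategy used in the proof of Lemma~\ref{LemmaOn1to1CorrespondenceBetweenLieSymsOfPotSysAndPotEqOfLPE} for the single-characteristic case, but now organized around the backward recursion $f^{\sigma-1}=f^\sigma_x/H^\sigma$. First I would write out the infinitesimal invariance criterion for a Lie symmetry operator $Q=\tau\p_t+\xi\p_x+\eta\p_u+\theta^s\p_{f^s}$ of system~\eqref{EqMinIteratedGenPotSysOfLPEs}, whose coefficients depend on $t$, $x$, $u$ and $f^\sigma$. Applying the invariance condition to the equations $f^s_x=H^sf^{s-1}$ and splitting with respect to the unconstrained highest derivatives should immediately yield, as in the simplest case, the projectability relations $\tau_u=\tau_x=\tau_{f^s}=0$, $\xi_u=\xi_{f^s}=0$ and $\theta^s_u=0$, so that $Q$ projects cleanly to $Q'=\tau\p_t+\xi\p_x+\theta^p\p_{f^p}$ on the space $(t,x,f^p)$.

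Next I would verify that the determining equations of system~\eqref{EqMinIteratedGenPotSysOfLPEs}, once the projectability relations are imposed, separate into two groups: those that involve only $\tau$, $\xi$ and $\theta^p$ and coincide exactly with the determining equations of the single second-order equation~\eqref{EqPthLevelPotEqInFTermsForLPE}, and those remaining equations that merely \emph{define} the lower coefficients $\theta^{p-1},\dots,\theta^1,\eta$. The key structural fact I would exploit is that each defining relation for $f^{\sigma-1}$ comes from differentiating the relation $f^{\sigma-1}=f^\sigma_x/H^\sigma$ and comparing infinitesimals; since $H^\sigma$ depends only on $t$ and $x$, the induced transformation of $f^{\sigma-1}$ is obtained from that of $f^\sigma$ by a first $x$-prolongation divided by $H^\sigma$. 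Iterating this $p-s+1$ times expresses $\theta^{s-1}$ through the coefficients of the $(p-s+1)$-th prolongation of $Q'$ in $x$, exactly as claimed; setting $\theta^0:=\eta$ handles the final descent to the equation $\lceil u\rfloor$, paralleling the formula~\eqref{EqDetEqForSymOpsOfGenFormOfPotSysOfLPEforEta} of the $p=1$ case. Because the lower coefficients are uniquely determined by $Q'$ and conversely any solution $Q'$ of the determining system of~\eqref{EqPthLevelPotEqInFTermsForLPE} lifts uniquely to a symmetry $Q$ of~\eqref{EqMinIteratedGenPotSysOfLPEs} via these recursive formulas, the projection $Q\mapsto Q'$ is a bijection; its linearity and compatibility with the Lie bracket (inherited from the projection being an algebra homomorphism on vector fields that agree on the shared variables) give the isomorphism.

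The main obstacle I anticipate is the combinatorial bookkeeping in the separation step: showing rigorously that \emph{all} the determining equations coming from the second (time-evolution) equations of the potential system and its differential consequences $f^s_t=H^sAf^{s-1}_x-G^sf^{s-1}$ are consequences of the determining equations of~\eqref{EqPthLevelPotEqInFTermsForLPE} together with the recursive definitions of the $\theta^{s-1}$. This requires using the compatibility identities $H^p_t+G^p_x=0$ and the recursive relations between $G^s$ and $H^s$ established at the end of Section~\ref{SectionOnGenPotSysForLPEs}, so that the time-derivative invariance conditions do not impose anything genuinely new beyond what the purely $x$-recursion already forces. I would keep this manageable by treating $Q'$ as given, defining the $\theta^{s-1}$ by the stated prolongation formulas, and then \emph{checking} that the so-defined $Q$ satisfies the full invariance criterion, rather than solving the coupled system from scratch; this converts the hard existence-and-uniqueness argument into a more routine (if lengthy) verification that leans on the already-proved structural identities for $H^s$ and $G^s$.
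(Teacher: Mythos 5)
Your overall route coincides with the paper's: derive the determining equations by direct computation, establish that any symmetry of system~\eqref{EqMinIteratedGenPotSysOfLPEs} is projectable, read the lower coefficients off a backward recursion along $f^{\sigma-1}=f^\sigma_x/H^\sigma$, and invert the correspondence by lifting a symmetry $Q'$ of~\eqref{EqPthLevelPotEqInFTermsForLPE} through the same recursion. The genuine gap is in the projectability step, which you dismiss as following ``immediately'' from splitting. First, your list of relations ($\tau_u=\tau_x=\tau_{f^s}=0$, $\xi_u=\xi_{f^s}=0$, $\theta^s_u=0$) omits the crucial constraints $\theta^s_{f^\sigma}=0$ for $\sigma<s$; without $\theta^p_{f^\sigma}=0$, $\sigma<p$, the projection $Q'=\tau\p_t+\xi\p_x+\theta^p\p_{f^p}$ is not even a well-defined vector field on $(t,x,f^p)$-space, and the invariance condition of the last equation of~\eqref{EqMinIteratedGenPotSysOfLPEs} cannot be identified with the determining system of the scalar equation~\eqref{EqPthLevelPotEqInFTermsForLPE}. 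Second, none of these relations comes from a straightforward split: on the manifold of system~\eqref{EqMinIteratedGenPotSysOfLPEs} the only jet variables independent of the arguments $(t,x,u,f^1,\dots,f^p)$ of the coefficients are the pure $x$-derivatives of~$u$, and splitting, say, the invariance condition of $f^1_x=H^1u$ with respect to $u_x$ yields only mixed relations of the type $\theta^1_u=H^1A\bigl(\tau_x+\tau_{f^\sigma}H^\sigma f^{\sigma-1}\bigr)+H^1u\,\xi_u$, which do not separate into the clean conditions you state. This is precisely why the paper's proof needs two nontrivial devices: it first adjoins the differential consequences $f^s_t=H^sAf^{s-1}_x-G^sf^{s-1}$, $s<p$, so that additional splittings (in particular with respect to $u_t$) become available, and it then iterates a ``contraction'' of the system --- the subsystem in $f^1,\dots,f^p$ is itself a potential system of order $p-1$ in which $f^1$ plays the role of $u$ --- to propagate the constraints $\theta^s_{f^\sigma}=0$ ($\sigma<s$), $\xi_{f^s}=0$, $\theta^s_{f^sf^s}=0$ up the hierarchy. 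Until you supply an argument of this kind, the forward direction of the lemma (every symmetry of the system projects) remains unproved.

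A smaller point: the ``main obstacle'' you anticipate --- verifying that the invariance conditions coming from the time-evolution equations $f^s_t=H^sAf^{s-1}_x-G^sf^{s-1}$, $s<p$, impose nothing new --- is not actually an obstacle. These equations are differential consequences of~\eqref{EqMinIteratedGenPotSysOfLPEs}, and invariance conditions of differential consequences hold automatically on the system manifold once the conditions for the system itself do; the paper uses them only as an auxiliary tool at the start of the computation and explicitly discards them afterwards (its ``third trick''). Your effort is therefore mislocated: the hard part of this lemma is the structural constraints on the dependence of the coefficients on the potentials, not the compatibility identities relating $H^s$ and $G^s$.
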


\begin{proof}
Similarly to Lemma~\ref{LemmaOn1to1CorrespondenceBetweenLieSymsOfPotSysAndPotEqOfLPE}, 
the one-to-one correspondence between the solutions of the system and the equation gives 
an empiric argument in favour of the lemma. 
Since this argument is not sufficient, the best way again is to make direct calculations. 
An application of the infinitesimal invariance criterion~\cite{Olver1986,Ovsiannikov1982} 
to system~\eqref{EqMinIteratedGenPotSysOfLPEs} has specific features due to the arbitrariness of~$p$ 
and the recursive form of the equations $f^\sigma_x=H^\sigma f^{\sigma-1}$ and needs a usage of tricks.
That is why we provide some explanations on the derivation of the determining equations 
of the coefficients of the Lie invariance operators. 

The first trick is to consider also the determining equations obtained via the application 
of the infinitesimal invariance criterion to the nontrivial first-order differential consequences 
of~\eqref{EqMinIteratedGenPotSysOfLPEs}. This allows us to simplify calculations at an early stage. 
The trick is admissible since the system extended by differential consequences possesses 
the same Lie invariance algebra as the initial system. 
(This is why usually differential consequences are used only under confining to the system manifold.)

Let $Q=\tau\p_t+\xi\p_x+\eta\p_u+\theta^s\p_{f^s}$ be a Lie invariance operator 
of system~\eqref{EqMinIteratedGenPotSysOfLPEs}. 
Here the coefficients $\tau$, $\xi$, $\eta$ and $\theta^s$ are smooth functions of the variables $t$, $x$, $u$ and $f^s$. 
Splitting the infinitesimal invariance conditions of the extended system with respect to the derivatives 
$u_t$ and $u_x$ (there are no other possibilities for the initial split), we obtain, in particular, the equations 
$
\tau_u=\xi_u=\theta^s_u=0,\ \eta_{uu}=0,\ \tau_x+\tau_{f^\sigma}H^\sigma f^{\sigma-1}=0.
$
They allow us to further split with respect to~$u$. 
Thus, splitting the already derived condition $\tau_x+\tau_{f^\sigma}H^\sigma f^{\sigma-1}=0$, 
we have $\tau_{f^1}=0$ and, therefore, allows to also split with respect to~$f^1$ in this condition. 
Iterating the splitting of this condition with respect to the dependent variables, we derive $\tau_x=\tau_{f^\sigma}=0$.
Other infinitesimal invariance conditions imply, under splitting with respect to~$u$, 
the equations $\xi_{f^1}=0$ and $\theta^1_{f^1\!f^1}=0$.

Since $\tau_u=\xi_u=\theta^s_u=0$ and $u$ appears only in the equations corresponding to the value $s=1$, 
these two equations with $s=1$ have no influence on the further splitting in the infinitesimal invariance conditions for 
the other equations. We can consider the subsystem formed by the equations with $s\geqslant2$ (this is the second trick). 
Here~$f^1$ and~$f^2$ play the same role as $f^0=u$ and $f^1$ in the whole system. 
Therefore, analogously to the whole system, for this subsystem we have the conditions 
$\theta^\sigma_{f^1}=0$, $\sigma\leqslant2$, $\xi_{f^2}=0$ and $\theta^2_{f^2\!f^2}=0$.
As a result of iterating the procedure of system contraction, we obtain 
$\theta^s_{f^\sigma}=0$, $\sigma<s$, $\xi_{f^s}=0$ and $\theta^s_{f^s\!f^s}=0$.

The third trick is that we can neglect the infinitesimal invariance conditions of the differential consequences 
from this moment on. Under the derived restrictions
the infinitesimal invariance conditions for the equations $f^{\sigma-1}=f^\sigma_x/H^\sigma$ 
imply the backward recursive formula
\[
\theta^{s-1}=\left(\theta^s_{f^s}-\xi_x-\tau\frac{H^s_t}{H^s}-\xi\frac{H^s_x}{H^s}\right)f^{s-1}
+\sum_{\sigma>s}\frac{H^\sigma}{H^s}\theta^s_{f^\sigma}f^{\sigma-1}+\frac{\theta^s_x}{H^s}
\]
with the start in $s=p$.
Therefore, $\theta^s_{f^\sigma\!f^\varsigma}=0$ for any $s$, $\sigma$ and $\varsigma$ 
in view of $\theta^p_{f^\sigma}=0$, $\sigma<p$ and $\theta^p_{f^p\!f^p}=0$.
Finally, the infinitesimal invariance conditions for the last equation of system~\eqref{EqMinIteratedGenPotSysOfLPEs} 
result in the equations
\begin{gather*}
(2\xi_x-\tau_t)A=\tau A_t+\xi A_x,\quad
\theta^p_t=A\theta^p_{xx}-\frac{G^p+AH^p_x}{H^p}\theta_x,
\\
\xi_t+(2\theta^p_{xf^p}-\xi_{xx})A=
\tau\left(\frac{G^p+AH^p_x}{H^p}\right)_t
+\xi\left(\frac{G^p+AH^p_x}{H^p}\right)_x
+(\tau_t-\xi_x)\frac{G^p+AH^p_x}{H^p},
\end{gather*}

The equations $\tau_u=\xi_u=\theta_u=0$, $\tau_{f^s}=\xi_{f^s}=0$ and $\theta^s_{f^\sigma}=0$, $\sigma<s$,
guarantee that the operator~$Q$ is projectable to the variables $(t,x,f^s,\dots,f^p)$ for any~$s$. 
The obtained system of determining equations contains, as a subsystem, the complete system of determining equations for 
Lie symmetry operators of equation~\eqref{EqPthLevelPotEqInFTermsForLPE}. 
Therefore, the projection $Q'=\tau\p_t+\xi\p_x+\theta^p\p_{f^p}$ of~$Q$ to the variables $(t,x,f^p)$ 
is a Lie invariance operator of equation~\eqref{EqPthLevelPotEqInFTermsForLPE}.
And vice versa, for any Lie invariance operator $Q'=\tau\p_t+\xi\p_x+\theta^p\p_{f^p}$ 
of equation~\eqref{EqPthLevelPotEqInFTermsForLPE}
the operator \[Q=Q'+\eta\p_u+\sum_{\sigma<p}\theta^\sigma\p_{f^\sigma},\] 
where the additional coefficients 
$\eta=\theta^0$ and $\theta^\sigma$, $\sigma<p$, are determined by the above backward recursive formula, 
is a Lie symmetry operator of system~\eqref{EqMinIteratedGenPotSysOfLPEs}. 
The formula can be re-written as 
\[
\theta^{s-1}=\frac1{H^s}\left(\theta^{s,x}-\tau\frac{H^s_t}{H^s}f^s_x-\xi\frac{H^s_x}{H^s}f^s_x\right)
\bigg|_{H^sf^{s-1}\rightsquigarrow f^s_x},
\]
where there is no summation over~$s$ (since it is assumed fixed) and 
$\theta^{s,x}$ is the coefficient of $\p_{f^s_x}$ in the first prolongation of the operator~$\tau\p_t+\xi\p_x+\theta^s\p_{f^s}$.
Therefore the coefficient~$\theta^{s-1}$ is expressed via the coefficients of the standard $(p-s+1)$-st prolongation 
of the operator~$Q'$ with respect to~$x$ in accordance with the backward recursive equations 
$f^{\sigma-1}=f^\sigma_x/H^\sigma$. 
\end{proof}

\begin{corollary}\label{CorollaryOnReducibilityOfPotSymsOfLPE}
If $\,\theta^{s-1}_{f^\sigma}=0$ for a fixed value of~$s$ and any $\sigma>s-1$ 
then $\,\theta^{\varsigma-1}_{f^\sigma}=0$ for any $\varsigma\leqslant s$ and any $\sigma>s-1$.
\end{corollary}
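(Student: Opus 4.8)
The plan is to run a downward induction on $\varsigma$, from $\varsigma=s$ down to $\varsigma=1$, driven by the backward recursive formula
\[
\theta^{s'-1}=\left(\theta^{s'}_{f^{s'}}-\xi_x-\tau\frac{H^{s'}_t}{H^{s'}}-\xi\frac{H^{s'}_x}{H^{s'}}\right)f^{s'-1}+\sum_{\rho>s'}\frac{H^\rho}{H^{s'}}\theta^{s'}_{f^\rho}f^{\rho-1}+\frac{\theta^{s'}_x}{H^{s'}}
\]
established in the proof of Lemma~\ref{LemmaOn1to1CorrespondenceBetweenLieSymsOfGenPotSysAndPthLevelPotEqOfLPE}, which expresses the coefficient $\theta^{s'-1}$ through $\theta^{s'}$. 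Before running the induction I would record the structural facts already extracted in that proof: $\tau=\tau(t)$ and $\xi=\xi(t,x)$ are free of $u$ and of all potentials, while each $\theta^{s'}$ satisfies $\theta^{s'}_u=0$ and $\theta^{s'}_{f^\rho f^\mu}=0$. The latter two relations say that every $\theta^{s'}$ is affine in the potentials and independent of $u$, so that each first derivative $\theta^{s'}_{f^\rho}$ is a function of $t$ and $x$ alone. Consequently the bracketed coefficient of $f^{s'-1}$ and every coefficient $H^\rho(H^{s'})^{-1}\theta^{s'}_{f^\rho}$ occurring in the formula carry no potential dependence.

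The base case $\varsigma=s$ is exactly the hypothesis $\theta^{s-1}_{f^\sigma}=0$ for all $\sigma>s-1$. For the inductive step I would assume $\theta^{\varsigma-1}_{f^\sigma}=0$ for every $\sigma\geqslant s$ and apply the recursive formula with $s'=\varsigma-1$ to compute $\theta^{\varsigma-2}_{f^\sigma}$, checking that each of the three contributions vanishes when $\sigma\geqslant s$. The first term $(\cdots)f^{\varsigma-2}$ contributes to $\theta^{\varsigma-2}$ only a dependence on $f^{\varsigma-2}$, and $\varsigma-2<s\leqslant\sigma$. In the sum $\sum_{\rho>\varsigma-1}$ the induction hypothesis annihilates every term with $\rho\geqslant s$, leaving only indices $\varsigma\leqslant\rho\leqslant s-1$, whose potential factors $f^{\rho-1}$ satisfy $\rho-1\leqslant s-2<\sigma$. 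Finally, by the affineness together with the induction hypothesis, $\theta^{\varsigma-1}$ involves only potentials $f^\mu$ with $\mu<s$, so the $x$-derivative $\theta^{\varsigma-1}_x$ still involves only such $f^\mu$ and cannot produce any dependence on $f^\sigma$ with $\sigma\geqslant s$. Hence $\theta^{\varsigma-2}_{f^\sigma}=0$, which completes the step and hence the induction down to $\varsigma=1$, i.e.\ to $\theta^0=\eta$.

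I do not expect a genuine obstacle here, since the argument is essentially an organized bookkeeping of index ranges. The only point requiring care is to justify that the coefficients appearing in the recursive formula are truly free of potential dependence: this is precisely where the affineness $\theta^{s'}_{f^\rho f^\mu}=0$ and the projectability relations $\tau_{f^\rho}=\xi_{f^\rho}=0$ from the lemma must be invoked. Beyond that, the proof reduces to verifying that in every surviving term the remaining potential factor $f^{\rho-1}$ carries an index strictly below $s$, so that differentiation with respect to any $f^\sigma$ with $\sigma\geqslant s$ yields zero.
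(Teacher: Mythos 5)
Your proof is correct and takes essentially the same route as the paper: the paper's own proof observes compactly that, under the hypothesis, each $\theta^{\varsigma-1}$ with $\varsigma<s$ is a coefficient of the standard prolongation with respect to~$x$ of the truncated operator $\tau\p_t+\xi\p_x+\theta^{s-1}\p_{f^{s-1}}$ (computed via the relations $f^{\varsigma-1}=f^\varsigma_x/H^\varsigma$), and such prolongation coefficients cannot acquire dependence on $f^\sigma$, $\sigma>s-1$. Your downward induction through the backward recursive formula, with the bookkeeping of which potentials survive in each of the three terms, is precisely this prolongation argument written out term by term.
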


\begin{proof}
If $\theta^{s-1}_{f^\sigma}=0$, $\sigma>s-1$, then
the coefficient $\theta^{\varsigma-1}$ for any $\varsigma<s$ is expressed via 
the coefficients of the standard $(s-\varsigma+1)$-st prolongation 
of the operator~$\tau\p_t+\xi\p_x+\theta^{s-1}\p_{f^{s-1}}$ with respect to~$x$ ($s$ is fixed!) and 
the coefficients of the prolongation do not depend on $f^\sigma$, $\sigma>s-1$.
\end{proof}

In other words, if $Q=\tau\p_t+\xi\p_x+\eta\p_u+\theta^\varsigma\p_{f^\varsigma}$ is a Lie invariance operator 
of system~\eqref{EqMinIteratedGenPotSysOfLPEs} and $\theta^s_{f^\sigma}=0$ for a fixed~$s$ and any $\sigma>s$ 
then the truncated operator $\check Q=\tau\p_t+\xi\p_x+\eta\p_u+\sum_{\varsigma=1}^s{\theta^\varsigma}\p_{f^\varsigma}$ 
is a Lie invariance operator of the $s$-level potential system 
\[
f^\varsigma_x=H^\varsigma f^{\varsigma-1},\quad \varsigma\leqslant s,\qquad f^s_t=H^sAf^{s-1}_x-G^sf^{s-1}, 
\]
which is a subsystem of system~\eqref{EqMinIteratedGenPotSysOfLPEs} extended by its differential consequences. 
This means that the corresponding potential symmetry of the initial equation is also induced 
by the truncated operator $\check Q$ and, therefore, can be assumed to have order less than~$p$. 

We also need to prove the statement on generalized potential symmetries of equations from class~\eqref{EqGenLPE}, 
which is similar to Lemma~\ref{LemmaOn1to1CorrespondenceBetweenLieSymsOfGenPotSysAndPthLevelPotEqOfLPE}. 
This finally justifies the choice of conserved vectors in the canonical form~\eqref{eqCVofLPEs} 
for the construction of potential systems. 

\begin{lemma}\label{LemmaOn1to1CorrespondenceBetweenGeneralizedSymsOfGenPotSysAndPthLevelPotEqOfLPE}
Let system~\eqref{EqMinIteratedGenPotSysOfLPEs} be a $p$-level potential system of an equation 
from class~\eqref{EqGenLPE}.
Up to the equivalence of generalized symmetries, 
every generalized symmetry operator of system~\eqref{EqMinIteratedGenPotSysOfLPEs} is obtained via 
the $p$-th order prolongation, with respect to only the variable~$x$, of a generalized symmetry operator of 
the corresponding $p$-level modified potential equation~\eqref{EqPthLevelPotEqInFTermsForLPE} and 
expressing the derivatives of~$f^p$ by $f^s$ and derivatives of~$u$ according to system~\eqref{EqMinIteratedGenPotSysOfLPEs}.  
In particular, up to the equivalence of generalized symmetries, 
the coefficients of every generalized symmetry operator of system~\eqref{EqMinIteratedGenPotSysOfLPEs} 
depends at most on $t$, $x$, $f^s$ and derivatives of~$u$ with respect to~$x$ 
and are linear with respect to the dependent variables and the derivatives of~$u$.
\end{lemma}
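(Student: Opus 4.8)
The plan is to transplant the argument of Lemma~\ref{LemmaOn1to1CorrespondenceBetweenLieSymsOfGenPotSysAndPthLevelPotEqOfLPE} from point to generalized symmetries, working throughout modulo the equivalence of generalized symmetries. First I would put an arbitrary generalized symmetry of system~\eqref{EqMinIteratedGenPotSysOfLPEs} into evolutionary form $Q=\eta\partial_u+\theta^s\partial_{f^s}$, which is always possible up to equivalence~\cite{Olver1986}, and then restrict its characteristic to the manifold of the system together with all of its differential consequences. On this manifold equation~\eqref{EqPthLevelPotEqInFTermsForLPE} and its consequences eliminate every $t$-derivative of~$f^p$, while the relations $f^{s-1}=f^s_x/H^s$ express $u=f^0$ and the intermediate potentials $f^1,\dots,f^{p-1}$, with all of their derivatives, through the $x$-derivatives of~$f^p$ alone. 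Hence, up to equivalence, each of $\eta$ and $\theta^s$ may be assumed to be a differential function of $t$, $x$ and the pure $x$-derivatives of~$f^p$ only.

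Next I would extract the determining equations. Applying the infinitesimal invariance criterion to the relations $f^{\sigma-1}=f^\sigma_x/H^\sigma$ and using that each $H^\sigma$ depends only on~$(t,x)$, one gets on the solution manifold the identities $H^s\theta^{s-1}=D_x\theta^s$, that is, the backward recursion $\theta^{s-1}=H^{-s}D_x\theta^s$ terminating at $\eta=\theta^0=H^{-1}D_x\theta^1$. This is exactly the formation of the $p$-th order $x$-prolongation of $\theta^p\partial_{f^p}$ followed by the substitution of the derivatives of~$f^p$ through $f^s$ and derivatives of~$u$ dictated by~\eqref{EqMinIteratedGenPotSysOfLPEs}. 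Since~\eqref{EqPthLevelPotEqInFTermsForLPE} is a differential consequence of the system, its invariance condition is contained in that of the whole system; thus the surviving component~$\theta^p$, already reduced to depend only on $t$, $x$ and the $x$-derivatives of~$f^p$, is the characteristic of a generalized symmetry of the modified potential equation~\eqref{EqPthLevelPotEqInFTermsForLPE}. The converse inclusion is the routine check that, for any such~$\theta^p$, the tuple generated by the recursion satisfies the invariance conditions of the system, paralleling the ``vice versa'' part of Lemma~\ref{LemmaOn1to1CorrespondenceBetweenLieSymsOfGenPotSysAndPthLevelPotEqOfLPE}.

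The ``in particular'' assertion then follows by reading the recursion $f^{s-1}=f^s_x/H^s$ backwards: the $x$-derivatives of~$f^p$ of orders $0,\dots,p-1$ reproduce, invertibly, the potentials $f^p,\dots,f^1$, whereas those of order $\geq p$ reproduce $u=f^0$ and its $x$-derivatives. Substituting this change of variables into the reduced coefficients expresses each of them through $t$, $x$, the undifferentiated potentials~$f^s$ and the $x$-derivatives of~$u$, in agreement with the dual multiple Darboux transformation of Theorem~\ref{TheoremOnDualMultipleDarbouxTrans} and Note~\ref{NoteOnMultipleDarbouxTransAsLinearMappingOfSolutionSpaceOfLPEs}. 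Linearity propagates automatically through the recursion, which only applies~$D_x$ and multiplies by functions of~$(t,x)$, so it is enough to establish it for~$\theta^p$: after the above reduction $\theta^p$ solves the symmetry condition $L^p\theta^p=0$ of the linear equation~\eqref{EqPthLevelPotEqInFTermsForLPE} on its manifold, and splitting this condition with respect to the now unconstrained $x$-jet variables of~$f^p$ annihilates the second $f^p$-derivatives of~$\theta^p$ from the top order downward, exactly as the condition $\eta_{uu}=0$ arose in the Lie case, forcing $\theta^p$ to be affine.

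The hardest part will be this last splitting, that is, proving that the generalized symmetries of the linear parabolic potential equation are affine in~$f^p$ and its derivatives up to equivalence; the recursive decoupling $\theta^s_{f^\sigma}=0$ for $\sigma<s$ requires, as in the Lie case, an iterated contraction of the system that is more delicate for characteristics of arbitrary order. I would also keep in mind the caveat noted for Lemma~\ref{LemmaOn1to1CorrespondenceBetweenLieSymsOfPotSysAndPotEqOfLPE}, that reparameterizing a solution set may spoil projectability; hence all eliminations of $t$-derivatives and of the lower potentials must be carried out consistently within a fixed finite jet order before passing to the symmetry condition.
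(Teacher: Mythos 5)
Your proposal is correct and follows essentially the same route as the paper's proof: normalizing the evolutionary characteristics modulo equivalence, deriving the backward recursion $\theta^{s-1}=(H^s)^{-1}D_x\theta^s$ from the invariance of the equations $f^s_x=H^sf^{s-1}$, identifying $\theta^p$ with a generalized symmetry characteristic of the single equation~\eqref{EqPthLevelPotEqInFTermsForLPE}, and then splitting that determining equation with respect to the highest jet variable from the top order downward to force linearity, which propagates back through the recursion. The only (cosmetic) difference is your choice of normalization coordinates, the pure $x$-derivatives of~$f^p$, whereas the paper works in the equivalent coordinates $t$, $x$, $f^s$ and the $x$-derivatives of~$u$ (setting $u_{-s}:=f^s$); these are related by an invertible change of jet coordinates, so the arguments coincide.
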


\begin{proof}
Suppose that $Q=\eta\p_u+\theta^s\p_{f^s}$ is a generalized symmetry operator of system~\eqref{EqMinIteratedGenPotSysOfLPEs}. 
Here the coefficients $\theta^0:=\eta$ and $\theta^s$ are functions of $t$, $x$ and derivatives of~$u$ and $f^s$. 
Due to system~\eqref{EqMinIteratedGenPotSysOfLPEs}, its differential consequences and the equivalence relation of generalized symmetries, 
we can exclude the derivatives of $f^s$ of nonzero orders and derivatives of~$u$ containing differentiations
with respect to~$t$ from the coefficients of~$Q$. 
Therefore, they are assumed to depend, at most, on $t$, $x$, $f^s$ and derivatives of~$u$ with respect to~$x$. 

We temporarily introduce the notations $u_k:=\p^ku/\p x^k$, $k\geqslant1$, $u_0:=f^0=u$, $u_{-s}:=f^s$, 
$\mathop{\rm ord}\nolimits \theta^\mu=\max\{k\mid \p\theta^\mu/\p u_k\ne0,\,k\geqslant-p\}$, 
$r:=\mathop{\rm ord}\nolimits \theta^p$, $r\geqslant-p$. 
The infinitesimal invariance condition~\cite{Olver1986,Ovsiannikov1982} applied to the equation $f^s_x=H^sf^{s-1}$ 
implies the formula 
\begin{equation}\label{EqExpressionForCoeefsOfGenSymOpsOfLPEs}
\theta^{s-1}=\frac1{H^s}\biggl(\theta^s_x+\theta^s_{f^\sigma}H^\sigma f^{\sigma-1}+\sum_{k\geqslant0}\theta^s_{u_k}u_{k+1}\biggr).
\end{equation}
Iterating formula~\eqref{EqExpressionForCoeefsOfGenSymOpsOfLPEs} backward starting from $s=p$, 
we obtain the expression of $\theta^{s-1}$ via 
the total derivatives of $\theta^p$ with respect to~$x$ up to order $p-s+1$ according to the equation 
$f^{s-1}=(H^s)^{-1}\p_x\bigl((H^{s+1})^{-1}\p_x\bigl(\dots\bigl((H^p)^{-1}\p_xf^p\bigr)\dots\bigr)\bigr).$ 
In particular, $\mathop{\rm ord}\nolimits \theta^{s-1}=r+p-s+1$.

In view of the other equations of system~\eqref{EqMinIteratedGenPotSysOfLPEs}, the last equation $f^p_t=H^pAf^{p-1}_x-G^pf^{p-1}$ 
of the system is equivalent to equation~\eqref{EqPthLevelPotEqInFTermsForLPE}. 
The equations on $\theta^p$ derived from the infinitesimal invariance condition for equation~\eqref{EqPthLevelPotEqInFTermsForLPE} 
as a differential consequence of system~\eqref{EqMinIteratedGenPotSysOfLPEs} 
coincide with the determining equations for the generalized symmetry operator $Q'=\theta^p\p_{f^p}$ 
of the single equation~\eqref{EqPthLevelPotEqInFTermsForLPE} 
under the above relations between the $x$-derivatives of~$f^p$ and the functions $u_k$, $k\geqslant-p$. 
Therefore, there is a one-to-one correspondence between generalized symmetries of system~\eqref{EqMinIteratedGenPotSysOfLPEs} 
and equation~\eqref{EqPthLevelPotEqInFTermsForLPE}, established via the projection to the prolongation space over $(t,x,f^p)$ 
in the forward direction and the prolongation by formula~\eqref{EqExpressionForCoeefsOfGenSymOpsOfLPEs} in the reverse direction.
It is implicitly assumed that the $x$-derivatives of~$f^p$ can be expressed in terms of the functions~$u_k$, $k\geqslant-p$, and vice versa. 

We can split the infinitesimal invariance condition for equation~\eqref{EqPthLevelPotEqInFTermsForLPE} with respect to $u_{r+1}$.
(This is equivalent to splitting with respect to $f^p_{p+r+1}$ when \eqref{EqPthLevelPotEqInFTermsForLPE} is considered as a single equation.) 
Collecting the coefficient of~$u_{r+1}^2$, we derive the equation $\theta^p_{u_ru_r}=0$. 
In view of the last equation, we can collect the coefficient of~$u_{r+1}u_r$, yielding $\theta^p_{u_ru_{r-1}}=0$.
Iterating the procedure, in the $l$-th step ($l\leqslant r+p+1$) we can collect the coefficient of~$u_{r+1}u_{r+2-l}$ 
and obtain the equation $\theta^p_{u_ru_{r+1-l}}=0$. 
The above procedure can be repeated for the terms containing $u_r$ and so on. 
Finally we derive the equations $\theta^p_{u_ku_{k'\!}}=0$, $-p\leqslant k, k'\leqslant r$.
Then in view of the iterative formula~\eqref{EqExpressionForCoeefsOfGenSymOpsOfLPEs} we have 
$\theta^{s-1}_{u_ku_{k'\!}}=0$, $-p\leqslant k, k'\leqslant r+p-s+1$.
\end{proof}

\begin{corollary}\label{CorollaryOnChiceOfConservedVertorsForConstructionOfPotSymsOfLPE}
To exhaustively investigate potential symmetries of equations from class~\eqref{EqGenLPE}, 
it suffices to only consider tuples of conserved vectors of the form~\eqref{eqCVofLPEs}, 
which are canonical representatives of the corresponding conservation laws. 
\end{corollary}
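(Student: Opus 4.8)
The plan is to deduce Corollary~\ref{CorollaryOnChiceOfConservedVertorsForConstructionOfPotSymsOfLPE} directly from Lemma~\ref{LemmaOn1to1CorrespondenceBetweenGeneralizedSymsOfGenPotSysAndPthLevelPotEqOfLPE}, which is the substantive result; the corollary is essentially an interpretive repackaging of that lemma together with the general theory of potential conservation laws established in Section~\ref{SectionOnPotCLsLPEs}. The key point to establish is that whatever representatives of the conservation laws one chooses when building a potential system, the resulting symmetry analysis is unchanged up to the equivalence of generalized symmetries, and that the canonical form~\eqref{eqCVofLPEs} is a legitimate (and in fact optimal) such choice.

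First I would recall from the discussion following Definition~\ref{DefinitionOfPotSyms} that any tuple of $p$ independent local conservation laws generates an infinite family of potential systems, all connected by potential transformations of the form $\tilde v^s = v^s + H^s$ with $H^s$ depending on $x$ and $x$-derivatives of~$u$; these systems are therefore equivalent for the investigation of generalized symmetries of arbitrary order. By Theorem~\ref{TheoremLocalCLsLPEs} every local conservation law of an equation from class~\eqref{EqGenLPE} possesses a representative of the canonical form~\eqref{eqCVofLPEs}, and by Corollary~\ref{CorollaryOnPotConsLawsOfLPEs} there are no purely potential conservation laws, so higher-level potentials add nothing new. Thus the potential frame built from canonical conserved vectors already exhausts, up to the equivalence just described, all potential systems arising from a fixed characteristic subspace.

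Next I would invoke Lemma~\ref{LemmaOn1to1CorrespondenceBetweenGeneralizedSymsOfGenPotSysAndPthLevelPotEqOfLPE} for the canonical potential system~\eqref{EqMinIteratedGenPotSysOfLPEs}: every generalized symmetry operator of that system is obtained, up to equivalence, by prolonging a generalized symmetry of the modified potential equation~\eqref{EqPthLevelPotEqInFTermsForLPE} and re-expressing the $x$-derivatives of~$f^p$ via the $f^s$ and the $x$-derivatives of~$u$. The crucial feature is that the coefficients of such an operator depend only on $t$, $x$, the $f^s$ and $x$-derivatives of~$u$, and are linear in the dependent variables and the derivatives of~$u$. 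Since a non-canonical choice of conserved vector differs from the canonical one precisely by an added total derivative $D_xH$, its potential system is related to~\eqref{EqMinIteratedGenPotSysOfLPEs} by a potential transformation of the admissible form above; such a transformation maps the generalized-symmetry space isomorphically and preserves the structural description supplied by the lemma. Hence no essentially new potential symmetries can arise from non-canonical representatives.

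The main obstacle, and the place where care is needed, is ensuring that passing to a non-canonical conserved vector does not enlarge the class of generalized (and in particular Lie) symmetries — a priori a higher-order conserved vector could introduce genuine dependence on derivatives of~$u$ in the symmetry coefficients that is not removable by equivalence. This is exactly what Lemma~\ref{LemmaOn1to1CorrespondenceBetweenGeneralizedSymsOfGenPotSysAndPthLevelPotEqOfLPE} rules out: because the symmetry coefficients reduce, modulo equivalence, to the prolongation data of a symmetry of the single equation~\eqref{EqPthLevelPotEqInFTermsForLPE}, the order and functional form of the symmetries are controlled entirely by the modified potential equation, which is insensitive to the choice of representative. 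Assembling these observations — the equivalence of potential systems under representative changes, the absence of purely potential conservation laws, and the structural reduction of the lemma — yields the corollary: it suffices to work with the canonical conserved vectors~\eqref{eqCVofLPEs} throughout the study of potential symmetries.
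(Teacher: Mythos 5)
You have assembled the right ingredients---Lemma~\ref{LemmaOn1to1CorrespondenceBetweenGeneralizedSymsOfGenPotSysAndPthLevelPotEqOfLPE} and the fact that equivalent conserved vectors produce potential systems related by substitutions of the form $\tilde f^s=f^s+\Psi^s$ with $\Psi^s$ depending on $x$-derivatives of~$u$---but the decisive step of your argument is asserted rather than proved. Your claim that this substitution ``maps the generalized-symmetry space isomorphically and preserves the structural description supplied by the lemma, hence no essentially new potential symmetries can arise'' does not settle the question. Potential symmetries are by definition \emph{Lie (point)} symmetries of potential systems, and when $\Psi^s$ genuinely involves derivatives of~$u$ (maximal order $\rho\geqslant1$), the substitution is a differential, not a point, transformation: Lie symmetries of the non-canonical system correspond only to \emph{generalized} symmetries of the canonical system~\eqref{EqMinIteratedGenPotSysOfLPEs}. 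An isomorphism of generalized-symmetry spaces says nothing about which elements on the non-canonical side happen to be point symmetries; a priori the non-canonical system could possess Lie symmetries corresponding to higher-order generalized symmetries of the canonical one, and these would be potential symmetries invisible from the canonical frame. Indeed, the remark at the end of Section~\ref{SectionOnPotSystems} states explicitly that the choice of representatives \emph{becomes significant} exactly for symmetries of fixed order, e.g.\ Lie symmetries, so your appeal to equivalence ``in the investigation of generalized symmetries of arbitrary orders'' cuts against, not for, the conclusion unless supplemented.

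The missing piece is the order count that the paper's proof carries out. Take a Lie symmetry $\tilde Q$ of the non-canonical system and regard it as a generalized symmetry $Q=\eta\p_u+\theta^s\p_{f^s}$ of the canonical system, whose structure the lemma controls; in particular $\mathop{\rm ord}\nolimits\theta^{s-1}=r+p-s+1$, where $r=\mathop{\rm ord}\nolimits\theta^p$. Choosing $s_0$ with $\Psi^{s_0}_{u_\rho}\ne0$, the transfer back gives
\[
\tilde\theta^{s_0}=\biggl(\theta^{s_0}+\sum_{k=0}^{\rho}D_x^k(\eta)\,\Psi^{s_0}_{u_k}\biggr)\bigg|_{\tilde f^s-\Psi^s\rightsquigarrow f^s},
\]
and when $r+p\geqslant1$ the term $\eta_{u_{r+p}}\Psi^{s_0}_{u_\rho}u_{r+p+\rho}$ cannot be cancelled, so $\mathop{\rm ord}\nolimits\tilde\theta^{s_0}=r+p+\rho\geqslant2$, contradicting the bound $\mathop{\rm ord}\nolimits\tilde\theta^{s}\leqslant\max(2-s,0)\leqslant1$ that any Lie symmetry of such a system must satisfy. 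In the remaining case $r+p=0$ the determining equations force $\theta^p=Cf^p+\theta^{p0}(t,x)$ with $C=\const$, whence $\tilde Q$ is trivial. This dichotomy---contradiction or triviality---is the actual content of the corollary, and note that it is \emph{stronger} than what you argue for: non-canonical tuples with $\rho\geqslant1$ yield potential systems admitting \emph{only trivial} Lie symmetries, not systems whose symmetries merely coincide with the canonical ones. Without this computation your argument is circular at its key point.
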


\begin{proof}
Consider $p$ conserved vectors of an equation of the form~\eqref{EqGenLPE}, 
corresponding to $p$ linearly independent conservation laws. 
They necessarily possess the representation  
\begin{equation}\label{EqTupleOfEquivConseredVectorsOfLPE}
\bigl(\alpha^s u+D_x\Phi^s,\, -\alpha^s Au_x+((\alpha^s A)_x-\alpha^s B)u-D_t\Phi^s\bigr),
\end{equation}
where $\alpha^s=\alpha^s(t,x)$, $s=1,\dots,p$, are linearly independent solutions of the adjoint equation~\eqref{eqCVofLPEs} 
and $\Phi^s$ are functions of $t$, $x$ and derivatives of~$u$. 
Due to the initial equations we can assume that the derivatives are only with respect to~$x$.
The corresponding potential systems are obtained by the substitution $v^s=\tilde v^s-\Phi^s$
from the system~\eqref{EqGenPotSysOfLPEs} associated with 
the tuple of equivalent conserved vectors in the canonical form. 
Here $\tilde v^s$ are the potentials generated by the conserved vectors~\eqref{EqTupleOfEquivConseredVectorsOfLPE}. 
In view of Lemmas~\ref{LemmaOnRepresentationOfGenPotsViaSimplestPotsForLPEs1} 
and~\ref{LemmaOnRepresentationOfGenPotsViaSimplestPotsForLPEs2} 
the induced substitution in terms of the potentials~$f^s$ has the form $f^s=\tilde f^s-\Psi^s$. 
For each value of~$s$ the function $\Psi^s$ is a linear combination of $\Phi^\sigma$, $\sigma=1,\dots,s$, 
with coefficients depending on~$t$ and~$x$, and the coefficient of $\Phi^s$ does not vanish.
Denote the maximal order of derivatives in $\Psi^s$ by~$\rho$ and a value of~$s$ with $\Psi^s_{u_\rho}\ne0$ by $s_0$. 
We assume $\rho\geqslant1$ since otherwise  the functions~$\Psi^s$ and, therefore, the functions~$\Phi^s$
can be neglected due to the point transformation, which has, up to similarity, no influence on 
Lie symmetries of the involved potential systems. 

Every Lie symmetry of the system in~$\tilde f^s$ and~$u$ corresponds to 
a generalized symmetry of system~\eqref{EqMinIteratedGenPotSysOfLPEs} in~$f^s$ and~$u$.
The question is when a generalized symmetry operator~$Q=\eta\p_u+\theta^s\p_{f^s}$ of system~\eqref{EqMinIteratedGenPotSysOfLPEs} induces 
a Lie symmetry operator~$\tilde Q=\tilde\eta\p_u+\tilde\theta^s\p_{f^s}$ 
(in evolution form) of the system in~$\tilde f^s$ and~$u$.

We use the notations of Lemma~\ref{LemmaOn1to1CorrespondenceBetweenGeneralizedSymsOfGenPotSysAndPthLevelPotEqOfLPE}. 
Suppose that $r+p\geqslant1$, where $r=\mathop{\rm ord}\nolimits \theta^p$. Since 
\[
\tilde\theta^{s_0}=Q\tilde f^{s_0}\bigl|_{\tilde f^s-\Psi^s\rightsquigarrow f^s}=
\biggl(\theta^{s_0}+\sum_{k=0}^\rho D_x^k(\eta)\Psi^s_{u_k}\biggl)\biggl|_{\tilde f^s-\Psi^s\rightsquigarrow f^s}, 
\]
the term  $\eta_{u_{r+p}}\Psi^s_{u_\rho}u_{r+p+\rho}$ cannot be canceled with other terms of~$\tilde\theta^{s_0}$. 
Therefore, $\mathop{\rm ord}\nolimits\tilde\theta^{s_0}=r+p+\rho\geqslant2$. 
At the same time, if $\tilde Q$ would be a Lie symmetry operator, 
$\mathop{\rm ord}\nolimits\tilde\theta^s\leqslant\max(2-s,0)\leqslant1.$ 

In the case $r+p=0$ the coefficient $\theta^p$ has the form $\theta^p=\theta^{p1}(t,x)f^p+\theta^{p0}(t,x)$. 
The determining equations for the generalized symmetry of~\eqref{EqMinIteratedGenPotSysOfLPEs} imply that $\theta^{p1}(t,x)=C=\const$.  
Then $\theta^{s-1}=Cf^{s-1}+\theta^{s-1,0}(t,x)$ in view of formula~\eqref{EqExpressionForCoeefsOfGenSymOpsOfLPEs}. 
Since $\tilde\eta=\eta=Cu+\theta^{00}(t,x)$, the operator~$\tilde Q$ gives a trivial potential symmetry of the initial equation, 
which corresponds to a trivial Lie symmetry of this equation.
\end{proof}

In view of Lemma~\ref{LemmaOn1to1CorrespondenceBetweenLieSymsOfGenPotSysAndPthLevelPotEqOfLPE}, 
Lie symmetry analysis of any $p$-level potential system associated with an equation from class~\eqref{EqGenLPE}
is reduced to similar investigation for the corresponding $p$-level potential equation. 
In fact, we will investigate the modified $p$-level potential equation~$\lceil w^p\rfloor$ 
instead of~$\lceil f^p\rfloor$. Such way has two advantages. 
Firstly, the potential~$w^p$ does not depend on nondegenerate linear combining 
(including a rearrangement, cf. Corollary~\ref{CorollaryOnIndependenceOfPotsOnOrderOfCharsForLPEs}) 
in the characteristic tuple. 
Secondly, the modified $p$-level potential equation is connected with the initial one via 
the multiple Darboux transformation expressed in terms of the corresponding characteristics. 
In terms of~$w$'s the $p$-level potential system has the form
\begin{gather}\nonumber
w^s_x+\frac{\beta^{s-1,s}_x}{\beta^{s-1,s}}w^s=w^{s-1}, \\ 
w^p_t+\frac{\beta^{p-1,p}_t}{\beta^{p-1,p}}w^p=Aw^{p-1}_x-
\biggl(A_x+\frac{\beta^{p-1,p}_x}{\beta^{p-1,p}}A-B^{p-1}\biggr)w^{p-1}\label{EqMinModifiedIteratedGenPotSysOfLPEs}
\end{gather}
which will be called t\emph{he modified $p$-level potential system} associated with an equation from class~\eqref{EqGenLPE} 
and the characteristic tuple $(\alpha^1,\dots,\alpha^p)$.

Analogously to the case of simplest potential symmetries, 
the only possibility of obtaining pure potential symmetries is connected with the coefficient~$\eta$. 
Namely, the condition $\eta_{f^s}\ne0$ (or $\eta_{w^s}\ne0$ in terms of $w$'s) should be satisfied for some~$s$. 
In fact, we are interested in the investigation $p$-order potential symmetries only in the case when 
it is not reduced to the consideration of potential symmetries of a smaller order. 
The irreducibility is naturally defined in terms of the~$w$'s. 
One of the reasons for this again is the independence of~$w^p$ of nondegenerate linear combining 
in the characteristic tuple. 
Another reason is the following. 
Let $Q=\tau\p_t+\xi\p_x+\eta\p_u+\theta^\varsigma\p_{f^\varsigma}$ and $Q=\tau\p_t+\xi\p_x+\eta\p_u+\zeta^\varsigma\p_{w^\varsigma}$ 
be the representations of the same operator~$Q$ in terms of $f$'s and $w$'s. 
In view of Corollary~\ref{CorollaryOnPresentationOfPotsViaFixedSolutionsOfPotEqForLPEs}
the conditions `$\theta^{s-1}_{f^\sigma}=0$ for a fixed value of~$s$ and any $\sigma>s-1$' 
and `$\smash{\theta^{\varsigma-1}_{f^\sigma}=0}$ for any $\varsigma\leqslant s$ and any $\sigma>s-1$' are equivalent to
the similar conditions in terms of the~$w$'s, i.e., 
`$\zeta^{s-1}_{w^\sigma}=0$ for a fixed value of~$s$ and any $\sigma>s-1$' 
and `$\zeta^{\varsigma-1}_{w^\sigma}=0$ for any $\varsigma\leqslant s$ and any $\sigma>s-1$'. 
Therefore, Corollary~\ref{CorollaryOnReducibilityOfPotSymsOfLPE} can be completely reformulated in terms of the~$w$'s.

{\addtocounter{corollary}{-1}\renewcommand{\thecorollary}{\arabic{corollary}$'$}
\begin{corollary}\label{CorollaryOnReducibilityOfPotSymsOfLPE2}
If $\,\zeta^{s-1}_{w^\sigma}=0$ for a fixed value of~$s$ and any $\sigma>s-1$ 
then $\,\zeta^{\varsigma-1}_{w^\sigma}=0$ for any $\varsigma\leqslant s$ and any $\sigma>s-1$.
\end{corollary}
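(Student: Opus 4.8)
The plan is to deduce the statement directly from its already-proven $f$-counterpart, Corollary~\ref{CorollaryOnReducibilityOfPotSymsOfLPE}, by verifying that both the hypothesis and the conclusion translate verbatim between the two representations of~$Q$. The essential observation is that the change of dependent variables relating the $f$- and $w$-representations is \emph{diagonal}: by construction $w^s=f^s/\beta^{s-1,s}$, where each $\beta^{s-1,s}$ is a nonvanishing function of~$(t,x)$ alone. Hence each~$w^s$ depends on the single potential~$f^s$ (and on~$t$, $x$), so that $\p_{w^\sigma}=\beta^{\sigma-1,\sigma}\p_{f^\sigma}$ once the remaining dependent variables are held fixed. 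This is precisely the equivalence already asserted above via Corollary~\ref{CorollaryOnPresentationOfPotsViaFixedSolutionsOfPotEqForLPEs}.

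First I would express the coefficient $\zeta^{s-1}=Q[w^{s-1}]$ through $\theta^{s-1}=Q[f^{s-1}]$. Applying the derivation~$Q$ to $w^{s-1}=f^{s-1}/\beta^{s-2,s-1}$ and using that $\beta^{s-2,s-1}$ depends only on~$(t,x)$, one obtains
\[
\zeta^{s-1}=\frac{\theta^{s-1}}{\beta^{s-2,s-1}}
+f^{s-1}\left(\tau\,\p_t\frac1{\beta^{s-2,s-1}}+\xi\,\p_x\frac1{\beta^{s-2,s-1}}\right).
\]
By Lemma~\ref{LemmaOn1to1CorrespondenceBetweenLieSymsOfGenPotSysAndPthLevelPotEqOfLPE} the coefficients~$\tau$ and~$\xi$ depend only on~$(t,x)$, so after substituting $f^{s-1}=\beta^{s-2,s-1}w^{s-1}$ the second summand becomes a function of~$w^{s-1}$, $t$ and~$x$ alone. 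Consequently, for every $\sigma>s-1$ the correction term is annihilated by~$\p_{w^\sigma}$, and
\[
\zeta^{s-1}_{w^\sigma}=\frac{\beta^{\sigma-1,\sigma}}{\beta^{s-2,s-1}}\,\theta^{s-1}_{f^\sigma}.
\]
Since the $\beta$'s are nonzero, this gives the equivalence $\zeta^{s-1}_{w^\sigma}=0\Leftrightarrow\theta^{s-1}_{f^\sigma}=0$ for all $\sigma>s-1$. The identical computation with $s-1$ replaced by~$\varsigma-1$ yields $\zeta^{\varsigma-1}_{w^\sigma}=0\Leftrightarrow\theta^{\varsigma-1}_{f^\sigma}=0$; note that $\varsigma\leqslant s$ forces $\sigma>s-1\geqslant\varsigma-1$, so the exceptional index $\sigma=\varsigma-1$ (at which the correction term would intervene) never occurs.

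With these equivalences in hand, the proof concludes by transfer: the hypothesis $\zeta^{s-1}_{w^\sigma}=0$ is equivalent to $\theta^{s-1}_{f^\sigma}=0$; Corollary~\ref{CorollaryOnReducibilityOfPotSymsOfLPE} then supplies $\theta^{\varsigma-1}_{f^\sigma}=0$ for all $\varsigma\leqslant s$ and $\sigma>s-1$; and translating back gives the desired $\zeta^{\varsigma-1}_{w^\sigma}=0$. The only point requiring care, and hence the sole potential obstacle, is verifying that the inhomogeneous correction term in~$\zeta^{s-1}$ involves no potential other than~$w^{s-1}$. This is exactly where the diagonality of the substitution $w^s=f^s/\beta^{s-1,s}$ and the projectability of~$\tau$ and~$\xi$ are used, both of which are established above, so no genuine difficulty remains.
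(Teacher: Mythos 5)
Your proof is correct and follows essentially the same route as the paper: the paper likewise obtains this statement as the $w$-reformulation of Corollary~\ref{CorollaryOnReducibilityOfPotSymsOfLPE}, asserting (with reference to the triangular/linear structure of the potentials) that the vanishing conditions in the $f$- and $w$-representations are equivalent. Your contribution is merely to make explicit, via the diagonal substitution $w^s=f^s/\beta^{s-1,s}$ and the projectability of $\tau$ and $\xi$, the computation $\zeta^{s-1}_{w^\sigma}=\bigl(\beta^{\sigma-1,\sigma}/\beta^{s-2,s-1}\bigr)\theta^{s-1}_{f^\sigma}$ for $\sigma>s-1$ that the paper leaves implicit.
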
}

\begin{definition}\label{DefinitionOfStrictlyPthOrderPorSyms}
Let $Q'=\tau\p_t+\xi\p_x+\zeta^p\p_{w^p}$ be a Lie invariance operator of the modified $p$-level potential equation
associated with an equation from class~\eqref{EqGenLPE} and the characteristic subspace $\langle\alpha^1,\dots,\alpha^p\rangle$. 
We will say that \emph{$Q'$ generates a strictly $p$-th order potential symmetry} of the initial equation if 
for any basis $(\tilde\alpha^1,\dots,\tilde\alpha^p)$ in $\langle\alpha^1,\dots,\alpha^p\rangle$ 
the prolongation $Q=Q'+\tilde\zeta^{s-1}\p_{\tilde w^{s-1}}$ of~$Q'$ to the corresponding 
potentials $\tilde w^{s-1}$, where $\tilde w^0:=u$, satisfies the following condition. 
For any~$s$ there exists $\sigma>s-1$ such that $\tilde\zeta^{s-1}_{\tilde w^\sigma}\ne0$.
\end{definition}

\begin{definition}\label{DefinitionOfStrictlyPthOrderPorSymAlgebra}
Let $\mathfrak g$ be a Lie invariance algebra of the modified $p$-level potential equation
associated with an equation from class~\eqref{EqGenLPE} and the characteristic subspace 
$\langle\alpha^1,\dots,\alpha^p\rangle$. 
We will say that \emph{$\mathfrak g$ generates a strictly $p$-th order potential symmetry algebra} 
of the initial equation if for any basis $(\tilde\alpha^1,\dots,\tilde\alpha^p)$ in 
$\langle\alpha^1,\dots,\alpha^p\rangle$ and for any~$s$
there exists $Q'\in\mathfrak g$ whose prolongation $Q=Q'+\tilde\zeta^{s-1}\p_{\tilde w^{s-1}}$ to the corresponding 
potentials $\tilde w^{s-1}$, where $\tilde w^0:=u$, satisfies the following condition. 
There exists $\sigma>s-1$ such that $\tilde\zeta^{s-1}_{\tilde w^\sigma}\ne0$.
\end{definition}

Roughly speaking, a potential symmetry operator (resp. algebra) is strictly of order $p$
if it cannot be obtained from a smaller number of conservation laws and potentials. 

Let us recall that $w^0=u$ and $\zeta^0=\eta$ by definition. 
It follows from Lemma~\ref{LemmaOn1to1CorrespondenceBetweenLieSymsOfGenPotSysAndPthLevelPotEqOfLPE} 
in view of the formula $w^s=f^s/\beta^{s-1,s}$ that 
\[
\zeta^s=\sum_{\sigma=s}^p\zeta^{s\sigma}(t,x)w^\sigma+\varrho^s(t,x),\quad
\zeta^0=\zeta^{00}(t,x)w^0+\sum_{\sigma=1}^p\zeta^{0\sigma}(t,x)w^\sigma+\varrho^0(t,x)
\]
and a more precise form of the coefficients $\zeta^{s-1}$ is calculated by an above backward recursive formula 
involving $\tau$, $\xi$ and $\zeta^p$, in accordance with equations of~\eqref{EqMinModifiedIteratedGenPotSysOfLPEs}. In particular, 
\begin{gather*}
\zeta^{p-1,p}=\frac1{w^{p,p}}{\rm DT}[w^{p,p}](Q'[w^{p,p}])=
\frac{W(w^{p,p},Q'[w^{p,p}])}{w^{p,p}\,W(w^{p,p})},
\\
\zeta^{s-1,p}=\frac1{w^{p,p}}{\rm DT}[w^{s,s}](w^{p,p}\zeta^{sp})=
\frac{W(w^{p,s+1},\dots,w^{p,p},Q'[w^{p,p}])}{w^{p,p}\,W(w^{p,s+1},\dots,w^{p,p})}, \quad s<p,
\end{gather*}
by the Crum theorem. Here $Q'[w^{p,p}]=\zeta^{pp}w^{p,p}-\tau w^{p,p}_t-\xi w^{p,p}_x$.

Consider the prolongation of transformations from the equivalence group~$G^\sim$ to the general potential frame. 
Any point equivalence transformation~$\mathcal T$ in class~\eqref{EqGenLPE} acts on the variables $(t,x,u)$
by the formulas $\tilde t=T(t)$, $\tilde x=X(t,x)$, $\tilde u=U^1(t,x)u$, where $T_tX_xU^1\ne0$.
In view of Proposition~\ref{PropositionOnTransOfCharsOfCLsForLPEs} it is prolonged to characteristics $\alpha^s$
of conservation laws of an equation from this class: $\tilde\alpha^s=\alpha^s/(X_xU^1)$.
The corresponding potential~$v^s$ can be assumed to be transformed identically under prolongation of~$\mathcal T$. 
(See Section~\ref{SectionOnSimplestPotSymsOfLPEs}.) 
Taking into account the constructed representations for $\beta^{s,\sigma}$, $g^{s,\sigma}$, $\sigma\geqslant s$, 
$f^s$ and $w^s$ via $\alpha^s$ and~$v^s$, we obtain the following statements.

\begin{lemma}\label{LemmaOnEquivTransOfGenPotFrameOfLPEs}
For any $p\in\mathbb N$ any transformation~$\mathcal T\in G^\sim$ is prolonged to the $p$-order potential frame 
over class~\eqref{EqGenLPE}. 
The prolonged transformation $\mathcal T^p$ 
\begin{gather*}
\tilde t=T(t),\quad \tilde x=X(t,x), \quad \tilde u=U^1(t,x)u, \quad  
\tilde\alpha^s=\frac{\alpha^s}{X_xU^1}, \quad \tilde v^s=v^s, \\
\tilde\beta^{s,\sigma}=\frac{\beta^{s,\sigma}}{X_x{}^{s+1}U^1}, \quad \tilde g^{s,\sigma}=g^{s,\sigma}, \quad 
\tilde f^s=f^s, \quad \tilde w^s=X_x{}^sU^1w^s, \quad \tilde w^{s,\sigma}=X_x{}^sU^1w^{s,\sigma},\\
\tilde A=\frac{X_x^2}{T_t}A,\quad 
\tilde B=\frac{X_x}{T_t}\left(B-2\frac{U^1_x}{U^1}A\right)-\frac{X_t-AX_{xx}}{T_t},\quad 
\tilde C=-\frac{U^1}{T_t}L\frac1{U^1},
\end{gather*}
where $T_tX_xU^1\ne0$, realizes a simultaneous transformation between the tuples consisting of
the initial, potential and modified potential equations and the potential systems in terms of $v^s$ and~$f^s$.
The transformations from~$G^\sim$, prolonged to the $p$-order potential frame over the class~\eqref{EqGenLPE}, 
form the group~$G^\sim_{\smash{[p]}}$ called the equivalence group of this frame. 
\end{lemma}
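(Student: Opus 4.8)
The plan is to build the prolonged transformation $\mathcal T^p$ component by component, in the same order in which the pieces of the potential frame were constructed in Section~\ref{SectionOnGenPotSysForLPEs}, and to check at each stage that the explicit formula in the statement is reproduced. The starting data are fixed by earlier results: the action of $\mathcal T\in G^\sim$ on $(t,x,u)$ and on the arbitrary elements $A$, $B$, $C$ is that of Lemma~\ref{LemmaOnAdmTransOfInhomLPEs} and Corollary~\ref{CorollaryOnSemiNormOfLPEs}; Proposition~\ref{PropositionOnTransOfCharsOfCLsForLPEs} prolongs $\mathcal T$ to the characteristics via $\tilde\alpha^s=\alpha^s/(X_xU^1)$ (with $\kappa=1$); and, by Proposition~\ref{PropositionOn2DConsLawEquivRelation} together with the discussion around~\eqref{EqProlongedEquivTransToPotsForLPE}, the first-level potentials may be taken fixed, $\tilde v^s=v^s$, once their inessential constant summands are discarded. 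Since all remaining components are expressed through $\alpha^s$, $v^s$, $u$ and the arbitrary elements by the recursions of Section~\ref{SectionOnGenPotSysForLPEs}, the whole prolongation is then determined, and only the verification of the $X_x$-exponents remains.

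First I would treat $\beta^{s,\sigma}$ by induction on $s$, using the Darboux recursion $\beta^{s,\sigma}={\rm DT}[\beta^{s-1,s}](\beta^{s-1,\sigma})$ with base $\beta^{0,s}=\alpha^s$. Writing $\tilde\beta^{s-1,\varsigma}=h\beta^{s-1,\varsigma}$ with $h=(X_x^sU^1)^{-1}$, and recalling that ${\rm DT}[\cdot]$ in the tilde variables uses $\partial_{\tilde x}=X_x^{-1}\partial_x$, a short computation shows that the two occurrences of $h_x$ cancel and $\tilde\beta^{s,\sigma}=X_x^{-1}h\,\beta^{s,\sigma}=\beta^{s,\sigma}/(X_x^{s+1}U^1)$, as claimed. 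Next, the invariance of $g^{s,\sigma}$, and hence of $f^s=g^{s,s}$, follows formally from the recursion $g^{1,\sigma}=v^\sigma$, $g^{s+1,\sigma}=(\beta^{s-1,\sigma}/\beta^{s-1,s})g^{s,s}-g^{s,\sigma}$: the ratio $\beta^{s-1,\sigma}/\beta^{s-1,s}$ is unchanged because the common factor $(X_x^sU^1)^{-1}$ cancels, and $\tilde g^{1,\sigma}=v^\sigma$, so the induction yields $\tilde g^{s,\sigma}=g^{s,\sigma}$. Then $w^s=f^s/\beta^{s-1,s}$ gives at once $\tilde w^s=X_x^sU^1w^s$, and for the fixed solutions $w^{s,\sigma}$ I would substitute into $w^s=\sum_{\sigma}w^{s\sigma}v^\sigma$ of Corollary~\ref{CorollaryOnPresentationOfPotsViaFixedSolutionsOfPotEqForLPEs} and match coefficients of the independent potentials $v^\sigma$ (alternatively invoke Lemma~\ref{LemmaOnDTOfWsForLPEs2}), obtaining $\tilde w^{s,\sigma}=X_x^sU^1w^{s,\sigma}$.

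It then remains to confirm that $\mathcal T^p$ is genuinely a morphism of the whole frame, i.e.\ that it carries each equation and each potential system to its tilde counterpart simultaneously. For the initial equation this is Lemma~\ref{LemmaOnAdmTransOfInhomLPEs}; for system~\eqref{EqGenPotSysOfLPEs} it is Proposition~\ref{PropositionOn2DConsLawEquivRelation}; and for the $f$-system~\eqref{EqMinIteratedGenPotSysOfLPEs} it follows from the derived relation $\tilde H^s=H^s/X_x$ (an immediate consequence of the $\beta$-formula), since then $\tilde f^s_{\tilde x}=X_x^{-1}f^s_x=\tilde H^s\tilde f^{s-1}$ and the $t$-equation transforms likewise, while the potential and modified potential equations inherit covariance as differential consequences.

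Finally, the group property of $G^\sim_{[p]}$ reduces to functoriality of the prolongation: because every formula for $\mathcal T^p$ depends on $\mathcal T$ only through $(T,X,U^1)$ and is applied to objects built canonically from the base data, one has $\mathcal T_1^p\circ\mathcal T_2^p=(\mathcal T_1\circ\mathcal T_2)^p$ and $(\mathcal T^{-1})^p=(\mathcal T^p)^{-1}$ modulo the negligible constant shifts of the potentials, so $G^\sim_{[p]}$ is a group (in fact $\cong G^\sim$, since $\mathcal T$ is recovered as the restriction of $\mathcal T^p$ to $(t,x,u)$). I expect the only genuinely delicate point to be the bookkeeping of the $X_x$-powers in the inductive step for $\beta^{s,\sigma}$, together with the concomitant cancellation of the $h_x$ terms; everything else is either a direct citation of a previous result or a formal consequence of the recursions.
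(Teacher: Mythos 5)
Your proposal is correct, and it reaches the lemma by essentially the strategy the paper intends --- reduce everything to the known action on $(t,x,u)$, on the characteristics ($\tilde\alpha^s=\alpha^s/(X_xU^1)$, Proposition~\ref{PropositionOnTransOfCharsOfCLsForLPEs}) and on the first-level potentials ($\tilde v^s=v^s$), and then propagate through the construction of the frame --- but your bookkeeping is organized differently. The paper gives no detailed proof: it asserts the formulas as an immediate consequence of the closed-form representations of $\beta^{s,\sigma}$, $g^{s,\sigma}$, $f^s$, $w^s$, $w^{s,\sigma}$ as Wronskian ratios in the $\alpha$'s and $v$'s (Lemma~\ref{LemmaOnRepresentationOfGenPotsViaSimplestPotsForLPEs2}, Corollary~\ref{CorollaryOnIndependenceOfPotsOnOrderOfCharsForLPEs}, Lemma~\ref{LemmaOnSolutionsOfPotEqsforLPEs}), combined with the scaling rule for multiple Darboux transformations recorded in the Note placed right after the lemma, which yields all the $X_x$-exponents in one stroke. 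You instead unroll the iterative construction and verify covariance level by level, using only the single-step Darboux covariance (your $h_x$-cancellation is correct: the tilde-variable ${\rm DT}$ applied to $h\beta^{s-1,\cdot}$ returns $X_x^{-1}h$ times the old ${\rm DT}$) and the manifest invariance of the recursion for $g^{s,\sigma}$; this is a sound, self-contained alternative that never needs the closed Wronskian forms. Two small points of care: (i) your parenthetical alternative for $\tilde w^{s,\sigma}$ via Lemma~\ref{LemmaOnDTOfWsForLPEs2} does not suffice on its own, since ${\rm DT}[w^{s,s}]$ has kernel $\langle w^{s,s}\rangle$ and hence determines $\tilde w^{s,\sigma}$ only up to multiples of $\tilde w^{s,s}$; keep the coefficient-matching argument, or simply note that $w^{s,\sigma}$ is the value of $w^s$ at the constant potential tuple $(\delta_{1\sigma},\dots,\delta_{s\sigma})$ and therefore scales by the same factor $X_x{}^sU^1$; (ii) ``the $t$-equation transforms likewise'' glosses over the transformation of $G^p$, which involves $\tilde B^{p-1}$ and $\tilde A_{\tilde x}$ --- it does go through with the rules you already have, or can be bypassed by observing that the tilde frame is built from $(\tilde A,\tilde B,\tilde C,\tilde\alpha^s)$ by the very same construction, so solutions correspond and the tilde $f$-system is satisfied identically. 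Neither point is a gap in the main line of your argument.
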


\begin{note}
In general, under the transformation~$\mathcal T$: 
$\tilde t=T(t)$, $\tilde x=X(t,x)$, $\tilde\psi=\Phi(t,x)\psi$, $\tilde\psi^s=\Phi(t,x)\psi^s$ we have 
\[
\bigl({\rm DT}[\tilde\psi^1,\dots,\tilde\psi^s]\tilde\psi\bigr)(\tilde x)
=\frac{\Phi(t,x)}{(X_x(t,x))^s}
\bigl({\rm DT}[\psi^1,\dots,\psi^s]\psi\bigr)(x).
\]
\end{note}

\begin{note}
The prolonged transformations from~$G^\sim$ do not exhaust all possible equivalence transformations of 
the $p$-order potential frame over the class~\eqref{EqGenLPE}. 
They can be extended, e.g., with linear combining of characteristics. 
If the variables $(t,x,u)$ (and, therefore, the arbitrary elements $(A,B,C)$) are not transformed and 
$\tilde\alpha^s=\alpha^\sigma c_{\sigma s}$, where 
$c_{\sigma s}=\const$, $\det(c_{\sigma s})\ne0$ and $c_{\sigma s}=0$, $\sigma>s$, 
then the corresponding transformation of the other functions appearing in the potential frame is easily constructed:
\[
\tilde v^s=v^\sigma c_{\sigma s},\quad 
\tilde\beta^{s,\sigma}=\beta^{s,\varsigma}c_{\varsigma\sigma},\quad 
\tilde g^{s,\sigma}=g^{s,\varsigma}c_{\varsigma\sigma},\quad 
\tilde f^s=f^sc_{ss},\quad 
\tilde w^s=w^s,\quad 
\tilde w^{s,\sigma}=w^{s,\varsigma}\hat c_{\sigma\varsigma}.
\] 
Here $(\hat c_{\varsigma\sigma})$ is the inverse matrix to~$(c_{\varsigma\sigma})$.
\end{note}

The transformations from~$G^\sim_{\smash{[p]}}$ preserve the determining equations derived in the proof of 
Lemma~\ref{LemmaOn1to1CorrespondenceBetweenLieSymsOfGenPotSysAndPthLevelPotEqOfLPE}. 
Let $Q=\tau\p_t+\xi\p_x+\eta\p_u+\theta^s\p_{f^s}$ be a Lie invariance operator 
of system~\eqref{EqMinIteratedGenPotSysOfLPEs}. 
The coefficients of~$Q$ are transformed under the operator mapping generated 
by~$\mathcal T^p\in G^\sim_{\smash{[p]}}$ 
by the formulas 
\[
\tilde\tau=\tau T_t,\quad \tilde\xi=\tau X_t+\xi X_x,\quad \tilde\eta=\tau U^1_tu+\xi U^1_xu+U^1\eta,\quad \tilde\theta^s=\theta^s.
\] 
Therefore, both the conditions $\eta_{f^\sigma}\ne0$ and $\eta_{f^\sigma}=0$ 
as well as the conditions $\theta^s_{f^\sigma}\ne0$ and $\theta^s_{f^\sigma}=0$
are preserved by the transformations from~$G^\sim_{\smash{[p]}}$ for any~$s$ and~$\sigma$.
This means that pure $p$-order potential symmetries of any equation from class~\eqref{EqGenLPE} are not mixed
under the transformations from~$G^\sim_{\smash{[p]}}$ with either potential symmetries of lesser orders or Lie symmetries.
The dimension of the factor-space of $p$-order potential symmetry operators 
corresponding to a $p$-tuple of characteristics with respect to the subspaces of potential symmetry operators 
of lesser orders are also not changed. 
Therefore, $p$-order potential symmetries of equations from class~\eqref{EqGenLPE} can be studied up to 
the equivalence relation generated by transformations from~$G^\sim_{\smash{[p]}}$.

There are different ways of employing this equivalence relation. 
One of them is to emphasize the simplification of the form of the equations under consideration. 
In particular, we can put $A=1$ and $B=0$ (and re-denote~$C$ by $-V$)
in~\eqref{EqGenLPE}, which implies $B^s=B-sA_x=0$. 
As a result, the symmetry analysis of the $p$-order potential frame over class~\eqref{EqGenLPE} is reduced to 
the symmetry analysis of the $p$-order potential frame over class~\eqref{EqReducedLPE}.
The equivalence group~$G^\sim_1$ of the reduced class~\eqref{EqReducedLPE} is canonically isomorphic to 
a subgroup of the equivalence group~$G^\sim$ with special restrictions of the parameter-functions~$X$ and $U^1$.
In view of Lemma~\ref{LemmaOnEquivTransOfGenPotFrameOfLPEs} the transformations from~$G^\sim_1$ 
are prolonged to equivalence transformations of the whole reduced potential frame. 
The prolonged transformations form the group $\hat G^\sim_{\smash{[p]}}$ 
which is canonically isomorphic via projection 
to the equivalence group of the class of modified $p$-level potential equations for 
the equations of the form~\eqref{EqReducedLPE}. 
Hence, the classification of potential symmetries of the class~\eqref{EqReducedLPE} follows from 
the group classification of the same class in terms of $(w^p,V^p)$ instead of~$(u,V)$.

Another way to proceed is to simplify the form of the operators. 
A Lie invariance operator $Q'$ of an equation from class~\eqref{EqGenLPE} with a nonvanishing coefficient of~$\p_t$ 
(or a vanishing coefficient of~$\p_t$ and a nonvanishing coefficient of~$\p_x$) can 
be reduced by transformations from~$G^\sim$ to the form $Q'=\p_{\tilde t}$ ($Q'=\p_{\tilde x}$).
Note that an appropriate simplification of the form of the equations leads to a
simplification of the form of their symmetry operators and vice versa. 
The choice of how to employ the above equivalence relation depends on the problems under consideration. 

At first, we formulate a criterion when a Lie invariance operator of the modified $p$-level potential equation 
generates a strictly $p$-th order potential symmetry of the initial equation. 
A simplification of the form of symmetry operators will be very helpful in this problem.

\begin{theorem}\label{TheoremOnCriterionOfStrictlyPthOrderPotSymsForLPE}
Suppose that $\lceil w^p\rfloor$ is an equation from class~\eqref{EqGenLPE}, 
$\psi^1$, \dots, $\psi^p$ are its linearly independent solutions and 
${\rm DT}[\psi^1,\dots,\psi^p]\,\lceil w^p\rfloor=\lceil u\rfloor$. 

1) Trivial Lie invariance operators of~$\lceil w^p\rfloor$ generate only trivial Lie invariance operators 
of~$\lceil u\rfloor$. 
Namely, $w^p\p_{w^p}$ generates $u\p_u$ and $\psi(t,x)\p_{w^p}$ generates ${\rm DT}[\psi^1,\dots,\psi^p](\psi)\p_u$. 
Here the function $\psi=\psi(t,x)$ runs through the solution set of~$\lceil w^p\rfloor$ and, 
therefore, ${\rm DT}[\psi^1,\dots,\psi^p](\psi)$ runs through the solution set of~$\lceil u\rfloor$. 

2) Let $Q'$ be an essential Lie invariance operator of~$\lceil w^p\rfloor$.
Then $Q'$ generates a strictly $p$-th order potential symmetry of~$\lceil u\rfloor$ iff 
any subspace of~$\langle\psi^1,\dots,\psi^p\rangle$ is not invariant under the action 
of the associated first-order differential operator~$\widehat Q'$. 
\end{theorem}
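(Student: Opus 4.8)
The plan is to prove both parts by exploiting the fact, recorded in Note~\ref{NoteOnMultipleDarbouxTransAsLinearMappingOfSolutionSpaceOfLPEs}, that ${\rm DT}[\psi^1,\dots,\psi^p]$ realizes the entire passage from $\lceil w^p\rfloor$ down to $\lceil u\rfloor$ as the composition of the elementary transformations ${\rm DT}[w^{s,s}]$ and is a linear map of solution spaces with kernel $\langle\psi^1,\dots,\psi^p\rangle$, together with the backward recursive formula and the isomorphism of Lemma~\ref{LemmaOn1to1CorrespondenceBetweenLieSymsOfGenPotSysAndPthLevelPotEqOfLPE}.

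For Part~1 I would argue purely by linearity. The flow of $w^p\p_{w^p}$ multiplies the running solution $w^p$ of $\lceil w^p\rfloor$ by $e^\varepsilon$; since $u={\rm DT}[\psi^1,\dots,\psi^p](w^p)$ depends linearly on~$w^p$, it is multiplied by the same factor, so the induced operator is $u\p_u$. Similarly the flow of $\psi(t,x)\p_{w^p}$ adds $\varepsilon\psi$ to $w^p$ and hence $\varepsilon\,{\rm DT}[\psi^1,\dots,\psi^p](\psi)$ to~$u$, giving the operator ${\rm DT}[\psi^1,\dots,\psi^p](\psi)\p_u$, which is a solution-translation of $\lceil u\rfloor$ because ${\rm DT}[\psi^1,\dots,\psi^p]$ maps solutions of $\lceil w^p\rfloor$ onto solutions of $\lceil u\rfloor$. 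Both computations can be cross-checked against the backward recursion; for $w^p\p_{w^p}$ one verifies $\zeta^{p-1,p}=\frac1{w^{p,p}}{\rm DT}[w^{p,p}](w^{p,p})=0$ and that the scaling descends unchanged through every level.

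For Part~2 the strategy is to convert \emph{strictly $p$-th order} into an irreducibility statement and then into subspace invariance. By Corollary~\ref{CorollaryOnReducibilityOfPotSymsOfLPE2}, in a fixed basis $(\psi^1,\dots,\psi^p)$ the generated operator fails to be strictly of order~$p$ precisely when, at some level~$s$, the coefficient $\zeta^{s-1}$ is free of all higher potentials $w^\sigma$, $\sigma>s-1$; in that case the truncated operator is a symmetry already of the $(s-1)$-level system, i.e.\ $Q'$ descends to an essential Lie symmetry of $\lceil w^{s-1}\rfloor$ through the intermediate transformation ${\rm DT}[\psi^s,\dots,\psi^p]$ with kernel $\langle\psi^s,\dots,\psi^p\rangle$. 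The crux is therefore a descent lemma: $Q'$ descends through ${\rm DT}[\chi^1,\dots,\chi^k]$ iff $\langle\chi^1,\dots,\chi^k\rangle$ is invariant under~$\widehat Q'$. I would establish it by iterating the single-transformation case: using the displayed Crum form of $\zeta^{s-1,p}$, the vanishing of the relevant coefficient is equivalent, via Lemma~\ref{LemmaOnLinearDependenceOfSolutionsOfLEvolE}, to $Q'[\chi]\in\langle\chi^1,\dots,\chi^k\rangle$ for each peeled solution, which is step-by-step invariance of the flag; invariance of each quotient step is in turn equivalent to invariance of the spanned subspace, the relevant eigenvalues/matrix entries being constant by the same linear-dependence argument. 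For $p=1$ this is exactly Proposition~\ref{PropositionOnSymCondOfExistOfPotSymOfLPE}.

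It then remains to remove the dependence on a fixed basis. As the basis of $\langle\psi^1,\dots,\psi^p\rangle$ and the level~$s$ vary, the kernels $\langle\psi^s,\dots,\psi^p\rangle$ run over all nonzero subspaces (dimensions $1$ through~$p$, the case $s=1$ giving the whole kernel and descent all the way to $\lceil u\rfloor$, i.e.\ reduction to a Lie symmetry). Hence $Q'$ is strictly of order~$p$ iff no nonzero subspace of $\langle\psi^1,\dots,\psi^p\rangle$ is $\widehat Q'$-invariant, which is the claim. The main obstacle I anticipate is the multiple-transformation descent lemma: one must show that truncation of the generated operator --- a condition a priori involving \emph{all} the coefficients $\zeta^{s-1,\sigma}$, not merely $\zeta^{s-1,p}$ --- is equivalent to equivariance of the Darboux transformation, and that this equivariance is controlled exactly by kernel invariance. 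The bookkeeping of the Wronskian identities through Lemma~\ref{LemmaOnMatrixEqulityForCrumTheorem}, together with the interplay between the $f$- and $w$-representations, is where the genuine work lies.
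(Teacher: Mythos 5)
Your Part~1 is fine and coincides with the paper's (one-line) argument: the multiple Darboux transformation is a linear map of solution spaces, so scalings and solution-translations of $w^p$ pass to scalings and solution-translations of~$u$. The skeleton of your Part~2 also matches the paper's proof: negate Definition~\ref{DefinitionOfStrictlyPthOrderPorSyms} via Corollary~\ref{CorollaryOnReducibilityOfPotSymsOfLPE2}, tie the vanishing of prolongation coefficients to subspace invariance through the Crum formulas and the Wronskian criterion of Lemma~\ref{LemmaOnLinearDependenceOfSolutionsOfLEvolE}, then let the basis and the level vary so that the kernels sweep out all subspaces. But the ``descent lemma'' you defer to the end \emph{is} the theorem, and the route you sketch for it does not close.

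Two concrete problems. First, your forward direction (kernel invariant $\Rightarrow$ descent) is argued by iterating the single-transformation case, i.e.\ peeling the invariant subspace one eigenfunction at a time along an invariant flag. Over the reals this fails: an invariant subspace of~$\widehat Q'$ need not contain any invariant line (a two-dimensional block with a complex-conjugate pair of eigenvalues has none), so there is no flag to iterate along and no intermediate level at which the analogue of Proposition~\ref{PropositionOnSymCondOfExistOfPotSymOfLPE} applies; this is not a marginal case --- Corollary~\ref{CorollaryOnCriterionOfStrictlyPthOrderPotSymsForLPE} is formulated in terms of one- \emph{or two-}dimensional invariant subspaces precisely because of it. The paper avoids flags entirely: it first normalizes $Q'$ to $\p_t$ or $\p_x$ by a transformation from~$G^\sim_{\smash{[p]}}$ (so prolongations act trivially), and then computes $\zeta^s=Q'_{(q)}w^s$ directly from $w^s={\rm DT}[w^{p,s+1},\dots,w^{p,p}](w^p)$; by multilinearity of the Wronskian, substituting $Q'[w^{p,\sigma}]=\kappa_{\sigma\varsigma}w^{p,\varsigma}$ kills all off-diagonal terms (repeated arguments) and the diagonal terms cancel, giving $\zeta^s=0$ identically for an \emph{arbitrary} invariant subspace. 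Second, in the backward direction you correctly observe that a priori all the coefficients $\zeta^{s-1,\sigma}$ are involved, not just $\zeta^{s-1,p}$, but you leave this as ``where the genuine work lies.'' The paper's resolution is short but essential: from the single displayed coefficient $\zeta^{s-1}_{w^p}=0$ one gets $Q'[w^{p,p}]\in\langle w^{p,s+1},\dots,w^{p,p}\rangle$; then, since by Corollary~\ref{CorollaryOnPresentationOfPotsViaFixedSolutionsOfPotEqForLPEs} the defining conditions on the $\zeta$'s are equivalent to conditions on the $v$'s that are symmetric under reordering the tuple, the same computation with any $w^{p,\sigma}$ placed last yields $Q'[w^{p,\sigma}]\in\langle w^{p,s+1},\dots,w^{p,p}\rangle$ for every $\sigma>s$, i.e.\ invariance of the subspace. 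Without these two devices --- the normalization of~$Q'$ plus the multilinear Wronskian computation, and the permutation argument through the $v$-representation --- your proposal remains a plan rather than a proof.
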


\begin{proof}
Item~1 is obvious in view of the properties of the Darboux transformation. 
It is significant only as a complement to item~2. 

To prove item~2 we show the equivalence of the respective negations.
Namely, $Q'$ does not generate a strictly $p$-th order potential symmetry of~$\lceil u\rfloor$ iff 
there exists a subspace of~$\langle\psi^1,\dots,\psi^p\rangle$ which is invariant under the action 
of the associated first-order differential operator~$\widehat Q'$.

Let us recall that any (nontrivial) essential Lie invariance operator~$Q'$ of~$\lceil w^p\rfloor$ 
has the form \[Q'=\tau\p_t+\xi\p_x+\zeta^{p1}w^p\p_{w^p},\] 
where $\tau=\tau(t)$, $\xi=\xi(t,x)$ and $\zeta^{p1}=\zeta^{p1}(t,x)$ 
are smooth functions of their arguments and $(\tau,\xi)\ne(0,0)$. 
The first-order differential operator~$\widehat Q'$ associated with $Q'$ acts on functions of~$t$ and~$x$ 
according to the formula $\widehat Q'\psi=Q'[\psi]=\zeta^{p1}-\tau\psi_t-\xi\psi_x$. 
Due to the equivalence relation generated by transformations from~$G^\sim_{\smash{[p]}}$ 
we can assume without loss of generality that $Q'\in\{\p_t,\p_x\}$. 
Then the standard prolongation~$Q'_{(q)}$ of~$Q'$ of any order~$q$ formally coincides with~$Q'$.

Suppose that a $q$-dimensional subspace~$\mathcal I$ in the linear space~$\langle\psi^1,\dots,\psi^p\rangle$ is invariant
under action of~$\widehat Q'$. Let $s=p-q$.
Without loss of generality we can choose a basis $\{w^{p,1},\dots,w^{p,p}\}$ in~$\langle\psi^1,\dots,\psi^p\rangle$ 
in such a way that the functions $w^{p,s+1}$, \dots, $w^{p,p}$ form a basis of~$\mathcal I$. Therefore, 
\[
Q'[w^{p,\sigma}]=\sum_{\varsigma=s+1}^p\kappa_{\sigma\varsigma}w^{p,\varsigma},\quad \sigma=s+1,\dots,p, 
\] 
where $\kappa_{\sigma\varsigma}$, $\sigma,\varsigma=s+1,\dots,p$, are constants.
Since $w^{s}={\rm DT}[w^{p,s+1},\dots,w^{p,p}](w^p)$ then the coefficient of $\p_{w^{s}}$ in 
the corresponding Lie invariance operator~$Q$ of system~\eqref{EqMinModifiedIteratedGenPotSysOfLPEs} is 
$\zeta^{s}=Q'_{(q)}w^{s}$, i.e., 
\begin{gather*}
\zeta^{s}=\sum_{\sigma=s+1}^p\left(
\frac{W(w^{p,s+1},\dots,w^{p,p},w^p)_{Q'[w^{p,\sigma}]\rightsquigarrow w^{p,\sigma}}}{W(w^{p,s+1},\dots,w^{p,p})}-
\frac{W(w^{p,s+1},\dots,w^{p,p})_{Q'[w^{p,\sigma}]\rightsquigarrow w^{p,\sigma}}}{W(w^{p,s+1},\dots,w^{p,p})}w^s\right)
\\
\phantom{\zeta^{s}}=\sum_{\sigma=s+1}^p(\kappa_{\sigma\sigma}-\kappa_{\sigma\sigma})w^{s}=0.
\end{gather*}
Corollary~\ref{CorollaryOnReducibilityOfPotSymsOfLPE2} implies 
$\zeta^\varsigma_{w^\sigma}=0$ for any $\varsigma\leqslant s$ and any $\sigma>s$. 
This means that $Q'$ does not generate a strictly $p$-th order potential symmetry of~$\lceil u\rfloor$.

Conversely, suppose that $Q'$ does not generate a strictly $p$-th order potential symmetry of~$\lceil u\rfloor$.
Therefore, there exists a basis $\{w^{p,1},\dots,w^{p,p}\}$ in~$\langle\psi^1,\dots,\psi^p\rangle$ such that 
the corresponding Lie invariance operator $Q=Q'+\tilde\zeta^{\varsigma-1}\p_{\tilde w^{\varsigma-1}}$ 
of system~\eqref{EqMinModifiedIteratedGenPotSysOfLPEs} satisfies, for some fixed value~$s$, the condition 
$\zeta^{\varsigma-1}_{w^\sigma}=0$ for any $\varsigma\leqslant s$ and any $\sigma>s-1$. 
In particular, 
\[
\zeta^{s-1}_{w^p}=
\frac{W(w^{p,s+1},\dots,w^{p,p},Q'[w^{p,p}])}{w^{p,p}\,W(w^{p,s+1},\dots,w^{p,p})}=0,
\]
i.e., $Q'[w^{p,p}]\in\langle w^{p,s+1},\dots,w^{p,p}\rangle$.
In view of Corollary~\ref{CorollaryOnPresentationOfPotsViaFixedSolutionsOfPotEqForLPEs}
the condition $\zeta^{\varsigma-1}_{w^\sigma}=0$ for any $\varsigma\leqslant s$ and any $\sigma>s-1$ is equivalent to 
$\zeta^{\varsigma-1}_{v^\sigma}=0$ for any $\varsigma\leqslant s$ and any $\sigma>s-1$. 
Thus the order in the tuple $(w^{p,s+1},\dots,w^{p,p})$ is inessential for the condition on the $\zeta$'s and, 
therefore, $Q'[w^{p,\sigma}]\in\langle w^{p,s+1},\dots,w^{p,p}\rangle$, $\sigma=s+1,\dots,p$.
It follows that the subspace $\langle w^{p,s+1},\dots,w^{p,p}\rangle$ of the space~$\langle\psi^1,\dots,\psi^p\rangle$ 
is invariant under the action of~$\widehat Q'$.
\end{proof}

\begin{corollary}\label{CorollaryOnCriterionOfStrictlyPthOrderPotSymsForLPE}
An essential Lie invariance operator $Q'$ of~$\lceil w^p\rfloor$
generates a strictly $p$-th order potential symmetry 
of~$\lceil u\rfloor={\rm DT}[\psi^1,\dots,\psi^p]\,\lceil w^p\rfloor$ iff 
any one- or two-dimensional subspace of~$\langle\psi^1,\dots,\psi^p\rangle$ is not invariant under the action 
of the associated first-order differential operator~$\widehat Q'$. 
\end{corollary}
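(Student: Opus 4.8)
The plan is to deduce this corollary directly from Theorem~\ref{TheoremOnCriterionOfStrictlyPthOrderPotSymsForLPE}, reducing the geometric criterion there to a purely linear-algebraic fact about real endomorphisms. Theorem~\ref{TheoremOnCriterionOfStrictlyPthOrderPotSymsForLPE} already asserts that $Q'$ generates a strictly $p$-th order potential symmetry of $\lceil u\rfloor$ if and only if \emph{no} nonzero subspace of $\langle\psi^1,\dots,\psi^p\rangle$ is invariant under the associated first-order operator $\widehat Q'$. Hence it suffices to show that this condition is equivalent to the absence of invariant subspaces of dimension one or two; the content of the corollary is exactly this equivalence.

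One implication is immediate: subspaces of dimension one or two are in particular subspaces, so if no subspace is $\widehat Q'$-invariant then certainly no one- or two-dimensional subspace is. For the converse I would argue by contraposition. Suppose some nonzero subspace $\mathcal I\subseteq\langle\psi^1,\dots,\psi^p\rangle$ is invariant under $\widehat Q'$. First I observe that, because $\mathcal I$ is invariant, the restriction $\widehat Q'|_{\mathcal I}$ is a well-defined $\mathbb R$-linear endomorphism of the finite-dimensional real vector space $\mathcal I$ (the coefficients $\tau$, $\xi$ and $\zeta^{p1}$ of $\widehat Q'$ are real-valued, so $\widehat Q'$ respects the real structure). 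I would then invoke the standard fact that every linear endomorphism of a nonzero finite-dimensional real vector space possesses an invariant subspace of dimension one or two: factoring its characteristic polynomial over $\mathbb R$ into linear and irreducible quadratic factors, choosing a nonzero vector in the kernel of one such irreducible factor $p(\widehat Q'|_{\mathcal I})$, and taking the cyclic subspace it generates yields an invariant subspace whose dimension equals $\deg p\in\{1,2\}$. This subspace is a one- or two-dimensional $\widehat Q'$-invariant subspace of $\langle\psi^1,\dots,\psi^p\rangle$, which contradicts the hypothesis and completes the equivalence.

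The step I expect to require the most care is the reality bookkeeping rather than any computation. Because $\widehat Q'|_{\mathcal I}$ is only $\mathbb R$-linear, a one-dimensional invariant subspace (a real eigenline) need not exist; a complex-conjugate pair of eigenvalues forces the two-dimensional case, which is precisely why both dimensions one and two must be allowed and why the bound cannot be lowered to one. It is worth recording two consistency checks that the plan should reproduce: for $p=1$ the only candidate is the line $\langle\psi^1\rangle$, so the criterion collapses to ``$\psi^1$ is not an eigenfunction of $\widehat Q'$'', recovering condition~2 of Proposition~\ref{PropositionOnSymCondOfExistOfPotSymOfLPE}; and for $p\le2$ every nonzero subspace already has dimension at most two, so the corollary merely restates Theorem~\ref{TheoremOnCriterionOfStrictlyPthOrderPotSymsForLPE}, the genuine economy arising only for $p\ge3$.
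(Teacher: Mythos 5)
Your proposal is correct and takes essentially the same route as the paper: both arguments reduce the corollary, via Theorem~\ref{TheoremOnCriterionOfStrictlyPthOrderPotSymsForLPE}, to the linear-algebra fact that any nonzero finite-dimensional $\widehat Q'$-invariant subspace $\mathcal I$ of $\langle\psi^1,\dots,\psi^p\rangle$ must itself contain an invariant subspace of dimension one or two. The only (immaterial) difference lies in how that standard fact is proved: the paper takes a real eigenvector of the matrix of $\widehat Q'|_{\mathcal I}$, or the real and imaginary parts of an eigenvector of its complexification, while you take a nonzero vector in the kernel of an irreducible real factor of the characteristic polynomial and pass to the cyclic subspace it generates---two renditions of the same argument.
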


\begin{proof}
In view of Theorem~\ref{TheoremOnCriterionOfStrictlyPthOrderPotSymsForLPE}, it is enough to prove 
that a finite-dimensional space~$\mathcal I$ invariant under the action of~$\widehat Q'$ 
has a one- or two-dimensional invariant subspace. Let $q=\dim\mathcal I$ and 
$\varphi^1$,~\dots,~$\varphi^q$ form a basis of~$\mathcal I$. 
Then $Q'[\varphi^\sigma]=\kappa_{\sigma\varsigma}\varphi^\varsigma$, where $\kappa_{\sigma\varsigma}$ are constants. 
Hereafter the indices $\sigma$ and $\varsigma$ run from $1$ to $q$.
If $(c_1,\dots,c_q)$ is an eigenvector of the matrix $(\kappa_{\sigma\varsigma})$ then 
$\langle c_\sigma\varphi^\sigma\rangle$ is a one-dimensional invariant subspace of~$\mathcal I$. 
In the real case, if the matrix $(\kappa_{\sigma\varsigma})$ has no real eigenvalues, 
we can take an eigenvector $(c_1,\dots,c_q)$ of the complexification. Then  
$\langle \varphi^\sigma\mathop{\rm Re}\nolimits c_\sigma, \varphi^\sigma\mathop{\rm Im}\nolimits c_\sigma\rangle$ 
is a two-dimensional invariant subspace of~$\mathcal I$ without proper invariant subspaces.
It is obvious that in the complex case it is enough to consider only one-dimensional subspaces. 
\end{proof}

\begin{note}
Similarly to Proposition~\ref{PropositionOnSymCondOfExistOfPotSymOfLPE}, 
Theorem~\ref{TheoremOnCriterionOfStrictlyPthOrderPotSymsForLPE} can be re-formulated in different terms. 
Thus, a subspace $\langle w^{p,1},\dots,w^{p,q}\rangle$ of the solution space of~$\lceil w^p\rfloor$
is invariant under the action of the operator~$\widehat Q'$ and 
$Q'[w^{p,\sigma}]=\kappa_{\sigma\varsigma}w^{p,\varsigma}$ 
iff $(w^{p,1},\dots,w^{p,q})$ is an invariant solution of the uncoupled system of $q$ copies of~$\lceil w^p\rfloor$ 
with respect to the operator \[\bar Q'=\tau\p_t+\xi\p_x+\zeta^{p1}w^{p,\sigma}\p_{w^{p,\sigma}}
-\kappa_{\sigma\varsigma}w^{p,\varsigma}\p_{w^{p,\sigma}}.\] 
Here $\kappa_{\sigma\varsigma}$ are constants.
The indices $\sigma$ and $\varsigma$ again run from $1$ to $q$.
\end{note}

The characterization of linear second-order parabolic equations possessing potential symmetry algebras 
of a fixed order is given by the following theorem.

\begin{theorem}\label{TheoremOnGenPotSymsOfLPEs}
A linear $(1+1)$-dimensional second-order parabolic equation admits a strictly $p$-th order potential symmetry algebra 
iff it is equivalent with respect to point equivalence transformations to an equation from class~\eqref{EqReducedLPE} 
in which 
\begin{equation}\label{EqFormOfVWithPotSyms}
V=P(x)-2\left(\frac{\bigl(W(\psi^1,\dots,\psi^p)\bigr)_x}{W(\psi^1,\dots,\psi^p)}\right)_x, 
\end{equation}
where $\psi^s=\psi^s(t,x)$ are linearly independent solutions of the equation $\psi_t-\psi_{xx}+P(x)\psi=0$ and 
either $P=\mu x^{-2}$, $\mu=\const$, 
or no subspaces of $\langle\psi^1,\dots,\psi^p\rangle$ are invariant under the action of~$\p_t$
if $P$ is inequivalent to~$\mu x^{-2}$ with respect to point equivalence transformations. 
The associated characteristic subspace is $\langle\alpha^1,\dots,\alpha^p\rangle$, where
\[
\alpha^\varsigma=(-1)^{\varsigma-1}
\frac{W(\psi^1,\dots,\lefteqn{\smash{\diagdown}}\psi^\varsigma,\dots,\psi^p)}{W(\psi^1,\dots,\psi^p)}, 
\quad \varsigma=1,\dots,p,
\]
In the case $P=0$ ($\sim\mu=0$) the potential symmetry algebra contains at least one (two) independent operators which 
are linearly independent up to Lie symmetries and essentially involve the $p$-th order potential. 
For the general value $P=P(x)$ a sufficient condition for
the potential symmetry algebra corresponding to $\langle\psi^1,\dots,\psi^p\rangle$ having strictly $p$-th order is 
the following. For an arbitrary choice of basis in $\langle\psi^1,\dots,\psi^p\rangle$ 
all the modified potential equations~$\lceil w^{s-1}\rfloor$ including $\lceil w^0\rfloor=\lceil u\rfloor$ 
have non-stationary functions as values of the arbitrary elements~$V^{s-1}$.
\end{theorem}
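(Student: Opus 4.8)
The plan is to reduce the existence question for a strictly $p$-th order potential symmetry algebra of a given equation $\lceil u\rfloor$ to a purely algebraic statement about the essential Lie invariance algebra of its modified $p$-level potential equation $\lceil w^p\rfloor$, and then to settle that statement case by case using the Lie--Ovsiannikov group classification. First I would pass from the potential system to the modified potential equation. By Lemma~\ref{LemmaOn1to1CorrespondenceBetweenLieSymsOfGenPotSysAndPthLevelPotEqOfLPE}, Lie symmetries of the $p$-level potential system~\eqref{EqMinIteratedGenPotSysOfLPEs} correspond to Lie symmetries of $\lceil w^p\rfloor$, and by Lemma~\ref{LemmaOnEquivTransOfGenPotFrameOfLPEs} any transformation from $G^\sim_{\smash{[p]}}$ acts on the whole potential frame without mixing orders, so I may reduce $\lceil w^p\rfloor$ to the form $w_t-w_{xx}+Pw=0$. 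Theorem~\ref{TheoremOnDualMultipleDarbouxTrans} then furnishes the parametrization: $\lceil u\rfloor={\rm DT}[\psi^1,\dots,\psi^p]\,\lceil w^p\rfloor$ for linearly independent solutions $\psi^s$ of the $P$-equation, the coefficient of $\lceil u\rfloor$ is given by~\eqref{EqFormOfVWithPotSyms}, and the associated characteristics $\alpha^\varsigma$ are the stated Wronskian ratios, which are basis-independent by Corollary~\ref{CorollaryOnIndependenceOfPotsOnOrderOfCharsForLPEs}.

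With this setup, Theorem~\ref{TheoremOnCriterionOfStrictlyPthOrderPotSymsForLPE} and Corollary~\ref{CorollaryOnCriterionOfStrictlyPthOrderPotSymsForLPE} translate the assertion ``the symmetry algebra is strictly of order $p$'' (Definitions~\ref{DefinitionOfStrictlyPthOrderPorSyms} and~\ref{DefinitionOfStrictlyPthOrderPorSymAlgebra}) into the condition that $\langle\psi^1,\dots,\psi^p\rangle$ possesses no proper nonzero subspace invariant under the first-order operators $\widehat Q'$ associated with the essential Lie symmetries $Q'$ of $\lceil w^p\rfloor$, and this must be verified uniformly over all subspace dimensions $q=1,\dots,p-1$. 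For necessity I would argue that trivial Lie invariance operators of $\lceil w^p\rfloor$ generate only trivial ones of $\lceil u\rfloor$ (item~1 of Theorem~\ref{TheoremOnCriterionOfStrictlyPthOrderPotSymsForLPE}), so a pure potential symmetry forces $\lceil w^p\rfloor$ to possess a nontrivial essential Lie symmetry. By Theorem~\ref{TheoremOnGroupClassificationOfLPEs} this occurs only when $P$ is $G^\sim_1$-equivalent to a stationary function, producing exactly the dichotomy $P=\mu x^{-2}$ versus $P=P(x)$ inequivalent to $\mu x^{-2}$, and via~\eqref{EqFormOfVWithPotSyms} the stated form of $V$.

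For sufficiency in the generic stationary case $P=P(x)$ the essential algebra is one-dimensional, spanned by $\partial_t$, so the subspace criterion reduces precisely to the requirement that no subspace of $\langle\psi^1,\dots,\psi^p\rangle$ be invariant under $\partial_t$, as claimed. The stated sufficient condition phrased through the non-stationarity of every $V^{s-1}$ I would obtain by showing that a $\partial_t$-invariant subspace of $\langle\psi^1,\dots,\psi^p\rangle$ would, on being carried down the chain of Darboux transformations ${\rm DT}[w^{s,s}]$ governing system~\eqref{EqMinModifiedIteratedGenPotSysOfLPEs}, force some intermediate arbitrary element $V^{s-1}$ to be stationary. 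In the exceptional case $P=\mu x^{-2}$ the essential algebra $\langle\partial_t,D,\Pi\rangle$ is an $\mathfrak{sl}_2$-triple, and I would show, generalizing the contradiction $0=x$ obtained in the proof of Theorem~\ref{TheoremOnPotSymOfLPEs}, that the operators $\widehat Q'$ associated with $\partial_t$, $D$ and $\Pi$ admit no common proper invariant subspace in any tuple $\langle\psi^1,\dots,\psi^p\rangle$, so a strictly $p$-th order symmetry always occurs; the count of independent operators (at least one, and at least two when $P=0$, where the additional operators $\partial_x$ and $G$ are available) follows from the same device.

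The hard part will be the sufficiency in the case $P=\mu x^{-2}$: one must verify, for every $p$ and for an \emph{arbitrary} tuple $\psi^1,\dots,\psi^p$, that the $\mathfrak{sl}_2$-action leaves no proper subspace of $\langle\psi^1,\dots,\psi^p\rangle$ invariant, and to do so simultaneously for all subspace dimensions $q=1,\dots,p-1$ as demanded by Definition~\ref{DefinitionOfStrictlyPthOrderPorSymAlgebra}. Matching this subspace-invariance criterion against the explicit coefficient conditions $\zeta^{s-1}_{w^\sigma}$ of the prolonged operators, together with establishing the equivalence between non-invariance and non-stationarity of the intermediate arbitrary elements $V^{s-1}$ through the Crum/Darboux chain, will absorb the bulk of the effort; the remaining reductions are routine once the results cited above are in place.
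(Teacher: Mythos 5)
Your proposal is correct and follows essentially the same route as the paper's proof: reduction to the modified $p$-level potential equation via Lemma~\ref{LemmaOn1to1CorrespondenceBetweenLieSymsOfGenPotSysAndPthLevelPotEqOfLPE} and the prolonged equivalence group $G^\sim_{\smash{[p]}}$, parametrization of $V$ through the Crum/Darboux connection of Theorem~\ref{TheoremOnDualMultipleDarbouxTrans}, translation of strictness into the subspace-invariance criterion of Theorem~\ref{TheoremOnCriterionOfStrictlyPthOrderPotSymsForLPE}, and then a case-by-case check against the Lie--Ovsiannikov classification, with the generic stationary case handled by descending the Darboux chain to a lower-order frame and the case $P=\mu x^{-2}$ (resp.\ $P=0$) handled by showing $\widehat{\p_t}$, $\widehat D$, $\widehat\Pi$ (resp.\ also $\widehat{\p_x}$, $\widehat G$) admit no common finite-dimensional invariant subspace. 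The only piece you leave schematic, the non-existence of such invariant subspaces, is exactly what the paper settles by the determinant argument on $(K^{\sigma\varsigma}-x^2\delta_{\sigma\varsigma})\varphi^\varsigma=0$, which is the generalization of the $0=x$ contradiction you indicate.
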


\begin{proof}
The equivalence relation generated by transformations from~$G^\sim_{\smash{[p]}}$ reduces
the $p$-th order potential frame over class~\eqref{EqGenLPE} to 
the $p$-th order potential frame over class~\eqref{EqReducedLPE} 
in which there is an analogous equivalence relation generated by transformations from~$\hat G^\sim_{\smash{[p]}}$.
In view of Lemma~\ref{LemmaOn1to1CorrespondenceBetweenLieSymsOfGenPotSysAndPthLevelPotEqOfLPE}, 
the investigation of $p$-th order potential symmetries can be replaced by the investigation of 
Lie symmetries of $p$-th order potential equations. 
The projection of the group $\hat G^\sim_{\smash{[p]}}$ to $(t,x,w^p,P)$, where $P:=V^p$, coincides with 
the equivalence group of the class of modified $p$-level potential equations for 
the equations of the form~\eqref{EqReducedLPE}. 
This class is a copy of the class~\eqref{EqReducedLPE} written in terms of $(t,x,w^p,P)$, i.e., 
both these classes have the same (up to the notation of variables, arbitrary elements and group parameters) 
equivalence group. 
Therefore, the Lie--Ovsiannikov classification described in Theorem~\ref{TheoremOnGroupClassificationOfLPEs} 
can be used here. 
Since the initial equations of form~\eqref{EqReducedLPE} and the corresponding modified $p$-level potential equations
are connected via the multiple Darboux transformation, the Crum theorem~\cite{Crum1955,Matveev&Salle1991} provides 
a connection between their arbitrary elements.  
To complete the proof, we have to examine all the inequivalent cases of extension of Lie symmetries 
from Theorem~\ref{TheoremOnGroupClassificationOfLPEs}, re-written in terms of $(t,x,w^p,P)$.
The general case without extension is eliminated in view of item~1 of 
Theorem~\ref{TheoremOnCriterionOfStrictlyPthOrderPotSymsForLPE}.

Let $P=\mu x^{-2}$, i.e., the modified $p$-level potential equation has the form $w^p_t\!-w^p_{xx}\!+\mu x^{-2}w^p\!=\!0.$
We prove by contradiction that there are no finite dimensional subspaces in the solution space 
of~$\lceil w^p\rfloor$ which are simultaneously invariant under the action of the differential operators 
$\widehat{\p_t}$, $\widehat D$ and $\widehat\Pi$ corresponding to $\p_t$, $D$ and $\Pi$. 
Suppose that such a subspace~$\mathcal I$ exists. 
Let $q=\dim\mathcal I$ and $\varphi^1$,~\dots,~$\varphi^q$ form a basis of~$\mathcal I$. 
Then 
\begin{gather*}
\p_t[\varphi^\sigma]=-\varphi^\sigma_t=-\kappa^1_{\sigma\varsigma}\varphi^\varsigma, 
\\[1ex]
D[\varphi^\sigma]=-2t\varphi^\sigma_t-x\varphi^\sigma_x=-\kappa^2_{\sigma\varsigma}\varphi^\varsigma, 
\\[1ex]
\Pi[\varphi^\sigma]=-4t^2\varphi^\sigma_t-4tx\varphi^\sigma_x-(x^2+t)\varphi^\sigma=
-\kappa^3_{\sigma\varsigma}\varphi^\varsigma, 
\end{gather*}
where $\kappa^1_{\sigma\varsigma}$, $\kappa^2_{\sigma\varsigma}$ and $\kappa^3_{\sigma\varsigma}$ are constants. 
In what follows the indices $\sigma$ and $\varsigma$ run from $1$ to $q$. 
The above equality implies that $(K^{\sigma\varsigma}-x^2\delta_{\sigma\varsigma})\varphi^\varsigma=0$, 
where $\delta_{\sigma\varsigma}$ is the Kronecker delta and 
\[K^{\sigma\varsigma}=4t^2\kappa^1_{\sigma\varsigma}-t(4\kappa^2_{\sigma\varsigma}+\delta_{\sigma\varsigma})+
\kappa^3_{\sigma\varsigma}.\] 
For any fixed~$t$, the determinant of the matrix $(K^{\sigma\varsigma}-x^2\delta_{\sigma\varsigma})$ 
vanishes only for a finite number of values of~$x$. Therefore, for any fixed~$t$ 
$\varphi^\varsigma(t,x)$ does not vanish for at most a finite number of values of~$x$, i.e., 
by continuity all $\varphi^\varsigma$ are equal to 0 identically. 
This implies a contradiction. 
 
In the case $\mu=0$ we additionally have the operators~$\p_x$ and~$G$. 
They also possesses no simultaneously invariant finite dimensional subspaces in the solution space 
of~$\lceil w^p\rfloor$ since otherwise by an analogous reasoning we have 
\begin{gather*}
\p_x[\varphi^\sigma]=-\varphi^\sigma_x=-\kappa^1_{\sigma\varsigma}\varphi^\varsigma, 
\\[1ex]
G[\varphi^\sigma]=-2t\varphi^\sigma_x-x\varphi^\sigma=-\kappa^2_{\sigma\varsigma}\varphi^\varsigma,
\end{gather*}
where $\kappa^1_{\sigma\varsigma}$ and $\kappa^2_{\sigma\varsigma}$ are constants. 
Then $(\kappa^2_{\sigma\varsigma}-\kappa^1_{\sigma\varsigma}t-x\delta_{\sigma\varsigma})\varphi^\varsigma=0$ 
and $\varphi^\varsigma=0$ identically, which implies the same contradiction. 

There is only one independent nontrivial Lie symmetry operator~$\p_t$ for the general value $P=P(x)$. 
If it does not induce a strictly $p$-th order potential symmetry operator of~\eqref{EqReducedLPE} then 
it has an invariant subspace~$\mathcal I$ in~$\langle\psi^1,\dots,\psi^p\rangle$. 
Let $q=\dim\mathcal I$ and $\varphi^1$,~\dots,~$\varphi^q$ form a basis of~$\mathcal I$. 
Then $\varphi^\sigma_t=\kappa_{\sigma\varsigma}\varphi^\varsigma$ for some constants~$\kappa_{\sigma\varsigma}$ 
and, therefore, for the modified $(p-q)$-level potential equation obtained from~$\lceil w^p\rfloor$ 
with ${\rm DT}[\varphi^1,\dots,\varphi^q]$ we have
\[
V^{p-q}_t=P_t-2\left(\frac{\bigl(W(\varphi^1,\dots,\varphi^q)\bigr)_x}{W(\varphi^1,\dots,\varphi^q)}\right)_{xt}=
-2(\kappa_{\sigma\sigma}-\kappa_{\sigma\sigma})
\left(\frac{\bigl(W(\varphi^1,\dots,\varphi^q)\bigr)_x}{W(\varphi^1,\dots,\varphi^q)}\right)_x=0,
\]
i.e., $V^{p-q}=V^{p-q}(x)$. As a result, the problem for the $p$-order potential frame is completely reduced 
to the same problem for the $(p-q)$-order potential frame.
\end{proof}

Based on Theorem~\ref{TheoremOnGenPotSymsOfLPEs}, 
we can formulate pure symmetry criteria on the existence of potential symmetries of arbitrary order 
without involving equivalence transformations. 

\begin{corollary}\label{CorollaryOnSymCriteriaOfExestanceOfPotSymsForLPEs}
A linear second-order parabolic equation admits the strictly $p$-th order potential symmetry algebra
associated with its $p$-dimensional characteristic subspace $\langle\alpha^1,\dots,\alpha^p\rangle$ only if 
the corresponding $p$-level potential equation possesses nontrivial Lie symmetry operators. 
If the $p$-level potential equation has more than one independent nontrivial Lie symmetry operators then 
the potential symmetry algebra is of strictly $p$-th order. 
More precisely, if the $p$-level potential equation has more than one (three) 
independent nontrivial Lie symmetry operators then 
for any choice of basis in $\langle\alpha^1,\dots,\alpha^p\rangle$
the potential symmetry algebra contains at least one (two) independent operators which 
essentially involve the $p$-th order potential.
\end{corollary}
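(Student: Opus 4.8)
The plan is to transport every assertion to the modified $p$-level potential equation $\lceil w^p\rfloor$ and then combine the invariant-subspace criterion with the Lie--Ovsiannikov classification. By Lemma~\ref{LemmaOn1to1CorrespondenceBetweenLieSymsOfGenPotSysAndPthLevelPotEqOfLPE} the $p$-order potential symmetry algebra of $\lceil u\rfloor$ is isomorphic to the maximal Lie invariance algebra of $\lceil w^p\rfloor$, so all three claims reduce to statements about the Lie symmetries of $\lceil w^p\rfloor$ and about how its solutions behave under the associated first-order operators $\widehat{Q'}$. First I would settle the necessity part: if $\lceil w^p\rfloor$ admits no nontrivial Lie symmetry operator, then by item~1 of Theorem~\ref{TheoremOnCriterionOfStrictlyPthOrderPotSymsForLPE} every (trivial) Lie symmetry of $\lceil w^p\rfloor$ projects to an ordinary symmetry of $\lceil u\rfloor$ and produces no pure $p$-th order potential symmetry. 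This yields the ``only if'' of the first assertion.

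For the second assertion I would use the classification. By Theorem~\ref{TheoremOnGroupClassificationOfLPEs} and Corollary~\ref{CorollaryOnNumberOfNontrivSymsOfLPEs} the number of independent nontrivial Lie symmetry operators of an equation of the form~\eqref{EqReducedLPE} lies in $\{0,1,3,5\}$; hence ``more than one'' forces $\lceil w^p\rfloor$ to be $G^\sim_1$-equivalent either to the case $P=\mu x^{-2}$, with operators $\p_t$, $D$, $\Pi$, or to the case $P=0$, with $\p_t$, $\p_x$, $G$, $D$, $\Pi$. The proof of Theorem~\ref{TheoremOnGenPotSymsOfLPEs} already shows that in both cases the solution space of $\lceil w^p\rfloor$ contains no nonzero finite-dimensional subspace simultaneously invariant under all the nontrivial operators. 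Feeding this into the computation behind Theorem~\ref{TheoremOnCriterionOfStrictlyPthOrderPotSymsForLPE}, for any basis of $\langle\alpha^1,\dots,\alpha^p\rangle$ and any level $s$ at least one nontrivial operator fails to leave $\langle w^{p,s+1},\dots,w^{p,p}\rangle$ invariant, so the corresponding coefficient $\zeta^{s-1}$ genuinely depends on $w^p$. This is exactly the requirement of Definition~\ref{DefinitionOfStrictlyPthOrderPorSymAlgebra}, so the algebra is of strictly $p$-th order.

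The refined count rests on the reducibility criterion. For a fixed basis an operator $Q'$ essentially involves $w^p$ precisely when $\widehat{Q'}$ does not leave the line $\langle w^{p,p}\rangle$ invariant, because otherwise $\zeta^{p-1,p}=W(w^{p,p},Q'[w^{p,p}])/\bigl(w^{p,p}\,W(w^{p,p})\bigr)$ vanishes and Corollary~\ref{CorollaryOnReducibilityOfPotSymsOfLPE2} truncates $Q'$ to a lower order. I would therefore consider the linear map carrying a nontrivial operator $Q'$ to $\widehat{Q'}w^{p,p}$ taken modulo $\langle w^{p,p}\rangle$; its kernel consists exactly of the operators that do not essentially involve the top potential. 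Since no common invariant subspace exists, $w^{p,p}$ cannot be a common eigenfunction, so this kernel is a proper subspace and at least one operator survives. This gives the ``at least one'' claim and, consistently, it is the instance $s=p$ of Definition~\ref{DefinitionOfStrictlyPthOrderPorSymAlgebra}.

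The ``at least two'' claim for $P=0$ requires showing that this kernel has codimension at least two inside the five-dimensional nontrivial algebra $\langle\p_t,\p_x,G,D,\Pi\rangle$, and this is the main obstacle. I would establish it by exploiting the internal structure of the algebra: the triple $\p_t$, $D$, $\Pi$ spans an $\mathfrak{sl}_2$ acting on the solution space, while $\p_x$ and $G$ interact with it through the standard commutation relations, which tie together the operators that fix $\langle w^{p,p}\rangle$. The point will be that if the operators fixing $\langle w^{p,p}\rangle$ formed a subspace of codimension only one, then the eigenvalue relations for such a large family, combined with the commutation relations of the algebra, would propagate to a genuine finite-dimensional subspace invariant under all five operators, contradicting the non-existence result from Theorem~\ref{TheoremOnGenPotSymsOfLPEs}. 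Finally, since the statement must hold for every basis of $\langle\alpha^1,\dots,\alpha^p\rangle$, I would check that the codimension bound is invariant under the prolonged equivalence transformations, which follows from the $G^\sim_{[p]}$-equivariance of the whole potential frame recorded in Lemma~\ref{LemmaOnEquivTransOfGenPotFrameOfLPEs}.
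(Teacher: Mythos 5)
Your overall route is the one the paper itself intends (the corollary is stated there as a consequence of Theorem~\ref{TheoremOnGenPotSymsOfLPEs} rather than proved separately): reduce via Lemma~\ref{LemmaOn1to1CorrespondenceBetweenLieSymsOfGenPotSysAndPthLevelPotEqOfLPE} to Lie symmetries of the modified $p$-level potential equation, settle necessity with item~1 of Theorem~\ref{TheoremOnCriterionOfStrictlyPthOrderPotSymsForLPE}, and for strictness combine the Lie--Ovsiannikov classification (``more than one'' nontrivial operator forces the canonical forms $P=\mu x^{-2}$ or $P=0$ up to equivalence) with the non-existence of common invariant finite-dimensional subspaces established inside the proof of Theorem~\ref{TheoremOnGenPotSymsOfLPEs}, transported back by the $G^\sim_{[p]}$-equivariance of Lemma~\ref{LemmaOnEquivTransOfGenPotFrameOfLPEs}. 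Those parts are sound, modulo one imprecision: non-invariance of $\langle w^{p,s+1},\dots,w^{p,p}\rangle$ yields $\zeta^{s-1}_{w^\sigma}\ne0$ for \emph{some} $\sigma>s-1$, not necessarily $\sigma=p$; this is exactly what Definition~\ref{DefinitionOfStrictlyPthOrderPorSymAlgebra} requires, so no harm results. The ``at least one'' count is also correct: if every nontrivial operator had $w^{p,p}$ as an eigenfunction, $\langle w^{p,p}\rangle$ would be a common invariant line.

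The genuine gap is in your ``at least two'' step for the case of five nontrivial operators. Your proposed mechanism --- that a codimension-one stabilizer of $\langle w^{p,p}\rangle$ would ``propagate to a finite-dimensional subspace invariant under all five operators'' --- does not follow: invariance under a codimension-one subalgebra never forces invariance under the remaining operator (a Borel subalgebra of $\langle\p_t,D,\Pi\rangle$ together with $\langle\p_x,G\rangle$ does not generate the missing lowering operator under commutation). What the proof of Theorem~\ref{TheoremOnGenPotSymsOfLPEs} actually supplies is two \emph{separate} statements: $\{\p_t,D,\Pi\}$ admit no common invariant finite-dimensional subspace, and, for $P=0$, neither does the pair $\{\p_x,G\}$ alone. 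The intended and much shorter argument uses both: the stabilizer of $\langle w^{p,p}\rangle$ can contain neither all of $\langle\p_t,D,\Pi\rangle$ nor all of $\langle\p_x,G\rangle$, so one may pick one operator outside the stabilizer from each of these two subspaces; since the subspaces intersect trivially, the two operators are automatically independent, and each essentially involves the $p$-th order potential. If you do want your stronger codimension-$\geqslant2$ statement, it can be repaired along the lines you hint at: the stabilizer is a subalgebra (commutators of operators fixing the line annihilate $w^{p,p}$), and because $\langle\p_x,G\rangle$ is an irreducible two-dimensional module for the $\mathfrak{sl}_2$ spanned by $\p_t,D,\Pi$, every codimension-one subalgebra of the five-dimensional algebra must contain $\langle\p_x,G\rangle$; that would make $\langle w^{p,p}\rangle$ invariant under both $\p_x$ and $G$, contradicting the \emph{pair} non-existence result --- not the five-operator statement you appeal to.
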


\section{On number and order of potential symmetries}\label{SectionOnNumberAndOrderOfPotSyms}

The title of the section is slightly vague. 
Our aim here is to investigate the following basic questions about potential symmetries 
of linear parabolic equations: 
Can a fixed equation from class~\eqref{EqGenLPE} possess an infinite series of potential symmetry algebras? 
How can equations having potential symmetry algebras of all orders be constructed?
What orders of potential symmetries are possible for a fixed equation?
When are the orders of potential symmetries bounded?
We study only certain examples that, nevertheless, allow us to formulate quite general statements 
answering some of these questions. 

It is easy to construct equations having potential symmetry algebras of all orders. 
Indeed, we take the linear heat equation $w_t=w_{xx}$ as a potential equation (i.e., $P=0$) 
and choose, for any $p\in\mathbb N$, the $p$-tuple $\bar\psi=(P_0,\dots,P_{p-1})$ of its solutions. 
Here $P_k$ is the canonical heat polynomial of degree~$k$, 
\begin{gather*}
P_{2m}(t,x)=\frac{x^{2m}}{(2m)!}+\frac{t}{1!} \frac{x^{2m-2}}{(2m-2)!}+
\frac{t^2}{2!} \frac{x^{2m-4}}{(2m-4)!}+\cdots +
\frac{t^{m-1}}{(m-1)!}\frac{x^2}{2!}+\frac{t^m}{m!},
\\
P_{2m+1}(t,x)=\frac{x^{2m+1}}{(2m+1)!}+\frac{t}{1!}\frac{x^{2m-1}}{(2m-1)!}
+\frac{t^2}{2!}\frac{x^{2m-3}}{(2m-3)!}+ \cdots +
\frac{t^{m-1}}{(m-1)!}\frac{x^3}{3!}+\frac{t^m}{m!}\frac{x}{1!}, 
\end{gather*}
$k,m\in\mathbb N\cup\{0\}$. Note that $\p P_{k+1}/\p x=P_k$ hence $\p^kP_k/\p x^k=1$, $\p^kP_{k+1}/\p x^k=x$. 
A~direct calculation implies that $W(P_0,\dots,P_{p-1})=1$ 
and that the corresponding value of the arbitrary element~$V$ equals 0 for any $p\in\mathbb N$.
Therefore, ${\rm DT}[P_0,\dots,P_{p-1}]\,\lceil w\rfloor=\lceil u\rfloor$, 
where $\lceil u\rfloor$ also is the linear heat equation $u_t=u_{xx}$. 
In fact the multiple Darboux transformation ${\rm DT}[P_0,\dots,P_{p-1}]$ is nothing but $p$-order 
differentiation with respect to~$x$: $u={\rm DT}[P_0,\dots,P_{p-1}]\,w=\p^pw/\p x^p$. 
Let $(\alpha^{p1},\dots,\alpha^{pp})$ be the characteristic tuple of~$\lceil u\rfloor$, 
dual to the solution tuple $(P_0,\dots,P_{p-1})$ of~$\lceil w\rfloor$, i.e.,
\[\alpha^{ps}=(-1)^{s-1} W(P_1,\dots,P_{p-s}),\quad s=1,\dots,p-1,\quad \alpha^{pp}=(-1)^{p-1}.\] 
(See Corollary~\ref{CorollaryOnDualMultipleDarbouxTrans}.) 
The Wronskians $W^q=W(P_1,\dots,P_q)$, $q\in\mathbb N$, and $W^0:=1$ are solutions of the backward heat equation and 
additionally satisfy the conditions \[\p W^q/\p x=W^{q-1},\quad W^q(0,0)=0.\] Therefore, 
$W^q=P_q(-t,x)$ is the backward heat polynomial of order~$q$
and \[\alpha^{ps}=(-1)^{s-1}P_{p-s}(-t,x),\quad s=1,\dots,p.\]
In view of Theorems~\ref{TheoremOnDualMultipleDarbouxTrans} and~\ref{TheoremOnGenPotSymsOfLPEs},
for any $p\in\mathbb N$ the potential symmetry algebra~$\mathfrak g_p$ of~$\lceil u\rfloor$, associated with 
the $p$-dimensional characteristic subspace $\langle\alpha^{p1},\dots,\alpha^{pp}\rangle$, 
is of strictly $p$-th order. 
For any choice of basis in $\langle\alpha^{p1},\dots,\alpha^{pp}\rangle$
the potential symmetry algebra contains at least two independent operators which 
essentially involve the $p$-th order potential. 
Summarizing these results, we can formulate the following statement. 

\begin{proposition}\label{PropositionOnInfiniteSeriesOfPotSymAlgsForLHE}
The linear heat equation admits an infinite series $\{\mathfrak g_p,\, p\in\mathbb N\}$ 
of potential symmetry algebras. 
For any $p\in\mathbb N$ the algebra $\mathfrak g_p$ is of strictly $p$-th potential order and is 
associated with $p$-tuples of the linearly independent lowest order polynomial 
solutions of the backward heat equation. 
Moreover, it is the standard $p$-th prolongation, with respect to only~$x$, 
of the Lie invariance algebra~$\mathfrak g_0$, re-written in terms of $(t,x,w)$ and, 
hence, is isomorphic to~$\mathfrak g_0$. 
\end{proposition}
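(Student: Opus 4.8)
The plan is to assemble the statement from three ingredients already established in the excerpt, and then verify the single genuinely new claim, namely that $\mathfrak{g}_p$ is exactly the $p$-th $x$-prolongation of $\mathfrak{g}_0$. First I would note that the existence of the infinite series $\{\mathfrak{g}_p\}$ and the fact that each $\mathfrak{g}_p$ is of strictly $p$-th order follow immediately from the construction preceding the proposition together with Theorems~\ref{TheoremOnDualMultipleDarbouxTrans} and~\ref{TheoremOnGenPotSymsOfLPEs}: since $W(P_0,\dots,P_{p-1})=1$ and the corresponding arbitrary element $V$ vanishes, the modified $p$-level potential equation $\lceil w^p\rfloor$ is again the linear heat equation, and ${\rm DT}[P_0,\dots,P_{p-1}]$ is plain $p$-fold $x$-differentiation. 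The case $P=0$ of Theorem~\ref{TheoremOnGenPotSymsOfLPEs} then yields strictly $p$-th order and the existence of at least two independent operators essentially involving the $p$-th order potential, for any choice of basis.

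The substantive part is the concrete identification of $\mathfrak{g}_p$ with a prolongation of $\mathfrak{g}_0$. Here I would invoke Lemma~\ref{LemmaOn1to1CorrespondenceBetweenLieSymsOfGenPotSysAndPthLevelPotEqOfLPE}: the maximal Lie invariance algebra of the $p$-level potential system is isomorphic to that of the associated $p$-level modified potential equation $\lceil w^p\rfloor$, and the latter equation is literally the linear heat equation written in the variables $(t,x,w)$. Hence its maximal Lie invariance algebra is a copy of $\mathfrak{g}_0$ with $u$ replaced by $w$, which immediately gives the isomorphism $\mathfrak{g}_p\cong\mathfrak{g}_0$. To get the sharper prolongation statement, I would trace through the backward recursive formula for the coefficients $\theta^{s-1}$ (equivalently the formula \eqref{EqExpressionForCoeefsOfGenSymOpsOfLPEs}) in this special case: because every $H^s=W^sW^{s-2}/(W^{s-1})^2$ reduces to $1$ when all Wronskians of the $P_k$ equal $1$, the recursion $f^{s-1}=f^s_x/H^s$ collapses to $f^{s-1}=\p_x f^s$, so passing from $w=f^p$ down to $u=f^0$ is exactly $p$-fold differentiation in $x$. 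Consequently the coefficient $\eta$ of $\p_u$ in each operator is obtained from the coefficient of $\p_w$ in the corresponding element of $\mathfrak{g}_0$ precisely by the standard $p$-th $x$-prolongation, which is the claim.

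The main obstacle I anticipate is bookkeeping rather than conceptual: I must confirm that the vanishing of all the relevant Wronskians and of $V$ is not merely a coincidence for small $p$ but holds for all $p\in\mathbb{N}$, and that the recursive prolongation formula genuinely degenerates to bare differentiation with no residual lower-order correction terms. For the Wronskian computation I would use $\p P_{k+1}/\p x=P_k$ together with the normalization $\p^k P_k/\p x^k=1$ to see that the Wronski matrix of $(P_0,\dots,P_{p-1})$ is lower-triangular with unit diagonal, giving $W(P_0,\dots,P_{p-1})=1$ directly; the analogous property for the backward heat polynomials $W^q=P_q(-t,x)$ secures the dual characteristics $\alpha^{ps}$. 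For the degeneration of the prolongation, the key simplification is that $A=1$, $B=0$, $P=0$ force $B^s=0$ and $H^s=1$, so the extra terms $\tau H^s_t/H^s$ and $\xi H^s_x/H^s$ in the recursion vanish identically.

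Once these two verifications are in place, the proof is a short synthesis: the construction supplies the family and its dual characteristics, Theorem~\ref{TheoremOnGenPotSymsOfLPEs} supplies strict order and the two independent operators, Lemma~\ref{LemmaOn1to1CorrespondenceBetweenLieSymsOfGenPotSysAndPthLevelPotEqOfLPE} supplies the isomorphism $\mathfrak{g}_p\cong\mathfrak{g}_0$, and the collapse of the recursion to $f^{s-1}=\p_x f^s$ upgrades this to the precise statement that $\mathfrak{g}_p$ is the $p$-th $x$-prolongation of $\mathfrak{g}_0$ re-written in $(t,x,w)$. I would close by remarking that, since prolongation is an algebra isomorphism onto its image here, the structure of $\mathfrak{g}_p$ is identical to that of $\mathfrak{g}_0$ for every $p$, completing the argument.
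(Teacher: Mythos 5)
Your first paragraph and the isomorphism part of your second paragraph follow the paper's own route: the heat-polynomial construction with $W(P_0,\dots,P_{p-1})=1$ and $V=0$, Theorem~\ref{TheoremOnGenPotSymsOfLPEs} in the case $P=0$ for the strict order and the two independent operators, and Lemma~\ref{LemmaOn1to1CorrespondenceBetweenLieSymsOfGenPotSysAndPthLevelPotEqOfLPE} for transferring symmetries between the potential system and the modified potential equation. The gap sits exactly in the step you single out as the substantive one. You claim that every $H^s=W^sW^{s-2}/(W^{s-1})^2$ reduces to $1$ ``when all Wronskians of the $P_k$ equal $1$'', so that $f^p=w$ and the recursion of~\eqref{EqMinIteratedGenPotSysOfLPEs} collapses to $f^{s-1}=\p_xf^s$. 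But the Wronskians $W^s$ entering $H^s$ are Wronskians of the \emph{characteristics} $\alpha^{p1},\dots,\alpha^{ps}$ of the initial equation, not of the solutions $P_0,\dots,P_{p-1}$ of the potential equation; these two tuples are dual to each other, not equal. For the dual tuple actually constructed in the section, $\alpha^{ps}=(-1)^{s-1}P_{p-s}(-t,x)$, the backward heat polynomials appear in \emph{decreasing} degree, and the partial Wronskians are not constant: already $H^1=\beta^{0,1}=\alpha^{p1}=P_{p-1}(-t,x)$, non-constant for every $p\geqslant2$, even though $A=1$, $B=0$ and $P=0$ all hold here, which also refutes your parenthetical claim that these conditions force $H^s=1$. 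Concretely, for $p=2$ one has $\alpha^{21}=x$, $\alpha^{22}=-1$, hence $H^1=x$, $H^2=x^{-2}$, $f^2=w/x$, $f^1=xw_x-w$; neither $f^p=w$ nor $f^{s-1}=\p_xf^s$ holds, although the composition still yields $u=w_{xx}$.

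The conclusion itself is correct, and the argument can be repaired in either of two short ways. The paper's way is to bypass the intermediate recursion entirely and use the composed transformation: since $\p_xP_{k+1}=P_k$ and $P_0=1$, the Wronski matrix of $(P_0,\dots,P_{p-1},w)$ is triangular with unit diagonal, so ${\rm DT}[P_0,\dots,P_{p-1}]\,w=W(P_0,\dots,P_{p-1},w)=\p^pw/\p x^p$; thus the correspondence underlying Lemma~\ref{LemmaOn1to1CorrespondenceBetweenLieSymsOfGenPotSysAndPthLevelPotEqOfLPE} is exactly $u=\p_x^pw$, and the coefficient of $\p_u$ in each operator of~$\mathfrak g_p$ is given by the standard $p$-th $x$-prolongation of the corresponding operator of~$\mathfrak g_0$ written in $(t,x,w)$. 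Alternatively, you may keep your step-by-step collapse, but first pass from the dual ordering to the increasing-degree basis $\alpha^s=P_{s-1}(-t,x)$ of the same characteristic subspace, which is legitimate by Corollary~\ref{CorollaryOnIndependenceOfPotsOnOrderOfCharsForLPEs} and Note~\ref{NoteOnCorrespodenceOfCharSubspaceForLPEs}; for that ordering $\p_xP_k(-t,x)=P_{k-1}(-t,x)$ makes every partial Wronskian $W(\alpha^1,\dots,\alpha^s)$ equal to $1$, hence $\beta^{s-1,s}=1$, $H^s=1$, $f^p=w^p$, and your degeneration of the recursion to bare differentiation becomes valid.
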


Other examples can be constructed in a similar way. 
Thus, for the same potential equation and the $p$-tuple $\bar\psi=(P_0,P_1,\dots,P_{p-2},P_p)$ 
we have $W(P_0,P_1,\dots,P_{p-2},P_p)=x$ 
and the corresponding value of the arbitrary element~$V$ is equal to $2x^{-2}$ for any $p\in\mathbb N$.
Therefore, ${\rm DT}[P_0,P_1,\dots,P_{p-2},P_p]\,\lceil w\rfloor=\lceil u\rfloor$, 
where $\lceil u\rfloor$ denotes the equation $u_t-u_{xx}+2x^{-2}u=0$.
Let $(\alpha^{p1},\dots,\alpha^{pp})$ be the characteristic tuple of~$\lceil u\rfloor$ 
dual to the solution tuple $(P_0,P_1,\dots,P_{p-2},P_p)$ of~$\lceil w\rfloor$, i.e.,
\[
\alpha^{ps}=(-1)^{s-1}x^{-1}W(P_1,\dots,P_{p-1-s},P_{p-s+1}), \quad s=1,\dots,p-1,\quad 
\alpha^{pp}=(-1)^{p-1}x^{-1}. 
\]
The Wronskians $\tilde W^q=W(P_1,\dots,P_{q-2},P_q)$, $q\geqslant2$, 
satisfy the conditions $\p\tilde W^q/\p x=xW^{q-2}$, $\tilde W^q(0,0)=0$. 
The quotients $\tilde W^q/x$ are solutions of the adjoint equation \mbox{$v_t+v_{xx}-2x^{-2}v=0$}.
Therefore, $\tilde W^q=xW^q_x-W^q$ and \[\alpha^{ps}=(-1)^{s-1}(W^{p-s}-x^{-1}W^{p-s+1}),\quad s=1,\dots,p-1.\]
Here $W^q$ denotes the backward heat polynomial $P_q(-t,x)$ of order~$q$.
Analogously to the previous example,
for any $p\in\mathbb N$ the potential symmetry algebra~$\tilde{\mathfrak g}_p$ of~$\lceil u\rfloor$ associated with 
the $p$-dimensional characteristic subspace $\langle\alpha^{p1},\dots,\alpha^{pp}\rangle$, 
is of strictly $p$-th order. 
For any choice of basis in $\langle\alpha^{p1},\dots,\alpha^{pp}\rangle$
the potential symmetry algebra contains at least two independent operators which 
essentially involve the $p$-th order potential. 
As a result, the equation~$\lceil u\rfloor$ has an infinite series $\{\tilde{\mathfrak g}_p,\, p\in\mathbb N\}$ 
of potential symmetry algebras isomorphic to the Lie invariance algebra~$\mathfrak g_0$ of the linear heat equation. 

The described construction can be generalized. 
Let $\lceil u\rfloor$ be the equation $u_t-u_{xx}+Vu=0$, 
where the function~$V$ has the form~\eqref{EqFormOfVWithPotSyms} with $P=0$, i.e., 
$\psi^s=\psi^s(t,x)$ are linearly independent solutions of the linear heat equation $\psi_t=\psi_{xx}$. 
This means that $\lceil u\rfloor$ is the image of the linear heat equation $\lceil w\rfloor$ 
under the Darboux transformation ${\rm DT}[\psi^1,\dots,\psi^p]$.
Then for any $q\in\mathbb N$
\[
\lceil u\rfloor={\rm DT}[P_0,\dots,P_{q-1},\tilde\psi^{q1},\dots,\tilde\psi^{qp}]\,\lceil w\rfloor,
\]
where $\tilde\psi^{qs}$ for any fixed $s$ is a solution of the linear heat equation, being a preimage of $\psi^s$ 
under the Darboux transformation ${\rm DT}[P_0,\dots,P_{q-1}]$, i.e., ${\rm DT}[P_0,\dots,P_{q-1}]\tilde\psi^{qs}=\psi^s$.
The function~$\tilde\psi^{qs}$ is found by $q$-fold integration of the function~$\psi^s$ 
with respect to~$x$ with a special choice of an `integration constant' depending on~$t$. 
The best way is to employ of the recursive formula 
$\tilde\psi^{qs}_x=\tilde\psi^{q-1,s}$, $\tilde\psi^{qs}_t=\tilde\psi^{q-1,s}_x$, $\tilde\psi^{0s}:=\psi^s$. 
In view of Proposition~\ref{PropositionOnInfiniteSeriesOfPotSymAlgsForLHE}, this implies the following statement.

\begin{corollary}\label{CorollaryOnInfiniteSeriesOfPotSymAlgsForEqsEquivToLHE}
Suppose that a linear $(1+1)$-dimensional second-order parabolic equation is equivalent 
with respect to point equivalence transformations to an equation from class~\eqref{EqReducedLPE} 
in which 
\[
V=-2\left(\frac{\bigl(W(\psi^1,\dots,\psi^p)\bigr)_x}{W(\psi^1,\dots,\psi^p)}\right)_x, 
\]
where $\psi^s=\psi^s(t,x)$ are linearly independent solutions of the linear heat equation $\psi_t=\psi_{xx}$. 
Then this equation possesses an infinite series $\{\tilde{\mathfrak g}_{p+k},\, k\in\mathbb N\cup\{0\}\}$ 
of potential symmetry algebras. 
Each algebra $\tilde{\mathfrak g}_{p+k}$ from the series is of strictly $(p+k)$-th potential order, 
is isomorphic to the Lie invariance algebra~$\mathfrak g_0$ of the linear heat equation 
and contains at least two operators essentially involving potentials of the $(p+k)$-th level.
\end{corollary}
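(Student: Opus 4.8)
The plan is to reduce the claim to Proposition~\ref{PropositionOnInfiniteSeriesOfPotSymAlgsForLHE} via a multiple Darboux transformation, exactly along the lines of the two worked examples preceding the statement. By hypothesis the given equation is $G^\sim_1$-equivalent to an equation $\lceil u\rfloor$ of the form~\eqref{EqReducedLPE} with $P=0$, so that $\lceil u\rfloor={\rm DT}[\psi^1,\dots,\psi^p]\,\lceil w\rfloor$, where $\lceil w\rfloor$ is the linear heat equation $w_t=w_{xx}$ and the $\psi^s$ are linearly independent solutions of $\lceil w\rfloor$. Since potential symmetries are preserved (together with their strict order and the dimension counts) under the prolonged equivalence transformations from $\hat G^\sim_{\smash{[p]}}$, as established in Lemma~\ref{LemmaOnEquivTransOfGenPotFrameOfLPEs} and the discussion following it, it suffices to prove the assertion for $\lceil u\rfloor$ itself.

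First I would fix an arbitrary $k\in\mathbb N\cup\{0\}$ and exhibit $\lceil u\rfloor$ as a $(p+k)$-level potential equation over the heat equation. The key step is the factorization
\[
\lceil u\rfloor={\rm DT}[P_0,\dots,P_{k-1},\tilde\psi^{k1},\dots,\tilde\psi^{kp}]\,\lceil w\rfloor,
\]
where $\tilde\psi^{ks}$ is a preimage of $\psi^s$ under ${\rm DT}[P_0,\dots,P_{k-1}]$, i.e.\ a heat solution obtained by $k$-fold $x$-integration of $\psi^s$ using the recursion $\tilde\psi^{qs}_x=\tilde\psi^{q-1,s}$, $\tilde\psi^{qs}_t=\tilde\psi^{q-1,s}_x$, $\tilde\psi^{0s}:=\psi^s$. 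The heat polynomials $P_0,\dots,P_{k-1}$ together with the $\tilde\psi^{ks}$ are linearly independent heat solutions, so the multiple Darboux transformation is well defined and, by the composition property of Darboux transformations together with the Crum theorem~\cite{Crum1955,Matveev&Salle1991}, the composite map applied to $\lceil w\rfloor$ returns $\lceil u\rfloor$. This is the content I would verify: that ${\rm DT}[P_0,\dots,P_{k-1}]$ pulls $\lceil u\rfloor$ back to ${\rm DT}[\psi^1,\dots,\psi^p]\,\lceil w\rfloor$ written over the intermediate potential equation, consistent with Note on the connection between arbitrary steps of the iteration procedure.

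Having realized $\lceil u\rfloor$ as the image of the heat equation under a $(p+k)$-fold Darboux transformation, I would invoke Theorem~\ref{TheoremOnGenPotSymsOfLPEs} with $P=0$. Since $P=0$ is equivalent to $\mu x^{-2}$ with $\mu=0$, the theorem guarantees that the potential symmetry algebra $\tilde{\mathfrak g}_{p+k}$ associated with the $(p+k)$-dimensional solution subspace $\langle P_0,\dots,P_{k-1},\tilde\psi^{k1},\dots,\tilde\psi^{kp}\rangle$ is of strictly $(p+k)$-th order, and that for any choice of basis the algebra contains at least two independent operators essentially involving the top-level potential. The isomorphism with $\mathfrak g_0$ follows as in Proposition~\ref{PropositionOnInfiniteSeriesOfPotSymAlgsForLHE}: by Lemma~\ref{LemmaOn1to1CorrespondenceBetweenLieSymsOfGenPotSysAndPthLevelPotEqOfLPE} the potential symmetry algebra is isomorphic to the Lie invariance algebra of the $(p+k)$-level modified potential equation, which here is the heat equation $\lceil w\rfloor$ itself, whose essential algebra is $\mathfrak g_0$.

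The main obstacle I anticipate is not the symmetry bookkeeping but the verification that the strict-order and two-operator conclusions of Theorem~\ref{TheoremOnGenPotSymsOfLPEs} genuinely apply for every $k$, which amounts to checking that no proper subspace of $\langle P_0,\dots,P_{k-1},\tilde\psi^{k1},\dots,\tilde\psi^{kp}\rangle$ is invariant under the operators $\widehat{\p_t}$, $\widehat{\p_x}$ and $\widehat G$ in a way that would collapse the order (cf.\ Corollary~\ref{CorollaryOnCriterionOfStrictlyPthOrderPotSymsForLPE}). This is precisely the $\mu=0$ branch of the proof of Theorem~\ref{TheoremOnGenPotSymsOfLPEs}, where the absence of finite-dimensional invariant subspaces was established by the contradiction argument using $(\kappa^2_{\sigma\varsigma}-\kappa^1_{\sigma\varsigma}t-x\delta_{\sigma\varsigma})\varphi^\varsigma=0$. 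I would simply appeal to that argument rather than redo it, since the hypothesis $P=0$ places us squarely in case~3 of Theorem~\ref{TheoremOnGroupClassificationOfLPEs} re-written in terms of $(t,x,w^{p+k})$. The remaining points --- well-definedness of the integrations producing $\tilde\psi^{ks}$ and linear independence of the enlarged solution tuple --- are routine and handled exactly as in the two explicit examples immediately preceding the corollary.
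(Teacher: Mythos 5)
Your proposal is correct and follows essentially the same route as the paper: the factorization $\lceil u\rfloor={\rm DT}[P_0,\dots,P_{k-1},\tilde\psi^{k1},\dots,\tilde\psi^{kp}]\,\lceil w\rfloor$, with the preimages $\tilde\psi^{ks}$ produced by the recursion $\tilde\psi^{qs}_x=\tilde\psi^{q-1,s}$, $\tilde\psi^{qs}_t=\tilde\psi^{q-1,s}_x$, is exactly the paper's construction, and the conclusions on strict order, the two operators and the isomorphism with $\mathfrak g_0$ are then drawn from Theorem~\ref{TheoremOnGenPotSymsOfLPEs} together with Lemma~\ref{LemmaOn1to1CorrespondenceBetweenLieSymsOfGenPotSysAndPthLevelPotEqOfLPE}, which is the content of Proposition~\ref{PropositionOnInfiniteSeriesOfPotSymAlgsForLHE} that the paper cites. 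The only difference is cosmetic: you appeal to the $\mu=0$ invariant-subspace argument inside the proof of Theorem~\ref{TheoremOnGenPotSymsOfLPEs} explicitly, whereas the paper leaves this implicit in its reference to the proposition.
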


It is easy to see that a basis of the above consideration is formed by the construction of 
the infinite series of the auto-Darboux transformations $\{{\rm DT}[P_0,\dots,P_{p-1}],p\in\mathbb N\}$
for the linear heat equation. 
Similar series of auto-Darboux transformations exist for all the equations of the form 
\begin{equation}\label{EqMuX-2LPE}
u_t-u_{xx}+\mu x^{-2}u=0.
\end{equation}
To show this, it is sufficient to prove that equation~\eqref{EqMuX-2LPE} has an infinite series of 
linearly independent solutions such that 
the Wronskian of $p$ first solutions from the series is constant for an infinite set of~$p$'s.
We restrict ourself to solutions which are polynomials in~$t$. 
Let $\varphi^{0i}$, $i=1,2$, be linearly independent stationary solutions of~\eqref{EqMuX-2LPE}, i.e., 
\[\varphi^{0i}_t=0,\quad\varphi^{0i}_{xx}=\mu x^{-2}\varphi^{0i}.\] 
The following values of~$\varphi^{0i}$ can be taken:
\begin{gather*}
\varphi^{01}=|x|^{\nu_-}, \quad \varphi^{02}=|x|^{\nu_+}, \quad \nu_\pm=\frac{1\pm\sqrt{1+4\mu}}2, 
\quad\mbox{if}\quad 1+4\mu>0,\\[.3ex] 
\varphi^{01}=|x|^{1/2}, \quad \varphi^{02}=|x|^{1/2}\ln|x| \quad\mbox{if}\quad 1+4\mu=0,\\[1ex] 
\varphi^{01}=|x|^{1/2}\cos\kappa\ln|x|, \quad \varphi^{02}=|x|^{1/2}\sin\kappa\ln|x|, \quad \kappa=\sqrt{-1/4-\mu},
\quad\mbox{if}\quad 1+4\mu<0.
\end{gather*}
Consider the functions $\varphi^{ki}=\widehat\Pi^k\varphi^{0i}$, where $\widehat\Pi=-4t^2\p_t-4tx\p_x-x^2-2t$. 
They are linearly independent and polynomial in~$t$. 
They are solutions of~\eqref{EqMuX-2LPE} as a result of the action of the symmetry operator~$\Pi$ on solutions. 
The Wronskian $W^k=W(\varphi^{01},\varphi^{02},\dots,\varphi^{k1},\varphi^{k2})$ does not depend on~$t$ 
since $\varphi^{ki}_t$ necessarily is a linear combination of the functions $\varphi^{k'\!i'\!}$, $k'<k$. 
Therefore, it is enough to evaluate $W^k$ for a single value of~$t$. 
Putting~$t=0$, we prove by induction that $W^k_x=0$, i.e., $W^k$ is a nonzero constant. 
As a result, ${\rm DT}[\varphi^{01},\varphi^{02},\dots,\varphi^{k1},\varphi^{k2}]$ is an auto-Darboux transformation 
of equation~\eqref{EqMuX-2LPE} for any $k\in\mathbb N\cup\{0\}$. 

\begin{proposition}\label{PropositionOnInfiniteSeriesOfPotSymAlgsForMuX-2LPE}
Equation~\eqref{EqMuX-2LPE} admits an infinite series $\{\hat{\mathfrak g}_{2q},\, q\in\mathbb N\}$ 
of potential symmetry algebras. 
For any $q\in\mathbb N$ the algebra $\hat{\mathfrak g}_{2q}$ is of strictly $2q$-th potential order and is 
associated with $2q$-tuples of the linearly independent solutions of~\eqref{EqMuX-2LPE} 
which are lowest order polynomials in~$t$. 
Moreover, it is isomorphic to the Lie invariance algebra~$\hat{\mathfrak g}_0$ of equation~\eqref{EqMuX-2LPE}
and contains at least one operator essentially involving potentials of the $2q$-th level.
\end{proposition}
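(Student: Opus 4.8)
The plan is to mirror the argument for Proposition~\ref{PropositionOnInfiniteSeriesOfPotSymAlgsForLHE} exactly, replacing the heat-polynomial construction by the $\widehat\Pi$-generated family $\varphi^{ki}$ that has just been built. The whole statement follows by feeding the established auto-Darboux transformations into Theorems~\ref{TheoremOnDualMultipleDarbouxTrans} and~\ref{TheoremOnGenPotSymsOfLPEs}. First I would fix $q\in\mathbb N$ and set $p=2q$, and take as potential equation $\lceil w^p\rfloor$ the equation~\eqref{EqMuX-2LPE} itself (so here $P=\mu x^{-2}$ in the notation of Theorem~\ref{TheoremOnGenPotSymsOfLPEs}). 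As established immediately above the statement, the $p$-tuple $\bar\psi=(\varphi^{01},\varphi^{02},\dots,\varphi^{q-1,1},\varphi^{q-1,2})$ consists of linearly independent solutions of~\eqref{EqMuX-2LPE} with constant Wronskian $W^{q-1}\ne0$, so that ${\rm DT}[\varphi^{01},\dots,\varphi^{q-1,2}]$ is an auto-Darboux transformation of~\eqref{EqMuX-2LPE}. Consequently, with $\lceil u\rfloor=\lceil w^p\rfloor=$~\eqref{EqMuX-2LPE}, the hypotheses of Theorem~\ref{TheoremOnGenPotSymsOfLPEs} are met with $V=P=\mu x^{-2}$: formula~\eqref{EqFormOfVWithPotSyms} holds with vanishing second term precisely because $W(\psi^1,\dots,\psi^p)=W^{q-1}$ is constant.

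Next I would invoke the branch $P=\mu x^{-2}$ of Theorem~\ref{TheoremOnGenPotSymsOfLPEs}, which does not require any non-invariance condition on $\langle\psi^1,\dots,\psi^p\rangle$. That theorem, applied to this characteristic subspace, yields directly that the associated potential symmetry algebra $\hat{\mathfrak g}_{2q}$ is of strictly $2q$-th order. The characteristic subspace $\langle\alpha^{p1},\dots,\alpha^{pp}\rangle$ is the dual tuple of $\bar\psi$ in the sense of Corollary~\ref{CorollaryOnDualMultipleDarbouxTrans}. For the isomorphism statement I would use Lemma~\ref{LemmaOn1to1CorrespondenceBetweenLieSymsOfGenPotSysAndPthLevelPotEqOfLPE}: the maximal Lie invariance algebra of the $p$-level potential system is isomorphic to that of the modified $p$-level potential equation $\lceil w^p\rfloor$, which is again~\eqref{EqMuX-2LPE}; hence $\hat{\mathfrak g}_{2q}\cong\hat{\mathfrak g}_0$, the Lie invariance algebra of~\eqref{EqMuX-2LPE} (case~2 of Theorem~\ref{TheoremOnGroupClassificationOfLPEs}, of dimension~$4$ for $\mu\ne0$).

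For the count of operators essentially involving the top-level potential, I would apply Corollary~\ref{CorollaryOnSymCriteriaOfExestanceOfPotSymsForLPEs} together with item~2 of Theorem~\ref{TheoremOnCriterionOfStrictlyPthOrderPotSymsForLPE}. Since $\lceil w^p\rfloor$ admits the three nontrivial essential operators $\p_t$, $D$, $\Pi$, the proof of Theorem~\ref{TheoremOnGenPotSymsOfLPEs} has already shown that these three operators possess \emph{no} common finite-dimensional invariant subspace in the solution space (the key computation there reduces invariance to $(K^{\sigma\varsigma}-x^2\delta_{\sigma\varsigma})\varphi^\varsigma=0$, forcing $\varphi^\varsigma\equiv0$). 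By item~2 of Theorem~\ref{TheoremOnCriterionOfStrictlyPthOrderPotSymsForLPE}, an essential operator generates a strictly $p$-th order potential symmetry iff no subspace of $\langle\psi^1,\dots,\psi^p\rangle$ is invariant under its associated first-order operator; the absence of a \emph{common} invariant subspace guarantees that for every basis at least one of $\p_t,D,\Pi$ does the job, which yields the claimed operator essentially involving the $2q$-th level potential.

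The one genuinely delicate point — and the step I expect to be the main obstacle — is the verification, carried out above the statement, that $W^k$ is a nonzero constant for all $k$. Everything downstream is a clean application of the general machinery, but the constancy and non-vanishing of the Wronskian is what makes ${\rm DT}[\varphi^{01},\dots]$ an \emph{auto}-Darboux transformation and thereby pins the potential equation back to~\eqref{EqMuX-2LPE} rather than some other member of class~\eqref{EqReducedLPE}. The argument that $W^k$ is $t$-independent rests on the observation that $\varphi^{ki}_t$ is always a linear combination of lower-index $\varphi^{k'i'}$ (since $\widehat\Pi$ raises the polynomial degree in~$t$ by exactly one and the $t$-derivative lowers it), so differentiating the Wronskian column-by-column produces only columns already present; evaluating at $t=0$ and an induction giving $W^k_x=0$ then finishes it. I would present this constancy as the technical heart and treat the symmetry-algebra consequences as formal corollaries of the cited theorems.
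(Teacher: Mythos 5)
Your proposal is correct and follows the paper's own route: the paper, too, treats the constancy and nonvanishing of the Wronskians $W^k$ of the $\widehat\Pi$-generated polynomial solutions as the substantive step (this is what makes ${\rm DT}[\varphi^{01},\varphi^{02},\dots,\varphi^{k1},\varphi^{k2}]$ an auto-Darboux transformation of~\eqref{EqMuX-2LPE}), and then obtains the proposition, "analogously to the previous example", from Theorems~\ref{TheoremOnDualMultipleDarbouxTrans} and~\ref{TheoremOnGenPotSymsOfLPEs} together with Lemma~\ref{LemmaOn1to1CorrespondenceBetweenLieSymsOfGenPotSysAndPthLevelPotEqOfLPE} (isomorphism with $\hat{\mathfrak g}_0$, since the modified potential equation coincides with the initial one) and Corollary~\ref{CorollaryOnSymCriteriaOfExestanceOfPotSymsForLPEs} (operator count), exactly as you lay out. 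The one slip is your parenthetical inference that the absence of a \emph{common} invariant subspace forces one of $\p_t$, $D$, $\Pi$ \emph{individually} to meet the criterion of item~2 of Theorem~\ref{TheoremOnCriterionOfStrictlyPthOrderPotSymsForLPE} --- this implication is invalid as logic (indeed $\langle\psi^1,\dots,\psi^p\rangle$ itself is $\widehat{\p_t}$-invariant here, since $\p_t$ lowers the polynomial degree in~$t$, and the stationary solutions span $\widehat D$-invariant subspaces), but it is harmless, because the conclusion you need is an algebra-level statement that is exactly what Corollary~\ref{CorollaryOnSymCriteriaOfExestanceOfPotSymsForLPEs} and the common-invariant-subspace argument inside the proof of Theorem~\ref{TheoremOnGenPotSymsOfLPEs} deliver.
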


Combining Darboux, auto-Darboux and equivalence transformations similarly to the proof 
of Corollary~\ref{CorollaryOnInfiniteSeriesOfPotSymAlgsForEqsEquivToLHE}, we derive the following corollary 
of Proposition~\ref{PropositionOnInfiniteSeriesOfPotSymAlgsForMuX-2LPE}.

\begin{corollary}\label{CorollaryOnInfiniteSeriesOfPotSymAlgsForEqsEquivToMuX-2LPE}
Suppose that a linear $(1+1)$-dimensional second-order parabolic equation is equivalent 
with respect to point equivalence transformations to an equation from class~\eqref{EqReducedLPE} 
in which 
\[
V=\frac\mu{x^2}-2\left(\frac{\bigl(W(\psi^1,\dots,\psi^p)\bigr)_x}{W(\psi^1,\dots,\psi^p)}\right)_x, 
\]
where $\psi^s=\psi^s(t,x)$ are linearly independent solutions of the equation $\psi_t-\psi_{xx}+\mu x^{-2}\psi=0$. 
Then this equation possesses an infinite series $\{\check{\mathfrak g}_{p+2k},\, k\in\mathbb N\cup\{0\}\}$ 
of potential symmetry algebras. 
Each algebra $\check{\mathfrak g}_{p+2k}$ from the series is of strictly $(p+2k)$-th potential order, 
is isomorphic to the Lie invariance algebra~$\hat{\mathfrak g}_0$ of equation~\eqref{EqMuX-2LPE} 
and contains at least one operator essentially involving potentials of the $(p+2k)$-th level.
\end{corollary}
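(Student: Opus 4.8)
The plan is to reduce Corollary~\ref{CorollaryOnInfiniteSeriesOfPotSymAlgsForEqsEquivToMuX-2LPE} to Proposition~\ref{PropositionOnInfiniteSeriesOfPotSymAlgsForMuX-2LPE} by exactly mimicking the proof of Corollary~\ref{CorollaryOnInfiniteSeriesOfPotSymAlgsForEqsEquivToLHE}, with the linear heat equation replaced throughout by equation~\eqref{EqMuX-2LPE}. Since the statement is invariant under point equivalence transformations, I would work with a representative equation from class~\eqref{EqReducedLPE} whose potential~$V$ has the stated form. The starting observation is that such an equation~$\lceil u\rfloor$ is, by the form of~$V$ and the Crum theorem, precisely the image of the base equation $\lceil w\rfloor\colon \psi_t-\psi_{xx}+\mu x^{-2}\psi=0$ under the multiple Darboux transformation ${\rm DT}[\psi^1,\dots,\psi^p]$, where the $\psi^s$ are the given linearly independent solutions of~$\lceil w\rfloor$.

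First I would invoke the infinite series of auto-Darboux transformations for equation~\eqref{EqMuX-2LPE} produced in the run-up to Proposition~\ref{PropositionOnInfiniteSeriesOfPotSymAlgsForMuX-2LPE}, namely the transformations ${\rm DT}[\varphi^{01},\varphi^{02},\dots,\varphi^{k1},\varphi^{k2}]$ built from the solutions $\varphi^{ki}=\widehat\Pi^k\varphi^{0i}$ that are lowest-order polynomials in~$t$ and whose Wronskian $W^k$ is a nonzero constant. Composing such a $2k$-fold auto-Darboux transformation of~$\lceil w\rfloor$ with the $p$-fold transformation ${\rm DT}[\psi^1,\dots,\psi^p]$ exhibits $\lceil u\rfloor$ as a $(p+2k)$-level potential equation over~$\lceil w\rfloor$. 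Concretely I would write
\[
\lceil u\rfloor={\rm DT}[\varphi^{01},\varphi^{02},\dots,\varphi^{k-1,1},\varphi^{k-1,2},\tilde\psi^{2k,1},\dots,\tilde\psi^{2k,p}]\,\lceil w\rfloor,
\]
where each $\tilde\psi^{2k,s}$ is the preimage of $\psi^s$ under the $2k$-fold auto-Darboux transformation, constructed by the iterative integration formula $\tilde\psi^{q,s}_x=\tilde\psi^{q-1,s}$, $\tilde\psi^{q,s}_t=\tilde\psi^{q-1,s}_x$, exactly as in the heat-equation case. This realizes $\lceil u\rfloor$ as arising from a $(p+2k)$-dimensional solution subspace of~$\lceil w\rfloor$.

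Next I would apply Theorems~\ref{TheoremOnDualMultipleDarbouxTrans} and~\ref{TheoremOnGenPotSymsOfLPEs} together with Proposition~\ref{PropositionOnInfiniteSeriesOfPotSymAlgsForMuX-2LPE} to conclude that the potential symmetry algebra $\check{\mathfrak g}_{p+2k}$ of~$\lceil u\rfloor$ associated with the dual characteristic subspace is of strictly $(p+2k)$-th order. The essential point is that, because the base equation~\eqref{EqMuX-2LPE} falls in case~2 of Theorem~\ref{TheoremOnGroupClassificationOfLPEs} (the $P=\mu x^{-2}$ case), the strictness criterion of Theorem~\ref{TheoremOnCriterionOfStrictlyPthOrderPotSymsForLPE}/Corollary~\ref{CorollaryOnCriterionOfStrictlyPthOrderPotSymsForLPE} is met automatically: the computation inside the proof of Proposition~\ref{PropositionOnInfiniteSeriesOfPotSymAlgsForMuX-2LPE} shows there are no finite-dimensional solution subspaces simultaneously invariant under the operators $\widehat{\p_t}$, $\widehat D$, $\widehat\Pi$. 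Since the algebra $\check{\mathfrak g}_{p+2k}$ is the prolongation of the essential Lie invariance algebra $\hat{\mathfrak g}_0$ of~\eqref{EqMuX-2LPE}, it is isomorphic to~$\hat{\mathfrak g}_0$, and by the $\mu\ne 0$ clause of Theorem~\ref{TheoremOnGenPotSymsOfLPEs} it contains at least one operator essentially involving the $(p+2k)$-th level potential. Letting $k$ range over $\mathbb N\cup\{0\}$ yields the desired infinite series.

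The main obstacle I expect is verifying cleanly that the composition of the Darboux transformation ${\rm DT}[\psi^1,\dots,\psi^p]$ with the auto-Darboux transformations really produces a genuine $(p+2k)$-level potential equation in the precise sense of Section~\ref{SectionOnGenPotSysForLPEs}, rather than merely a formal product of Darboux maps. This requires checking that the combined tuple of $p+2k$ solutions of~$\lceil w\rfloor$ is linearly independent and that the intermediate modified potential equations are well defined, so that the hypotheses of Theorem~\ref{TheoremOnGenPotSymsOfLPEs} on the solution subspace $\langle\varphi^{01},\dots,\tilde\psi^{2k,p}\rangle$ are actually satisfied; the preimages $\tilde\psi^{2k,s}$ must be shown independent from the $\varphi^{ki}$. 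Once that bookkeeping is in place, the strictness and isomorphism statements follow formally from the already-established machinery, so the argument is essentially a transcription of the heat-equation case with the constant-Wronskian auto-Darboux series of~\eqref{EqMuX-2LPE} substituted for the heat-polynomial series.
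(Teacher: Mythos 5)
Your strategy coincides with the paper's: realize the given equation as the image of \eqref{EqMuX-2LPE} under a $(p+2k)$-fold Darboux transformation obtained by composing the constant-Wronskian auto-Darboux series of \eqref{EqMuX-2LPE} with ${\rm DT}[\psi^1,\dots,\psi^p]$, and then invoke Proposition~\ref{PropositionOnInfiniteSeriesOfPotSymAlgsForMuX-2LPE} together with Theorems~\ref{TheoremOnDualMultipleDarbouxTrans} and~\ref{TheoremOnGenPotSymsOfLPEs}. However, one step fails as written: the construction of the preimages $\tilde\psi^{2k,s}$ by the recursion $\tilde\psi^{q,s}_x=\tilde\psi^{q-1,s}$, $\tilde\psi^{q,s}_t=\tilde\psi^{q-1,s}_x$, transplanted ``exactly as in the heat-equation case''. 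That recursion inverts $x$-differentiation and forces its output to satisfy the heat equation; it is valid in Corollary~\ref{CorollaryOnInfiniteSeriesOfPotSymAlgsForEqsEquivToLHE} only because there ${\rm DT}[P_0,\dots,P_{q-1}]=\p^q/\p x^q$. For \eqref{EqMuX-2LPE} the seeds $\varphi^{ji}=\widehat\Pi^j\varphi^{0i}$, $j\leqslant k-1$, are not related by $\p_x$, and the $2k$-fold auto-Darboux transformation is not $\p_x^{\,2k}$: it is the monic operator of order $2k$ in $\p_x$ whose kernel is the span of the $\varphi^{ji}$, i.e., the $2k$-dimensional space of solutions polynomial in~$t$ of degree less than~$k$ (recall that $\varphi^{ji}_t$ is a combination of the $\varphi^{j'i'}$ with $j'<j$), so it coincides with $(\p_x^{\,2}-\mu x^{-2})^k$ and acts as $\p_t^{\,k}$ on solutions. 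The functions produced by your recursion are heat solutions, not solutions of $\psi_t-\psi_{xx}+\mu x^{-2}\psi=0$, and are not preimages under this map, so your displayed factorization of $\lceil u\rfloor$ would be false with them.

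The gap is repairable entirely within the paper's machinery, and this is what ``combining Darboux, auto-Darboux and equivalence transformations'' implicitly relies on. By Note~\ref{NoteOnMultipleDarbouxTransAsLinearMappingOfSolutionSpaceOfLPEs}, the auto-Darboux transformation maps the solution space of \eqref{EqMuX-2LPE} onto itself, so preimages $\tilde\psi^{2k,s}$ of the $\psi^s$ that are genuine solutions exist; concretely, at each step one chooses a solution $\tilde\psi$ of \eqref{EqMuX-2LPE} with $\tilde\psi_t$ equal to the previously constructed function, i.e., one solves the ordinary differential equation $\tilde\psi_{xx}-\mu x^{-2}\tilde\psi=\tilde\psi^{\rm prev}$ in~$x$ with the two ``constants'' (functions of~$t$) fixed by the condition $\tilde\psi_t=\tilde\psi^{\rm prev}$ --- integration with respect to~$t$ rather than~$x$, which is compatible precisely because $\tilde\psi^{\rm prev}$ is itself a solution. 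The linear-independence bookkeeping you flagged then closes immediately: applying the auto-Darboux transformation to a vanishing linear combination of the $\varphi^{ji}$ and the $\tilde\psi^{2k,s}$ annihilates the $\varphi$'s (they span the kernel) and leaves a vanishing combination of the linearly independent $\psi^s$, forcing all coefficients to be zero. With these two corrections, the rest of your argument --- strictness via the absence of invariant subspaces for $\widehat{\p_t}$, $\widehat D$, $\widehat\Pi$ established in the proof of Theorem~\ref{TheoremOnGenPotSymsOfLPEs}, the isomorphism with $\hat{\mathfrak g}_0$, and the ``at least one operator'' clause --- goes through as you describe.
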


\begin{note}
An equation from class~\eqref{EqReducedLPE} with a stationary value of the arbitrary element~$V$ 
has auto-Darboux transformations constructed with solutions polynomial in~$t$. 
In contrast to the special case $V=\mu x^2$, this does not imply conclusions in the general case $V=V(x)$ 
since, in particular, Theorem~\ref{TheoremOnGenPotSymsOfLPEs} does not give a sufficiently powerful criterion on the 
existence of potential symmetries for such a situation. 
\end{note}

\section{Discussion}\label{SectionDiscussion}

In the present paper we investigate symmetries and conservation laws of 
linear $(1+1)$-dimensional second-order parabolic equations.
In our opinion, the most important results of the paper are the following.
\begin{itemize}
\item
It is proved that any potential conserved vector of a linear parabolic equation is equivalent to a local one 
(Theorem~\ref{TheoremOnPotConsLawsOfLPEs}). 
The local conservation laws are described in Theorem~\ref{TheoremLocalCLsLPEs}. 
Namely, the space of characteristics associated with the local conservation laws 
of a linear parabolic equation can be identified with the space of the functions depending only on~$t$ and~$x$ 
which are solutions of the adjoint equation.
\item
Different criteria on the existence of potential symmetries of arbitrary order are formulated. 
Theorem~\ref{TheoremOnCriterionOfStrictlyPthOrderPotSymsForLPE} gives a criterion 
on operators of potential equations, generating potential symmetries with the same orders
as the level numbers of the potential equations. The shape of the equations having potential symmetries 
of a strongly fixed order is characterized in Theorem~\ref{TheoremOnGenPotSymsOfLPEs}. 
Corollary~\ref{CorollaryOnSymCriteriaOfExestanceOfPotSymsForLPEs} supplies a pure symmetry criterion 
in terms of the number of independent nontrivial Lie symmetries of potential equations. 
\end{itemize} 

\looseness=1
Extensive preparatory considerations were needed for obtaining these results. 
Thus, the group classification of class~\eqref{EqGenLPE} has a twofold application for the investigation 
of potential symmetries in this class. Firstly, an exhaustive knowledge on Lie symmetries is necessary 
for focusing our attention on pure potential symmetries. 
At the same time, the modified potential equations of any fixed level for the equations from class~\eqref{EqGenLPE} 
form a copy of this class. 
Therefore, results on the group classification in class~\eqref{EqGenLPE} also are directly used in the description 
of potential symmetries. This entails the need for a careful revision of these results. 
The study of normalization properties of class~\eqref{EqGenLPE}, their subclasses and the associated classes of 
inhomogeneous equations justifies the choice of the gauge $A=1$, $B=0$ under both the group classification 
of single equations and the analysis of the whole potential frame over the class~\eqref{EqGenLPE}. 
It also gives a well-founded explanation of the difficulties arising in the classification 
of the Kolmogorov and Fokker--Plank equations. 

The investigation of local conservation laws of equations from class~\eqref{EqGenLPE} naturally leads to 
the consideration of systems of two mutually adjoint equations from class~\eqref{EqGenLPE}. 
Such systems form the basis of the potential frame over class~\eqref{EqGenLPE} 
and are directly connected with the so-called adjoint variational principle. 
The framework of the adjoint variational principle is extended to admissible transformations. 
A number of auxiliary statements on a hierarchy of normalized classes of second-order evolution systems are proved. 
The group classification of Fokker--Plank equations is obtained from the group classification of the Kolmogorov equations 
in a simple way based on the application of the adjoint variational principle.

\looseness=1
A feeling for the general problem, understanding possible ways of solving it and a rough shape of formulating the final results
arise during the consideration of simplest potential symmetries. 
The techniques resulting from this approach involve different ideas. 
Thus, the potential systems should be simultaneously studied with the associated initial, adjoint, potential, 
modified potential and adjoint modified potential equations which together form the potential frame 
over the class~\eqref{EqGenLPE} (of the first order in the case of simplest potential symmetries).  
For applications of this idea to general potential symmetries,
an analogue of the potential equation, corresponding to a tuple of characteristics, should be proposed at first. 
The equivalence group of the initial class~\eqref{EqGenLPE} is prolonged to the whole potential frame including 
characteristics of the initial equations, potentials, modified potentials 
and characteristics of the modified potential equations. 
It is shown that the problem can be investigated up to the equivalence relation generated by the 
prolonged equivalence group.
Moreover, the classification of Lie symmetries of potential systems with respect to the above equivalence relation 
is reduced to the same classification for modified potential equations. 
This allow us to use the Lie--Ovsiannikov classification. 
A connection between different objects of the potential frame is provided via 
the dual Darboux transformation which, for this reason, is an important component of the proposed technique. 

The study of general potential symmetries requires a development of the above ideas in relation to simplest 
potential symmetries and the creation of specific tools. 
It appears that the $p$-order and $p$-level potential systems 
associated with the same tuple of characteristics should be investigated simultaneously.  
(This conclusion cannot be drawn in the consideration of simplest potential symmetries since the
corresponding 1-order and 1-level potential systems coincide.) 
The potential equations associated with characteristic tuples are introduced in a natural way due to 
the special iterative procedure for the construction of potential systems with levels higher than~1. 
Moreover, the modified potential equations are in fact associated with characteristic subspaces 
spanned by the corresponding characteristic tuple. This allows us to study potential symmetries up to 
forming linear combinations of elements of a characteristic tuple. 
Enhancing results on the multiple dual Darboux transformation between equations from the class~\eqref{EqGenLPE}, 
presented in Theorem~\ref{TheoremOnDualMultipleDarbouxTrans}, is of fundamental importance for both the consolidation 
of the whole potential frame and for deriving definitive statements on potential symmetries. 
The proof of Lemma~\ref{LemmaOn1to1CorrespondenceBetweenLieSymsOfGenPotSysAndPthLevelPotEqOfLPE}
on Lie symmetries of potential systems is quite intricate and involves a number of tricks. 
It is essentially based on the higher-level representation of potential systems. 
The next important step is the prolongation of the equivalence group of the class~\eqref{EqGenLPE} to 
the whole potential frame in Lemma~\ref{LemmaOnEquivTransOfGenPotFrameOfLPEs}. 
The confluence of these three components (the known shape of the Lie symmetries of potential systems, 
the multiple dual Darboux transformation and the prolonged equivalence transformations) 
results in criteria on the existence of potential symmetries, formulated in 
Theorems~\ref{TheoremOnCriterionOfStrictlyPthOrderPotSymsForLPE} and~\ref{TheoremOnGenPotSymsOfLPEs} 
and Corollary~\ref{CorollaryOnSymCriteriaOfExestanceOfPotSymsForLPEs}, 
as well as in subsequent estimations of the number of potential symmetries for subclasses of the class~\eqref{EqGenLPE}.

Another new problem on potential symmetries, which was first posed in general and solved for 
the $(1+1)$-dimensional linear parabolic equations in the present paper, is to justify
the choice of natural representatives among 
the equivalent conserved vectors for introducing potentials. 
Different choices of representatives are equivalent only with respect to generalized potential symmetries of arbitrarily high order. 
The construction of the natural potential frame for the class~\eqref{EqGenLPE} was justified by 
Corollary~\ref{CorollaryOnChiceOfConservedVertorsForConstructionOfPotSymsOfLPE}.  
To study (usual) potential symmetries of equations from the class~\eqref{EqGenLPE}, 
it is in fact sufficient to only consider conserved vectors of the canonical form~\eqref{eqCVofLPEs}, 
which have the lowest orders in both the flux and the density. 
Any tuple of conserved vectors of any equation from the class~\eqref{EqGenLPE}, 
containing a conserved vector of a higher than lowest order, gives only symmetries trivial in all senses.  
In contrast to the case of simplest potential symmetries, the proof of the above statements for an arbitrary number of potentials 
needed a description of the generalized potential symmetries the equation under consideration. 
As shown in Lemma~\ref{LemmaOn1to1CorrespondenceBetweenGeneralizedSymsOfGenPotSysAndPthLevelPotEqOfLPE}, 
each generalized symmetry of every potential system associated with conserved vectors in canonical form is 
linear in the dependent variables and their derivatives up to standard equivalence of generalized symmetries. 
Moreover, there is a one-to-one correspondence between generalized symmetries of a potential system and 
those of the associated potential equation.

\looseness=1
In spite of the wide range of the performed investigations, a number of interesting and difficult problems 
concerning potential symmetries of linear $(1+1)$-dimensional second-order parabolic equations remains unsolved. 
We list some of them, without making any claim on completeness. 

\begin{problem}\label{ProblemOnNumberInequivCharsForSimplestPotSymsOfLPEs}
What is the maximal number of inequivalent characteristics leading to simplest pure potential symmetries for 
a fixed equation from class~\eqref{EqGenLPE}? Classify such characteristics. 
A similar question can be asked about $p$-tuples of linearly independent characteristics giving 
strictly $p$-order potential symmetries.
\end{problem}

In Section~\ref{SectionOnSimpestPotSymsOfLHE}
Problem~\ref{ProblemOnNumberInequivCharsForSimplestPotSymsOfLPEs} 
is solved only for simplest potential symmetries of the linear heat equation. 
The answer to the question about the number of characteristics in this case is two. 
More precisely, any appropriate characteristic is equivalent to~1 or~$x$ with respect to the essential part of 
the point symmetry group of the linear heat equation. 

\begin{problem}\label{ProblemOnOrderOfPotSymsOfLPEs}
Given a fixed equation from class~\eqref{EqGenLPE}, are strict orders of its potential symmetries bounded 
and if so, what is their maximal value, or does it possess potential symmetries of arbitrarily large orders? 
In other words, does there exist an integer~$p$ such that the equation has strictly $p$-order potential symmetries and 
all potential symmetries of orders higher than~$p$ are reduced to potential symmetries of lesser orders?
\end{problem}

Wide classes of linear parabolic equations possessing infinite series of potential symmetry algebras, 
whose sequences of potential orders are unbounded, are constructed in Section~\ref{SectionOnNumberAndOrderOfPotSyms}
making use of auto-Darboux transformations.
The examples given are connected with the cases $V=0$ and $V=\mu x^{-2}$ of the Lie--Ovsiannikov classification. 
The problem is to study the reduced potential equations with $V=V(x)$. 
An obstacle which should be surmounted is the absence of a powerful criterion for the existence of potential symmetries 
for this case.

\begin{problem}\label{ProblemOnCriterionOfExistenceOfPotSymsOfLPEsWithGenVX}
Propose a necessary and sufficient criterion on the existence of potential symmetries for the cases 
when a potential equation possesses a single linearly independent nontrivial Lie symmetry operator. 
\end{problem}

Theorem~\ref{TheoremOnGenPotSymsOfLPEs} gives only a sufficient criterion on the existence of potential symmetries 
in such cases, in contrast to the cases with a higher number of nontrivial symmetry operators.

The criteria proposed in Section~\ref{SectionOnGenPotSysForLPEs} 
establish, in fact, a connection between the existence of potential symmetries 
of an equation from class~\eqref{EqGenLPE} and the reducibility of the equation to a special form. 
The investigation of point equivalence of linear parabolic equation was stimulated by the celebrated paper 
of Kolmogorov~\cite{Kolmogorov1938}. He posed the problem of describing Kolmogorov equations ($C=0$) which are
reduced to the heat equation by point transformations of a special form. 
This problem was completely solved in~\cite{Cherkasov1957}. 
A symmetry criterion on the reducibility naturally arises in the framework of 
the Lie--Ovsiannikov group classification~\cite{Lie1881,Ovsiannikov1982}. 
It was additionally discussed in a number of papers. See, e.g., 
\cite{Bluman1980,Shtelen&Stogny1989,Spichak&Stognii1999a,Spichak&Stognii1999b,Spichak&Stognii1999c}. 
Nevertheless, the symmetry criterion in its present forms is not as constructive as Cherkasov's. 

Constructive criteria on the reducibility can be obtained via the calculation of 
differential invariants and semi-invariants of the equivalence group used. 
Second-order differential semi-invariants of the linear transformations of the dependent variable 
in class~\eqref{EqGenLPE} were calculated in \cite{Ibragimov2002} by the infinitesimal method.
The same method was used in \cite{Johnpilla&Mahomed2001} for finding 
a necessary and sufficient invariant condition on the coefficients of the equations from class~\eqref{EqGenLPE}, 
which reduced to the linear heat equation by point transformations. 
The more effective approach to calculations concerning invariants and equivalence problems 
in classes of differential equations is given by Cartan's method of moving frames~\cite{Olver1995} 
in its Fels--Olver version~\cite{Fels&Olver1998,Fels&Olver1999}. 
Within the framework of the method of moving frames, the equivalence problem for equations from class~\eqref{EqGenLPE} 
is neatly investigated in \cite{Morozov2003}.

In \cite{Bluman&Shtelen2004} the generalized Kolmogorov problem on reducibility by combining 
equivalence and Darboux transformations was posed. 
Note that in fact the term `Darboux transformation' was not used in this paper, as well as 
the Crum representation~\cite{Crum1955,Matveev&Salle1991} of multiple Darboux transformations 
between linear parabolic equations. Only the iterative procedure in terms of a sequence of potential systems 
was presented. 
A simple application of the Crum theorem does not give an exhaustive solution of the generalized Kolmogorov problem. 
Additional tools should be used to create a constructive criterion for the generalized Kolmogorov problem, 
similar to the Cherkasov criterion for the classical Kolmogorov problem 
and the criterion in terms of differential invariants of the equivalence group.
 
\begin{problem}\label{ProblemOnCriterionOfReducibilityOfLPEsViaEquivAndDarbouxTrans}
Does there exist a constructive criterion on equations from class~\eqref{EqGenLPE} to be 
connected via compositions of point equivalence and Darboux transformations? 
In particular, is it possible to formulate explicit conditions on the arbitrary elements $A$, $B$ and~$C$ 
under which equation~\eqref{EqGenLPE} is reduced by a composition of point equivalence and Darboux transformations 
to the linear heat equation (or to the equation of the form~\eqref{EqReducedLPE} with $V=\mu x^{-2}$ or general $V=V(x)\,$). 
\end{problem}

Potential symmetries can be used to construct new exact solutions of equations from class~\eqref{EqGenLPE}, 
especially equations possessing no nontrivial Lie symmetries. 
An obvious way for this is Lie reduction of associated potential systems. 
An alternative but equivalent possibility is based on Theorem~\ref{TheoremOnCriterionOfStrictlyPthOrderPotSymsForLPE} 
and other statements of Section~\ref{SectionOnGeneralPotSymsOfLPEs}. 
Namely, we can at first find exact solutions of the corresponding potential equations and 
then map them to exact solutions of the initial equations by appropriate Darboux transformations. 
In fact, wide families of exact solutions are already known for the linear parabolic equations admitting nontrivial Lie symmetries.
That is why the second way is preferable. 

The results presented in this paper can be extended to other subjects. 
For example, they can be applied to the investigation of potential symmetries of nonlinear equations 
which are linearized to equations from class~\eqref{EqGenLPE} (the Burgers equations, $u^{-2}$-diffusion equation, etc.).
Since the field (real or complex numbers) in which the dependent and independent variables take 
values does not have an appreciable influence 
on our investigations, most of the obtained results are easily extended to $(1+1)$-dimensional linear Schr\"odinger equations. 
The nonclassical symmetries of the equations from class~\eqref{EqGenLPE} were described 
in~\cite{Popovych1995,Popovych2006b}. 
We hope that a simultaneous application of tools 
from~\cite{Fushchych&Shtelen&Serov&Popovych1992,Popovych1995,Popovych2006b,Saccomandi1997} and this paper will
allow us to investigate the potential nonclassical symmetries in the class~\eqref{EqGenLPE}.

\subsection*{Acknowledgements}
Research of NMI was partially supported by the Erwin Schr\"odinger Institute for Mathematical Physics
(Vienna, Austria) in form of a Junior Fellowship. MK was supported by START-project Y237 of the Austrian
Science Fund. The research of ROP was supported by the Austrian Science Fund (FWF), Lise Meitner project M923-N13.
The authors thank the referees for several helpful remarks.

\end{document}